\newcommand{\href}[1]{#1} 
\renewcommand*{\backref}[1]{}
\renewcommand*{\backrefalt}[4]{$[$%
    \ifcase #1 Not cited.%
          \or cited on p.~#2.%
          \else cited on pp.~#2.%
    \fi%
    $]$}
\let\origdoublepage\cleardoublepage
\newcommand{\clearemptydoublepage}{%
  \clearpage{\pagestyle{empty}\origdoublepage}}
\let\cleardoublepage\clearemptydoublepage
\newtheoremstyle{mgStyle}{\medskipamount}{\medskipamount}{}{}{\bfseries\itshape}{}{0.5em}{\thmname{#1} \thmnumber{#2}}
\theoremstyle{mgStyle}
\newtheorem{theorem}{\textit{Theorem}}[section]
\newtheorem{definition}[theorem]{\textit{Definition}}
\newtheorem{corollary}[theorem]{\textit{Corollary}}
\newtheorem{lemma}[theorem]{\textit{Lemma}}
\newtheorem{proposition}[theorem]{\textit{Proposition}}
\newcommand*{\jProd}{\raisebox{-0.88ex}{\scalebox{2.5}{$\cdot$}}}
\newcommand{\cc}{{\cal C}}
\newcommand{\Cu}{C^{\ast}_{u}} 
\newcommand{\Tr}{\mbox{Tr}}
\newcommand{\tr}{\mbox{tr}}
\newcommand{\R}{\mathbb{R}}
\newcommand{\M}{\mathbf{M}}
\newcommand{\N}{\mathbf{N}}
\newcommand{\C}{\mathbb{C}}
\newcommand{\Q}{\mathbb{H}}
\newcommand{\Cat}{\mathcal{C}}
\newcommand{\V}{\mathbf{V}}
\newcommand{\beq}{\begin{equation}}
\newcommand{\eeq}{\end{equation}}
\newcommand{\EJC}{\mbox{\bf EJC}}
\newcommand{\CJP}{\mbox{\bf CJP}}
\newcommand{\RSE}{\mbox{\bf RSE}}
\newcommand{\URUE}{\mbox{\bf URUE}}
\newcommand{\CQM}{\mbox{$\mathbb C$QM}}
\newcommand{\stalg}{\mbox{\bf $\ast$-Alg}}
\newcommand{\EJA}{\mbox{\bf EJA}}
\newcommand{\Cstar}{\mbox{\bf Cstar}}
\newcommand{\InvQM}{\mbox{\red \bf InvQM}}
\newcommand{\IP}{\langle \  | \ \rangle}
\renewcommand{\bar}{\overline}
\renewcommand{\hat}{\widehat}
\newcommand{\hotimes}{\widetilde{\otimes}}
\newcommand{\otilde}{\widetilde{\otimes}}
\newcommand{\id}{\mbox{id}}
\newcommand{\0}{\mathbf{0}}
\newcommand{\sa}{{sa}}
\newcommand{\Aut}{\mbox{Aut}}
\newcommand{\Sym}{\mbox{Sym}}
\newcommand{\potimes}{\circledcirc}
\newcommand{\blue}{}
\newcommand{\redd}{}
\newcommand{\red}{}
\newcommand{\green}{}
\newcommand{\magenta}{}
\newcommand{\tempout}[1]{{}}
\newcommand{\ffootnote}{\tempout}
\begin{document}

\pagestyle{empty}
\pagenumbering{roman}
\begin{center}
\begin{large}{\textbf{Preface to the arXiv version}}\end{large}
\end{center}
This thesis was accepted by the University of Waterloo (Ontario, Canada) on January 6, 2017. The main body of this thesis has two parts. \cref{partI} centres on collaborative work with D.\ M.\ Appleby (\href{https://arxiv.org/abs/1507.05323}{arXiv:1507.05323 [quant-ph]} and \href{https://arxiv.org/abs/1507.07881}{arXiv:1507.07881 [quant-ph]}). \cref{partII} centres on collaborative work with H.\ Barnum and A.\ Wilce (\href{https://arxiv.org/abs/1507.06278}{arXiv:1507.06278 [quant-ph]} and \href{https://arxiv.org/abs/1606.09331}{arXiv:1606.09331 [quant-ph]}). In posting this thesis on the arXiv, I hope that it might serve as a useful resource for those interested in its subjects.
\begin{flushright}
Matthew A.\ Graydon\\
Waterloo, Ontario, Canada, 2017
\end{flushright}
\cleardoublepage
\begin{titlepage}
        \begin{center}
        \vspace*{1.0cm}
			
        \LARGE
        {\bf Conical Designs and Categorical Jordan Algebraic Post-Quantum Theories}\\
        \vspace*{1.0cm}

        \normalsize
        by \\

        \vspace*{1.0cm}

        \large
        Matthew A.\ Graydon  \\

        \vspace*{3.0cm}

        \normalsize
        A thesis \\
        presented to the University of Waterloo \\ 
        in fulfillment of the \\
        thesis requirement for the degree of \\
        Doctor of Philosophy \\
        in \\
        Physics - Quantum Information \\

        \vspace*{2.0cm}

        Waterloo, Ontario, Canada, 2017 \\

        \vspace*{1.0cm}

        \copyright\ Matthew A.\ Graydon 2017 \\
        \end{center}
\end{titlepage}

\pagestyle{plain}
\setcounter{page}{2}

\cleardoublepage 

  \noindent
This thesis consists of material all of which I authored or co-authored: see Statement of Contributions included in the thesis. This is a true copy of the thesis, including any required final revisions, as accepted by my examiners.

  \bigskip
  
  \noindent
I understand that my thesis may be made electronically available to the public.
\cleardoublepage
\begin{center}
\textbf{Statement of Contributions}
\end{center}

\noindent The material in \cref{partI} centres on collaborative work with D.\ M.\ Appleby:
\begin{enumerate}
\item M.\ A.\ Graydon and D.\ M.\ Appleby:\\ ``\textit{Quantum Conical Designs}.''\\ Published in \cite{Graydon2015a}; electronic preprint available at \href{http://arxiv.org/abs/1507.05323}{arXiv:1507.05323 [quant-ph]}
\item M.\ A.\ Graydon and D.\ M.\ Appleby:\\ ``\textit{Entanglement and Designs}.''\\ Published in \cite{Graydon2016}; electronic preprint available at \href{http://arxiv.org/abs/1507.07881}{arXiv:1507.07881 [quant-ph]}
\end{enumerate}

\noindent The manuscripts for \cite{Graydon2015a} and \cite{Graydon2016} were largely prepared by myself. Portions of \cite{Graydon2015a} and \cite{Graydon2016} have been adapted for inclusion herein with the consent of D.\ M.\ Appleby.

\noindent The material in \cref{partII} centres on collaborative work with H.\ Barnum and A.\ Wilce:
\begin{enumerate}
\setcounter{enumi}{2}
\item H.\ Barnum, M.\ A.\ Graydon, and A.\ Wilce:\\ ``\textit{Some Nearly Quantum Theories}.''\\ Published in \cite{Barnum2015}; electronic preprint available at \href{http://arxiv.org/abs/1507.06278}{arXiv:1507.06278 [quant-ph]}
\item H.\ Barnum, M.\ A.\ Graydon, and A.\ Wilce:\\ ``\textit{Composites and Categories of Euclidean Jordan Algebras}.''\\
Forthcoming \cite{Barnum2016b}; electronic preprint available at \href{http://arxiv.org/abs/1606.09331}{arXiv:1606.09331 [quant-ph]}
\end{enumerate}

\noindent The manuscripts for \cite{Barnum2015} and \cite{Barnum2016b} were largely prepared by H.\ Barnum and A.\ Wilce. Portions of \cite{Barnum2015} and \cite{Barnum2016b} have been adapted for inclusion herein with the consent of H.\ Barnum and A.\ Wilce.


\cleardoublepage


\begin{center}\textbf{Abstract}\end{center}

\noindent Physical theories can be characterized in terms of their state spaces and their evolutive equations. The kinematical structure and the dynamical structure of finite dimensional quantum theory are, in light of the Choi-Jamio{\l}kowski isomorphism, one and the same --- namely the homogeneous self-dual cones of positive semi-definite linear endomorphisms on finite dimensional complex Hilbert spaces. From the perspective of category theory, these cones are the sets of morphisms in finite dimensional quantum theory as a dagger compact closed category. Understanding the intricate geometry of these cones and charting the wider landscape for their host category is imperative for foundational physics. 

\noindent In Part I of this thesis, we study the shape of finite dimensional quantum theory in terms of quantum information. We introduce novel geometric structures inscribed within quantum cones: conical $t$-designs. Conical $t$-designs are a natural, strictly inclusive generalization of complex projective $t$-designs. We prove that symmetric informationally complete measurements of arbitrary rank (\textsc{sim}s), and full sets of mutually unbiased measurements of arbitrary rank (\textsc{mum}s) are conical 2-designs.\! \textsc{sim}s and \textsc{mum}s correspond to highly symmetric polytopes within the Bloch body. The same holds for the entire class of homogeneous conical 2-designs; moreover, we establish necessary and sufficient conditions for a Bloch polytope to represent a homogeneous conical 2-design. Furthermore, we show that infinite families of such designs exist in all finite dimensions. It turns out that conical 2-designs are naturally adapted to a geometric description of bipartite entanglement. We prove that a quantum measurement is a conical 2-design if and only if there exists a (regular) entanglement monotone whose restriction to pure states is a function of the norm of the probability vector over the outcomes of the bipartite measurement formed from its tensor products. In that case the concurrence is such a monotone. In addition to monotones, we formulate entanglement witnesses in terms of geometric conditions on the aforementioned conical 2-design probabilities.

\noindent In Part II of this thesis, we move beyond quantum theory within the vein of Euclidean Jordan algebras (\textsc{eja}s). In light of the Koecher-Vinberg theorem, the positive cones of \textsc{eja}s are the only homogeneous self-dual cones to be found in a finite dimensional setting. We consider physical theories based on \textsc{eja}s subject to nonsignalling axioms regarding their compositional structure. We prove that any Jordanic composite is a Jordan ideal of Hanche-Olsen's universal tensor product. Consequently, no Jordanic composite exists having the exceptional Jordan algebra as a direct summand, nor does any such composite exist if either factor is exceptional. So we focus on special \textsc{eja}s of self-adjoint matrices over the real, complex, and quaternionic division rings. We demonstrate that these can be organized in a natural way as a symmetric monoidal category, albeit one that is not compact closed. We then construct a related category $\InvQM$ of embedded \textsc{eja}s, having fewer objects but more morphisms, that is dagger compact closed. This category unifies finite dimensional real, complex and quaternionic quantum theories, except that the composite of two complex quantum systems comes with an extra classical bit. Our notion of composite requires neither tomographic locality, nor preservation of purity under monoidal products. The categories we construct include examples in which both of these conditions fail. Our unification cannot be extended to include any finite dimensional spin factors (save the rebit, qubit, and quabit) without destroying compact closure.

\cleardoublepage


\begin{center}\textbf{Acknowledgements}\end{center}

\noindent Dr.\ Robert W.\ Spekkens supervised my PhD. In addition to funding the final year of my degree, Rob bestowed upon me a much more precious gift: the gift of complete academic freedom. It is so rare to be able to live out one's dreams without constraints, but Rob made just that happen in my life. I am, indeed, tremendously fortunate to have been one of his students. The organization and expression of this thesis have both been directly influenced by Rob's invaluable feedback and sage advice. Thank you, Rob.

\noindent My PhD Defence Committee consisted of Dr.\ Ross Duncan, Dr.\ Norbert L{\"u}tkenhaus, Dr.\ Kevin Resch, Dr.\ Robert W.\ Spekkens, and Dr.\ Jon Yard. I feel privileged to have been granted their feedback. I sincerely thank each of them for their time, and for agreeing to act in their respective roles for my defence.

\noindent My PhD Advisory Committee consisted of Dr.\ Joseph Emerson, Dr.\ Marco Piani, Dr.\ Kevin Resch, and Dr.\ Robert W.\ Spekkens. In particular, Dr.\ Kevin Resch acted as co-supervisor for my PhD degree, and I thank him especially for this. I thank Dr.\ Joseph Emerson, Dr.\ Marco Piani, and Dr.\ Robert W.\ Spekkens for their questions on my PhD comprehensive examination. Thank you all for your lessons and guidance.

\noindent My PhD was funded in part by the Government of Canada through an NSERC Alexander Graham Bell Canada Graduate Scholarship, and by the Province of Ontario through an Ontario Graduate Scholarship. I also received funding from the University of Waterloo via President's Graduate Scholarships and Marie Curie Graduate Student Awards. The Institute for Quantum Computing granted me an IQC Entrance Award. My primary office was at the Perimeter Institute for Theoretical Physics. Research at Perimeter Institute is supported by the Government of Canada through the Department of Innovation, Science and Economic Development and by the Province of Ontario through the Ministry of Research \& Innovation.

\noindent This thesis is a direct consequence of intense collaborative work with three very special human beings, acknowledged alphabetically: Dr.\ Marcus Appleby, Dr.\ Howard Barnum, and Dr.\ Alexander Wilce. I view these people as mentors and friends, and I am extraordinarily lucky to have met them. I look forward to continuing our collaborations in the future. Thank you, Marcus. Thank you, Howard. Thank you, Alex.

\noindent Dr.\ Christopher A.\ Fuchs introduced me to quantum foundations. Chris is the reason why I was drawn to designs. Over the course of years, he has instilled in me his ideas regarding the importance of geometry in quantum theory. No one has made a greater impact on my development in phySICs. Thank you, Chris.

\noindent My friends and academic brothers Dr.\ Hoan Bui Dang and Dr.\ Gelo Noel M. Tabia are wonderful people. I have learned so much from them, both personally and academically. These guys filled my graduate school years with some really fun times. Thank you, Hoan. Thank you, Gelo. 

\noindent Dad and Betty have been unwavering pillars of support for me throughout my studies. I could not have done this without their help. I love them both very much. Thank you, Dad. Thank you, Betty.

\noindent Sarah and JB are pretty cool too. My younger sister is many years ahead of me where it counts, and her husband is the kind of man that I want to be like. They inspire me. Thank you, Sarah. Thank you, JB.

\noindent This thesis would not exist without my dear wife, Candice. Her support has literally kept me alive, and she carried me to the finish line, as it were, during the final stages of writing this thesis. Candice is a beautiful human being. We fell in love as teenagers, and we have since passed through our high school, BSc, MSc, and PhD years; always together. I will be forever grateful for our love. Thank you, Candice.

\cleardoublepage


\begin{center}\textbf{Dedication}\end{center}

\begin{center}
\textit{I dedicate this thesis with love to my wife,\\ Candice.}\\
\end{center}
\cleardoublepage

\renewcommand\contentsname{Table of Contents}
\tableofcontents
\cleardoublepage
\phantomsection


\addtocontents{toc}{\vspace{-1cm}}
 \addcontentsline{toc}{chapter}{\textbf{List of Symbols}}
 \cleardoublepage
 \phantomsection 
 \newpage
\noindent\begin{huge}{\textbf{List\footnote{This list is not exhaustive.} of Symbols}}\end{huge}

\noindent\begin{tabular}{ll}
$\mathcal{H}_{d}$\hspace{1cm}& complex Hilbert space of finite dimension $d$\\[0.2cm]
$\mathcal{H}_{d}^{\star}$\hspace{1cm}& dual complex Hilbert space of all continuous $\mathbb{C}$-linear functionals on $\mathcal{H}_{d}$\\[0.2cm]
$\langle\cdot|\cdot\rangle$\hspace{1cm}& inner product on $\mathcal{H}_{d}$\\[0.2cm]
$\mathcal{S}\big(\mathcal{H}_{d}\big)$\hspace{1cm}& unit sphere in $\mathcal{H}_{d}$\\[0.2cm]
$\mathrm{U}\big(\mathcal{H}_{d}\big)$\hspace{1cm}& complex unitary group of degree $d$\\[0.2cm]
$\mathcal{L}\big(\mathcal{H}_{d}\big)$\hspace{1cm}& complex Hilbert space of linear endomorphisms on $\mathcal{H}_{d}$\\[0.2cm]
$\mathds{1}_{d}$\hspace{1cm}& identity endomorphism on $\mathcal{H}_{d}$\\[0.2cm]
$\langle\!\langle\cdot|\cdot\rangle\!\rangle$\hspace{1cm}& inner product on $\mathcal{L}\big(\mathcal{H}_{d}\big)$\\[0.2cm]
$\mathcal{L}_{\text{sa}}\big(\mathcal{H}_{d}\big)$\hspace{1cm}& real subspace of self-adjoint linear endomorphisms on $\mathcal{H}_{d}$\\[0.2cm]
$\mathcal{L}_{\text{sa},0}\big(\mathcal{H}_{d}\big)$\hspace{1cm}& null trace subspace of $\mathcal{L}_{\text{sa}}\big(\mathcal{H}_{d}\big)$\\[0.2cm]
$\mathcal{L}_{\text{sa},1}\big(\mathcal{H}_{d}\big)$\hspace{1cm}& unit trace hyperplane of $\mathcal{L}_{\text{sa}}\big(\mathcal{H}_{d}\big)$\\[0.2cm]
$\mathcal{L}_{\text{sa}}\big(\mathcal{H}_{d}\big)_{+}$\hspace{1cm}& positive cone of $\mathcal{L}_{\text{sa}}\big(\mathcal{H}_{d}\big)$\\[0.2cm]
$\mathcal{Q}(\mathcal{H}_{d}\big)$\hspace{1cm}& convex set of quantum states, \textit{i.e}.\ $\mathcal{L}_{\text{sa}}\big(\mathcal{H}_{d}\big)_{+}\cap\;\mathcal{L}_{\text{sa},1}\big(\mathcal{H}_{d}\big)$\\[0.2cm]
$\text{Pur}\mathcal{Q}(\mathcal{H}_{d}\big)$\hspace{1cm}& manifold of pure quantum states\\[0.2cm]
$\mathcal{E}(\mathcal{H}_{d}\big)$\hspace{1cm}& convex set of quantum effects, \textit{i.e}.\ $\mathcal{E}\equiv\big\{E\in\mathcal{L}_{\text{sa}}(\mathcal{H}_{d})_{+}\;\boldsymbol{|}\; \mathds{1}_{d}-E\geq 0 \big\}$\\[0.2cm]
$\boldsymbol{\mathcal{B}}(\mathcal{H}_{d})$\hspace{1cm}& the Bloch body, \textit{i.e}.\ all those $B\in\mathcal{L}_{\text{sa},0}(\mathcal{H}_{d})$ such that $\mathcal{Q}(\mathcal{H}_{d})\ni(B+\mathds{1}_{d})/d$\\[0.2cm]
$\boldsymbol{\mathcal{B}}(\mathcal{H}_{d})_{\text{in}}$\hspace{1cm}& the inball, \textit{i.e}.\ the largest ball centred on zero in $\mathcal{L}_{\text{sa},0}(\mathcal{H}_{d})$ contained in $\boldsymbol{\mathcal{B}}(\mathcal{H}_{d})$\\[0.2cm]
$\boldsymbol{\mathcal{B}}(\mathcal{H}_{d})_{\text{out}}$\hspace{1cm}& the outball, \textit{i.e}.\ the smallest ball centred on zero in $\mathcal{L}_{\text{sa},0}(\mathcal{H}_{d})$ containing $\boldsymbol{\mathcal{B}}(\mathcal{H}_{d})$\\[0.2cm]
$\boldsymbol{\mathcal{S}}(\mathcal{H}_{d})_{\text{in}}$\hspace{1cm}& the surface of the inball\\[0.2cm]
$\boldsymbol{\mathcal{S}}(\mathcal{H}_{d})_{\text{out}}$\hspace{1cm}& the surface of the outball\\[0.2cm]
$\mathsf{Lin}(d_{\mathrm{A}},d_{\mathrm{B}})$\hspace{1cm}& complex Hilbert space of linear homomorphisms taking $\mathcal{L}\big(\mathcal{H}_{d_{\mathrm{A}}}\big)$ into $\mathcal{L}\big(\mathcal{H}_{d_{\mathrm{B}}}\big)$\\[0.2cm]
$\mathbf{I}_{d}$\hspace{1cm}& identity endomorphism on $\mathcal{L}(\mathcal{H}_{d})$\\[0.2cm]
$\langle\!\langle\!\langle\!\langle\cdot|\cdot\rangle\!\rangle\!\rangle\!\rangle$\hspace{1cm}& inner product on $\mathsf{Lin}(d_{\mathrm{A}},d_{\mathrm{B}})$\\[0.2cm]
$\mathsf{TP}(d_{\mathrm{A}},d_{\mathrm{B}})$\hspace{1cm}& trace preserving elements of $\mathsf{Lin}(d_{\mathrm{A}},d_{\mathrm{B}})$\\[0.2cm]
$\mathsf{CP}(d_{\mathrm{A}},d_{\mathrm{B}})$\hspace{1cm}& completely positive elements of $\mathsf{Lin}(d_{\mathrm{A}},d_{\mathrm{B}})$\\[0.2cm]
$\mathsf{CPTP}(d_{\mathrm{A}},d_{\mathrm{B}})$\hspace{1cm}& completely positive trace preserving elements of $\mathsf{Lin}(d_{\mathrm{A}},d_{\mathrm{B}})$\\[0.2cm]
$\mathcal{H}_{d_{\mathrm{A}}}\otimes\mathcal{H}_{d_{\mathrm{B}}}$\hspace{1cm}& tensor product of complex Hilbert spaces of finite dimensions $d_{\mathrm{A}}$ and $d_{\mathrm{B}}$\\[0.2cm]
$\text{Sep}\mathcal{Q}(\mathcal{H}_{d_{\mathrm{A}}}\otimes\mathcal{H}_{d_{\mathrm{B}}})$\hspace{1cm}& convex set of separable quantum states within $\mathcal{Q}(\mathcal{H}_{d_{\mathrm{A}}}\otimes\mathcal{H}_{d_{\mathrm{B}}})$\\[0.2cm]
$\boldsymbol{\mathcal{J}}$\hspace{1cm}& Choi-Jamio{\l}kowski isomorphism injecting $\mathsf{CPTP}(d_{\mathrm{A}},d_{\mathrm{B}})$ onto $\mathcal{Q}(\mathcal{H}_{d_{\mathrm{A}}}\otimes\mathcal{H}_{d_{\mathrm{B}}})$
\end{tabular}	

\cleardoublepage
\noindent\begin{tabular}{ll}
$\Pi_{\text{sym}}$\hspace{1cm}& rank $d(d+1)/2$ projector onto the symmetric subspace of $\mathcal{H}_{d_{\mathrm{A}}}\otimes\mathcal{H}_{d_{\mathrm{B}}}$\\[0.2cm]
$\Pi_{\text{asym}}$\hspace{1cm}& rank $d(d-1)/2$ projector onto the antisymmetric subspace of $\mathcal{H}_{d_{\mathrm{A}}}\otimes\mathcal{H}_{d_{\mathrm{B}}}$\\[0.2cm]
$\mathsf{E}$\hspace{1cm}& arbitrary entanglement monotone\\[0.2cm]
$\mathsf{e}$\hspace{1cm}& restriction of $\mathsf{E}$ to pure states\\[0.2cm]
$\mathsf{C}$\hspace{1cm}& concurrence\\[0.2cm]
$\mathsf{c}$\hspace{1cm}& restriction of $\mathsf{c}$ to pure states\\[0.2cm]
$\jProd$\hspace{1cm}& the Jordan product\\[0.2cm]
$\mathcal{A}_{+}$\hspace{1cm}& the positive cone of a Jordan algebra $\mathcal{A}$\\[0.2cm]
$\mathfrak{j}(X)$\hspace{1cm}& the Jordan algebraic closure of a subset $X$ of a Jordan algebra\\[0.2cm]
$\mathfrak{c}(Y)$\hspace{1cm}& the C$^{*}$\!-algebraic closure of a subset $Y$ of a C$^{*}$\!-algebra\\[0.2cm]
$\otimes_{\mathbb{R}}$\hspace{1cm}& tensor product of finite dimensional vector spaces over $\mathbb{R}$\\[0.2cm]
$\tilde{\otimes}$\hspace{1cm}& universal tensor product of Euclidean Jordan algebras \\[0.2cm]
$\odot$\hspace{1cm}& canonical tensor product of Euclidean Jordan algebras \\[0.2cm]
$\potimes$\hspace{1cm}& set of pure tensors in a canonical tensor product\\[0.2cm]
$\oplus$\hspace{1cm}& direct sum of algebras, groups, vector spaces, and so on\\[0.2cm]
$\mathcal{M}_{n}(\mathbb{D})_{\text{sa}}$& $n\times n$ self-adjoint matrices over a classical division algebra $\mathbb{D}\in\{\mathbb{R},\mathbb{C},\mathbb{H},\mathbb{O}\}$\\[0.2cm]
$\mathcal{V}_{k}$\hspace{1cm}& finite $(k+1)$-dimensional spin factor\\[0.2cm]
$C^{*}_{u}(\mathcal{A})$\hspace{1cm}& the universal C$^{*}$\!-algebra of a Euclidean Jordan algebra $\mathcal{A}$\\[0.2cm]
$\psi_{\mathcal{A}}$\hspace{1cm}& the universal embedding $\mathcal{A}\longrightarrow C^{*}_{u}(\mathcal{A})_{\text{sa}}$\\[0.2cm]
$\Phi_{\mathcal{A}}$\hspace{1cm}& the canonical involution on $C^{*}_{u}(\mathcal{A})_{\text{sa}}$\\[0.2cm]
$C^{*}_{s}(\mathcal{A})$\hspace{1cm}& the standard C$^{*}$\!-algebra of a Euclidean Jordan algebra $\mathcal{A}$\\[0.2cm]
$\pi_{\mathcal{A}}$\hspace{1cm}& the standard embedding $\mathcal{A}\longrightarrow C^{*}_{s}(\mathcal{A})_{\text{sa}}$\\[0.2cm]
$\text{ob}(\mathscr{C})$\hspace{1cm}& the objects of a category $\mathscr{C}$\\[0.2cm]
$\text{hom}(\mathscr{C})$\hspace{1cm}& the morphisms of a category $\mathscr{C}$\\[0.2cm]
$\text{inc}$\hspace{1cm}& inclusion, \textit{e.g.} $\text{inc}:C^{*}_{s}(\mathcal{A})_{\text{sa}}\longrightarrow C^{*}_{s}(\mathcal{A})::x^{*}=x\longmapsto x$
\end{tabular}

\cleardoublepage
\pagenumbering{arabic}

\chapter{Prologue}
\label{prologue}
\epigraphhead[40]
	{
		\epigraph{``Indeed from our present standpoint, physics is to be regarded not so much as the study of something a priori given, but as the development of methods for ordering and surveying human experience.''}{---\textit{Niels Bohr}\\ Essays 1958-1962 on\\ Atomic Physics and Human Knowledge}
	}
Consider a spherical cone in $\mathbb{R}^{4}$. Let us introduce a Euclidean coordinate system for the ambient space as follows: one coordinate runs along the symmetry axis of the cone; the other three parametrize perpendicular hyperplanes cutting out three-dimensional balls. This shape and its setting appear in fundamental physics. Within the context of special relativity, we have a light cone in Minkowski spacetime, with exterior regions beyond the reach of causal influence. In this case, our physical interpretation of the coordinate system is intimately connected with our everyday sensory experience. This connection with ordinary life is so strong that it can be difficult to resist viewing the subject matter of relativity from the perspective of na{\"i}ve realism. Within the context of quantum theory, however, there is a much different story to be told. Here we have, for instance, the cone of unnormalized quantum states for the spin of an electron. The unit trace hyperplane cuts out a Bloch ball parametrized by expectation values of the usual three Pauli spin observables: a rather peculiar probabilistic setting. The normative character of quantum theory is such that certainty regarding the outcome of one Pauli observable should come with complete ignorance regarding the other two. Put otherwise, quantum theory imposes severe complementarity constraints on coherent degrees of belief regarding certain aspects of physical systems. From this primitive example, from a simple shape, we glimpse the essence of a strange statistical structure, namely quantum theory.

\noindent If we are to appreciate the full character of quantum theory, then we must consider higher-dimensional physical systems. For instance, the Bell-Kochen-Specker theorem \cite{Bell1966}\cite{Kochen1967} applies only for physical systems admitting three or more mutually orthonormal quantum states (we remind the reader that orthogonal states for electron spin correspond to anti-podal points on the Bloch ball.) For such systems, the Bell-Kochen-Specker theorem establishes that quantum theory is incompatible with a seemingly innocuous idea: measurements reveal pre-existing facts about physical systems. This idea is quite plainly implied by the word `measurement' in ordinary language; a word that Bell argued should be banned altogether in quantum theory \cite{Bell1990}. The proof given by Kochen and Specker is based on the subtle geometry of pure quantum states for a three-level system. This geometry is also the basis for the original proof of Gleason's theorem \cite{Gleason1957}, which establishes a necessary and sufficient characterization of probability measures over the outcomes of sharp quantum measurements. So once again, fundamental aspects of the quantum are revealed via the analysis of a shape, albeit this time a shape much more complicated than a simple ball. If we take a step further and consider nontrivial composite physical systems, then the story of quantum theory becomes stranger still: we meet entanglement.
\newpage
\noindent In the words of Bell \cite{Bell1981}: ``The philosopher on the street, who has not suffered a course in quantum mechanics, is quite unimpressed with Einstein-Podolsky-Rosen correlations.'' Bell's celebrated theorem \cite{Bell1964}, however, establishes that the logical conjunction of quantum theory and Einstein's principle of local action \cite{Einstein1948} and Einstein-Podolsky-Rosen realism \cite{Einstein1935} is false. There are at least several compelling reasons to accept Einstein's principle of local action. First, the mutually independent existence of spatially distant objects is an important premise for experimental science. Without it, indomitable confounding variables would undermine conclusions drawn from observations of change coinciding with human manipulations of laboratory conditions. In fact, Einstein went so far as to say its complete abolition would make the formulation of empirically testable laws in the usual sense impossible \cite{Einstein1948}. Second, if one abandons Einstein's principle of local action, then free will begins to erode --- or, as phrased more forcefully by Fuchs \cite{Fuchs2010}: ``\dots if one is willing to throw away one's belief in systems' autonomy from each other, why would one ever believe in one's own autonomy?'' Third, adopting the assumption of distinct physical entities is provenly pragmatic; for example, consider the Standard Model of particle physics. Einstein's principle is a polarizing subject in quantum foundations. In light of the unprecedented empirical success of quantum theory --- in particular for the recent series of Bell tests \cite{Hensen2015}\cite{Giustina2015}\cite{Shalm2015} --- one may be drawn to seriously reconsider Einstein-Podolsky-Rosen realism. Fine \cite{Fine1999}, Fuchs-Mermin-Schack \cite{Fuchs2014}, H{\"a}nsch \cite{Hansch2015}, Unruh \cite{Unruh2013}, and Zukowski-Brukner \cite{Zukowski2014} have explicitly emphasized that quantum theory and locality are logically consistent. On that view, Bohr's vision of physics (as quoted in our epigraph) may be embraced.

\noindent Entanglement provides a beautiful example of the rich interplay between mathematics, theoretical physics, experimental physics, and technological development. These areas are not part of a hierarchical structure --- insights from each inspire the others. Entanglement first emerged in a theoretical setting \cite{Einstein1935}\cite{Schrodinger1935a}, later to be confirmed by experiment \cite{Aspect1982}, and then envisioned as a vital resource for quantum computing \cite{Jozsa2003}, quantum cryptography \cite{Ekert1991} and quantum communication \cite{Lloyd1997} prior to physical realizations of quantum teleportation \cite{Ma2012}, quantum key distribution \cite{Scheidl2009}, and a compilation \cite{Martin2012} of Shor's algorithm \cite{Shor1994}. From a physical perspective, entanglement serves to motivate the general dynamical structure of quantum theory: the cones of completely positive linear transformations on complex Hilbert spaces. In finite dimensions, Choi \cite{Choi1975} and Jamio{\l}kowski \cite{Jamiolkowski1972} fully characterized these cones in the realm of pure mathematics: they are linearly isomorphic to the cones of unnormalized quantum states describing bipartite physical systems. The Choi-Jamio{\l}kowski isomorphism between quantum cones provides a deep physical insight: the kinematical structure and the dynamical structure of quantum theory are equivalent. From a critical examination of the foundations of quantum theory, Einstein-Podolsky-Rosen thus initiated an ongoing multidisciplinary revolution, with quantum foundations and quantum information enjoying mutualistic symbiosis.

\noindent Quantum complementarity, contextuality, and correlations emerge from the geometry of quantum cones; thus, to study these shapes is to study the foundations for quintessential quantum physics. In \cref{partI} of this thesis, we derive geometric insights into quantum theory in the light of novel structures: \textit{conical designs} \cite{Graydon2015a}. From one perspective, conical designs are highly symmetric polytopes in quantum state space. We fully characterize these polytopes; moreover, we establish their existence in all finite dimensions. From another perspective, conical designs can be viewed as a natural generalization of complex projective designs \cite{Neumaier1981}\cite{Hoggar1982}\cite{Zauner1999}\cite{Scott2006}, which are vital constructs in quantum information theory. For instance, complex projective designs find natural applications in quantum state tomography \cite{Paris2004} and measurement-based quantum cloning \cite{Gisin1997}. They also have important applications \cite{Fuchs2003}\cite{Englert2004}\cite{Renes2004c}\cite{Renes2005}\cite{Durt2008} in quantum cryptography. We show that conical designs are naturally adapted to the description of entanglement \cite{Graydon2016}. In particular, we demonstrate a fundamental connection between conical designs and the theory of entanglement monotones \cite{Vidal2000}. In general, the first part of this thesis is devoted to these special quantum shapes and their connections with quantum information. 

\noindent It is natural to wonder what lies beyond quantum theory; to ask: what comes next? Feynman reminds us \cite{Feynman1963}: ``In fact, everything we know is only some kind of approximation, because we know that we do not know all the laws as yet. Therefore, things must be learned only to be unlearned again or, more likely, to be corrected.'' These words appear on the first page of Feynman's famous lectures delivered at the California Institute of Technology during 1961-1963. Over a half century later, they continue to describe the situation in contemporary physics. A glaring case in point is the incomplete reconciliation of quantum theory with general relativity. This problem, perhaps, may be resolved from a deeper understanding of quantum theory. Indeed, the kinematical and dynamical structure of quantum theory is that of quantum cones; consequently, it is imperative to consider their essential characteristics.

\noindent In addition to the quantum channel-state duality established via the Choi-Jamio{\l}kowski isomorphism, there exists another important unifying feature of quantum theory, namely self-duality \cite{Barnum2016}. In quantum theory, the outcomes of physical measurements are associated with elements of the cone of positive semi-definite linear functionals on unnormalized quantum states. These elements are called effects. In arbitrary finite Hilbert dimension, the cones of unnormalized quantum states and unnormalized quantum effects are identical --- they are \textit{self-dual}. From a general information-theoretic perspective, M\"uller and Ududec proved that self-duality follows from the structure of reversible computation \cite{Muller2012}. From a pure mathematical point of view, self-dual cones can be further classified in terms of the structure of their automorphism groups. In particular, quantum cones enjoy homogeneity \cite{Satake1972}: the linear automorphism group of a quantum cone acts transitively on its interior. In this respect, one says that quantum cones are \textit{homogeneous}. Operationally, homogeneity implies that any nonsingular quantum state can be mapped to any other via a reversible process. Homogeneity and self-duality are thus fundamental aspects of quantum information theory; moreover, they are essential characteristics of quantum theory in general.

\noindent If one is to look beyond quantum theory for new physics, then a logical and conservative approach is to consider physical theories sharing some of its essential characteristics. Within the vast landscape of general probabilistic theories \cite{Barnum2012}, homogeneity and self-duality do not uniquely specify quantum theory. These crucial features do, however, appreciably narrow the field: in finite dimensions, we arrive in the closed neighbourhood of Jordan-algebraic probabilistic theories. Our arrival therein is a consequence of the Koecher-Vinberg theorem \cite{Koecher1958}\cite{Vinberg1960}, which is a very deep result in the theory of operator algebras. In finite dimensions, Koecher and Vinberg independently proved that the only homogeneous self-dual cones are the positive cones of finite dimensional formally real Jordan algebras. By definition, these algebras are equipped with a nice self-dualizing inner product, hence their shorter name: \textit{Euclidean Jordan algebras}. From the Jordan-von Neumann-Wigner classification theorem \cite{Jordan1934} one has that any Euclidean Jordan algebra is isomorphic to a direct sum of algebras from the following list: self-adjoint matrices over the real, complex, and quaternionic division rings, spin factors, and the exceptional Jordan algebra of $3\times 3$ self-adjoint octonionic matrices. Physical theories built thereupon (including the case of classical probability theory built from direct sums of the trivial algebra) are quantum theory's `closest cousins' \cite{Ududec2012}.

\noindent Starting from arbitrary general (operational) probabilistic theories, Barnum-M{\"u}ller-Ududec have recently \textit{derived} homogeneity and self-duality from three compelling physical postulates \cite{Barnum2014}. Their fourth and final postulate yields quantum cones exactly. The Barnum-M{\"u}ller-Ududec postulates are phrased solely in terms of single systems, which markedly distinguishes their derivation from those of Hardy \cite{Hardy2001}\cite{Hardy2011}, Daki{\'c}-Brukner \cite{Dakic2011}, Chiribella-D'Ariano-Perinotti \cite{Chiribella2011}, and Masanes-M{\"u}ller \cite{Masanes2011}. Barnum-M{\"u}ller-Ududec leave open a very delicate question: is it possible to formulate reasonable \textit{composites} of physical models based on Euclidean Jordan algebras? In \cref{partII} of this thesis, we answer that question in the affirmative: we construct dagger compact closed \textit{Jordanic categories} \cite{Barnum2015}\cite{Barnum2016b}.

\noindent Category theory was introduced by Eilenberg and Mac \!Lane in \cite{Eilenberg1945}, wherein they announced: ``In a metamathematical sense our theory provides general concepts applicable to all branches of mathematics, and so contributes to the current trend towards uniform treatment of different mathematical disciplines.'' From its initial roots in algebraic topology \cite{Eilenberg1952} and homological algebra \cite{Eilenberg1956}, category theory grew to span the realm of pure mathematics; moreover, the very foundations of mathematics can be understood categorically \cite{MacLane1997}, which is an interesting alternative to the familiar Zermelo-Fraenkel set-theoretic framework. Category theory is formulated at a very high level of abstraction. From that perspective, instead of studying the structure of one particular group, one considers the category of \textit{all} groups and group homomorphisms: \textbf{Grp}. Likewise, rather than examining a specific vector space over a field $\mathbb{K}$, one treats the category of all such vector spaces and linear transformations: $\mathbf{K}$-$\mathbf{Vect}$. Moving one level higher, one considers the category of all categories: $\mathbf{Cat}$. Abstract thinking is very powerful. From the general conception of a class, one captures the essence of all particular instantiations. This is especially true in physics. Thus category theory and its level of abstraction naturally interface with the study of quantum foundations.

\noindent Abramsky and Coecke launched categorical quantum mechanics in \cite{Abramsky2004}, setting in motion a paradigm shift  for quantum foundations and quantum information science. In addition to reformulating the usual pure variant\footnote{By the \textit{pure variant} of quantum theory, one refers to the usual textbook description in terms of state vectors, unitary evolution, and self-adjoint observables.} of finite dimensional quantum theory in terms of compact closed categories with biproducts, Abramsky and Coecke explicitly demonstrated how quantum teleportation, logic-gate teleportation, and entanglement swapping protocols were captured at this abstract level. The relevant category for their work was $\mathbf{FdHilb}$. Soon thereafter, Peter Selinger introduced the CPM construction in \cite{Selinger2005}. Selinger proved, in particular, that $\mathbf{CPM}(\mathbf{FdHilb})$ is a dagger compact closed category of finite dimensional complex Hilbert spaces with completely positive linear transformations thereupon, \textit{i.e}.\ the \textit{general variant} of quantum theory familiar to quantum information theorists. In short, quantum theory \textit{is} a dagger compact closed category. The Choi-Jamio{\l}kowski isomorphism, for instance, is a concrete form of dagger compact closure. These lines of thought motivate our construction of dagger compact closed \textit{Jordanic categories} in \cref{partII} of this thesis. \noindent Our construction of Jordanic physical theories as dagger compact closed categories is based on an axiomatic derivation of all possible \textit{compositional structures} on Euclidean Jordan algebras. We prove that any such Jordanic composite is a Jordan ideal in Hanche-Olsen's universal tensor product \cite{HancheOlsen1983}. The categories we construct describe physics beyond the realm of quantum theory.

\noindent In quantum theory, local measurements on bipartite systems suffice to determine a unique global state: this property is known as \textit{tomographic locality}. The aforementioned axiomatizations of quantum theory due to Hardy \cite{Hardy2011}, Daki{\'c}-Brukner \cite{Dakic2011}, and Masanes-M{\"u}ller \cite{Masanes2011} explicitly invoke tomographic locality, while Chiribella-D'Ariano-Perinotti \cite{Chiribella2011} and Hardy's earlier work \cite{Hardy2001} invoke equivalent conditions. We do not demand tomographic locality from our composites; moreover, we prove that quantum theory is the only subcategory of our construction wherein tomographic locality holds. Another important feature of quantum theory is \textit{preservation of purity}: the composite of two pure morphisms yields a pure morphism, where `morphism' can be taken as either `state' or `transformation.' In fact, preservation of purity is asserted as an axiom by Chiribella and Scandolo \cite{Chiribella2015} to prove a general version of the Lo-Popescu theorem \cite{Lo2001} within the context of such purity preserving general probabilistic theories. We do not demand preservation of purity, and we construct composites violating this principle --- specifically, quaternionic composites. Therefore, our Jordanic theories depart from quantum theory in at least two significant ways concerning composite physical systems. On the other hand, our theories all enjoy homogeneity, self-duality, and a generalized version of the Choi-Jamio{\l}kowski isomorphism. On that view, we remain within the general neighbourhood of quantum theory. 

\noindent We outline the balance of this thesis in the following section.

\section{Outline} 

\noindent We partition the main body of this thesis into two parts. Following \cref{partI}, we pause for an interlude in \cref{interlude}, which leads us into \cref{partII}. We close with an epilogue in \cref{epilogue}. Some particularly technical details required to render this thesis self-contained are relegated to \cref{appP1} and \cref{appP2}. In this thesis, we cite two hundred eighty \hyperref[theBib]{references}.

\underline{\cref{partI}: Conical Designs}
 
\indent \cref{introPartI} is introductory and divided into three sections. In \cref{quantumTheory}, we review elements of quantum theory and set our notation. In \cref{grpTheory}, we recall prerequisite group theory for the sequel, setting additional notation. Our primary aim therein is to detail the canonical product representation of the complex unitary group of degree $d$. The well known irreducible components of this unitary representation, namely the symmetric and antisymmetric subspaces, feature prominently in the balance of \cref{partI}. In \cref{projectiveDesigns}, we review complex projective $t$-designs, with a strong emphasis on the case $t=2$. We prove a novel result, namely \cref{pureDesCon}, which characterizes the extreme points of quantum state space in terms of simple conditions on probabilities for the outcomes of quantum measurements formed from an arbitrary complex projective 2-design. We also recall facts concerning \textsc{sic}s (see \cref{sicDef}) and \textsc{mub}s (see \cref{mubDef}.)

\indent \cref{designsOnQuantumCones} is based on \cite{Graydon2015a} and divided into five sections. In \cref{blochBod}, we review the generalized Bloch representation of quantum state spaces. In \cref{simmumery}, we review \textsc{sim}s (see \cref{simDef}) and \textsc{mum}s (see \cref{mumDef}) and provide a unified geometric proof for their existence in all finite dimensions. \textsc{sim}s and \textsc{mum}s are arbitrary rank generalizations of \textsc{sic}s and \textsc{mub}s, respectively. In \cref{desQC}, we introduce conical 2-designs. We establish five equivalent characterizations thereof via \cref{desCons} and we detail their essential properties. We also prove that \textsc{sim}s and \textsc{mum}s are conical 2-designs. In \cref{hc2d}, we focus on the subclass of homogeneous conical 2-designs (see \cref{hcdDef}.) We characterize homogeneous conical 2-designs via \cref{blochPoly} and we prove that all varieties homogeneous conical 2-designs exist in all finite dimensions via \cref{existThm}. In \cref{inSearchOf}, we outline a program to seek out new varieties of complex projective 2-designs. We also lift the problem of constructing a homogeneous conical 2-design to the problem of constructing a 1-design on a higher dimensional real vector space via \cref{liftThm}.

\indent \cref{entanglementConicalDesigns} is based on \cite{Graydon2016} and divided into four sections. In \cref{prelims}, we first review entanglement monotones. We then define a novel concept: \textit{regular} entanglement monotones (see \cref{regDef}.) We then prove \cref{conRegLem} thereby establishing that the concurrence is regular. In \cref{monotones}, we prove \cref{monDesThm}, which establishes a fundamental and elementary connection between the theory of regular entanglement monotones and the theory of conical 2-designs. Our proof of \cref{monDesThm} is founded on our novel \cref{nluiLem}. In \cref{witnesses}, we develop and generalize previous work relating entanglement witnesses and certain conical 2-designs. In \cref{decomps}, we explore a connection linking conical 2-designs with Werner states and isotropic states. The former are invariant under the action of the canonical product representation of the complex unitary group of the degree $d$, the latter are invariant under the action of $U\otimes \overline{U}$, where henceforth overline denotes complex conjugation with respect to a fixed basis for the underlying finite dimensional complex Hilbert space.

\indent \cref{conclusionPartI} is our conclusion for \cref{partI}.
\newpage

\underline{\cref{partII}: Categorical Jordan Algebraic Post-Quantum Theories}
 
\indent \cref{introPartII} is introductory and divided into three sections. We do not assume any prior knowledge of the material presented in this chapter. In \cref{catPrelims}, we review elements of category theory. We first recall definitions of categories, functors, and natural transformations. We then build up to the definition of a dagger compact closed category (see \cref{dagCom}.) In \cref{jordPrelims}, we review Jordan algebraic prerequisites. In particular, we focus on Euclidean Jordan algebras (see \cref{ejaDef}), which are the ambient spaces for states and effects in the Jordan algebraic post-quantum theories considered in the chapters that follow. We define \textit{standard} representations for all reversible Euclidean Jordan algebras. These representations are in terms of Jordan subalgebras of the self-adjoint parts of C$^{*}$\!-algebras. In \cref{cStarPrelims}, we recall \textit{universal} representations of Euclidean Jordan algebras from the literature, and we present some elementary calculations. The standard and universal representations facilitate the sequel.

\indent \cref{compositesEJA} is based on portions of \cite{Barnum2015} and \cite{Barnum2016b} pertaining to composites of Euclidean Jordan algebras. We divide this chapter into three sections. In \cref{physMot}, we review the framework of general probabilistic theories. Along the way, we specialize to our case of interest: Jordan algebraic general probabilistic theories. We introduce a general definition for \textit{composites} of models for physical systems in general probabilistic theories, \cref{def: dynamical composites}. The axioms in our definition reflect the physical principle of nonsignaling. In \cref{sec: composites Jordan}, we consider the structure of composites in general probabilistic theories based on Euclidean Jordan algebras. We prove \cref{thm: composites special}, which establishes that the composite of two nontrivial simple Euclidean Jordan algebras always admits a representation within the self-adjoint part of a C$^{*}$\!-algebra. An immediate \cref{cor: no composite with exceptional} is that composites involving the exceptional Jordan algebra do not exist. We then prove that any composite is a Jordan ideal of Hanche-Olsen's universal tensor product (\cref{thm: simple composites ideals}.) Next, we define a \textit{canonical tensor product} for Euclidean Jordan algebras (see \cref{def: canonical tensor product}) and we prove that canonical tensor products yield composites (\cref{prop: canonical product composite}.) In \cref{canTPcomps}, we explicitly compute all canonical tensor products involving reversible Euclidean Jordan algebras. 

\indent \cref{categoriesEJA} is based on portions of \cite{Barnum2015} and \cite{Barnum2016b} pertaining to categorical Jordan algebraic post-quantum theories. We divide this chapter into two sections. In \cref{jpmSec}, we prove that the canonical tensor product is associative (\cref{prop: associativity}.) We then consider the behaviour of the canonical tensor product over direct sums. This sets the stage for the sequel. In \cref{sec: categories EJC}, we construct our Jordanic categories. We introduce \cref{cjpMorph} for \textit{completely positive Jordan-preserving maps}, which is a natural extension of the notion of complete positivity within our Jordanic physical theories. We then prove \cref{ex: no states}, which rules out the inclusion of `higher' spin factors in our Jordanic categories. Therefore we restrict our attention to the unification of real, complex, and quaternionic quantum theory. Our main result is \cref{cor: InvQM is dagger compact}, establishing that our categorical unification of these theories is dagger compact closed.  

\indent \cref{conclusionPartII} is our conclusion for \cref{partII}.
\part{Conical Designs}
\label{partI}

\chapter{Setting the Stage (Part I)}
\label{introPartI}

\epigraphhead[40]
	{
		\epigraph{``But however the development proceeds in detail, the path so far traced by the quantum theory indicates that an understanding of those still unclarified features of atomic physics can only be acquired by foregoing visualization and objectification to an extent greater than that customary hitherto.''}{---\textit{Werner Heisenberg}\\ 1933 Nobel Lecture:\\ The Development of Quantum Mechanics}
	}

Let $\mathcal{H}_{d}$ denote a finite $d$-dimensional complex Hilbert space. In $\mathcal{H}_{d}$, the intersection of any $1$-dimensional subspace with the unit sphere $\mathcal{S}(\mathcal{H}_{d})$ is isomorphic to a circle in $2$-dimensional Euclidean space, and the quotient space formed from $\mathcal{S}(\mathcal{H}_{d})$ modulo this $\mathrm{U}(1)$ symmetry is \cite{Bengtsson2007} the complex projective space $\mathbb{CP}^{d-1}$: the manifold of \textit{pure quantum states}. The geometry of pure quantum states is extremely intricate, except for the case of $\mathbb{CP}^{1}$, which is isomorphic to a sphere in 3-dimensional Euclidean space. Remarkably, in arbitrary finite Hilbert dimension, there exist \cite{Seymour1984} instances from a class of highly symmetric substructures defined on the manifold of pure quantum states, namely \textit{complex projective $t$-designs} \cite{Neumaier1981}\cite{Hoggar1982}\cite{Zauner1999}\cite{Scott2006}. In this introductory chapter, we recall elements of quantum theory in \cref{quantumTheory} and group theory in \cref{grpTheory}. In \cref{projectiveDesigns}, we meet complex projective $t$-designs and discuss the geometry of quantum state space in their light. The purpose of this chapter is to set the stage with preliminary physical notions and prerequisite mathematical apparatus for the sequel. Before proceeding with technical matters, we shall now outline a physical interpretation along the lines suggested by Heisenberg in the epigraph above.

\noindent For emphasis, we recall once again from Heisenberg's 1933 Nobel Lecture \cite{Heisenberg1933}: ``The very fact that the formalism of quantum  mechanics cannot be interpreted as visual description of a phenomenon occurring in space and time shows that quantum mechanics is in no way concerned with the objective determination of space-time  phenomena.'' Indeed, any objectification, or \textit{reification} \cite{Mermin2009} of quantum states and other formal elements of quantum theory is subject to severe conceptual difficulties; for instance, consider the infamous parables of Wigner's friend \cite{Wigner1995} and Schr{\"o}dinger's cat\footnote{Remember, \cite{Schrodinger1935b}: ``\textit{Man kann auch ganz burleske Fälle konstruiere.} [One can also construct very burlesque cases.]''} \cite{Schrodinger1935b}. On the contrary, the zeitgeist of the present and third quantum foundations revolution \cite{Hardy2011b} revolves around the conception of quantum theory as a \textit{theory of information}. The structure of this information processing framework is highly subtle. Calling back to the epigraph in our prologue, one may adopt the view wherein the structure of this Bohrian \cite{Bohr1987} method for `ordering and surveying human experience' \textit{is} the physics of our natural world. Why is our world this way? This is obviously one of the very deepest questions of all. In the present part of this thesis, we undertake the much more modest goal of illuminating the information-geometric structure of finite dimensional quantum theory. The conical designs introduced in \cref{designsOnQuantumCones}, which include complex projective $t$-designs as a special case, shed new light on this subject. We thus gain novel insight into the mathematical form of quantum information regarding physical systems; hence the contents of \cref{partI} of this thesis are to held within the spirit of the third quantum foundations revolution. 

\section{Elements of Quantum Theory}
\label{quantumTheory}
Henceforth, for brevity, \textit{quantum theory} refers to finite dimensional quantum theory formulated over finite $d$-dimensional complex Hilbert spaces in the usual manner to be reviewed presently (or any equivalent formulation.) Most readers will be intimately familiar with quantum theory. The primary purpose of this section is to introduce our notation and to provide some important definitions. For modern comprehensive treatments of this subject, we point to, for instance, the excellent textbook of Nielsen and Chuang \cite{Nielsen2010} and the more advanced lecture notes of John Watrous \cite{Watrous2011}. We shall begin in Hilbert space.

\noindent Let $\mathcal{H}_{d}$ denote a finite $d$-dimensional complex Hilbert space with $\langle\cdot|\cdot\rangle:\mathcal{H}_{d}\times\mathcal{H}_{d}\longrightarrow\mathbb{C}$ its inner product. By definition \cite{vonNeumann1955}, $\mathcal{H}_{d}$ is complete with respect to the induced norm $\|\cdot\|:\mathcal{H}_{d}\longrightarrow\mathbb{R}_{\geq 0}::\psi\longmapsto\sqrt{\langle\psi|\psi\rangle}$ (as is any finite dimensional inner product space over $\mathbb{C}$ \cite{Roman2008}) in the sense that every Cauchy sequence in $\mathcal{H}_{d}$ converges with respect to this norm to an element of $\mathcal{H}_{d}$. We denote and define the \textit{unit sphere} in $\mathcal{H}_{d}$ via $\mathcal{S}(\mathcal{H}_{d})\equiv\{\psi\in\mathcal{H}_{d}\;\boldsymbol{|}\;\|\psi\|=1\}$. Next, let $\mathcal{H}_{d}^{\star}$ denote the dual Hilbert space of all continuous linear functionals on $\mathcal{H}_{d}$.  In light of the Riesz representation theorem \cite{Bachman1966}, $\forall f\in\mathcal{H}_{d}^{\star}\;\exists!\phi\in\mathcal{H}_{d}$ such that $\forall\psi\in\mathcal{H}_{d}\;f(\psi)=\langle\phi|\psi\rangle$; moreover $\mathcal{H}_{d}^{\star}\cong\mathcal{H}_{d}$ as Hilbert spaces. We shall frequently adopt the Dirac notation $|\psi\rangle$ and $\langle\phi|$ for elements of $\mathcal{H}_{d}$ and $\mathcal{H}_{d}^{\star}$, respectively. Furthermore, we will usually denote $\mathbb{C}$-scalar multiplication on the right, \textit{e.g}.\ by $|\psi\rangle\lambda$ for $\lambda\in\mathbb{C}$. 

\noindent Let $\mathcal{L}(\mathcal{H}_{d})$ denote the C$^{*}$\!-algebra\footnote{We shall return to C$^{*}$\!-algebras in \cref{partII}; see \cref{cStarDef}} \cite{Alfsen2012} of all linear functions $A:\mathcal{H}_{d}\longrightarrow\mathcal{H}_{d}::\psi\longmapsto A\psi$, where, in particular, the algebraic product is functional composition denoted by juxtaposition, \textit{i.e}.\ $AB\equiv A\circ B$, and the relevant $\mathbb{C}$-antilinear involution $^{*}:\mathcal{L}(\mathcal{H}_{d})\longrightarrow \mathcal{L}(\mathcal{H}_{d})$ is realized at the level of matrices via the composition of transposition and complex conjugation (both with respect to some fixed orthonormal basis for $\mathcal{H}_{d}$.) Forgetting its algebraic structure, $\mathcal{L}(\mathcal{H}_{d})$ is itself a finite $d^{2}$-dimensional complex Hilbert space with respect to the Hilbert-Schmidt inner product $\langle\!\langle\cdot|\cdot\rangle\!\rangle:\mathcal{L}(\mathcal{H}_{d})\times\mathcal{L}(\mathcal{H}_{d})\longrightarrow\mathbb{C}::(A,B)\longmapsto\mathrm{Tr}(A^{*}B)$, where of course `$\mathrm{Tr}$' denotes the usual trace functional. As such, we shall sometimes, but only when convenient, adopt the double Dirac notation $|A\rangle\!\rangle$ and $\langle\!\langle B|$ for elements of $\mathcal{L}(\mathcal{H}_{d})$ and its dual Hilbert space $\mathcal{L}(\mathcal{H}_{d})^{\star}$, respectively. We will be primarily interested in the subset of self-adjoint elements in $\mathcal{L}(\mathcal{H}_{d})$, which we denote as define via $\mathcal{L}_{\text{sa}}(\mathcal{H}_{d})=\{A\in\mathcal{L}(\mathcal{H}_{d})\;\boldsymbol{|}\;A=A^{*}\}$, which is a $d^{2}$-dimensional vector space over $\mathbb{R}$. 

\noindent Let $\mathrm{span}_{\mathbb{K}}\mathcal{X}$ denote the $\mathbb{K}$-linear span of a subset $\mathcal{X}$ of a vector space over a field $\mathbb{K}$.

\begin{definition}{\cite{Rockafellar1970}}\label{rock1Def}
\textit{Let $\mathcal{X}$ be a finite dimensional vector space over $\mathbb{R}$. A} convex set \textit{is a subset $\mathcal{C}\subseteq\mathcal{X}$ such that} $x_{1},x_{2}\in\mathcal{C}$ \textit{and} $\lambda\in[0,1]\implies x_{1}(1-\lambda)+x_{2}\lambda\in\mathcal{C}$\textit{. A} cone \textit{in $\mathcal{X}$ is a subset $\mathcal{K}\subseteq\mathcal{X}$ such that $k\in\mathcal{K}$ and $\lambda\in\mathbb{R}_{>0}\implies k\lambda \in\mathcal{K}$. A} pointed cone \textit{is a cone such that} $\mathcal{K}\cap-\mathcal{K}=\{0\}$\textit{. A} generating cone \textit{is a cone such that} $\mathrm{span}_{\mathbb{R}}\mathcal{K}=\mathcal{X}$\textit{. A} convex cone \textit{is a cone that is also a convex set.}
\label{coneDef}
\end{definition}

\noindent In quantum theory, we take $\mathcal{X}=\mathcal{L}_{\text{sa}}(\mathcal{H}_{d})$ and consider for each\footnote{For later convenience, we include the trivial case $d=1$; $\mathbb{R}$ will be our monoidal unit in \cref{partII}.} $d\in\mathbb{N}$ the generating pointed convex cone of all positive semi-definite elements within, which we denote and define as follows
\begin{equation}
\mathcal{L}_{\text{sa}}(\mathcal{H}_{d})_{+}\equiv\big\{A\in\mathcal{L}_{\text{sa}}(\mathcal{H}_{d})\;\boldsymbol{|}\;\forall\psi\in\mathcal{H}_{d}\;\langle\psi|A\psi\rangle\geq 0\big\}\text{.}
\end{equation} That $\mathcal{L}_{\text{sa}}(\mathcal{H}_{d})_{+}$ is a convex cone follows immediately from $\mathbb{R}$-linearity of the inner product $\langle\cdot|\cdot\rangle$. Trivially, it is pointed. Furthermore, any self-adjoint linear function can be decomposed as the difference of two positive semi-definite linear functions, so $\mathcal{L}_{\text{sa}}(\mathcal{H}_{d})_{+}$ is generating.
\begin{definition} 
\textit{A} quantum cone \textit{is} $\mathcal{L}_{\text{sa}}(\mathcal{H}_{d})_{+}$ \textit{for some} $d\in\mathbb{N}$.
\end{definition}  
\begin{definition} 
\textit{Let} $\mathcal{L}_{\text{sa},1}(\mathcal{H}_{d})\equiv\{A\in\mathcal{L}_{\text{sa}}(\mathcal{H}_{d})\;\boldsymbol{|}\;\mathrm{Tr}A=1\}$ \textit{be the unit trace hyperplane in} $\mathcal{L}_{\text{sa}}(\mathcal{H}_{d})$\textit{. The set of} quantum states\textit{, denoted $\mathcal{Q}(\mathcal{H}_{d})$, is the intersection of this hyperplane with the quantum cone, i.e}.\
$\mathcal{Q}(\mathcal{H}_{d})\equiv\mathcal{L}_{\text{sa}}(\mathcal{H}_{d})_{+}\cap\mathcal{L}_{\text{sa},1}(\mathcal{H}_{d})=\{\rho\in\mathcal{L}_{\text{sa}}(\mathcal{H}_{d})_{+}\;\boldsymbol{|}\;\mathrm{Tr}\rho=1\}$.
\label{stateDef}
\end{definition}
\noindent It is obvious from the foregoing definition is that $\mathcal{Q}(\mathcal{H}_{d})$ is a convex set. It is also easy to see that $\mathcal{Q}(\mathcal{H}_{d})$ is topologically compact \cite{Runde2005} in $\mathcal{L}_{\text{sa}}(\mathcal{H}_{d})$ taken as a metric space \cite{Searcoid2006} with respect to the distance induced by the norm inherited from $\mathcal{L}(\mathcal{H}_{d})$. Indeed, $\mathcal{Q}(\mathcal{H}_{d})$ is bounded, for it is contained in the unit ball. Next, let $f:\mathcal{L}_{\text{sa}}(\mathcal{H}_{d})\longrightarrow\mathbb{R}::A\longmapsto \inf_{\psi\in\mathcal{S}(\mathcal{H}_{d})}\langle\psi|A\psi\rangle$, which is a linear and hence continuous function between finite dimensional metric spaces. Therefore $\mathcal{Q}(\mathcal{H}_{d})$ is closed, for it is precisely the intersection of the $f$-preimage of $[0,\infty)$ and the $\mathrm{Tr}$-preimage of $\{1\}$. So, by the Heine-Borel theorem \cite{Leonard2015} $\mathcal{Q}(\mathcal{H}_{d})$ is compact. It follows from the Krein-Milman theorem \cite{Krein1940} that $\mathcal{Q}(\mathcal{H}_{d})$ is precisely the convex hull of its extreme points.

\begin{definition}{\cite{Rockafellar1970}}\label{rock2Def}
\textit{Let $\mathcal{C}\subseteq\mathcal{X}$ be a convex set. An} extreme point \textit{of $\mathcal{C}$ is an element $x\in\mathcal{C}$ such that, for $\lambda\in(0,1)$ and $x_{1},x_{2}\in\mathcal{C}$, $x=x_{1}(1-\lambda)+x_{2}\lambda\iff x=x_{1}=x_{2}$. A} convex combination \textit{of} $x_{1},\dots,x_{n}\in\mathcal{X}$ \textit{is} $x_{1}\lambda_{1}+\dots+x_{n}\lambda_{n}$ \textit{with}  $\lambda_{1},\dots,\lambda_{n}\in\mathbb{R}_{\geq 0}$ \textit{such that} $\lambda_{1}+\dots+\lambda_{n}=1$. \textit{Let subset} $\mathcal{Y}\subseteq\mathcal{X}$. \textit{The} convex hull of $\mathcal{Y}$\textit{, denoted} $\mathrm{conv}\mathcal{Y}$\textit{, is the set of all convex combinations of elements from $\mathcal{Y}$.}
\label{convDef}
\end{definition}

\noindent Recall that a \textit{unit rank projector} is $\pi\in\mathcal{L}_{\text{sa}}(\mathcal{H}_{d})$ such that $\pi^{2}=\pi$ and $\mathrm{Tr}\pi=1$, in which case we can write $\pi=|\psi\rangle\langle\psi|$ for some $|\psi\rangle\in\mathcal{S}(\mathcal{H}_{d})$. It is well known \cite{Bengtsson2007} that $\rho\in\mathcal{Q}(\mathcal{H}_{d})$ is an extreme point if and only if $\rho$ is a unit rank projector. A \textit{pure quantum state} is precisely an extreme point of $\mathcal{Q}(\mathcal{H}_{d})$. We denote the set of all pure states by $\text{Pur}\mathcal{Q}(\mathcal{H}_{d})$. In accordance with our preceding observations, any quantum state can therefore be decomposed as a convex combination of pure quantum states. A common physical interpretation of this fact is that any quantum state can be viewed as a probabilistic mixture of pure states; hence, a quantum state that is not pure is called \textit{mixed}. Lastly, note that the unit rank projectors corresponding to $|\psi\rangle e^{i\theta}$ are identical for all angles $\theta$, hence our earlier discussion regarding $\mathbb{CP}^{d-1}$.
\begin{definition}
\textit{Let} $\mathds{1}_{d}$ \textit{denote the identity function on $\mathcal{H}_{d}$, i.e}.\ $\mathds{1}_{d}:\mathcal{H}_{d}\longrightarrow\mathcal{H}_{d}::\psi\longmapsto\psi$. \textit{The set of} quantum effects\textit{, denoted} $\mathcal{E}(\mathcal{H}_{d})$, \textit{is the set of all} $\mathcal{L}_{\text{sa}}(\mathcal{H}_{d})_{+}\ni E\leq\mathds{1}_{d}$ \textit{with respect to the L{\"o}wner ordering, i.e}.\ $\mathcal{E}(\mathcal{H}_{d})\equiv\{E\in\mathcal{L}_{\text{sa}}(\mathcal{H}_{d})_{+}\;\boldsymbol{|}\;\mathds{1}_{d}-E\in\mathcal{L}_{\text{sa}}(\mathcal{H}_{d})_{+}\}$.
\label{effectDef}
\end{definition}

\noindent In quantum theory, one is primarily interested in certain subsets of effects that model physical measurement devices. Strictly speaking, a \textit{positive operator valued measure }is \cite{Busch1997} a Borel measure \cite{Kelley2012} $\mu$ on the Borel $\sigma$-algebra of some set $\mathscr{S}$ into the quantum cone, such that $\mu(\mathscr{S})=\mathds{1}_{d}$. In quantum information theory, the set comprised of the images of the singletons of $\mathscr{S}$ under $\mu$ is commonly called (the acronym for) a positive operator valued measure. This is mostly harmless of course; however, we shall remove any ambiguity by formally introducing `\textsc{povm}' as follows, wherein we further restrict to our specific case of interest: finite discrete positive operator valued measures.

\begin{definition}
\textit{A} \textsc{povm} \textit{is a subset} $\{E_{1},\dots,E_{n}\}\subset\mathcal{E}(\mathcal{H}_{d})$ \textit{such that} $\sum_{j=1}^{n}E_{j}=\mathds{1}_{d}$.
\label{povmDef}
\end{definition}

\noindent Consider the following physical situation. A physical system is input to a measurement device, which outputs\footnote{If nondestructive, this device outputs a physical system with an associated state computed via the \textit{L{\"u}ders Rule} \cite{Luders1950}.} one classical outcome $j\in\{1,\dots,n\}$. In quantum theory, one models this situation as follows. A quantum state $\rho\in\mathcal{Q}(\mathcal{H}_{d})$ is associated with the input system. A quantum effect $E_{j}$ is associated with each classical outcome such that $\{E_{j}\}$ is a \textsc{povm}. With these associations, based on the physics of the particular situation, the probability of observing measurement outcome $j$ is computed via the \textit{Born rule}: 
\begin{equation}
p_{j}=\mathrm{Tr}(E_{j}\rho)\text{.}
\label{bornRule}
\end{equation}
\noindent It is via the Born rule that one derives the probabilities for arbitrary measurement outcomes encoded in a given quantum state. A quantum state can thus be viewed as a compendium of information regarding the physical system with which it is associated; moreover, the Born rule links these abstract compendia of probabilities with physical experience. On that view, quantum states are something like probability distributions, colloquially speaking. In fact, this statement can be made mathematically precise. Indeed, informationally complete \cite{Prugovevcki1977}\cite{Busch1991} \textsc{povm}s exist \cite{Fuchs2002} for all cases of finite Hilbert dimension, so one can outright identify quantum states with probability distributions. 
\begin{definition}
\textit{An} informationally complete \textsc{povm} \textit{is a} \textsc{povm} $\{E_{j}\}\subset\mathcal{E}(\mathcal{H}_{d})$ \textit{such that} $\mathrm{span}_{\mathbb{R}}\{E_{j}\}=\mathcal{L}_{\text{sa}}(\mathcal{H}_{d})$. \textit{A} minimal informationally complete \textsc{povm} \textit{is an informationally complete} \textsc{povm} \textit{such that} $\mathrm{card}\{E_{j}\}=d^{2}$\textit{, where} $\mathrm{card}\mathscr{S}$ \textit{denotes the cardinality of a set }$\mathscr{S}$.
\label{icDef}
\end{definition}
\noindent If $\{E_{j}\}$ is an informationally complete \textsc{povm}, then the corresponding probabilities $p_{j}$ computed via the Born rule from an arbitrary $\rho\in\mathcal{Q}(\mathcal{H}_{d})$ uniquely determine $\rho$ within the convex set of quantum states. Of course, this follows immediately from the fact that $\{E_{j}\}$ span the ambient real vector space of self-adjoint linear functions; moreover, one can \cite{dAriano2004} explicitly form an expansion for any quantum state in terms of $p_{j}$. In the general case of an arbitrary informationally complete \textsc{povm}, such an expansion may be horrendous; hence, while it may physically appealing to replace the Hilbert space formalism with an explicit probability calculus, what remains may be rather complicated. On this point Wootters emphasizes \cite{Wootters1986}: ``It is obviously possible to devise a formulation of quantum mechanics without probability amplitudes. One is never forced to use any quantities in one's theory other than the raw results of measurements. However, there is no reason to expect such a formulation to be anything other than extremely ugly.'' Nevertheless, in \cite{Wootters1986}, Wootters makes key advances towards a mathematically beautiful formulation of quantum theory without Hilbert space. Wootters employs a specific variety of complex projective 2-design (\textsc{mub}s see \cref{mubDef}) as the primary tool for his construction. Incidentally, any complex projective 2-design facilitates a clean expansion (Eq.~\eqref{pdEpn}) for $\rho$ in terms of certain $p_{j}$. We shall see this in \cref{projectiveDesigns}.

\noindent \cref{stateDef} and \cref{effectDef} respectively define quantum states and quantum effects in a general way; however, quantum theory is equipped to handle the explicit treatment of systems and measurement devices consisting of separate parts. For instance, a physical system may be viewed as the composite of two or more subsystems, and a joint measurement may be carried out by two or more spacelike separated parties. The mathematical construction underlying a quantum theoretic model of such cases is the tensor product of finite dimensional complex vector spaces. Before moving forward with a precise definition, note that a subset $\mathcal{W}\subseteq\mathcal{Z}$ of a complex vector space $\mathcal{Z}$ is said to \textit{generate} $\mathcal{Z}$ if the closure of $\mathcal{W}$ under the ambient vector space operations is equal to $\mathcal{Z}$.

\begin{definition}\cite{Yokonuma1992}\label{tensorDef}
\textit{Let} $\mathcal{V},\mathcal{W}$ \textit{be finite dimensional vector spaces over $\mathbb{C}$. A} tensor product \textit{of} $\mathcal{V}$ \textit{with} $\mathcal{W}$ \textit{is a pair} $(\mathcal{Z}_{0},\otimes)$, \textit{with} $\mathcal{Z}_{0}$ \textit{a finite dimensional vector space over $\mathbb{C}$ and} $\otimes:\mathcal{V}\times\mathcal{W}\longrightarrow\mathcal{Z}_{0}$ \textit{a} $\mathbb{C}$\textit{-bilinear function such that} 
$\otimes(\mathcal{V}\times\mathcal{W})$ \textit{generates} $\mathcal{Z}_{0}$, \textit{and such that for any} $\mathbb{C}$\textit{-bilinear function} $f:\mathcal{V}\times\mathcal{W}\longrightarrow\mathcal{Z}$\textit{, with} $\mathcal{Z}$ \textit{a finite dimensional vector space over $\mathbb{C}$,} $\exists!$ $\mathbb{C}$\textit{-bilinear function} $f_{0}:\mathcal{Z}_{0}\longrightarrow\mathcal{Z}$ \textit{satisfying} $f=f_{0}\circ\otimes$.
\end{definition} 

\noindent A tensor product of $\mathcal{V}$ with $\mathcal{W}$ always exists and is in fact unique up to vector space isomorphism \cite{Yokonuma1992}. By standard abuse of notation and terminology, we shall write $\mathcal{Z}_{0}=\mathcal{V}\otimes\mathcal{W}$ and $\otimes(v,w)=v\otimes w$, and refer to $\mathcal{V}\otimes\mathcal{W}$ as the tensor product of $\mathcal{V}$ with $\mathcal{W}$. Concerning dimensionality, note $\mathrm{dim}_{\mathbb{C}}\mathcal{V}\otimes\mathcal{W}=\mathrm{dim}_{\mathbb{C}}\mathcal{V}\mathrm{dim}_{\mathbb{C}}\mathcal{W}$. One refers to elements of the form $v\otimes w\in\mathcal{V}\otimes\mathcal{W}$ as \textit{pure tensors}. The \textit{tensor product of linear functions} $f:\mathcal{V}\longrightarrow V'$ and $g:\mathcal{W}\longrightarrow{W}'$, denoted $f\otimes g$, is defined via the linear extension of its action on pure tensors, \textit{i.e}.\ $f\otimes g:\mathcal{V}\otimes\mathcal{W}\longrightarrow\mathcal{V}'\otimes\mathcal{W}'::v\otimes w\longmapsto f(v)\otimes g(w)$. Lastly, one notes that the tensor product construction is associative. We are now ready to consider physical composites and quantum channels.
\newpage
\noindent In physics, as in ordinary life, the assumption of distinct physical systems, of separate pieces of the world, is a common practice. In an attempt to refine the primitive word `system' with further details, one could do little more than replace it with a synonymous word like `thing,' `object,' `entity,' and so on. In quantum theory, one often speaks of physical `systems and subsystems', though the prefix \textit{sub} does not necessarily imply some formally aggregate structure, rather these terms only indicate that one is considering a system that is (at least) two systems: a \textit{composite}.   As we have mentioned, in quantum theory one associates a quantum state with a physical system. Implicit in such an association is also the association of a Hilbert space. If one considers two systems, with associated finite dimensional complex Hilbert spaces $\mathcal{H}_{d_{\mathrm{A}}}$ and $\mathcal{H}_{d_{\mathrm{B}}}$, respectively, then $\mathcal{H}_{d_{\mathrm{A}}}\otimes\mathcal{H}_{d_{\mathrm{B}}}$ is the arena for quantum theory of the composite system. In particular, a quantum state for the composite is any $\rho\in\mathcal{Q}(\mathcal{H}_{d_{\mathrm{A}}}\otimes\mathcal{H}_{d_{\mathrm{B}}})$. Such states are either separable or entangled.

\begin{definition}
\textit{A} separable \textit{quantum state is any} $\rho\in\mathcal{Q}(\mathcal{H}_{d_{\mathrm{A}}}\otimes\mathcal{H}_{d_{\mathrm{B}}})$ \textit{such that} $\rho=\sum_{i}\sigma_{i}\otimes \tau_{i}p_{i}$, \textit{where} $0\leq p_{i}\leq 1$ \textit{sum to unity and} $\sigma_{i}\in\mathcal{Q}(\mathcal{H}_{d_{\mathrm{A}}})$ \textit{and} $\tau_{i}\in\mathcal{Q}(\mathcal{H}_{d_{\mathrm{B}}})$\textit{. We denote the set of all separable states by} $\mathrm{Sep}\mathcal{Q}(\mathcal{H}_{d_{\mathrm{A}}}\otimes\mathcal{H}_{d_{\mathrm{B}}})$\textit{. An} entangled \textit{quantum state is any} $\rho\in \mathcal{Q}(\mathcal{H}_{d_{\mathrm{A}}}\otimes\mathcal{H}_{d_{\mathrm{B}}})\setminus\mathrm{Sep}\mathcal{Q}(\mathcal{H}_{d_{\mathrm{A}}}\otimes\mathcal{H}_{d_{\mathrm{B}}})$.
\label{sepDef}
\end{definition}
\noindent In \cref{entanglementConicalDesigns}, we consider entanglement in the light conical designs. For the sake of continuity, we shall defer preliminaries regarding entanglement monotones \cite{Vidal2000} and entanglement witnesses \cite{Terhal2000} until then. Having just formally met \cref{sepDef}, however, a few general, technical remarks are in order. The geometry of entangled quantum states is very rich, and only partially understood \cite{Horodecki2009}. For instance, in the language of computational complexity theory \cite{Aaronson2013}, the problem of deciding whether an arbitrary quantum state is separable is NP-hard \cite{Gurvits2003}\cite{Gharibian2008}. Nevertheless, in the simplest cases $d_{\mathrm{A}}=d_{\mathrm{B}}=2$ and $\{d_{\mathrm{A}},d_{\mathrm{B}}\}=\{2,3\}$, and only in these cases, a closed necessary and sufficient condition for separability has been discovered: the \textit{Peres-Horodecki criterion} \cite{Peres1996}\cite{Horodecki1996}. For a precise statement of this criterion, we shall need to introduce the partial transpose. Let $\mathrm{I}_{d}:\mathcal{L}(\mathcal{H}_{d})\longrightarrow\mathcal{L}(\mathcal{H}_{d})::A\longmapsto A$ denote the identity function on $\mathcal{L}(\mathcal{H}_{d})$. Next, relative to some fixed orthonormal basis for $\mathcal{H}_{d}$, let $\mathrm{T}_{d}:\mathcal{L}(\mathcal{H}_{d})\longrightarrow\mathcal{L}(\mathcal{H}_{d})::A\longmapsto A^{\mathrm{T}}$ denote the usual matrix transpose. The \textit{partial transpose} is then denoted and defined via the linear extension of $\mathrm{I}_{d_{\mathrm{A}}}\otimes\mathrm{T}_{d_{\mathrm{B}}}:\mathcal{L}(\mathcal{H}_{d_{\mathrm{A}}}\otimes\mathcal{H}_{d_{\mathrm{B}}})\longrightarrow\mathcal{L}(\mathcal{H}_{d_{\mathrm{A}}}\otimes\mathcal{H}_{d_{\mathrm{B}}})::A\otimes B\longmapsto A\otimes B^{\mathrm{T}}$. In the aforementioned simplest cases, the Peres-Horodecki criterion establishes that $\rho\in\mathcal{Q}(\mathcal{H}_{d_{\mathrm{A}}}\otimes\mathcal{H}_{d_{\mathrm{B}}})$ is separable if and only if $\mathrm{I}_{d_{\mathrm{A}}}\otimes\mathrm{T}_{d_{\mathrm{B}}}(\rho)\in\mathcal{Q}(\mathcal{H}_{d_{\mathrm{A}}}\otimes\mathcal{H}_{d_{\mathrm{B}}})$. Immediately, then, one sees that the transpose is not completely positive. 
\begin{definition}\label{cpDef}
\textit{A} positive map \textit{is a linear function} $\Lambda:\mathcal{L}(\mathcal{H}_{d_{\mathrm{A}}})\longrightarrow\mathcal{L}(\mathcal{H}_{d_{\mathrm{B}}})::\mathcal{L}_{\text{sa}}(\mathcal{H}_{d_{\mathrm{A}}})_{+}\longmapsto\mathcal{L}(\mathcal{H}_{d_{\mathrm{B}}})_{+}$\textit{. A} completely positive (CP) map \textit{is a positive map} $\Lambda$ \textit{such that} $\forall d_{\mathrm{C}}\in\mathbb{N}$ $\Lambda\otimes\mathrm{I}_{d_{\mathrm{C}}}$ \textit{is a positive map. We denote the sets of all positive and CP maps by} $\mathsf{Pos}(d_{\mathrm{A}},d_{\mathrm{B}})$ \textit{and} $\mathsf{CP}(d_{\mathrm{A}},d_{\mathrm{B}})$\textit{, respectively.}
\end{definition}

\noindent It turns out \cite{Horodecki1996} that $\rho\in\mathcal{Q}(\mathcal{H}_{d_{\mathrm{A}}}\otimes\mathcal{H}_{d_{\mathrm{B}}})$ is separable if and only if $\mathrm{I}_{d_{\mathrm{A}}}\otimes\Lambda(\rho)\in\mathcal{L}_{\text{sa}}(\mathcal{H}_{d_{\mathrm{A}}}\otimes\mathcal{H}_{d_{\mathrm{C}}})_{+}$ for \textit{every} positive map $\Lambda:\mathcal{L}(\mathcal{H}_{d_{\mathrm{B}}})\longmapsto\mathcal{L}(\mathcal{H}_{d_{\mathrm{C}}})$. The key to the proof of the Peres-Horodecki criterion is that when $d_{\mathrm{A}}=2$ and $d_{\mathrm{B}}\in\{2,3\}$ every positive map $\Lambda:\mathcal{L}(\mathcal{H}_{d_{\mathrm{A}}})\longmapsto\mathcal{L}(\mathcal{H}_{d_{\mathrm{B}}})$ is \cite{Stormer1963}\cite{Woronowicz1976} \textit{decomposable}, \textit{i.e}.\ $\Lambda=\Lambda_{1}+\Lambda_{2}\circ\mathrm{T}_{d_{\mathrm{B}}}$ with $\Lambda_{1},\Lambda_{2}$ CP maps. What complicates the situation in higher dimensions is that not every CP map is decomposable; moreover, in such dimensions, there exist \cite{Kossakowski2003} infinite families of CP maps that are not decomposable. Thus, the structure of positive and CP maps is itself quite subtle. One can study the geometry of these maps by considering the ambient finite $d_{\mathrm{A}}^{2}d_{\mathrm{B}}^{2}$-dimensional complex Hilbert space of all $\mathbb{C}$-linear functions $\Theta:\mathcal{L}(\mathcal{H}_{d_{\mathrm{A}}})\longrightarrow\mathcal{L}(\mathcal{H}_{d_{\mathrm{B}}})$, which we shall denote $\mathsf{Lin}(d_{\mathrm{A}},d_{\mathrm{B}})$. Here, the double Dirac notation for $|A\rangle\!\rangle\in\mathcal{L}(\mathcal{H}_{d})$ becomes especially useful, for one can express any $\Theta\in\mathsf{Lin}(d_{\mathrm{A}},d_{\mathrm{B}})$ as $\Theta=\sum_{j,k}|A_{j}\rangle\!\rangle\phi_{j,k}\langle\!\langle B_{k}|$ in terms of arbitrary orthonormal bases $\{|A_{j}\rangle\!\rangle\}$ and $\{|B_{k}\rangle\!\rangle\}$  for $\mathcal{L}(\mathcal{H}_{d_{\mathrm{A}}})$ and $\mathcal{L}(\mathcal{H}_{d_{\mathrm{B}}})$, respectively. With overline denoting complex conjugation, the adjoint is then $\Theta^{*}\equiv\sum_{j,k}|B_{k}\rangle\!\rangle\overline{\phi_{j,k}}\langle\!\langle A_{j}|$, and the Hilbert-Schmidt inner product is $\langle\!\langle\!\langle\!\langle\Theta|\Xi\rangle\!\rangle\!\rangle\!\rangle\equiv\mathrm{Tr}(\Theta^{*}\Xi)$, where we have introduced quadruple Dirac notation for elements of $\mathsf{Lin}(d_{\mathrm{A}},d_{\mathrm{B}})$. Naturally, one can allow only real scalars and consider $\mathsf{Lin}(d_{\mathrm{A}},d_{\mathrm{B}})$ as a vector space over $\mathbb{R}$, with quantum channels forming convex subset thereof.

\begin{definition}
\textit{A} trace preserving (TP) map \textit{is any} $\Theta\in\mathsf{Lin}(d_{\mathrm{A}},d_{\mathrm{B}})$ \textit{such that} $\mathrm{Tr}\Theta(A)=\mathrm{Tr}A$ \textit{for all} $A\in\mathcal{L}(\mathcal{H}_{d_{\mathrm{A}}})$\textit{. A} quantum channel \textit{is a completely positive trace preserving} \textit{map. We respectively denote the sets of all} TP \textit{maps and quantum channels by} $\mathsf{TP}(d_{\mathrm{A}},d_{\mathrm{B}})$ \textit{and} $\mathsf{CPTP}(d_{\mathrm{A}},d_{\mathrm{B}})$\textit{.}
\label{chanDef}
\end{definition}

\noindent Quantum information regarding physical systems can change, for instance, in light of measurements, or the destruction of subsystems, or when the physical system in question is influenced by an external force field, and so on. Quantum channels exact these changes on quantum states; moreover they are the general dynamical structure of quantum theory. Geometrically, the convex set of quantum channels is the intersection of the convex cone $\mathsf{CP}(d_{\mathrm{A}},d_{\mathrm{B}})$ with the hyperplane $\mathsf{TP}(d_{\mathrm{A}},d_{\mathrm{B}})$ in $\mathsf{Lin}(d_{\mathrm{A}},d_{\mathrm{B}})$ considered as a vector space over $\mathbb{R}$. It is not a coincidence that one has in mind the exact same geometric picture for quantum states: a convex cone sliced by a hyperplane perpendicular to the symmetry axis of the cone. Indeed, in light of the Choi-Jamio{\l}kowski isomorphism \cite{Choi1975}\cite{Jamiolkowski1972}, the convex sets of quantum states and quantum channels admit isomorphic geometries. Henceforth, let $\mathcal{L}(\mathcal{H}_{d_{\mathrm{A}}},\mathcal{H}_{d_{\mathrm{A}}})$ denote finite $d_{\mathrm{A}}d_{\mathrm{B}}$-dimensional complex Hilbert space of set of $\mathbb{C}$-linear functions $X:\mathcal{H}_{d_{\mathrm{A}}}\longrightarrow\mathcal{H}_{d_{\mathrm{B}}}$ equipped with the usual Hilbert-Schmidt inner product, and where $X^{*}:\mathcal{H}_{d_{\mathrm{B}}}\longrightarrow\mathcal{H}_{d_{\mathrm{A}}}$ denotes the usual adjoint function.

\begin{theorem} (Choi-Jamio{\l}kowski \cite{Choi1975}\cite{Jamiolkowski1972}) \textit{Let} $\{|e_{1}\rangle,\dots,|e_{d_{\mathrm{A}}}\rangle\}$ \textit{be an arbitrary orthonormal basis for a finite $d_{\mathrm{A}}$-dimensional complex Hilbert space} $\mathcal{H}_{d_{\mathrm{A}}}$. \textit{Define} $\boldsymbol{\mathcal{J}}:\mathsf{Lin}(d_{\mathrm{A}},d_{\mathrm{B}})\longrightarrow \mathcal{L}(\mathcal{H}_{d_{\mathrm{A}}}\otimes\mathcal{H}_{d_{\mathrm{B}}})$ \textit{via}
$d_{\mathrm{A}}\boldsymbol{\mathcal{J}}(\Lambda)=\sum_{j,k=1}^{d_{\mathrm{A}}}\Lambda\big(|e_{j}\rangle\langle e_{k}|\big)\otimes|e_{j}\rangle\langle e_{k}|\text{.}$
\textit{Then} $\Lambda\in\mathsf{CPTP}(d_{\mathrm{A}},d_{\mathrm{B}})$ \textit{if and only if} $\boldsymbol{\mathcal{J}}(\Lambda)\in\mathcal{Q}(\mathcal{H}_{d_{\mathrm{A}}}\otimes\mathcal{H}_{d_{\mathrm{B}}})$\textit{. In that case,} $\exists \{X_{i}\}\subset\mathcal{L}(\mathcal{H}_{d_{\mathrm{A}}},\mathcal{H}_{d_{\mathrm{A}}})$ \textit{such that} $\Lambda::A\longmapsto \sum_{i} X_{i}AX_{i}^{*}$\text{, where} $\sum_{i}X_{i}^{*}X_{i}=\mathds{1}_{d_{\mathrm{A}}}$.
\label{cjThm}
\end{theorem}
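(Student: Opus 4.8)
The plan is to recognize $\boldsymbol{\mathcal{J}}$ as the vectorization correspondence and to extract all three required properties (linear isomorphism, complete positivity, trace preservation) from a single object: the unnormalized maximally entangled vector $|\Omega\rangle\equiv\sum_{j=1}^{d_{\mathrm{A}}}|e_{j}\rangle\otimes|e_{j}\rangle\in\mathcal{H}_{d_{\mathrm{A}}}\otimes\mathcal{H}_{d_{\mathrm{A}}}$. First I would record the identity
\[ d_{\mathrm{A}}\boldsymbol{\mathcal{J}}(\Lambda)=(\Lambda\otimes\mathrm{I}_{d_{\mathrm{A}}})\big(|\Omega\rangle\langle\Omega|\big), \]
which is immediate from $|\Omega\rangle\langle\Omega|=\sum_{j,k}|e_{j}\rangle\langle e_{k}|\otimes|e_{j}\rangle\langle e_{k}|$ together with $\Lambda$ acting on the first tensor leg. (The definition places $\Lambda(\cdot)\in\mathcal{L}(\mathcal{H}_{d_{\mathrm{B}}})$ in the first slot and $|e_{j}\rangle\langle e_{k}|\in\mathcal{L}(\mathcal{H}_{d_{\mathrm{A}}})$ in the second, so strictly $\boldsymbol{\mathcal{J}}(\Lambda)\in\mathcal{L}(\mathcal{H}_{d_{\mathrm{B}}}\otimes\mathcal{H}_{d_{\mathrm{A}}})$; I identify this with $\mathcal{L}(\mathcal{H}_{d_{\mathrm{A}}}\otimes\mathcal{H}_{d_{\mathrm{B}}})$ by the swap, and read the operators $X_{i}$ below as elements of $\mathcal{L}(\mathcal{H}_{d_{\mathrm{A}}},\mathcal{H}_{d_{\mathrm{B}}})$.) Since $\{|e_{j}\rangle\langle e_{k}|\}$ is a basis of $\mathcal{L}(\mathcal{H}_{d_{\mathrm{A}}})$ and $\Lambda$ is linear, $\Lambda\mapsto\boldsymbol{\mathcal{J}}(\Lambda)$ is a linear bijection; I will reuse the fact that two maps agreeing on $|\Omega\rangle\langle\Omega|$ must coincide.

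For the completely positive half I would argue both directions through this identity. If $\Lambda$ is CP then, taking $d_{\mathrm{C}}=d_{\mathrm{A}}$ in \cref{cpDef}, the map $\Lambda\otimes\mathrm{I}_{d_{\mathrm{A}}}$ is positive; applied to $|\Omega\rangle\langle\Omega|\geq 0$ it yields $d_{\mathrm{A}}\boldsymbol{\mathcal{J}}(\Lambda)\geq 0$, so $\boldsymbol{\mathcal{J}}(\Lambda)$ is positive semi-definite. Conversely, if $C\equiv\boldsymbol{\mathcal{J}}(\Lambda)\geq 0$, I would spectrally decompose $d_{\mathrm{A}}C=\sum_{i}|w_{i}\rangle\langle w_{i}|$ and reshape each eigenvector in the fixed basis, writing $|w_{i}\rangle=\sum_{j}(X_{i}|e_{j}\rangle)\otimes|e_{j}\rangle=(X_{i}\otimes\mathds{1}_{d_{\mathrm{A}}})|\Omega\rangle$ to define $X_{i}:\mathcal{H}_{d_{\mathrm{A}}}\to\mathcal{H}_{d_{\mathrm{B}}}$. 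Then $d_{\mathrm{A}}C=\sum_{i}(X_{i}\otimes\mathds{1}_{d_{\mathrm{A}}})|\Omega\rangle\langle\Omega|(X_{i}^{*}\otimes\mathds{1}_{d_{\mathrm{A}}})$ is exactly $d_{\mathrm{A}}\boldsymbol{\mathcal{J}}$ applied to the map $A\mapsto\sum_{i}X_{i}AX_{i}^{*}$, so by the injectivity above $\Lambda(A)=\sum_{i}X_{i}AX_{i}^{*}$. This operator-sum form is manifestly CP (tensoring with any $\mathrm{I}_{d_{\mathrm{C}}}$ and conjugating a positive operator by $X_{i}\otimes\mathds{1}_{d_{\mathrm{C}}}$ preserves positivity) and is precisely the Kraus representation asserted in the ``in that case'' clause. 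I expect this reshaping/spectral step to be the main obstacle, since it is where positivity of a concrete matrix is converted into the abstract operator-sum form and where one must verify that the construction does not depend on the chosen eigenbasis.

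For trace preservation I would compute the marginal on the output leg. A direct calculation gives $\mathrm{Tr}_{\mathcal{H}_{d_{\mathrm{B}}}}\boldsymbol{\mathcal{J}}(\Lambda)=\tfrac{1}{d_{\mathrm{A}}}\,\Lambda^{*}(\mathds{1}_{d_{\mathrm{B}}})^{\mathrm{T}}$ (transpose in the fixed basis), where $\Lambda^{*}$ is the adjoint map; since $\Lambda$ is trace preserving exactly when $\Lambda^{*}(\mathds{1}_{d_{\mathrm{B}}})=\mathds{1}_{d_{\mathrm{A}}}$, trace preservation is equivalent to this marginal being $\tfrac{1}{d_{\mathrm{A}}}\mathds{1}_{d_{\mathrm{A}}}$, which in particular forces $\mathrm{Tr}\,\boldsymbol{\mathcal{J}}(\Lambda)=1$. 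Assembling the halves: if $\Lambda\in\mathsf{CPTP}(d_{\mathrm{A}},d_{\mathrm{B}})$ then $\boldsymbol{\mathcal{J}}(\Lambda)$ is positive with unit trace, hence lies in $\mathcal{Q}(\mathcal{H}_{d_{\mathrm{A}}}\otimes\mathcal{H}_{d_{\mathrm{B}}})$, and the Kraus form holds. The subtlety I would flag is that unit trace of $\boldsymbol{\mathcal{J}}(\Lambda)$ by itself only yields $\mathrm{Tr}\,\Lambda(\mathds{1}_{d_{\mathrm{A}}})=\mathrm{Tr}\,\mathds{1}_{d_{\mathrm{A}}}$ rather than full trace preservation, so to keep the biconditional tight I would phrase the trace-preservation equivalence through the stronger marginal condition $\mathrm{Tr}_{\mathcal{H}_{d_{\mathrm{B}}}}\boldsymbol{\mathcal{J}}(\Lambda)=\tfrac{1}{d_{\mathrm{A}}}\mathds{1}_{d_{\mathrm{A}}}$, which together with positivity characterizes membership in the image of $\mathsf{CPTP}(d_{\mathrm{A}},d_{\mathrm{B}})$ under $\boldsymbol{\mathcal{J}}$.
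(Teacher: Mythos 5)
The paper does not actually prove this theorem: immediately after stating it, the author writes that the reader is referred to the original references of Choi and Jamio{\l}kowski for the proof, so there is no in-text argument to compare yours against. Judged on its own, your proof is the standard modern one and is essentially correct: the identity $d_{\mathrm{A}}\boldsymbol{\mathcal{J}}(\Lambda)=(\Lambda\otimes\mathrm{I}_{d_{\mathrm{A}}})(|\Omega\rangle\langle\Omega|)$ gives positivity of the Choi operator from complete positivity for free, the spectral-decomposition-plus-reshaping step correctly produces the Kraus form (and you need not worry about basis-independence --- the theorem only asserts existence of some $\{X_{i}\}$, and Kraus representations are famously non-unique), and injectivity of $\boldsymbol{\mathcal{J}}$ closes the loop. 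Your handling of the tensor-leg ordering is also the right call; the paper's stated codomain $\mathcal{L}(\mathcal{H}_{d_{\mathrm{A}}}\otimes\mathcal{H}_{d_{\mathrm{B}}})$ does not literally match the formula, which places the $\Lambda$-output in the first slot.

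The most valuable part of your write-up is the final observation, and you should not bury it as a ``subtlety'': the biconditional as printed is genuinely too strong. If $\boldsymbol{\mathcal{J}}(\Lambda)\in\mathcal{Q}(\mathcal{H}_{d_{\mathrm{A}}}\otimes\mathcal{H}_{d_{\mathrm{B}}})$ one only learns that $\Lambda$ is completely positive and that $\mathrm{Tr}\,\Lambda(\mathds{1}_{d_{\mathrm{A}}})=d_{\mathrm{A}}$, which does not force trace preservation; for instance $\Lambda(A)=\mathrm{Tr}(MA)\,\sigma$ with $\sigma$ a state and $M\geq 0$, $\mathrm{Tr}\,M=d_{\mathrm{A}}$, $M\neq\mathds{1}_{d_{\mathrm{A}}}$ has Choi state $\tfrac{1}{d_{\mathrm{A}}}\sigma\otimes M^{\mathrm{T}}\in\mathcal{Q}$ but is not trace preserving. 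The correct characterization of the image of $\mathsf{CPTP}(d_{\mathrm{A}},d_{\mathrm{B}})$ is, exactly as you say, positivity together with the marginal condition $\mathrm{Tr}_{\mathcal{H}_{d_{\mathrm{B}}}}\boldsymbol{\mathcal{J}}(\Lambda)=\tfrac{1}{d_{\mathrm{A}}}\mathds{1}_{d_{\mathrm{A}}}$, which is strictly stronger than unit trace. Your proof establishes the correct version of the theorem; the defect lies in the statement you were asked to prove, not in your argument.
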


\noindent The $X_{i}$ appearing in the statement of \cref{cjThm} are referred to as \textit{Kraus operators}. Of course, $\mathsf{Lin}(d_{\mathrm{A}},d_{\mathrm{B}})$ and $\mathcal{L}(\mathcal{H}_{d_{\mathrm{A}}}\otimes\mathcal{H}_{d_{\mathrm{B}}})$ admit identical Hilbert dimension $d_{\mathrm{A}}^{2}d_{\mathrm{B}}^{2}$ and are therefore isomorphic as finite dimensional complex Hilbert spaces. The beauty of the \cref{cjThm}, for which we refer the reader to the original references \cite{Choi1975}\cite{Jamiolkowski1972} for proof, is that the image of the restriction of the linear bijection $\boldsymbol{\mathcal{J}}$ to the set of quantum channels $\mathsf{CPTP}(d_{\mathrm{A}},d_{\mathrm{B}})$ is exactly the set of quantum states $\mathcal{Q}(\mathcal{H}_{d_{\mathrm{A}}}\otimes\mathcal{H}_{d_{\mathrm{B}}})$. Thus, to study the shape of quantum states is to study the shape of quantum channels, and \textit{vice versa}. Naturally, then, just as one has the notion of a separable state, one also has the notion of a separable channel \cite{Rains1997}.

\begin{definition}
\textit{A} separable quantum channel \textit{is any completely positive trace preserving map} $\Theta:\mathcal{L}(\mathcal{H}_{d_{\mathrm{A}}})\otimes\mathcal{L}(\mathcal{H}_{d_{\mathrm{B}}})\longrightarrow\mathcal{L}(\mathcal{H}_{d_{\mathrm{C}}})\otimes\mathcal{L}(\mathcal{H}_{d_{\mathrm{D}}})$ \textit{such that} $\Theta(A)=\sum_{i}X_{i}AX_{i}^{\dagger}$ \textit{where all of the Kraus operators are of the form} $X_{i}=Y_{i}\otimes Z_{i}$ \textit{for some} $Y_{i}\in\mathcal{L}(\mathcal{H}_{d_{\mathrm{A}}},\mathcal{H}_{d_{\mathrm{C}}})$ \textit{and} $Z_{i}\in\mathcal{L}(\mathcal{H}_{d_{\mathrm{B}}},\mathcal{H}_{d_{\mathrm{D}}})$\textit{. We denote the set of all separable quantum channels by} $\mathrm{Sep}\mathsf{CPTP}(d_{\mathrm{A}};d_{\mathrm{B}},d_{\mathrm{C}};d_{\mathrm{D}})$\textit{.}
\label{sepChan}
\end{definition}

\noindent With a little work \cite{Watrous2011}, one can see that $\boldsymbol{\mathcal{J}}:\mathrm{Sep}\mathsf{CPTP}(d_{\mathrm{A}};d_{\mathrm{B}},d_{\mathrm{C}};d_{\mathrm{D}})\longrightarrow\mathrm{Sep}\mathcal{Q}(\mathcal{H}_{d_{\mathrm{A}}};\mathcal{H}_{d_{\mathrm{B}}},\mathcal{H}_{d_{\mathrm{C}}};\mathcal{H}_{d_{\mathrm{D}}})$, where the image is defined via the obvious generalization of \cref{sepDef}. We shall call on the foregoing definition in \cref{entanglementConicalDesigns}. We shall also call on the following definition.

\begin{definition}
\textit{A} product \textsc{povm} \textit{is any} \textsc{povm} $\{E_{j}\}\subset\mathcal{E}(\mathcal{H}_{d_{\mathrm{A}}}\otimes\mathcal{H}_{d_{\mathrm{A}}})$ \textit{where all of the effects are of the form} $E_{j}=F_{j}\otimes G_{j}$ \textit{for some} $F_{j}\in\mathcal{E}(\mathcal{H}_{d_{\mathrm{A}}})$ \textit{and} $F_{j}\in\mathcal{E}(\mathcal{H}_{d_{\mathrm{B}}})$\textit{. We denote the set of all product} \textsc{povm}s \textit{by} $\mathrm{Prod}\mathcal{E}(\mathcal{H}_{d_{\mathrm{A}}}\otimes\mathcal{H}_{d_{\mathrm{A}}})$\textit{. A} joint \textsc{povm} \textit{is any} \textsc{povm} $\{E_{\alpha}\}\in\mathcal{E}(\mathcal{H}_{d_{\mathrm{A}}}\otimes\mathcal{H}_{d_{\mathrm{A}}})\setminus\mathrm{Prod}\mathcal{E}(\mathcal{H}_{d_{\mathrm{A}}}\otimes\mathcal{H}_{d_{\mathrm{A}}})$\textit{.}
\label{prodPovm}
\end{definition}

\noindent There are a lot of additional, interesting things that one can say about quantum theory; however, we have now met most of essential material on quantum theory that is required to understand the chapters to follow. In closing this section, we note that the \textit{partial trace} is denoted and defined via the linear extension of its action on pure tensors according to $\mathrm{Tr}_{\mathcal{H}_{d_{\mathrm{A}}}}:\mathcal{L}(\mathcal{H}_{d_{\mathrm{A}}}\otimes\mathcal{H}_{d_{\mathrm{B}}})\longrightarrow\mathcal{L}(\mathcal{H}_{d_{\mathrm{B}}})::A\otimes B\longmapsto B\mathrm{Tr}A$, with $\mathrm{Tr}_{\mathcal{H}_{d_{\mathrm{B}}}}$ defined analogously. We now proceed with \cref{grpTheory}, which contains some necessary prerequisite group theory.



\section{Group Theoretic Prerequisites}
\label{grpTheory}
\noindent Group theory is a deep subject. There exist a host of textbooks on the subject, including \cite{Kirillov1976}\cite{Schwarz1996}\cite{Hall2015}. In this section, we collect important prerequisites for \cref{partI} of this thesis.

\noindent Let $\mathrm{G},\mathrm{H}$ be groups. A \textit{group homomorphism} is a function $\sigma:\mathrm{G}\longrightarrow\mathrm{H}::g\longmapsto \sigma_{g}$ such that $\forall g,g'\in\mathrm{G}$ one has that $\sigma_{g}\sigma_{g'}=\sigma_{gg'}$, where juxtaposition denotes the relevant group operations, interpreted from context. Let $\mathcal{V},\mathcal{W}$ be vector spaces. A \textit{linear homomorphism} is a linear function $T:\mathcal{V}\longrightarrow\mathcal{W}::v\longmapsto T(v)$. The terms \textit{monomorphism}, \textit{epimorphism}, and \textit{isomorphism} are reserved for injective, surjective, and bijective functions, respectively. The term \textit{endomorphism} is reserved for functions whose domain and codomain coincide. The term \textit{automorphism} is reserved for endomorphic isomorphisms. Group homomorphisms and linear homomorphisms are instances of a general categorical notion introduced in \cref{partII}.

\noindent A \textit{group representation} of $\mathrm{G}$ is a pair $(\sigma,\mathcal{V})$ where $\sigma:\mathrm{G}\longrightarrow\mathrm{GL}(\mathcal{V})$ is a group homomorphism and $\mathrm{GL}(\mathcal{V})$ is the group of invertible linear endomorphisms on $\mathcal{V}$. A \textit{unitary representation} of $\mathrm{G}$ is a pair $(\sigma,\mathcal{H})$ where $\mathcal{H}$ is a complex Hilbert space \cite{vonNeumann1955} and $\sigma:\mathrm{G}\longrightarrow\mathrm{U}(\mathcal{H})$, with $\mathrm{U}(\mathcal{H})$ the group of unitary linear endomorphisms on $\mathcal{H}$. Now, suppose $\mathcal{X}\subseteq\mathcal{V}$ is a vector subspace of $\mathcal{V}$ and $T$ is a linear endomorphism on $\mathcal{V}$. If $\forall x\in\mathcal{X}$ one has that $T(x)\in\mathcal{X}$, then $\mathcal{X}$ is said to be an \textit{invariant subspace} of $T$. In this case, the \textit{restriction of $T$ to $\mathcal{X}$} is the unique linear endomorphism $T_{|\mathcal{X}}:\mathcal{X}\longrightarrow\mathcal{X}::x\longmapsto T(x)$. Invariant $\mathcal{X}$ is said to be \textit{proper} when $\mathcal{X}\notin\{\{0\},\mathcal{V}\}$. Next, let $(\sigma,\mathcal{H})$ be a unitary representation of $\mathrm{G}$. If there exists a Hilbert subspace $\mathcal{X}\subseteq\mathcal{H}$ such that $\forall g\in\mathrm{G}$ and $\forall x\in\mathcal{X}$ one has that $\sigma_{g}(x)\in\mathcal{X}$, then $\mathcal{X}$ is said to be \textit{invariant}, \textit{i.e}.\ $\mathcal{X}$ is an invariant subspace of $\sigma_{g}$ for all $g\in\mathrm{G}$. If $\mathcal{X}$ is not proper, then the representation is said to be \textit{irreducible}; otherwise, it is said to be \textit{reducible}. The \textit{restriction of} $\pi$ \textit{to} $\mathcal{X}$ is defined via $\sigma_{|\mathcal{X}}:\mathrm{G}\longrightarrow\mathrm{GL}(\mathcal{X})::g\longmapsto \sigma_{g_{|\mathcal{X}}}$. 

\noindent A \textit{matrix group} is a subgroup $\mathrm{G}\subseteq\mathrm{GL}(\mathcal{H}_{d})\subset\mathcal{M}_{d}(\mathbb{C})$ where $\mathcal{M}_{d}(\mathbb{C})$ is the complex Hilbert space of $d\times d$ complex matrices equipped with the usual Hilbert-Schmidt inner product. A \textit{matrix Lie group} is a matrix group such that if a sequence $g_{i}\in\mathrm{G}$ converges entrywise to $x\in\mathcal{M}_{d}(\mathbb{C})$, then either $x\in\mathrm{G}$ else $x$ is not invertible. A \textit{compact matrix Lie group} is a matrix Lie group which is a closed, bounded subset of $\mathcal{M}_{d}(\mathbb{C})$. In particular, $\mathrm{U}(\mathcal{H}_{d})$ is a compact matrix Lie group. If $\sigma:\mathrm{G}\longrightarrow\mathrm{GL}(\mathcal{H}_{d})$ and $\sigma':\mathrm{G}\longrightarrow\mathrm{GL}(\mathcal{H}_{d'})$ are representations of a matrix Lie group $\mathrm{G}$, then the \textit{direct sum} of $(\sigma,\mathcal{H}_{d})$ and $(\sigma',\mathcal{H}_{d'})$ is the representation $(\sigma\oplus\sigma',\mathcal{H}_{d}\oplus\mathcal{H}_{d'})$ defined via $\sigma\oplus\sigma':\mathrm{G}\longrightarrow\mathcal{H}_{d}\oplus\mathcal{H}_{d'}::g\longmapsto \sigma_{g}\oplus \sigma_{g}$. This direct sum construction is associative. Any finite dimensional unitary representation $(\sigma,\mathcal{H}_{d})$ of a compact matrix Lie group $\mathrm{G}$ is such that $(\sigma,\mathcal{H}_{d})$ decomposes as the direct sum of irreducible representations, called \textit{irreducible components}.

\noindent The $t$\textit{-fold tensor product} of $\mathcal{H}_{d}$ is the complex Hilbert space $\mathcal{H}_{d^{t}}\equiv\mathcal{H}_{d}^{\otimes^{t}}$. Let $\mathrm{S}_{t}$ be the group of permutations $\mathrm{p}$ on $\{1,2,\dots,t\}$, and consider its standard unitary representation $(\sigma_{t},\mathcal{H}_{d^{t}})$ defined via 
\begin{equation}
\sigma_{t}:S_{t}\longrightarrow\mathrm{U}(\mathcal{H}_{d^{t}})::\mathrm{p}\longmapsto U_{\mathrm{p}}:::U_{\mathrm{p}}\big(|e_{r_{1}}\rangle\otimes\cdots\otimes|e_{r_{t}}\rangle\big)=|e_{r_{\mathrm{p}(1)}}\rangle\otimes\cdots\otimes|e_{r_{\mathrm{p}(t)}}\rangle\text{,}
\end{equation} 
where $r,r_{j}\in\{1,\dots,d\}$ and $|e_{r}\rangle\in\mathcal{H}_{d}$ constitute an orthonormal basis for $\mathcal{H}_{d}$ and $U_{p}$ is extended linearly. The \textit{$t$-fold product representation} of $\mathrm{U}(\mathcal{H}_{d})$ is the unitary representation $(\sigma^{(t)},\mathcal{H}_{d^{t}})$ defined via
\begin{equation}
\sigma^{(t)}:\mathrm{U}(\mathcal{H}_{d})\longrightarrow\mathrm{U}(\mathcal{H}_{d^{t}})::U\longmapsto U^{\otimes^{t}}.
\end{equation}
In light of Schur-Weyl duality \cite{Goodman2009}, there is a bijection between the irreducible components of $\sigma_{t}$ and $\sigma^{(t)}$. The relevant irreducible component for \cref{cpdDef} is the restriction of $\sigma^{(t)}$ to the invariant irreducible \textit{totally symmetric subspace} of $\mathcal{H}_{d^{t}}$, which is defined \cite{Harrow2013} along with its orthogonal projector thereonto via
\begin{eqnarray}
\mathcal{H}_{\text{sym}}^{(t)}=\text{span}_{\mathbb{C}}\Big\{|\Psi\rangle\in\mathcal{H}_{d^{t}}\;\Big|\;\forall p\in\mathrm{S}_{n}\;U_{p}|\Psi\rangle=|\Psi\rangle\Big\}
\hspace{1cm}\Pi_{\text{sym}}^{(t)}&=&\frac{1}{t!}\sum_{p\in\mathrm{S}_{t}}U_{p}\text{.}
\end{eqnarray}
\noindent In \cref{partI} of this thesis, we will be primarily interested in complex projective 2-designs and their arbitrary rank arbitrary trace generalization to be introduced in \cref{designsOnQuantumCones}. For those purposes, the relevant group representation is the $2$-fold product representation of $\mathrm{U}(\mathcal{H}_{d})$, which we will call the \textit{product representation} for short. Furthermore, for notational convenience we henceforth let $\mathcal{H}\equiv\mathcal{H}_{d}\otimes\mathcal{H}_{d}$ and $\sigma^{(2)}\equiv\sigma$ so that the product representation is written as
\begin{equation}
\sigma:\mathrm{U}(\mathcal{H}_{d})\longrightarrow\mathrm{U}(\mathcal{H})::U\longmapsto U\otimes U\text{.}
\label{prodRep}
\end{equation}
The product representation is vital for the sections and chapters to follow. Happily, in this case, one can easily understand the irreducible components via basic linear algebra. Indeed, let us introduce an arbitrary orthonormal basis with elements $|e_{r}\rangle\in\mathcal{H}_{d}$ and recall the usual swap operator $\mathrm{W}:\mathcal{H}\longrightarrow\mathcal{H}$ defined
\begin{equation}\label{swapOp}
\mathrm{W}=\sum_{r=1}^{d}\sum_{s=1}^{d}|e_{s}\rangle\langle e_{r}|\otimes|e_{r}\rangle\langle e_{s}|\text{.}
\end{equation}
Observe that $\forall|\psi\rangle,|\phi\rangle\in\mathcal{H}_{d}$ one has $\mathrm{W}(|\psi\rangle\otimes|\phi\rangle)=|\phi\rangle\otimes|\psi\rangle$, hence the operator's name. Also note that $\mathrm{W}$ is a self-adjoint unitary linear endomorphism with degenerate eigenvalues $\{+1,-1\}$. The corresponding eigenspaces are the \textit{symmetric} and \textit{antisymmetric} subspaces, which are denoted and defined as follows
\begin{eqnarray}
\mathcal{H}_{\text{sym}}&=&\text{span}_{\mathbb{C}}\Bigg\{|f^{+}_{r,s}\rangle\equiv\frac{|e_{r}\rangle\otimes|e_{s}\rangle+|e_{s}\rangle\otimes|e_{r}\rangle}{\sqrt{2}}:s>r\in\{1,\dots,d\}\Bigg\}\nonumber\\
&\oplus&\text{span}_{\mathbb{C}}\Bigg\{|f^{+}_{r,r}\rangle\equiv |e_{r}\rangle\otimes |e_{r}\rangle:r\in\{1,\dots,d\}\Bigg\}\text{,}\label{hSym}\\
\mathcal{H}_{\text{asym}}&=&\text{span}_{\mathbb{C}}\Bigg\{|f^{-}_{r,s}\rangle\equiv\frac{|e_{r}\rangle\otimes|e_{s}\rangle-|e_{s}\rangle\otimes|e_{r}\rangle}{\sqrt{2}}:s>r\in\{1,\dots,d\}\Bigg\}\label{hAsym}\text{,}
\end{eqnarray}
The union of the two sets appearing in Eq.~\eqref{hSym} is an orthonormal basis for $\mathcal{H}_{\text{sym}}$. The set appearing in Eq.~\eqref{hAsym} is an orthonormal basis for $\mathcal{H}_{\text{asym}}$. The corresponding orthogonal projectors (with respect to the Hilbert-Schmidt inner product) onto the orthogonal symmetric and antisymmetric subspaces are given by
\begin{eqnarray}
\Pi_{\text{sym}}=\frac{1}{2}\Big(\mathds{1}_{d^{2}}+\mathrm{W}\Big)\hspace{1cm}\Pi_{\text{asym}}=\frac{1}{2}\Big(\mathds{1}_{d^{2}}-\mathrm{W}\Big)\text{,}
\label{piSymAsym}
\end{eqnarray}
Evidently, from the traces of $\Pi_{\text{sym}}$ and $\Pi_{\text{asym}}$, one has that
\begin{eqnarray}
\mathrm{dim}_{\mathbb{C}}\mathcal{H}_{\text{sym}}&=&\frac{d(d+1)}{2}\\
\mathrm{dim}_{\mathbb{C}}\mathcal{H}_{\text{asym}}&=&\frac{d(d-1)}{2}\text{.}
\end{eqnarray}
In light of the foregoing analysis we see that $\mathcal{H}=\mathcal{H}_{\text{sym}}\oplus\mathcal{H}_{\text{asym}}$. Furthermore, it is readily apparent that
\begin{eqnarray}
\forall U\in\mathrm{U}(\mathcal{H}_{d})\;\forall|\psi_{+}\rangle\in\mathcal{H}_{\text{sym}}\hspace{0.5cm}(U\otimes U)|\psi_{+}\rangle\in\mathcal{H}_{\text{sym}}\\
\forall U\in\mathrm{U}(\mathcal{H}_{d})\;\forall|\psi_{-}\rangle\in\mathcal{H}_{\text{asym}}\hspace{0.5cm}(U\otimes U)|\psi_{-}\rangle\in\mathcal{H}_{\text{asym}}\text{.}
\end{eqnarray}
Put otherwise, $\mathcal{H}_{\text{sym}}$ and $\mathcal{H}_{\text{asym}}$ are invariant subspaces of the product representation. Further still, the restrictions of the product representation to the symmetric and antisymmetric subspaces are irreducible. This is a well known fact, which follows immediately from Schur-Weyl duality. Let us elevate this fact to a lemma, which we shall call on later. We provide an an elementary proof in \cref{lem2p2p1Proof}.
\begin{lemma}\label{irrepLem}
\textit{The restrictions of the product representation of $\mathrm{U}(\mathcal{H}_{d})$ to $\mathcal{H}_{\text{sym}}$ and $\mathcal{H}_{\text{asym}}$ are irreducible.}
\end{lemma}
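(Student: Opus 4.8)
The plan is to prove both irreducibility statements simultaneously by computing the \emph{commutant} of the product representation and invoking Schur's lemma. Concretely, I would show that the only $M \in \mathcal{L}(\mathcal{H})$ satisfying $M(U \otimes U) = (U \otimes U)M$ for every $U \in \mathrm{U}(\mathcal{H}_d)$ are those in $\mathrm{span}_{\mathbb{C}}\{\mathds{1}_{d^2}, \mathrm{W}\}$. This is enough: since $\mathrm{W}$ acts as $+\mathds{1}$ on $\mathcal{H}_{\text{sym}}$ and as $-\mathds{1}$ on $\mathcal{H}_{\text{asym}}$, any $N$ commuting with the restricted representation on $\mathcal{H}_{\text{sym}}$ extends (as $N \oplus 0$, using that both subspaces are invariant) to an element of the full commutant, hence equals $a\mathds{1}_{d^2} + b\mathrm{W}$; restricting back to $\mathcal{H}_{\text{asym}}$ forces $a = b$, and restricting to $\mathcal{H}_{\text{sym}}$ then gives $N = 2a\,\mathds{1}$, a scalar. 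Thus the commutant of each restricted representation is one-dimensional, which by Schur's lemma yields irreducibility; the argument for $\mathcal{H}_{\text{asym}}$ is identical.

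The inclusion $\mathrm{span}_{\mathbb{C}}\{\mathds{1}_{d^2}, \mathrm{W}\} \subseteq$ commutant is immediate, since $\mathrm{W}$ merely interchanges the two tensor factors while $U \otimes U$ is symmetric under that interchange. For the reverse inclusion I would take an arbitrary $M$ in the commutant, expand it as $M = \sum_{i,j,k,l} M_{ij,kl}\,|e_i e_j\rangle\langle e_k e_l|$ (writing $|e_i e_j\rangle$ for $|e_i\rangle \otimes |e_j\rangle$), and exploit three progressively less symmetric families of unitaries. First, diagonal unitaries $U = \mathrm{diag}(e^{i\theta_1}, \dots, e^{i\theta_d})$ force $M_{ij,kl} = 0$ unless $\{i,j\} = \{k,l\}$ as multisets, so $M$ collapses to a diagonal part $\sum_{i,j} a_{ij}\,|e_ie_j\rangle\langle e_ie_j|$ plus a swap part $\sum_{i \neq j} b_{ij}\,|e_ie_j\rangle\langle e_je_i|$.

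Second, permutation unitaries $U_\pi$ conjugate $M$ by $a_{ij} \mapsto a_{\pi(i)\pi(j)}$ and $b_{ij}\mapsto b_{\pi(i)\pi(j)}$, so transitivity of $\mathrm{S}_d$ on the diagonal indices and on the ordered off-diagonal pairs collapses everything to three scalars $\alpha \equiv a_{ii}$, $\beta \equiv a_{ij}\,(i\neq j)$, and $\gamma \equiv b_{ij}\,(i\neq j)$. Third, a single genuinely non-monomial unitary --- the Hadamard rotation acting on $\mathrm{span}\{|e_1\rangle, |e_2\rangle\}$ and trivially elsewhere --- supplies the one remaining relation: comparing the coefficient of $|e_1e_2\rangle$ in $M(U\otimes U)|e_1e_1\rangle$ and in $(U\otimes U)M|e_1e_1\rangle$ yields $\alpha = \beta + \gamma$. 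Hence the commutant is exactly the two-parameter family determined by $\beta,\gamma$, with the generators $(\beta,\gamma) = (1,0)$ and $(0,1)$ reproducing $\mathds{1}_{d^2}$ and $\mathrm{W}$ respectively.

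The main obstacle is establishing the \emph{sharp} bound $\dim(\text{commutant}) = 2$ rather than merely $\le 3$: the two monomial families only pin the commutant down to the three parameters $\alpha,\beta,\gamma$, and the crux is recognizing that one non-diagonal unitary is both necessary and sufficient to impose the final relation $\alpha = \beta + \gamma$, and checking that this relation cuts the three-parameter space down to precisely $\mathrm{span}_{\mathbb{C}}\{\mathds{1}_{d^2}, \mathrm{W}\}$. Everything else is routine bookkeeping with index permutations together with a single matrix-entry computation.
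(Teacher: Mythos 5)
Your proof is correct, but it takes a genuinely different route from the one in the paper. The paper's own argument (given in \cref{lem2p2p1Proof}) is a direct generation argument: starting from an arbitrary nonzero vector of an invariant subspace of $\mathcal{H}_{\text{sym}}$ (respectively $\mathcal{H}_{\text{asym}}$), it applies explicit families of unitaries of the form $U\otimes U$ --- diagonal sign flips, phase unitaries, permutations, and a two-level rotation --- together with linear combinations to isolate every basis vector of \eqref{hSym} and \eqref{hAsym}, forcing the invariant subspace to be everything. You instead compute the full commutant of the product representation, showing it equals $\mathrm{span}_{\mathbb{C}}\{\mathds{1}_{d^{2}},\mathrm{W}\}$, and deduce that each block has one-dimensional commutant; your three-stage reduction (diagonal phases give the multiset constraint, $2$-transitivity of $\mathrm{S}_{d}$ collapses to three parameters, and the Hadamard rotation imposes $\alpha=\beta+\gamma$) is sound, and your extension argument $N\mapsto N\oplus 0$ correctly places the restricted commutant inside the global one. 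Two remarks on what each approach buys. The paper's route is entirely elementary and uses only the forward direction of Schur's lemma later on; yours needs the converse direction (trivial commutant implies irreducible), which is fine here since the representation is unitary, hence completely reducible. On the other hand, your computation delivers strictly more than \cref{irrepLem}: it establishes directly that \emph{every} operator commuting with all $U\otimes U$ lies in $\mathrm{span}_{\mathbb{C}}\{\Pi_{\text{sym}},\Pi_{\text{asym}}\}$, which is exactly the statement the paper assembles from \cref{irrepLem} plus \cref{schurLem} (and which its Corollary obtains only under the extra hypothesis that both subspaces are invariant), and which is what is actually used in the proof of \cref{desCons}; it also shows the two components are inequivalent. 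The only caveats worth recording are the implicit assumptions $d\geq 2$ (so that $\mathcal{H}_{\text{asym}}\neq\{0\}$ and $\mathrm{S}_{d}$ acts transitively on ordered pairs of distinct indices) and unitarity for the converse of Schur --- both harmless in context.
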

\noindent We will call on \cref{irrepLem} at several key moments in the sections and chapter to follow. We will also call on Schur's lemma \cite{Schur1905}, which is a very important result for group representation theory. There are many variants of Schur's lemma. We shall recall from \cite{Hall2015} one particular variant that suits our purposes.

\begin{lemma}\label{schurLem} (\textit{Schur's Lemma})
Let $\sigma:\mathrm{G}\longrightarrow\mathrm{GL}(\mathcal{\mathcal{H}_{d}})::g\longmapsto \sigma_{g}$ be an irreducible representation. Let $T:\mathcal{H}_{d}\longrightarrow\mathcal{H}_{d}$ be a linear endomorphism such that $g\in\mathrm{G}\implies[\sigma_{g},T]=0$, where $[\cdot\hspace{0.07cm},\cdot]$ denotes the usual commutator. Then $T=\mathds{1}_{d}\lambda$ with $\lambda\in\mathbb{C}$.
\end{lemma}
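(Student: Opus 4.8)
The plan is to exploit the single feature of the complex numbers that does all the work here, namely algebraic closure, and then let the hypothesis of irreducibility collapse an eigenspace onto the whole space. Since $\mathcal{H}_{d}$ is a finite dimensional vector space over $\mathbb{C}$, the characteristic polynomial of the endomorphism $T$ splits, so $T$ admits at least one eigenvalue $\lambda\in\mathbb{C}$. I would fix such a $\lambda$ and consider the associated eigenspace $\mathcal{X}\equiv\ker(T-\mathds{1}_{d}\lambda)$, noting that $\mathcal{X}\neq\{0\}$ precisely because $\lambda$ is a genuine eigenvalue.

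The key step is to show that $\mathcal{X}$ is an invariant subspace of the representation $\sigma$ in the sense defined earlier. For arbitrary $g\in\mathrm{G}$ and $x\in\mathcal{X}$, the commutation hypothesis $[\sigma_{g},T]=0$ gives
\begin{equation}
(T-\mathds{1}_{d}\lambda)\sigma_{g}(x)=\sigma_{g}T(x)-\sigma_{g}(\mathds{1}_{d}\lambda x)=\sigma_{g}\big((T-\mathds{1}_{d}\lambda)x\big)=\sigma_{g}(0)=0\text{,}
\end{equation}
so that $\sigma_{g}(x)\in\mathcal{X}$. Since this holds for every $g\in\mathrm{G}$ and every $x\in\mathcal{X}$, the subspace $\mathcal{X}$ is invariant.

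Finally I would invoke irreducibility of $\sigma$: by definition the only invariant subspaces are the improper ones, $\{0\}$ and $\mathcal{H}_{d}$. As $\mathcal{X}\neq\{0\}$, we are forced to conclude $\mathcal{X}=\mathcal{H}_{d}$, whence $T-\mathds{1}_{d}\lambda$ annihilates all of $\mathcal{H}_{d}$ and therefore $T=\mathds{1}_{d}\lambda$ with $\lambda\in\mathbb{C}$, as claimed.

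There is no serious obstacle in this argument; it is essentially a two-line proof once the ingredients are arranged. The only point that genuinely requires care — and the one place an analogous statement over $\mathbb{R}$ would break down — is the guaranteed existence of a complex eigenvalue, which relies entirely on $\mathbb{C}$ being algebraically closed. Everything after that is a formal consequence of the commutation relation and the definitions of invariant subspace and irreducibility already in hand.
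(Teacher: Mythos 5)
Your proof is correct and is the standard textbook argument: algebraic closure of $\mathbb{C}$ furnishes an eigenvalue, the commutation hypothesis makes the eigenspace invariant, and irreducibility forces that eigenspace to be all of $\mathcal{H}_{d}$. The paper itself offers no proof of this lemma --- it is simply recalled from the reference \cite{Hall2015} --- so there is nothing to compare against; your argument is exactly the one found in that source and is complete as written.
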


\begin{corollary}
Let $T:\mathcal{H}\longrightarrow\mathcal{H}$ such that $\mathcal{H}_{\text{sym}}$ and $\mathcal{H}_{\text{asym}}$ are invariant subspaces of $T$. If $\forall U\in\mathrm{U}(\mathcal{H}_{d})$ one has that $[U\otimes U,T]=0$, then $T=\Pi_{\text{sym}}\lambda_{\text{sym}}+\Pi_{\text{asym}}\lambda_{\text{asym}}$ for some $\lambda_{\text{sym}},\lambda_{\text{asym}}\in\mathbb{C}$.
\end{corollary}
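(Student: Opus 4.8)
The plan is to reduce the statement to two independent applications of Schur's Lemma, one on each irreducible block of the product representation. Since $\mathcal{H}_{\text{sym}}$ and $\mathcal{H}_{\text{asym}}$ are by hypothesis invariant subspaces of $T$, and since $\mathcal{H}=\mathcal{H}_{\text{sym}}\oplus\mathcal{H}_{\text{asym}}$, the endomorphism $T$ is block diagonal with respect to this orthogonal decomposition; that is, it restricts to well-defined linear endomorphisms $T_{|\mathcal{H}_{\text{sym}}}$ and $T_{|\mathcal{H}_{\text{asym}}}$ on the two summands, with $T=T_{|\mathcal{H}_{\text{sym}}}\oplus T_{|\mathcal{H}_{\text{asym}}}$. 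First I would record that both summands are also invariant under $U\otimes U$ for every $U\in\mathrm{U}(\mathcal{H}_{d})$, exactly as established just above in the discussion of the product representation.

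Next I would transfer the commutation hypothesis to each block. Because $\mathcal{H}_{\text{sym}}$ is invariant under both $T$ and $U\otimes U$, the restriction of the commutator equals the commutator of the restrictions, so that $[\,(U\otimes U)_{|\mathcal{H}_{\text{sym}}},\,T_{|\mathcal{H}_{\text{sym}}}\,]=[U\otimes U,T]_{|\mathcal{H}_{\text{sym}}}=0$ for every $U$, and identically on $\mathcal{H}_{\text{asym}}$. Thus $T_{|\mathcal{H}_{\text{sym}}}$ commutes with every element of the representation obtained by restricting the product representation to $\mathcal{H}_{\text{sym}}$, and likewise $T_{|\mathcal{H}_{\text{asym}}}$ commutes with the restriction to the antisymmetric block.

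Now I would invoke \cref{irrepLem}, which asserts that these two restricted representations are irreducible. Applying \cref{schurLem} to the irreducible representation on $\mathcal{H}_{\text{sym}}$ together with the intertwining endomorphism $T_{|\mathcal{H}_{\text{sym}}}$ yields $T_{|\mathcal{H}_{\text{sym}}}=\lambda_{\text{sym}}\,\mathds{1}_{\mathcal{H}_{\text{sym}}}$ for some $\lambda_{\text{sym}}\in\mathbb{C}$; the same argument on the antisymmetric block gives $T_{|\mathcal{H}_{\text{asym}}}=\lambda_{\text{asym}}\,\mathds{1}_{\mathcal{H}_{\text{asym}}}$ for some $\lambda_{\text{asym}}\in\mathbb{C}$. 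Reassembling the two blocks and writing the identity on each summand as the ambient orthogonal projector, i.e.\ $\mathds{1}_{\mathcal{H}_{\text{sym}}}\oplus 0=\Pi_{\text{sym}}$ and $0\oplus\mathds{1}_{\mathcal{H}_{\text{asym}}}=\Pi_{\text{asym}}$, gives $T=\Pi_{\text{sym}}\lambda_{\text{sym}}+\Pi_{\text{asym}}\lambda_{\text{asym}}$, as claimed.

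The steps here are all routine; the only point requiring a little care is the bookkeeping that lets the global commutation relation descend to each invariant block, which relies essentially on the fact that $\mathcal{H}_{\text{sym}}$ and $\mathcal{H}_{\text{asym}}$ are invariant under both $T$ and the entire product representation simultaneously, so that restriction commutes with taking commutators. The genuine content of the result is imported wholesale from \cref{irrepLem} and \cref{schurLem}; no further analytic input is needed.
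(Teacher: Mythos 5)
Your proof is correct and follows exactly the route the paper intends: the paper's own proof is the one-line remark ``Immediate consequence of \cref{irrepLem} and \cref{schurLem},'' and your argument simply fills in the block-diagonal bookkeeping that makes this immediate. Nothing further is needed.
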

\begin{proof}
Immediate consequence of \cref{irrepLem} and \cref{schurLem}.
\end{proof}

\section{Complex Projective Designs}
\label{projectiveDesigns}
The notion of a \textit{spherical $t$-design} in dimension $d$ --- that is, a finite set of points on the unit hypersphere in $\mathbb{R}^{d}$ such that the average value of any real polynomial $f$ of degree $t$ or less on the set is equal to the average value of $f$ over the entire sphere --- was first introduced by Delsarte, Goethals and Seidel in \cite{Delsarte1977}. In \cite{Neumaier1981}, Neumaier extended this notion to general metric spaces. Shortly thereafter, in \cite{Hoggar1982}, Hoggar considered the notion of $t$-designs in projective spaces over the classical division algebras $\mathbb{R}$, $\mathbb{C}$, $\mathbb{H}$, and $\mathbb{O}$. Such designs were studied further by Hoggar \cite{Hoggar1984}\cite{Hoggar1989}\cite{Hoggar1992}, and later by Levenshtein \cite{Levenshtein1998}. In this section, we are specifically interested in an extension of the notion of a spherical $t$-design into the arena of finite dimensional complex projective spaces. Formally, given a finite $d$-dimensional complex Hilbert space $\mathcal{H}_{d}$, the complex projective space $\mathbb{CP}^{d-1}$ is represented by the set of all unit rank projectors in $\mathcal{Q}(\mathcal{H}_{d})$, that is
\begin{equation}
\mathbb{CP}^{d-1}\cong\left\{\pi_{\alpha}\equiv|\psi_{\alpha}\rangle\langle \psi_{\alpha}|\;\boldsymbol{|}\;\psi_{\alpha}\in\mathcal{S}(\mathcal{H}_{d})\right\}\text{.}
\end{equation} 
The unique unitarily invariant probability measure $\mu_{H}$ on $\mathbb{CP}^{d-1}$ is induced by the Haar measure \cite{Diestel2014} on $\mathrm{U}(\mathcal{H}_{d})$ and admits $\forall U\in\mathrm{U}(\mathcal{H}_{d})$ and $\forall \pi_{\alpha}\in\mathbb{CP}^{d-1}$
\begin{equation}
\mu_{H}\!\left(U\pi_{\alpha}U^{*}\right)=\mu_{H}\!\left(\pi_{\alpha}\right)\text{.}
\end{equation}
In analogy with the case of spherical $t$-designs, a complex projective $t$-design can be regarded as a set of points on the unit sphere in $\mathcal{H}_{d}$ such that the average value of any real polynomial of degree $t$ or less over that set coincides with the average over the entire sphere. There are, however, many equivalent \cite{Konig1999} definitions of complex projective $t$-designs. We shall adopt the general definition given by Scott \cite{Scott2006}. 
\begin{definition}\label{cpdDef} \cite{Scott2006} 
\textit{Let $t\in\mathbb{N}$. Let $\mathcal{D}=\{\pi_{\alpha}\}$ be a set of unit rank projectors onto rays in $\mathcal{H}_{d}$ endowed with a probability measure $\omega:\mathfrak{B}(\mathcal{D})\longrightarrow[0,1]$, where $\mathfrak{B}(\mathcal{S})$ is the Borel $\sigma$-algebra of $\mathcal{S}$. One says that $(\mathcal{D},\omega)$ is a} complex projective $t$-design\textit{ when under Lebesgue-Stieltjes integration}
\begin{equation}
\int \text{d}\omega(\pi_{\alpha})\pi_{\alpha}^{\otimes^{t}}=\frac{t!(d-1)!}{(d+t-1)!}\Pi_{\text{sym}}^{(t)}\text{.}
\end{equation}
\end{definition}

\noindent Complex projective $t$-designs \cite{Neumaier1981}\cite{Hoggar1982}\cite{Zauner1999}\cite{Scott2006} are highly symmetric structures on the $(2d-2)$-dimensional manifolds of pure quantum states in quantum theory with vital applications \cite{Adamson2010}\cite{FernandezPerez2011}\cite{Zhu2012}\cite{Tabia2012}\cite{Bennett1984}\cite{Bruss1998}\\\cite{Fuchs2003}\cite{Renes2005}\cite{Mafu2013}\cite{Bennett1992}\cite{Bennett1993}\cite{Spengler2012}\cite{Rastegin2013}\cite{Chen2015}\cite{Cerf2002} (with references given being representative only) in quantum information science. In Hilbert dimension $d=2$, the manifold of pure quantum states is equivalent to the Bloch sphere in $\mathbb{R}^{3}$; although, when $d>2$ the geometry of pure quantum states is \textit{much} more subtle. The construction of complex projective $t$-designs is therefore nontrivial. Remarkably, Seymour and Zaslavsky \cite{Seymour1984} proved that complex projective $t$-designs exist for any independent choices of $t$ and $d$. In \cite{Hayashi2005}, Hayashi-Hashimoto-Horibe provide explicit constructions based on Gauss-Legendre quadratures \cite{Davis2007}; however, the cardinality of their constructions scales exponentially in $d$. The construction of more efficient\footnote{In \cite{Ambainis2007}, Ambainis and Emerson introduce the notion of \textit{approximate} complex projective $t$-designs, and provide explicit constructions whose cardinality scales linearly with $d$.} complex projective $t$-designs remains an open problem. In quantum information theory, the vast majority of literature on the subject concerns the case of finite $\mathcal{S}$, uniform $\omega$, and $t=2$. Indeed, all of the aforementioned references to applications concern instances from this case, which we shall call \textit{projective 2-designs} for short to avoid carrying the term `finite uniform complex projective 2-designs.'
\begin{definition}\label{pdDef}
\textit{A} projective 2-design \textit{is a finite uniform complex projective 2-design, i.e.}
\begin{equation}
\sum_{\alpha=1}^{n}\pi_{\alpha}\otimes\pi_{\alpha}=\frac{2n}{d(d+1)}\Pi_{\text{sym}},\hspace{1cm}n\equiv\text{card}\mathcal{D}\text{.}
\label{pdDefEq}
\end{equation}
\end{definition}
\noindent It is well known that any projective 2-design defines a \textsc{povm} via uniform subnormalization, that is, if $\{\pi_{\alpha}\}\subset\mathcal{Q}(\mathcal{H}_{d})$ is a projective 2-design, then $\{\frac{d}{n}\pi_{\alpha}\}\subset\mathcal{E}(\mathcal{H}_{d})$ is a \textsc{povm}. For the proof, simply note from Eq.~\eqref{piSymAsym} that the partial trace of Eq.~\eqref{pdDefEq} with respect to either factor in $\mathcal{H}_{d}\otimes\mathcal{H}_{d}$ yields
\begin{eqnarray}
\sum_{\alpha=1}^{n}\pi_{\alpha}&=&\frac{2n}{d(d+1)}\left(\frac{d}{2}\mathds{1}+\frac{1}{2}\sum_{r=1}^{d}\sum_{s=1}^{d}|e_{s}\rangle\delta_{r,s}\langle e_{r}|\right)\nonumber\\
&=&\frac{n}{d}\mathds{1}_{d}.
\label{desPovm}
\end{eqnarray}
We call such a \textsc {povm} a \textit{projective 2-design} \textsc{povm}. It is also easy to see that any projective 2-design \textsc{povm} is informationally complete (this follows from our more general \cref{desCons} and \cref{rank1Lem}.) Explicitly, any $\rho\in\mathcal{Q}(\mathcal{H}_{d})$ can \cite{Scott2006} be expanded in terms of $p_{\alpha}\equiv d\mathrm{Tr}(\rho\pi_{\alpha})/n$ and $\pi_{\alpha}$ as follows
\begin{equation}
\rho=\sum_{\alpha=1}^{n}\left((d+1)p_{\alpha}-\frac{d}{n}\right)\pi_{\alpha}\text{.}
\label{pdEpn}
\end{equation}
For completeness, we note that Eq.~\eqref{pdEpn} follows from Eq.~\eqref{desPovm}, for in light Eq.~\eqref{piSymAsym} and Eq.~\eqref{pdDefEq}
\begin{eqnarray}\sum_{\alpha=1}^{n}\pi_{\alpha}\rho\otimes\pi_{\alpha}&=&\frac{n}{d(d+1)}\sum_{r=1}^{d}\sum_{s=1}^{d}\Big(\big(|e_{r}\rangle\langle e_{r}|\rho\otimes|e_{s}\rangle\langle e_{s}|\big)+\big(|e_{s}\rangle\langle e_{r}|\rho\otimes|e_{r}\rangle\langle e_{s}|\big)\Big)\label{forPT}\text{.}
\end{eqnarray}
Tracing Eq.~\eqref{forPT} over the first tensor factor in $\mathcal{H}_{d}\otimes\mathcal{H}_{d}$ yields 
\begin{eqnarray}
\sum_{\alpha=1}^{n}p_{\alpha}\pi_{\alpha}&=&\frac{1}{(d+1)}\sum_{r=1}^{d}\sum_{s=1}^{d}\Big(|e_{s}\rangle\langle e_{r}|\rho e_{r}\rangle\langle e_{s}|+|e_{r}\rangle\langle e_{r}|\rho e_{s}\rangle\langle e_{s}|\Big)\nonumber\\
&=&\frac{1}{(d+1)}\sum_{s=1}^{d}\Big(|e_{s}\rangle\mathrm{Tr}(\rho)\langle e_{s}|+\mathds{1}_{d}\rho |e_{s}\rangle\langle e_{s}|\Big)\nonumber\\
&=&\frac{1}{(d+1)}\Big(\mathds{1}_{d}+\rho\Big)\label{ppi}\text{,}
\end{eqnarray} 
Following through the foregoing equations with $\rho$ replaced by an arbitrary self-adjoint linear endomorphism, $p_{\alpha}$ may no longer be such that $0\leq p_{\alpha}\leq 1$, however one finds that $\forall A\in\mathcal{L}_{\text{sa}}(\mathcal{H}_{d})$
\begin{equation}
A=\sum_{\alpha=1}^{n}\left((d+1)p_{\alpha}-\frac{d}{n}\mathrm{Tr}(A)\right)\pi_{\alpha}\text{,}\hspace{1cm} p_{\alpha}\equiv \frac{d\mathrm{Tr}\big(A\pi_{\alpha}\big)}{n}\text{.}
\end{equation}
Any projective 2-design \textsc{povm} formed from a projective 2-design $\mathcal{D}$ thus facilitates an injection
\begin{equation}
\mathfrak{i}_{\mathcal{D}}:\mathcal{L}_{\text{sa}}(\mathcal{H}_{d})\longrightarrow\mathbb{R}^{n}::A\longmapsto \vec{p}\equiv(p_{1},\dots,p_{n})\text{.}
\label{desInj}
\end{equation}
It is not hard to prove that $\mathfrak{i}_{\mathcal{D}}$ is an injection. We shall follow a general technique pointed out by Watrous \cite{Watrous2011}. Indeed, $np_{\alpha}/d$ is precisely the Hilbert-Schmidt inner product $\langle\!\langle\pi_{\alpha}|A\rangle\!\rangle$; moreover, in light of \cref{desCons}, $\mathrm{span}_{\mathbb{R}}\{\pi_{\alpha}\}=\mathcal{L}_{\text{sa}}(\mathcal{H}_{d})$. Thus, if $\forall\alpha\in\{1,\dots,n\}$ one has that $\langle\!\langle\pi_{\alpha}|A\rangle\!\rangle=\langle\!\langle\pi_{\alpha}|B\rangle\!\rangle$ for some $A,B\in\mathcal{L}_{\text{sa}}(\mathcal{H}_{d})$, then $\forall Z\in\mathcal{L}_{\text{sa}}(\mathcal{H}_{d})$ it follows that $\langle\!\langle Z|A-B\rangle\!\rangle=0$, so $A=B$. Therefore $\mathfrak{i}_{\mathcal{D}}$ is injective; moreover, it is a linear monomorphism in light of linearity of the trace functional.

\noindent With $\mathcal{D}$ any projective $2$-design, the linear monomorphism $\mathfrak{i}_{\mathcal{D}}:\mathcal{L}_{\text{sa}}\longrightarrow\mathbb{R}^{n}$ defined in Eq.~\eqref{desInj} injects quantum state space into the $n$\textit{-probability simplex} $\Delta_{n}\equiv\{\vec{p}\in\mathbb{R}_{\geq0}^{n}\;\boldsymbol{|}\;\sum_{\alpha}p_{\alpha}=1\}$, \textit{i.e}. $\mathfrak{i}_{\mathcal{D}}::\mathcal{Q}(\mathcal{H}_{d})\longmapsto\Delta_{n}$. In light of the linearity of $\mathfrak{i}_{\mathcal{D}}$, the image of the quantum state space is clearly a convex set. Furthermore, again from linearity, $\vec{p}\in\mathfrak{i}_{\mathcal{D}}\big(\mathcal{Q}(\mathcal{H}_{d})\big)$ is an extreme point if and only if there exists a unit rank projector $\pi\in\text{Pur}\mathcal{Q}(\mathcal{H}_{d})$ (\textit{i.e}.\ a pure quantum state) such that $\vec{p}=\mathfrak{i}_{\mathcal{D}}(\pi)$. Further still from linearity, $\mathfrak{i}_{\mathcal{D}}\big(\mathcal{Q}(\mathcal{H}_{d})\big)$ is precisely the convex hull of its extreme points. It is to be observed, however, that the restriction of $\mathfrak{i}_{\mathcal{D}}$ to quantum state space does not cover the entire $n$-probability simplex. Put otherwise, not every probability vector corresponds to a quantum state. Indeed, appealing to linearity of the trace and the fact $\mathrm{Tr}(A\otimes B)=\mathrm{Tr}A\mathrm{Tr}B$, we see that $\mathfrak{i}_{\mathcal{D}}\big(\mathcal{Q}(\mathcal{H}_{d})\big)$ is bounded by a sphere of radius $2d/(n(d+1))$, specifically
\begin{eqnarray}
\sum_{\alpha=1}^{n}p_{\alpha}^{2}&=&\frac{d^{2}}{n^{2}}\sum_{\alpha=1}^{n}\Big(\mathrm{Tr}(\rho\pi_{\alpha})\Big)^{2}\nonumber\\
&=&\frac{d^{2}}{n^{2}}\sum_{\alpha=1}^{n}\mathrm{Tr}\Big(\rho\pi_{\alpha}\otimes\rho\pi_{\alpha}\Big)\nonumber\\
&=&\frac{2d}{n(d+1)}\mathrm{Tr}\Big(\big(\rho\otimes\rho\big)\Pi_{\text{sym}}\Big)\nonumber\nonumber\\
&=&\frac{d}{n(d+1)}\left(1+\sum_{r=1}^{d}\sum_{s=1}^{d}\langle e_{s}|\rho|e_{r}\rangle\langle e_{r}|\rho|e_{s}\rangle\right)\nonumber\\
&=&\frac{d}{n(d+1)}\big(1+\mathrm{Tr}\rho^{2}\big)\text{.}
\label{pdRad}
\end{eqnarray}
Incidentally, from Eq.~\eqref{pdRad}, one gleans that projective 2-design \textsc{povm} probabilities are never certain, for $\|\vec{p}\|<1\implies\text{max}\{p_{\alpha}\}<1$. Also, Eq.~\eqref{pdRad} establishes simple linear relationship between the usual Euclidean norm on $\mathbb{R}^{n}$ and the Hilbert-Schmidt norm on $\mathcal{L}_{\text{sa}}(\mathcal{H}_{d})$. Indeed, each pure state, which thus satisifies $\mathrm{Tr}\rho^{2}=1$, is a point on a sphere in $\mathbb{R}^{n}$ of the aforementioned radius. We are about to see, however, that not every point on the sphere corresponds to a pure state. Indeed, when $d>2$, the intersection of this sphere with the simplex contains points outside of $\mathfrak{i}_{\mathcal{D}}\big(\mathcal{Q}(\mathcal{H}_{d})\big)$. First, we shall need to recall the following lemma, which provides necessary and sufficient conditions for a self-adjoint linear endomorphism on $\mathcal{H}_{d}$ to be an extreme point of the convex set of quantum states $\mathcal{Q}(\mathcal{H}_{d})$.

\begin{lemma} (Flammia-Jones-Linden \cite{Flammia2004}\cite{Jones2005}) \textit{Let} $A\in\mathcal{L}_{\text{sa}}(\mathcal{H}_{d})$\textit{. Then} $A$ \textit{is a unit rank projector (i.e.\ pure quantum state) if and only if} $\mathrm{Tr}(A^{2})=\mathrm{Tr}(A^{3})=1$\textit{.}
\label{FJL}
\end{lemma}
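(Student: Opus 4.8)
The plan is to handle the two implications separately, with essentially all the work in the ``if'' direction. For the ``only if'' direction I would simply observe that a unit rank projector $A=|\psi\rangle\langle\psi|$ is idempotent, so $A^{2}=A^{3}=A$ and therefore $\mathrm{Tr}(A^{2})=\mathrm{Tr}(A^{3})=\mathrm{Tr}(A)=1$. For the converse I would diagonalize: since $A\in\mathcal{L}_{\text{sa}}(\mathcal{H}_{d})$, the spectral theorem provides real eigenvalues $\lambda_{1},\dots,\lambda_{d}$ (listed with multiplicity) for which $\mathrm{Tr}(A^{k})=\sum_{i}\lambda_{i}^{k}$, so the two hypotheses read $\sum_{i}\lambda_{i}^{2}=1$ and $\sum_{i}\lambda_{i}^{3}=1$.

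The key observation, which I expect to carry the entire argument, is that the single condition $\sum_{i}\lambda_{i}^{2}=1$ already forces $\lambda_{i}^{2}\leq 1$ for every $i$ (otherwise one squared eigenvalue alone would exceed the total), hence $-1\leq\lambda_{i}\leq 1$ and in particular $1-\lambda_{i}\geq 0$. With the spectrum thus confined to $[-1,1]$, I would subtract the two trace conditions to obtain
\[
\sum_{i}\lambda_{i}^{2}\,(1-\lambda_{i})=\mathrm{Tr}(A^{2})-\mathrm{Tr}(A^{3})=0,
\]
an expression in which every summand $\lambda_{i}^{2}(1-\lambda_{i})$ is nonnegative. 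A vanishing sum of nonnegative terms forces each term to vanish, so $\lambda_{i}^{2}(1-\lambda_{i})=0$, i.e. $\lambda_{i}\in\{0,1\}$ for each $i$.

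Substituting back into $\sum_{i}\lambda_{i}^{2}=1$ then counts unit eigenvalues: exactly one $\lambda_{i}$ equals $1$ and all others equal $0$. Hence $A$ has spectrum $(1,0,\dots,0)$, so $A=|\psi\rangle\langle\psi|$ for the normalized eigenvector $\psi$ with eigenvalue $1$; in particular $A^{2}=A$ and $\mathrm{Tr}(A)=1$, exhibiting $A$ as a unit rank projector, i.e. a pure quantum state.

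The one genuinely delicate point is the eigenvalue bound $1-\lambda_{i}\geq 0$. Without it, the vanishing sum $\sum_{i}\lambda_{i}^{2}(1-\lambda_{i})=0$ would permit cancellation between positive and negative contributions and could not by itself pin down the spectrum; it is precisely the purity-type normalization $\mathrm{Tr}(A^{2})=1$ that supplies the needed sign control, after which the cubic condition finishes the job. Everything else is routine spectral bookkeeping.
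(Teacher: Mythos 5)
Your proof is correct. The paper itself does not prove \cref{FJL} --- it defers to \cite{Jones2005} (noting \cite{Flammia2004} is unpublished) --- so there is no in-text argument to compare against, but your spectral argument is the standard one: the bound $\lvert\lambda_{i}\rvert\leq 1$ extracted from $\mathrm{Tr}(A^{2})=1$ is exactly the sign control needed to make $\sum_{i}\lambda_{i}^{2}(1-\lambda_{i})=0$ a sum of nonnegative terms, and the rest follows as you say. This is a complete and self-contained justification of the lemma.
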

\noindent We refer the reader to \cite{Jones2005} for the proof of \cref{FJL} (with \cite{Flammia2004} being unpublished.) It is worth pointing out that when $d=2$, and only in this case, the conditions $\mathrm{Tr}(A)=\mathrm{Tr}(A^{2})=1$ are necessary and sufficient conditions for $A$ to be a pure quantum state. Indeed, with $\lambda_{1},\lambda_{2}$ the eigenvalues of $A$, these conditions imply, respectively, that $\lambda_{1}+\lambda_{2}=1$ and $\lambda_{1}^{2}+\lambda_{2}^{2}=1$. Therefore, when $d=2$, $\vec{p}\in\Delta_{n}$ corresponds to a pure quantum state if and only if $\|\vec{p}\|=4/3n$. When $d>2$, the cubic condition in \cref{FJL} identifies the subset of points on sphere $\|\vec{p}\|=2d/(n(d+1))$ corresponding to pure quantum states. Remarkably, the conditions in \cref{FJL} lead to a very simple characterization of the extreme points of the quantum state space in terms of projective 2-design \textsc{povm} probabilities, specifically the characterization we provide in \cref{pureDesCon}. Fuchs and Schack \cite{Fuchs2013} derived this characterization in the special case where the underlying projective 2-design is a \textsc{sic} (see \cref{sicDef}.) The fact that our characterization holds for an arbitrary projective $2$-design is therefore of some independent interest.

\begin{corollary}\label{pureDesCon}\textit{Let} $\{E_{\alpha}\equiv d\pi_{\alpha}/n\}\subset\mathcal{E}(\mathcal{H}_{d})$ \textit{be a projective 2-design} \textsc{povm}\textit{, i.e.} $\{\pi_{1},\dots,\pi_{n}\}\subset\mathcal{Q}(\mathcal{H}_{d})$ \textit{is a projective 2-design. Let} $\rho\in\mathcal{Q}(\mathcal{H}_{d})$\textit{ and} $p_{\alpha}\equiv\mathrm{Tr}(\rho E_{\alpha})$\textit{. Then} $\rho$ \textit{is pure if and only if}
\begin{eqnarray}
\sum_{\alpha=1}^{n}p_{\alpha}^{2}&=&\frac{2d}{n(d+1)}\text{,}\label{quad}\\
\sum_{\alpha=1}^{n}\sum_{\beta=1}^{n}\sum_{\gamma=1}^{n}p_{\alpha}p_{\beta}p_{\gamma}\mathrm{Tr}(\pi_{\alpha}\pi_{\beta}\pi_{\gamma})&=&\frac{d+7}{(d+1)^{3}}\text{.}\label{cube}
\end{eqnarray}
\end{corollary}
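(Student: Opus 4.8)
The plan is to reduce the purity of $\rho$ to the two trace conditions supplied by the Flammia-Jones-Linden criterion (\cref{FJL}) and then translate each of those conditions into the stated polynomial constraints on the probabilities $p_\alpha$. Since $\rho$ is already a quantum state, $\mathrm{Tr}\rho = 1$ holds automatically, so by \cref{FJL} it suffices to show that $\mathrm{Tr}(\rho^2) = 1$ is equivalent to Eq.~\eqref{quad} and that, assuming $\mathrm{Tr}(\rho^2) = 1$, the condition $\mathrm{Tr}(\rho^3) = 1$ is equivalent to Eq.~\eqref{cube}. The first equivalence is immediate: Eq.~\eqref{pdRad} gives $\sum_\alpha p_\alpha^2 = \tfrac{d}{n(d+1)}\big(1 + \mathrm{Tr}\rho^2\big)$, so $\mathrm{Tr}\rho^2 = 1$ holds if and only if $\sum_\alpha p_\alpha^2 = \tfrac{2d}{n(d+1)}$, which is exactly Eq.~\eqref{quad}.

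The substance of the argument is the cubic condition. First I would use the reconstruction formula Eq.~\eqref{pdEpn} to write $\rho = \sum_\alpha c_\alpha \pi_\alpha$ with $c_\alpha \equiv (d+1)p_\alpha - \tfrac{d}{n}$, so that $\mathrm{Tr}(\rho^3) = \sum_{\alpha,\beta,\gamma} c_\alpha c_\beta c_\gamma\, \mathrm{Tr}(\pi_\alpha\pi_\beta\pi_\gamma)$. Expanding the product $c_\alpha c_\beta c_\gamma$ in powers of the $p$'s produces one cubic term, three quadratic terms, three linear terms, and a constant term, each weighted against $\mathrm{Tr}(\pi_\alpha\pi_\beta\pi_\gamma)$; the cubic term alone reproduces $(d+1)^3$ times the left-hand side of Eq.~\eqref{cube}, so everything hinges on evaluating the lower-order contributions in closed form.

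The key simplification is that summing $\mathrm{Tr}(\pi_\alpha\pi_\beta\pi_\gamma)$ over any free index collapses via Eq.~\eqref{desPovm}, i.e.\ $\sum_\alpha \pi_\alpha = \tfrac{n}{d}\mathds{1}_d$. The three linear-in-$p$ terms each reduce, using two such collapses together with $\sum_\alpha p_\alpha = 1$, to $\tfrac{n^2}{d^2}$; the constant term reduces to $\mathrm{Tr}\big((\tfrac{n}{d}\mathds{1}_d)^3\big) = \tfrac{n^3}{d^2}$; and each of the three quadratic terms reduces, after one collapse, to $\tfrac{n}{d}\sum_{\alpha,\beta} p_\alpha p_\beta\, \mathrm{Tr}(\pi_\alpha\pi_\beta)$. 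This last double sum is the one genuinely new quantity, and I would evaluate it by recognizing that $\sum_{\alpha,\beta} p_\alpha p_\beta\, \mathrm{Tr}(\pi_\alpha\pi_\beta) = \mathrm{Tr}\big((\sum_\alpha p_\alpha \pi_\alpha)^2\big)$ and invoking Eq.~\eqref{ppi}, namely $\sum_\alpha p_\alpha \pi_\alpha = \tfrac{1}{d+1}(\mathds{1}_d + \rho)$, to obtain $\tfrac{d + 2 + \mathrm{Tr}\rho^2}{(d+1)^2}$. Substituting $c_\alpha = (d+1)p_\alpha - \tfrac{d}{n}$ and collecting terms, I expect the dependence on $n$ to cancel entirely and to arrive at the identity $\mathrm{Tr}(\rho^3) = (d+1)^3 \sum_{\alpha,\beta,\gamma} p_\alpha p_\beta p_\gamma\, \mathrm{Tr}(\pi_\alpha\pi_\beta\pi_\gamma) - 3\,\mathrm{Tr}\rho^2 - d - 3$.

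To finish, I would impose the already-established quadratic condition $\mathrm{Tr}\rho^2 = 1$, which turns the previous identity into $\mathrm{Tr}(\rho^3) = (d+1)^3 S_3 - d - 6$ with $S_3$ denoting the left-hand side of Eq.~\eqref{cube}; hence $\mathrm{Tr}\rho^3 = 1$ if and only if $S_3 = \tfrac{d+7}{(d+1)^3}$, which is Eq.~\eqref{cube}. Both directions of the corollary then follow from \cref{FJL}: if $\rho$ is pure then $\mathrm{Tr}\rho^2 = \mathrm{Tr}\rho^3 = 1$ force Eq.~\eqref{quad} and Eq.~\eqref{cube}, while conversely Eq.~\eqref{quad} gives $\mathrm{Tr}\rho^2 = 1$ and then Eq.~\eqref{cube} gives $\mathrm{Tr}\rho^3 = 1$, so \cref{FJL} certifies purity. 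I expect the main obstacle to be purely computational: the careful bookkeeping of the seven term-types in the expansion and, in particular, the correct evaluation of the quadratic double sum via Eq.~\eqref{ppi}, since an error there would corrupt the constant offset and spoil the clean cancellation of the $n$-dependence.
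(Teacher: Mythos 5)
Your proposal is correct and follows essentially the same route as the paper: reduce purity to $\mathrm{Tr}\rho^{2}=\mathrm{Tr}\rho^{3}=1$ via \cref{FJL}, get the quadratic condition from Eq.~\eqref{pdRad}, expand $\mathrm{Tr}(\rho^{3})$ through the reconstruction formula Eq.~\eqref{pdEpn}, and evaluate the lower-order sums using the 1-design property and Eq.~\eqref{ppi}, arriving at the identity $\mathrm{Tr}(\rho^{3})=(d+1)^{3}S_{3}-3\mathrm{Tr}(\rho^{2})-d-3$, which is exactly the paper's Eq.~\eqref{forCube2} rearranged. The only cosmetic difference is that you evaluate the quadratic double sum directly as $\mathrm{Tr}\bigl((\mathds{1}_{d}+\rho)^{2}\bigr)/(d+1)^{2}$ rather than via $\sum_{\alpha}p_{\alpha}^{2}$, and your handling of the converse direction (first impose Eq.~\eqref{quad}, then Eq.~\eqref{cube}) is if anything slightly more explicit than the paper's.
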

\begin{proof}Eq.~\eqref{quad} is an immediate consequence of Eq.~\eqref{pdRad} and \cref{FJL}. It remains to establish that the cubic condition Eq.~\eqref{cube} holds. Let arbitrary $\rho\in\mathcal{Q}(\mathcal{H}_{d})$. Then, via Eq.~\eqref{pdEpn}, we observe that
\begin{eqnarray}
\mathrm{Tr}\big(\rho^{3}\big)=(d+1)^{3}\sum_{\alpha=1}^{n}\sum_{\beta=1}^{n}\sum_{\gamma=1}^{n}p_{\alpha}p_{\beta}p_{\gamma}\mathrm{Tr}(\pi_{\alpha}\pi_{\beta}\pi_{\gamma})-3(d+1)^{2}\sum_{\alpha=1}^{n}\sum_{\beta=1}^{n}p_{\alpha}p_{\beta}\mathrm{Tr}(\pi_{\alpha}\pi_{\beta})+2d+3\text{.}
\end{eqnarray}
Now from Eq.~\eqref{ppi}, together with linearity of the trace, we find that
\begin{equation}
\sum_{\alpha=1}^{n}\sum_{\beta=1}^{n}p_{\alpha}p_{\beta}\mathrm{Tr}(\pi_{\alpha}\pi_{\beta})=\sum_{\beta=1}^{n}p_{\beta}\mathrm{Tr}\Bigg(\Big(\sum_{\alpha=1}^{n}p_{\alpha}\pi_{\alpha}\Big)\pi_{\beta}\Bigg)=\frac{1}{(d+1)}\left(1+\frac{n}{d}\sum_{\beta=1}^{n}p_{\beta}^{2}\right)\text{.}
\end{equation}
Hence,
\begin{equation}
\sum_{\alpha=1}^{n}\sum_{\beta=1}^{n}\sum_{\gamma=1}^{n}p_{\alpha}p_{\beta}p_{\gamma}\mathrm{Tr}(\pi_{\alpha}\pi_{\beta}\pi_{\gamma})=\frac{\mathrm{Tr}\big(\rho^{3}\big)-2d-3}{(d+1)^{3}}+\frac{3}{(d+1)^{2}}\left(1+\frac{n}{d}\sum_{\beta=1}^{n}p_{\beta}^{2}\right)\text{.}
\label{forCube}
\end{equation}
Therefore in light of Eq.~\eqref{pdRad} we see that
\begin{equation}
\sum_{\alpha=1}^{n}\sum_{\beta=1}^{n}\sum_{\gamma=1}^{n}p_{\alpha}p_{\beta}p_{\gamma}\mathrm{Tr}(\pi_{\alpha}\pi_{\beta}\pi_{\gamma})=\frac{\mathrm{Tr}\big(\rho^{3}\big)+3\mathrm{Tr}\big(\rho^{2}\big)+3+d}{(d+1)^{3}}\text{.}
\label{forCube2}
\end{equation}
In the case where $\rho$ is pure, and only in this case, Eq.~\eqref{forCube2} immediately simplifies to Eq.~\eqref{cube}\text{.}
\end{proof}
\noindent With \cref{pureDesCon}, we have established necessary and sufficient conditions for a probability distribution $\vec{p}$ to be an extreme point of the image of the quantum state space, \textit{i.e}.\ the convex set $\mathfrak{i}_{\mathcal{D}}\big(\mathcal{Q}(\mathcal{H}_{d})\big)$. It is worth emphasizing that the quadratic and cubic conditions, Eq.~\eqref{quad} and Eq.~\eqref{cube}, respectively, apply quite generally: they hold for \textit{any} projective 2-design \textsc{povm}. Furthermore, it is only in the quadratic condition that any information regarding the particular design factors in, namely the cardinality $n$. It is natural to expect that more progress can be made by selecting particularly nice varieties of projective 2-design. We are about to meet the two canonical candidates. First, it will be convenient for us to recall the following lemma, which appears in \cite{Renes2004a}. In light of the fact that any projective 2-design \textsc{povm} is informationally complete, we have improved the statement of this result as it is found in \cite{Renes2004a} so that $n\geq d^{2}$, instead of $n\geq d$ as it appears in \cite{Renes2004a} (this is inconsequential for the proof given therein.)

\begin{lemma}\label{RenesLem} (Renes \textit{et al.} \cite{Renes2004a}) \textit{A set of unit rank projectors} $\{\pi_{1},\dots,\pi_{n}\}\subset\mathcal{Q}(\mathcal{H}_{d})$ \textit{with} $n\geq d^{2}$ \textit{is a projective 2-design if and only if}
\begin{equation}
\sum_{\alpha=1}^{n}\sum_{\beta=1}^{n}\Big(\mathrm{Tr}(\pi_{\alpha}\pi_{\beta})\Big)^{2}=\frac{2n^{2}}{d(d+1)}\text{.}
\end{equation}
\end{lemma}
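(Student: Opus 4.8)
The plan is to recognize the double sum as a \emph{frame potential} and to realize it as the Hilbert--Schmidt norm of a single positive operator living on the symmetric subspace. First I would introduce the frame operator $S \equiv \sum_{\alpha=1}^{n}\pi_{\alpha}\otimes\pi_{\alpha}\in\mathcal{L}(\mathcal{H})$. Since each $\pi_{\alpha}=|\psi_{\alpha}\rangle\langle\psi_{\alpha}|$ is a unit rank projector, each summand $\pi_{\alpha}\otimes\pi_{\alpha}=|\psi_{\alpha}\otimes\psi_{\alpha}\rangle\langle\psi_{\alpha}\otimes\psi_{\alpha}|$ is a unit rank projector onto a vector fixed by the swap $\mathrm{W}$, so $S$ is positive semi-definite and supported on $\mathcal{H}_{\text{sym}}$, i.e.\ $\Pi_{\text{sym}}S=S=S\Pi_{\text{sym}}$. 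Using $\mathrm{Tr}(A\otimes B)=\mathrm{Tr}A\,\mathrm{Tr}B$ I would record $\mathrm{Tr}S=\sum_{\alpha}(\mathrm{Tr}\pi_{\alpha})^{2}=n$.

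The second step identifies the quantity in the statement with $\mathrm{Tr}(S^{2})$. Because $(\pi_{\alpha}\otimes\pi_{\alpha})(\pi_{\beta}\otimes\pi_{\beta})=\pi_{\alpha}\pi_{\beta}\otimes\pi_{\alpha}\pi_{\beta}$, the same multiplicativity of the trace yields
\[
\mathrm{Tr}(S^{2})=\sum_{\alpha,\beta}\mathrm{Tr}\big(\pi_{\alpha}\pi_{\beta}\otimes\pi_{\alpha}\pi_{\beta}\big)=\sum_{\alpha=1}^{n}\sum_{\beta=1}^{n}\big(\mathrm{Tr}(\pi_{\alpha}\pi_{\beta})\big)^{2},
\]
so the asserted equality is precisely the statement $\mathrm{Tr}(S^{2})=2n^{2}/\big(d(d+1)\big)$.

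The crux is then a variance argument that pins $S$ down. Set $D\equiv\mathrm{dim}_{\mathbb{C}}\mathcal{H}_{\text{sym}}=d(d+1)/2$ and define the self-adjoint, $\mathcal{H}_{\text{sym}}$-supported operator $T\equiv S-\tfrac{n}{D}\Pi_{\text{sym}}$, which is traceless since $\mathrm{Tr}S=n=\tfrac{n}{D}\mathrm{Tr}\Pi_{\text{sym}}$. Expanding $S=T+\tfrac{n}{D}\Pi_{\text{sym}}$ and using $\Pi_{\text{sym}}^{2}=\Pi_{\text{sym}}$, $T\Pi_{\text{sym}}=T$, and $\mathrm{Tr}T=0$, the cross terms vanish and I obtain
\[
\mathrm{Tr}(S^{2})=\mathrm{Tr}(T^{2})+\frac{n^{2}}{D}.
\]
Since $T$ is self-adjoint, $\mathrm{Tr}(T^{2})\geq 0$ with equality if and only if $T=0$; hence $\mathrm{Tr}(S^{2})\geq n^{2}/D=2n^{2}/\big(d(d+1)\big)$, with equality exactly when $S=\tfrac{n}{D}\Pi_{\text{sym}}=\tfrac{2n}{d(d+1)}\Pi_{\text{sym}}$. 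By \cref{pdDef} the latter is precisely the projective 2-design condition, so reading the equivalence in both directions gives the lemma.

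I do not expect a serious obstacle, but the step that needs care is the claim that $S$ is supported on $\mathcal{H}_{\text{sym}}$: this is what makes the correct normalizing dimension $D=d(d+1)/2$ rather than $d^{2}$, and it is essential both for the numerical constant and for the equality case to coincide with the 2-design condition. Note that the hypothesis $n\geq d^{2}$ plays no role in this argument, since equality in the frame potential bound characterizes 2-designs for every $n$; it is retained only to record that any projective 2-design, being informationally complete, must already have at least $d^{2}$ elements.
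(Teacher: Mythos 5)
Your proof is correct. Note that the thesis itself does not prove this lemma: it states it and defers to Renes \emph{et al.}\ \cite{Renes2004a} (and to Levenshtein and K\"onig) for the argument, so there is no in-paper proof to compare against. What you have written is the standard frame-potential argument, and it is complete and self-contained: the identification of $\sum_{\alpha,\beta}\big(\mathrm{Tr}(\pi_{\alpha}\pi_{\beta})\big)^{2}$ with $\mathrm{Tr}(S^{2})$ for $S=\sum_{\alpha}\pi_{\alpha}\otimes\pi_{\alpha}$ is right, the observation that $S$ is supported on $\mathcal{H}_{\mathrm{sym}}$ (so the relevant dimension is $d(d+1)/2$, not $d^{2}$) is exactly the point that fixes the constant, and the decomposition $S=T+\tfrac{n}{D}\Pi_{\mathrm{sym}}$ with $\mathrm{Tr}\,T=0$ cleanly gives $\mathrm{Tr}(S^{2})\geq n^{2}/D$ with equality precisely when $S=\tfrac{2n}{d(d+1)}\Pi_{\mathrm{sym}}$, which is the 2-design condition of \cref{pdDef}. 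Your closing remark that the hypothesis $n\geq d^{2}$ is never used is also consistent with the thesis, which notes parenthetically that strengthening the cardinality bound from $n\geq d$ to $n\geq d^{2}$ is ``inconsequential for the proof given therein''; the equivalence holds for all $n$, with both sides simply failing when $n$ is too small for $S$ to have rank $d(d+1)/2$.
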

\noindent We refer the reader to \cite{Renes2004a} for a proof of \cref{RenesLem} (see also Levenshtein \cite{Levenshtein1998} and K{\"o}nig \cite{Konig1999}.) Equipped with \cref{RenesLem}, one now has a necessary and sufficient condition that can be used to verify whether a set of unit rank projectors constitutes a projective 2-design. For example, \textit{suppose} there exists a set of unit rank projectors $\{\pi_{1},\dots,\pi_{d^{2}}\}\subset\mathcal{Q}(\mathcal{H}_{d})$ such that $\forall\alpha,\beta\in\{1,\dots,d^{2}\}$ one has that
\begin{eqnarray}
\mathrm{Tr}\big(\pi_{\alpha}\pi_{\beta}\big)=\frac{\delta_{\alpha,\beta}d+1}{d+1}\text{.}
\label{sicProds}
\end{eqnarray}
Then 
\begin{equation}
\sum_{\alpha=1}^{d^{2}}\sum_{\beta=1}^{d^{2}}\Big(\mathrm{Tr}(\pi_{\alpha}\pi_{\beta})\Big)^{2}=\sum_{\alpha=1}^{d^{2}}\sum_{\beta=1}^{d^{2}}\left(\frac{\delta_{\alpha,\beta}d+1}{(d+1)}\right)^{2}=\frac{\sum_{\alpha=1}^{n}\sum_{\beta=1}^{n}d^{2}\delta_{\alpha,\beta}+2d\delta_{\alpha,\beta}+1}{(d+1)^{2}}=\frac{2d^{3}}{(d+1)}\text{,}
\end{equation}
so the putative $\{\pi_{1},\dots,\pi_{d^{2}}\}$ constitute a projective 2-design in light of \cref{RenesLem}; moreover, one of \textit{minimal cardinality}, \textit{i.e}.\ $d^{2}$. In \cite{Scott2006}, Scott proves that any projective 2-design of minimal cardinality must be symmetric in the sense of Eq.~\eqref{sicProds}. It follows that \textsc{sic}s, which are formally defined below, are the unique projective 2-designs of minimal cardinality.
\begin{definition}\label{sicDef} \textit{A} \textsc{sic} \textit{is a set of} $d^{2}$ \textit{unit rank projectors (i.e.\ pure states)} $\{\pi_{1},\dots,\pi_{d^{2}}\}\subset\mathcal{Q}(\mathcal{H}_{d})$ \textit{such that} $\forall\alpha,\beta\in\{1,\dots,d^{2}\}$ $\mathrm{Tr}(\pi_{\alpha}\pi_{\beta})=(\delta_{\alpha,\beta}d+1)/(d+1)$\text{. A} \textsc{sic povm} \textit{is a set of} $d^{2}$ \textit{effects} $\{E_{\alpha}\equiv \pi_{\alpha}/d\}\subset\mathcal{E}(\mathcal{H}_{d})$ \textit{such that} $\{\pi_{\alpha}\}$ \textit{is a} \textsc{sic}\textit{.}
\end{definition}

\noindent The \textsc{sic} \textit{existence problem} --- that is, the question whether \textsc{sic}s exist for all $d\in\mathbb{N}$ --- remains tantalizingly open, notwithstanding significant attention, particularly from within the quantum information community \cite{Zauner1999}\cite{Renes2004a}\cite{Renes2004c}\cite{Grassl2004}\cite{Grassl2005}\cite{Appleby2005}\cite{Flammia2006}\cite{Grassl2008}\cite{Appleby2009a}\cite{Appleby2009b}\cite{Scott2010}\cite{Appleby2011}\cite{Appleby2012}\cite{Zhu2012}\cite{Appleby2013}\cite{Tabia2013}\cite{Appleby2014b}\cite{Khatirinejad2008a}\cite{Appleby2014}\cite{Appleby2015}\cite{Zhu2015}\cite{Dang2015} (a\\ representative list of references.) The \textsc{sic} existence problem, in the words and original emphasis of Appleby, Dang, and Fuchs \cite{Appleby2014}: ``\textit{seems} the sort of thing one might find as an exercise in a linear-algebra textbook.'' For instance, given that a \textsc{sic} is defined by unit rank projectors, the question of whether or not a \textsc{sic} exists in $\mathcal{Q}(\mathcal{H}_{d})$ is equivalent to the question of whether or not $d^{2}$ equiangular lines \cite{Khatirinejad2008b}\cite{Hoggar1982} exist in $\mathcal{H}_{d}$; a \textit{seemingly} simple geometric question. Consider, however, the following equivalent geometric characterization: a \textsc{sic} is exactly defined by the vertices of a regular $d^{2}$-vertex simplex within the \textit{null trace subspace}
\begin{equation} 
\mathcal{L}_{\text{sa},0}(\mathcal{H}_{d})\equiv\{B\in\mathcal{L}_{\text{sa}}(\mathcal{H}_{d})\;\boldsymbol{|}\;\mathrm{Tr}B=0\}\text{,}
\end{equation} 
where the vertices correspond to pure quantum states via the usual identification $B=d\rho-\mathds{1}_{d}$. We shall return to a full analysis of quantum state spaces in terms traceless generators (what we call the Bloch body) in \cref{blochBod}. Presently, we need only observe that it is, of course, trivial to inscribe such a simplex into the unit ball of the ambient real vector subspace $\mathcal{L}_{\text{sa},0}(\mathcal{H}_{d})$; however, it is very far from obvious that one can rotate this simplex such that its vertices lie on the measure zero $(2d-2)$-dimensional submanifold corresponding to pure quantum states. This picture provides one with a sense for why the \textsc{sic} existence problem is so hard. It is therefore remarkable, and perhaps even shocking, that \textsc{sic}s \textit{do} exist in many cases, particularly for most finite Hilbert dimensions naturally accessible by quantum computers \cite{Ladd2010}. 

\noindent Exact \textsc{sic}s are known to exist for $d\in\{2,3,4,5,6,7,8,9,10,11,12,13,14,15,16,19,24,28,35,48\}$ (see in particular\footnote{For concrete numerical solutions, we especially recommend \cite{Scott2010}.} \cite{Zauner1999}\cite{Grassl2004}\cite{Grassl2005}\cite{Appleby2005}\cite{Scott2010}\cite{Appleby2012}\cite{Appleby2014b}) and numerical solutions have been constructed for all $d\leq 121$ \cite{Renes2004a}\cite{Scott2010}\cite{Scott2014}. In 1999, Zauner \cite{Zauner1999} conjectured that for \textit{all} $d$ there exists a fiducial unit rank projector whose orbit under the action of the generalized Weyl-Heisenberg group \cite{Appleby2005} forms a \textsc{sic}. Almost all known \textsc{sic}s have been constructed in this manner (if $d$ is prime, they must be as such \cite{Zhu2010}.)

\noindent There are many reasons for why one might be interested in the \textsc{sic} existence problem. For instance, from a practical perspective, \textsc{sic} \textsc{povm}s are \cite{Scott2006} \textit{the} optimal quantum measurements for the purposes of nonadaptive sequential linear quantum state tomography \cite{Paris2004} and measurement-based quantum cloning \cite{Gisin1997}. They also have important applications \cite{Fuchs2003}\cite{Englert2004}\cite{Renes2004c}\cite{Renes2005}\cite{Durt2008} in quantum cryptography \cite{Bennett1984}. From a purely mathematical point of view, \textsc{sic}s have interesting connections with Lie algebras \cite{Appleby2011}, Jordan algebras \cite{Appleby2015}, Galois field theory \cite{Appleby2013}\cite{Dang2015}. In the context of quantum foundations, \textsc{sic}s are a cornerstone for QBism \cite{Fuchs2010}\cite{Fuchs2013}. \textsc{sic}s also have applications in \textit{classical} signal processing \cite{Howard2005}. In \cref{entanglementConicalDesigns}, we shall see that \textsc{sic}s are intimately connected with the theory of quantum entanglement.

\noindent Mutually unbiased bases are another class of symmetric structures in Hilbert space.

\begin{definition} (\cite{Schwinger1960}\cite{Ivanovic1981}\cite{Wootters1989}\cite{Durt2010}) Mutually unbiased \textit{bases are sets of orthonormal bases $\mathcal{B}_{b}=\{e_{1,b},\dots,e_{d,b}\}\subset\mathcal{H}_{d}$, where $b\in\{1,\dots,N\}$ and $N\in\mathbb{N}$, such that $|\langle e_{\alpha,b}|e_{\alpha',b'}\rangle|^{2}=1/d$ for all $\alpha,\alpha'\in\{1,\dots,d\}$ and for all $b\neq b'$. }
\end{definition}

\noindent If $N=d+1$, then the corresponding unit rank projectors $\pi_{\alpha,b}\equiv|e_{\alpha,b}\rangle\langle e_{\alpha,b}|$ define a \textsc{mub} as follows. 

\begin{definition}\label{mubDef} \textit{A} \textsc{mub} \textit{is a set of} $d(d+1)$ \textit{unit rank projectors (i.e.\ pure quantum states)} $\mathcal{Q}(\mathcal{H}_{d})\supset\{\pi_{\alpha,b}\;\boldsymbol{|}\;\alpha\in\{1,\dots,d\},b\in\{1,\dots,d+1\}\}$ \textit{such that} $\forall\alpha,\alpha'\in\{1,\dots,d\}$ \textit{and} $\forall b,b'\in\{1,\dots,d+1\}$
\begin{equation}
\mathrm{Tr}\big(\pi_{\alpha,b}\pi_{\alpha',b'}\big)=\begin{cases}\delta_{\alpha,\alpha'} & b=b'\\ \frac{1}{d} & b\neq b'\end{cases}\text{.}
\end{equation}
\textit{A} \textsc{mub} \textsc{povm} \textit{is a set of} $d(d+1)$ \textit{effects} $\{E_{\alpha,b}\equiv \pi_{\alpha,b}/(d+1)\}\subset\mathcal{E}(\mathcal{H}_{d})$ \textit{such that} $\{\pi_{\alpha,b}\}$ \textit{is a} \textsc{mub}.
\end{definition}

\noindent It turns out that $d+1$ is \cite{Wootters1989} the maximum possible number of mutually unbiased bases that might be found in a finite $d$-dimensional complex Hilbert space, hence sets of $d+1$ mutually unbiased bases are said to be \textit{full}. In fact, the unit rank projectors corresponding to mutually unbiased basis elements form a projective 2-design if and only if $N=d+1$. For the proof\footnote{Barnum was the first to prove \cite{Barnum2000} that \textsc{mub}s are projective 2-designs.}, one simply notes that $N(N-1)+Nd=2N^{2}d/(d+1)$ if and only if $N=d+1$, calls to \cref{RenesLem}, and computes
\begin{equation}
\sum_{\alpha=1}^{d}\sum_{\alpha'=1}^{d}\sum_{b=1}^{N}\sum_{b'=1}^{N}\Big(\mathrm{Tr}(\pi_{\alpha,b}\pi_{\alpha',b'})\Big)^{2}=\frac{1}{d^{2}}\sum_{\alpha=1}^{d}\sum_{\alpha'=1}^{d}\sum_{b=1}^{N}\sum_{b'=1}^{N}\Big(1+\delta_{b,b'}\big(d^{2}\delta_{\alpha,\alpha'}-1\big)\Big)=N(N-1)+Nd\text{.}
\label{mubCalc}
\end{equation}

\noindent We point the reader to the excellent review article \cite{Durt2010} for many interesting mathematical connections and physical applications related to \textsc{mub}s. Geometrically, \textsc{mub}s correspond to $d+1$ orthogonal regular $d$-vertex simplices in $\mathcal{L}_{\text{sa,0}}(\mathcal{H}_{d})$. When $d$ is a prime power, \textsc{mub}s exist \cite{Wootters1989}. No \textsc{mub}s have been found in any other cases; moreover, there is a growing body of numerical evidence \cite{Grassl2004}\cite{Butterley2007}\cite{Brierley2008}\cite{Raynal2011} supporting Zauner's conjecture \cite{Zauner1999} that \textsc{mub}s do \textit{not} exist when $d=6$.  

\noindent A terminological digression is in order. The name `symmetric informationally complete positive operator valued measures' was introduced in \cite{Renes2004a} for what we have just called a \textsc{sic} \textsc{povm}. Indeed, \textsc{sic} \textsc{povm}s are by definition symmetric in the sense of \eqref{sicProds}, and informationally complete (as is any projective 2-design \textsc{povm}.) There exists \cite{Appleby2007}, however, a more general (\textit{i.e}.\ arbitrary rank) variety of minimal informationally complete \textsc{povm}s that are also symmetric in the sense of admitting constant mutual Hilbert-Schmidt inner products, what Appleby and the author have dubbed \textsc{sim}s for `symmetric informationally complete measurements' in \cite{Graydon2015a} (see \cref{simDef}). In this language, a \textsc{sic} \textsc{povm} is by definition a \textsc{sim} \textit{of unit rank}. We shall have the occasion to consider the general case of \textsc{sim}s in detail in \cref{designsOnQuantumCones}. \textsc{mub}s, like \textsc{sic}s, have arbitrary rank counterparts --- mutually unbiased measurements (\textsc{mum}s, see \cref{mumDef}) --- that do \cite{Kalev2014} exist for all $d\in\mathbb{N}$. Incidentally, \textsc{sim}s and \textsc{sic}s and \textsc{mum}s and \textsc{mub}s are examples of conical 2-designs. The stage is set.

\chapter{Enter Conical Designs}
\label{designsOnQuantumCones}

\epigraphhead[40]
	{
		\epigraph{``\dots deep beneath the rolling waves,\\ in labyrinths of coral caves\dots''}{---\textit{Pink Floyd}\\ Echoes (1971)}
	}
	
In \cite{Graydon2015a}, the author and D.\ M.\ Appleby introduced conical $t$-designs. This chapter centres on that publication. 
\setcounter{theorem}{0}
\begin{definition} \textit{Let} $d,n,t\in\mathbb{N}$\textit{. A} conical $t$-design \textit{is a set of elements of a quantum cone, denoted} $\{A_{1},\dots,A_{n}\}\subset\mathcal{L}_{\text{sa}}(\mathcal{H}_{d})_{+}$\textit{, such that} 
\begin{equation}
\mathrm{span}_{\mathbb{R}}\{A_{1},\dots,A_{n}\}=\mathcal{L}_{\text{sa}}(\mathcal{H}_{d})\textit{,}
\label{spanDemand}
\end{equation}
\textit{and} $\forall U\in\text{U}(\mathcal{H}_{d})$
\begin{equation}
\left[\sum_{j=1}^{n}A_{j}^{\otimes ^{t}},U^{\otimes^{t}}\right]=0\text{,}
\end{equation}
\textit{where} $\forall A,B\in\mathcal{L}(\mathcal{H}_{d})$ $[A,B]\equiv AB-BA$ \textit{and} $A^{\otimes^{1}}=A$\textit{, } $A^{\otimes^{2}}=A\otimes A$\textit{, }$A^{\otimes^{3}}=A\otimes A\otimes A$\textit{, and so on.}
\end{definition}

\noindent The demand in Eq.~\eqref{spanDemand} that a conical $t$-design span the ambient space for the relevant quantum cone is inspired by our detailed consideration of conical 2-designs, in particular their five equivalent characterizations in our upcoming \cref{desCons}. If one restricts to the case $\forall j\in\{1,\dots,n\}$ $\text{rank}A_{j}=1$, then, up to scaling, one recovers the definition of a complex projective $t$-design; moreover, in this case, the spanning condition becomes redundant for $t>1$. For a proof of the later statement, one notes that (i) any complex projective $t$-design is \cite{Scott2006} also a complex projective $t'$-design for all $\mathbb{N}\ni t'\leq t$, and (ii), as we have shown in \cref{projectiveDesigns}, every complex projective 2-design \textsc{povm} is informationally complete. A proof of (i) follows from Schur-Weyl duality, as discussed in \cref{grpTheory}, and Lemma 1 in \cite{Scott2006}. 

\noindent For the remainder of this thesis, as in \cite{Graydon2015a} and \cite{Graydon2016}, we shall focus our attention entirely on the case $t=2$, deferring a consideration of higher conical $t$-designs to later work. The Bloch body will be central to our analysis. Accordingly, we summarize its structure in \cref{blochBod}. We provide a unified geometric picture of \textsc{sim}s and \textsc{mum}s in \cref{simmumery}. We formally introduce conical 2-designs in \cref{desQC} and derive their basic properties. The class of conical 2-designs is large. In \cref{hc2d}, we focus on a particular subclass that we call \textit{homogeneous conical designs}, which includes arbitrarily weighted complex projective 2-designs, as well as \textsc{sim}s and \textsc{mums}. We fully characterize homogeneous conical 2-designs in \cref{blochPoly}, and we prove they exist in \cref{existThm}. Conical 2-designs shed new light on the problem of finding new varieties of complex \textit{projective} 2-designs. We outline a program to seek out such varieties in \cref{inSearchOf}.



\newpage
\section{The Bloch Body}\label{blochBod}
\noindent The generalized Bloch representation \cite{Harriman1978}\cite{Mahler1995}\cite{Siennicki2001}\cite{Kimura2003}\cite{Schirmer2004}\cite{Kimura2005}\cite{Dietz2006}\cite{Appleby2007} of finite dimensional quantum state spaces provides a particularly intuitive way of thinking about designs, both projective and conical. In this thesis, we shall adopt the coordinate free point of view, wherein traceless self-adjoint endormorphisms themselves are the vectors corresponding to quantum states. One may always appeal to, however a concrete represenation. For instance, the generalized Gell-Mann matrices\footnote{Incidentally, the generalized Gell-Mann matrices appear in some of our proofs given later in \cref{partII}; accordingly, we provide an elementary review in \cref{gmApp}.} \cite{Bertlmann2008} can be chosen as a basis for all self-adjoint traceless matrices. Indeed, when $d=2$, this latter point of view yields the familiar Bloch vector representation for qubits in terms of the vector of projections onto the Pauli matrices. In our case, the key is to observe that any quantum state $\rho\in\mathcal{Q}(\mathcal{H}_{d})$ can be expressed as $\rho=(B+\mathds{1}_{d})/d$, with traceless $B\in\mathcal{L}_{\text{sa},0}(\mathcal{H}_{d})$; hence the following definition.

\begin{definition}\label{blochDef}\textit{Let} $d\in\mathbb{N}_{\geq 2}$\textit{. The} Bloch body\textit{, denoted} $\boldsymbol{\mathcal{B}}(\mathcal{H}_{d})$\textit{, consists of those traceless self-adjoint linear endomorphisms} $B\in\mathcal{L}_{\text{sa,0}}(\mathcal{H}_{d})$ \textit{such that} $(B+\mathds{1}_{d})/d$ \textit{is a quantum state, i.e.}
\begin{equation} \boldsymbol{\mathcal{B}}(\mathcal{H}_{d})\equiv\left\{B\in\mathcal{L}_{\text{sa,0}}(\mathcal{H}_{d})\;\boldsymbol{\Big|}\frac{1}{d}\big(B+\mathds{1}_{d}\big)\in\mathcal{Q}(\mathcal{H}_{d})\right\}\text{.}
\end{equation}
\end{definition}

\noindent We shall now render $\mathcal{L}_{\text{sa,0}}(\mathcal{H}_{d})$ a $(d^{2}-1)$-dimensional normed inner product space over $\mathbb{R}$. Of course, the traceless subspace is as such if one allows it to inherit the ambient structure from $\mathcal{L}_{\text{sa}}(\mathcal{H}_{d})$; however, it will be convenient to instead equip $\mathcal{L}_{\text{sa,0}}(\mathcal{H}_{d})$ with the following scaled inner product and its induced norm
\begin{eqnarray}
\langle\!\langle\cdot|\cdot\rangle\!\rangle_{\boldsymbol{\mathcal{B}}}:\mathcal{L}_{\text{sa,0}}(\mathcal{H}_{d})\times\mathcal{L}_{\text{sa,0}}(\mathcal{H}_{d})\longrightarrow\mathbb{R}::(B_{1},B_{2})\longmapsto\frac{1}{d(d-1)}\langle\!\langle B_{1}|B_{2}\rangle\!\rangle=\frac{1}{d(d-1)}\mathrm{Tr}\big(B_{1}B_{2}\big)\text{,}\\
\|\cdot\|_{\boldsymbol{\mathcal{B}}}:\mathcal{L}_{\text{sa,0}}(\mathcal{H}_{d})\longrightarrow\mathbb{R}_{\geq 0}::B\longmapsto\frac{1}{\sqrt{d(d-1)}}\|B\|=\frac{1}{\sqrt{d(d-1)}}\mathrm{Tr}\big(B^{2}\big)\text{.}
\end{eqnarray} 
For our purposes, it will be important to single out two balls centred on the origin of $\mathcal{L}_{\text{sa},0}(\mathcal{H}_{d})$.
Define the \textit{inball} and the \textit{outball}, respectively, via
\begin{eqnarray}
\boldsymbol{\mathcal{B}}(\mathcal{H}_{d})_{\text{in}}&\equiv&\Big\{B\in\mathcal{L}_{\text{sa},0}(\mathcal{H}_{d})\;\boldsymbol{\Big|}\;\|B\|_{\boldsymbol{\mathcal{B}}}\leq\frac{1}{d-1}\Big\}\text{,}\label{inBall}\\
\boldsymbol{\mathcal{B}}(\mathcal{H}_{d})_{\text{out}}&\equiv&\Big\{B\in\mathcal{L}_{\text{sa},0}(\mathcal{H}_{d})\;\boldsymbol{\Big|}\;\|B\|_{\boldsymbol{\mathcal{B}}}\leq1\Big\}\label{outBall}\text{.}
\end{eqnarray}
The following inclusions, for which we refer the reader to \cite{Appleby2007} for proof, hold for any $d\in\mathbb{N}$
\begin{equation}
\boldsymbol{\mathcal{B}}(\mathcal{H}_{d})_{\text{in}}\subseteq\boldsymbol{\mathcal{B}}(\mathcal{H}_{d})\subseteq\boldsymbol{\mathcal{B}}(\mathcal{H}_{d})_{\text{out}}\text{.}
\label{inclusions}
\end{equation}
Furthermore, the inball is the largest ball centred on the origin contained in the Bloch body, and the outball is the smallest ball centred on the origin containing the Bloch body \cite{Appleby2007}. We denote and define the \textit{insphere} $\boldsymbol{\mathcal{S}}(\mathcal{H}_{d})_{\text{in}}$ and the \textit{outsphere} $\boldsymbol{\mathcal{S}}(\mathcal{H}_{d})_{\text{out}}$ to be the surfaces of the inball and the outball, respectively. The manifold of pure quantum states is \cite{Appleby2007} isomorphic to the intersection of the Bloch body and the outsphere:
\begin{equation}
\boldsymbol{\mathcal{B}}(\mathcal{H}_{d})\cap\boldsymbol{\mathcal{S}}(\mathcal{H}_{d})_{\text{out}}=\left\{B\in\mathcal{L}_{\text{sa,0}}(\mathcal{H}_{d})\;\boldsymbol{\Big|}\;\frac{1}{d}\big(B+\mathds{1}_{d}\big)\text{ is a unit rank projector}\right\}\text{.}
\label{pureBlochs}
\end{equation}
Evidently, if $d=2$,
\begin{equation}
\boldsymbol{\mathcal{B}}(\mathcal{H}_{2})_{\text{in}}=\boldsymbol{\mathcal{B}}(\mathcal{H}_{2})=\boldsymbol{\mathcal{B}}(\mathcal{H}_{2})_{\text{out}}
\end{equation} 
and one recovers the familiar Bloch ball of quantum states for qubits. If $d>2$, then the Bloch body is \textit{much} more intricate than a simple ball \cite{Bengtsson2013}, however conical designs shed new light on its structure. In closing this section, we introduce a final definition that will be used in the sequel: let $\mathbf{I}_{d}:\mathcal{L}(\mathcal{H}_{d})\longrightarrow\mathcal{L}(\mathcal{H}_{d})::A\longmapsto A$ be the identity superoperator and define the \textit{Bloch projector} via
\begin{equation}
\Pi_{\boldsymbol{\mathcal{B}}}\equiv\mathbf{I}_{d}-|\mathds{1}_{d}\rangle\!\rangle\frac{1}{d}\langle\!\langle\mathds{1}_{d}|\text{.}
\label{blochProjector}
\end{equation}
\section{SIMs and MUMs}\label{simmumery}
In this section, we recall the arbitrary rank generalizations of \textsc{sic} \textsc{povm}s and \textsc{mub} \textsc{povm}s introduced by Appleby \cite{Appleby2007} and Kalev and Gour \cite{Kalev2014}, respectively. We will require the following elementary proposition.

\begin{definition}\label{simDef}(Appleby \cite{Appleby2007}) \textit{A} \textsc{sim} \textit{is a} \textsc{povm} $\{E_{1},\dots,E_{d^{2}}\}\subset\mathcal{E}(\mathcal{H}_{d})$ \textit{such that} $\forall\alpha,\beta\in\{1,\dots,d^{2}\}$
\begin{equation}
\mathrm{Tr}\big(E_{\alpha}E_{\beta}\big)=\frac{d^{2}\kappa^{2}\delta_{\alpha,\beta}+d+1-\kappa^{2}}{d^{3}(d+1)}\text{,}
\label{simCon}
\end{equation}
\textit{where} $\kappa\in(0,1]$ \textit{is the} \textsc{sim} contraction parameter\text{.}
\end{definition}

\begin{definition}(Kalev and Gour \cite{Kalev2014})\label{mumDef} \textit{A} \textsc{mum} \textit{is a} \textsc{povm} $\big\{E_{\alpha,b}\;\boldsymbol{|}\;\alpha\in\{1,\dots,d\}\wedge b\in\{1,\dots,d+1\}\big\}\subset\mathcal{E}(\mathcal{H}_{d})$ \textit{such that}
\begin{equation}
\mathrm{Tr}\big(E_{\alpha,b}E_{\alpha',b'}\big)=\begin{cases} \frac{d\eta^{2}\delta_{\alpha,\alpha'}+1-\eta^{2}}{d(d+1)^{2}} & b=b' \\ \frac{1}{d(d+1)^{2}} & b\neq b' \end{cases}\text{,}
\label{mumCon}
\end{equation}
\textit{where} $\eta\in(0,1]$ \textit{is the} \textsc{mum} contraction parameter.
\end{definition}

\noindent We remind the reader of our discussion closing \cref{introPartI}, where we pointed out that \textsc{sim} reads as `symmetric informationally complete measurement' and \textsc{mum} reads as `mutually unbiased measurement.' We will now establish the existence of \textsc{sim}s and \textsc{mum}s for all finite dimensinal quantum state spaces. We use simple geometric arguments. Appleby follows a similiar route to prove the existence of \textsc{sim}s in \cite{Appleby2007}. Kalev and Gour expound a more complicated algebraic argument to prove the existence of \textsc{mum}s in \cite{Kalev2014}. Since the Bloch body will be central to our discussion in the sequel, we present a unified geometric picture of \textsc{sim}s and \textsc{mum}s herein. We begin with the following elementary geometric proposition. 

\begin{proposition}\label{simplexProp}\textit{Let $\mathbb{N}\ni m>1$. Let $\cdot:\mathbb{R}^{m}\times\mathbb{R}^{m}\longrightarrow\mathbb{R}::(v,w)\longmapsto v\cdot w$ be an inner product on $\mathbb{R}^{m}$. Then $\forall\kappa\in\mathbb{R}_{> 0}$ and $\forall n\in\mathbb{N}$ with $2\leq n\leq m+1$ there exists $\{v_{1},\dots,v_{n}\}\subset\mathbb{R}^{m}$ such that $\forall \alpha,\beta\in\{1,\dots,n\}$}
\begin{eqnarray}
v_{\alpha}\cdot v_{\beta}=\kappa^{2}\frac{\delta_{\alpha,\beta}n-1}{n-1}\textit{,}\label{inProd}\\
\sum_{\alpha=1}^{n}v_{\alpha}=0\textit{,}\label{sumZero}\\
\text{dim}_{\mathbb{R}}\big(\text{span}_{\mathbb{R}}\{v_{1},\dots,v_{n}\}\big)=n-1\textit{.}\label{dimSpan}
\end{eqnarray}
\begin{proof} We proceed by induction on $m$. Let $m=2$. If $n=2$, then choose $\{v_{1},v_{2}\}$ to be the endpoints of a line segment of length $2\kappa$ centred on the origin. If $n=3$, then choose $\{v_{1},v_{2},v_{3}\}$ to be the vertices of an equilateral triangle with side length $\sqrt{3}\kappa$ centred on the origin. In both of these cases, the statements in Eq.~\eqref{sumZero} and Eq.~\eqref{dimSpan} follow trivially. Now, suppose the desired result holds for some fixed $m\in\mathbb{N}$. Let $\tilde{m}=m+1$. In light of the obvious inclusion of $\mathbb{R}^{m}$ as a subspace of $\mathbb{R}^{\tilde{m}}$, it remains only to show that the desired result holds for $n=\tilde{m}+1$. It follows from our supposition that there exists $\{v_{1},\dots,v_{\tilde{m}}\}\subset\mathbb{R}^{\tilde{m}}$ such that Eq.~\eqref{inProd}, Eq.~\eqref{sumZero}, and Eq.~\eqref{dimSpan} hold, in particular for some fixed $\kappa\in\mathbb{R}_{>0}$. Therefore $\forall\tilde{\kappa}\in\mathbb{R}_{>0}$ there exists $v\in\mathbb{R}^{\tilde{m}}$ such that $v\cdot v=\tilde{\kappa}^{2}$ and $\forall\alpha\in\{1,\dots,\tilde{m}\}$ $v\cdot v_{\alpha}=0$. Define $\forall\alpha\in\{1,\dots,\tilde{m}\}$ 
\begin{equation}
\tilde{v}_{\alpha}=v_{\alpha}\frac{\tilde{\kappa}}{\kappa}\sqrt{\frac{\tilde{m}^{2}-1}{\tilde{m}^{2}}}-v\frac{1}{\tilde{m}}\implies \forall\alpha,\beta\in\{1,\dots,m+1\}\; \tilde{v}_{\alpha}\cdot\tilde{v}_{\beta}=\tilde{\kappa}^{2}\frac{\delta_{\alpha,\beta}(\tilde{m}+1)-1}{\tilde{m}}\text{.}
\label{tildeVs}
\end{equation}
Then $\{\tilde{v}_{1},\dots,\tilde{v}_{\tilde{m}}\}\cup\{v\}$ satisfies the desired properties: in particular, Eq.~\eqref{inProd} and Eq.~\eqref{sumZero} follow immediately by the construction in Eq.~\eqref{tildeVs}, and Eq.~\eqref{dimSpan} follows from the fact that $v$ was chosen to be in the orthogonal complement of the span of $\{v_{1},\dots,v_{m+1}\}$.
\end{proof}
\end{proposition}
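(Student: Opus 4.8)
The plan is to bypass the induction entirely and realize the configuration directly through its Gram matrix. Taken together, the three conditions say precisely that $\{v_{1},\dots,v_{n}\}$ is a regular simplex of circumradius $\kappa$, centred at the origin, inscribed in an $(n-1)$-dimensional subspace. Concretely, Eq.~\eqref{inProd} demands that the Gram matrix $G$ with entries $G_{\alpha\beta}=v_{\alpha}\cdot v_{\beta}$ equal $\frac{\kappa^{2}}{n-1}(nI-J)$, where $I$ is the $n\times n$ identity and $J$ the all-ones matrix; Eq.~\eqref{dimSpan} demands that this matrix have rank exactly $n-1$; and Eq.~\eqref{sumZero} is a centring condition.

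First I would reduce to the Euclidean dot product. Since $\cdot$ is an inner product on $\mathbb{R}^{m}$, Gram--Schmidt furnishes an orthonormal basis, and the associated coordinate isomorphism is an isometry of $(\mathbb{R}^{m},\cdot)$ onto $\mathbb{R}^{m}$ with the standard dot product. It therefore suffices to build the vectors for the standard structure and transport them back; the ambient dimension $m$ enters only through the inequality $n-1\le m$, which guarantees enough room for the embedding.

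Next I would write the vectors down explicitly rather than invoke an abstract factorization. Let $e_{1},\dots,e_{n}$ be the standard orthonormal basis of $\mathbb{R}^{n}$, let $c=\frac{1}{n}\sum_{\alpha}e_{\alpha}$, and set $v_{\alpha}=\lambda(e_{\alpha}-c)$ with $\lambda=\kappa\sqrt{n/(n-1)}$. A one-line computation gives $(e_{\alpha}-c)\cdot(e_{\beta}-c)=\delta_{\alpha\beta}-1/n$, so after scaling $v_{\alpha}\cdot v_{\beta}=\lambda^{2}(\delta_{\alpha\beta}-1/n)=\kappa^{2}(\delta_{\alpha\beta}n-1)/(n-1)$, which is Eq.~\eqref{inProd}. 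The centring Eq.~\eqref{sumZero} is immediate since $\sum_{\alpha}(e_{\alpha}-c)=0$. These vectors lie in the hyperplane $\{x\in\mathbb{R}^{n}:\sum_{i}x_{i}=0\}$, an $(n-1)$-dimensional subspace; transporting through the isometry of the first step places them in $\mathbb{R}^{m}$ with all three properties preserved, since $n-1\le m$.

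The computations are routine; the only point needing genuine care is Eq.~\eqref{dimSpan}, namely confirming the span is \emph{exactly} $(n-1)$-dimensional and not smaller. I would settle this by an eigenvalue count: $J$ has eigenvalues $n$ (once) and $0$ (with multiplicity $n-1$), so $nI-J$, hence $G=\frac{\kappa^{2}}{n-1}(nI-J)$, has a simple zero eigenvalue and eigenvalue $\kappa^{2}n/(n-1)>0$ with multiplicity $n-1$. This pins the rank of $G$, equivalently the dimension of the span, at exactly $n-1$, and the positive semidefiniteness it exhibits (all eigenvalues nonnegative) is precisely what guarantees $G$ is realizable as a Gram matrix at all. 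The constraint $n\le m+1$ is used nowhere except to secure the embedding in the final step.
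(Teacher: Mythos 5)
Your proof is correct, but it takes a genuinely different route from the paper's. The paper proceeds by induction on the ambient dimension, building the $(\tilde m+1)$-point simplex from the $\tilde m$-point one by adjoining an orthogonal "apex" vector $v$ and tilting the old vertices toward $-v/\tilde m$; the construction in Eq.~\eqref{tildeVs} is the recursive step. You instead write the simplex down in closed form as the centred, rescaled standard basis $v_{\alpha}=\lambda(e_{\alpha}-c)$ of $\mathbb{R}^{n}$, verify $(e_{\alpha}-c)\cdot(e_{\beta}-c)=\delta_{\alpha\beta}-1/n$ directly, and pin the rank of the span by diagonalizing the Gram matrix $\tfrac{\kappa^{2}}{n-1}(nI-J)$ through the spectrum of the all-ones matrix $J$. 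Both arguments are sound; I checked your scaling ($\lambda^{2}(\delta_{\alpha\beta}-1/n)=\kappa^{2}(n\delta_{\alpha\beta}-1)/(n-1)$), the centring, and the eigenvalue count ($nI-J$ has a simple zero eigenvalue and $n$ with multiplicity $n-1$), and the reduction of a general inner product to the standard one via Gram--Schmidt together with the embedding of the $(n-1)$-dimensional hyperplane $\sum_{i}x_{i}=0$ into $\mathbb{R}^{m}$ (using $n-1\le m$) is exactly where the hypothesis $n\le m+1$ is consumed. What your approach buys is explicitness and brevity: no induction, coordinates for free, and a transparent reason the span is exactly $(n-1)$-dimensional rather than smaller. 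What the paper's approach buys is a recursive recipe that extends a given regular simplex by one vertex while rescaling $\kappa$ at will, which is occasionally the more useful packaging; the mild redundancy in your write-up (giving both the explicit vectors and the Gram-matrix realizability remark, when either suffices) could be trimmed.
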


\noindent Our proof of \cref{simplexProp} establishes the existence of regular simplices in $\mathbb{R}^{m}$. Of course, these shapes have been studied extensively \cite{Coxeter1973}. Indeed, in $\mathbb{R}^{3}$ with $n=4$ we find, for instance, examples dating much farther back in the Egyptian pyramids. We give a special name to such simplices when considering the particular $(d^{2}-1)$-dimensional inner product spaces over $\mathbb{R}$ of traceless self-adjoint linear endomorphisms on $\mathcal{H}_{d}$.

\begin{definition}\textit{Let} $\kappa\in\mathbb{R}_{>0}$\textit{. Let} $2\leq d\in\mathbb{N}$\textit{. Let} $n\in\mathbb{N}$\textit{. An} $(n,\kappa)$-regular simplex in $\mathcal{L}_{\text{sa,0}}(\mathcal{H}_{d})$ \textit{is} $\{B_{1},\dots,B_{n}\}\subset\mathcal{L}_{\text{sa,0}}(\mathcal{H}_{d})$ \textit{such that the following three properties hold:} 
\begin{eqnarray}
\forall\alpha,\beta\in\{1,\dots,n\}\;\langle\!\langle 
B_{\alpha}|B_{\beta}\rangle\!\rangle_{\boldsymbol{\mathcal{B}}}=\kappa^{2}\frac{\delta_{\alpha,\beta}n-1}{n-1}\text{,}\\
\sum_{\alpha=1}^{n}B_{\alpha}=0\text{,}\\
\text{dim}_{\mathbb{R}}\big(\text{span}_{\mathbb{R}}\{B_{1},\dots,B_{n}\}\big)=n-1\textit{.}
\end{eqnarray}
\end{definition}

\begin{corollary}\label{blochSimplices}\textit{Let $\kappa\in\mathbb{R}_{>0}$.} \textit{Let} $2\leq d\in\mathbb{N}$\textit{. Then $\forall n\in\mathbb{N}$ with $2\leq n\leq d^{2}$ there exists an $(n,\kappa)$-regular simplex in} $\mathcal{L}_{\text{sa},0}(\mathcal{H}_{d})$\textit{.}
\begin{proof}
Immediate consequence of \cref{simplexProp} and $\text{dim}_{\mathbb{R}}\mathcal{L}_{\text{sa},0}(\mathcal{H}_{d})=d^{2}-1$.
\end{proof}
\end{corollary}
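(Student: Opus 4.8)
The plan is to deduce the statement directly from \cref{simplexProp}, the engine already established, by recognizing $\mathcal{L}_{\text{sa,0}}(\mathcal{H}_{d})$ equipped with the scaled inner product $\langle\!\langle\cdot|\cdot\rangle\!\rangle_{\boldsymbol{\mathcal{B}}}$ as an abstract finite dimensional real inner product space of exactly the right dimension. All the genuine geometric content lives in the inductive construction of regular simplices carried out inside \cref{simplexProp}; the corollary is merely a transport of that result along a linear isomorphism.

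First I would record the dimension count $\mathrm{dim}_{\mathbb{R}}\mathcal{L}_{\text{sa,0}}(\mathcal{H}_{d})=d^{2}-1$. For $d\geq 2$ one has $d^{2}-1\geq 3>1$, so setting $m=d^{2}-1$ places us squarely within the hypotheses of \cref{simplexProp}. The key bookkeeping point is that $m+1=d^{2}$, so the admissible range $2\leq n\leq m+1$ in the proposition is precisely the range $2\leq n\leq d^{2}$ appearing in the corollary; one must verify this identification so that no values of $n$ are lost.

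Second, I would transport the problem onto $\mathbb{R}^{m}$. Fix any real-linear isomorphism $\varphi:\mathcal{L}_{\text{sa,0}}(\mathcal{H}_{d})\longrightarrow\mathbb{R}^{m}$ (concretely, send an orthonormal basis for $\langle\!\langle\cdot|\cdot\rangle\!\rangle_{\boldsymbol{\mathcal{B}}}$, such as suitably normalized generalized Gell-Mann matrices, to the standard basis of $\mathbb{R}^{m}$). Define an inner product on $\mathbb{R}^{m}$ by pulling back, namely $v\cdot w\equiv\langle\!\langle\varphi^{-1}(v)|\varphi^{-1}(w)\rangle\!\rangle_{\boldsymbol{\mathcal{B}}}$; since $\varphi$ is a linear bijection this is a bona fide inner product. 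As \cref{simplexProp} holds for an \emph{arbitrary} inner product on $\mathbb{R}^{m}$, it applies to this one and yields, for the prescribed $\kappa$, vectors $\{v_{1},\dots,v_{n}\}\subset\mathbb{R}^{m}$ satisfying \eqref{inProd}, \eqref{sumZero}, and \eqref{dimSpan}.

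Third, I would pull the vectors back and check the three defining properties of an $(n,\kappa)$-regular simplex. Set $B_{\alpha}\equiv\varphi^{-1}(v_{\alpha})$. Linearity of $\varphi^{-1}$ gives $\sum_{\alpha=1}^{n}B_{\alpha}=\varphi^{-1}\big(\sum_{\alpha=1}^{n}v_{\alpha}\big)=0$ from \eqref{sumZero}, and $\mathrm{dim}_{\mathbb{R}}\big(\mathrm{span}_{\mathbb{R}}\{B_{1},\dots,B_{n}\}\big)=\mathrm{dim}_{\mathbb{R}}\big(\mathrm{span}_{\mathbb{R}}\{v_{1},\dots,v_{n}\}\big)=n-1$ from \eqref{dimSpan}; and by the very definition of the pulled-back inner product, $\langle\!\langle B_{\alpha}|B_{\beta}\rangle\!\rangle_{\boldsymbol{\mathcal{B}}}=v_{\alpha}\cdot v_{\beta}=\kappa^{2}(\delta_{\alpha,\beta}n-1)/(n-1)$ from \eqref{inProd}. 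These are exactly the required conditions, completing the construction. There is no real obstacle to overcome here beyond the index-range verification noted above, since the substantive work was already discharged in the proof of \cref{simplexProp}.
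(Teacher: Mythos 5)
Your proposal is correct and takes essentially the same route as the paper, which simply cites \cref{simplexProp} together with the dimension count $\mathrm{dim}_{\mathbb{R}}\mathcal{L}_{\text{sa},0}(\mathcal{H}_{d})=d^{2}-1$; you have merely made explicit the transport along an isometric linear isomorphism and the index identification $m+1=d^{2}$ that the paper leaves implicit.
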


\noindent In light of \cref{blochSimplices} and the inclusion $\boldsymbol{\mathcal{B}}_{\text{in}}\subseteq\boldsymbol{\mathcal{B}}$, the existence of \textsc{sim}s and \textsc{mum}s is trivial. In order to see why, we require some preliminary oberservations. 

\noindent A \textsc{sim} is by \cref{simDef} a \textsc{povm}, and thus from linearity of the trace we see $\forall\alpha\in\{1,\dots,d^{2}\}$ that
\begin{eqnarray}
\mathrm{Tr}E_{\alpha}&=&\mathrm{Tr}\left(E_{\alpha}\sum_{\beta=1}^{d^{2}}E_{\beta}\right)\\
&=&\frac{d^{2}\kappa^{2}+d+1-\kappa^{2}}{d^{3}(d+1)}+\frac{(d^{2}-1)(d+1-\kappa^{2})}{d^{3}(d+1)}\\
&=&\frac{1}{d}\text{.}
\end{eqnarray}
Therefore, for any \textsc{sim} with elements $E_{\alpha}$, there exist corresponding traceless $B_{\alpha}\in\mathcal{L}_{\text{sa},0}(\mathcal{H}_{d})$ such that
\begin{equation}
E_{\alpha}=\frac{(B_{\alpha}+\mathds{1}_{d})}{d^{2}}\textit{.}
\label{simBlochs}
\end{equation} 
The value of the inner products in Eq.~\eqref{simCon} may at first seem obscure; however, in terms of the Bloch body picture we see from  Eq.~\eqref{simBlochs} that Eq.~\eqref{simCon} translates $\forall\alpha,\beta\in\{1,\dots,d^{2}\}$ to
\begin{eqnarray}
\langle\!\langle B_{\alpha}|B_{\beta}\rangle\!\rangle_{\boldsymbol{\mathcal{B}}}&=&\frac{1}{d(d-1)}\mathrm{Tr}\Big(\big(d^{2}E_{\alpha}-\mathds{1}_{d}\big)\big(d^{2}E_{\beta}-\mathds{1}_{d}\big)\Big)\\
&=&\frac{1}{d(d-1)}\left(d^{4}\frac{d^{2}\kappa^{2}\delta_{\alpha,\beta}+d+1-\kappa^{2}}{d^{3}(d+1)}-2d+d\right)\\
&=&\frac{\kappa^{2}\big(d^{2}\delta_{\alpha,\beta}-1\big)}{d^{2}-1}\text{,}
\end{eqnarray}
hence, a \textsc{sim} is exactly a $(d^{2},\kappa)$-regular simplex within the Bloch body. Incidentally, it now follows from \cref{blochSimplices} that any \textsc{sim} is informationally complete. Indeed, the Bloch vectors for a \textsc{sim} necessarily span the null trace subspace, and a \textsc{sim} is a \textsc{povm} so the sum of the \textsc{sim} elements spans the remaining 1-dimensional subspace of the full ambient space of self-adjoint linear endomorphisms on $\mathcal{H}_{d}$. Therefore \textsc{sim}s are symmetric --- in the sense of Eq.~\eqref{simCon} --- informationally complete quantum measurements; hence their name. Furthermore, a \textsc{sim} is a \textsc{sic} \textsc{povm} if and only if $\kappa=1$. As previously mentioned, in light of Eq.~\eqref{inclusions} and \cref{blochSimplices}, it is of course a trivial fact that \textsc{sim}s exist in all cases of finite Hilbert dimension, for one can choose any 
\begin{equation}
\kappa\leq 1/(d-1)\text{.}
\end{equation} 
We have proved the following propositions.

\begin{proposition}\textsc{sim}s \textit{exist for all quantum cones.}
\end{proposition}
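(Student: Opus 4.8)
The plan is to exploit the dictionary, already set up in the passage immediately preceding the statement, between \textsc{sim}s and regular simplices inscribed in the Bloch body, and then to observe that such a simplex always fits inside the \emph{inball}, whose radius we control through the contraction parameter.

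First I would recall the correspondence recorded in Eq.~\eqref{simBlochs}: writing $E_{\alpha}=(B_{\alpha}+\mathds{1}_{d})/d^{2}$ with $B_{\alpha}\in\mathcal{L}_{\text{sa},0}(\mathcal{H}_{d})$, the defining inner-product relation Eq.~\eqref{simCon} of \cref{simDef} is equivalent to the statement that $\{B_{1},\dots,B_{d^{2}}\}$ is a $(d^{2},\kappa)$-regular simplex in the Bloch body. Thus it suffices to produce such a simplex all of whose vertices lie in $\boldsymbol{\mathcal{B}}(\mathcal{H}_{d})$: given this, positivity of each $E_{\alpha}$ follows because $(B_{\alpha}+\mathds{1}_{d})/d$ is then a quantum state, the bound $E_{\alpha}\leq\mathds{1}_{d}$ follows since $E_{\alpha}=\tfrac{1}{d}(B_{\alpha}+\mathds{1}_{d})/d\leq\mathds{1}_{d}/d$, and the normalization $\sum_{\alpha}E_{\alpha}=\mathds{1}_{d}$ follows from $\sum_{\alpha}B_{\alpha}=0$ together with $n=d^{2}$. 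Hence $\{E_{\alpha}\}$ is a \textsc{povm}, and by construction a \textsc{sim}.

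Next I would invoke \cref{blochSimplices} with $n=d^{2}$, which furnishes, for every $\kappa\in\mathbb{R}_{>0}$, a $(d^{2},\kappa)$-regular simplex in $\mathcal{L}_{\text{sa},0}(\mathcal{H}_{d})$. Each vertex of this simplex has scaled norm $\|B_{\alpha}\|_{\boldsymbol{\mathcal{B}}}=\kappa$ (set $\alpha=\beta$ in the regular-simplex relation). The decisive step is to choose $\kappa$ small enough that the simplex is absorbed into the Bloch body; concretely I would take $\kappa\leq 1/(d-1)$, which for $d\geq 2$ automatically satisfies $\kappa\in(0,1]$ as required by \cref{simDef}. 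Then every vertex lies in the inball $\boldsymbol{\mathcal{B}}(\mathcal{H}_{d})_{\text{in}}$ of Eq.~\eqref{inBall}, and the inclusion Eq.~\eqref{inclusions} places it in $\boldsymbol{\mathcal{B}}(\mathcal{H}_{d})$, completing the argument.

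The only genuine obstacle is geometric: the Bloch body is wildly non-spherical for $d>2$, so one cannot simply demand that the simplex vertices sit on the pure-state (outsphere) boundary. The crux is precisely that \textsc{sim}s do not require rank-one vertices, so it is enough to inscribe the simplex in the round inball rather than in the full body, and the inball accommodates a regular $d^{2}$-vertex simplex of any circumradius up to $1/(d-1)$. The trivial case $d=1$ is handled separately: the cone is $\mathbb{R}_{\geq 0}$ and the single effect $E_{1}=\mathds{1}_{1}$ vacuously satisfies \cref{simDef}.
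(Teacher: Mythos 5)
Your proposal is correct and follows essentially the same route as the paper: the translation of Eq.~\eqref{simCon} into the statement that the Bloch vectors form a $(d^{2},\kappa)$-regular simplex, followed by \cref{blochSimplices} and the inclusion $\boldsymbol{\mathcal{B}}(\mathcal{H}_{d})_{\text{in}}\subseteq\boldsymbol{\mathcal{B}}(\mathcal{H}_{d})$ with the choice $\kappa\leq 1/(d-1)$, is exactly the paper's argument. Your extra verification that the resulting $E_{\alpha}$ are effects summing to $\mathds{1}_{d}$ is a welcome elaboration of details the paper leaves implicit, but not a different proof.
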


\begin{proposition}\textit{Every} \textsc{sim} \textit{is informationally complete.}
\end{proposition}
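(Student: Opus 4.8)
The plan is to leverage the identification, established in the discussion above, of a \textsc{sim} with a $(d^{2},\kappa)$-regular simplex inside the Bloch body, together with the dimension count that is built into the definition of a regular simplex. First I would recall that for any \textsc{sim} with elements $E_{\alpha}$ there are traceless Bloch vectors $B_{\alpha}\in\mathcal{L}_{\text{sa},0}(\mathcal{H}_{d})$ satisfying $E_{\alpha}=(B_{\alpha}+\mathds{1}_{d})/d^{2}$, namely Eq.~\eqref{simBlochs}, and that the collection $\{B_{1},\dots,B_{d^{2}}\}$ forms a $(d^{2},\kappa)$-regular simplex in $\mathcal{L}_{\text{sa},0}(\mathcal{H}_{d})$, as verified by the trace computation immediately preceding these propositions.

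The key step is the spanning property. By the third defining clause of an $(n,\kappa)$-regular simplex, specialized to $n=d^{2}$, one has $\dim_{\mathbb{R}}\big(\mathrm{span}_{\mathbb{R}}\{B_{1},\dots,B_{d^{2}}\}\big)=d^{2}-1$. Since $\dim_{\mathbb{R}}\mathcal{L}_{\text{sa},0}(\mathcal{H}_{d})=d^{2}-1$ and each $B_{\alpha}$ lies in $\mathcal{L}_{\text{sa},0}(\mathcal{H}_{d})$, it follows that $\mathrm{span}_{\mathbb{R}}\{B_{1},\dots,B_{d^{2}}\}=\mathcal{L}_{\text{sa},0}(\mathcal{H}_{d})$; that is, the Bloch vectors of a \textsc{sim} exhaust the null-trace subspace. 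This is exactly the observation flagged just after \cref{blochSimplices}.

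To pass from the $B_{\alpha}$ back to the $E_{\alpha}$, I would use the \textsc{povm} normalization $\sum_{\alpha}E_{\alpha}=\mathds{1}_{d}$, which places $\mathds{1}_{d}$ inside $\mathrm{span}_{\mathbb{R}}\{E_{\alpha}\}$. Rearranging Eq.~\eqref{simBlochs} into $B_{\alpha}=d^{2}E_{\alpha}-\mathds{1}_{d}$ then exhibits each $B_{\alpha}$ as an element of $\mathrm{span}_{\mathbb{R}}\{E_{\alpha}\}$ as well. Hence $\mathrm{span}_{\mathbb{R}}\{E_{\alpha}\}$ contains both the null-trace subspace $\mathcal{L}_{\text{sa},0}(\mathcal{H}_{d})=\mathrm{span}_{\mathbb{R}}\{B_{\alpha}\}$ and the line $\mathbb{R}\mathds{1}_{d}$; since $\mathcal{L}_{\text{sa}}(\mathcal{H}_{d})=\mathcal{L}_{\text{sa},0}(\mathcal{H}_{d})\oplus\mathbb{R}\mathds{1}_{d}$, I conclude $\mathrm{span}_{\mathbb{R}}\{E_{\alpha}\}=\mathcal{L}_{\text{sa}}(\mathcal{H}_{d})$, which is precisely informational completeness in the sense of \cref{icDef}.

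There is no serious obstacle here: all the genuine content has been front-loaded into establishing that a \textsc{sim} is a regular simplex (the trace computation translating Eq.~\eqref{simCon} into the Bloch inner products) and into \cref{simplexProp} and \cref{blochSimplices}, which guarantee a span of maximal dimension $d^{2}-1$. The only point requiring a moment's care is the final bookkeeping, whereby the $E_{\alpha}$ span the full self-adjoint space precisely because their Bloch vectors span the traceless hyperplane while the \textsc{povm} condition supplies the missing multiple-of-identity direction. Moreover, since $\mathrm{card}\{E_{\alpha}\}=d^{2}=\dim_{\mathbb{R}}\mathcal{L}_{\text{sa}}(\mathcal{H}_{d})$, the resulting informationally complete \textsc{povm} is in fact minimal.
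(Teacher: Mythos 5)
Your proof is correct and follows essentially the same route as the paper: identify the \textsc{sim} with a $(d^{2},\kappa)$-regular simplex via Eq.~\eqref{simBlochs}, note that its Bloch vectors span the $(d^{2}-1)$-dimensional null-trace subspace, and use the \textsc{povm} condition to supply the remaining $\mathbb{R}\mathds{1}_{d}$ direction. Your version merely spells out the final bookkeeping (that $B_{\alpha}=d^{2}E_{\alpha}-\mathds{1}_{d}$ places the traceless hyperplane inside $\mathrm{span}_{\mathbb{R}}\{E_{\alpha}\}$) a little more explicitly than the paper does.
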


\noindent For completeness, we now follow a similiar approach for \textsc{mum}s. A \textsc{mum} is by \cref{mumDef} a \textsc{povm}, and thus from linearity of the trace we see $\forall\alpha\in\{1,\dots,d\}$ and $\forall b\in\{1,\dots,d+1\}$ that
\begin{eqnarray}
\mathrm{Tr}E_{\alpha,b}&=&\mathrm{Tr}\left(E_{\alpha,b}\sum_{\alpha'=1}^{d}\sum_{b'=1}^{d+1}E_{\alpha',b}\right)\\
&=&\frac{d\eta^{2}+1-\eta^{2}}{d(d+1)^{2}}+\frac{(d-1)(1-\eta^{2})}{d(d+1)^{2}}+\frac{d^{2}}{d(d+1)^{2}}\\
&=&\frac{1}{d+1}
\end{eqnarray}
Therefore, for any \textsc{mum} with elements $E_{\alpha,b}$, there exist corresponding traceless $B_{\alpha,b}\in\mathcal{L}_{\text{sa},0}(\mathcal{H}_{d})$ such that
\begin{equation}
E_{\alpha}=\frac{(B_{\alpha,b}+\mathds{1}_{d})}{d(d+1)}\textit{.}
\label{mumBlochs}
\end{equation} 

\noindent We see from Eq.~\eqref{mumBlochs} for $\forall\alpha,\alpha'\in\{1,\dots,d\}$ and $\forall b,b'\in\{1,\dots,d+1\}$ that Eq.~\eqref{mumCon} translates to
\begin{eqnarray}
\langle\!\langle B_{\alpha,b}|B_{\alpha',b'}\rangle\!\rangle_{\boldsymbol{\mathcal{B}}}&=&\frac{1}{d(d-1)}\mathrm{Tr}\Big(\big(d(d+1)E_{\alpha,b}-\mathds{1}_{d}\big)\big(d(d+1)E_{\alpha',b'}-\mathds{1}_{d}\big)\Big)\\
&=&\begin{cases} \frac{1}{d(d-1)}\left(\frac{d^{2}(d+1)^{2}(d\eta^{2}\delta_{\alpha,\alpha'}+1-\eta^{2})}{d(d+1)^{2}} -\frac{2d(d+1)}{d+1}+d\right) & b=b'\\ \frac{1}{d(d-1)}\left(\frac{d^{2}(d+1)^{2}}{d(d+1)^{2}}-\frac{2d(d+1)}{d+1}+d\right) & b\neq b'\end{cases}\\
&=&\begin{cases} \frac{\eta^{2}(d\delta_{\alpha,\alpha'}-1)}{d-1}& b=b'\\ 0 & b\neq b'\end{cases}\text{,}
\end{eqnarray}
hence a \textsc{mum} is the union of $(d+1)$ mutually orthogonal $(d,\eta)$-regular simplices within the Bloch body. Incidentally, it now follows from \cref{blochSimplices} that any \textsc{mum} is informationally complete; moreover, $\forall b\in\{1,\dots,d+1\}$ that each set $\big\{E_{\alpha,b}(d+1)\;\boldsymbol{|}\;\alpha\in\{1\dots,d\}\big\}$ is a \textsc{povm}. A proof of the latter statement follows directly from that fact that the simplex vertices sum to zero. A proof of the former statement runs as follows: each simplex spans a $(d-1)$-dimensional subspace of the null trace subspace, there are $(d+1)$ mutually orthogonal such simplices, and the remaining 1-dimensional subspace of the full ambient space of self-adjoint linear endomorphisms on $\mathcal{H}_{d}$ is spanned by the sum of the \textsc{mum} elements. Therefore \textsc{mum}s are mutually unbiased --- in the sense of Eq.~\eqref{mumCon} --- informationally complete quantum measurements; hence their name. Furthermore, a \textsc{mum} is a \textsc{mub} \textsc{povm} if and only if $\eta=1$. Once again, it is trivial  that \textsc{mum}s exist in all cases of finite Hilbert dimension, for one can choose any 
\begin{equation}
\eta\leq 1/(d-1)\text{.}
\end{equation} 
We have proved the following propositions.

\begin{proposition}\textsc{mum}s \textit{exist for all quantum cones.}
\end{proposition}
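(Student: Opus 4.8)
The plan is to mirror the proof just given for \textsc{sim}s, exploiting the characterization established immediately above: a \textsc{mum} is precisely the union of $(d+1)$ mutually orthogonal $(d,\eta)$-regular simplices sitting inside the Bloch body, together with the identification $E_{\alpha,b}=(B_{\alpha,b}+\mathds{1}_{d})/(d(d+1))$ recorded in Eq.~\eqref{mumBlochs}. Thus it suffices to construct such a simplex configuration with a sufficiently small contraction parameter $\eta$ and then read off the effects. We work throughout in the Bloch-body setting with $d\geq 2$.

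First I would exploit the dimension identity $\dim_{\mathbb{R}}\mathcal{L}_{\text{sa},0}(\mathcal{H}_{d})=d^{2}-1=(d+1)(d-1)$ to fix an orthogonal decomposition of the null trace subspace into $(d+1)$ mutually orthogonal $(d-1)$-dimensional subspaces $\mathcal{V}_{1},\dots,\mathcal{V}_{d+1}$, each isometric to $\mathbb{R}^{d-1}$ under $\langle\!\langle\cdot|\cdot\rangle\!\rangle_{\boldsymbol{\mathcal{B}}}$. Within each $\mathcal{V}_{b}$ I would invoke \cref{simplexProp} with $m=d-1$ and $n=d$ --- noting that the hypothesis $2\leq n\leq m+1$ holds here with equality on the right --- to obtain a $(d,\eta)$-regular simplex $\{B_{1,b},\dots,B_{d,b}\}$ whose vertices sum to zero and span all of $\mathcal{V}_{b}$. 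Because the $\mathcal{V}_{b}$ are pairwise orthogonal, vertices drawn from distinct simplices are automatically orthogonal, which reproduces the off-diagonal ($b\neq b'$) case of Eq.~\eqref{mumCon}.

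Next I would force the whole configuration inside the Bloch body. The diagonal regular-simplex inner product gives $\|B_{\alpha,b}\|_{\boldsymbol{\mathcal{B}}}=\eta$ for every vertex, so choosing $\eta\leq 1/(d-1)$ places each vertex in the inball $\boldsymbol{\mathcal{B}}(\mathcal{H}_{d})_{\text{in}}$, and hence in $\boldsymbol{\mathcal{B}}(\mathcal{H}_{d})$ by the inclusion Eq.~\eqref{inclusions}. Consequently each $(B_{\alpha,b}+\mathds{1}_{d})/d$ is a genuine quantum state, so each $E_{\alpha,b}$ is positive and bounded above by $\frac{1}{d+1}\mathds{1}_{d}\leq\mathds{1}_{d}$, i.e.\ a legitimate effect. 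The \textsc{povm} normalization is then immediate: there are $d(d+1)$ effects, so the identity contributions sum to $\mathds{1}_{d}$, while $\sum_{\alpha=1}^{d}B_{\alpha,b}=0$ within each simplex annihilates the traceless part of the sum.

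The final step, and the only one demanding genuine care, is to confirm the quantitative trace conditions Eq.~\eqref{mumCon} by running the author's translation computation backward: the diagonal value $\langle\!\langle B_{\alpha,b}|B_{\alpha',b}\rangle\!\rangle_{\boldsymbol{\mathcal{B}}}=\eta^{2}(d\delta_{\alpha,\alpha'}-1)/(d-1)$ must map back exactly to the $b=b'$ entry of $\mathrm{Tr}(E_{\alpha,b}E_{\alpha',b'})$, and orthogonality must map back to the $1/(d(d+1)^{2})$ off-diagonal value. I do not anticipate a real obstruction anywhere; the one point worth flagging is that \cref{simplexProp} is applied at the extreme $n=m+1$, so each simplex must be a \emph{full} regular simplex spanning its $(d-1)$-dimensional block, and the $(d+1)$ blocks must tile $\mathcal{L}_{\text{sa},0}(\mathcal{H}_{d})$ exactly. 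It is precisely the identity $(d+1)(d-1)=d^{2}-1$ that makes the mutually-unbiased structure fit, and verifying this tight dimension bookkeeping is the crux of the argument.
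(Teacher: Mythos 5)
Your construction is the paper's own argument: the paper likewise reduces existence to placing $(d+1)$ mutually orthogonal $(d,\eta)$-regular simplices (one per $(d-1)$-dimensional block of the orthogonal decomposition $d^{2}-1=(d+1)(d-1)$) inside the inball by taking $\eta\leq 1/(d-1)$, and then reads off the effects via Eq.~\eqref{mumBlochs}, the translation between Eq.~\eqref{mumCon} and the Bloch inner products being an exact affine bijection that runs in both directions. The only cosmetic remark is that for $d=2$ your appeal to \cref{simplexProp} sits at $m=d-1=1$, outside its stated hypothesis $m>1$, but a $(2,\eta)$-regular simplex in $\mathbb{R}^{1}$ is just an antipodal pair, so nothing is lost.
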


\begin{proposition}\textit{Every} \textsc{mum} \textit{is informationally complete.}
\end{proposition}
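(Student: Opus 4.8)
The plan is to establish informational completeness directly from its definition in \cref{icDef}, namely to show that $\mathrm{span}_{\mathbb{R}}\{E_{\alpha,b}\}=\mathcal{L}_{\text{sa}}(\mathcal{H}_{d})$. First I would invoke the Bloch-vector correspondence recorded in Eq.~\eqref{mumBlochs}, writing each effect as $E_{\alpha,b}=(B_{\alpha,b}+\mathds{1}_{d})/(d(d+1))$ with traceless $B_{\alpha,b}\in\mathcal{L}_{\text{sa},0}(\mathcal{H}_{d})$. This reduces the problem to showing that the Bloch vectors $\{B_{\alpha,b}\}$ span the null-trace subspace $\mathcal{L}_{\text{sa},0}(\mathcal{H}_{d})$, with the remaining identity direction supplied by the \textsc{povm} normalization.

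The core of the argument is a dimension count. For each fixed $b$, the set $\{B_{1,b},\dots,B_{d,b}\}$ is a $(d,\eta)$-regular simplex in $\mathcal{L}_{\text{sa},0}(\mathcal{H}_{d})$, so by the defining dimensionality property established in \cref{simplexProp} (and recorded in \cref{blochSimplices}) its real span has dimension exactly $d-1$. The inner-product table computed immediately before this proposition shows $\langle\!\langle B_{\alpha,b}|B_{\alpha',b'}\rangle\!\rangle_{\boldsymbol{\mathcal{B}}}=0$ whenever $b\neq b'$; hence the $(d+1)$ simplex spans are pairwise orthogonal subspaces of $\mathcal{L}_{\text{sa},0}(\mathcal{H}_{d})$. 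Orthogonal subspaces intersect only in $\{0\}$, so their dimensions add, and the total span has dimension $(d+1)(d-1)=d^{2}-1=\text{dim}_{\mathbb{R}}\mathcal{L}_{\text{sa},0}(\mathcal{H}_{d})$. Therefore $\mathrm{span}_{\mathbb{R}}\{B_{\alpha,b}\}=\mathcal{L}_{\text{sa},0}(\mathcal{H}_{d})$.

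Finally I would account for the one remaining dimension of the full ambient space. Because a \textsc{mum} is by \cref{mumDef} a \textsc{povm}, the effects sum to $\mathds{1}_{d}$, which lies outside the null-trace subspace and supplies the identity direction orthogonal to $\mathcal{L}_{\text{sa},0}(\mathcal{H}_{d})$. Combining this with the preceding paragraph yields $\mathrm{span}_{\mathbb{R}}\{E_{\alpha,b}\}=\mathcal{L}_{\text{sa}}(\mathcal{H}_{d})$, as required. The only step demanding genuine care is confirming that pairwise orthogonality of the simplex spans legitimately forces their dimensions to \emph{add} to $d^{2}-1$, rather than merely bounding the total from above; once that observation is secured the result is immediate, so the proof is essentially a bookkeeping exercise resting on the geometric facts already assembled for \cref{blochSimplices} together with the inner-product computation for the $B_{\alpha,b}$.
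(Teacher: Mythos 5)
Your proof is correct and is essentially the paper's own argument: the paper likewise notes that each of the $d+1$ mutually orthogonal $(d,\eta)$-regular simplices spans a $(d-1)$-dimensional subspace of the null-trace subspace, so together they span all $d^{2}-1$ dimensions, with the remaining identity direction supplied by the sum of the \textsc{mum} elements. Your extra care about orthogonality forcing the dimensions to add (rather than merely bound) is a valid point the paper leaves implicit, but it does not change the route.
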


\noindent  In \cite{Appleby2007}, Appleby constructs \textsc{sim}s with a nontrivial contraction parameter 
\begin{equation}
\kappa_{\text{Appleby}}=1/\sqrt{d+1}
\end{equation} 
for all odd dimensions; moreover, Appleby was the first to point out that \textsc{sim}s exist $\forall d\in\mathbb{N}$ in light of the foregoing Bloch-geometric picture. In \cite{Gour2014}, Gour and Kalev prove that \textsc{sim}s exist using a more complicated algebraic argument; moreover, Gour and Kalev construct \textsc{sim}s with nontrivial contraction parameters that, when $d$ is even, improve Appleby's constructions. 

\noindent In \cite{Kalev2014}, were \textsc{mum}s were first introduced, Kalev and Gour construct nontrivial \textsc{mum}s with\footnote{Kalev and Gour adopt a different convention for the \textsc{mum} contraction parameter; we have translated their result.} 
\begin{equation}
\eta_{\text{Kalev-Gour}}=\sqrt{2/d(d-1)}\text{.}
\end{equation}

\noindent In the following section, we consider a distinguished class of geometric substructures of quantum cones; a class wherein \textsc{sim}s and \textsc{mum}s arise as special cases. We thus provide an affirmative answer to the question posed by Dall'Arno~\cite{DallArno14}, whether \textsc{sim}s and \textsc{mum}s are particular instances of a more general class of objects.     

\newpage
\section{Conical 2-Designs}\label{desQC}
A projective $2$-design is a finite set of unit rank projectors $\{\pi_{1},\dots,\pi_{n}\}\subset\mathcal{Q}(\mathcal{H}_{d})$ such that the sum over the second tensor powers of the design elements commutes with the action of the product representation of the complex unitary group of degree $d$. It is natural to ask what can be said for a finite set of arbitrary elements of the quantum cone with this property. \cref{desCons} answers that question. It will be convenient to introduce some notation. Let $\{e_{1},\dots,e_{d}\}\subset\mathcal{H}_{d}$ be a fixed orthonormal basis, relative to which we denote and define \textit{transposition} and \textit{complex conjugation} on $\mathcal{L}(\mathcal{H}_{d})$, together with a chosen \textit{maximally entangled ket}, respectively, via
\begin{eqnarray}
\mathbf{T}:\mathcal{L}(\mathcal{H}_{d})\longrightarrow\mathcal{L}(\mathcal{H}_{d})::A\longmapsto A^{\mathrm{T}}\equiv\sum_{r=1}^{d}\sum_{s=1}^{d}|e_{r}\rangle\langle e_{s}|A e_{r}\rangle\langle e_{s}|\text{,}\\
\mathbf{C}:\mathcal{L}(\mathcal{H}_{d})\longrightarrow\mathcal{L}(\mathcal{H}_{d})::A\longmapsto
\overline{A}\equiv\sum_{r=1}^{d}\sum_{s=1}^{d}|e_{r}\rangle\overline{\langle e_{r}|A e_{s}\rangle}\langle e_{s}|\text{,}\\
\mathcal{H}_{d}\otimes\mathcal{H}_{d}\ni|\Phi_{+}\rangle=\frac{1}{\sqrt{d}}\sum_{r=1}^{d}|e_{r}\rangle\otimes|e_{r}\rangle\text{.}\label{maxKet}
\end{eqnarray}
\begin{theorem}\label{desCons}\textit{Let} $n\in\mathbb{N}$\textit{. Let} $\{A_{1},\dots,A_{n}\}\subset\mathcal{L}_{\text{sa}}(\mathcal{H}_{d})_{+}$\textit{. Then the following statements are equivalent.}
\begin{enumerate}[(i)]
\item\label{con1}$\forall U\in\mathrm{U}(\mathcal{H}_{d})$
\begin{equation}
\left[U\otimes U,\sum_{j=1}^{n}A_{j}\otimes A_{j}\right]=0\textit{.}
\label{con1Eq}
\end{equation}
\item\label{con2}$\exists k_{s}\geq k_{a}\geq 0$ \textit{such that}
\begin{equation}
\sum_{j=1}^{n}A_{j}\otimes A_{j}=\Pi_{\text{sym}}k_{s}+\Pi_{\text{asym}}k_{a}\textit{.}
\label{con2Eq}
\end{equation}
\item\label{con3}$\exists k_{+}\geq k_{-}\geq 0$ \textit{such that}
\begin{equation}
\sum_{j=1}^{n}A_{j}\otimes \overline{A_{j}}=\mathds{1}_{d}\otimes\mathds{1}_{d}k_{+}+|\Phi_{+}\rangle dk_{-}\langle\Phi_{+}|\textit{.}
\label{con3Eq}
\end{equation}
\item\label{con4}$\exists k_{+}\geq k_{-}\geq 0$ \textit{such that}
\begin{equation}
\sum_{j=1}^{n}|A_{j}\rangle\!\rangle\langle\!\langle\overline{A_{j}}|=|\mathds{1}_{d}\rangle\!\rangle k_{+}\langle\!\langle\mathds{1}_{d}|+\mathbf{T}k_{-}\textit{.}
\label{con4Eq}
\end{equation}
\item\label{con5}$\exists k_{+}\geq k_{-}\geq 0$ \textit{such that}
\begin{equation}
\sum_{j=1}^{n}|A_{j}\rangle\!\rangle\langle\!\langle A_{j}|=|\mathds{1}_{d}\rangle\!\rangle k_{+}\langle\!\langle\mathds{1}_{d}|+\mathbf{I}_{d}k_{-}\textit{.}
\label{con5Eq}
\end{equation}
\end{enumerate}
\textit{If these equivalent conditions are satisfied, then the quantities} $k_{+}$ \textit{and} $k_{-}$ \textit{appearing in conditions} (\ref{con3})-(\ref{con5}) \textit{take the same values, and are related to the quantities} $k_{s}$ \textit{and} $k_{a}$ \textit{in condition} (\ref{con2}) \textit{by} 
\begin{equation}
k_{\pm}=(k_{s}\pm k_{a})/2\textit{.}
\end{equation}
\textit{Furthermore,} $\text{span}_{\mathbb{R}}\{A_{1},\dots,A_{n}\}=\mathcal{L}_{\text{sa}}(\mathcal{H}_{d})$ \textit{if and only if} $k_{s}>k_{a}$, \textit{equivalently,} $k_{-}> 0$\textit{.}
\newpage
\begin{proof}
We will first establish that (\ref{con2})$\iff$(\ref{con3}). Observe that Eq.~\eqref{con2Eq} can be written as
\begin{equation}
\sum_{j=1}^{n}A_{j}\otimes A_{j}=\big(\mathds{1}_{d}\otimes\mathds{1}_{d}\big)\frac{k_{s}+k_{a}}{2}+\mathrm{W}\frac{k_{s}-k_{a}}{2}\text{,}
\label{for23}
\end{equation}
where $\mathrm{W}$ is the swap operator defined in Eq.~\eqref{swapOp}. Taking the partial transpose of Eq.~\eqref{for23} with respect to $\{e_{1},\dots,e_{d}\}$ over the second tensor factor in $\mathcal{L}(\mathcal{H}_{d})\otimes\mathcal{L}(\mathcal{H}_{d})$  we obtain
\begin{eqnarray}
\sum_{j=1}^{n}A_{j}\otimes \overline{A_{j}}&=&\big(\mathds{1}_{d}\otimes\mathds{1}_{d}^{\mathrm{T}}\big)\frac{k_{s}+k_{a}}{2}+\left(\sum_{r=1}^{d}\sum_{s=1}^{d}|e_{s}\rangle\langle e_{r}|\otimes (|e_{r}\rangle\langle e_{s}|)^{\mathrm{T}}\right)\frac{k_{s}-k_{a}}{2}\\
&=&\big(\mathds{1}_{d}\otimes\mathds{1}_{d}\big)\frac{k_{s}+k_{a}}{2}+\left(\sum_{s=1}^{d}|e_{s}\rangle\otimes |e_{s}\rangle\right)\frac{k_{s}-k_{a}}{2}\left(\sum_{r=1}^{d}\langle e_{r}|\otimes\langle e_{r}|\right)\\[0.3cm]
&=&\big(\mathds{1}_{d}\otimes\mathds{1}_{d}\big)\frac{k_{s}+k_{a}}{2}+|\Phi_{+}\rangle\!\rangle \frac{d(k_{s}-k_{a})}{2}\langle\!\langle \Phi_{+}|\text{.}
\end{eqnarray}
where we have used the fact that $A^{\mathrm{T}}=\overline{A}$ for every $A\in\mathcal{L}_{\text{sa}}(\mathcal{H}_{d})\supset\mathcal{L}_{\text{sa}}(\mathcal{H}_{d})_{+}$. With $k_{\pm}\equiv(k_{s}\pm k_{a})/2$ we see from Eq.~\eqref{maxKet} (\ref{con2})$\iff$(\ref{con3}). Next, recall the Choi-Jamio{\l}kowski \cref{cjThm}, and observe that the linear bijection defined therein can be written in terms of $\mathcal{L}(\mathcal{H}_{d})\ni |E_{r,s}\rangle\!\rangle\equiv |e_{r}\rangle\langle e_{s}|$ as
\begin{equation}
\boldsymbol{\mathcal{J}}:\mathsf{Lin}(d,d)\longrightarrow \mathcal{L}(\mathcal{H}_{d}\otimes\mathcal{H}_{d})::\Lambda\longmapsto\frac{1}{d}\sum_{r=1}^{d}\sum_{s=1}^{d}\Lambda\big(|E_{r,s}\rangle\!\rangle\big)\otimes |E_{r,s}\rangle\!\rangle\text{.}
\end{equation}
The inverse of $\boldsymbol{\mathcal{J}}$ is then defined via the linear extension of its action on pure tensors in $\mathcal{L}(\mathcal{H}_{d}\otimes\mathcal{H}_{d})$
\begin{equation}
\boldsymbol{\mathcal{J}}^{-1}:\mathcal{L}(\mathcal{H}_{d}\otimes\mathcal{H}_{d})\longrightarrow \mathsf{Lin}(d,d)::|A\rangle\!\rangle\otimes |B\rangle\!\rangle\longmapsto |A\rangle\!\rangle d\langle\!\langle B^{T}|\text{.}
\label{invJ}
\end{equation}
Incidentally, this is an example of where double Dirac notation is especially useful; indeed one has that
\begin{eqnarray}
\boldsymbol{\mathcal{J}}\Big(\boldsymbol{\mathcal{J}}^{-1}\big(|A\rangle\!\rangle\otimes |B\rangle\!\rangle\big)\Big)&=&\boldsymbol{\mathcal{J}}\Big(|A\rangle\!\rangle d\langle\!\langle B^{\mathrm{T}}|\Big)\\
&=&\sum_{r=1}^{d}\sum_{s=1}^{d}|A\rangle\!\rangle\langle\!\langle B^{\mathrm{T}}|E_{r,s}\rangle\!\rangle\otimes|E_{r,s}\rangle\!\rangle\\
&=&\sum_{r=1}^{d}\sum_{s=1}^{d}|A\rangle\!\rangle\otimes|E_{r,s}\rangle\!\rangle\langle\!\langle B^{\mathrm{T}}|E_{r,s}\rangle\!\rangle\\
&=&|A\rangle\!\rangle\otimes \left(\sum_{r=1}^{d}\sum_{s=1}^{d}|e_{r}\rangle\mathrm{Tr}\Big(B^{T}|e_{r}\rangle\langle e_{s}|\Big)\langle e_{s}|\right)\\
&=&|A\rangle\!\rangle\otimes \left(\sum_{r=1}^{d}\sum_{s=1}^{d}|e_{r}\rangle\langle e_{r}|B e_{s}\rangle\langle e_{s}|\right)\\[0.3cm]
&=&|A\rangle\!\rangle\otimes |B\rangle\!\rangle\text{.}
\end{eqnarray}
From Eq.~\eqref{swapOp} and Eq.~\eqref{invJ} we have
\begin{eqnarray}
\boldsymbol{\mathcal{J}}^{-1}\big(\mathrm{W}\big)&=&\sum_{r=1}^{d}\sum_{s=1}^{d}|E_{s,r}\rangle\!\rangle d\langle\!\langle E_{s,r}|\\
\implies \forall X\in\mathcal{L}(\mathcal{H}_{d})\hspace{0.5cm}\Big(\boldsymbol{\mathcal{J}}^{-1}\big(\mathrm{W}\big)\Big)(X)&=&\sum_{r=1}^{d}\sum_{s=1}^{d}|E_{s,r}\rangle\!\rangle d\langle\!\langle E_{s,r}|X\rangle\!\rangle\nonumber\\
&=&\sum_{r=1}^{d}\sum_{s=1}^{d}|E_{s,r}\rangle\!\rangle d\mathrm{Tr}\Big(|e_{s}\rangle\langle e_{r}|X\Big)\nonumber\\
&=&\sum_{r=1}^{d}\sum_{s=1}^{d}|e_{s}\rangle\langle e_{r}|Xe_{s}\rangle d\langle e_{r}|\nonumber\\[0.2cm]
&=&X^{\mathrm{T}}d\text{.}
\label{wJ}
\end{eqnarray}
From Eq.~\eqref{maxKet} and Eq.~\eqref{invJ} we have
\begin{eqnarray}\boldsymbol{\mathcal{J}}^{-1}\big(|\Phi_{+}\rangle\langle\Phi_{+}|\big)&=&\sum_{r=1}^{d}\sum_{s=1}^{d}|E_{s,r}\rangle\!\rangle d\langle\!\langle E_{r,s}|\\
\implies \forall X\in\mathcal{L}(\mathcal{H}_{d})\hspace{0.5cm}\Big(\boldsymbol{\mathcal{J}}^{-1}\big(|\Phi_{+}\rangle\langle\Phi_{+}|\big)\Big)(X)&=&\sum_{r=1}^{d}\sum_{s=1}^{d}|E_{s,r}\rangle\!\rangle \langle\!\langle E_{r,s}|X\rangle\!\rangle\nonumber\\
&=&\sum_{r=1}^{d}\sum_{s=1}^{d}|E_{s,r}\rangle\!\rangle \mathrm{Tr}\Big(|e_{r}\rangle\langle e_{s}|X\Big)\nonumber\\
&=&\sum_{r=1}^{d}\sum_{s=1}^{d}|e_{s}\rangle\langle e_{s}|Xe_{r}\rangle \langle e_{r}|\nonumber\\[0.2cm]
&=&X\text{.}
\label{phiJ}
\end{eqnarray}
Therefore, Eq.~\eqref{invJ}, Eq.~\eqref{wJ}, and Eq.~\eqref{phiJ} respectively yield
\begin{eqnarray}
\boldsymbol{\mathcal{J}}^{-1}\big(\mathds{1}_{d}\otimes\mathds{1}_{d}\big)&=&|\mathds{1}_{d}\rangle\!\rangle d\langle\!\langle\mathds{1}_{d}|\text{,}\\
\boldsymbol{\mathcal{J}}^{-1}\big(\mathrm{W}\big)&=&\mathbf{T}d\text{,}\\
\boldsymbol{\mathcal{J}}^{-1}\big(|\Phi_{+}\rangle\langle\Phi_{+}|\big)&=&\mathbf{I}_{d}\text{.}
\end{eqnarray}
Consequently, applying $\boldsymbol{\mathcal{J}}^{-1}$ to both sides of Eq.~\eqref{con3Eq} yields Eq.~\eqref{con5Eq}, and applying $\boldsymbol{\mathcal{J}}^{-1}$ to both sides of Eq.~\eqref{for23} yields Eq.~\eqref{con4Eq}. The implication (\ref{con2})$\implies$(\ref{con1}) is immediate from \cref{irrepLem}, and the implication (\ref{con2})$\implies$(\ref{con1}) follows from Schur's \cref{schurLem}; hence, (\ref{con1})$\iff$(\ref{con2})$\iff$(\ref{con3})$\iff$(\ref{con4})$\iff$(\ref{con5}). 

\noindent To see that $k_{s}\geq k_{a}\geq 0$, let $|\Psi\rangle$ be an arbitrary normalized element of the antisymmetric subspace of $\mathcal{H}_{d}\otimes\mathcal{H}_{d}$ and observe that $\{A_{1},\dots,A_{n}\}\in\mathcal{L}_{\text{sa}}(\mathcal{H}_{d})_{+}$ implies that
\begin{eqnarray}
k_{a}=\sum_{j=1}^{n}\langle\Psi|A_{j}\otimes A_{j}\Psi\rangle\geq 0\text{;}
\end{eqnarray}
moreover, partially transposing and applying $\boldsymbol{\mathcal{J}}^{-1}$ to Eq.~\eqref{con2Eq} yields Eq.~\eqref{con5Eq} with $k_{\pm}=(k_{s}\pm k_{a})/2$. 
\newpage
\noindent Let $B\in\mathcal{L}_{\text{sa},0}(\mathcal{H}_{d})$, so $\langle\!\langle\mathds{1}_{d}|B\rangle\!\rangle=\mathrm{Tr}B=0$, be normalized, \textit{i.e}.\ $\|B\|=\sqrt{\langle\!\langle B|B\rangle\!\rangle}=1$. Then from Eq.~\eqref{con5Eq}
\begin{eqnarray}
\sum_{j=1}^{n}|A_{j}\rangle\!\rangle\langle\!\langle A_{j}|B\rangle\!\rangle\langle\!\langle B|&=&|\mathds{1}_{d}\rangle\!\rangle k_{+}\langle\!\langle\mathds{1}_{d}|B\rangle\!\rangle\langle\!\langle B|+\mathbf{I}_{d}\big(|B\rangle\!\rangle \langle\!\langle B|\big)\label{forFirst}\\
\implies\hspace{0.4cm}0\leq \sum_{j=1}^{n}|\langle\!\langle A_{j}|B\rangle\!\rangle|^{2}&=&k_{-}\label{forNull}
\end{eqnarray}
where the implication follows from tracing Eq.~\eqref{forFirst}; hence, $k_{s}\geq k_{a}$. 

\noindent Lastly, we recall the fact that $\mathrm{span}_{\mathbb{R}}\{A_{1},\dots,A_{n}\}=\mathcal{L}_{\text{sa}}(\mathcal{H}_{d})$ if and only if $\forall A\in\mathcal{L}_{\text{sa}}(\mathcal{H}_{d})$ such that $A\neq 0$ one has
\begin{equation}
\sum_{j=1}^{n}\langle\!\langle A|A_{j}\rangle\!\rangle\langle\!\langle A_{j}|A\rangle\!\rangle>0\text{.}
\label{posDef}
\end{equation}
That Eq.~\eqref{posDef} holds under the assumption that $\mathrm{span}_{\mathbb{R}}\{A_{1},\dots,A_{n}\}=\mathcal{L}_{\text{sa}}(\mathcal{H}_{d})$ is immediate. Conversely, Eq.~\eqref{posDef} implies that the orthogonal complement to  $\mathrm{span}_{\mathbb{R}}\{A_{1},\dots,A_{n}\}$ is null. If $k_{-}=0$, then any $B\in\mathcal{L}_{\text{sa},0}(\mathcal{H}_{d})$ is in the null space by Eq.~\eqref{forNull}. Therefore, if $k_{s}=k_{a}$, then $\{A_{1},\dots,A_{n}\}$ is not a spanning set. If $k_{s}>k_{a}$, then Eq.~\eqref{posDef} holds in light of Eq.~\eqref{con5Eq} from the observation that $A\neq 0\implies\sum_{j=1}^{n}|\langle\!\langle A_{j}|A\rangle\!\rangle|^{2}\geq k_{-}\|A\|^{2}>0$, which follows from taking $\langle\!\langle A|\cdot|A\rangle\!\rangle$ on Eq.~\eqref{con5Eq}.
\end{proof}
\end{theorem}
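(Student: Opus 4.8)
The plan is to establish the web of equivalences by first connecting (i) and (ii) through the representation theory already assembled, then moving among (ii)--(v) by purely linear-algebraic transport, and finally extracting the inequalities on the constants and the spanning criterion from positivity.

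First I would prove (i) $\iff$ (ii). The operator $M = \sum_{j} A_{j} \otimes A_{j}$ is self-adjoint and commutes with the swap $\mathrm{W}$, since $\mathrm{W}(A_{j} \otimes A_{j})\mathrm{W} = A_{j} \otimes A_{j}$; hence $\mathcal{H}_{\text{sym}}$ and $\mathcal{H}_{\text{asym}}$, the eigenspaces of $\mathrm{W}$, are automatically invariant under $M$ regardless of (i). Granting (i), the Corollary to Schur's \cref{schurLem} (which rests on the irreducibility asserted in \cref{irrepLem}) then forces $M = \Pi_{\text{sym}} k_{s} + \Pi_{\text{asym}} k_{a}$, and self-adjointness makes the scalars real; this is (ii) modulo the ordering and positivity handled below. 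The converse (ii) $\implies$ (i) is immediate, since $\Pi_{\text{sym}}$ and $\Pi_{\text{asym}}$ commute with every $U \otimes U$.

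Next I would prove (ii) $\iff$ (iii) by partial transposition. Rewriting (ii) as $M = \mathds{1}_{d}\otimes\mathds{1}_{d}\,(k_{s}+k_{a})/2 + \mathrm{W}\,(k_{s}-k_{a})/2$ and transposing the second tensor factor, I would invoke the two identities $A^{\mathrm{T}} = \overline{A}$ for self-adjoint $A$ and $\mathrm{W}^{\mathrm{T}_{2}} = d\,|\Phi_{+}\rangle\langle\Phi_{+}|$; setting $k_{\pm} = (k_{s}\pm k_{a})/2$ yields (iii), and the operation is reversible. To pass from (iii) to (v) and from (ii) to (iv) I would transport everything through the inverse of the Choi-Jamio{\l}kowski map of \cref{cjThm}, for which the governing computations are $\boldsymbol{\mathcal{J}}^{-1}(\mathds{1}_{d}\otimes\mathds{1}_{d}) = |\mathds{1}_{d}\rangle\!\rangle d\langle\!\langle\mathds{1}_{d}|$, $\boldsymbol{\mathcal{J}}^{-1}(\mathrm{W}) = \mathbf{T}d$, and $\boldsymbol{\mathcal{J}}^{-1}(|\Phi_{+}\rangle\langle\Phi_{+}|) = \mathbf{I}_{d}$, together with $\boldsymbol{\mathcal{J}}^{-1}(A_{j}\otimes\overline{A_{j}}) \propto |A_{j}\rangle\!\rangle\langle\!\langle A_{j}|$; these close the chain (iii) $\iff$ (iv) $\iff$ (v).

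Finally I would read off the inequalities and the span condition from positivity. Testing (ii) against a normalized antisymmetric $|\Psi\rangle$ gives $k_{a} = \sum_{j}\langle\Psi|A_{j}\otimes A_{j}|\Psi\rangle \geq 0$, since each $A_{j}\otimes A_{j}$ is positive; sandwiching (v) between a normalized traceless $B$ yields $\sum_{j}|\langle\!\langle A_{j}|B\rangle\!\rangle|^{2} = k_{-} \geq 0$, i.e.\ $k_{s}\geq k_{a}$. For the spanning claim I would use that $\text{span}_{\mathbb{R}}\{A_{j}\} = \mathcal{L}_{\text{sa}}(\mathcal{H}_{d})$ exactly when no nonzero $A$ is orthogonal to every $A_{j}$; the quadratic form of (v) gives $\sum_{j}|\langle\!\langle A_{j}|A\rangle\!\rangle|^{2} = k_{+}|\langle\!\langle\mathds{1}_{d}|A\rangle\!\rangle|^{2} + k_{-}\|A\|^{2}$, which is strictly positive for all $A\neq 0$ precisely when $k_{-}>0$, while $k_{-}=0$ leaves every traceless element orthogonal to the $A_{j}$. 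The main obstacle I anticipate is purely bookkeeping: keeping the factors of $d$ and the transpose/conjugate identities consistent across the partial-transpose step and the Choi-Jamio{\l}kowski transport, where a single misplaced $d$ would corrupt the identification $k_{\pm} = (k_{s}\pm k_{a})/2$.
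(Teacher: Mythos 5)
Your proposal is correct and follows essentially the same route as the paper: Schur's lemma together with the irreducibility of the product representation for (i)$\iff$(ii), partial transposition for (ii)$\iff$(iii), transport through the Choi--Jamio{\l}kowski inverse (with exactly the images $\boldsymbol{\mathcal{J}}^{-1}(\mathds{1}_{d}\otimes\mathds{1}_{d})=|\mathds{1}_{d}\rangle\!\rangle d\langle\!\langle\mathds{1}_{d}|$, $\boldsymbol{\mathcal{J}}^{-1}(\mathrm{W})=\mathbf{T}d$, $\boldsymbol{\mathcal{J}}^{-1}(|\Phi_{+}\rangle\langle\Phi_{+}|)=\mathbf{I}_{d}$) for (iv) and (v), and the same two positivity tests plus the quadratic form $k_{+}|\mathrm{Tr}A|^{2}+k_{-}\|A\|^{2}$ for the constants and the spanning criterion. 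The bookkeeping concern you flag is real but resolves cleanly, since the uniform factor of $d$ produced by $\boldsymbol{\mathcal{J}}^{-1}$ cancels from both sides.
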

\noindent \cref{desCons} prompts the following definition.
\begin{definition}\label{c2dDef}\textit{A} conical 2-design \textit{is a finite set} $\{A_{1},\dots,A_{n}\}\subset\mathcal{L}_{\text{sa}}(\mathcal{H}_{d})_{+}$ \textit{satisfying the five equivalent conditions in \cref{desCons} with }$k_{s}>k_{a}$ \textit{and such that} $\forall j\in\{1,\dots,n\}$ $A_{j}\neq 0$.
\end{definition}
\noindent The requirement that $A_{j}$ be nonzero is not essential, and is made for convenience only. We demand $k_{s}>k_{a}$ so that $\mathrm{span}_{\mathbb{R}}\{A_{1},\dots,A_{n}\}=\mathcal{L}_{\text{sa}}(\mathcal{H}_{d})$; hence for any conical 2-design $n\geq d^{2}$. Any $L\in\mathcal{L}_{\text{sa}}(\mathcal{H}_{d})$ is therefore a linear combination of the $A_{j}$, explicitly
\begin{equation}
L=\frac{1}{k_{-}}\sum_{j=1}^{n}\left(\mathrm{Tr}\big(A_{j}L\big)-\frac{k_{+}\mathrm{Tr}\big(A_{j}\big)\mathrm{Tr}\big(L\big)}{dk_{+}+k_{-}}\right)A_{j}\text{.}
\label{cdExp}
\end{equation}
For a proof of Eq.~\eqref{cdExp}, we simply note that the action of Eq.~\eqref{con5Eq} on $\mathds{1}_{d}$ and $L$ respectively implies
\begin{eqnarray}
\sum_{j=1}^{n}|A_{j}\rangle\!\rangle\mathrm{Tr}\big(A_{j}\big)&=&|\mathds{1}_{d}\rangle\!\rangle dk_{+}+|\mathds{1}_{d}\rangle\!\rangle k_{-}\text{,}\\
\sum_{j=1}^{n}|A_{j}\rangle\!\rangle\mathrm{Tr}\big(A_{j}L\big)&=&|\mathds{1}_{d}\rangle\!\rangle\mathrm{Tr}\big(L\big)k_{+}+|L\rangle\!\rangle k_{-}\text{.}
\end{eqnarray}
Of course, the expansion in Eq.~\eqref{cdExp} is unique if $n=d^{2}$; otherwise not.

\noindent The requirement that a conical 2-design be a \textit{finite} subset of a quantum cone is somewhat restrictive; however, in this thesis, we shall only consider such designs. Recently, Brandsen-Dall'Arno-Szymusiak \cite{Brandsen2016} have explored a definition of conical designs that lifts the restriction of finite cardinality imposed by the author and Appleby in \cite{Graydon2015a}. Incidentally, the Brandsen-Dall'Arno-Szymusiak definition restricts, in the finite case, to a subclass of the \textit{homogeneous} conical 2-designs introduced in \cite{Graydon2015a} and discussed at length in \cref{hc2d}.
\newpage
\noindent Obviously, any finite complex projective 2-design (not necessarily uniformly weighted) is a conical 2-design. In fact, we have the following proposition. Note that we denote the rank of $A\in\mathcal{L}(\mathcal{H}_{d})$ via $\text{rank}A$.

\begin{proposition}\label{rank1Lem}\textit{Let} $d\in\mathbb{N}$. \textit{Let} $\{A_{1},\dots,A_{n}\}\subset\mathcal{L}_{\text{sa}}(\mathcal{H}_{d})_{+}$ \textit{be a conical 2-design. Then} $\forall j\in\{1,\dots,n\}$ $\text{rank}A_{j}=1$ \textit{if and only if} $k_{a}=0$\textit{.}
\begin{proof}
From the inclusion $\mathcal{L}_{\text{sa}}(\mathcal{H}_{d})_{+}\subset\mathcal{L}_{\text{sa}}(\mathcal{H}_{d})$ one has for any $A_{j}\in\mathcal{L}_{\text{sa}}(\mathcal{H}_{d})_{+}$ that $\mathrm{Tr}\big(A_{j}^{2}\big)=\big(\mathrm{Tr}A_{j}\big)^{2}$ if and only if $\mathrm{rank}A_{j}=1$, in which case both quantities are the square of the sole nonzero eigenvalue of $A_{j}$. Tracing Eq.~\eqref{con2Eq} and Eq.~\eqref{con5Eq} we have respectively that
\begin{eqnarray}
\sum_{j=1}^{n}\big(\mathrm{Tr}A_{j}\big)^{2}&=&k_{s}d(d+1)/2+k_{a}d(d-1)/2\text{,}\\
\sum_{j=1}^{n}\mathrm{Tr}\big(A_{j}^{2}\big)&=&(k_{s}+k_{a})d/2+(k_{s}-k_{a})d^{2}/2\text{,}\\
\implies\sum_{j=1}^{n}\Big(\big(\mathrm{Tr}A_{j}\big)^{2}-\mathrm{Tr}\big(A_{j}^{2}\big)\Big)\hspace{0.5cm}&=&d(d-1)k_{a}\text{,}
\end{eqnarray}
So $k_{a}=0$ if and only if $\forall j\in\{1,\dots,n\}$ $\text{rank}A_{j}=1$.
\end{proof}
\end{proposition}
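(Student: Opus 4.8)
The plan is to reduce the rank condition to a single scalar identity relating $k_a$ to the trace discrepancies $(\mathrm{Tr}A_j)^2 - \mathrm{Tr}(A_j^2)$. The guiding observation is elementary and is where positivity does all the work: for any $A_j\in\mathcal{L}_{\text{sa}}(\mathcal{H}_d)_+$ with eigenvalues $\lambda_1,\dots,\lambda_d\geq 0$, one has $(\mathrm{Tr}A_j)^2 = \sum_k\lambda_k^2 + \sum_{k\neq\ell}\lambda_k\lambda_\ell = \mathrm{Tr}(A_j^2) + \sum_{k\neq\ell}\lambda_k\lambda_\ell$, so that $(\mathrm{Tr}A_j)^2\geq\mathrm{Tr}(A_j^2)$, with equality precisely when at most one eigenvalue is nonzero, i.e.\ when $\text{rank}A_j\leq 1$. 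Since \cref{c2dDef} stipulates $A_j\neq 0$, equality for a given $j$ is equivalent to $\text{rank}A_j = 1$.

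Next I would extract the two relevant sums from the equivalent characterisations in \cref{desCons}. Taking the ambient trace of condition (\ref{con2}), and using $\mathrm{Tr}(A_j\otimes A_j) = (\mathrm{Tr}A_j)^2$ together with $\mathrm{Tr}\,\Pi_{\text{sym}} = d(d+1)/2$ and $\mathrm{Tr}\,\Pi_{\text{asym}} = d(d-1)/2$, gives $\sum_{j}(\mathrm{Tr}A_j)^2 = k_s d(d+1)/2 + k_a d(d-1)/2$. Likewise, tracing condition (\ref{con5}) as an operator identity on $\mathcal{L}(\mathcal{H}_d)$, and using $\mathrm{Tr}\big(|\mathds{1}_d\rangle\!\rangle\langle\!\langle\mathds{1}_d|\big) = \langle\!\langle\mathds{1}_d|\mathds{1}_d\rangle\!\rangle = d$ and $\mathrm{Tr}\,\mathbf{I}_d = d^2$, gives $\sum_{j}\mathrm{Tr}(A_j^2) = k_+ d + k_- d^2$, which upon substituting the relation $k_{\pm} = (k_s\pm k_a)/2$ from \cref{desCons} becomes $(k_s+k_a)d/2 + (k_s-k_a)d^2/2$.

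Subtracting the two expressions collapses the entire $k_s$-dependence and yields $\sum_{j}\big((\mathrm{Tr}A_j)^2 - \mathrm{Tr}(A_j^2)\big) = d(d-1)k_a$. With the pointwise inequality in hand the conclusion is immediate: for $d\geq 2$ the left-hand side is a sum of nonnegative terms, hence it vanishes if and only if each summand vanishes, i.e.\ if and only if every $A_j$ has rank one; and by the displayed identity this occurs exactly when $k_a = 0$. The only genuinely delicate point is the bookkeeping in the second trace computation---correctly reading $|\mathds{1}_d\rangle\!\rangle\langle\!\langle\mathds{1}_d|$ and the superoperator $\mathbf{I}_d$ as operators on the $d^2$-dimensional space $\mathcal{L}(\mathcal{H}_d)$ and recording their traces as $d$ and $d^2$ respectively---rather than any conceptual obstacle. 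The positivity of the eigenvalue cross-terms is the substantive ingredient, and everything else is linear algebra read off from \cref{desCons}.
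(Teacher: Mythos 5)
Your proof is correct and follows essentially the same route as the paper's: trace the characterizations (ii) and (v) of \cref{desCons} to obtain the two sums, subtract to get $\sum_{j}\big((\mathrm{Tr}A_{j})^{2}-\mathrm{Tr}(A_{j}^{2})\big)=d(d-1)k_{a}$, and invoke the pointwise equality criterion for rank one on positive operators. Your explicit handling of the rank-zero case via the $A_{j}\neq 0$ clause of \cref{c2dDef} is a small improvement in care over the paper's phrasing, but the argument is the same.
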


\noindent Recalling \cref{cpdDef}, we see from \cref{rank1Lem} that a conical 2-design is a finite complex projective 2-design (up to an appropriate scaling) if and only if $k_{a}=0$, equivalently if and only if $k_{+}=k_{-}$. As we will now show, \textsc{sim}s and \textsc{mum}s are also examples of conical 2-designs.

\begin{proposition}\label{simProp}\textit{Let} $\mathfrak{e}\equiv\{E_{1},\dots,E_{d^{2}}\}\subset\mathcal{E}(\mathcal{H}_{d})\subset\mathcal{L}_{\text{sa}}(\mathcal{H}_{d})_{+}$ \textit{be a} \textsc{sim} \textit{as in \cref{simDef}. Then} $\mathfrak{e}$ \textit{is a conical 2-design.}
\begin{proof}
Define $\mathrm{M}=\sum_{\alpha=1}^{d^{2}}|E_{\alpha}\rangle\!\rangle\langle\!\langle E_{\alpha}|$. Then $\forall \beta\in\{1,\dots,d^{2}\}$ it follows from Eq.~\eqref{simCon} that
\begin{eqnarray}
\mathrm{M}|E_{\beta}\rangle\!\rangle&=&\sum_{\alpha=1}^{d^{2}}|E_{\alpha}\rangle\!\rangle\langle\!\langle E_{\alpha}|E_{\beta}\rangle\!\rangle\nonumber\\
&=&\sum_{\alpha=1}^{d^{2}}|E_{\alpha}\rangle\!\rangle\nonumber\left(\frac{d+1-\kappa^{2}+\delta_{\alpha,\beta}\kappa^{2}d^{2}}{d^{3}(d+1)}\right)\\
&=&|\mathds{1}_{d}\rangle\!\rangle\frac{d+1-\kappa^{2}}{d^{2}(d+1)}+|E_{\beta}\rangle\!\rangle\frac{\kappa^{2}}{d(d+1)}\text{,}
\label{Mop}
\end{eqnarray}
where the second equality follows from the fact that $\sum_{\alpha=1}^{n}E_{\alpha}=\mathds{1}_{d}$. Recall that a \textsc{sim} is informationally complete, \textit{i.e}.\ $\mathrm{span}_{\mathbb{R}}\{E_{1},\dots,E_{d^{2}}\}=\mathcal{L}_{\text{sa}}(\mathcal{H}_{d})$. The action of our $\mathrm{M}\in\mathcal{L}\big(\mathcal{L}(\mathcal{H}_{d})\big)$ is thus completely determined by its action on $\{E_{1},\dots,E_{d^{2}}\}$. Therefore from Eq.~\eqref{Mop} we deduce
\begin{equation}
\sum_{\alpha=1}^{d^{2}}|E_{\alpha}\rangle\!\rangle\langle\!\langle E_{\alpha}|=|\mathds{1}_{d}\rangle\!\rangle\frac{d+1-\kappa^{2}}{d(d+1)}\langle\!\langle\mathds{1}_{d}|+\mathbf{I}_{d}\frac{\kappa^{2}}{d(d+1)}\text{.}
\end{equation}
Recalling $\kappa\in(0,1]$ we complete the proof in light of \cref{simDef} and \cref{desCons}, in particular from Eq.~\eqref{con5Eq}.
\end{proof}
\end{proposition}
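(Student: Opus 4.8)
The plan is to verify condition (\ref{con5}) of \cref{desCons} directly, since the defining relation \eqref{simCon} of a \textsc{sim} is phrased in terms of the Hilbert-Schmidt inner products $\mathrm{Tr}(E_{\alpha}E_{\beta})=\langle\!\langle E_{\alpha}|E_{\beta}\rangle\!\rangle$, which are precisely the matrix elements of the superoperator $\mathrm{M}\equiv\sum_{\alpha=1}^{d^{2}}|E_{\alpha}\rangle\!\rangle\langle\!\langle E_{\alpha}|$. First I would record the structural facts that a \textsc{sim} is a \textsc{povm}, so that $\sum_{\alpha}|E_{\alpha}\rangle\!\rangle=|\mathds{1}_{d}\rangle\!\rangle$, and that it is informationally complete, so $\mathrm{span}_{\mathbb{R}}\{E_{\alpha}\}=\mathcal{L}_{\text{sa}}(\mathcal{H}_{d})$. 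I would also note $\mathrm{Tr}(E_{\alpha})=1/d$, obtained by summing \eqref{simCon} over $\beta$ and invoking the \textsc{povm} property.

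Next I would compute the action of $\mathrm{M}$ on each $|E_{\beta}\rangle\!\rangle$. Substituting \eqref{simCon} and separating the Kronecker-delta term expresses $\mathrm{M}|E_{\beta}\rangle\!\rangle$ as a linear combination of $\sum_{\alpha}|E_{\alpha}\rangle\!\rangle=|\mathds{1}_{d}\rangle\!\rangle$ and $|E_{\beta}\rangle\!\rangle$, with coefficients assembled from $\kappa$ and $d$. Since $\{E_{\beta}\}$ spans the ambient space and a linear map is determined by its values on a spanning set, I can identify $\mathrm{M}$ by comparing this against the candidate right-hand side $|\mathds{1}_{d}\rangle\!\rangle k_{+}\langle\!\langle\mathds{1}_{d}|+\mathbf{I}_{d}k_{-}$ applied to $|E_{\beta}\rangle\!\rangle$, namely $|\mathds{1}_{d}\rangle\!\rangle k_{+}\mathrm{Tr}(E_{\beta})+k_{-}|E_{\beta}\rangle\!\rangle=|\mathds{1}_{d}\rangle\!\rangle k_{+}/d+k_{-}|E_{\beta}\rangle\!\rangle$, and reading off $k_{+}$ and $k_{-}$ by matching the two coefficients. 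This establishes \eqref{con5Eq}, so the \textsc{sim} satisfies the equivalent conditions of \cref{desCons}.

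Finally I would confirm the two clauses of \cref{c2dDef}. The recovered $k_{-}$ is a positive multiple of $\kappa^{2}$, so $\kappa\in(0,1]$ forces $k_{-}>0$, equivalently $k_{s}>k_{a}$, which supplies the required spanning condition; and $\mathrm{Tr}(E_{\alpha})=1/d\neq 0$ guarantees every $E_{\alpha}\neq 0$. Together these identify the \textsc{sim} as a genuine conical 2-design.

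The step demanding the most care is the identification of the full superoperator $\mathrm{M}$ from its action on the overcomplete, non-orthogonal spanning family $\{E_{\beta}\}$: one must pair the computed coefficient of $|\mathds{1}_{d}\rangle\!\rangle$ against the factor $\langle\!\langle\mathds{1}_{d}|E_{\beta}\rangle\!\rangle=\mathrm{Tr}(E_{\beta})=1/d$, rather than tacitly treating it as unity. This trace factor is exactly what governs the powers of $d$ appearing in $k_{+}$, so it is the point where the bookkeeping is delicate; the remainder is a routine linear-algebra computation.
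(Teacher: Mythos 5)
Your proposal is correct and follows essentially the same route as the paper: both define the frame superoperator $\mathrm{M}=\sum_{\alpha}|E_{\alpha}\rangle\!\rangle\langle\!\langle E_{\alpha}|$, evaluate it on the spanning set $\{E_{\beta}\}$ via Eq.~\eqref{simCon}, match against $|\mathds{1}_{d}\rangle\!\rangle k_{+}\langle\!\langle\mathds{1}_{d}|+\mathbf{I}_{d}k_{-}$ using $\mathrm{Tr}(E_{\beta})=1/d$, and conclude by condition (\ref{con5}) of \cref{desCons} with $k_{-}>0$ from $\kappa\in(0,1]$. Your explicit attention to the factor $\langle\!\langle\mathds{1}_{d}|E_{\beta}\rangle\!\rangle=1/d$ is exactly the right bookkeeping step.
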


\begin{proposition}\label{mumProp}\textit{Let} $\mathfrak{f}\equiv\big\{E_{\alpha,b}\;\boldsymbol{|}\;\alpha\in\{1,\dots,d\}\wedge b\in\{1,\dots,d+1\}\big\}\subset\mathcal{E}(\mathcal{H}_{d})\subset\mathcal{L}_{\text{sa}}(\mathcal{H}_{d})_{+}$ \textit{be a} \textsc{mum} \textit{as in \cref{mumDef}. Then} $\mathfrak{f}$ \textit{is a conical 2-design.}
\begin{proof}
Define $\mathrm{N}=\sum_{\alpha=1}^{d}\sum_{b=1}^{d+1}|E_{\alpha,b}\rangle\!\rangle\langle\!\langle E_{\alpha,b}|$. Then $\forall \alpha'\in\{1,\dots,d\}$ and $\forall b'\in\{1,\dots,d+1\}$ it follows from Eq.~\eqref{mumCon} that
\begin{eqnarray}
\mathrm{N}|E_{\alpha',b'}\rangle\!\rangle&=&\sum_{\alpha=1}^{d}\sum_{b=1}^{d+1}|E_{\alpha,b}\rangle\!\rangle\langle\!\langle E_{\alpha,b}|E_{\alpha',b'}\rangle\!\rangle\nonumber\\
&=&\sum_{\alpha=1}^{d}|E_{\alpha,b'}\rangle\!\rangle\left(\frac{d\eta^{2}\delta_{\alpha,\alpha'}+1-\eta^{2}}{d(d+1)^{2}}\right)+\sum_{\alpha=1}^{d}\sum_{b\neq b'}|E_{\alpha,b}\rangle\!\rangle\left(\frac{1}{d(d+1)^{2}}\right)\nonumber\\
&=&|\mathds{1}_{d}\rangle\!\rangle\frac{d+1-\eta^{2}}{d(d+1)^{3}}+|E_{\alpha',b'}\rangle\!\rangle\frac{\eta^{2}}{(d+1)^{2}}
\label{Nop}
\end{eqnarray}
where the second equality follows from the fact that $\sum_{\alpha=1}^{n}E_{\alpha,b}(d+1)=\mathds{1}_{d}$. Recall that a \textsc{mum} is informationally complete, \textit{i.e}.\ $\mathrm{span}_{\mathbb{R}}\mathfrak{f}=\mathcal{L}_{\text{sa}}(\mathcal{H}_{d})$. The action of our $\mathrm{N}\in\mathcal{L}\big(\mathcal{L}(\mathcal{H}_{d})\big)$ is thus completely determined by its action on $\{E_{\alpha,b}\}$. Therefore from Eq.~\eqref{Nop} we deduce
\begin{equation}
\sum_{\alpha=1}^{d}\sum_{b=1}^{d+1}|E_{\alpha}\rangle\!\rangle\langle\!\langle E_{\alpha}|=|\mathds{1}_{d}\rangle\!\rangle\frac{d+1-\eta^{2}}{d(d+1)^{2}}\langle\!\langle\mathds{1}_{d}|+\mathbf{I}_{d}\frac{\eta^{2}}{(d+1)^{2}}\text{.}
\end{equation}
Recalling $\eta\in(0,1]$ we complete the proof in light \cref{mumDef} of \cref{desCons}, in particular from Eq.~\eqref{con5Eq}.
\end{proof}
\end{proposition}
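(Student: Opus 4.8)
The plan is to mirror the frame-superoperator argument just used for \cref{simProp}, adapting it to the two-index structure of a \textsc{mum}. First I would define the superoperator
\begin{equation}
\mathrm{N}=\sum_{\alpha=1}^{d}\sum_{b=1}^{d+1}|E_{\alpha,b}\rangle\!\rangle\langle\!\langle E_{\alpha,b}|\in\mathcal{L}\big(\mathcal{L}(\mathcal{H}_{d})\big)
\end{equation}
and compute its action on an arbitrary effect $|E_{\alpha',b'}\rangle\!\rangle$. The key move is to split the double sum according to whether $b=b'$ or $b\neq b'$, applying the two cases of Eq.~\eqref{mumCon} respectively, and to use $\sum_{\alpha}|E_{\alpha,b}\rangle\!\rangle=|\mathds{1}_{d}\rangle\!\rangle/(d+1)$ (valid for each fixed basis $b$, since the whole collection is a \textsc{povm} symmetric across the $d+1$ bases).

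For the diagonal block $b=b'$, the Kronecker-$\delta$ part of Eq.~\eqref{mumCon} yields $|E_{\alpha',b'}\rangle\!\rangle\,\eta^{2}/(d+1)^{2}$, while the constant part collapses under the $\alpha$-sum into a multiple of $|\mathds{1}_{d}\rangle\!\rangle$. For the $d$ off-diagonal blocks $b\neq b'$, the constant inner product and the same collapse over $\alpha$ contribute a further multiple of $|\mathds{1}_{d}\rangle\!\rangle$. Combining these and simplifying $\tfrac{1-\eta^{2}}{d(d+1)^{3}}+\tfrac{1}{(d+1)^{3}}=\tfrac{d+1-\eta^{2}}{d(d+1)^{3}}$ should give
\begin{equation}
\mathrm{N}|E_{\alpha',b'}\rangle\!\rangle=|\mathds{1}_{d}\rangle\!\rangle\frac{d+1-\eta^{2}}{d(d+1)^{3}}+|E_{\alpha',b'}\rangle\!\rangle\frac{\eta^{2}}{(d+1)^{2}}\textit{.}
\end{equation}
Since a \textsc{mum} is informationally complete, $\mathrm{span}_{\mathbb{R}}\mathfrak{f}=\mathcal{L}_{\text{sa}}(\mathcal{H}_{d})$, so two superoperators agreeing on the spanning set $\{E_{\alpha,b}\}$ must coincide. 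I would therefore check that the candidate $|\mathds{1}_{d}\rangle\!\rangle\tfrac{d+1-\eta^{2}}{d(d+1)^{2}}\langle\!\langle\mathds{1}_{d}|+\mathbf{I}_{d}\tfrac{\eta^{2}}{(d+1)^{2}}$ reproduces the display above when applied to each $|E_{\alpha',b'}\rangle\!\rangle$; this reduces to the already-established value $\mathrm{Tr}E_{\alpha',b'}=1/(d+1)$ entering through $\langle\!\langle\mathds{1}_{d}|E_{\alpha',b'}\rangle\!\rangle$. This puts $\mathrm{N}$ in the form of Eq.~\eqref{con5Eq}, reading off $k_{+}=\tfrac{d+1-\eta^{2}}{d(d+1)^{2}}$ and $k_{-}=\tfrac{\eta^{2}}{(d+1)^{2}}$.

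Finally I would verify the inequalities demanded by \cref{desCons}: since $\eta\in(0,1]$ we have $k_{-}=\eta^{2}/(d+1)^{2}>0$, so the spanning condition $k_{s}>k_{a}$ holds, while $k_{+}\geq k_{-}$ follows from $1\geq\eta^{2}$. By the equivalence in \cref{desCons}, in particular Eq.~\eqref{con5Eq}, $\mathfrak{f}$ is a conical 2-design. I do not anticipate any genuine obstacle: the argument is structurally identical to \cref{simProp}, and the only added bookkeeping is the split of the sum into same-basis and different-basis contributions. The one place to be careful is tracking the factor $1/(d+1)$ arising from $\sum_{\alpha}E_{\alpha,b}=\mathds{1}_{d}/(d+1)$ and counting the $d$ off-diagonal bases correctly, so that the two $|\mathds{1}_{d}\rangle\!\rangle$-contributions combine to the stated $k_{+}$.
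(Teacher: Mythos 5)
Your proposal is correct and follows essentially the same route as the paper: the same frame superoperator $\mathrm{N}$, the same split of the double sum into $b=b'$ and $b\neq b'$ blocks using $\sum_{\alpha}E_{\alpha,b}=\mathds{1}_{d}/(d+1)$, the same appeal to informational completeness, and the same identification of $k_{\pm}$ via Eq.~\eqref{con5Eq}. The arithmetic you flag as the delicate point checks out exactly as in the paper's Eq.~\eqref{Nop}.
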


\noindent Projective 2-design \textsc{povm}s are obviously conical 2-designs in light of \cref{rank1Lem}. Our proofs of \cref{simProp} and \cref{mumProp} establish that \textsc{sim}s and \textsc{mum}s are also examples of \textsc{povm}s that are conical 2-designs. We call a \textsc{povm} that is also a conical 2-design a \textit{conical 2-design} \textsc{povm}. In fact, \textsc{sim}s are the unique conical 2-design \textsc{povm}s of the minimum possible cardinality; they are \textit{minimal}.

\begin{theorem}\label{minConPovm}\textit{Let} $\mathfrak{e}\equiv\{E_{1},\dots,E_{d^{2}}\}\subset\mathcal{E}(\mathcal{H}_{d})\subset\mathcal{L}_{\text{sa}}(\mathcal{H}_{d})_{+}$ \textit{be a} \textsc{povm}\textit{ of cardinality $d^{2}$. Then} $\mathfrak{e}$ \textit{is a conical 2-design if and only if} $\mathfrak{e}$ \textit{is a} \textsc{sim}\textit{.}
\begin{proof}
Our proof of \cref{simProp} establishes sufficiency. We now prove necessity. Let $\mathfrak{e}$ be as in the statement of the theorem, that is $\mathfrak{e}\equiv\{E_{1},\dots, E_{d^{2}}\}\subset\mathcal{E}(\mathcal{H}_{d})\subset\mathcal{L}_{\text{sa}}$ is a conical 2-design \textsc{povm}. Then the partial trace of Eq.~\eqref{con5Eq} over either factor yields
\begin{eqnarray}
\sum_{\alpha=1}^{d}E_{\alpha}\mathrm{Tr}(E_{\alpha})=\mathds{1}_{d}(dk_{+}+k_{-})=\sum_{\alpha=1}^{d}E_{\alpha}(dk_{+}+k_{-})\text{,}
\label{constantTrace}
\end{eqnarray}
where the second equality follows from the fact that $\mathfrak{e}$ is a \textsc{povm}. Since $\text{card}\mathfrak{e}=\text{dim}_{\mathbb{R}}\mathcal{L}_{\text{sa}}(\mathcal{H}_{d})$, conical 2-design $\mathfrak{e}$ is in fact a basis for $\mathcal{L}_{\text{sa}}(\mathcal{H}_{d})$. Eq.~\eqref{constantTrace} thus implies $\forall\alpha\in\{1,\dots,d^{2}\}$ that \begin{equation}
\mathrm{Tr}E_{\alpha}=dk_{+}+k_{-}\text{;}
\label{forOneD}
\end{equation} 
moreover, since $\mathfrak{e}$ is a \textsc{povm}, this constant value must be 
\begin{equation}
\mathrm{Tr}E_{\alpha}=\frac{1}{d}\text{.}
\label{oneD}
\end{equation}
With Eq.~\eqref{forOneD} and Eq.~\eqref{oneD} we see that $d^{2}k_{+}+dk_{-}=1$. Now, taking into account the inequalities $k_{+}\geq k_{-}>0$, we must have $0<k_{-}/k_{+}\leq 1$; hence, for some $\kappa\in(0,1]$ we deduce
\begin{eqnarray}
k_{+}&=&\frac{d+1-\kappa^{2}}{d^{2}(d+1)}\\
k_{-}&=&\frac{\kappa^{2}}{d(d+1)}\text{.}
\end{eqnarray}
By another application of Eq.~\eqref{con5Eq} we see $\forall\beta\in\{1,\dots,d^{2}\}$ that
\begin{equation}
\sum_{\alpha=1}^{d^{2}}|E_{\alpha}\rangle\!\rangle\mathrm{Tr}\big(E_{\alpha}E_{\beta}\big)=|\mathds{1}_{d}\rangle\!\rangle\frac{k_{+}}{d}+|E_{\beta}\rangle\!\rangle k_{-}=\sum_{\alpha=1}^{d^{2}}|E_{\alpha}\rangle\!\rangle\left(\frac{k_{+}}{d}+\delta_{\alpha,\beta}k_{-}\right)\text{,}
\label{forConc}
\end{equation}
where the second equality follows again from that fact that $\mathfrak{e}$ is a \textsc{povm}. Once again, since $\mathfrak{e}$ is a basis for $\mathcal{L}_{\text{sa}}(\mathcal{H}_{d})$, we conclude from Eq.~\eqref{forConc} that $\forall\alpha,\beta\in\{1,\dots,d^{2}\}$
\begin{equation}
\mathrm{Tr}\big(E_{\alpha}E_{\beta}\big)=\frac{d^{2}\kappa^{2}\delta_{\alpha,\beta}+d+1-\kappa^{2}}{d^{3}(d+1)}\text{,}
\end{equation}
and so recalling \cref{simDef} we have established that $\mathfrak{e}$ is a \textsc{sim}.
\end{proof}
\end{theorem}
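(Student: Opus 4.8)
The plan is to dispose of sufficiency at once and then focus all effort on necessity. Sufficiency is already available: \cref{simProp} establishes that every \textsc{sim} is a conical 2-design, and a \textsc{sim} is by \cref{simDef} a \textsc{povm} of cardinality $d^2$. So the substance of the theorem is the converse, and I would frame the proof as follows: let $\mathfrak{e}=\{E_1,\dots,E_{d^2}\}$ be a conical 2-design \textsc{povm} of cardinality exactly $d^2$, and show it satisfies the inner-product relation Eq.~\eqref{simCon}.

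The single structural fact that drives everything is that $\mathfrak{e}$ is a \emph{basis} of $\mathcal{L}_{\text{sa}}(\mathcal{H}_d)$. Indeed, as a conical 2-design $\mathfrak{e}$ spans $\mathcal{L}_{\text{sa}}(\mathcal{H}_d)$ by \cref{desCons}, and since $\text{card}\,\mathfrak{e}=d^2=\text{dim}_{\mathbb{R}}\mathcal{L}_{\text{sa}}(\mathcal{H}_d)$, a spanning set of exactly this size is automatically a basis. The consequence I will exploit is uniqueness of expansion coefficients: whenever a single element of $\mathcal{L}_{\text{sa}}(\mathcal{H}_d)$ is written in two ways as a real combination of the $E_\alpha$, the two coefficient tuples must coincide. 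I apply this lever twice, using the fifth characterization Eq.~\eqref{con5Eq} of \cref{desCons} each time.

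First I would feed $|\mathds{1}_d\rangle\!\rangle$ into Eq.~\eqref{con5Eq}. The left side collapses to $\sum_\alpha |E_\alpha\rangle\!\rangle\,\mathrm{Tr}(E_\alpha)$ and the right side to $|\mathds{1}_d\rangle\!\rangle(dk_+ + k_-)$; rewriting $|\mathds{1}_d\rangle\!\rangle=\sum_\alpha |E_\alpha\rangle\!\rangle$ via the \textsc{povm} condition and matching coefficients forces $\mathrm{Tr}(E_\alpha)=dk_+ + k_-$ to be one and the same constant for every $\alpha$. Summing over $\alpha$ and using $\sum_\alpha E_\alpha=\mathds{1}_d$ pins that constant to $1/d$, so $d^2 k_+ + dk_- = 1$. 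The inequalities $k_+\geq k_->0$ guaranteed by \cref{desCons} then let me introduce a single parameter by setting $k_-=\kappa^2/(d(d+1))$; the linear constraint forces $k_+=(d+1-\kappa^2)/(d^2(d+1))$, and the two inequalities translate exactly into $\kappa\in(0,1]$.

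The second application feeds $|E_\beta\rangle\!\rangle$ into Eq.~\eqref{con5Eq}. The left side becomes $\sum_\alpha |E_\alpha\rangle\!\rangle\,\mathrm{Tr}(E_\alpha E_\beta)$; on the right I expand $|\mathds{1}_d\rangle\!\rangle$ through the \textsc{povm} condition and use $\mathrm{Tr}(E_\beta)=1/d$ to reach $\sum_\alpha |E_\alpha\rangle\!\rangle(k_+/d + \delta_{\alpha,\beta}k_-)$. Matching coefficients one final time gives $\mathrm{Tr}(E_\alpha E_\beta)=k_+/d + \delta_{\alpha,\beta}k_-$, and substituting the values of $k_\pm$ reproduces Eq.~\eqref{simCon} verbatim, so $\mathfrak{e}$ is a \textsc{sim} by \cref{simDef}. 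The one genuinely load-bearing hypothesis --- and hence the main thing to get right --- is minimal cardinality: it is precisely what upgrades ``spanning'' to ``basis'' and thereby licenses the coefficient-matching that converts the single global design identity Eq.~\eqref{con5Eq} into the full family of pairwise trace constraints. Drop $\text{card}\,\mathfrak{e}=d^2$ and the coefficients cease to be unique, so this is where the argument genuinely lives rather than in any of the routine contractions.
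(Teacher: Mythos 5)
Your proposal is correct and follows essentially the same route as the paper's own proof: sufficiency via \cref{simProp}, then the basis property from minimal cardinality combined with two applications of Eq.~\eqref{con5Eq} (to $\mathds{1}_{d}$ and to $E_{\beta}$) with coefficient matching, the normalization $d^{2}k_{+}+dk_{-}=1$, and the parametrization by $\kappa\in(0,1]$. Nothing to add.
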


\noindent It is not possible to prove an equally strong statement for \textsc{mum}s, \textit{i.e}.\ one cannot improve \cref{mumProp} to state that a conical 2-design \textsc{povm} $\mathfrak{f}$ is a \textsc{mum} if and only if $\mathrm{card}\mathfrak{f}=d(d+1)$. Indeed, if $\{E_{1},\dots,E_{d^{2}}\}$ is a \textsc{sim}, then the $d(d+1)$ following effects obviously constitute a conical 2-design \textsc{povm}
\begin{equation}
\mathcal{E}(\mathcal{H}_{d})\ni |F_{\alpha}\rangle\!\rangle=\begin{cases} |E_{\alpha}\rangle\!\rangle\frac{1}{2} & 1\leq \alpha\leq d^{2} \\ |\mathds{1}_{d}\rangle\!\rangle\frac{1}{2d} & d^{2}+1\leq \alpha\leq d(d+1) \end{cases}\text{,}
\end{equation}
which is not mutually unbiased and therefore not a \textsc{mum}.

\noindent At this stage it will be useful to introduce some additional notation. Let $\{A_{1},\dots,A_{n}\}\subset\mathcal{L}_{\text{sa}}(\mathcal{H}_{d})$ be an arbitrary conical 2-design, and $\forall j\in\{1,\dots,n\}$ define
\begin{equation} 
t_{j}\equiv\mathrm{Tr}A_{j}\text{.} 
\end{equation}
Thus, $\forall j\in\{1,\dots,n\}$ there exists $B_{j}\in\boldsymbol{\mathcal{B}}$ so that $A_{j}$ has the Bloch representation
\begin{equation}
A_{j}=\big(\mathds{1}_{d}+B_{j}\big)\frac{t_{j}}{d}\label{desBlochs}\text{.}
\end{equation} 
Next, $\forall j\in\{1,\dots,n\}$ define 
\begin{equation}
\kappa_{j}\equiv\|B_{j}\|_{\boldsymbol{\mathcal{B}}}\text{.}
\end{equation} 
Define the \textsc{rms} trace, $t$, and weighted \textsc{rms} Bloch norm, $\kappa$, via
\begin{eqnarray}
t&\equiv&\sqrt{\frac{1}{n}\sum_{j=1}^{n}t_{j}^{2}}\text{,}\label{tDef}\\
\kappa&\equiv&\sqrt{\frac{1}{nt^{2}}\sum_{j=1}^{n}t_{j}^{2}\kappa_{j}^{2}}\label{kDef}\text{.}
\end{eqnarray}
As with the particular cases of \textsc{sim}s and \textsc{mum}s, we refer to $\kappa$ as the \textit{contraction parameter}. It is easy to see that $0\leq \kappa\leq 1$. Indeed, the foregoing equations yield the following inequalities in light of Eq.~\eqref{outBall}
\begin{equation}
0\leq\text{min}_{j}\{\kappa_{j}\}= \sqrt{\frac{\text{min}_{j}\{\kappa_{j}^{2}\}}{nt^{2}}\sum_{j=1}^{n}t_{j}^{2}}\leq\sqrt{\frac{1}{nt^{2}}\sum_{j=1}^{n}t_{j}^{2}\kappa_{j}^{2}}=\kappa
\end{equation}
and
\begin{equation}
\kappa\leq \sqrt{\frac{\text{max}_{j}\{\kappa_{j}^{2}\}}{nt^{2}}\sum_{j=1}^{n}t_{j}^{2}}=\text{max}_{j}\{\kappa_{j}\}\leq 1\text{.}
\end{equation}
Furthermore, in light of Eq.~\eqref{pureBlochs}, we see that $\kappa=1$ if and only if $\forall j\in\{1,\dots,n\}$ $\mathrm{rank}A_{j}=1$. 

\noindent Now, the traces of Eq.~\eqref{con2Eq} and Eq.~\eqref{con5Eq} read respectively as
\begin{eqnarray}
\frac{1}{2}d(d+1)k_{s}+\frac{1}{2}d(d-1)k_{a}&=&\sum_{j=1}^{n}\big(\mathrm{Tr}A_{j}\big)^{2}\text{,}\\
\frac{1}{2}d(d+1)k_{s}-\frac{1}{2}d(d-1)k_{a}&=&\sum_{j=1}^{n}\mathrm{Tr}\big(A_{j}^{2}\big)\text{,}
\end{eqnarray}
so from Eq.~\eqref{tDef} and Eq.~\eqref{kDef} we then see
\begin{eqnarray}
k_{s}&=&\frac{nt^{2}}{d^{2}}\left(1+\frac{(d-1)\kappa^{2}}{d+1}\right)\label{ksKaps}\\
k_{a}&=&\frac{nt^{2}(1-\kappa^{2})}{d^{2}}\label{kaKaps}\text{.}
\end{eqnarray}
The partial trace of Eq.~\eqref{con2Eq} over either factor thus reads as
\begin{equation}
\sum_{j=1}^{n}A_{j}t_{j}=\mathds{1}_{d}\frac{nt^{2}}{d}\text{.}
\label{sumAjTj}
\end{equation}
Therefore, for any conical 2-design $\{A_{1},\dots,A_{j}\}\subset\mathcal{L}_{\text{sa}}(\mathcal{H}_{d})_{+}$, the effects
\begin{equation}
\mathcal{E}(\mathcal{H}_{d})\ni E_{j}\equiv A_{j}\frac{dt_{j}}{nt^{2}}
\label{conDesPovm}
\end{equation}
constitute a \textsc{povm}. In the case wherein $\forall j,k\in\{1,\dots,n\}$ $\mathrm{Tr}A_{j}=\mathrm{Tr}A_{k}$, the effects defined in Eq.~\eqref{conDesPovm} are also a conical 2-design. In light of Eq.~\eqref{sumAjTj}, we see that Bloch vectors defined in Eq.~\eqref{desBlochs} satisfy
\begin{eqnarray}
\sum_{j=1}^{n}B_{j}t_{j}^{2}&=&\sum_{j=1}^{n}\left(A_{j}dt_{j}-\mathds{1}_{d}t_{j}^{2}\right)\nonumber\\
&=&0\text{.}
\label{blochSum}
\end{eqnarray}
Furthermore, from Eq.~\eqref{con5Eq}, together with Eq.~\eqref{ksKaps} and Eq.~\eqref{kaKaps}, we find that
\begin{eqnarray}
\sum_{j=1}^{n}|B_{j}\rangle\!\rangle t_{j}^{2}\langle\!\langle B_{j}|&=&\sum_{j=1}^{n}\left(|A_{j}\rangle\!\rangle d-|\mathds{1}_{d}\rangle\!\rangle t_{j}\right)\left(d\langle\!\langle A_{j}|-t_{j}\langle\!\langle\mathds{1}_{d}|\right)\nonumber\\
&=& \sum_{j=1}^{n}|A_{j}\rangle\!\rangle d^{2}\langle\!\langle A_{j}|-\sum_{j=1}^{n}\Big(|\mathds{1}_{d}\rangle\!\rangle dt_{j}\langle\!\langle A_{j}|+|A_{j}\rangle\!\rangle dt_{j}\langle\!\langle \mathds{1}_{d}|-|\mathds{1}_{d}\rangle\!\rangle t_{j}^{2}\langle\!\langle\mathds{1}_{d}|\Big)\nonumber\\
&=&|\mathds{1}_{d}\rangle\!\rangle\langle\!\langle\mathds{1}_{d}|\frac{d^{2}(k_{s}+k_{a})}{2}+\mathbf{I}_{d}\frac{d^{2}(k_{s}-k_{a})}{2}-|\mathds{1}_{d}\rangle\!\rangle nt^{2}\langle\!\langle\mathds{1}_{d}|\nonumber\\[0.3cm]
&=&\Pi_{\boldsymbol{\mathcal{B}}}\frac{ndt^{2}\kappa^{2}}{d+1}\label{blochProjSum}\text{,}
\end{eqnarray}
where $\Pi_{\boldsymbol{\mathcal{B}}}$ is the Bloch projector defined in Eq.~\eqref{blochProjector}. We now depart from the general case.

\section{Homogeneous Conical 2-Designs}\label{hc2d}
\noindent The class of all conical 2-designs is large. In order to make further progress, we shall in this section consider the special case of conical 2-designs $\{A_{1},\dots,A_{n}\}\subset\mathcal{L}_{\text{sa}}(\mathcal{H}_{d})_{+}$ that are homogeneous following sense.

\begin{definition}\label{hcdDef}\textit{A} homogeneous conical 2-design \textit{is a conical 2-design} $\{A_{1},\dots,A_{n}\}\subset\mathcal{L}_{\text{sa}}(\mathcal{H}_{d})_{+}$ \textit{for which} $\exists t\in\mathbb{R}_{> 0}$ \textit{and} $\exists \kappa\in(0,1]$ \textit{such that} $\forall j\in\{1,\dots,n\}$ $\mathrm{Tr}A_{j}=t$ \textit{and} $\|B_{j}\|_{\boldsymbol{\mathcal{B}}}=\kappa$\textit{, where} $B_{j}$ \textit{are the corresponding Bloch vectors defined in Eq.~\eqref{desBlochs}.}
\end{definition}

\noindent The fact that $\|B_{j}\|_{\boldsymbol{\mathcal{B}}}$ is constant implies that $\mathrm{Tr}(A_{j}^{2})$ is constant. Finite complex projective 2-designs are specifically those homogeneous conical 2-designs with $t=\kappa=1$, or equivalently from Eq.~\eqref{ksKaps} and Eq.~\eqref{kaKaps}, $k_{s}=2n/(d(d+1))$ and $k_{a}=0$. Furthermore, our proofs of \cref{simProp} and \cref{mumProp} establish that \textsc{sim}s and \textsc{mum}s are homogeneous conical 2-designs. We have shown in the foregoing section that \textsc{sim}s and \textsc{mum}s correspond to highly symmetric polytopes within the Bloch body: a single regular simplex in the former case, the convex hull of $(d+1)$ orthogonal regular simplices in the latter. What is the shape of the Bloch polytope corresponding to an arbitrary homogeneous conical 2-design? \cref{blochPoly} answers that question. Indeed, these shapes are fully specified by the angles formed by the Bloch vectors. Therefore, we study the Gram matrix $G$ with entries 
\begin{equation}
G_{j,k}=\langle\!\langle B_{j}|B_{k}\rangle\!\rangle 
\end{equation}
relative to some arbitrary orthonormal basis for $\mathbb{R}^{n}$.

\begin{theorem}\label{blochPoly}\textit{Let} $n,d\in\mathbb{N}$\textit{. Let} $\{B_{1},\dots,B_{n}\}\subset\boldsymbol{\mathcal{B}}(\mathcal{H}_{d})$\textit{. Then the following statements are equivalent.}
\begin{enumerate}[(i)]
\item\label{hcon1} $\forall t\in\mathbb{R}_{>0}$ $\Big\{A_{j}\equiv \frac{(\mathds{1}_{d}+B_{j})t}{d}\;\boldsymbol{\Big|}\; j\in\{1,\dots,n\}\Big\}$ \textit{is a homogeneous conical 2-design.}
\item\label{hcon2} \textit{The Gram matrix} $G$ \textit{of} $\{B_{1},\dots,B_{n}\}$ \textit{is of the form} $G=P\lambda$ \textit{where} $\lambda\in\mathbb{R}_{>0}$ \textit{and} $P$ \textit{is a real rank} $d^{2}-1$ \textit{projector with all diagonal entries equal and such that} $\forall j\in\{1,\dots,n\}$ $\sum_{k=1}^{n}P_{j,k}=0$\textit{.}
\end{enumerate}
\textit{If these equivalent conditions are satisfied, then} $\lambda\leq nd/(d+1)$ \textit{and} $\forall j,k\in\{1,\dots,n\}$ $P_{j,k}\leq (d^{2}-1)/n$\textit{, with equality when} $j=k$\textit{. The associated homogeneous conical 2-designs have contraction parameter}
\begin{equation}
\kappa=\sqrt{\frac{\lambda(d+1)}{nd}}\textit{.}
\label{lamKap}
\end{equation}
\begin{proof}
We first prove that (\ref{hcon1})$\implies$(\ref{hcon2}). Let $\{A_{1},\dots,A_{n}\}\subset\mathcal{L}_{\text{sa}}(\mathcal{H}_{d})_{+}$ be a homogeneous conical 2-design as in statement (\ref{hcon1}), with contraction parameter $\kappa\in(0,1]$. Then it follows from Eq.~\eqref{blochProjSum} that with $\lambda\equiv(nd\kappa^{2})/(d+1)$ one has that
\begin{eqnarray}
\sum_{j=1}^{n}|B_{j}\rangle\!\rangle\langle\!\langle B_{j}|=\Pi_{\boldsymbol{\mathcal{B}}}\lambda\text{;}
\label{forTrace}
\end{eqnarray}
hence $\forall j,k\in\{1,\dots,n\}$
\begin{eqnarray}
G^{2}_{j,k}=\sum_{l=1}^{n}\langle\!\langle B_{j}|B_{l}\rangle\!\rangle\langle\!\langle B_{l}|B_{k}\rangle\!\rangle=G_{j,k}\lambda\text{.}
\end{eqnarray}
Therefore $P\equiv G/\lambda$ is a real projector; moreover we see that $\text{rank}P=d^{2}-1$ from the trace of Eq.~\eqref{forTrace}
\begin{equation}
\mathrm{Tr}P=\sum_{j=1}^{n}G_{j,j}\frac{1}{\lambda}=\mathrm{Tr}\Big(\Pi_{\boldsymbol{\mathcal{B}}}\Big)=d^{2}-1\text{.}
\end{equation}
Thus $\forall j\in\{1,\dots,n\}$
\begin{equation}
P_{j,j}=\langle\!\langle B_{j}|B_{j}\rangle\!\rangle\frac{1}{\lambda}=\frac{d^{2}-1}{n}\text{,}
\end{equation} 
\textit{i.e}.\ the diagonal entries of $P$ are equal. Finally, it follows from Eq.~\eqref{blochSum} that 
\begin{equation}
\sum_{j=1}^{n}B_{j}=0\text{,}
\end{equation} 
which implies $\forall j\in\{1,\dots,n\}$ that 
\begin{equation}
\sum_{k=1}^{n}P_{j,k}=0\text{,}
\end{equation}
where we have used the fact that $P_{j,k}=G_{j,k}\lambda=\mathrm{Tr}(B_{j}B_{k})\lambda$. So (\ref{hcon1})$\implies$(\ref{hcon2}).

\noindent We now prove that (\ref{hcon2})$\implies$(\ref{hcon1}). Let the Gram matrix of $\{B_{1},\dots,B_{n}\}\in\boldsymbol{\mathcal{B}}(\mathcal{H}_{d})$ be as in statement (\ref{hcon2}). It then follows from $\text{rank}G=d^{2}-1=\text{dim}_{\mathbb{R}}\mathcal{L}_{\text{sa,0}}(\mathcal{H}_{d})$ that $\text{span}_{\mathbb{R}}\{B_{1},\dots,B_{n}\}=\mathcal{L}_{\text{sa,0}}(\mathcal{H}_{d})$. Define $\mathrm{N}=\sum_{j=1}^{n}|B_{j}\rangle\!\rangle\langle\!\langle B_{j}|$. Then $\forall j,k\in\{1,\dots,n\}$
\begin{equation}
\langle\!\langle B_{j}|\mathrm{N}B_{k}\rangle\!\rangle=\sum_{l=1}^{n}\langle\!\langle B_{j}|B_{l}\rangle\!\rangle\langle\!\langle B_{l}|B_{k}\rangle\!\rangle=\sum_{l=1}^{n}P_{j,l}P_{l,k}\lambda^{2}=P_{j,k}\lambda^{2}=G_{j,k}\lambda=\langle\!\langle B_{j}|B_{k}\rangle\!\rangle\lambda\text{.}
\label{littleCalc}
\end{equation}
Since $\text{span}_{\mathbb{R}}\{B_{1},\dots,B_{n}\}=\mathcal{L}_{\text{sa,0}}(\mathcal{H}_{d})$, and $\mathrm{N}|\mathds{1}_{d}\rangle\!\rangle=0$, Eq.~\eqref{littleCalc} implies 
\begin{equation}
\sum_{j=1}^{n}|B_{j}\rangle\!\rangle\langle\!\langle B_{j}|=\Pi_{\boldsymbol{\mathcal{B}}}\lambda\text{.}
\label{Bsums}
\end{equation} 
The fact that $\forall j\in\{1,\dots,n\}$ $\sum_{k=1}^{n}P_{j,k}=0$ implies that $\sum_{k=1}^{n}B_{k}=0$, so 
if we define $A_{j}=(\mathds{1}_{d}+B_{j})t/d$ for an arbitrary fixed $t\in\mathbb{R}_{>0}$, then from Eq.~\eqref{Bsums} and the definition of the Bloch projector in Eq.~\eqref{blochProjector} we see that these $A_{j}$ satisfy
\begin{eqnarray}
\sum_{j=1}^{n}|A_{j}\rangle\!\rangle\langle\!\langle A_{j}|=\frac{t^{2}(nd-\lambda)}{d^{3}}|\mathds{1}_{d}\rangle\!\rangle\langle\!\langle\mathds{1}_{d}|+\mathbf{I}_{d}\frac{t^{2}\lambda}{d^{2}}\text{.}
\end{eqnarray}
Thus, if we can show that $\lambda\leq nd(d+1)$, then it will follows from \cref{desCons} that $\{A_{1},\dots,A_{n}\}$ is a conical 2-design. To see that this is the case, observe the that fact that the rank $d^{2}-1$ projector $P$ is constant on the diagonal implies that $\forall j\in\{1,\dots,n\}$ $nP_{j,j}=\mathrm{Tr}P=d^{2}-1$. Consequently,
\begin{equation}
1\geq \|B_{j}\|_{\boldsymbol{\mathcal{B}}}^{2}=\frac{P_{j,j}\lambda}{d(d-1)}=\frac{(d+1)\lambda}{nd}
\label{Bjnorm}
\end{equation}
and the claim follows; moreover, the conical 2-design is homogeneous with contraction parameter
\begin{equation}
\kappa=\sqrt{\frac{\lambda(d+1)}{nd}}\text{.}
\end{equation}
We complete the proof by noting from Eq.~\eqref{Bjnorm} that the Cauchy-Schwarz inequality yields
\begin{equation}
|P_{j,k}|=|\langle\!\langle B_{j}|B_{k}\rangle\!\rangle|\frac{1}{\lambda}\leq\|B_{j}\|\|B_{k}\|\frac{1}{\lambda}=\frac{d^{2}-1}{n}\text{,}
\end{equation}
\end{proof}
\end{theorem}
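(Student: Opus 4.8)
The plan is to use the five equivalent characterizations of conical 2-designs furnished by \cref{desCons} — principally condition (\ref{con5}) — together with the operator identity Eq.~\eqref{blochProjSum} already derived for an arbitrary conical 2-design, as the bridge between the design condition (\ref{hcon1}) and the Gram-matrix condition (\ref{hcon2}). The single most useful observation is that homogeneity collapses the $t_{j}^{2}$-weighted sum in Eq.~\eqref{blochProjSum} into the clean operator identity $\sum_{j}|B_{j}\rangle\!\rangle\langle\!\langle B_{j}|=\Pi_{\boldsymbol{\mathcal{B}}}\lambda$, with $\lambda\equiv nd\kappa^{2}/(d+1)$, precisely because all $t_{j}=t$ and all $\|B_{j}\|_{\boldsymbol{\mathcal{B}}}=\kappa$. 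Everything then reduces to transferring this operator-level projector structure to the Gram-matrix level and back.

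For (\ref{hcon1})$\implies$(\ref{hcon2}), I would first invoke Eq.~\eqref{blochProjSum} in the homogeneous case to obtain $\sum_{j}|B_{j}\rangle\!\rangle\langle\!\langle B_{j}|=\Pi_{\boldsymbol{\mathcal{B}}}\lambda$. Computing the matrix square entry by entry and inserting this resolution gives $(G^{2})_{j,k}=\langle\!\langle B_{j}|\Pi_{\boldsymbol{\mathcal{B}}}\lambda|B_{k}\rangle\!\rangle=\lambda\langle\!\langle B_{j}|B_{k}\rangle\!\rangle=\lambda G_{j,k}$, where I use that $\Pi_{\boldsymbol{\mathcal{B}}}$ fixes the traceless $B_{k}$. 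Hence $P\equiv G/\lambda$ is idempotent; its trace equals $\mathrm{Tr}\,\Pi_{\boldsymbol{\mathcal{B}}}=d^{2}-1$, so $\mathrm{rank}\,P=d^{2}-1$. The diagonal entries $P_{j,j}=\|B_{j}\|^{2}/\lambda$ are all equal by homogeneity, hence equal to $(d^{2}-1)/n$; and the row sums vanish because $\sum_{k}B_{k}=0$ (Eq.~\eqref{blochSum} with constant $t_{j}$) forces $\sum_{k}G_{j,k}=\langle\!\langle B_{j}|\sum_{k}B_{k}\rangle\!\rangle=0$.

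For the converse (\ref{hcon2})$\implies$(\ref{hcon1}), I would note first that $\mathrm{rank}\,G=d^{2}-1=\dim_{\mathbb{R}}\mathcal{L}_{\text{sa},0}(\mathcal{H}_{d})$ forces $\{B_{j}\}$ to span the traceless subspace. Setting $\mathrm{N}=\sum_{j}|B_{j}\rangle\!\rangle\langle\!\langle B_{j}|$, the computation $\langle\!\langle B_{j}|\mathrm{N}|B_{k}\rangle\!\rangle=\lambda^{2}(P^{2})_{j,k}=\lambda G_{j,k}=\lambda\langle\!\langle B_{j}|B_{k}\rangle\!\rangle$ together with $\mathrm{N}|\mathds{1}_{d}\rangle\!\rangle=0$ and the spanning property upgrades this to the operator identity $\mathrm{N}=\Pi_{\boldsymbol{\mathcal{B}}}\lambda$ on all of $\mathcal{L}_{\text{sa}}(\mathcal{H}_{d})$. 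The vanishing row sums give $\langle\!\langle B_{j}|\sum_{k}B_{k}\rangle\!\rangle=0$ for every $j$, and spanning then yields $\sum_{k}B_{k}=0$. Defining $A_{j}=(\mathds{1}_{d}+B_{j})t/d$ and expanding $\sum_{j}|A_{j}\rangle\!\rangle\langle\!\langle A_{j}|$, the cross terms cancel by $\sum_{k}B_{k}=0$ and the surviving terms assemble, via Eq.~\eqref{blochProjector}, into $\frac{t^{2}(nd-\lambda)}{d^{3}}|\mathds{1}_{d}\rangle\!\rangle\langle\!\langle\mathds{1}_{d}|+\mathbf{I}_{d}\frac{t^{2}\lambda}{d^{2}}$, exactly the shape of Eq.~\eqref{con5Eq} with $k_{+}=t^{2}(nd-\lambda)/d^{3}$ and $k_{-}=t^{2}\lambda/d^{2}$.

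The main obstacle — and the step I would treat most carefully — is verifying $k_{+}\geq k_{-}>0$, since \cref{desCons} only applies under that inequality and only then delivers a genuine (spanning) conical 2-design. Positivity of $k_{-}$ is immediate, but $k_{+}\geq k_{-}$ is equivalent to $\lambda\leq nd/(d+1)$, and this is precisely where the hypothesis $B_{j}\in\boldsymbol{\mathcal{B}}(\mathcal{H}_{d})$ is indispensable: the constant diagonal $P_{j,j}=(d^{2}-1)/n$ gives $\|B_{j}\|_{\boldsymbol{\mathcal{B}}}^{2}=P_{j,j}\lambda/(d(d-1))=(d+1)\lambda/(nd)$, and the outball bound $\|B_{j}\|_{\boldsymbol{\mathcal{B}}}\leq 1$ coming from $\boldsymbol{\mathcal{B}}(\mathcal{H}_{d})\subseteq\boldsymbol{\mathcal{B}}(\mathcal{H}_{d})_{\text{out}}$ in Eq.~\eqref{inclusions} forces $\lambda\leq nd/(d+1)$. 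This same identity reads off the contraction parameter $\kappa=\sqrt{\lambda(d+1)/(nd)}$, certifies homogeneity for every $t\in\mathbb{R}_{>0}$, and — combined with Cauchy--Schwarz $|P_{j,k}|=|\langle\!\langle B_{j}|B_{k}\rangle\!\rangle|/\lambda\leq\|B_{j}\|\,\|B_{k}\|/\lambda=\|B_{j}\|^{2}/\lambda=(d^{2}-1)/n$ — supplies the residual bounds on $P_{j,k}$ with equality on the diagonal.
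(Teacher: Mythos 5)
Your proposal is correct and follows essentially the same route as the paper's own proof: Eq.~\eqref{blochProjSum} specialized to the homogeneous case gives the operator identity $\sum_{j}|B_{j}\rangle\!\rangle\langle\!\langle B_{j}|=\Pi_{\boldsymbol{\mathcal{B}}}\lambda$, which is converted to the Gram-matrix projector condition and back, with the outball bound $\|B_{j}\|_{\boldsymbol{\mathcal{B}}}\leq 1$ supplying $\lambda\leq nd/(d+1)$ and Cauchy--Schwarz the entrywise bounds. Your explicit verification that $k_{+}\geq k_{-}>0$ is equivalent to $\lambda\leq nd/(d+1)$ is a slightly more careful rendering of the same step (and correctly reads $nd/(d+1)$ where the paper's text has a typographical $nd(d+1)$).
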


\noindent \cref{blochPoly} does not specify whether \textit{every} projector satisfying the conditions stated therein is propotional to the Gram matrix of a homogeneous conical 2-design. We call projectors for which this \textit{is} the case \textit{homogeneous conical 2-design projectors}. The candidates are those satisfying the following, where $\mathcal{M}_{n}(\mathbb{R})_{\text{sa}}$ is the set of $n\times n$ real symmetric matrices.

\begin{definition}\label{canProj}\textit{. Let} $\mathbb{N}\ni n,d\geq d^{2}$\textit{. A} candidate homogeneous conical 2-design projector \textit{is an} $n\times n$ \textit{rank} $d^{2}-1$ \textit{projector} $P\in\mathcal{M}_{n}(\mathbb{R})_{\text{sa}}$ \textit{such that} $\forall j,k\in\{1,\dots,n\}$ $\sum_{l=1}^{n}P_{j,k}=0$ \textit{and} $P_{j,k}\leq(d^{2}-1)/n$ \textit{with equality when} $j=k$\textit{.}
\end{definition}

\noindent We now seek to to prove that \textit{all} candidate homogeneous conical 2-design projectors are associated with homogeneous conical 2-designs via
condition (\ref{hcon2}) in \cref{blochPoly}. For each fixed Hilbert dimension $d$ and each fixed natural number $n\geq d^{2}$ we denote the set of all candidate homogeneous conical 2-design projectors by $\mathcal{P}(n,d)$. For each $P\in\mathcal{P}(n,d)$ we denote the set of all $n$-tuples of Bloch vectors that can be associated with $P$ by $\boldsymbol{\mathcal{S}}(P)$, \textit{i.e}.\ 
\begin{equation}
\boldsymbol{\mathcal{B}}(\mathcal{H}_{d})^{\times^{n}}\ni \vec{B}\equiv\{B_{1},\dots,B_{n}\}\in\boldsymbol{\mathcal{S}}(P)\iff \exists\lambda\in\mathbb{R}_{>0}\;\forall j,k\in\{1,\dots,n\}\;\langle\!\langle B_{j}|B_{k}\rangle\!\rangle=P_{j,k}\lambda\text{.}
\end{equation} 
For each $\vec{B}\in\boldsymbol{\mathcal{S}}(P)$ we define
\begin{equation} 
\kappa_{\vec{B}}\equiv\|B_{1}\|_{\boldsymbol{\mathcal{B}}}=\dots=\|B_{n}\|_{\boldsymbol{\mathcal{B}}}\text{,}
\end{equation}
and 
\begin{equation}
\boldsymbol{k}(P)=\{\kappa_{\vec{B}}\;|\;\vec{B}\in\boldsymbol{\mathcal{S}}(P)\}\textit{.}
\end{equation} 
The convexity of the Bloch body implies that if $\vec{B}\in\boldsymbol{\mathcal{S}}(P)$ then $\forall\eta\in(0,1]$ $\vec{B}\eta\in\boldsymbol{\mathcal{S}}(P)$, so $\boldsymbol{\mathcal{S}}(P)$ is either empty or infinite. It follows that $\kappa\in\boldsymbol{k}(P)\implies (0,\kappa]\subseteq\boldsymbol{k}(P)$. Therefore, defining
\begin{equation}
c_{P}=\begin{cases}\text{sup} \boldsymbol{k}(P) & \boldsymbol{k}(P)\neq \emptyset \\ 0 & \boldsymbol{k}(P)=\emptyset\end{cases}\text{,}
\end{equation}
we see that $(0,c_{P})\subseteq\boldsymbol{k}(P)\subseteq(0,c_{P}]$. In fact, $\boldsymbol{k}(P)=(0,c_{P}]$. For the proof, note that the case $c_{P}=0$ is trivial, so we proceed with an analysis of the nontrivial case $c_{P}>0$. Choose a sequence $\vec{B}_{a}\in\boldsymbol{\mathcal{S}}(P)$ such that $\{\kappa_{\vec{B}_{a}}\}$ is a monotone nondecreasing sequence converging to $c_{p}$. Since the Bloch body is compact, we can choose a convergent subsequence such that 
\begin{equation}
\lim_{a\rightarrow\infty}\vec{B}_{a}=\vec{B}\in\boldsymbol{\mathcal{B}}(\mathcal{H}_{d})^{\times^{n}}\text{.}
\end{equation}
We then have
\begin{equation}
\langle\!\langle B_{j}|B_{k}\rangle\!\rangle=\lim_{a\rightarrow\infty}\frac{nd\kappa^{2}_{\vec{B}_{a}}}{d+1}P_{j,k}=\frac{ndc_{p}^{2}}{d+1}P_{j,k}\text{.}
\end{equation}
So $\vec{B}\in\boldsymbol{\mathcal{S}}(P)$ and $c_{p}=\kappa_{\vec{B}}\in\boldsymbol{k}(P)$. 

\noindent We are now in a position to prove the following.
\begin{theorem}\label{existThm}\textit{Let} $d,n\in\mathbb{N}_{\geq 2}$ \textit{such that} $n\geq d^{2}$\textit{. Let} $P\in\mathcal{P}(n,d)$\it{. Then}
\begin{equation}
c_{p}\geq\frac{1}{d-1}
\label{cpBound}
\end{equation}
\textit{In particular} $\boldsymbol{\mathcal{S}}(P)\neq\emptyset$\textit{.}
\begin{proof} As in the statement of the theorem, let arbitrary $P\in\mathcal{P}(n,d)$. Then it follows from \cref{blochPoly} that $P$ is a real $n\times n$ matrix such that $P^{2}=P$ and $\mathrm{Tr}P=d^{2}-1$. Therefore there are $d^{2}-1$ orthonormal vectors $\{\vec{u}_{1},\dots,\vec{u}_{d^{2}-1}\}\subset\mathbb{R}^{n}$. Denoting the $j^{th}$ component of vector $\vec{u}_{a}$ by $\vec{u}_{a,j}$, we have
\begin{equation}
P_{j,k}=\sum_{a=1}^{d^2-1} \vec{u}_{a,j}\vec{u}_{a,j}\text{.}
\label{pjk}
\end{equation}
Let $\{D_{1},\dots,D_{d^2-1}\}\subset\mathcal{L}_{\text{sa,0}}(\mathcal{H}_{d})$ be an orthonormal basis and define $\vec{B}=(B_{1},\dots,B_{n})\in\mathcal{L}_{\text{sa,0}}(\mathcal{H}_{d})^{\times^{n}}$ by
\begin{equation}
|B_{j}\rangle\!\rangle=\sum_{a=1}^{d^{2}-1}|D_{a}\rangle\!\rangle\vec{u}_{a,j}\sqrt{\frac{nd}{(d+1)(d-1)^2}}\text{.}
\label{constructionBj}
\end{equation}
Then in light of Eq.~\eqref{pjk} we see that
\begin{eqnarray}
\langle\!\langle B_{j}|B_{k}\rangle\!\rangle&=&\sum_{a=1}^{d^{2}-1}\sum_{b=1}^{d^{2}-1}\sqrt{\frac{nd}{(d+1)(d-1)^2}}\vec{u}_{b,j}\langle\!\langle D_{b}|D_{a}\rangle\!\rangle\vec{u}_{a,k}\sqrt{\frac{nd}{(d+1)(d-1)^2}}\nonumber\\
&=&\frac{nd}{(d+1)(d-1)^2}P_{j,k}
\end{eqnarray}
In particular, from \cref{blochPoly} it follows that
\begin{eqnarray}
\|B_{j}\|_{\boldsymbol{\mathcal{B}}}&=&\sqrt{\frac{\langle\!\langle B_{j}|B_{k}\rangle\!\rangle}{d(d-1)}}\nonumber\\
&=&\frac{1}{d-1}\text{,}
\end{eqnarray} 
so $B_{j}\in\boldsymbol{\mathcal{B}}(\mathcal{H}_{d})_{\text{in}}$. Therefore $\vec{B}\in \boldsymbol{\mathcal{S}}(P)$ and $1/(d-1)=\kappa_{\vec{B}}\leq c_{P}$.
\end{proof}
\end{theorem}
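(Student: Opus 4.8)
The plan is to prove the assertion constructively, by exhibiting a single tuple $\vec{B}\in\boldsymbol{\mathcal{S}}(P)$ whose common Bloch norm is exactly $1/(d-1)$. Since such a tuple simultaneously witnesses $\boldsymbol{\mathcal{S}}(P)\neq\emptyset$ and yields $\kappa_{\vec{B}}=1/(d-1)\leq c_{P}$, both claims follow at once from its construction, and no supremum needs to be analyzed directly.

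First I would read off the spectral data of $P$. By \cref{canProj}, $P\in\mathcal{P}(n,d)$ is a real symmetric rank $d^{2}-1$ projector, so its range carries an orthonormal basis $\{\vec{u}_{1},\dots,\vec{u}_{d^{2}-1}\}\subset\mathbb{R}^{n}$ and $P$ is the sum of the associated rank-one projectors; writing $\vec{u}_{a,j}$ for the $j$th component of $\vec{u}_{a}$, this is the statement $P_{j,k}=\sum_{a=1}^{d^{2}-1}\vec{u}_{a,j}\vec{u}_{a,k}$. The crucial coincidence I would then exploit is the dimension match $\text{rank}P=d^{2}-1=\text{dim}_{\mathbb{R}}\mathcal{L}_{\text{sa,0}}(\mathcal{H}_{d})$: fixing an orthonormal basis $\{D_{1},\dots,D_{d^{2}-1}\}$ of the null-trace subspace in the Hilbert--Schmidt inner product, I would transplant the coordinate vectors into that subspace via $|B_{j}\rangle\!\rangle=s\sum_{a=1}^{d^{2}-1}|D_{a}\rangle\!\rangle\vec{u}_{a,j}$ for a single positive scale $s$ to be chosen. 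Orthonormality of the $D_{a}$ then gives $\langle\!\langle B_{j}|B_{k}\rangle\!\rangle=s^{2}P_{j,k}$, so the Gram matrix of the $B_{j}$ is proportional to $P$ and $\vec{B}$ will lie in $\boldsymbol{\mathcal{S}}(P)$ the moment the $B_{j}$ are confirmed to be legitimate Bloch vectors.

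The decisive step is the choice of $s$. I would fix it so that the constant diagonal $P_{j,j}=(d^{2}-1)/n$ forces every Bloch vector onto the insphere, i.e. solve $\|B_{j}\|_{\boldsymbol{\mathcal{B}}}^{2}=\frac{s^{2}(d^{2}-1)}{d(d-1)n}=\frac{1}{(d-1)^{2}}$, which gives $s=\sqrt{nd/((d+1)(d-1)^{2})}$. The real content of the argument lies exactly here, and it is conceptual rather than computational: by arranging for each $B_{j}$ to have Bloch norm precisely $1/(d-1)$, I place it on $\boldsymbol{\mathcal{S}}(\mathcal{H}_{d})_{\text{in}}$ and can invoke the inclusion $\boldsymbol{\mathcal{B}}(\mathcal{H}_{d})_{\text{in}}\subseteq\boldsymbol{\mathcal{B}}(\mathcal{H}_{d})$ of \eqref{inclusions} to conclude $B_{j}\in\boldsymbol{\mathcal{B}}(\mathcal{H}_{d})$ with no description whatsoever of the genuinely complicated geometry of the Bloch body when $d>2$. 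The main obstacle to watch is therefore not an estimate but this legitimacy check: one must be sure the uniform inball bound $1/(d-1)$ is available for all admissible $d$ and $n$, which is precisely what lets the construction run for every $P\in\mathcal{P}(n,d)$. Granting it, $\vec{B}\in\boldsymbol{\mathcal{S}}(P)$ with $\kappa_{\vec{B}}=1/(d-1)$, so $c_{P}\geq 1/(d-1)$ and in particular $\boldsymbol{\mathcal{S}}(P)\neq\emptyset$, completing the proof.
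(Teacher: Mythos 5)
Your construction is exactly the paper's: the spectral decomposition $P_{j,k}=\sum_a \vec{u}_{a,j}\vec{u}_{a,k}$, the transplantation $|B_j\rangle\!\rangle = s\sum_a |D_a\rangle\!\rangle\,\vec{u}_{a,j}$ with the same scale $s=\sqrt{nd/((d+1)(d-1)^2)}$, and the appeal to the inball inclusion to certify $B_j\in\boldsymbol{\mathcal{B}}(\mathcal{H}_d)$. The argument is correct and takes essentially the same route as the paper's proof.
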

\noindent \cref{existThm} can be regarded as a generalization of the \textsc{sim} and \textsc{mum} existence proofs. Indeed, our \cref{blochPoly} provides the necessary and sufficient conditions for a subset of the Bloch body to define a homogeneous conical 2-design via a complete characterization of candidate homogeneous conical 2-design projectors, and our proof \cref{existThm} establishes that a corresponding homogeneous conical 2-design exists for \textit{every} candidate homogeneous conical 2-design projector; moreover, given the convexity of the Bloch body and Eq.~\eqref{cpBound}, infinite families of such designs exist. Furthermore, given an arbitrary candidate homogeneous conical 2-design projector, Eq.~\eqref{constructionBj} provides an explicit construction of the corresponding homogeneous conical 2-design. It should be pointed out, of course, that the designs constructed through Eq.~\eqref{constructionBj} are `trivial' in the sense that the each $B_{j}$ resides within the inball, and all of the points within the inball correspond to quantum states! We emphasize, however, that in light of \cref{rank1Lem} and the Seymour-Zaslavsky  Theorem \cite{Seymour1984}, nontrivial homogeneous conical 2-designs, for which $1\geq \kappa>1/(d-1)$, exist for all cases of finite Hilbert dimension. In the following section, we outline a program for finding new varieties of \textit{projective} 2-designs, that is homogeneous conical 2-designs with $\kappa=1$.

\section{In Search of \dots New Designs}\label{inSearchOf}

\noindent Our characterization of homogeneous conical 2-designs in \cref{blochPoly} sheds new light on the structure of complex \textit{projective} 2-designs. In terms of the generalized Bloch representation, and in light of \cref{rank1Lem}, we see that complex projective 2-designs are precisely those Bloch polytopes described by $P\in\mathcal{P}(n,d)$ with $c_{P}=1$. To remind the reader, this follows from the fact that $c_{p}=1$ is equivalent to $\kappa=1$, and by Eq.~\eqref{kaKaps}, $k_{a}=0$ is equivalent to $\kappa=1$. This suggests the following two-step program: (1) Classify the polytopes described by the projectors in $\mathcal{P}(n,d)$; (2) Identify those polytopes for which $c_{P}=1$. This program is, of course, extremely ambitious, for its completion would carry with it, as a minor corollary, solutions to the \textsc{sic} and \textsc{mum} existence problems. Some partial results may, however, be useful. It might, for instance, be useful if one could exclude some of the projectors in $\mathcal{P}(n,d)$ as definitely not having $c_{P}=1$. One obvious way to do this  is to exploit the fact \cite{Kimura2005} that each vertex of the polytope corresponding to a projective design must be diametrically opposite a face which is tangential to the insphere.  Having narrowed down the set of candidates, one might then investigate the remaining polytopes numerically, to see if any of them correspond to projective designs in low dimension. Based on these putative numerical insights, one might be able to make the jump to analytical arguments. 

\noindent We conclude this chapter with a result that establishes that the problem of constructing a homogeneous conical 2-design is equivalent to the problem of constructing a 1-design on a higher dimensional vector space over $\mathbb{R}$.
\newpage
\noindent\begin{theorem}\label{liftThm}\textit{Let} $n,d\in\mathbb{N}$\textit{. Let} $\{B_{1},\dots,B_{n}\}\subset\boldsymbol{\mathcal{B}}(\mathcal{H}_{d})$\textit{. Then the following statements are equivalent.}
\begin{enumerate}[(i)]
\item\label{liftCon1} $\exists t\in\mathbb{R}_{>0}$ \textit{such that} $\Big\{A_{j}\equiv \frac{(\mathds{1}_{d}+B_{j})t}{d}\;\boldsymbol{\Big|}\; j\in\{1,\dots,n\}\Big\}$ \textit{is a homogeneous conical 2-design.}
\item\label{liftCon2}  $\forall j\in\{1,\dots,n\}$ $\|B_{j}\|=\kappa$ \textit{and} $\exists\lambda\in\mathbb{R}_{>0}$ \textit{such that} $\sum_{j=1}^{n}|B_{j}\rangle\!\rangle\langle\!\langle B_{j}|=\Pi_{\boldsymbol{\mathcal{B}}}\lambda$ \textit{and} $\sum_{j=1}^{n}B_{j}=0$\textit{.}
\end{enumerate}
\begin{proof} The implication (\ref{liftCon1})$\implies$(\ref{liftCon2}) is established by Eq.~\eqref{blochSum} and Eq.~\eqref{blochProjSum}. To prove the converse, let $\{B_{1},\dots,B_{n}\}$ be as in the statement of condition (\ref{liftCon2}) and define $P\in\mathcal{M}_{n}(\mathbb{R})_{\text{sa}}$ with entries defined $\forall j,k\in\{1,\dots,n\}$ via 
\begin{equation}
P_{j,k}=\langle\!\langle B_{j}|B_{k}\rangle\!\rangle\frac{1}{\lambda}\text{.}
\end{equation}
Then
\begin{eqnarray}
P^{2}_{j,k}=\sum_{l=1}^{n}P_{j,l}P_{l,k}=\sum_{l=1}^{n}\langle\!\langle B_{j}|B_{l}\rangle\!\rangle\langle\!\langle B_{l}|B_{k}\rangle\!\rangle\frac{1}{\lambda^{2}}=\langle\!\langle B_{j}|\Pi_{\boldsymbol{\mathcal{B}}}B_{k}\rangle\!\rangle\frac{1}{\lambda}=P_{j,k}\text{,}\\
\mathrm{Tr}P=\sum_{j=1}^{n}\langle\!\langle B_{j}|B_{l}\rangle\!\rangle\frac{1}{\lambda}=\sum_{j=1}^{n}\mathrm{Tr}\big(|B_{j}\rangle\!\rangle\langle\!\langle B_{j}|\big)\frac{1}{\lambda}=\mathrm{Tr}\big(\Pi_{\boldsymbol{\mathcal{B}}}\big)=d^{2}-1\text{,}
\end{eqnarray}
so $P$ is a real rank $d^{2}-1$ projector. The fact that  $\forall j\in\{1,\dots,n\}$ $\|B_{j}\|=\kappa$ implies that $P$ is constant on the diagonal; moreover the fact that $\sum_{j=1}^{n}B_{j}=0$ implies that $\forall j\in\{1,\dots,n\}$ $\sum_{k=1}^{n}P_{j,k}=0$. The implication (\ref{liftCon2})$\implies$(\ref{liftCon1}) thus follows from \cref{blochPoly}.
\end{proof}
\end{theorem}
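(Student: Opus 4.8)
The plan is to prove the equivalence by establishing the two implications separately, leaning on the general conical 2-design identities derived in \cref{desQC} together with the characterization \cref{blochPoly}.

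For the forward direction (\ref{liftCon1})$\implies$(\ref{liftCon2}), I would specialize the general formulas Eq.~\eqref{blochSum} and Eq.~\eqref{blochProjSum} to the homogeneous case. By \cref{hcdDef} a homogeneous conical 2-design has $t_j = t$ and $\|B_j\|_{\boldsymbol{\mathcal{B}}} = \kappa$ for every $j$, so the equal-norm requirement in (\ref{liftCon2}) is immediate. Setting $t_j = t$ in Eq.~\eqref{blochSum} gives $t^2\sum_j B_j = 0$, hence $\sum_j B_j = 0$, while the same substitution in Eq.~\eqref{blochProjSum} collapses to $\sum_j |B_j\rangle\!\rangle\langle\!\langle B_j| = \Pi_{\boldsymbol{\mathcal{B}}}\,nd\kappa^2/(d+1)$, so one may take $\lambda \equiv nd\kappa^2/(d+1) > 0$.

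For the converse (\ref{liftCon2})$\implies$(\ref{liftCon1}), the strategy is to manufacture a candidate Gram matrix from the hypotheses and verify it satisfies condition (\ref{hcon2}) of \cref{blochPoly}, whereupon (\ref{hcon1})---which is precisely (\ref{liftCon1})---follows. Concretely, define $P \in \mathcal{M}_n(\mathbb{R})_{\text{sa}}$ by $P_{j,k} \equiv \langle\!\langle B_j|B_k\rangle\!\rangle/\lambda$. The key computation is the idempotency $P^2 = P$: expanding $\sum_l P_{j,l}P_{l,k}$ and inserting the spectral hypothesis $\sum_l |B_l\rangle\!\rangle\langle\!\langle B_l| = \Pi_{\boldsymbol{\mathcal{B}}}\lambda$ reduces the sum to $\langle\!\langle B_j|\Pi_{\boldsymbol{\mathcal{B}}}|B_k\rangle\!\rangle/\lambda$, and since each $B_k$ is traceless it is fixed by $\Pi_{\boldsymbol{\mathcal{B}}}$, returning $P_{j,k}$. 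Taking the trace of the spectral hypothesis gives $\mathrm{Tr}P = \mathrm{Tr}\Pi_{\boldsymbol{\mathcal{B}}} = d^2 - 1$, so $P$ has the correct rank; constancy of $\|B_j\|$ forces the diagonal of $P$ to be constant; and $\sum_j B_j = 0$ forces each row sum $\sum_k P_{j,k}$ to vanish. These are exactly the hypotheses of condition (\ref{hcon2}), so \cref{blochPoly} delivers (\ref{liftCon1}).

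I expect the only genuinely delicate point to be the bookkeeping in the idempotency step, namely confirming that $\Pi_{\boldsymbol{\mathcal{B}}}$ acts as the identity on each traceless $B_k$---transparent from the definition Eq.~\eqref{blochProjector}, since $\langle\!\langle\mathds{1}_d|B_k\rangle\!\rangle = \mathrm{Tr}B_k = 0$---and checking that the resulting $\lambda$ respects the bound $\lambda \leq nd/(d+1)$ guaranteeing $\kappa \leq 1$. The latter is inherited from the norm constraint via \cref{blochPoly} rather than requiring a fresh argument, so the whole proof is essentially a repackaging of the machinery already behind \cref{desCons} and \cref{blochPoly}.
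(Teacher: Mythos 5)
Your proposal is correct and follows essentially the same route as the paper: the forward direction specializes Eq.~\eqref{blochSum} and Eq.~\eqref{blochProjSum} to the homogeneous case, and the converse builds the matrix $P_{j,k}=\langle\!\langle B_{j}|B_{k}\rangle\!\rangle/\lambda$, verifies idempotency, rank, constant diagonal and vanishing row sums, and then invokes \cref{blochPoly}. The only additions are the explicit (and correct) observation that $\Pi_{\boldsymbol{\mathcal{B}}}$ fixes each traceless $B_{k}$ and the remark on the bound $\lambda\leq nd/(d+1)$, both of which the paper leaves implicit.
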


\chapter{Entanglement and Designs}
\label{entanglementConicalDesigns}

\epigraphhead[40]
	{
		\epigraph{``\dots here's Tom with the weather.''}{---\textit{Bill Hicks}\\Revelations (1993)}
	}

In \cite{Graydon2016}, the author and D.\ M.\ Appleby consider the structure of entanglement in the light of conical 2-designs. This chapter centres on that publication.

\noindent The conical 2-designs that we introduced in \cref{designsOnQuantumCones} (see \cref{c2dDef}) are naturally adapted to a information-geometric description of bipartite quantum entanglement. In this chapter, we will prove that the outcome probabilities computed from an arbitrary bipartite quantum state for a quantum measurement formed from the tensor product of local conical 2-design \textsc{povm}s capture important entanglement properties of the state in question. Specifically, we will show that conical 2-designs are intimately linked with physical theory concerning the quantification of entanglement. The notion of entanglement as a resource for quantum communication and quantum computation was pinpointed by Wootters in \cite{Wootters1998a}. It is thus desirable to quantify the amount entanglement manifest in a given quantum state. In \cite{Bennett1996}, Bennett-DiVincenzo-Smolin-Wootters proposed the \textit{monotonicity axiom}: a measure of entanglement does not increase under Local Operations and Classical Communication (\textsc{locc}) \cite{Chitambar2014}. Shortly thereafter \cite{Vedral1997}, Vedral-Plenio-Rippin-Knight introduced an axiomatic framework for entanglement measures, which in addition to monotonicity under \textsc{locc}, includes additional desiderata. For instance, Vedral-Plenio-Rippin-Knight proposed that any measure of entanglement ought to vanish on separable states. Adopting a stronger version of the monotonicity axiom,  Vidal introduced an important class of functions, namely \textit{entanglement monotones} \cite{Vidal2000}, and derived their essential properties. The main result of the present chapter, \cref{monDesThm}, establishes conical 2-designs are both necessary and sufficient for casting (regular) entanglement monotones from the length of product measurement probability vectors.

\noindent We discussed entanglement in \cref{prologue} and \cref{quantumTheory}. In order to understand the main results presented in this chapter, we shall require reminders of some additional well known definitions and facts concerning the theory of entanglement. We shall recall these preliminary notions in \cref{prelims}. Specifically, therein, we shall discuss entanglement monotones \cite{Vidal2000}. Furthermore, we shall define (see \cref{regDef}) a novel concept: \textit{regular} entanglement monotones. We conclude \cref{prelims} with a proof that the concurrence \cite{Wootters1998b}\cite{Rungta2001}\cite{Rungta2003} is regular. \cref{monotones} is devoted to \cref{monDesThm} and its proof. In \cref{witnesses}, we develop and generalize recent work \cite{Wiesniak2011}\cite{Spengler2012}\cite{Chen2014}\cite{Chen2015}\cite{Liu2015}\cite{Shen2015}\cite{Kalev2013} relating entanglement witnesses and designs. We conclude with \cref{decomps} by exploring the obvious connection between conical 2-designs and Werner states \cite{Werner1989}.

\newpage
\section{Entanglement Montones}\label{prelims}
In this section, we first collect prerequisites regarding entanglement monotones. We then introduce the notion of a \textit{regular} entanglement monotone, and we prove that the concurrence \cite{Wootters1998b}\cite{Rungta2001}\cite{Rungta2003} is regular.

\noindent An entanglement monotone is, very crudely speaking, a function defined on quantum states, whose value on a particular state is indicative of the degree of entanglement manifest therein. Naturally, one desires that such a function enjoys properties pertaining to physical notions. For instance, the function (denoted $\mathsf{E}$) ought to evaluate to a constant on all separable states, say zero, for interpretational clarity. One might also be tempted to say, furthermore, that the function ought to vanish only on separable states. Phrased more precisely, wherein and henceforth we restrict our attention to bipartite entanglement,
\begin{equation}
\mathsf{E}(\rho)=0\iff\rho\in\text{Sep}\mathcal{Q}\big(\mathcal{H}_{d_{\mathrm{A}}}\otimes\mathcal{H}_{d_{\mathrm{B}}}\big)\text{.}
\label{vanSep}
\end{equation}
In addition to Eq.~\eqref{vanSep}, one might require that $\mathsf{E}$ be invariant under local unitary transformations, \textit{i.e}.\ $\forall U\in\mathrm{U}(\mathcal{H}_{d_{\mathrm{A}}})$ and $\forall V\in\mathrm{U}(\mathcal{H}_{d_{\mathrm{B}}})$ and $\forall\rho\in\mathcal{Q}(\mathcal{H}_{d_{\mathrm{A}}}\otimes\mathcal{H}_{d_{\mathrm{B}}})$
\begin{equation}
\mathsf{E}(\rho)=\mathsf{E}\Big((U\otimes V)\rho(U^{*}\otimes V^{*})\Big)\text{,}
\label{invarLU}
\end{equation}
which is in fact\footnote{This fact follows from the transitivity of $\mathrm{U}(\mathcal{H}_{d})$ on $\mathcal{S}(\mathcal{H}_{d})$ and the Schmidt decomposition introduced in \cref{schmidtDecomp}.} a necessary and sufficient condition for $\mathsf{E}$ to remain constant on all separable pure states. Physically, Eq.~\eqref{invarLU} encapsulates the intuition that any measure of entanglement ought to be invariant under independent free evolutions of the subsystems in question. More generally, it is natural to ask that $\mathsf{E}$ not increase under quantum channels associated with \textit{local operations and classical communication}: those processes resulting from the sequential composition of physical operations performed locally on the subsystems in question, which may depend on intermediate classical communication between the local parties performing those operations. Mathematically, the definition of the set of all such \textsc{locc} \textit{channels} is rather complicated \cite{Chitambar2014}. For our purposes it will suffice to leave the definition implicit, and to note that the separable quantum channels defined in \cref{sepChan} form \cite{Watrous2011} a strict superset of all \textsc{locc} channels. The aforementioned natural demand on $\mathsf{E}$ is thus stated as follows: for all \textsc{locc} channels $\Theta$
\begin{equation}
\mathsf{E}(\rho)\geq \mathsf{E}\big(\Theta(\rho)\big)\text{.}
\label{monCon}
\end{equation}
Eqs.~\eqref{vanSep}, \eqref{invarLU}, and \eqref{monCon} are precisely the conditions expounded by Vedral-Plenio-Rippin-Knight \cite{Vedral1997} in their seminal work on axiomatic formulations of \textit{entanglement measures}. The condition given by Eq.~\eqref{monCon} is now known as the \textit{monotonicity axiom}. Historically, the monotonicity axiom was in fact suggested slightly earlier (although not in an axiomatic context) by Bennett-DiVincenzo-Smolin-Wootters in \cite{Bennett1996}. One could instead demand the stronger condition that $\mathsf{E}$ not increase \textit{on average} under \textsc{locc}, \textit{i.e}.\ for any \textsc{locc} channel with Kraus operators $A_{j}$ and with $p_{j}\equiv\mathrm{Tr}\big(A_{j}\rho A_{j}^{*}\big)$ and $\rho_{j}\equiv(A_{j}\rho A_{j}^{*})/p_{j}$ one demands that
\begin{equation}
\mathsf{E}(\rho)\geq \sum_{j}\mathsf{E}(\rho_{j})p_{j}\text{.}
\label{strongMon}
\end{equation}
Eq.~\eqref{strongMon} was proposed by Vidal \cite{Vidal2000} and is now known as the \textit{strong monotonicity axiom}. Vidal's definition of \textit{entanglement monotone} asks for the strong monotonicity axiom together with \textit{convexity}, that is for any convex decomposition $\rho=\sum_{k}\rho_{k}q_{k}$ convexity is the demand that
\begin{equation}
\mathsf{E}(\rho)\leq \sum_{k}\mathsf{E}(\rho_{k})q_{k}\text{.}
\label{conMon}
\end{equation}
Eq.~\eqref{conMon} ensures, for instance, that $\mathsf{E}$ is monotonic under information loss concerning the results of local operators; moreover, convexity turns out to be a very powerful tool for the proofs given by Vidal in \cite{Vidal2000} and for the facts that we are about to recall. First, let us promote Vidal's entanglement monotones to the status of a formal definition, wherein we continue to restrict our treatment of this subject to the bipartite case.
\begin{definition}\label{eMonDef}(\text{Vidal} \cite{Vidal2000}) \textit{An} entanglement monotone \textit{is a function} $\mathsf{E}$\textit{, with domain bipartite quantum states and range} $\mathbb{R}_{\geq 0}$\textit{, which obeys the strong monotonicity axiom (Eq.~\eqref{strongMon}) and convexity (Eq.~\eqref{conMon}).}
\end{definition}
\noindent It is a matter of convention that we take the range of an entanglement monotone to be $\mathbb{R}_{+}$. Indeed, Vidal proved that any entanglement monotone is constant on separable states, so any function obeying strong monotonicity and convexity can easily be scaled to satisfy the aforementioned range condition. Of course, any entanglement monotone enjoys the weaker form of monotonicity expressed in Eq.~\eqref{monCon}
\begin{equation}
\mathsf{E}\big(\Theta(\rho)\big)=\mathsf{E}\left(\sum_{j}\rho_{j}p_{j}\right)\leq \sum_{k}\mathsf{E}(\rho_{j})q_{j}\leq \mathsf{E}(\rho)\text{,}
\end{equation}
where the first inequality comes from convexity and the second from strong monotonicity. Additionally, Vidal proved that any entanglement monotone enjoys local unitary invariance, \textit{i.e}.\ Eq.~\eqref{invarLU}. It is not the case that any entanglement monotone vanishes \textit{only} on separable states \cite{Horodecki2009} ; however, the particular monotone that we will consider later does have this property.

\noindent Before proceeding further, we now make the simplifying assumption that $d_{\mathrm{A}}=d_{\mathrm{B}}\equiv d$. If $d_{\mathrm{A}}\neq d_{\mathrm{B}}$, then one can view the quantum state space corresponding to $\mathrm{min}\{d_{\mathrm{A}},d_{\mathrm{B}}\}$ as a restricted subset of the larger, and most of what follows can be easily generalized along these lines. It will be, however, very convenient to adopt this formal restriction for the purposes of notation, and more importantly for consistency with the statements of results that we shall recall from the literature. On this restriction, with $d_{\mathrm{A}}=d_{\mathrm{B}}=d$, we shall write $\mathrm{Tr}_{1}$ for the partial trace $\mathrm{Tr}_{\mathcal{H}_{d_{\mathrm{A}}}}$ and $\mathrm{Tr}_{2}$ for the partial trace $\mathrm{Tr}_{\mathcal{H}_{d_{\mathrm{B}}}}$ over $\mathcal{L}(\mathcal{H}_{d_{\mathrm{A}}}\otimes\mathcal{H}_{d_{\mathrm{B}}})=\mathcal{L}(\mathcal{H}_{d}\otimes\mathcal{H}_{d})$.

\begin{theorem}(Vidal \cite{Vidal2000})\label{vidalThm}\textit{The restriction of any entanglement monotone to pure states} $\text{Pur}\mathcal{Q}(\mathcal{H}_{d}\otimes\mathcal{H}_{d})$ \textit{is given by a local unitarily invariant concave function of the partial trace} $\mathrm{Tr}_{2}\big(|\Psi\rangle\langle\Psi|\big)\in\mathcal{Q}(\mathcal{H}_{d})$\textit{.}
\end{theorem}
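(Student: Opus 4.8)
The plan is to realize the restriction $\mathsf{e}$ of an arbitrary entanglement monotone $\mathsf{E}$ as a concave, unitarily invariant function $g$ of the reduced state $\rho_{\Psi}\equiv\mathrm{Tr}_{2}(|\Psi\rangle\langle\Psi|)$, exploiting the strong monotonicity axiom (Eq.~\eqref{strongMon}) together with the local unitary invariance (Eq.~\eqref{invarLU}) that, as recalled just above, every entanglement monotone in the sense of \cref{eMonDef} enjoys. First I would show that $\mathsf{e}(|\Psi\rangle\langle\Psi|)$ depends on $|\Psi\rangle$ only through $\rho_{\Psi}$. Given $|\Psi\rangle,|\Phi\rangle\in\mathcal{S}(\mathcal{H}_{d}\otimes\mathcal{H}_{d})$ with $\mathrm{Tr}_{2}|\Psi\rangle\langle\Psi|=\mathrm{Tr}_{2}|\Phi\rangle\langle\Phi|=\rho$, the Schmidt decomposition (\cref{schmidtDecomp}) exhibits both as purifications of $\rho$ into the fixed space $\mathcal{H}_{d}$; since the second factor has dimension $d\geq\mathrm{rank}\,\rho$, the purification uniqueness theorem supplies a unitary $V\in\mathrm{U}(\mathcal{H}_{d})$ with $(\mathds{1}_{d}\otimes V)|\Psi\rangle=|\Phi\rangle$. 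Local unitary invariance then forces $\mathsf{E}(|\Psi\rangle\langle\Psi|)=\mathsf{E}(|\Phi\rangle\langle\Phi|)$, so there is a well-defined function $g$ on the density matrices of $\mathcal{H}_{d}$ with $\mathsf{e}(|\Psi\rangle\langle\Psi|)=g(\rho_{\Psi})$, and every density matrix arises as some $\rho_{\Psi}$ because $\mathcal{H}_{d}$ is large enough to purify it.

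Unitary invariance of $g$ is then immediate: applying Eq.~\eqref{invarLU} with the second unitary equal to $\mathds{1}_{d}$ and noting $\mathrm{Tr}_{2}[(U\otimes\mathds{1}_{d})|\Psi\rangle\langle\Psi|(U^{*}\otimes\mathds{1}_{d})]=U\rho_{\Psi}U^{*}$ yields $g(U\rho U^{*})=g(\rho)$. The substance of the theorem is the concavity of $g$. Given a density matrix $\rho$ on $\mathcal{H}_{d}$ and any convex decomposition $\rho=\sum_{k}q_{k}\sigma_{k}$ into density matrices $\sigma_{k}$, I would fix a purification $|\Psi\rangle$ and construct a measurement acting only on the second factor whose outcome $k$ occurs with probability $q_{k}$ and leaves the \emph{still pure} bipartite state $|\Psi_{k}\rangle\propto(\mathds{1}_{d}\otimes M_{k})|\Psi\rangle$ with reduced state $\mathrm{Tr}_{2}|\Psi_{k}\rangle\langle\Psi_{k}|=\sigma_{k}$. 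Writing $|\Psi\rangle=(\rho^{1/2}\otimes\mathds{1}_{d})|\gamma\rangle$ in the Schmidt bases, the transpose identity $(\mathds{1}_{d}\otimes M)|\gamma\rangle=(M^{\mathrm{T}}\otimes\mathds{1}_{d})|\gamma\rangle$ gives the reduced state $\rho^{1/2}M^{\mathrm{T}}\overline{M}\rho^{1/2}$; choosing $M_{k}$ so that $M_{k}^{\mathrm{T}}\overline{M_{k}}=\rho^{-1/2}(q_{k}\sigma_{k})\rho^{-1/2}$ (pseudo-inverse on the support of $\rho$) both reproduces $q_{k}\sigma_{k}$ and, upon transposing, satisfies the completeness relation $\sum_{k}M_{k}^{\dagger}M_{k}=\mathds{1}_{d}$ because $\sum_{k}q_{k}\sigma_{k}=\rho$. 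Since this is a local operation on the second system it is an \textsc{locc} (indeed separable, cf.~\cref{sepChan}) process with Kraus operators $\mathds{1}_{d}\otimes M_{k}$, so strong monotonicity gives $g(\rho)=\mathsf{E}(|\Psi\rangle\langle\Psi|)\geq\sum_{k}q_{k}\,\mathsf{E}(|\Psi_{k}\rangle\langle\Psi_{k}|)=\sum_{k}q_{k}\,g(\sigma_{k})$, which is precisely concavity.

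I expect the main obstacle to be this measurement-realization step: verifying that an arbitrary decomposition $\rho=\sum_{k}q_{k}\sigma_{k}$ into possibly mixed reduced states can be induced by a single-Kraus-per-outcome measurement on the second factor that keeps each conditional bipartite state pure. The transpose trick handles this cleanly provided one checks that each $\sigma_{k}$ is supported within the range of $\rho$ (automatic from $\sum_{k}q_{k}\sigma_{k}=\rho$, which forces $q_{k}\sigma_{k}\leq\rho$) and that the bookkeeping between $M_{k}^{\dagger}M_{k}$ and $M_{k}^{\mathrm{T}}\overline{M_{k}}$ is consistent, which holds because these operators are mutual transposes. Everything else — well-definedness of $g$, its unitary invariance, and the final inequality — is then routine, and notably the concavity falls out of strong monotonicity alone, without invoking the convexity axiom (Eq.~\eqref{conMon}).
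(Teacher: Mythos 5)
Your proof is correct, and it is essentially the argument of Vidal's original paper, which is what the thesis cites in lieu of giving a proof: well-definedness and unitary invariance of $g$ from local unitary invariance plus purification uniqueness, and concavity from strong monotonicity applied to a local measurement on the purifying factor that realizes an arbitrary ensemble decomposition of the reduced state. Your closing observation that concavity of $g$ uses only Eq.~\eqref{strongMon} and not Eq.~\eqref{conMon} is also consistent with the source.
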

\noindent We refer the reader to \cite{Vidal2000} for a proof of \cref{vidalThm}, and we remind the reader that a \textit{concave function} is a function $f:\mathcal{X}\longrightarrow\mathbb{R}$, with $\mathcal{X}$ a subset of a vector space over $\mathbb{R}$, such that $\forall x,y\in\mathcal{X}$ and $\forall\lambda\in[0,1]$ $f(x\lambda+y(1-\lambda))\geq f(x)\lambda+f(y)(1-\lambda)$. A key ingredient for Vidal's proof is the Schmidt decomposition, which we shall recall presently, and which will feature prominently in some of our later proofs herein.
\begin{theorem}\cite{Nielsen2010}\label{schmidtDecomp} $\forall\Psi\in\mathcal{S}\big(\mathcal{H}_{d}\otimes\mathcal{H}_{d}\big)\exists\lambda_{1},\dots,\lambda_{d}\in[0,1]$ \textit{together with orthonormal bases} $\{e_{1},\dots,e_{d}\}\subset\mathcal{H}_{d}$ \textit{and} $\{f_{1},\dots,f_{d}\}\subset\mathcal{H}_{d}$ \textit{such that} $\Psi=(e_{1}\otimes f_{1})\lambda_{1}+\dots+(e_{d}\otimes f_{d})\lambda_{d}$\textit{; moreover,} $\lambda_{1}^{2}+\dots+\lambda_{d}^{2}=1$\textit{.}
\end{theorem}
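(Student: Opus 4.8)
The plan is to reduce the Schmidt decomposition to the singular value decomposition (SVD) of the coefficient matrix of $\Psi$, exploiting the standard correspondence between $\mathcal{H}_{d}\otimes\mathcal{H}_{d}$ and the space $\mathcal{M}_{d}(\mathbb{C})$ of $d\times d$ complex matrices. First I would fix arbitrary orthonormal bases $\{g_{1},\dots,g_{d}\}$ and $\{h_{1},\dots,h_{d}\}$ for the two tensor factors and expand $\Psi=\sum_{j,k=1}^{d}c_{j,k}\,g_{j}\otimes h_{k}$, collecting the coefficients into a matrix $C=(c_{j,k})\in\mathcal{M}_{d}(\mathbb{C})$. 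The normalization $\Psi\in\mathcal{S}(\mathcal{H}_{d}\otimes\mathcal{H}_{d})$ then reads $\mathrm{Tr}(C^{*}C)=\sum_{j,k}|c_{j,k}|^{2}=1$.

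Next I would invoke the SVD of $C$: there exist unitaries $U,V\in\mathrm{U}(\mathcal{H}_{d})$ and nonnegative reals $\lambda_{1},\dots,\lambda_{d}$ with $c_{j,k}=\sum_{l=1}^{d}U_{j,l}\lambda_{l}\overline{V_{k,l}}$. Substituting this into the expansion and regrouping the sums yields $\Psi=\sum_{l=1}^{d}\lambda_{l}\,e_{l}\otimes f_{l}$, where $e_{l}\equiv\sum_{j}U_{j,l}g_{j}$ and $f_{l}\equiv\sum_{k}\overline{V_{k,l}}h_{k}$. It then remains to verify the claimed properties: orthonormality of $\{e_{l}\}$ is immediate from unitarity of $U$, and orthonormality of $\{f_{l}\}$ follows from unitarity of $\overline{V}$, since the entrywise complex conjugate of a unitary matrix is again unitary. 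Because the $e_{l}\otimes f_{l}$ are consequently orthonormal in $\mathcal{H}_{d}\otimes\mathcal{H}_{d}$, the normalization condition collapses to $\sum_{l}\lambda_{l}^{2}=\|\Psi\|^{2}=1$, which together with $\lambda_{l}\geq 0$ forces each $\lambda_{l}\in[0,1]$.

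There is no deep obstacle here, as the result is standard; the only point demanding genuine care is the conjugation bookkeeping in the second basis — one must track that it is $\overline{V}$, and not $V$, that assembles $\{f_{l}\}$, and confirm this preserves orthonormality. I would also record the alternative route through the reduced state, which is thematically natural given that $\mathrm{Tr}_{2}\big(|\Psi\rangle\langle\Psi|\big)$ already appears in \cref{vidalThm}: diagonalize $\rho_{\mathrm{A}}\equiv\mathrm{Tr}_{2}\big(|\Psi\rangle\langle\Psi|\big)$ via the spectral theorem to obtain an eigenbasis $\{e_{l}\}$ with eigenvalues $p_{l}\geq 0$ summing to unity, set $\lambda_{l}=\sqrt{p_{l}}$, define $\phi_{j}=(\langle e_{j}|\otimes\mathds{1}_{d})\Psi$ and $f_{j}=\phi_{j}/\lambda_{j}$ for $\lambda_{j}>0$, and check orthonormality of the $f_{j}$ using the identity $\langle\phi_{k}|\phi_{j}\rangle=p_{j}\delta_{j,k}$ (which is exactly the statement that $\{e_{j}\}$ diagonalizes $\rho_{\mathrm{A}}$), extending $\{f_{j}:\lambda_{j}>0\}$ arbitrarily to a full orthonormal basis. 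Either route delivers the decomposition with the stated normalization.
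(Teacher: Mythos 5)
Your proof is correct. The paper does not prove this statement itself---it is recalled verbatim with a citation to Nielsen and Chuang---and your SVD argument (with the conjugation bookkeeping handled correctly) is precisely the standard proof given in that reference, so the two approaches coincide.
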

\noindent As a matter of terminology, one refers to $\{e_{j}\}$ and $\{f_{j}\}$ as \textit{Schmidt bases}, and to $\lambda_{j}$ as \textit{Schmidt coefficients}. From \cref{schmidtDecomp} it is easy to see that the partial trace of $|\Psi\rangle\langle\Psi|$ over either local Hilbert space is diagonal, with entries given by the squares of the Schmidt coefficients. Consequently, it is but a matter of convention that one traces over $\mathcal{H}_{d_{\mathrm{B}}}$ in the statement of \cref{vidalThm}; moreover, from this observation we see that the restriction of any entanglement monotone to pure states is given by a concave function of the squares of the Schmidt coefficients, which are invariant under local unitary transformations.

\noindent In addition to fully characterizing the restrictions of entanglement monotones to pure states, Vidal proved that \textit{any} local unitarily invariant concave function defined on the partial traces of pure states can be extended to an entanglement monotone via the convex roof construction \cite{Uhlmann2010}, which we shall detail presently.

\noindent Let
\begin{equation}
\mathsf{e}:\mathcal{S}\big(\mathcal{H}_{d}\otimes\mathcal{H}_{d})\longrightarrow\mathbb{R}::\Psi\longmapsto f\big(\mathrm{Tr}_{2}(|\Psi\rangle\langle\Psi|)\big)\text{,}
\end{equation}
where $f:\mathcal{Q}(\mathcal{H}_{d})\longrightarrow\mathbb{R}$ is unitarily invariant and concave. Next, $\forall\rho\in\mathcal{Q}(\mathcal{H}_{d}\otimes\mathcal{H}_{d})$ define
\begin{equation}
\Upsilon_{\rho}\equiv\big\{(\Psi_{j},p_{j})\in\mathcal{S}(\mathcal{H}_{d_{\mathrm{A}}}\otimes\mathcal{H}_{d_{\mathrm{B}}})\times[0,1]\;\boldsymbol{|}\;\rho=\sum_{j}|\Psi_{j}\rangle p_{j}\langle\Psi_{j}|\big\}\text{,}
\end{equation}
which is to say, colloquially, that $\Upsilon_{\rho}$ is the set of all convex decompositions of $\rho$. Now, extend $\mathsf{e}$ to
\begin{equation}
\mathsf{E}:\mathcal{Q}(\mathcal{H}_{d}\otimes\mathcal{H}_{d})\longrightarrow\mathbb{R}::\rho\longmapsto \inf_{\Upsilon_{\rho}}\sum_{j}\mathsf{e}\big(\Psi_{j}\big)p_{j}\text{.}
\end{equation}
Vidal proved \cite{Vidal2000} that any such $\mathsf{E}$ constructed in this way is an entanglement monotone. In particular, we will be interested in the concurrence \cite{Wootters1998b}\cite{Rungta2001}\cite{Rungta2003}, which is defined as follows.
\begin{definition}\label{conDef}\textit{Let} 
\begin{equation}
\mathsf{c}:\mathcal{S}\big(\mathcal{H}_{d}\otimes\mathcal{H}_{d})\longrightarrow\mathbb{R}::\Psi\longmapsto\sqrt{2-2\mathrm{Tr}\Big(\big(\mathrm{Tr}_{2}|\Psi\rangle\langle\Psi|\big)^{2}\Big)}\textit{.}
\label{pureCon}
\end{equation}
\textit{The} concurrence \textit{is}
\begin{equation}
\mathsf{C}:\mathcal{Q}(\mathcal{H}_{d}\otimes\mathcal{H}_{d})\longrightarrow\mathbb{R}_{\geq 0}::\rho\longmapsto\inf_{\Upsilon_{\rho}}\sum_{j}\mathsf{c}\big(\Psi_{j}\big)p_{j}\text{.}
\end{equation}
\end{definition}
\noindent In light of the foregoing discussion and the following lemma, the concurrence is an entanglement monotone.
\begin{lemma}\label{conLem}\textit{The function} $\mathsf{c}$ \textit{defined in Eq.~\eqref{pureCon} enjoys unitary invariance and concavity.}
\end{lemma}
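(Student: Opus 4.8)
~The plan is to prove two properties of the function $\mathsf{c}$ defined in Eq.~\eqref{pureCon}: unitary invariance and concavity. Both will follow from recognizing $\mathsf{c}$ as a function of the squares of the Schmidt coefficients, using \cref{schmidtDecomp}.

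First I would establish unitary invariance. For arbitrary $\Psi\in\mathcal{S}(\mathcal{H}_{d}\otimes\mathcal{H}_{d})$ and arbitrary $U,V\in\mathrm{U}(\mathcal{H}_{d})$, set $\Psi'=(U\otimes V)\Psi$. The key observation is that the partial trace transforms covariantly under local unitaries, namely $\mathrm{Tr}_{2}\big((U\otimes V)|\Psi\rangle\langle\Psi|(U^{*}\otimes V^{*})\big)=U\,\mathrm{Tr}_{2}\big(|\Psi\rangle\langle\Psi|\big)\,U^{*}$, which follows immediately from the definition of the partial trace on pure tensors together with the fact that $V$ is unitary (so the $V$ factors cancel under the trace over the second system). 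Since $\mathsf{c}(\Psi)$ depends on the partial trace only through $\mathrm{Tr}\big((\mathrm{Tr}_{2}|\Psi\rangle\langle\Psi|)^{2}\big)$, and the trace of a power is invariant under the similarity transformation by $U$, we conclude $\mathsf{c}(\Psi')=\mathsf{c}(\Psi)$.

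Next I would establish concavity. By \cref{schmidtDecomp}, writing the squared Schmidt coefficients as $\mu_{j}=\lambda_{j}^{2}$, the reduced state $\mathrm{Tr}_{2}|\Psi\rangle\langle\Psi|$ has eigenvalues $\mu_{1},\dots,\mu_{d}$ with $\sum_{j}\mu_{j}=1$, so that $\mathrm{Tr}\big((\mathrm{Tr}_{2}|\Psi\rangle\langle\Psi|)^{2}\big)=\sum_{j}\mu_{j}^{2}$ and hence
\begin{equation}
\mathsf{c}(\Psi)=\sqrt{2\Big(1-\sum_{j=1}^{d}\mu_{j}^{2}\Big)}=\sqrt{2\sum_{j\neq k}\mu_{j}\mu_{k}}\text{.}
\end{equation}
Since concavity of $\mathsf{c}$ as a function on pure states means concavity as a function of the reduced density operator (this is the precise sense required by \cref{vidalThm}), I would show that $\rho\longmapsto\sqrt{2\big(1-\mathrm{Tr}(\rho^{2})\big)}$ is concave on $\mathcal{Q}(\mathcal{H}_{d})$. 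Here $\rho\longmapsto\mathrm{Tr}(\rho^{2})$ is convex (indeed it is a quadratic form with positive semidefinite Hessian in the real vector space $\mathcal{L}_{\text{sa}}(\mathcal{H}_{d})$), so $g(\rho)\equiv 1-\mathrm{Tr}(\rho^{2})$ is concave and nonnegative on states. The function $s\longmapsto\sqrt{2s}$ is concave and nondecreasing on $[0,\infty)$, and the composition of a nondecreasing concave function with a concave function is concave; applying this gives concavity of $\mathsf{c}$.

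The main obstacle, or at least the point requiring the most care, will be the composition step: one must verify that $g$ is genuinely concave \emph{and} that $\sqrt{2s}$ is monotone nondecreasing, since the standard lemma that ``concave composed with concave is concave'' is false without the monotonicity hypothesis on the outer function. Both hypotheses hold here, but I would state them explicitly. A secondary subtlety is confirming that concavity ``of $\mathsf{c}$'' is to be understood in the sense of \cref{vidalThm}, i.e.\ as a function of the partial trace rather than of $\Psi$ directly (the latter does not even live in a vector space in a natural affine sense); once this is made precise the argument reduces to the convexity of $\mathrm{Tr}(\rho^{2})$, which is routine.
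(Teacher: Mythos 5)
Your proof is correct, and the unitary-invariance half matches the paper's (the paper proves $f(U\sigma U^{*})=f(\sigma)$ for $f(\sigma)=\sqrt{1-\mathrm{Tr}(\sigma^{2})}$ by cyclicity of the trace, which is the same computation you perform after pushing the local unitaries through the partial trace). For concavity, however, you take a genuinely different route. The paper's proof (in \cref{conProof}) is a direct algebraic verification: it squares both sides of the desired inequality, reduces the difference to $2\lambda(1-\lambda)(1-X)$ with $X=\mathrm{Tr}(\sigma_{1}\sigma_{2})+\sqrt{1-\mathrm{Tr}(\sigma_{1}^{2})}\sqrt{1-\mathrm{Tr}(\sigma_{2}^{2})}$, and then bounds $X\leq 1$ via the Cauchy--Schwarz inequality $\mathrm{Tr}(\sigma_{1}\sigma_{2})\leq\sqrt{\mathrm{Tr}(\sigma_{1}^{2})\mathrm{Tr}(\sigma_{2}^{2})}$. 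You instead decompose $\mathsf{c}$ as $h\circ g$ with $g(\rho)=1-\mathrm{Tr}(\rho^{2})$ concave (since $\mathrm{Tr}(\rho^{2})$ is a convex quadratic form --- explicitly, the defect is $\lambda(1-\lambda)\mathrm{Tr}\big((\rho_{1}-\rho_{2})^{2}\big)\geq 0$) and $h(s)=\sqrt{2s}$ concave and nondecreasing, then invoke the composition rule. Both arguments are valid and prove the same statement about the same function of the reduced density operator; your version is more modular and makes the mechanism transparent (you are right that the monotonicity of the outer function is the hypothesis that must be checked), while the paper's version is self-contained and, as a by-product, produces the exact expression for the concavity defect. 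You correctly flag the one genuine subtlety --- that ``concavity of $\mathsf{c}$'' means concavity of the induced function on $\mathcal{Q}(\mathcal{H}_{d})$, which is precisely how the paper's appendix restates the lemma before proving it.
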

\noindent We relegate a simple proof of \cref{conLem} to \cref{conProof}. The concurrence is simply related to the entanglement of formation \cite{Bennett1996}, and has some useful properties. Computing entanglement monotones for an arbitrary mixed state can be a difficult problem.  The concurrence has the useful feature that it has easily computable lower and upper bounds \cite{Mintert2004a}\cite{Minert2004b}.  In the case of a pair of qubits there is an explicit formula for the concurrence itself \cite{Wootters1998b}.  Another useful property is the fact that it vanishes if and only if the state is separable. As an aside, we note that the concurrence illuminates the monogamous character of entanglement \cite{Coffman2000}. Finally, the concurrence of a pure bipartite state $|\Psi\rangle\langle\Psi|$ is simply related to its Schmidt coefficients as follows
\begin{equation}
\mathsf{C}\big(|\Psi\rangle\langle\Psi|\big)=\sqrt{2-2\sum_{j=1}^{d}\lambda_{j}^{4}}\text{,}
\end{equation}
since
\begin{eqnarray}
\mathrm{Tr}\Big(\big(\mathrm{Tr}_{2}|\Psi\rangle\langle\Psi|\big)^{2}\Big)&=&\sum_{j=1}^{d}\sum_{k=1}^{d}\mathrm{Tr}\left(\Big(\mathrm{Tr}_{2}\big(|e_{j}\rangle\lambda_{j}\langle e_{k}|\otimes |f_{j}\rangle\lambda_{k}\langle f_{k}|\big)\Big)^{2}\right)\nonumber\\
&=&\sum_{j=1}^{d}\mathrm{Tr}\left(\Big(|e_{j}\rangle\lambda_{j}^{2}\langle e_{j}|\Big)^{2}\right)\nonumber\\
&=&\sum_{j=1}^{d}\lambda_{j}^{4}\text{.}
\label{conSch}
\end{eqnarray}
In light of Eq.~\eqref{conSch} and convexity of the concurrence, we see that $\forall\rho\in\mathcal{Q}(\mathcal{H}_{d}\otimes\mathcal{H}_{d})$
\begin{equation}
0\leq \mathsf{C}(\rho)\leq\sqrt{\frac{2(d-1)}{d}}\equiv x_{\text{max}}\text{.}
\label{conBounds}
\end{equation}
Eq.~\eqref{conSch} also reveals an important topological property of the concurrence, to be introduced presently. First, recall that the interior of a subset $\mathcal{X}$ of the unit sphere in $\mathcal{H}_{d}$ is all of those points in $\mathcal{X}$ for which an epsilon ball strictly contained in $\mathcal{X}$ can be centred thereupon. Formally, we have the following.
 
\begin{definition}\textit{Let subset} $\mathcal{X}\subseteq\mathcal{S}(\mathcal{H}_{d}\otimes\mathcal{H}_{d})$. \textit{The} interior \textit{of} $\mathcal{X}$ \textit{is denoted and defined as follows:} $\mathrm{int}\mathcal{X}\equiv\{\Psi\in\mathcal{X}\;\boldsymbol{|}\;\exists \epsilon >0\text{ s.t.\ } B_{\epsilon}(\Psi)\subseteq\mathcal{X}\}\textit{,}$\textit{where} 
$B_{\epsilon}(\Psi)\equiv\{\Phi\in\mathcal{S}(\mathcal{H}_{d}\otimes\mathcal{H}_{d})\;\boldsymbol{|}\;\|\Psi-\Phi\|<\epsilon\}$\textit{.}
\end{definition}

\noindent Any entanglement monotone assigns a value to each pure state. When these values are such that the set of pure states that yield a given value corresponds to a set with empty interior in $\mathcal{S}(\mathcal{H}_{d})$ via $|\Psi\rangle\langle\Psi|\leftrightarrow\Psi$, we give the monotone a special name.

\begin{definition}\label{regDef}
\textit{A} regular entanglement monotone \textit{is an entanglement monotone} $\mathsf{E}:\mathcal{Q}(\mathcal{H}_{d}\otimes\mathcal{H}_{d})\longrightarrow\mathbb{R}_{\geq 0}$ \textit{such that} $\forall x\in\mathbb{R}_{\geq 0}$ \textit{the interior of} $\mathsf{E}_{x}\equiv\{\Psi\in\mathcal{S}(\mathcal{H}_{d}\otimes\mathcal{H}_{d})\;\boldsymbol{|}\;\mathsf{E}(|\Psi\rangle\langle\Psi|)=x\}$ \textit{is empty.}
\end{definition}
\noindent The author and D.\ M.\ Appleby introduced \cref{regDef} in \cite{Graydon2016}. Colloquially speaking, regularity means that an entanglement monotone is maximally sensitive to changes in pure state entanglement. Phrased more precisely, a regular entanglement monotone partitions the unit sphere $\mathcal{S}(\mathcal{H}_{d}\otimes\mathcal{H}_{d})$ into disjoint sets, each of which coincides with its own boundary. Each such set corresponds to a distinct degree of entanglement. Incidentally, this establishes that a constant function is not a regular entanglement monotone, so every regular entanglement monotone enjoys nontrivial local unitary invariance as in the upcoming \cref{nluiDef}. We now prove that the concurrence is regular.
\begin{lemma}\label{conRegLem}\textit{The concurrence is a regular entanglement monotone.}
\begin{proof} It is very well known that the concurrence is an entanglement monotone. Theorem 2 in \cite{Vidal2000} and \cref{conLem} cements the proof. It remains to establish regularity. We will consider two mutually exclusive and exhaustive cases: separable pure states with $\mathsf{C}=0$ and entangled pure states with $0<\mathsf{C}\leq x_{\text{max}}$, where $x_{\text{max}}$ is as defined in Eq.~\eqref{conBounds}. We will abuse notation and write $\mathsf{C}(\Psi)$ for the concurrence of a pure state $|\Psi\rangle\langle\Psi|$. We first concern ourselves with the case of separable pure states and consider 
\begin{equation}
\mathsf{C}_{0}=\{\Psi\in\mathcal{S}(\mathcal{H}_{d}\otimes\mathcal{H}_{d})\;\boldsymbol{|}\;\mathsf{C}(\Psi)=0\}\text{.}
\end{equation} 
Any $\Psi\in\mathsf{C}_{0}$ is of the form $\psi\otimes\phi$ for some $\psi,\phi\in\mathcal{S}(\mathcal{H}_{d})$. We will now prove that $\mathrm{int}\mathsf{C}_{0}$ is empty via \textit{reductio ad absurdum}. Suppose $\exists \Psi\in\mathrm{int}\mathsf{C}_{0}$. Then $\exists\epsilon>0$ such that $ B_{\epsilon}(\Psi)\subseteq\mathsf{C}_{0}$. By definition $B_{\epsilon}(\Psi)\subset\mathcal{S}(\mathcal{H}_{d}\otimes\mathcal{H}_{d})$, so $\epsilon\leq 2$. We will construct a $\mathcal{S}(\mathcal{H}_{d}\otimes\mathcal{H}_{d})\ni\Phi\in B_{\epsilon}(\Psi)$ with $\mathsf{C}(\Phi)>0$. Let $\psi^{\perp},\phi^{\perp}\in\mathcal{S}(\mathcal{H}_{d})$ be such that $\langle\psi|\psi^{\perp}\rangle=0$ and $\langle\phi|\phi^{\perp}\rangle=0$. Next, let $\mathbb{R}_{>0}\ni\lambda_{1}=1-\epsilon^{2}/32$ and $\mathbb{R}_{>0}\ni\lambda_{2}=\sqrt{\epsilon^{2}/16-\epsilon^{4}/1024}$. Now, construct $\Phi=(\psi\otimes\phi)\lambda_{1}+(\psi^{\perp}\otimes\phi^{\perp})\lambda_{2}$. Clearly $\Phi\in\mathcal{S}(\mathcal{H}_{d}\otimes\mathcal{H}_{d})$. One also has that $\Phi\in B_{\epsilon}(\Psi)$, for
\begin{eqnarray}
\|\Psi-\Phi\|&=&\sqrt{\langle\Psi|\Psi\rangle+\langle\Phi|\Phi\rangle-\langle\Psi|\Phi\rangle-\langle\Phi|\Psi\rangle}\nonumber\\
&=&\sqrt{2(1-\lambda_{1})}\nonumber\\
&=&\epsilon/4\nonumber\\
&<&\epsilon\text{.}
\end{eqnarray}
However, $(\psi\otimes\phi)\lambda_{1}+(\psi^{\perp}\otimes\phi^{\perp})\lambda_{2}$ is a Schmidt decomposition, thus we see $\mathsf{C}(\Phi)=\sqrt{2}\sqrt{1-\lambda_{1}^{4}-\lambda_{2}^{4}}>0$. We have reached a contradiction. Therefore our supposition, namely that $\emptyset\neq\mathrm{int}\mathsf{C}_{0}$, is false. We conclude that the interior of $\mathsf{C}_{0}$ is empty.
\newpage
\noindent We now concern ourselves with the case of entangled pure states and consider
\begin{equation}
\mathsf{C}_{x}=\{\Psi\in\mathcal{S}(\mathcal{H}_{d}\otimes\mathcal{H}_{d})\;\boldsymbol{|}\;\mathsf{C}(\Psi)=x\}\text{,} 
\end{equation}
for some arbitrary nonzero $x\in\mathbb{R}$. If $\mathsf{C}_{x}$ is empty we are done. So we proceed with an analysis of the subcase where $\mathsf{C}_{x}$ is nonempty, in which case $x$ is bounded from above according to Eq.~\eqref{conBounds}. Again, we will prove that $\mathrm{int}\mathsf{C}_{x}$ is empty via \textit{reductio ad absurdum}. Suppose $\exists \Psi\in\mathrm{int}\mathsf{C}_{x}$. Then $\exists\epsilon>0$ such that $B_{\epsilon}(\Psi)\subseteq\mathsf{C}_{x}$. Once more, note by definition $B_{\epsilon}(\Psi)\subset\mathcal{S}(\mathcal{H}_{d}\otimes\mathcal{H}_{d})$, so $\epsilon\leq 2$. We will now construct a $\mathcal{S}(\mathcal{H}_{d}\otimes\mathcal{H}_{d})\ni\Phi\in B_{\epsilon}(\Psi)$ with $\mathsf{C}(\Phi)\neq x$. To begin, note that $\Psi\in\mathrm{int}\mathsf{C}_{x}$ implies that there exist at least two nonzero Schmidt coefficients in its Schmidt decomposition, which without the loss of generality we take to be $\lambda_{1},\lambda_{2}\in\mathbb{R}_{>0}$
\begin{equation}
\Psi=(e_{1}\otimes f_{1})\lambda_{1}+(e_{2}\otimes f_{2})\lambda_{2}+\sum_{r>2}^{d}(e_{r}\otimes f_{r})\lambda_{r}\text{.}
\end{equation} For $\Psi$ as such, introduce the constant $R^{2}\equiv\lambda_{1}^{2}+\lambda_{2}^{2}$. Note that $0<R\leq1$. It follows that $\lambda_{1}=R\cos\theta$ and $\lambda_{2}=R\sin\theta$ for some $\theta$ in the open interval $(0,\pi/2)$. Let $\tau\in(0,\pi/2)$ be such that $\tau=\text{arccos}(1-\epsilon^{2}/32)$. Next, introduce $\mathbb{R}\ni\mu_{1}=R\cos(\theta-\tau)$ and $\mathbb{R}\ni\mu_{2}=R\sin(\theta-\tau)$. Now, construct 
\begin{equation}
\Phi=(e_{1}\otimes f_{1})\mu_{1}+(e_{2}\otimes f_{2})\mu_{2}+\sum_{r>2}^{d}(e_{r}\otimes f_{r})\lambda_{r}\text{.}
\label{phiDef}
\end{equation} 
Clearly $\Phi\in\mathcal{S}(\mathcal{H}_{d}\otimes\mathcal{H}_{d})$. Also, notice that
\begin{eqnarray}
\langle\Psi|\Phi\rangle&=&R^{2}\cos\theta\cos(\theta-\tau)+R^{2}\sin\theta\sin(\theta-\tau)+\sum_{r>2}^{d}\lambda_{r}^{2}\nonumber\\
&=&R^{2}\cos\theta\cos(\theta-\tau)+R^{2}\sin\theta\sin(\theta-\tau)+1-R^{2}\nonumber\\[0.2cm]
&=&\langle\Phi|\Psi\rangle\text{.}
\end{eqnarray}
Recalling the trigonometric identity $\cos\theta\cos(\theta-\tau)+\sin\theta\sin(\theta-\tau)=\cos(\tau)$, we see that $\Phi\in B_{\epsilon}(\Psi)$, for with $R\leq 1$ we have
\begin{eqnarray}
\|\Psi-\Phi\|&=&\sqrt{\langle\Psi|\Psi\rangle+\langle\Phi|\Phi\rangle-\langle\Psi|\Phi\rangle-\langle\Phi|\Psi\rangle}\nonumber\\
&=&\sqrt{2\big(1-R^{2}\cos\theta\cos(\theta-\tau)-R^{2}\sin\theta\sin(\theta-\tau)-1+R^{2}\big)}\nonumber\\
&=&\sqrt{2R^{2}\big(1-\cos\theta\cos(\theta-\tau)-\sin\theta\sin(\theta-\tau)\big)}\nonumber\\
&=&\sqrt{2R^{2}\big(1-\cos\tau\big)}\nonumber\\
&=&\sqrt{2R^{2}\big(1-1+\epsilon^{2}/32\big)}\nonumber\\[0.2cm]
&=&R\epsilon/4\nonumber\\[0.2cm]
&\leq& \epsilon/4\nonumber\\[0.2cm]
&<& \epsilon\text{.}
\end{eqnarray}
We now seek to establish the contradiction $\mathsf{C}(\Phi)\neq x$. With $\tau$ thus far only restricted such that $\tau\in(0,\pi/2)$ with $\tau=\text{arccos}(1-\epsilon^{2}/32)$, it is very important to point out that Eq.~\eqref{phiDef} is not necessarily a Schmidt decomposition, because it is possible that one, or both of $\mu_{1}$ and $\mu_{2}$ are negative. Let us therefore introduce the following orthonormal basis
$\{\tilde{f}_{1}\equiv f_{1}\mathrm{sign}(\mu_{1}),\tilde{f}_{2}\equiv f_{2}\mathrm{sign}(\mu_{2}),f_{3},\dots,f_{d}\}$, so that $(e_{1}\otimes \tilde{f}_{1})|\mu_{1}|+(e_{2}\otimes \tilde{f_{2}})|\mu_{2}|+\sum_{r>2}^{d}(e_{r}\otimes f_{r})\lambda_{r}$ \textit{is} a Schmidt decomposition. With this observation, and from the trigonometric identities $8|\cos(\theta-\tau)|^{4}=8\cos^{4}(\theta-\tau)=3+4\cos(2\theta-2\tau)+\cos(4\theta-4\tau)$ and 
$8|\sin(\theta-\tau)|^{4}=8\sin^{4}(\theta-\tau)=3-4\cos(2\theta-2\tau)+\cos(4\theta-4\tau)$, we see that
\begin{eqnarray}
\mathsf{C}(\Phi)=\sqrt{2\left(1-|\mu_{1}|^{4}-|\mu_{2}|^{4}-\sum_{r>2}^{d}\lambda_{r}^{4}\right)}=\sqrt{2\left(1-\frac{3R^{4}}{4}-\frac{R^{4}\cos(4\theta-4\tau)}{4}-\sum_{r>2}^{d}\lambda_{r}^{4}\right)}\text{,}
\end{eqnarray}
whereas
\begin{equation}
x=\mathsf{C}(\Psi)=\sqrt{2\left(1-\frac{3R^{4}}{4}-\frac{R^{4}\cos(4\theta)}{4}-\sum_{r>2}^{d}\lambda_{r}^{4}\right)}
\end{equation}
Our supposition implies that $\cos(4\theta)=\cos(4\theta-4\tau)$. Therefore $\tau=n\pi/2$ for some $n\in\mathbb{Z}$. This, however, is impossible because $\tau=\text{arccos}(1-\epsilon^{2}/32)$ and $0<\epsilon\leq 2$. We have reached a contradiction. Therefore our supposition, namely that $\emptyset\neq\mathrm{int}\mathsf{C}_{x}$, is false. We conclude that the interior of $\mathsf{C}_{x}$ is empty. All cases have been examined. 
\end{proof}
\end{lemma}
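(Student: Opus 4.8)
The plan is to reduce the statement to an elementary fact about the pure-state concurrence and then exclude interior points of its level sets by explicit perturbation. That $\mathsf{C}$ is an entanglement monotone is already available from Vidal's \cref{vidalThm} together with \cref{conLem}, so the whole content is regularity in the sense of \cref{regDef}. The first step is to record, via the Schmidt decomposition (\cref{schmidtDecomp}), that the pure-state concurrence is a function of the squared Schmidt coefficients alone, namely $\mathsf{c}(\Psi)^{2}=2\big(1-\sum_{j=1}^{d}\lambda_{j}^{4}\big)$ as in Eq.~\eqref{conSch}. This exhibits each level set $\mathsf{C}_{x}$ as the set of unit vectors whose Schmidt coefficients obey a single polynomial constraint, which already suggests that the level sets are too thin to contain a ball.

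To prove $\mathrm{int}\,\mathsf{C}_{x}=\emptyset$ for every $x$, I would argue by contradiction, treating the separable value $x=0$ and the entangled values $x>0$ separately, since the local geometry differs. In either case, assuming $\Psi\in\mathrm{int}\,\mathsf{C}_{x}$ gives some ball $B_{\epsilon}(\Psi)\subseteq\mathsf{C}_{x}$ with $\epsilon\leq 2$, and the goal is to produce a single $\Phi\in B_{\epsilon}(\Psi)$ with $\mathsf{C}(\Phi)\neq x$. For $x=0$ the state is a product $\psi\otimes\phi$, and I would perturb it by leaking a small weight into an orthogonal product direction $\psi^{\perp}\otimes\phi^{\perp}$: set $\Phi=(\psi\otimes\phi)\lambda_{1}+(\psi^{\perp}\otimes\phi^{\perp})\lambda_{2}$ with $\lambda_{1},\lambda_{2}>0$, where $\lambda_{1}$ is chosen close to $1$ so that $\|\Psi-\Phi\|=\sqrt{2(1-\lambda_{1})}<\epsilon$ while $\lambda_{2}>0$ is forced by normalization, giving $\mathsf{C}(\Phi)=\sqrt{2}\sqrt{1-\lambda_{1}^{4}-\lambda_{2}^{4}}>0$.

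For $x>0$ the state $\Psi$ carries at least two nonzero Schmidt coefficients $\lambda_{1},\lambda_{2}$; writing $\lambda_{1}=R\cos\theta$, $\lambda_{2}=R\sin\theta$ with $R^{2}=\lambda_{1}^{2}+\lambda_{2}^{2}$ and $\theta\in(0,\pi/2)$, I would rotate the weight within the first two Schmidt components through a small angle $\tau=\arccos(1-\epsilon^{2}/32)$, replacing $(\lambda_{1},\lambda_{2})$ by $(\mu_{1},\mu_{2})=(R\cos(\theta-\tau),R\sin(\theta-\tau))$ and leaving the remaining components untouched. A short computation using $\cos\theta\cos(\theta-\tau)+\sin\theta\sin(\theta-\tau)=\cos\tau$ gives $\|\Psi-\Phi\|=R\epsilon/4<\epsilon$, so $\Phi$ remains in the ball.

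The delicate step --- and the one I would flag as the main obstacle --- is showing that this rotation actually changes the concurrence. The difficulty is that the rotated expansion need not be a Schmidt decomposition, since $\mu_{1}$ or $\mu_{2}$ can be negative; I would repair this by absorbing the signs into the basis vectors, $\tilde{f}_{i}=f_{i}\,\mathrm{sign}(\mu_{i})$, so that $|\mu_{1}|,|\mu_{2}|$ are genuine Schmidt coefficients and Eq.~\eqref{conSch} applies. Then, using the identities $8\cos^{4}\alpha=3+4\cos 2\alpha+\cos 4\alpha$ and $8\sin^{4}\alpha=3-4\cos 2\alpha+\cos 4\alpha$, the two squared concurrences $\mathsf{C}(\Phi)^{2}$ and $\mathsf{C}(\Psi)^{2}$ differ only through $\cos(4\theta-4\tau)$ versus $\cos(4\theta)$. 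Equality would force $\cos(4\theta-4\tau)=\cos(4\theta)$, hence $\tau\in\frac{\pi}{2}\mathbb{Z}$, which is incompatible with $\tau=\arccos(1-\epsilon^{2}/32)$ for $0<\epsilon\leq 2$. This contradiction disposes of both cases and yields regularity.
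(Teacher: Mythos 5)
Your proposal is correct and follows essentially the same route as the paper's own proof: the same case split into $\mathsf{C}_{0}$ and $\mathsf{C}_{x>0}$, the same perturbations (leaking weight into $\psi^{\perp}\otimes\phi^{\perp}$ for the separable case, rotating the first two Schmidt weights by $\tau=\arccos(1-\epsilon^{2}/32)$ for the entangled case), the same sign-absorption to restore a genuine Schmidt decomposition, and the same trigonometric contradiction via $\cos(4\theta)=\cos(4\theta-4\tau)$. No substantive differences to report.
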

\noindent We are now ready for the sequel.
\section{Monotones and Designs}\label{monotones}
In this section, we first state \cref{monDesThm}. Our proof of \cref{monDesThm} is founded on \cref{nluiLem}. The proof of \cref{nluiLem} is intense, hence, for motivational purposes, we first state \cref{monDesThm}, then state \cref{nluiLem}, then prove \cref{monDesThm}, and then prove \cref{nluiLem}.
\begin{theorem}\label{monDesThm}\textit{Let} $\{E_{1},\dots,E_{n}\}\subset\mathcal{E}(\mathcal{H}_{d})$ \textit{be a} \textsc{povm}\textit{. Let} $\Psi\in\mathcal{S}(\mathcal{H}_{d}\otimes\mathcal{H}_{d})$\textit{. Define} $\forall\alpha,\beta\in\{1,\dots,n\}$ \textit{the probabilities} \begin{equation}
p_{\alpha,\beta}\equiv\langle\Psi|E_{\alpha}\otimes E_{\beta}|\Psi\rangle\textit{,}
\end{equation}
\textit{and let} $\|\vec{p}\|$ \textit{denote the Euclidean norm of the corresponding probability vector, i.e.}
\begin{equation}
\|\vec{p}\|=\sqrt{\sum_{\alpha=1}^{n}\sum_{\beta=1}^{n}p_{\alpha,\beta}^{2}}\textit{.}
\end{equation}
\textit{Then} $\{E_{1},\dots,E_{n}\}$ \textit{is a conical 2-design if and only if there exists a regular entanglement monotone whose restriction to pure states} $\text{Pur}\mathcal{Q}(\mathcal{H}_{d}\otimes\mathcal{H}_{d})$ \textit{is a function of} $\|\vec{p}\|$\textit{. In particular, the latter condition is satisfied if and only if the restriction of the concurrence is a function of} $\|\vec{p}\|$\textit{. Specifically}
\begin{equation}
\mathsf{C}\big(|\Psi\rangle\langle\Psi|\big)=2\sqrt{\frac{\hspace{0.35cm}k_{s}^{2}-\|\vec{p}\|^{2}}{k_{s}^{2}-k_{a}^{2}}}\textit{,}
\label{conSimplified}
\end{equation}
\textit{where} $k_{s}>k_{a}\geq 0$ \textit{are as in \cref{desCons}.}
\end{theorem}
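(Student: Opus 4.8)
The plan is to prove the concurrence formula first, via the Schmidt decomposition, and then leverage it to establish the full biconditional. The key technical input will be \cref{nluiLem}, whose statement I have not yet seen but which the authors flag as the foundation; I anticipate it characterizes when a local-unitarily-invariant function on pure states can be recovered from $\|\vec{p}\|$, presumably asserting that $\sum_{\alpha,\beta} p_{\alpha,\beta}^2$ depends only on the Schmidt coefficients precisely when $\{E_\alpha\}$ is a conical 2-design. My first move is therefore to compute $\|\vec{p}\|^2$ explicitly for a conical 2-design and show it is a function of $\sum_j \lambda_j^4$ alone, since by \cref{conSch} this is exactly the quantity controlling the concurrence.

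\textbf{Computing the norm.} Using $p_{\alpha,\beta} = \langle\Psi| E_\alpha \otimes E_\beta |\Psi\rangle = \mathrm{Tr}\big((E_\alpha\otimes E_\beta)|\Psi\rangle\langle\Psi|\big)$, I would write $\sum_{\alpha,\beta} p_{\alpha,\beta}^2$ as a trace over four copies of $\mathcal{H}_d$ against $\sum_\alpha E_\alpha\otimes E_\alpha$ and $\sum_\beta E_\beta \otimes E_\beta$, each of which factorizes through \cref{con2Eq} as $\Pi_{\text{sym}}k_s + \Pi_{\text{asym}}k_a$. Concretely,
\begin{equation}
\sum_{\alpha=1}^n\sum_{\beta=1}^n p_{\alpha,\beta}^2 = \mathrm{Tr}\Big[\big(|\Psi\rangle\langle\Psi|\otimes|\Psi\rangle\langle\Psi|\big)\,\big(\Pi_{\text{sym}}k_s+\Pi_{\text{asym}}k_a\big)_{13}\big(\Pi_{\text{sym}}k_s+\Pi_{\text{asym}}k_a\big)_{24}\Big],
\end{equation}
where subscripts indicate which tensor factors the symmetric/antisymmetric projectors act on. Expanding $\Pi_{\text{sym/asym}} = (\mathds{1}\pm\mathrm{W})/2$ via \cref{piSymAsym} and evaluating the resulting swap-operator traces against $|\Psi\rangle\langle\Psi|^{\otimes 2}$ reduces everything to $\mathrm{Tr}(\rho_{\mathrm{A}})$, $\mathrm{Tr}(\rho_{\mathrm{A}}^2)$ and their cross terms, where $\rho_{\mathrm{A}}=\mathrm{Tr}_2|\Psi\rangle\langle\Psi|$. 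The purity $\mathrm{Tr}(\rho_{\mathrm{A}}^2) = \sum_j \lambda_j^4$ then appears, and matching coefficients should yield $\|\vec{p}\|^2 = k_s^2 - \tfrac14(k_s^2-k_a^2)\,\mathsf{C}(|\Psi\rangle\langle\Psi|)^2$, which rearranges to \cref{conSimplified}. This is the main grind; the obstacle is bookkeeping the four-partite swap traces correctly, but it is routine.

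\textbf{The biconditional.} For the forward direction I use the computation above plus \cref{conRegLem}: if $\{E_\alpha\}$ is a conical 2-design then the concurrence, already shown regular, restricts to a function of $\|\vec{p}\|$. For the converse I would argue contrapositively. If $\{E_\alpha\}$ is \emph{not} a conical 2-design, then $\sum_\alpha E_\alpha\otimes E_\alpha$ fails to commute with all $U\otimes U$ (equivalently it is not of the Schur form \cref{con2Eq}), so its anti-equivariant part is nonzero. This is where \cref{nluiLem} should do the heavy lifting: I expect it to show that $\|\vec{p}\|$ then fails to be local-unitarily invariant, meaning there exist two pure states related by a local unitary — hence of equal concurrence and, by \cref{vidalThm}, of equal value under \emph{every} entanglement monotone — but with distinct $\|\vec{p}\|$. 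Consequently no function of $\|\vec{p}\|$ can agree with a regular monotone, since regularity (\cref{regDef}) forces the monotone to separate states that local unitary invariance forces $\|\vec{p}\|$ not to separate, or vice versa. The genuinely hard part is this converse, because it requires extracting from the failure of the design condition a \emph{concrete} pair of locally-unitarily-equivalent states witnessing non-invariance of $\|\vec{p}\|$; I expect \cref{nluiLem} supplies exactly that witness, and the proof of \cref{monDesThm} will amount to assembling it with \cref{vidalThm} and the regularity of the concurrence.
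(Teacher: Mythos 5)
Your computation of $\|\vec{p}\|^{2}$ and the resulting formula \eqref{conSimplified} are correct and match the paper: your four-partite swap-trace organization is just a repackaging of the Schmidt-decomposition calculation that appears in the sufficiency half of the proof of \cref{nluiLem} (culminating in Eq.~\eqref{forSuf}), and your intermediate identity $\|\vec{p}\|^{2}=k_{s}^{2}-\tfrac{1}{4}(k_{s}^{2}-k_{a}^{2})\mathsf{C}^{2}$ checks out. The forward direction (design $\Rightarrow$ regular monotone as a function of $\|\vec{p}\|$) then follows from \cref{conRegLem} exactly as you say.

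The converse, however, has a genuine gap. First, a small one: the negation of ``conical 2-design'' is not ``$\sum_{\alpha}E_{\alpha}\otimes E_{\alpha}$ fails to commute with all $U\otimes U$.'' By \cref{c2dDef} the design condition also requires $k_{s}>k_{a}$, so a \textsc{povm} can have the Schur form of \cref{desCons} with $k_{s}=k_{a}$ (e.g.\ the trivial \textsc{povm} $\{\mathds{1}_{d}\}$), in which case $\|\vec{p}\|$ is \emph{constant} rather than non-invariant; this case must be dispatched separately (a constant $\|\vec{p}\|$ forces $\mathsf{E}$ constant on pure states, contradicting regularity). Second, and more seriously: exhibiting two pure states $\Psi$ and $\Phi=(U\otimes V)\Psi$ with $\|\vec{p}_{\Psi}\|\neq\|\vec{p}_{\Phi}\|$ but $\mathsf{E}(\Psi)=\mathsf{E}(\Phi)$ does \emph{not} contradict $\mathsf{E}=g(\|\vec{p}\|)$. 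Being a function of $\|\vec{p}\|$ only requires that equal norms give equal monotone values; it does not require $g$ to be injective, so $g$ may freely identify the two distinct norm values. The paper closes this hole with a topological argument you have omitted: the map $(U,V)\longmapsto\|\vec{p}\|$ is continuous on the connected group $\mathrm{U}(\mathcal{H}_{d})\times\mathrm{U}(\mathcal{H}_{d})$, so if it attains two values $a<b$ it attains every value in $[a,b]$; since $\mathsf{E}$ is local-unitarily invariant on pure states by \cref{vidalThm}, $g$ must be constant on all of $[a,b]$, and then the level set $\{\Psi\;\boldsymbol{|}\;\mathsf{E}(|\Psi\rangle\langle\Psi|)=x\}$ contains the nonempty open set $\{\Psi\;\boldsymbol{|}\;a<\|\vec{p}\|<b\}$, contradicting regularity as defined in \cref{regDef}. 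Without this interval/open-set step your contrapositive does not go through.
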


\noindent We now build the following definition.
\begin{definition}\label{nluiDef}\textit{A} nontrivial local unitary invariant \textit{is a function} $f:\mathcal{S}\big(\mathcal{H}_{d}\otimes\mathcal{H}_{d}\big)\longrightarrow\mathbb{R}$ \textit{such that}:
\begin{enumerate}[(i)]
\item\label{lui1} $\exists\Psi\in\mathcal{S}\big(\mathcal{H}_{d}\otimes\mathcal{H}_{d}\big)$ \textit{and} $\exists\Phi\in\mathcal{S}\big(\mathcal{H}_{d}\otimes\mathcal{H}_{d}\big)$ \textit{such that} $f(\Psi)\neq f(\Phi)$.
\item\label{lui2} $\forall\Psi\in\mathcal{S}\big(\mathcal{H}_{d}\otimes\mathcal{H}_{d}\big)$ \textit{and} $\forall U,V\in\mathrm{U}\big(\mathcal{H}_{d}\big)$\textit{it holds that} $f\big((U\otimes V)\Psi\big)=f(\Psi)$\textit{.}
\end{enumerate}
\end{definition}
\noindent Condition (\ref{lui2}) in \cref{nluiDef} simply expresses that $f$ is invariant under the action of local unitary transformations, while condition (\ref{lui1}) demands such invariance is nontrivial, ruling out constant functions. \noindent We are now ready to state the following lemma.

\begin{lemma}\label{nluiLem}\textit{Let} $\{E_{1},\dots,E_{n}\}\subset\mathcal{E}(\mathcal{H}_{d})$ \textit{be a} \textsc{povm}\textit{. Let} $\Psi\in\mathcal{S}(\mathcal{H}_{d}\otimes\mathcal{H}_{d})$\textit{. Define} $\forall\alpha,\beta\in\{1,\dots,n\}$ \textit{the probabilities} \begin{equation}
p_{\alpha,\beta}\equiv\langle\Psi|E_{\alpha}\otimes E_{\beta}|\Psi\rangle\textit{,}
\end{equation}
\textit{and let} $\|\vec{p}\|$ \textit{denote the Euclidean norm of the corresponding probability vector, i.e.}
\begin{equation}
\|\vec{p}\|=\sqrt{\sum_{\alpha=1}^{n}\sum_{\beta=1}^{n}p_{\alpha,\beta}^{2}}\textit{.}
\end{equation}
\textit{Let $f$ denote the function} 
\begin{equation}
f:\mathcal{S}(\mathcal{H}_{d}\otimes\mathcal{H}_{d})\longrightarrow\mathbb{R}_{+}::\Psi\longmapsto \|\vec{p}\|\textit{.}
\end{equation} 
\textit{Then} $f$ \textit{is a nontrivial local unitary invariant if and only if} $\{E_{1},\dots,E_{n}\}$ \textit{is a conical 2-design.}
\end{lemma}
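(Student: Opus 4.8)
The plan is to reduce everything to a single closed-form computation of $\|\vec p\|^2$. Writing $M\equiv\sum_{j=1}^{n}E_j\otimes E_j$, I would first note that
\[
\|\vec p\|^2=\sum_{\alpha,\beta}\langle\Psi|E_\alpha\otimes E_\beta|\Psi\rangle^2=\langle\tilde\Psi|\,M\otimes M\,|\tilde\Psi\rangle,
\]
where the four tensor factors of $(\mathcal{H}_d\otimes\mathcal{H}_d)^{\otimes 2}$ have been reordered so that the two copies of $E_\alpha$ act on one pair of factors and the two copies of $E_\beta$ on the other, and $|\tilde\Psi\rangle$ is the correspondingly reordered $|\Psi\rangle\otimes|\Psi\rangle$. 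Invoking \cref{schmidtDecomp}, $|\tilde\Psi\rangle$ is invariant under the \emph{simultaneous} swap of both pairs, hence lies in $(\mathcal{H}_{\text{sym}}\otimes\mathcal{H}_{\text{sym}})\oplus(\mathcal{H}_{\text{asym}}\otimes\mathcal{H}_{\text{asym}})$; a short computation with the swap operator $\mathrm{W}$ shows its components in these two blocks have squared norms $(1+P)/2$ and $(1-P)/2$, where $P\equiv\mathrm{Tr}((\mathrm{Tr}_2|\Psi\rangle\langle\Psi|)^2)$.

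For the ``if'' direction I assume $\{E_j\}$ is a conical 2-design, so by \cref{desCons} one has $M=\Pi_{\text{sym}}k_s+\Pi_{\text{asym}}k_a$ with $k_s>k_a\ge 0$. Then $M\otimes M$ acts as the scalar $k_s^2$ on the first block and $k_a^2$ on the second, giving
\[
\|\vec p\|^2=\frac{k_s^2+k_a^2}{2}+\frac{k_s^2-k_a^2}{2}\,P.
\]
Since $P$ is a local unitary invariant (it depends only on the Schmidt coefficients), so is $\|\vec p\|$, which is condition (\ref{lui2}) of \cref{nluiDef}; and $k_s>k_a$ makes the coefficient of $P$ strictly positive while $P$ is non-constant (it is $1$ on product states and $1/d$ on a maximally entangled state), so $\|\vec p\|$ takes at least two values, giving condition (\ref{lui1}). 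This display is also exactly what \cref{monDesThm} will need for the concurrence formula.

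For the converse I assume $f=\|\vec p\|$ is a nontrivial local unitary invariant and reconstruct the form of $M$. Each summand $E_j\otimes E_j$ commutes with $\mathrm{W}$, so $M$ is block diagonal, $M=\Pi_{\text{sym}}M\Pi_{\text{sym}}+\Pi_{\text{asym}}M\Pi_{\text{asym}}$, and $M\ge 0$. Evaluating on product states $\Psi=\psi\otimes\phi$ gives $\|\vec p\|^2=g(\psi)g(\phi)$ with $g(\psi)=\sum_j\langle\psi|E_j|\psi\rangle^2=\langle\psi\otimes\psi|M|\psi\otimes\psi\rangle$; because $\sum_j\langle\phi|E_j|\phi\rangle=1$ forces $g(\phi)>0$, invariance under $\psi\mapsto U\psi$ makes $g$ constant, say $g\equiv k_s$. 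As the vectors $\psi\otimes\psi$ span $\mathcal{H}_{\text{sym}}$, polarization yields $\Pi_{\text{sym}}M\Pi_{\text{sym}}=k_s\Pi_{\text{sym}}$.

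The remaining and hardest step is to show the antisymmetric block $M_a\equiv\Pi_{\text{asym}}M\Pi_{\text{asym}}$ is a multiple of $\Pi_{\text{asym}}$. Feeding in states with exactly two nonzero Schmidt coefficients, the closed form reads $\|\vec p\|^2=k_s^2(1+P)/2+\tfrac{1-P}{2}\langle\xi|M_a|\xi\rangle\langle\zeta|M_a|\zeta\rangle$, where $\xi,\zeta$ are the antisymmetric vectors built from the two Schmidt pairs of the two systems and, as the Schmidt bases run over all local unitary images, range \emph{independently} over all simple unit bivectors $u\wedge v$. Invariance at fixed $P$ then forces $\langle\xi|M_a|\xi\rangle$ to be constant, $\equiv k_a$, on all simple bivectors (unless $M_a$ annihilates them all). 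The main obstacle is to upgrade this to $M_a=k_a\Pi_{\text{asym}}$; for this I would prove the elementary lemma that a self-adjoint $N$ on $\mathcal{H}_{\text{asym}}$ with $\langle u\wedge v|N|u\wedge v\rangle=0$ for every simple bivector must vanish, by viewing the resulting identity as a sesquilinear form in $u$ and in $v$ separately and using that a Hermitian form vanishing on all vectors is zero. Applied to $N=M_a-k_a\Pi_{\text{asym}}$ this gives $M_a=k_a\Pi_{\text{asym}}$, whence $M=\Pi_{\text{sym}}k_s+\Pi_{\text{asym}}k_a$. By \cref{desCons} the ordering $k_s\ge k_a\ge 0$ is then automatic, and non-triviality of $f$ together with the closed form forces $k_s\neq k_a$, hence $k_s>k_a$, so $\{E_j\}$ is a conical 2-design.
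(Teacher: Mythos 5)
Your proposal is correct, and while the sufficiency computation is essentially the paper's (your closed form $\|\vec p\|^2=\tfrac{1}{2}(k_s^2+k_a^2)+\tfrac{1}{2}(k_s^2-k_a^2)\sum_r\lambda_r^4$ is identical to Eq.~\eqref{forSuf}), your necessity argument takes a genuinely different and considerably shorter route. The paper works in a fixed Schmidt basis, introduces the matrix elements $X^{U}_{rs,uv}$ of $\sum_\alpha U^*E_\alpha U\otimes U^*E_\alpha U$, inserts diagonal phase unitaries, and invokes the linear independence of the exponentials $e^{i\mathbf{n}\cdot\boldsymbol{\theta}}$ (\cref{applebyProp}) to isolate each phase sector, after which five exhaustive cases of index multisets are ground through to show $X^{U}_{rs,uv}=c\delta_{ru}\delta_{sv}+k\delta_{rv}\delta_{su}$. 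You instead exploit structure globally: block-diagonality of $M=\sum_j E_j\otimes E_j$ with respect to $\Pi_{\text{sym}},\Pi_{\text{asym}}$ follows at once from $[\mathrm{W},E_j\otimes E_j]=0$; the symmetric block is pinned down by evaluating on product states, using $\sum_j\langle\phi|E_j|\phi\rangle=1$ to guarantee $g(\phi)>0$ and transitivity of $\mathrm{U}(\mathcal{H}_d)$ to force $g\equiv k_s$ (the passage from $\langle\psi^{\otimes 2}|N|\psi^{\otimes 2}\rangle\equiv 0$ to $N=0$ on $\mathcal{H}_{\text{sym}}$ is the linear independence of the monomials $\bar c_i\bar c_j c_k c_l$, which you should state explicitly rather than call polarization); and the antisymmetric block is handled by rank-two Schmidt states together with your bivector lemma, whose proof via iterated complex polarization in $u$ and then $v$ is sound. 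The trade-off is that the paper's argument, though long, is entirely mechanical, whereas yours requires the two auxiliary injectivity lemmas (diagonal values on $\psi^{\otimes2}$ determine an operator on $\mathcal{H}_{\text{sym}}$; diagonal values on simple bivectors determine an operator on $\mathcal{H}_{\text{asym}}$) — both true, both elementary, and both worth recording, since together they replace eight pages of case analysis with about one.
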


\noindent\textit{Proof of \cref{monDesThm}}. Our proof of sufficiency in \cref{nluiLem} establishes necessity in this case; in particular, it follows immediately from Eq.~\eqref{forSuf} and from \cref{conDef} and Eq.~\eqref{conSch} that
\begin{equation}
\mathsf{C}(|\Psi\rangle\langle\Psi)=2\sqrt{\frac{\hspace{0.34cm}k_s^2-\|\vec{p}\|^2}{k_s^2-k_a^2} }\text{,} 
\end{equation}
with the concurrence $\mathsf{C}$ being a regular entanglement monotone in light of \cref{conRegLem}.

\noindent We now prove sufficiency. Suppose $\mathsf{E}$ is a regular entanglement monotone such that  $\mathsf{E}=g(\|\vec{p}\|)$ on $\text{Pur}\mathcal{Q}(\mathcal{H}_{d}\otimes\mathcal{H}_{d})$, for some function $g$.  
We first show that $\|\vec{p}\|$ must be a nontrivial local unitary invariant. From \cref{nluiLem}, to show that $\{E_{1},\dots,E_{n}\}$ is a conical 2-design, it suffices to show that $\|\vec{p}\|$ is a nontrivial local unitary invariant. Our proof is by contradiction.  Suppose either $\|\vec{p}\|$ is trivial or else not a local unitary invariant.  The first possibility  would imply that $\mathsf{E}$ was constant on $\text{Pur}\mathcal{Q}(\mathcal{H}_{d}\otimes\mathcal{H}_{d})$, contradicting the fact that $\mathsf{E}$ is assumed regular.  The second possibility would imply that, for a fixed $\Psi\in\mathcal{S}(\mathcal{H}_{d}\otimes\mathcal{H}_{d})$, the map $(U,V) \in \mathrm{U}(\mathcal{H}_{d})\times\mathrm{U}(\mathcal{H}_{d})\longmapsto\sqrt{\langle \Psi | \otimes \langle \Psi | \; W_{23} (N^U\otimes N^V) W_{23} \; | \Psi\rangle \otimes |\Psi \rangle}\in \mathbb{R}$ took at least two distinct values, say $a< b$, where $W_{2,3}$ is the unitary that swaps the second and third tensor factors of $\mathcal{H}_{d}\otimes\mathcal{H}_{d}\otimes\mathcal{H}_{d}\otimes\mathcal{H}_{d}$, and where for all $U\in\text{U}(\mathcal{H}_{d})$ we define $N^{U}=\sum_{\alpha=1}^{n}U^{*}E_{\alpha}U\otimes U^{*}E_{\alpha}U$ so that one simply has $\|\vec{p}\|^{2}=\langle\Psi|\otimes\langle\Psi| W_{2,3}(N^{U}\otimes N^{V})W_{2,3}|\Psi\rangle\otimes|\Psi\rangle$. The continuity of the map together with the connectedness of the group $\mathrm{U}(\mathcal{H}_{d})\times \mathrm{U}(\mathcal{H}_{d})$ would then imply that it took every value in the interval $[a,b]$.  On the other hand, the restriction of $\mathsf{E}$ to $\text{Pur}\mathcal{Q}(\mathcal{H}_{d}\otimes\mathcal{H}_{d})$ is a local unitary invariant in light of \cref{vidalThm}.  So $g$ would have to take the same constant value, call it $x$, for every element of $[a,b]$.  This would mean that the set $\{\Psi\in \mathcal{S}(\mathcal{H}_{d}\otimes\mathcal{H}_{d}) \colon \mathsf{E}(|\Psi\rangle\langle\Psi|)=x\})$ contained the non-empty open set
$\{\Psi\in \mathcal{S}(\mathcal{H}_{d}\otimes\mathcal{H}_{d})\;\boldsymbol{|}\; a < \| \vec{p} \| < b \}$, again contradicting the fact that $\mathsf{E}$ is assumed regular.
\begin{flushright}
$\qed$
\end{flushright}
\newpage
\noindent \cref{monDesThm} establishes that a \textsc{povm} is a conical 2-design if and only if there exists what we call a regular entanglement monotone whose restriction to the pure states is a function of the norm of the probability vector. Furthermore, given that $\mathsf{C}(|\Psi\rangle\langle\Psi|)^2$ is necessarily quadratic in the probabilities, it seems fair to say that Eq.~\eqref{conSimplified} is about the simplest expression conceivable.  A simple description of entanglement in terms of probabilities is important in those theoretical approaches which seek to formulate quantum mechanics in purely probabilistic language, for instance the approaches in QBism \cite{Fuchs2010}\cite{Fuchs2013} and GPTs (see for example \cite{Barnum2012}.)  The simplicity of our result---the fact that conical 2-designs are naturally adapted to the description of  entanglement---means that  it is likely to be important for other reasons also. It may be possible to strengthen theorem \cref{monDesThm}, so that it states that $\{E_{1},\dots,E_{n}\}\subset\mathcal{E}(\mathcal{H}_{d})$ is a conical 2-design if and only if there is any entanglement monotone at all (\textit{i.e}.\ not necessarily a regular one) whose restriction to pure states is a function of $\|\vec{p}\|$.  However, we have not been able to prove it.

\noindent Some preparatory work is required \cref{nluiLem}. We begin with the following proposition, which will be central in the proof of \cref{nluiLem}. Henceforth, $C_{0}([0,2\pi]^{d})$ denotes the complex vector space of continuous $\mathbb{C}$-valued functions on the Cartesian product $[0,2\pi]^{d}$. Furthermore, we shall denote vectors in the spaces $\mathbb{Z}^{d}\subset\mathbb{R}^{d}\supset[0,2\pi]^{d}$ using bold font.

\begin{proposition}\label{applebyProp} \textit{Let} $\mathbf{n}_{1},\dots,\mathbf{n}_{m}\in\mathbb{Z}^{d}$ \textit{be distinct. Then the set of $\mathbb{C}$-valued functions} $\{e^{i\mathbf{n}_{j}\cdot\boldsymbol{\theta}}\}$ \textit{are linearly independent in} $C_{0}([0,2\pi]^{d})$.
\begin{proof}
Define the following square integrable functions in $\mathcal{L}_{2}([0,2\pi]^{d})$ with $\mathbf{n}\in\mathbb{Z}^{d}$
\begin{equation}
f_{\mathbf{n}}(\boldsymbol{\theta})=\frac{1}{(2\pi^{d/2})}e^{i\mathbf{n}\cdot\boldsymbol{\theta}}
\end{equation}
We have
\begin{equation}
\langle f_{\mathbf{n}'},f_{\mathbf{n}}\rangle=\frac{1}{2\pi^{d}}\int_{0}^{2\pi}\cdots\int_{0}^{2\pi}e^{i(n_{1}-n_{1}')\theta_{1}}\cdots e^{i(n_{d}-n_{d}')\theta_{d}}d\theta_{1}\cdots d\theta_{d}
\end{equation}
Set $\tilde{n}_{j}=n_{j}-n_{j}'\in\mathbb{Z}$. Thus,
\begin{equation}
\int_{0}^{2\pi}e^{i(n_{j}-n_{j}')\theta_{j}}d\theta_{j}=\int_{0}^{2\pi}e^{i\tilde{n}_{j}\theta_{j}}d\theta_{j}=\int_{0}^{2\pi}\big(\text{cos}(\tilde{n}_{j}\theta_{j})+i\text{sin}(\tilde{n}_{j}\theta_{j})\big)d\theta_{j}=2\pi\delta_{n_{j},n_{j}'}
\end{equation}
Consequently,
\begin{equation}
\langle f_{\mathbf{n}'},f_{\mathbf{n}}\rangle=\delta_{\mathbf{n},\mathbf{n}'}
\end{equation}
So $f_{\mathbf{n}_{j}}$ are an orthonormal set in the complex inner product space $\mathcal{L}_{2}([0,2\pi]^{d})$. Thus, any finite subset with elements $f_{\mathbf{n}_{1}},\dots,f_{\mathbf{n}_{m}}$ (with $\mathbf{n}_{1},\dots,\mathbf{n}_{m}$ distinct in $\mathbb{Z}^{d}$) is a linearly independent subset in $\mathcal{L}_{2}([0,2\pi]^{d})$. Turning our attention to the complex vector space $C_{0}([0,2\pi]^{d})$, suppose that $a_{1},\dots,a_{m}\in\mathbb{C}$ are such that
\begin{equation}
\forall\boldsymbol{\theta}\in[0,2\pi]^{d}:a_{1}f_{\mathbf{n}_{1}}(\boldsymbol{\theta})+\dots+a_{m}f_{\mathbf{n}_{m}}(\boldsymbol{\theta})=0
\end{equation}
Then $a_{1},\dots,a_{m}$ are such that
\begin{equation}
a_{1}f_{\mathbf{n}_{1}}+\dots+a_{m}f_{\mathbf{n}_{m}}=0
\end{equation}
as an element in $\mathcal{L}_{2}([0,2\pi]^{d})$. By the contrapositive, linear independence in $\mathcal{L}_{2}([0,2\pi]^{d})$ implies the desired result.
\end{proof}
\end{proposition}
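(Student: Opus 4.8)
The plan is to realize the exponentials as an orthonormal system inside the Lebesgue space $\mathcal{L}_{2}([0,2\pi]^{d})$, deduce linear independence there from orthonormality, and then transfer that independence back down to the ambient space $C_{0}([0,2\pi]^{d})$ of continuous functions. This is the most direct route, since orthonormal families are automatically linearly independent, and the functions in question are tailor-made to be orthonormal on a product of intervals.

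First I would endow $\mathcal{L}_{2}([0,2\pi]^{d})$ with the inner product $\langle g,h\rangle = (2\pi)^{-d}\int_{[0,2\pi]^{d}}\overline{g(\boldsymbol{\theta})}\,h(\boldsymbol{\theta})\,d\boldsymbol{\theta}$ and set $f_{\mathbf{n}}(\boldsymbol{\theta}) = e^{i\mathbf{n}\cdot\boldsymbol{\theta}}$ for $\mathbf{n}\in\mathbb{Z}^{d}$. The computation of $\langle f_{\mathbf{n}'},f_{\mathbf{n}}\rangle$ splits, because the integrand factors coordinate-wise, into a product of one-dimensional integrals $\int_{0}^{2\pi}e^{i(n_{j}-n_{j}')\theta_{j}}\,d\theta_{j}$. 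Here is where the hypothesis is used essentially: each exponent $n_{j}-n_{j}'$ is an integer, so this single integral equals $2\pi$ when $n_{j}=n_{j}'$ and vanishes otherwise. Taking the product over the $d$ coordinates yields $\langle f_{\mathbf{n}'},f_{\mathbf{n}}\rangle = \delta_{\mathbf{n},\mathbf{n}'}$, so the whole family $\{f_{\mathbf{n}}\}_{\mathbf{n}\in\mathbb{Z}^{d}}$ is orthonormal. Because the $\mathbf{n}_{1},\dots,\mathbf{n}_{m}$ are distinct, the corresponding Gram matrix is the $m\times m$ identity, and pairing a relation $\sum_{j}a_{j}f_{\mathbf{n}_{j}}=0$ against $f_{\mathbf{n}_{k}}$ forces $a_{k}=0$ for every $k$, giving linear independence in $\mathcal{L}_{2}([0,2\pi]^{d})$.

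The final step is bookkeeping: every continuous function on the compact cube is square integrable, so $C_{0}([0,2\pi]^{d})$ embeds as a linear subspace of $\mathcal{L}_{2}([0,2\pi]^{d})$ and the $f_{\mathbf{n}_{j}}$ live in that subspace. Thus a pointwise identity $\sum_{j}a_{j}e^{i\mathbf{n}_{j}\cdot\boldsymbol{\theta}}=0$ holding for all $\boldsymbol{\theta}\in[0,2\pi]^{d}$ is in particular an equality of $\mathcal{L}_{2}$-elements, and the preceding paragraph already forces all $a_{j}=0$. Linear independence in the larger space therefore restricts to linear independence in $C_{0}([0,2\pi]^{d})$, which is the claim.

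I expect the only genuine content to reside in the orthogonality computation, and specifically in the integrality of the exponents $n_{j}-n_{j}'$: that is precisely what makes $\int_{0}^{2\pi}e^{i(n_{j}-n_{j}')\theta_{j}}\,d\theta_{j}$ vanish for distinct multi-indices. Everything downstream (orthonormal $\Rightarrow$ independent, and the $C_{0}\hookrightarrow\mathcal{L}_{2}$ transfer) is routine, so I do not anticipate any real obstacle. As an alternative one could argue by induction on $d$ from the one-dimensional Vandermonde/Fourier case, but the orthogonality argument is cleaner and avoids managing the inductive bookkeeping.
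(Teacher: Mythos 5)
Your proof is correct and follows essentially the same route as the paper's: establish orthonormality of the exponentials in $\mathcal{L}_{2}([0,2\pi]^{d})$ via the factored one-dimensional integrals (using integrality of $n_{j}-n_{j}'$), deduce linear independence there, and transfer it back to $C_{0}([0,2\pi]^{d})$ through the inclusion of continuous functions into $\mathcal{L}_{2}$. The only cosmetic difference is that you place the normalization $(2\pi)^{-d}$ in the inner product rather than in the functions themselves.
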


\noindent We are now ready for our proof \cref{nluiLem}, which runs for the next eight pages.

\noindent\textit{Proof of \cref{nluiLem}}. We first prove sufficiency. Consider arbitrary $\Psi\in\mathcal{H}_{d}$, with Schmidt decomposition
\begin{equation}
\Psi=\sum_{r=1}^{d}\big(e_{r}\otimes f_{r}\big)\lambda_{r}\text{.}
\end{equation}
as in \cref{schmidtDecomp}. Let $\{E_{1},\dots,E_{n}\}$ be a conical 2-design as in \cref{c2dDef}. Then
\begin{eqnarray}
\sum_{\alpha,\beta=1}^{n}p_{\alpha,\beta}^{2}&=&\sum_{\alpha,\beta=1}^{n}\left(\mathrm{Tr}\Big(\big(E_{\alpha}\otimes E_{\beta}\big)\sum_{r,s=1}^{d}\big(|e_{r}\rangle\langle e_{s}|\otimes|f_{r}\rangle\langle f_{s}|\big)\lambda_{r}\lambda_{s}\Big)\right)^{2}\nonumber\\
&=&\sum_{\alpha,\beta=1}^{n}\sum_{r,s,t,v=1}^{d}\lambda_{r}\lambda_{s}\lambda_{t}\lambda_{v}\langle e_{s}|E_{\alpha}|e_{r}\rangle\langle e_{v}|E_{\alpha}|e_{t}\rangle\langle f_{s}|E_{\beta}|f_{r}\rangle\langle f_{v}|E_{\beta}|f_{t}\rangle\nonumber\\
&=&\sum_{\alpha,\beta=1}^{n}\sum_{r,s,t,v=1}^{d}\Big(\big(\langle e_{s}|\otimes\langle e_{v}|\big)\big(E_{\alpha}\otimes E_{\alpha}\big)\big(|e_{r}\rangle\otimes|e_{t}\rangle\big)\Big)\Big(\big(\langle f_{s}|\otimes\langle f_{v}|\big)\big(E_{\beta}\otimes E_{\beta}\big)\big(|f_{r}\rangle\otimes|f_{t}\rangle\big)\Big)\nonumber\\
&=&\sum_{r,s,t,v=1}^{d}\lambda_{r}\lambda_{s}\lambda_{t}\lambda_{v}\left(\left(\frac{k_{s}+k_{a}}{2}\right)\delta_{rs}\delta_{tv}+\left(\frac{k_{s}-k_{a}}{2}\right)\delta_{rv}\delta_{st}\right)^{2}
\nonumber\\[0.2cm]
&=&\frac{1}{2}\left(k_{s}^{2}+k_{a}^{2}\right)+\frac{1}{2}\left(k_{s}^{2}-k_{a}^{2}\right)\sum_{r=1}^{d}\lambda_{r}^{4}\text{.}
\end{eqnarray}
Thus
\begin{equation}
\|\vec{p}\|\equiv f(\Psi)=\sqrt{\frac{1}{2}\left(k_{s}^{2}+k_{a}^{2}\right)+\frac{1}{2}\left(k_{s}^{2}-k_{a}^{2}\right)\sum_{r=1}^{d}\lambda_{r}^{4}}\text{.}
\label{forSuf}
\end{equation}
Invariance of the Schmidt coefficients under local unitary transformations and the fact that $k_{s}>k_{a}$ implies a nontrivial dependence on the Schmidt coefficients, which completes our proof of sufficiency.

\noindent We now prove necessity. Let $f$ be a nontrivial local unitary invariant as in \cref{nluiDef}. Let us \textit{choose} $|\Psi\rangle\in\mathcal{S}(\mathcal{H}_{d}\otimes\mathcal{H}_{d})$ where the Schmidt bases for each tensor factor are identical to one another, \textit{i.e}.\
\begin{equation}
\Psi=\sum_{r=1}^{d}\big(e_{r} \otimes e_{r}\big)\lambda_{r}
\end{equation}
Then
\begin{eqnarray}
\|\vec{p}_{\Psi}\|^{2}&\equiv&\sum_{\alpha=1}^{n}\sum_{\beta=1}^{n}\langle\Psi|E_{\alpha}\otimes E_{\beta}|\Psi\rangle^{2}\nonumber\\
&=&\langle\Psi|\otimes\langle\Psi|\left(\sum_{\alpha=1}^{n}\sum_{\beta=1}^{n}(E_{\alpha}\otimes E_{\beta})\otimes(E_{\alpha}\otimes E_{\beta})\right)|\Psi\rangle\otimes|\Psi\rangle\nonumber\\
&=&\sum_{r,s,u,v=1}^{d}\sum_{\alpha=1}^{n}\sum_{\beta=1}^{n}\lambda_{r}\lambda_{s}\lambda_{u}\lambda_{v}\langle e_{r}|E_{\alpha}|e_{u}\rangle\langle e_{r}|E_{\beta}|e_{u}\rangle\langle e_{s}|E_{\alpha}|e_{v}\rangle\langle e_{s}|E_{\alpha}|e_{v}\rangle\nonumber\\
&=&\sum_{r,s,u,v=1}^{d}\lambda_{r}\lambda_{s}\lambda_{u}\lambda_{v}\left(\langle e_{r}|\otimes\langle e_{s}|\left(\sum_{\alpha=1}^{d}E_{\alpha}\otimes E_{\alpha}\right)|e_{u}\rangle\otimes |e_{v}\rangle\right)^{2}
\end{eqnarray}
Let $U,V\in\mathrm{U}(\mathcal{H}_{d})$. Let $|\Phi\rangle=(U\otimes V)|\Psi\rangle$. We compute
\begin{eqnarray}
\|\vec{p}_{\Phi}\|^{2}&\equiv&\sum_{\alpha=1}^{n}\sum_{\beta=1}^{n}\langle\Phi|E_{\alpha}\otimes E_{\beta}|\Phi\rangle^{2}\nonumber\\
&=&\sum_{\alpha=1}^{n}\sum_{\beta=1}^{n}\langle\Psi|\otimes\langle\Psi|\left((U^{*}E_{\alpha}U\otimes V^{*}E_{\beta}V)\otimes(U^{*}E_{\alpha}U\otimes V^{*}E_{\beta}V)\right)|\Psi\rangle\otimes|\Psi\rangle\nonumber\\
&=&\sum_{r,s,u,v=1}^{d}\lambda_{r}\lambda_{s}\lambda_{u}\lambda_{v}\sum_{\alpha=1}^{n}\sum_{\beta=1}^{n}\langle e_{r}|U^{*}E_{\alpha}U|e_{u}\rangle\langle e_{r}|V^{*}E_{\beta}V|e_{u}\rangle\langle e_{s}|U^{*}E_{\alpha}U|e_{v}\rangle\langle e_{s}|V^{*}E_{\beta}V|e_{v}\rangle\nonumber\\
&=&\sum_{r,s,u,v=1}^{d}\lambda_{r}\lambda_{s}\lambda_{u}\lambda_{v}\left(\langle e_{r}|\otimes\langle e_{s}|\left(\sum_{\alpha=1}^{d}U^{*}E_{\alpha}U\otimes U^{*}E_{\alpha}U\right)|e_{u}\rangle\otimes |e_{v}\rangle\right)\nonumber\\
&&\hspace{2.15cm}\times\left(\langle e_{r}|\otimes\langle e_{s}|\left(\sum_{\beta=1}^{d}V^{*}E_{\beta}V\otimes V^{*}E_{\beta}V\right)|e_{u}\rangle\otimes |e_{v}\rangle\right)
\end{eqnarray}
Denote positive semi-definite
\begin{equation}
X\equiv\sum_{\alpha=1}^{n}E_{\alpha}\otimes E_{\alpha}\in\mathcal{L}(\mathcal{H}_{d}\otimes\mathcal{H}_{d})_{+}
\end{equation}
Define $\forall U\in\mathrm{U}(\mathcal{H}_{d})$
\begin{eqnarray}
X^{U}_{rs,uv}&=&\langle e_{r}|\otimes\langle e_{s}|\left(\sum_{\alpha=1}^{d}U^{*}E_{\alpha}U\otimes U^{*} E_{\alpha}U\right)|e_{u}\rangle\otimes|e_{v}\rangle\\
&=&\langle e_{r}|U^{*}\otimes\langle e_{s}|U^{*}\left(\sum_{\alpha=1}^{d}E_{\alpha}\otimes E_{\alpha}\right)U|e_{u}\rangle\otimes U|e_{v}\rangle\text{.}
\end{eqnarray}
Then, by our standing assumption: 
\begin{equation}
\|\vec{p}_{\Psi}\|=\|\vec{p}_{\Phi}\|\text{;}
\end{equation} 
hence we have $\forall\vec{\lambda}\in\mathbb{R}_{+}^{d}$ with $\|\vec{\lambda}\|=1$ and for $\forall U,V\in\mathrm{U}(\mathcal{H}_{d})$ that
\begin{equation}
\sum_{r,s,u,v=1}^{d}\lambda_{r}\lambda_{s}\lambda_{u}\lambda_{v}X_{rs,uv}^{\mathds{1}}X_{rs,uv}^{\mathds{1}}=\sum_{r,s,u,v=1}^{d}\lambda_{r}\lambda_{s}\lambda_{u}\lambda_{v}X_{rs,uv}^{U}X_{rs,uv}^{V}\text{.}
\end{equation}
Let $P\in\mathrm{U}(\mathcal{H}_{d}):|e_{r}\rangle\longmapsto e^{i\theta_{r}}|e_{r}\rangle$ for $\theta_{r}\in[0,2\pi]$. Then $VP$ is another unitary and we have
\begin{equation}
\sum_{r,s,u,v=1}^{d}\lambda_{r}\lambda_{s}\lambda_{u}\lambda_{v}X_{rs,uv}^{\mathds{1}}X_{rs,uv}^{\mathds{1}}=\sum_{r,s,u,v=1}^{d}\lambda_{r}\lambda_{s}\lambda_{u}\lambda_{v}e^{i(\theta_{u}+\theta_{v}-\theta_{r}-\theta_{s})}X_{rs,uv}^{U}X_{rs,uv}^{V}
\end{equation}
Consider a fixed ordered set $\{r;s;u;v\}\subset\{1,\dots,d\}^{4}$. Let $\{r;s;u;v\}\longmapsto\{r_{p};s_{p};u_{p};v_{p}\}$ represent the action of a permutation $p$ on the fixed ordered set $\{r;s;u;v\}$\footnote{See \cref{permApp} for further explanation on our permutation notation.}. The monomials $\lambda_{r}\lambda_{s}\lambda_{u}\lambda_{v}$ constitute a basis for the complex vector space of homogeneous polynomials of degree four. Therefore, for each fixed ordered set $\{r;s;u;v\}$, summing over all unique permutations we have for all arbitrary $\boldsymbol{\theta}\in[0,2\pi]^{d}$
\begin{equation}
\sum_{p}X_{r_{p}s_{p},u_{p}v_{p}}^{\mathds{1}}X_{r_{p}s_{p},u_{p}v_{p}}^{\mathds{1}}=\sum_{p}e^{i(\theta_{u_{p}}+\theta_{v_{p}}-\theta_{r_{p}}-\theta_{s_{p}})}X_{r_{p}s_{p},u_{p}v_{p}}^{U}X_{r_{p}s_{p},u_{p}v_{p}}^{V}
\label{standing}
\end{equation}
We seek to prove 
\begin{equation}
\sum_{\alpha=1}^{n}E_{\alpha}\otimes E_{\alpha}=k_{s}\Pi_{\text{sym}}+k_{a}\Pi_{\text{asym}}\hspace{0.1cm}\text{with}\hspace{0.1cm}k_{s}>k_{a}\geq 0\text{.}
\end{equation}
Our proof will be via case-by-case analysis.\\[0.3cm]
There are five mutually exclusive and exhaustive cases of existence for the \textit{multiset} $\{r,s,u,v\}\subset\{1,\dots,d\}^{4}$
\begin{enumerate}
\item[]\textbf{Case 1}:  $\{r,s,u,v\}=\{r,r,r,r\}$ for some $r\in\{1,\dots,d\}$
\item[]\textbf{Case 2}:  $\{r,s,u,v\}=\{r,r,r,s\}$ for some distinct $r,s\in\{1,\dots,d\}$
\item[]\textbf{Case 3}: $\{r,s,u,v\}=\{r,r,s,s\}$ for some distinct $r,s\in\{1,\dots,d\}$
\item[]\textbf{Case 4}:  $\{r,s,u,v\}=\{r,r,s,u\}$ for some distinct $r,s,u\in\{1,\dots,d\}$
\item[]\textbf{Case 5}:  $\{r,s,u,v\}=\{r,s,u,v\}$ for some distinct $r,s,u,v\in\{1,\dots,d\}$
\end{enumerate}
\noindent Prior to our case-by-case analysis, we note two mutually exclusive and exhaustive possibilities.\\[0.3cm]
$\clubsuit$ \textbf{Possibility 1} There \textit{is not} a permutation $p$ such that $\{r,s\}=\{u,v\}$ as multisets.\\[0.3cm]
This possibility covers Case 2, Case 4, and Case 5. Let $S\in\mathrm{U}(\mathcal{H}_{d}):|e_{r}\rangle\longmapsto (-1)^{\delta_{rs}}|e_{r}\rangle$. Then VS is another unitary and we have from Eq.~\eqref{standing}
\begin{equation}
0=\sum_{p}e^{i(\theta_{u_{p}}+\theta_{v_{p}}-\theta_{r_{p}}-\theta_{s_{p}})}X_{r_{p}s_{p},u_{p}v_{p}}^{U}X_{r_{p}s_{p},u_{p}v_{p}}^{V}
\label{pos1}
\end{equation}
$\clubsuit$ \textbf{Possibility 2} There \textit{is} a permutation $p$ such that $\{r,s\}=\{u,v\}$ as multisets.\\[0.3cm]  
This possibility covers Case 1 and Case 3. Eq.~\eqref{standing} reads explicitly as
\begin{eqnarray}
0&=&e^{i2(\theta_{s}-\theta_{r})}X_{rr,ss}^{U}X_{rr,ss}^{V}+e^{i2(\theta_{r}-\theta_{s})}X_{ss,rr}^{U}X_{ss,rr}^{V} \nonumber\\
&+&\Big(X_{rs,rs}^{U}X_{rs,rs}^{V}+X_{sr,sr}^{U}X_{sr,sr}^{V}+X_{rs,sr}^{U}X_{rs,sr}^{V}+X_{sr,rs}^{U}X_{sr,rs}^{V}\Big)\nonumber \\
&-&\Big(X_{rr,ss}^{\mathds{1}}X_{rr,ss}^{\mathds{1}}+X_{ss,rr}^{\mathds{1}}X_{ss,rr}^{\mathds{1}}+X_{rs,rs}^{\mathds{1}}X_{rs,rs}^{\mathds{1}}+X_{sr,sr}^{\mathds{1}}X_{sr,sr}^{\mathds{1}}+X_{rs,sr}^{\mathds{1}}X_{rs,sr}^{\mathds{1}}+X_{sr,rs}^{\mathds{1}}X_{sr,rs}^{\mathds{1}}\Big)
\label{pos2}
\end{eqnarray}
We now proceed with case-by-case analysis.\\[0.3cm]
$\spadesuit$ \underline{\textbf{Case 1} $\{r,s,u,v\}=\{r,r,r,r\}$ for some $r\in\{1,\dots,d\}$}\\[0.3cm]
In this case Eq.~\eqref{pos2} collapses to 
\begin{equation}
X_{rr,rr}^{U}X_{rr,rr}^{V}=X_{rr,rr}^{\mathds{1}}X_{rr,rr}^{\mathds{1}}
\label{case1}
\end{equation}
$X$ is a positive semi-definite. Therefore $\forall U\in\mathrm{U}(\mathcal{H}_{d}):X_{rr,rr}^{U}\geq 0$. Select $U=V$ in Eq.~\eqref{case1}:
\begin{equation}
\big(X_{rr,rr}^{U}\big)^{2}=\big(X_{rr,rr}^{\mathds{1}}\big)^{2}\implies \forall U\in\mathrm{U}(\mathcal{H}_{d}):X_{rr,rr}^{U}=\gamma_{r}\label{gamEq}
\end{equation}
for some constant $\gamma_{r}\in\mathbb{R}_{+}$. However, we can permute the basis by taking $U$ to $UW_{r,s}$; so $\gamma_{r}$ is independent of $r$. Thus, we will write simply $\gamma$.\\[0.3cm]
$\spadesuit$ \underline{\textbf{Case 2} $\{r,s,u,v\}=\{r,r,r,s\}$ for some distinct $r,s\in\{1,\dots,d\}$}\\[0.3cm]
In this case Eq.~\eqref{pos1} reads explicitly as
\begin{eqnarray}
0&=&e^{i(\theta_{s}-\theta_{r})}X_{rr,rs}^{U}X_{rr,rs}^{V}+e^{i(\theta_{s}-\theta_{r})}X_{rr,sr}^{U}X_{rr,sr}^{V}+e^{i(\theta_{r}-\theta_{s})}X_{rs,rr}^{U}X_{rs,rr}^{V}+e^{i(\theta_{r}-\theta_{s})}X_{sr,rr}^{U}X_{sr,rr}^{V}\\
&=&e^{i(\theta_{s}-\theta_{r})}\Big(X_{rr,rs}^{U}X_{rr,rs}^{V}+X_{rr,sr}^{U}X_{rr,sr}^{V}\Big)+e^{i(\theta_{r}-\theta_{s})}\Big(X_{rs,rr}^{U}X_{rs,rr}^{v}+X_{sr,rr}^{U}X_{sr,rr}^{V}\Big)
\end{eqnarray}
Therefore from Lemma 1.2\footnote{See \cref{appDMLemma}for further details on applying \cref{applebyProp}.}
\begin{eqnarray}
0&=&X_{rr,rs}^{U}X_{rr,rs}^{V}+X_{rr,sr}^{U}X_{rr,sr}^{V}\label{case2a}\\
0&=&X_{rs,rr}^{U}X_{rs,rr}^{V}+X_{sr,rr}^{U}X_{sr,rr}^{V}\label{case2b}
\end{eqnarray}
Select $U=V$ in Eq.~\eqref{case2a} and Eq.~\eqref{case2b} 
\begin{eqnarray}
0&=&\big(X_{rr,rs}^{U}\big)^{2}+\big(X_{rr,sr}^{U}\big)^{2}\label{case2a1}\\
0&=&\big(X_{rs,rr}^{U}\big)^{2}+\big(X_{sr,rr}^{U}\big)^{2}\label{case2b1}
\end{eqnarray}
By construction
\begin{eqnarray}
X_{rr,rs}^{U}&=&\sum_{\alpha=1}^{n}\langle e_{r}|U^{*}E_{\alpha}U|e_{r}\rangle\langle e_{r}|U^{*}E_{\alpha}U|e_{s}\rangle=\sum_{\alpha=1}^{n}\langle e_{r}|U^{*}E_{\alpha}U|e_{s}\rangle\langle e_{r}|U^{*}E_{\alpha}U|e_{r}\rangle=X_{rr,sr}^{U}\\
X_{rs,rr}^{U}&=&\sum_{\alpha=1}^{n}\langle e_{r}|U^{*}E_{\alpha}U|e_{r}\rangle\langle e_{s}|U^{*}E_{\alpha}U|e_{r}\rangle=\sum_{\alpha=1}^{n}\langle e_{s}|U^{*}E_{\alpha}U|e_{r}\rangle\langle e_{r}|U^{*}E_{\alpha}U|e_{r}\rangle=X_{sr,rr}^{U}
\end{eqnarray}
Thus Eq.~\eqref{case2a1} and Eq.~\eqref{case2b1} yield
\begin{equation}
\forall U\in\mathrm{U}(\mathcal{H}_{d}):X_{rr,rs}^{U}=X_{rr,sr}^{U}=X_{rs,rr}^{U}=X_{sr,rr}^{U}=0
\end{equation}
\noindent$\spadesuit$ \underline{\textbf{Case 3} $\{r,s,u,v\}=\{r,r,s,s\}$ for some distinct $r,s\in\{1,\dots,d\}$}\\[0.3cm]
In this case Eq.~\eqref{pos2} in conjunction with Lemma 1.2 yields
\begin{eqnarray}
0&=&X_{rr,ss}^{U}X_{rr,ss}^{V}\label{case3a}\\
0&=&X_{ss,rr}^{U}X_{ss,rr}^{V}\label{case3b}
\end{eqnarray}
and
\begin{eqnarray}
&&X_{rs,rs}^{U}X_{rs,rs}^{V}+X_{sr,sr}^{U}X_{sr,sr}^{V}+X_{rs,sr}^{U}X_{rs,sr}^{V}+X_{sr,rs}^{U}X_{sr,rs}^{V}\nonumber\\
&=&X_{rr,ss}^{\mathds{1}}X_{rr,ss}^{\mathds{1}}+X_{ss,rr}^{\mathds{1}}X_{ss,rr}^{\mathds{1}}+X_{rs,rs}^{\mathds{1}}X_{rs,rs}^{\mathds{1}}+X_{sr,sr}^{\mathds{1}}X_{sr,sr}^{\mathds{1}}+X_{rs,sr}^{\mathds{1}}X_{rs,sr}^{\mathds{1}}+X_{sr,rs}^{\mathds{1}}X_{sr,rs}^{\mathds{1}}\label{case3c}
\end{eqnarray}
Select $U=V$ in Eq.~\eqref{case3a} and Eq.~\eqref{case3b} to obtain
\begin{eqnarray}
\forall U\in\mathrm{U}(\mathcal{H}_{d}):X^{U}_{rr,ss}=X^{U}_{ss,rr}=0
\end{eqnarray}
In particular, $X_{rr,ss}^{\mathds{1}}=X_{ss,rr}^{\mathds{1}}=0$. By construction
\begin{eqnarray}
X_{rs,rs}^{U}&=&\sum_{\alpha=1}^{n}\langle e_{r}|U^{*}E_{\alpha}U|e_{r}\rangle\langle e_{s}|U^{*}E_{\alpha}U|e_{s}\rangle=\sum_{\alpha=1}^{n}\langle e_{s}|U^{*}E_{\alpha}U|e_{s}\rangle\langle e_{r}|U^{*}E_{\alpha}U|e_{r}\rangle=X_{sr,sr}^{U}\\
X_{rs,sr}^{U}&=&\sum_{\alpha=1}^{n}\langle e_{r}|U^{*}E_{\alpha}U|e_{s}\rangle\langle e_{s}|U^{*}E_{\alpha}U|e_{r}\rangle=\sum_{\alpha=1}^{n}\langle e_{s}|U^{*}E_{\alpha}U|e_{r}\rangle\langle e_{r}|U^{*}E_{\alpha}U|e_{s}\rangle=X_{sr,rs}^{U}
\end{eqnarray}
Thus Eq.~\eqref{case3c} reads
\begin{equation}
X_{rs,rs}^{U}X_{rs,rs}^{V}+X_{rs,sr}^{U}X_{rs,sr}^{V}=X_{rs,rs}^{\mathds{1}}X_{rs,rs}^{\mathds{1}}+X_{rs,sr}^{\mathds{1}}X_{rs,sr}^{\mathds{1}}
\label{case3c2}
\end{equation}
Also\footnote{Skip this in the paper; jump straight to the real vectors} by construction with $E_{\alpha}^{U}=U^{*}E_{\alpha}U=(E_{\alpha}^{U})^{*}$ for arbitrary $U\in\mathrm{U}(\mathcal{H}_{d})$ and $\overline{(\cdot)}$ complex conjugation
\begin{eqnarray}
\overline{X^{U}_{sr,rs}}&=&\overline{\langle e_{s}|\otimes \langle e_{r}|\left(\sum_{\alpha=1}^{n}E_{\alpha}^{U}\otimes E_{\alpha}^{U}\right)|e_{r}\rangle\otimes|e_{s}\rangle}
=\sum_{\alpha=1}^{n}\overline{\langle e_{s}|E_{\alpha}^{U}|e_{r}\rangle}\overline{\langle e_{r}|E_{\alpha}^{U}|e_{s}\rangle}
=\sum_{\alpha=1}^{n}\langle E_{\alpha}^{U}e_{r}|e_{s}\rangle\langle E_{\alpha}^{U}e_{s}|e_{r}\rangle\nonumber\\
&=&\sum_{\alpha=1}^{n}\langle e_{r}|E_{\alpha}^{U}e_{s}\rangle\langle e_{s}|E_{\alpha}^{U}e_{r}\rangle
=\sum_{\alpha=1}^{n}\langle e_{s}|E_{\alpha}^{U}e_{r}\rangle\langle e_{r}|E_{\alpha}^{U}e_{s}\rangle
=\langle e_{s}|\otimes\langle e_{r}|\left(\sum_{\alpha=1}^{n}E_{\alpha}^{U}\otimes E_{\alpha}^{U}\right)|e_{r}\rangle\otimes|e_{s}\rangle\\[0.2cm]
&=&X_{sr,rs}^{U}
\end{eqnarray}
Introduce $\vec{u},\vec{v}\in\mathbb{R}^{2}$ with
\begin{equation}
\vec{u}=\begin{pmatrix} X^{U}_{rs,rs} \\ X^{U}_{rs,sr}\end{pmatrix}\hspace{1cm}\vec{v}=\begin{pmatrix} X^{V}_{rs,rs} \\ X^{V}_{rs,sr}\end{pmatrix}\
\end{equation}
Then from Eq.~\eqref{case3c2}
\begin{equation}
\|\vec{u}\|^{2}=\|\vec{v}\|^{2}=\vec{u}\cdot\vec{v}
\end{equation}
So $\|\vec{u}\|=\|\vec{v}\|$ and the familiar equality $\vec{u}\cdot\vec{v}=\|\vec{u}\|\|\vec{v}\|\text{cos}\theta$ yields for appropriate constants $c_{rs,rs}$ and $k_{rs,sr}$
\begin{eqnarray}
\forall U\in\mathrm{U}(\mathcal{H}_{d}):X^{U}_{rs,rs}=c_{rs,rs}\\
\forall U\in\mathrm{U}(\mathcal{H}_{d}):X^{U}_{rs,sr}=k_{rs,sr}
\end{eqnarray}
Moreover, permuting the bases (this is necessary and well defined if $d>2$) via $U\longmapsto UW_{r,u}$ with 
\begin{equation}
W_{r,u}|e_{a}\rangle=\begin{cases} |e_{u}\rangle & a=r\\ |e_{r}\rangle & a=u \\ |e_{a}\rangle & \text{otherwise} \end{cases}
\end{equation}
we see that $c_{rs,rs}$ and $k_{rs,sr}$ are independent of $r$. A similar argument concludes that $c_{rs,rs}$ and $k_{rs,sr}$ are independent of $s$. Thus we denote these constants simply by $c$ and $k$. Hence
\begin{eqnarray}
\forall U\in\mathrm{U}(\mathcal{H}_{d}):&X^{U}_{rs,rs}=X_{sr,sr}^{U}=&c\in\mathbb{R}_{+}\label{cEq}\\
\forall U\in\mathrm{U}(\mathcal{H}_{d}):&X^{U}_{rs,sr}=X_{sr,rs}^{U}=&k\in\mathbb{R}\label{kEq}
\end{eqnarray}
%
$\spadesuit$ \underline{\textbf{Case 4} $\{r,s,u,v\}=\{r,r,s,u\}$ for some distinct $r,s,u\in\{1,\dots,d\}$}\\[0.3cm]
In this case Eq.~\eqref{pos1} reads explicitly as
\begin{eqnarray}
0&=&e^{i(\theta_{s}+\theta_{u}-2\theta_{r})}\big(X_{rr,su}^{U}X_{rr,su}^{V}+X_{rr,us}^{U}X_{rr,us}^{V}\big)+e^{i(2\theta_{r}-\theta_{s}-\theta_{u})}\big(X_{su,rr}^{U}X_{su,rr}^{V}+X_{us,rr}^{U}X_{us,rr}^{V}\big)\\
&+&e^{i(\theta_{u}-\theta_{s})}\big(X_{rs,ru}^{U}X_{rs,ru}^{V}+X_{sr,ur}^{U}X_{sr,ur}^{V}+X_{sr,ru}^{U}X_{sr,ru}^{V}+X_{rs,ur}^{U}X_{rs,ur}^{V})\\
&+&e^{i(\theta_{s}-\theta_{u})}\big(X_{ru,rs}^{U}X_{ru,rs}^{V}+X_{ur,sr}^{U}X_{ur,sr}^{V}+X_{ur,rs}^{U}X_{ur,rs}^{V}+X_{ru,sr}^{U}X_{ru,sr}^{V}\big)\\
\end{eqnarray}
Therefore from Lemma 1.2
\begin{eqnarray}
0&=&X_{rr,su}^{U}X_{rr,su}^{V}+X_{rr,us}^{U}X_{rr,us}^{V}\label{4a}\\
0&=&X_{su,rr}^{U}X_{su,rr}^{V}+X_{us,rr}^{U}X_{us,rr}^{V}\label{4b}\\
0&=&X_{rs,ru}^{U}X_{rs,ru}^{V}+X_{sr,ur}^{U}X_{sr,ur}^{V}+X_{sr,ru}^{U}X_{sr,ru}^{V}+X_{rs,ur}^{U}X_{rs,ur}^{V}\label{4c}\\
0&=&X_{ru,rs}^{U}X_{ru,rs}^{V}+X_{ur,sr}^{U}X_{ur,sr}^{V}+X_{ur,rs}^{U}X_{ur,rs}^{V}+X_{ru,sr}^{U}X_{ru,sr}^{V}\label{4d}
\end{eqnarray}
By construction $X_{rr,su}^{U}=X_{rr,us}^{U}$ and $X_{su,rr}^{U}=X_{us,rr}^{U}$ thus from Eq.~\eqref{4a} and Eq.~\eqref{4b} with $U=V$ we have
\begin{eqnarray}
\forall U\in\mathrm{U}(\mathcal{H}_{d}):X_{rr,su}^{U}=X_{rr,us}^{U}=X_{su,rr}^{U}=X_{us,rr}^{U}=0
\end{eqnarray}
By construction 
\begin{eqnarray}
X_{rs,ru}^{U}=X_{sr,ur}^{U}\hspace{0.2cm}\text{and}\hspace{0.2cm}X_{rs,ur}^{U}=X_{sr,ru}^{U}\\
X_{ru,rs}^{U}=X_{ur,sr}^{U}\hspace{0.2cm}\text{and}\hspace{0.2cm}X_{ru,sr}^{U}=X_{ur,rs}^{U}
\end{eqnarray}
Thus from Eq.~\eqref{4c} and Eq.~\eqref{4d}
\begin{eqnarray}
0&=&X_{rs,ru}^{U}X_{rs,ru}^{V}+X_{rs,ur}^{U}X_{rs,ur}^{V}\label{4c2}\\
0&=&X_{ru,rs}^{U}X_{ru,rs}^{V}+X_{ru,sr}^{U}X_{ru,sr}^{V}\label{4d2}
\end{eqnarray}
Observe $X_{rs,ru}^{U}=\overline{X_{ru,rs}^{U}}$. Also, observe $X_{rs,ur}=\overline{X_{ur,rs}}=\overline{X_{ru,sr}}$. Replace $V$ with $UW_{s,u}$ in Eq.~\eqref{4c2} to obtain the first two implications; the final implication following from the foregoing  observations
\begin{equation}
0=|X_{rs,ru}^{U}|^{2}+|X_{rs,ur}^{U}|^{2}\implies X_{rs,ru}^{U}=0 \text{ and } X_{rs,ur}^{U}=0
\implies X_{ru,rs}^{U}=0 \text{ and } X_{ru,sr}^{U}=0
\end{equation}
Therefore
\begin{eqnarray}
\forall U\in\mathrm{U}(\mathcal{H}_{d}):X_{rs,ru}^{U}=X_{sr,ur}^{U}=X_{sr,ru}^{U}=X_{rs,ur}^{U}=0\\
\forall U\in\mathrm{U}(\mathcal{H}_{d}):X_{ru,rs}^{U}=X_{ur,sr}^{U}=X_{ur,rs}^{U}=X_{ru,sr}^{U}=0
\end{eqnarray}
$\spadesuit$ \underline{\textbf{Case 5} $\{r,s,u,v\}=\{r,s,u,v\}$ for some distinct $r,s,u,v\in\{1,\dots,d\}$}\\[0.3cm]
In Case 5, Eq.~\eqref{pos1} reads explicitly as
\begin{eqnarray}
0&=&e^{i(\theta_{u}+\theta_{v}-\theta_{r}-\theta_{s})}\big(X_{rs,uv}^{U}X_{rs,uv}^{V}+X_{rs,vu}^{U}X_{rs,vu}^{V}+X_{sr,uv}^{U}X_{sr,uv}^{V}+X_{sr,vu}^{U}X_{sr,vu}^{V}\big)\\
&+&e^{i(\theta_{s}+\theta_{v}-\theta_{r}-\theta_{u})}\big(X_{ru,sv}^{U}X_{ru,sv}^{V}+X_{ru,vs}^{U}X_{ru,vs}^{V}+X_{ur,sv}^{U}X_{ur,sv}^{V}+X_{ur,vs}^{U}X_{ur,vs}^{V}\big)\\
&+&e^{i(\theta_{s}+\theta_{u}-\theta_{r}-\theta_{v})}\big(X_{rv,su}^{U}X_{rv,su}^{V}+X_{rv,us}^{U}X_{rv,us}^{V}+X_{vr,su}^{U}X_{vr,su}^{V}+X_{vr,us}^{U}X_{vr,us}^{V}\big)\\
&+&e^{i(\theta_{r}+\theta_{v}-\theta_{s}-\theta_{u})}\big(X_{su,rv}^{U}X_{su,rv}^{V}+X_{su,vr}^{U}X_{su,vr}^{V}+X_{us,rv}^{U}X_{us,rv}^{V}+X_{us,vr}^{U}X_{us,vr}^{V}\big)\\
&+&e^{i(\theta_{r}+\theta_{u}-\theta_{s}-\theta_{v})}\big(X_{sv,ru}^{U}X_{sv,ru}^{V}+X_{sv,ur}^{U}X_{sv,ur}^{V}+X_{vs,ru}^{U}X_{vs,ru}^{V}+X_{vs,ur}^{U}X_{vs,ur}^{V}\big)\\
&+&e^{i(\theta_{r}+\theta_{s}-\theta_{u}-\theta_{v})}\big(X_{uv,rs}^{U}X_{uv,rs}^{V}+X_{uv,sr}^{U}X_{uv,sr}^{V}+X_{vu,rs}^{U}X_{vu,rs}^{V}+X_{vu,sr}^{U}X_{vu,sr}^{V}\big)
\end{eqnarray}
Therefore from Lemma 1.2
\begin{eqnarray}
0&=&X_{rs,uv}^{U}X_{rs,uv}^{V}+X_{rs,vu}^{U}X_{rs,vu}^{V}+X_{sr,uv}^{U}X_{sr,uv}^{V}+X_{sr,vu}^{U}X_{sr,vu}^{V}\label{f1}\\
0&=&X_{ru,sv}^{U}X_{ru,sv}^{V}+X_{ru,vs}^{U}X_{ru,vs}^{V}+X_{ur,sv}^{U}X_{ur,sv}^{V}+X_{ur,vs}^{U}X_{ur,vs}^{V}\label{f2}\\
0&=&X_{rv,su}^{U}X_{rv,su}^{V}+X_{rv,us}^{U}X_{rv,us}^{V}+X_{vr,su}^{U}X_{vr,su}^{V}+X_{vr,us}^{U}X_{vr,us}^{V}\label{f3}\\
0&=&X_{su,rv}^{U}X_{su,rv}^{V}+X_{su,vr}^{U}X_{su,vr}^{V}+X_{us,rv}^{U}X_{us,rv}^{V}+X_{us,vr}^{U}X_{us,vr}^{V}\label{f4}\\
0&=&X_{sv,ru}^{U}X_{sv,ru}^{V}+X_{sv,ur}^{U}X_{sv,ur}^{V}+X_{vs,ru}^{U}X_{vs,ru}^{V}+X_{vs,ur}^{U}X_{vs,ur}^{V}\label{f5}\\
0&=&X_{uv,rs}^{U}X_{uv,rs}^{V}+X_{uv,sr}^{U}X_{uv,sr}^{V}+X_{vu,rs}^{U}X_{vu,rs}^{V}+X_{vu,sr}^{U}X_{vu,sr}^{V}\label{f6}
\end{eqnarray}
By construction
\begin{eqnarray}
X_{rs,uv}^{U}=X_{sr,vu}^{U}&\text{and}&X_{rs,vu}^{U}=X_{sr,uv}^{U}\\
X_{ru,sv}^{U}=X_{ur,vs}^{U}&\text{and}&X_{ru,vs}^{U}=X_{ur,sv}^{U}\\
X_{rv,su}^{U}=X_{vr,us}^{U}&\text{and}&X_{rv,us}^{U}=X_{vr,su}^{U}\\
X_{su,rv}^{U}=X_{us,vr}^{U}&\text{and}&X_{su,vr}^{U}=X_{us,rv}^{U}\\
X_{sv,ru}^{U}=X_{vs,ur}^{U}&\text{and}&X_{sv,ur}^{U}=X_{vs,ru}^{U}\\
X_{uv,rs}^{U}=X_{vu,sr}^{U}&\text{and}&X_{uv,sr}^{U}=X_{vu,rs}^{U}
\end{eqnarray}
Therefore Eq.~\eqref{f1} through Eq.~\eqref{f6} read
\begin{eqnarray}
0&=&X_{rs,uv}^{U}X_{rs,uv}^{V}+X_{rs,vu}^{U}X_{rs,vu}^{V}\label{f1b}\\
0&=&X_{ru,sv}^{U}X_{ru,sv}^{V}+X_{ru,vs}^{U}X_{ru,vs}^{V}\label{f2b}\\
0&=&X_{rv,su}^{U}X_{rv,su}^{V}+X_{rv,us}^{U}X_{rv,us}^{V}\label{f3b}\\
0&=&X_{su,rv}^{U}X_{su,rv}^{V}+X_{su,vr}^{U}X_{su,vr}^{V}\label{f4b}\\
0&=&X_{sv,ru}^{U}X_{sv,ru}^{V}+X_{sv,ur}^{U}X_{sv,ur}^{V}\label{f5b}\\
0&=&X_{uv,rs}^{U}X_{uv,rs}^{V}+X_{uv,sr}^{U}X_{uv,sr}^{V}\label{f6b}
\end{eqnarray}
Set $V=W_{u,v}U$ in Eq.~\eqref{f1b} to obtain
\begin{equation}
0=X_{rs,uv}^{U}X_{rs,vu}^{U}
\end{equation}
Thus $X_{rs,uv}^{U}$ [logical \textbf{or}] $X_{rs,vu}^{U}=0$. The logical \textbf{or} may be improved to an \textbf{and} by noting that $X_{rs,uv}^{U}=0$ for all $U\in\mathrm{U}(\mathcal{H}_{d})$ implies that $X_{rs,vu}^{U}=0$ as well by taking $U\longmapsto W_{v,u}U$, and \textit{visa versa}. All cases can be handled in this way and we conclude for all permutations $p$ of the fixed ordered set $\{r,s,u,v\}$
\begin{equation}
\forall U\in\mathrm{U}(\mathcal{H}_{d}):X_{r_{p}s_{p},u_{p}v_{p}}=0\text{.}
\end{equation}
This concludes our case analysis. We now relate the constant $\gamma$ in Eq.~\eqref{gamEq} to the constants $c$ and $k$ in Eqs.~\eqref{cEq} and \eqref{kEq}. Fix arbitrary $U\in\mathrm{U}(\mathcal{H}_{d})$. Select $V\in\mathrm{U}(\mathcal{H}_{d})$ such that $V=UW_{+12}$ where $W_{+12}|e_{1}\rangle=(|e_{1}\rangle+|e_{2}\rangle)/\sqrt{2}$. Then
\begin{eqnarray}
\gamma=X_{11,11}^{V}&=&\langle e_{1}|W_{+12}^{*}\otimes\langle e_{1}|W_{+12}^{*}\left(\sum_{\alpha=1}^{n}E_{\alpha}^{U}\otimes E_{\alpha}^{U}\right)W_{+12}|e_{1}\rangle\otimes W_{+12}|e_{1}\rangle\\
&=&\frac{1}{4}\sum_{r,s,u,v=1}^{2}X_{rs,uv}^{U}\\[0.3cm]
&=&\frac{1}{4}\Big(X_{11,11}^{U}+X_{11,12}^{U}+X_{11,21}^{U}+X_{11,22}^{U}+X_{12,11}^{U}+X_{12,12}^{U}+X_{12,21}^{U}+X_{12,22}^{U}\Big)\nonumber\\
&+&\frac{1}{4}\Big(X_{21,11}^{U}+X_{21,12}^{U}+X_{21,21}^{U}+X_{21,22}^{U}+X_{22,11}^{U}+X_{22,12}^{U}+X_{22,21}^{U}+X_{22,22}^{U}\Big)\\[0.4cm]
&=&\frac{1}{2}\big(\gamma+c+k\big)
\end{eqnarray}
Therefore $\gamma=c+k$ and we conclude
\begin{equation}
\forall U\in\mathrm{U}(\mathcal{H}_{d}):X^{U}_{rs,uv}=c\delta_{ru}\delta_{sv}+k\delta_{rv}\delta_{su}\text{.}
\end{equation}
In particular we select $U=\mathds{1}$ and to conclude
\begin{equation}
\sum_{\alpha=1}^{n}E_{\alpha}\otimes E_{\alpha}=k_{s}\Pi_{\text{sym}}+k_{a}\Pi_{\text{asym}}
\end{equation}
with $c=(k_{s}+k_{a})/2$ and $k=(k_{s}-k_{a})/2$. Demanding \textit{nontrivial} local unitary invariance, our proof of sufficiency then implies that $k_{s}>k_{a}\geq 0$\text{.}
\begin{flushright}
$\qed$
\end{flushright}
\newpage
\noindent Our proof of \cref{nluiLem} completes the proof of \cref{monDesThm}. 

\noindent Recently, some rather different and interesting connections have been found between projective 2-designs and entanglement \cite{Wiesniak2011}\cite{Spengler2012}\cite{Chen2014}\cite{Kalev2013}, as well as between \textsc{sim}s and \textsc{mum}s and entanglement \cite{Chen2015}\cite{Liu2015}\cite{Shen2015}. Having established our \cref{monDesThm}, we shall now move to the following section and compare it with the results in the \cite{Spengler2012}\cite{Chen2014}\cite{Chen2015}\cite{Liu2015}\cite{Shen2015}. Although the authors do not present it this way what is done in these papers is, in effect, to show that there is a natural way to construct entanglement witnesses out of \textsc{mum}s and \textsc{sim}s (as noted in \cite{Chruscinski2014}). We will generalize and develop this previous work presently.

\section{Witnesses and Designs}\label{witnesses}

\noindent We begin by reminding the reader of the following definition, and point to the related review papers \cite{Chruscinski2014} and \cite{Guhne2009}.

\begin{definition}\label{witDef} \cite{Terhal2000}\textit{An} entanglement witness \textit{is a self-adjoint linear endomorphism} $A\in\mathcal{L}_{\text{sa}}(\mathcal{H}_{d}\otimes\mathcal{H}_{d})$ \textit{such that for all separable quantum states} $\rho_{s}\in\text{Sep}\mathcal{Q}(\mathcal{H}_{d}\otimes\mathcal{H}_{d})$ \textit{one has that} $\mathrm{Tr}(A\rho_{s})\geq 0$\textit{, while there exists an entangled quantum state} $\rho_{e}$ \textit{such that} $\mathrm{Tr}(A\rho_{e})<0$ \textit{.}
\end{definition}

\noindent In order to compress our notation, let $\mathcal{S}\equiv\mathcal{S}\big(\mathcal{H}_{d}\otimes\mathcal{H}_{d}\big)$ denote the unit sphere in $\mathcal{H}_{d}\otimes\mathcal{H}_{d}$ and let $\mathcal{S}_{s}$ denote the subset of product state vectors $\psi\otimes\phi$; hence the subscript `$s$' is for separable. Given an arbitrary self-adjoint linear endomorphism $A\in\mathcal{L}_{\text{sa}}\big(\mathcal{H}\otimes \mathcal{H}\big)$ define
\begin{eqnarray}
s_A^{-} &=& \inf_{|\Psi\rangle \in \mathcal{S}_s} \bigl\{\langle \Psi|A|\Psi\rangle\bigr\}\\
s_A^{+} &=& \sup_{|\Psi\rangle \in \mathcal{S}_s} \bigl\{\langle \Psi|A|\Psi\rangle\bigr\}\\
e_A^{-} &=& \inf_{|\Psi\rangle \in \mathcal{S}} \bigl\{\langle \Psi|A|\Psi\rangle\bigr\}\\
e_A^{+} &=& \sup_{|\Psi\rangle \in \mathcal{S}} \bigl\{\langle \Psi|A|\Psi\rangle\bigr\}
\end{eqnarray}
Then $s_A^{+}I - A$ is an entanglement witness if and only if $e_A^{+}> s_A^{+}$, in which case we will say $A$ detects entanglement from above. Furthermore, $A-s_A^{-} I$ is an entanglement witness if and only if $e_A^{-}< s_A^{-}$, in which case we will say $A$ detects entanglement from below. If, on the other hand, $s_A^{\pm}= e_A^{\pm}$, then a measurement of $A$ cannot detect entanglement. In \cite{Spengler2012}\cite{Chen2014}\cite{Chen2015} the authors only consider detection from above, although detection from below is also possible, as we will see.

\noindent Let $\{E_{1},\dots,E_{n}\}$ be an arbitrary conical $2$-design and let $|e_j\rangle$ be some fixed orthonormal basis in $\mathcal{H}_{d}$.  Define
\begin{eqnarray}
N = \sum_{\alpha}E_{\alpha} \otimes E_{\alpha}\mathrm{,}\\ 
N^{\rm{PT}} = \sum_{\alpha} E_{\alpha} \otimes E^{\rm{T}}_{\alpha}\mathrm{.}
\end{eqnarray}
where $N^{\rm{PT}}$ (respectively $E^{\rm{T}}_{\alpha}$) is the partial transpose (respectively transpose) of $N$ (respectively $E_{\alpha}$) relative to the basis $|e_j\rangle$.  It follows from the definition of a conical 2-design (\cref{c2dDef}) and Eq.~\eqref{for23} that
\begin{eqnarray}
N = k_{+} \mathds{1}_{d}\otimes\mathds{1}_{d} + k_{-} \mathrm{W}\mathrm{,} &\hspace{1cm}  N^{\rm{PT}} &= k_{+}\mathds{1}_{d}\otimes\mathds{1}_{d} + d k_{-} |\Phi_{+}\rangle \langle \Phi_{+} | \mathrm{,}
\label{nWit}
\end{eqnarray}
where $k_{\pm} = (k_s \pm k_a)/2$, $\mathrm{W}$ is the unitary which swaps the two factors in $\mathcal{H}\otimes \mathcal{H}$, and $|\Phi_{+}\rangle$ is the maximally entangled ket $(1/\sqrt{d})\sum_{j} |e_j\rangle \otimes |e_j\rangle$.  One easily sees
\begin{eqnarray}
s_N^{-}  = k_{+}\mathrm{,}  &\hspace{1cm} s_N^{+} = k_{+} + k_{-}\mathrm{,}
\\
e_N^{-}  = k_{+}-k_{-}\mathrm{,} &\hspace{1cm}  e_N^{+} = k_{+}+k_{-}\mathrm{,}
\end{eqnarray}
where we have used $k_{s}>k_{a}\geq 0$. So $N$ can detect entanglement from below, but not from above.  On the other hand
\begin{eqnarray}
s_{N^{\rm{PT}}}^{-}  =k_{+} \mathrm{,} &\hspace{1.6cm} s_{N^{\rm{PT}}}^{+}  = k_{+}+k_{-}\mathrm{,}
\\
e_{N^{\rm{PT}}}^{-} = k_{+}\mathrm{,} &\hspace{1.6cm} e_{N^{\rm{PT}}}^{+}  = k_{+}+dk_{-}\mathrm{.}
\end{eqnarray}
So $N^{\rm{PT}}$ can detect entanglement from above, but not from below.  

\noindent Let us note that in \cite{Spengler2012}\cite{Chen2014}\cite{Chen2015}\cite{Liu2015} the authors calculate $s_{N}^{+}$ for \textsc{mum}s and \textsc{sim}s, but not $e_{N}^{+}$.  Consequently, they do not draw the conclusion that $N$ cannot detect entanglement from above.

\noindent In \cite{Spengler2012}\cite{Chen2014}\cite{Chen2015}\cite{Liu2015}, the authors also consider operators of the more general form
\begin{equation}
N_{\rm{gen}} = \sum_{\alpha} E_{\alpha}\otimes F_{\alpha}
\end{equation}
where $E_{\alpha}$, $F_{\alpha}$ are distinct \textsc{mum}s or \textsc{sim}s.  Calculating  $s_{N_{\rm{gen}}}^{\pm}$ and $e_{N_{\rm{gen}}}^{\pm}$ for arbitrary  pairs of conical designs is beyond our present scope.  In this connection let us note that the authors of \cite{Spengler2012}\cite{Chen2014}\cite{Chen2015}\cite{Liu2015} do not calculate them either (although \cite{Spengler2012}\cite{Chen2014}\cite{Chen2015} do calculate a bound for $s_{N_{\rm{gen}}}^{+}$) valid for pairs of \textsc{mum}s or \textsc{sim}s having the same  contraction parameter, extended in \cite{Liu2015} to pairs of \textsc{mum}s having different contraction parameters).  

\noindent Liu \emph{et al.} \cite{Liu2015}  consider the application of \textsc{mum}s to multipartite entanglement.  However, they do not show that their  bound is violated for any non-separable states.

\noindent Shen \emph{et al.}~\cite{Shen2015} consider detection criteria which are nonlinear in the density matrix: specifically, they are quadratic in the design probabilities. It is easily seen that their criteria generalize to the statements that, for any conical 2-design, and any separable density matrix $\rho$, the following quantities are both bounded above by the same constant 
\begin{eqnarray}
\Bigl| \mathrm{Tr}\bigl(N(\rho-\rho_1\otimes \rho_2)\bigr) \Bigr|\leq k_{-} \sqrt{\bigl(1-\mathrm{Tr}(\rho_1^2)\bigr)\bigl(1-\mathrm{Tr}(\rho_2^2)\bigr)}\label{quad1}\\
\Bigl| \mathrm{Tr}\bigl(N^{\rm{PT}}(\rho-\rho_1\otimes \rho_2)\bigr) \Bigr|\leq k_{-} \sqrt{\bigl(1-\mathrm{Tr}(\rho_1^2)\bigr)\bigl(1-\mathrm{Tr}(\rho_2^2)\bigr)}\label{quad2}\text{,}
\end{eqnarray} 
with $\rho_j$ is the reduced density matrix for the $j^{\rm{th}}$ subsystem. It is easily verified that every entangled state detected by the linear  criterion 
\begin{equation}
\mathrm{Tr}(\rho N^{\rm{PT}})> s_{N^{\rm{PT}}}^{+}\label{lin1}
\end{equation} 
is also detected by the corresponding quadratic criterion expressed in Eq.~\eqref{quad1}. On the other hand there exist states which are detected by the linear criterion 
\begin{equation}
\mathrm{Tr}(\rho N) < s_N^{-}\label{lin2}
\end{equation}
but which are not detected by the corresponding quadratic criterion expressed in Eq.~\eqref{quad2}. Consider, for instance, the entangled Werner state~\cite{Werner1989}
\begin{eqnarray}
\rho_{W} = \frac{2(1-p)}{d(d+1)}\Pi_{\mathrm{sym}} + \frac{2p}{d(d-1)} \Pi_{\mathrm{asym}}, &\hspace{1cm} \frac{1}{2} < p \le 1
\end{eqnarray}
The linear criterion involving $N$ in Eq.~\eqref{lin1} detects the entanglement for all values of $p$ whereas the quadratic criterion only detects it for $p> (d-1)/d$.  This does not conflict with Shen \emph{et al.}'s analysis because they do not consider the possibility of detection from below.

\noindent The witnesses corresponding to $N$ and $N^{\rm{PT}}$ in Eq.~\eqref{nWit} are
\begin{eqnarray}
N-s_N^{-}\mathds{1}_{d}\otimes\mathds{1}_{d} &=& k_{-} \mathrm{W}  
\\
s_{N^{\rm{PT}}}^{+} \mathds{1}_{d}\otimes\mathds{1}_{d}- N^{\rm{PT}} &=& k_{-}\bigl( I -d |\Phi{+}\rangle \langle \Phi_{+} |\bigr)
\end{eqnarray}
The fact that these are witnesses is, of course, well known (see, e.g., Example 3.1 in \cite{Chruscinski2014};  Eq.~(28) in \cite{Guhne2009}). The novelty of \cite{Spengler2012}\cite{Chen2014}\cite{Chen2015}\cite{Liu2015} is that they show that the witnesses have simple expressions in terms of the probabilities obtained from local measurements. The simplicity of the expressions means that they have an obvious theoretical interest. They also have obvious experimental interest. Indeed, obtaining the probabilities empirically amounts to performing full-state tomography\footnote{It is to be observed, however, that one may be able to truncate the design and still have an entanglement witness (as is shown  in Appendix A of Spengler \emph{et al.}~\cite{Spengler2012} for  \textsc{mub}s).  As Spengler \emph{et al.} note this may be experimentally useful, especially when $d$ is large.}; however, with expressions for well known entanglement monotones in given in terms of probabilities for measurement outcomes, one can calculate these witnesses directly from emperical data, and thus avoid errors introduced by the additional processing entailed in finding the best-fit quantum states.

\noindent To summarize, we have shown in the previous section that a well known entanglement monotone has a simple expression in terms of design probabilities while \cite{Spengler2012}\cite{Chen2014}\cite{Chen2015}\cite{Liu2015} have done the same for two well-known entanglement witnesses.  The crucial difference is that \cref{monDesThm} gives a condition which is both necessary and sufficient for a given \textsc{povm} to be a conical 2-design, whereas continuity means that the inequalities $s_N^{-} > e_N^{-} $, $s_{N^{\rm{PT}}}^{+} < e_{N^{\rm{PT}}}^{+}$ will remain valid even after the $\{E_{1},\dots,E_{n}\}$ have been significantly and randomly perturbed.  Consequently, the inequalities proved in \cite{Spengler2012}\cite{Chen2014}\cite{Chen2015}\cite{Liu2015} can detect entanglement for a wide variety of \textsc{povm}s which are not conical 2-designs.  In that sense the connection we exhibit, between designs and entanglement, is tighter than the one exhibited in \cite{Spengler2012}\cite{Chen2014}\cite{Chen2015}\cite{Liu2015}.

\section{Invariant States and Designs}\label{decomps}
It is to be observed that, up to normalization, the right-hand sides of Eq.~\eqref{con2Eq} and Eq.~\eqref{con3Eq} are, respectively, separable Werner states~\cite{Werner1989}, and  separable isotropic states~\cite{Horodecki1999}. This merits a little discussion.

\noindent A \textit{Werner state} is one of the form
\begin{equation}
\rho_W = \Pi_{\rm{sym}}k_{s}  + \Pi_{\rm{asym}}k_{a} 
\label{wernerDef}
\end{equation}
with
\begin{eqnarray} 
k_s &=&\frac{2(1-p)}{d(d+1)}\text{,}\\ 
k_a &=& \frac{2p}{d(d-1)}\text{,}
\end{eqnarray}
for some $p\in [0,1]$. The state is entangled if and only if $p \in (1/2,1]$. The entangled states are the ones of most interest since, in addition to Werner's original motivation, it can be shown \cite{Horodecki1999} that the existence of bound-entangled \textsc{npt} states is equivalent to the existence of bound-entangled Werner states. The existence of the latter is still an  open question, but there are indications \cite{DiVincenzo2000a}\cite{Dur2000} that the entanglement becomes bound as one approaches the cross-over point at $p=1/2$. As we remarked in the introduction conical designs can be used to provide simple decompositions of all Werner states, both separable and entangled (although it remains to be seen how interesting they are). However, we will here confine ourselves to the point, which is already apparent from the definition, that they provide simple convex decompositions of some of the separable states. In this connection let us observe that, although less interesting, the problem of decomposing a separable Werner state is not straightforward, and  has attracted some notice in the literature \cite{Azuma2006}\cite{Unanyan2007}\cite{Tsai2013}. Conical 2-designs cast additional light on the problem.

\noindent Let us define a \textit{symmetric convex decomposition} of a separable Werner state to be one of the form
\begin{equation}
\rho_W = \sum_{j=1}^m (\rho_j \otimes \rho_j)\lambda_j
\label{wernerDecompA}
\end{equation}
where $\rho_j \in \mathcal{Q}(\mathcal{H}_{d})$, $\lambda_j \in (0,1]$ and $\sum_j \lambda_j = 1$.  We will say that the decomposition is \textit{homogeneous} if $\lambda_j = 1/m$ for all $j$, and that it is \textit{pure} if the $\rho_j$ are all pure.  It follows from \cref{desCons} that $\rho_W$ does not have a symmetric convex decomposition if  $k_s < k_a$ or, equivalently, if $p > (d-1)/(2d)$.  If $p=(d-1)/(2d)$ then $\rho_W$ is  the maximally mixed state, so the existence of a symmetric convex decomposition is trivial.  If $p< (d-1)/(2d)$ then Eq.~(\ref{wernerDecompA}) is equivalent to the statement that the operators $A_j = \rho_j\sqrt{\lambda_j}$ are a conical 2-design.  

\noindent \cref{existThm} establishes that homogeneous conical 2-designs exist for all $d$ and all $\kappa \in (0,1]$.   We conclude that a separable Werner state has a symmetric convex decomposition if and only if $0 \le p \le (d-1)/(2d)$.  Furthermore, if $p$ is in this interval the decomposition can always be chosen to be homogeneous.  Finally, \cref{rank1Lem} that a  conical design is rank 1 if and only if $k_a =0$ (in which case it is essentially the same thing as complex 2-projective design.)  So $\rho_W$ has a pure symmetric convex decomposition if and only if $p=0$.  

\noindent We have thus shown that the interval $0\le p \le 1/2$ splits into two sub-intervals separated by the maximally mixed state at $p=(d-1)/2d$. States in the sub-interval $0 \le p \le (d-1)/(2d)$ do have  symmetric convex decompositions; states in the subinterval $ (d-1)/(2d) < p \le 1/2$ do not.
One motivation for studying separable Werner states is the hope that, by looking at the states immediately  below the cross-over at $p=1/2$, one may get some insight into the bound-entangled states conjectured to exist just above it.  From this point of view the most interesting feature of our discussion is the negative statement, that states immediately below the cross-over cannot be put into the simple form of Eq.~(\ref{wernerDecompA}).

\noindent In the case $p< (d-1)/(2d)$ we define an \textit{ideal} convex decomposition to be one which  is symmetric, homogeneous and such that  $m$ achieves its minimum value of $d^2$.  A homogeneous conical 2-design is a \textsc{povm} up to rescaling, so we can use \cref{minConPovm} to conclude that an ideal convex decomposition must be of the form
\begin{equation}
\rho_W = \sum_{j=1}^{d^2} E_j \otimes E_j
\end{equation}
where the $E_j$ constitute a \textsc{sim}.  In view of our discussion in \cref{hc2d} this gives us the following reformulation of the \textsc{sic}-existence problem:  A \textsc{sic} exists in dimension $d$ if and only if every Werner state with $0 \le p < (d-1)/2d$ has an ideal convex decomposition. 

\noindent Conical 2-designs can also be used to give simple decompositions of a subset of the isotropic states introduced in \cite{Horodecki1999}. An \textit{isotropic state} is one of the form
\begin{equation}
\rho_{I} =  \frac{1-F}{d^2-1}  I +\frac{d^2F -1}{d^2-1} |\Phi_{+}\rangle \langle \Phi_{+} |
\end{equation}
with $F\in [0,1]$ and $|\Phi_{+}\rangle$ the maximally entangled state defined in Eq.~\eqref{maxKet}. They are separable for $F\in [0,1/d]$ and entangled for $F \in (1/d, 1]$ (they are not, however, bound-entangled for any value of $F$).  
We define a \textit{symmetric convex decomposition} of an isotropic state to be one of the form
\begin{equation}
\rho_{I} = \sum_{j=1}^m (\rho_j \otimes \overline{\rho_j})\lambda_{j}
\label{isotropicDecompA}
\end{equation}
where $\rho_j \in \mathcal{Q}(\mathcal{H}_{d})$, $\lambda_j \in (0,1]$, and $\sum_j \lambda_j =1$. Symmetric convex decompositions of isotropic states are in bijective correspondence with symmetric convex decompositions of Werner states. In fact, let 
\begin{equation}
\Pi_{\rm{sym}}k_{s} + \Pi_{\rm{asym}}k_{a} = \sum_j (\rho_j \otimes \rho_j)\lambda_j 
\end{equation}
be a symmetric convex decomposition of a Werner state with $p$ in the interval $[0,(d-1)/(2d)]$.  Taking the partial transpose on both sides gives
\begin{equation}
k_{+} I + d k_{-} |\Phi_{+}\rangle \langle \Phi_{+} | =  \sum_j (\rho_j \otimes \overline{\rho_j})\lambda_j 
\end{equation}
where $k_{\pm} = (k_s \pm k_a)/2$.  The fact that $0 \le p \le (d-1)/(2d)$ means $1/(d(d+1)) \le k_{+} \le 1/d^2$.  So we obtain in this way a symmetric convex decomposition of every isotropic state with $1/d^2 \le F \le 1/d$.  Reversing the argument it can be seen that, if one had a symmetric convex decomposition of an isotropic state with $0 \le F < 1/d^2$, then taking the partial transpose would give a symmetric convex decomposition of a Werner state with $(d-1)/(2d) < p \le 1/2$---which we have shown to be impossible.

\noindent Similarly to the Werner case, we see that the interval $0 \le F \le 1/d$ corresponding to the separable states splits into two sub-intervals, situated either side of the maximally mixed state at $F=1/d^2$.   States in the sub-interval $1/d^2 \le F \le 1/d$ do  have  symmetric convex decompositions;  states  in the sub-interval $ 0 \le F < 1/d^2$ do not.  The difference with the Werner case is that it is now the states \emph{with} a symmetric convex decomposition which lie next to the set of entangled states.

\chapter{Conclusion (Part I)}
\label{conclusionPartI}

\epigraphhead[40]
	{
		\epigraph{``You gotta stop and think about it, to really get the pleasure about the complexity --- the inconceivable nature of nature.''}{---\textit{Richard P.\ Feynman}\\ Fun to Imagine (1983)}
	}

\noindent Quantum state space is almost inconceivably vast. When the underlying Hilbert space dimension $d$ is larger than two, the compact convex set of quantum states $\mathcal{Q}(\mathcal{H}_{d})$ is a highly intricate geometric body within the ambient real vector space of self-adjoint linear endomorphisms $\mathcal{L}_{\text{sa}}(\mathcal{H}_{d})$. The family of all such shapes $\{\mathcal{Q}(\mathcal{H}_{d})\;\boldsymbol{|} d\in\mathbb{N}_{\geq 2}\}$ encapsulates the essence of quantum theory in its entirety. Indeed, quantum effects constitute those regions below the identity in the homogeneous self-dual cones $\mathcal{L}_{\text{sa}}(\mathcal{H}_{d})_{+}$ generated by quantum state spaces; moreover, when $d$ is a composite number, $\mathcal{Q}(\mathcal{H}_{d})$ can itself be viewed as a set of quantum channels via the Choi-Jamio{\l}kowski isomorphism. The case of composite $d$ is also especially interesting in light of entanglement. If we are to fully grasp quantum theory, we must therefore understand these shapes. In all cases, $\mathcal{Q}(\mathcal{H}_{d})$ is $d^{2}-1$-dimensional. This quadratic scaling with the Hilbert space dimension means that the shape of quantum state space is well beyond the scope of everyday intuition, even in the simplest nontrivial case $d=3$. In \cref{partI} of this thesis, we have developed a deeper understanding of these mysterious bodies. A novel variety of symmetric subshapes has emerged.

\noindent In \cref{designsOnQuantumCones} we introduced a new class of geometric structures in quantum theory, conical designs, which are natural generalizations of  projective designs.  We showed that symmetric informationally complete quantum measurements (\textsc{sim}s) and mutually unbiased quantum measurements (\textsc{mum}s) are special cases, as are weighted projective designs (up to rescaling).  We began by establishing the basic properties of conical 2-designs. In particular we gave five equivalent conditions for a subset of a quantum cone to be a conical 2-design (\cref{desCons}). We then turned to the special case of homogeneous conical 2-designs, and analyzed their Bloch geometry.  In the Bloch body picture  \textsc{sim}s and full sets of \textsc{mum}s form simple, highly symmetric polytopes (a single regular simplex in the case of \textsc{sim}s; the convex hull of a set of orthogonal regular simplices in the case of \textsc{mum}s).  We showed that the same is true of an arbitrary homogeneous conical design. Moreover, we derived necessary and sufficient conditions for a given polytope to be such a design (\cref{blochPoly} and \cref{existThm}). We also showed how the problem of constructing a homogeneous \emph{two}-design in a complex vector space reduces to the problem of constructing a spherical \emph{one}-design in a higher dimensional real vector space (\cref{liftThm}).

\noindent In \cref{entanglementConicalDesigns} we showed that conical designs are deeply implicated in the description of entanglement. We showed that a \textsc{povm} is a conical 2-design if and only if there is a regular entanglement monotone which is a function of $\|\vec{p}\|$.  We went on to extend the results in \cite{Spengler2012}\cite{Chen2014}\cite{Chen2015}\cite{Liu2015}\cite{Shen2015}, and to compare them with our \cref{monDesThm}.  In particular we showed that there is a natural way to construct entanglement witnesses out of an arbitrary conical design.  However, the connection between witnesses and designs is less tight than the one between monotones and designs, in the sense explained in the last section.  Our work naturally suggests the question, whether there are similar connections between multipartite entanglement and conical $t$-designs with $t>2$. 

\noindent There are other questions which might be interesting to investigate.  Firstly, there  is our suggestion in \cref{inSearchOf}, that the results there proved  could be used to search systematically for new \emph{projective} designs.  Secondly, all known examples of \textsc{sic}s and full sets of \textsc{mub}s  have important group covariance properties \cite{Dang2015}.  One would like to know how far this holds true in the more general setting of homogeneous conical designs.  Thirdly, one would like to extend the analysis to conical $t$-designs with $t>2$ via Schur-Weyl duality.  Fourthly, it is to be observed that the full class of conical 2-designs is itself a convex set.  It might be interesting to  explore the geometry of that set.  For instance, one might try to characterize the extreme points.  Finally,  it would be interesting to investigate conical designs in the larger context of general probabilistic theories~\cite{Barnum2012}. 
\bookmarksetup{startatroot}
\chapter{Interlude}
\label{interlude}
\epigraphhead[40]
	{
		\epigraph{``Beyond the horizon of the place we lived when we were young\\In a world of magnets and miracles\\Our thoughts strayed constantly and without boundary\\The ringing of the division bell had begun.''}{---\textit{Pink Floyd}\\ High Hopes (1994)}
	}
	
\noindent On the third day of December 1919, an article with the headline ``Einstein Expounds His New Theory'' was published in the New York Times \cite{NYT1919}. Of the aforementioned article's subheadlines, one was particularly apt: ``Improves on Newton.'' Indeed, history provides a rich library of cases wherein firmly established scientific notions were revolutionized by deep new insights into nature. In some cases, the \textit{status quo} has been abandoned altogether: consider, for instance, the death toll of the Michelson-Morley experiment \cite{Michelson1887} for the luminiferous Ether, or the fall of geocentrism during the Copernican revolution \cite{Kuhn1957}. In other cases, pre-existing physics was enveloped within a more general framework for understanding nature; the old became a special case of the new. In fact, Newton's law of universal gravitation falls out of Einstein's general relativity in the limit of small gravitational potentials \cite{Landau1975}, and classical probability theory can be viewed as a subset of quantum theory by considering diagonal states and effects. These stark examples point to the fluidity of physics, which is subject to the flux of novel ideas and new observations.

\noindent Remarkably, and some might even say shockingly, a century of careful experiments has yet to produce a single observation standing in significant conflict with quantum theory --- we stand with no analogue to the ultraviolet catastrophe, nor the perihelion of Mercury. Could it be that we have reached the end of physics? The absurdity of this question is easily established. Obviously, to borrow Feynman's words recalled in \cref{prologue} \cite{Feynman1963} ``\dots we know that we do not know all the laws as yet''; hence, the enterprise of quantum gravity programs seeking to reconcile quantum theory and general relativity. Furthermore, the history of physics alone suggests that we ought to remain alert and prepared for an experiment that will shake quantum theory to its core. At the very least, it is virtually inconceivable that quantum theory shall escape the same fate of past physical theories that were enveloped by more general frameworks. A serious question, then, concerns exactly where and how nature will force us to depart from quantum theory.

\noindent Quantum theory is situated within a vast landscape of general probabilistic theories, which are also known as \textit{foil theories} \cite{Chiribella2016}. Present experiments do not point us in any particular direction as we seek to depart from the comfort of quantum theory. We are thus left with the power of thought, at least for now. A most beautiful example of the power of thought is given by Einstein's prediction \cite{Einstein1916} for Sir Arthur Eddington's famous observation \cite{Dyson1919} of the deflection of light by the Sun. Einstein's revolutionary prediction was of course born from a revolutionary physical theory. Why has general relativity been so successful? A partial answer is that Einstein founded his theory of general relativity on deep physical principles. Recently, there has been much interest in singling out deep physical principles for the foundation of quantum theory. The information-theoretic axiomatizations in \cite{Hardy2001}\cite{Hardy2011}\cite{Dakic2011}\cite{Chiribella2011}\cite{Masanes2011} that we pointed out in \cref{prologue} are representative of this program seeking to pinpoint quantum theory within the `forest of foil theories,' as it were. Relaxing various axioms in such derivations of quantum theory is a natural way to identify foil theories with a solid conceptional basis. In particular, relaxing axioms regarding composite physical systems in \cite{Hardy2001}\cite{Hardy2011}\cite{Dakic2011}\cite{Chiribella2011}\cite{Masanes2011} naturally suggests the foil theory known as \textit{real quantum theory}. Simply put, real quantum theory is the restriction of quantum theory wherein scalars are taken from the field $\mathbb{R}$ of real numbers, instead of the complex field $\mathbb{C}$. Real quantum theory does not enjoy tomographic locality, which to remind the reader means that local measurements on components of a composite system do not suffice to determine a unique global state in real quantum theory. Instead, real quantum theory is \textit{bilocally tomographic} \cite{Hardy-Wootters}: joint measurements on bipartite subsystems of a composite system suffice to determine a unique global state. A related fact is that, in real quantum theory, entanglement is not monogamous \cite{Wootters2012}. Real quantum theory does, however, retain some important features of quantum theory, such as the superposition principle, the existence of maximally entangled states, and the absence of Sorkin's third-order interference \cite{Sorkin1994}. Furthermore, the lattices of unit rank projectors in real quantum theory are as in quantum theory, \textit{i.e}.\ nondistributive and orthomodular \cite{birkhoff1936}.

\noindent The real field $\mathbb{R}$ and the complex field $\mathbb{C}$ have a well known cousin: the quaternions $\mathbb{H}$. Strictly speaking, the quaternions form a ring, because quaternionic multiplication is not commutative. Any quaternion $h\in\mathbb{H}$ can be written in terms of its \textit{constituents} $a_{0},a_{1},a_{2},a_{3}\in\mathbb{R}$ and the \textit{quaternionic basis elements} $\{1,i,j,k\}$ as $h=1a_{0}+ia_{1}+ja_{2}+ka_{3}$ where 
\begin{equation}
i^{2}=j^{2}=k^{2}=ijk=-1\text{.}
\end{equation} 
Naturally, one can imagine a foil theory formulated by allowing scalars from the complex field $\mathbb{C}$ in quantum theory to be replaced by quaternions. One has to be careful. There are severe problems if one na{\"i}vely allows for quaternionic scalars whilst retaining the balance of the formalism of quantum theory. For instance, even at the level of single systems, the usual Born rule breaks down in this na{\"i}ve formulation \cite{Graydon2013}. At the level of composite systems, one runs into serious trouble as well, because the Kronecker product of two self-adjoint quaternionic matrices need not be self-adjoint \cite{Graydon2011}. The former issue, that of the Born rule, can be completely rectified. The solution given by the author in \cite{Graydon2013} rests on the observation that self-adjoint quaternionic matrices form a Jordan algebra (see \cref{jAlgDef}); furthermore, within this context, the author's MSc thesis \cite{Graydon2011} details \textit{quaternionic quantum theory} at the level of single systems. 

\noindent The framework of Jordan algebras is central in recent derivations of quantum theory given by Barnum-M{\"u}ller-Ududec \cite{Barnum2014} and Wilce \cite{Wilce09}\cite{Wilce11}\cite{Wilce12}. Relaxing the fourth postulate expounded by Barnum-M{\"u}ller-Ududec yields, in particular, quaternionic quantum theory at the level of single systems. The derivations given by Wilce also yield, in particular, quaternionic quantum theory at the level of single systems. As with real quantum theory, we thus have solid conceptual principles pointing to quaternionic quantum theory within the forest of foil theories. Unlike real quantum theory, however, an explicit description of composite physical systems in quaternionic quantum theory has remained unknown. One major contribution of \cref{partII} is a complete resolution \cite{Barnum2015}\cite{Barnum2016b} of this so-called `tensor product problem.' In fact, we accomplish more: we unite all three quantum theories (real, complex, and quaternionic) within a common framework. This naturally opens a novel avenue of research pertaining to the characteristics of composites in quaternionic quantum theory in particular, as well as all `hybrid' composites. From one perspective, we view these particular results as providing a testbed for new physics founded on solid conceptual principles. For instance, perhaps one could render the vague idea of a `tomographic locality witness' precise, and propose an experimental test. From another perspective, we view these results as providing a rigorous foil theory to improve our understanding of quantum theory itself.

\noindent The \textit{Spekkens toy theory} introduced by Spekkens in \cite{Spekkens2007} is a foil theory that has played a central role in understanding the foundations of quantum theory in the light of quantum information \cite{Fuchs2001}. \textit{Many}\footnote{For instance \cite{Spekkens2007}: ``\dots the noncommutativity of measurements, interference, the multiplicity of convex decompositions of a mixed state, the impossibility of discriminating nonorthogonal states, the impossibility of a universal state inverter, the distinction between bi-partite and tri-partite entanglement, the monogamy of pure entanglement, no cloning, no broadcasting, remote steering, teleportation, dense coding, mutually unbiased bases \dots''} features of quantum theory are found to have analogues within the Spekkens toy theory, which is defined by an epistemic restriction: given a maximal state of knowledge concerning a physical system, there are an equal number of questions about the physical system in question that are answered and unanswered. The long list of features common to quantum theory and the Spekkens toy theory supports an epistemic interpretation of the quantum state. This particular way of thinking about quantum states is not new \textit{per se}; although, it has certainly been a cornerstone for quantum foundations in the past decade, wherein remarkable progress has been made towards understanding quantum theory as a theory of information. The Spekkens toy theory does not reproduce violations of Bell inequalities, nor the existence of a Kochen-Specker theorem. Consequently, this particular foil theory is not a suitable candidate to build on quantum theory to predict new physics. Of course, Spekkens never claimed such a purpose. What is important to observe is that the existence of his foil theory has shed tremendous light on quantum theory. Our foil theories based on Jordan algebras hope to serve a purpose along these lines as well.

\noindent We drew the reader's attention to category theory in \cref{prologue}. Within the framework of category theory (to be formally introduced in \cref{catPrelims}), Coecke and Edwards showed the Spekkens toy theory can be viewed as a category: $\mathbf{Spek}$ \cite{Coecke2011}. Furthermore, Coecke and Edwards proved that $\mathbf{Spek}$ is a proper subcategory of $\mathbf{FRel}$. The precise definition of these categories is not necessary for us to recall. What is crucial to point out, however, is that $\mathbf{FRel}$ and its subcategory $\mathbf{Spek}$ are dagger compact closed categories (see \cref{dagCom}), just like quantum theory \textit{aka} the category $\mathbf{CPM}(\mathbf{FdHilb})$. Once again, we have therefore met foil theories, \textit{i.e}.\ $\mathbf{FRel}$ and $\mathbf{Spek}$, that share features with quantum theory. Specifically these theories admit the same abstract compositional structure of quantum theory, \textit{i.e}.\ that of a dagger compact closed category. It has remained an open question as to whether there exist additional foil theories structured as dagger compact closed categories. A major result in \cref{partII} is our construction of the dagger compact closed category $\mathbf{InvQM}$ \textit{aka} our unification of real, complex, and quaternionic quantum theories.

\noindent We now draw the reader's attention to the following fact. In \textit{any} dagger compact closed category, there exists an isomorphism between the sets of morphisms \textit{aka} physical process between objects \textit{aka} physical systems $\mathcal{A}$ and $\mathcal{B}$ and bipartite states on the composite system $\mathcal{AB}$. A formal proof of this fact, accompanied by all of the mathematical definitions can be found in the PhD thesis of Duncan \cite{Duncan2006} (specifically see Duncan's Lemma 2.21.) This isomorphism is manifest in quantum theory as the Choi-Jamio{\l}kowski isomorphism. Therefore, in light of our \cref{cor: InvQM is dagger compact}, $\mathbf{InvQM}$ enjoys such an isomorphism. We thus establish that the usual Choi-Jamio{\l}kowski isomorphism in quantum theory carries over to our unified framework for real, complex, quaternionic quantum theories. Put otherwise, our unified theory enjoys channel-state duality. In \cite{Chiribella2011}, Chiribella-D'Ariano-Perinotti prove that channel-state duality is in fact a \textit{consequence} of their purification postulate for general probabilistic theories. In this sense, the dagger compact closed structure of $\mathbf{InvQM}$ is motivated by a single physical principle. Additionally, in \cite{Bartlett2012}, Bartlett-Rudolph-Spekkens show that Gaussian quantum mechanics (wherein channel-state duality holds) follows from Liouville mechanics with an epistemic restriction. Channel-state duality in general, therefore, is a feature pointed to by physical principles. 

\noindent We now move to \cref{partII}. The end result is \cref{cor: InvQM is dagger compact}. Along the way we shall meet various aspects of Jordan algebras and discover the behaviour of composite systems in our $\mathbb{R}$-$\mathbb{C}$-$\mathbb{H}$ post-quantum theory.
\part{Categorical Jordan Algebraic Post-Quantum Theories}
\label{partII}
\chapter{Setting the Stage (Part II)}
\label{introPartII}

\epigraphhead[40]
	{
		\epigraph{``The miracle of the appropriateness of the language of mathematics for the formulation of the laws of physics is a wonderful gift which we neither understand nor deserve.''}{---\textit{Eugene Wigner}\\ The Unreasonable Effectiveness of\\ Mathematics in the Natural Sciences (1960)}
	}

\noindent Einstein's principles for special relativity can be phrased in ordinary language, in an incredibly simple way: the speed of light is constant, and physics is too. Naturally, those who wish to seriously study the subject of special relativity must then be introduced to mathematical notions such as the Minkowski metric, the Lorentz transformations, the Poincar{\'e} group, and so on. Of course, mathematics plays an absolutely crucial role in rigorous formulations of theoretical physics. This is not to say, however, that the two are synonymous. Mathematics and theoretical physics instead enjoy a rich interplay: revolutionary physical postulates prompt the development of new mathematics, and the development of new mathematics furnishes frameworks for novel physical theories. For instance, category theory initially arose within the context of pure mathematics \cite{Eilenberg1945}, only to much later become a complete framework for the formulation of quantum theoretic information processing protocols \cite{Abramsky2004}. Reciprocally, Abramsky and Coecke's full theory of categorical quantum mechanics \cite{Abramsky2009} returned the novel notion of a dagger compact closed category\footnote{In \cite{Abramsky2009}, Abramsky and Coecke use term \textit{strongly compact closed categories}. We follow Selinger's convention \cite{Selinger2005}.} to pure mathematics. Transparently, our technical work in \cref{compositesEJA} and \cref{categoriesEJA} provides a concrete example of new mathematics inspired by quantum physical theory as formulated by Abramsky and Coecke. Accordingly, in this introductory chapter, we review prerequisite elements of category theory in \cref{catPrelims}. The composites and categories constructed in the sequels are based on Euclidean Jordan algebras, hence our review of Jordanic preliminaries in \cref{jordPrelims}. We close this chapter by recalling technical tools from the theory of universal representations of Euclidean Jordan algebras in \cref{cStarPrelims}.

\noindent Ultimately, for motivation of our technical work, we return to physics. The net result of our mathematics is a novel physical theory uniting quantum theory and post-quantum theories over the real number field $\mathbb{R}$ and quaternionic division ring $\mathbb{H}$. At the level of single systems, such theories were recently pinpointed by the information-theoretic axioms expounded by Barnum-M{\"u}ller-Ududec in \cite{Barnum2014}. For quaternionic quantum theories, however, the formulation of a theory involving physical composites has remained elusive ever since the first formulation by Finkelstein \textit{et al.} \cite{Finkelstein1962}. Our novel Jordanic categories resolve this issue. Our motivations run deeper than simply completing the $\mathbb{R}$-$\mathbb{C}$-$\mathbb{H}$ program. Indeed, the twin pillars of modern physics may arise from a deeper, common level of physical principles. Holding nonsignaling sacrosanct, our categorical Jordan algebraic post-quantum theories expand the perimeter encompassing quantum theory within the vast landscape of all general probabilistic theories \cite{Barnum2012}, establishing a new realm for the discovery of novel physical ideas. We now proceed with our technical program, holding these thoughts in mind.
\newpage
\section{Elements of Category Theory}
\label{catPrelims}
Category theory provides a rich foundation for modern mathematical science. The standard reference text is Mac\! Lane's \textit{Categories for the Working Mathematician} \cite{MacLane1997}. Coecke and Paquette's \textit{Categories for the Practising Physicist} \cite{Coecke2009} presents a thorough review of category theory in the context of quantum physics. 
There also exist very deep PhD theses in quantum information with extensive introductions to category theory, in particular \cite{Duncan2006}\cite{Paquette2008}\cite{Edwards2009}\cite{Lal2012}. In this section, we collect categorical prerequisites for the sequel.

\begin{definition} \textit{A} metagraph $\mathscr{C}$ \textit{consists of two classes:} $\text{ob}\big(\mathscr{C}\big)\ni A,B,C,\ldots$\textit{, and} $\text{hom}\big(\mathscr{C}\big)\ni f,g,h,\ldots$\textit{; together with two functions} $\text{dom}:\text{hom}\big(\mathscr{C}\big)\longrightarrow\text{ob}\big(\mathscr{C}\big)$\textit{ and} $\text{cod}:\text{hom}\big(\mathscr{C}\big)\longrightarrow\text{ob}\big(\mathscr{C}\big)$\textit{.}
\end{definition}

\noindent We remind the reader that the notion of a \textit{class} is \cite{Devlin1979} more general than the notion of a \textit{set} (for instance, the collection of all sets is not itself a set.) The elements of $\text{ob}(\mathscr{C})$ are referred to as \textit{objects}, and the elements of $\text{hom}(\mathscr{C})$ are referred to as \textit{morphisms}. One writes $\text{hom}(A,B)$ for the collection of all morphisms from $A$ to $B$. The framework of metagraphs provides a graphical calculus, wherein one depicts a morphism $f$ with objects $A=\text{dom}(f)$ and $B=\text{cod}(f)$ as follows
\begin{equation}
A\stackrel{f}{\longrightarrow}B
\end{equation} 
\begin{definition}\label{catDef} \textit{A} category \textit{consists of a metagraph} $\mathscr{C}$ \textit{together with a function} 
\begin{eqnarray}
\mathbbmss{id}:\text{ob}\big(\mathscr{C}\big)\longrightarrow\text{hom}\big(\mathscr{C}\big)::A\longmapsto 1_{A}\;\;\text{where}\;\;1_{A}:A\longrightarrow A\textit{,}
\end{eqnarray}
\textit{and a function named} composition\textit{, denoted $\circ$ and defined via}
\begin{equation} \circ:\Big\{(f,g)\in\text{hom}\big(\mathscr{C}\big)\times\text{hom}\big(\mathscr{C}\big)\;\boldsymbol{|}\;\text{dom}(g)=\text{cod}(f)\Big\}\longrightarrow\text{hom}\big(\mathscr{C}\big)\textit{,}
\end{equation}
\textit{where $\circ(f,g)$ is written $g\circ f:\text{dom}(f)\longrightarrow\text{cod}(g)$ and such that the following diagrams commute\footnote{A diagram is said to \textit{commute} when all directed paths from object $A$ to object $B$ yield equivalent action via composition.}:}
\begin{equation}\label{idAxiom}
\xymatrix{A\ar[r]^{f}\ar[rd]_{f}&B\ar[d]^{1_{B}}\ar[rd]^{g} & \\ & B \ar[r]_{g} & C}
\end{equation}
\begin{equation}\label{comAxiom}
\xymatrix{A\ar[r]^{f}\ar[rd]_{g\circ f}& B\ar[d]^{g}\\ & C}
\end{equation}
\begin{equation}\label{assocAxiom}
\xymatrixcolsep{1in}\xymatrix{A\ar[r]^{h\circ(g\circ f)=(h\circ g)\circ f}\ar[d]_{f}\ar[rd]|!{[d];[r]}\hole_(0.7){g\circ f} & D \\ B\ar[r]_{g}\ar[ru]^(0.7){h\circ g}  & C\ar[u]_{h}}
\end{equation}
\end{definition}
\noindent The commutative diagrams in Eq.~\eqref{idAxiom}, Eq.~\eqref{comAxiom}, and Eq.~\eqref{assocAxiom} impose, respectively: left-right neutrality of the \textit{identity morphisms}; \textit{compositional stucture}; and \textit{associativity}. For example, $\mathbf{Set}$, the category with objects all sets and morphisms all suitably defined functions thereof, enjoys these properties.
\newpage
\noindent A functor is, crudely speaking, a morphism of categories. Formally, one has the following.
\begin{definition}\label{funDef} \textit{Let} $\mathscr{C},\mathscr{D}$ \textit{be categories. A} covariant functor $\mathcal{F}:\mathscr{C}\longrightarrow\mathscr{D}$ \textit{is defined by two functions, both denoted $\mathcal{F}$ for notational simplicity, such that} $\forall A\in\text{ob}(\mathscr{C})$ \textit{and} $\forall f,g\in\text{hom}(\mathscr{C})$
\begin{eqnarray}
\mathcal{F}:\text{hom}\big(\mathscr{C}\big)\longrightarrow\text{hom}\big(\mathscr{D}\big)::f\longmapsto\mathcal{F}(f)\;\;\text{where}\;\;\mathcal{F}(f)::\mathcal{F}\big(\text{dom}(f)\big)\longmapsto \mathcal{F}\big(\text{cod}(f)\big)\text{,}\\
\mathcal{F}:\text{ob}\big(\mathscr{C}\big)\longrightarrow\text{ob}\big(\mathscr{D}\big)::A\longmapsto \mathcal{F}(A)\textit{,}
\end{eqnarray}
\textit{such that}
\begin{eqnarray}
\mathcal{F}(1_{A})=1_{\mathcal{F}(A)}\textit{,}\\
\mathcal{F}(g\circ f)=\mathcal{F}(g)\circ\mathcal{F}(f)\textit{.}
\end{eqnarray}
\textit{A} contravariant functor $\mathcal{G}:\mathscr{C}\longrightarrow\mathscr{D}$ \textit{is defined by two functions, both denoted $\mathcal{G}$ for notational simplicity, such that} $\forall C\in\text{ob}\mathscr{C}$ \textit{and} $\forall h,l\in\text{hom}\mathscr{C}$
\begin{eqnarray}
\mathcal{G}:\text{hom}\big(\mathscr{C}\big)\longrightarrow\text{hom}\big(\mathscr{D}\big)::h\longmapsto\mathcal{G}(h)\;\;\text{where}\;\;\mathcal{G}(h)::\mathcal{G}\big(\text{cod}(h)\big)\longmapsto\mathcal{G}\big(\text{dom}(h)\big)\text{,}\\
\mathcal{G}:\text{ob}\big(\mathscr{C}\big)\longrightarrow\text{ob}\big(\mathscr{D}\big)::C\longmapsto \mathcal{G}(C)\textit{,}
\end{eqnarray}
\textit{such that}
\begin{eqnarray}
\mathcal{G}(1_{C})=1_{\mathcal{G}(C)}\textit{,}\\
\mathcal{G}(h\circ l)=\mathcal{G}(l)\circ\mathcal{G}(h)\textit{.}
\end{eqnarray}
\end{definition}
\noindent Phrased more informally, a functor maps commutative diagrams in $\mathscr{C}$ to commutative diagrams in $\mathscr{D}$, with covariant ones preserving the directionality of arrows and contravariant ones reversing them. We shall be particularly interested in functors of product categories, which are defined as follows. 
\begin{definition}\textit{Let} $\mathscr{C}_{1},\mathscr{C}_{2}$ \textit{be categories. The} product category $\mathscr{C}_{1}\times\mathscr{C}_{2}$ \textit{is the category with}
\begin{eqnarray}
\text{ob}\big(\mathscr{C}_{1}\times\mathscr{C}_{2}\big)\equiv\text{ob}\big(\mathscr{C}_{1}\big)\times\text{ob}\big(\mathscr{C}_{2}\big)\\ \text{hom}\big(\mathscr{C}_{1}\times\mathscr{C}_{2}\big)\equiv\text{hom}\big(\mathscr{C}_{1}\big)\times\text{hom}\big(\mathscr{C}_{2}\big)\text{,}
\end{eqnarray}
\textit{wherein for all suitably composable morphisms}
\begin{eqnarray}
(f_{1},f_{2})\circ (g_{1},g_{2})\equiv(f_{1}\circ g_{1},f_{2}\circ g_{2})\textit{,}
\end{eqnarray}
\textit{and wherein} $\forall A_{1}\in\text{ob}\mathscr{C}_{1}$ \textit{and} $\forall A_{2}\in\text{ob}\mathscr{C}_{2}$
\begin{eqnarray}
1_{(A_{1},A_{2})}\equiv(1_{A_{1}},1_{A_{2}})\textit{.}
\end{eqnarray}
\end{definition}
\begin{definition}\label{biFunDef} \textit{A} bifunctor \textit{is a functor with domain a product category.}
\end{definition}

\noindent For example, the category $\mathbf{FdHilb}$ --- with objects all finite dimensional complex Hilbert spaces and morphisms all linear functions thereof --- is the range of the usual \textit{tensor product bifunctor}, that is $\otimes:\mathbf{FdHilb}\times\mathbf{FdHilb}\longrightarrow\mathbf{FdHilb}$, where the definition of $\otimes$ is as given in \cref{tensorDef}. The tensor product bifunctor enjoys many nice properties, which can be described in terms of morphisms on iterations of the functor itself. We shall require the notion of natural transformations to render such a discussion precise.

\newpage
\noindent A natural transformation is, crudely speaking, a morphism of functors. Formally, one has the following.

\begin{definition}\label{natDef} \textit{Let} $\mathcal{F},\mathcal{G}:\mathscr{C}\longrightarrow\mathscr{D}$ \textit{be covariant functors. A} natural transformation $\tau:\mathcal{F}\longrightarrow\mathcal{G}$
\textit{is}
\begin{equation}
\tau:\text{ob}\big(\mathscr{C}\big)\longrightarrow\text{hom}\big(\mathscr{D}\big)::A\longmapsto \tau_{A}:::\mathcal{F}(A)\longrightarrow \mathcal{G}(A)
\end{equation}
\textit{such that the following diagram commutes for all suitable} $\text{dom}(f)=A\stackrel{f}{\longrightarrow}B=\text{cod}(f)$ \textit{in} $\mathscr{C}$
\begin{equation}
\xymatrix{\mathcal{F}(A)\ar[r]^{\mathcal{F}(f)}\ar[d]_{\tau_{A}} &\mathcal{F}(B)\ar[d]^{\tau_{B}}\\ \mathcal{G}(A)\ar[r]_{\mathcal{G}(f)} & \mathcal{G}(B)}\textit{.}
\end{equation}
\textit{The functions} $\tau_{\mathrm{A}}$ \textit{and} $\tau_{\mathrm{B}}$ \textit{are called} components\textit{. A} natural isomorphism \textit{is a natural transformation such that}
\begin{equation}
\exists\tau^{-1}:\mathcal{G}\longrightarrow\mathcal{F}::\tau_{A}\circ\tau^{-1}_{A}=1_{\mathcal{F}(A)}\;\text{and}\;\tau^{-1}_{A}\circ\tau_{A}=1_{\mathcal{G}(A)}\textit{.}
\end{equation}
\end{definition}

\noindent The formal notions of category, functor, and natural transformation can be eloquently summarized in the following informal diagram due to Baez and Dolan \cite{Baez1995}:
\begin{equation}
\xymatrixrowsep{0.01cm}
\xymatrixcolsep{0.1cm}
\xymatrix
{& & & & & & & & & & & & & & & & & & & & & \diamondsuit\ar@/^/[ldd]\ar@/^0.01pc/[dddddd]\\
 & & & & & & & & & & & & & & & &\diamondsuit\ar@/^/[rrrrru]\ar@/_/[rrrrd]\ar@/_0.01pc/[dddddd] & & & \mathscr{D} & &\\
& & & & & & & & & & \ar[dddd]^{\tau} & & & & & & & & & & \diamondsuit\ar@/^0.01pc/[dddddd]  &\\
& & & & & \ar@/^/[rrrrrrrrrruu]^{\mathcal{F}} & & &  & & & & & & & & & & & & &\\
& & & & & \diamondsuit\ar@/^/[ldd] & & & & & & & & & & & & & & & &\\
\diamondsuit\ar@/^/[rrrrru]\ar@/_/[rrrrd] & &  & \mathscr{C} & & & & & & & & & & & & & & & & & &\\
& & & & \diamondsuit & \ar@/_/[rrrrrrrrrrd]_{\mathcal{G}}& & & & & & & & & & & & & & & & \diamondsuit\ar@/^/[ldd]\\
& & & & & & & & & & & & & & & &\diamondsuit\ar@/^/[rrrrru]\ar@/_/[rrrrd] & & & \mathscr{D} & &\\
& & & & & & & & & & & & & & & & & & & & \diamondsuit  &
}
\end{equation}

\noindent Categories admit a natural physical interpretation, wherein objects correspond to physical systems and morphisms correspond to physical processes. This line of thinking has been explored extensively within the field of categorical quantum mechanics \cite{Abramsky2004}\cite{Abramsky2009}. One revolutionary insight derived in this vein is that the compositional structure of quantum theory is precisely that of a symmetric monoidal category \cite{Benabou1963}\cite{Benabou1964}; moreover one that is dagger compact closed \cite{Selinger2005}. We shall now formally define these notions.

\begin{definition}\label{dagCatDef} \textit{A} dagger category \textit{is a category} $\mathscr{C}$ \textit{equipped with a contravariant strictly involutive identity-on-objects endofunctor} $\dagger:\mathscr{C}\longrightarrow\mathscr{C}::f\longmapsto f^{\dagger}$\textit{, that is} $\forall A\in\text{ob}(\mathscr{C})$ \textit{and} $\forall f\in\text{hom}(\mathscr{C})$  
\begin{eqnarray}
\dagger(A)=A\text{,}\\
\dagger\big(\dagger(f)\big)=f.
\end{eqnarray}
\textit{A} unitary \textit{morphism} \textit{is} $f\in\text{hom}\mathscr{C}$ \textit{such that}
\begin{equation}
\dagger(f)\circ f=1_{\text{dom}(f)}=f\circ\dagger(f)\text{.}
\end{equation}
\textit{A} self-adjoint \textit{morphism} \textit{is} $f\in\text{hom}\mathscr{C}$ \textit{such that}
\begin{equation}
\dagger(f)=f\text{.}
\end{equation}
\end{definition}
\newpage
\noindent One way to meet the following definition is to think of $\Box$ as a ``tensor product'' and overlay a mental `$\otimes$'. 
\begin{definition}\label{smcDef} \textit{A} symmetric monoidal category is a septuple, $(\mathscr{C},\text{I},\Box,\alpha,\lambda,\rho,\sigma)$ \textit{where} $\mathscr{C}$ \textit{is a category as in \cref{catDef}}; $\text{I}\in\text{ob}\big(\mathscr{C}\big)$ \textit{is a distinguished} monoidal unit; $\Box:\mathscr{C}\times\mathscr{C}\rightarrow\mathscr{C}$ \textit{is a bifunctor as in \cref{biFunDef}}; and $\alpha,\lambda,\rho,\sigma$ \textit{are natural isomorphisms as in \cref{natDef} with components}
	\begin{eqnarray}
	 \alpha_{A,B,C}:\Box\big(A,\Box(B,C)\big)\longrightarrow \Box\big(\Box(A,B),C\big)\\
		\lambda_{A}:A\longrightarrow\Box(\text{I},A)\\
		\rho_{A}:A\longrightarrow\Box(A,\text{I})\\
		\sigma_{A,B}:\Box(A,B)\longrightarrow\Box(B,A)
\end{eqnarray}
\textit{such that the following diagrams commute}
\begin{equation}\label{assocCo}
\xymatrix{ &\Box\Big(\Box\big(A,B\big),\Box\;\big(C,D\big)\Big)\ar[rd]^{\alpha_{A,B,\Box\big(C,D\big)}}&\\
\Box\Big(\Box\big(\Box(A,B),C\big),D\Big)\ar[ru]^{\alpha_{\Box\big(A,B\big),C,D}}\ar[d]^{\Box\big({\alpha}_{A,B,C},1_{D}\big)}& & \Box\Big(A,\Box\big(B,(\Box(C,D)\big)\Big)\\
\Box\Big(\Box\big(A,\Box(B,C)\big),D\Big)\ar[rr]^{\alpha_{A,\Box(B,C),D}}& & \Box\Big(A,\Box\big(\Box(B,C),D\big)\Big)\ar[u]^{\Box\big(1_{A},\alpha_{B,C,D}\big)}}
\end{equation}
\begin{equation}\label{unitorCo}
\xymatrix{
\Big(\Box\big(A,\text{I}\big),B\Big)\ar[rr]_{\alpha_{A,\text{I},B}}& &\Box\Big(A,\Box\hspace{0.01in}\big(\text{I},B\big)\Big)\ar[ld]^{\Box\big(1_{A},\lambda_{B}^{-1}\big)}\\
&\Box\big(A,B\big)\ar[lu]^{\Box\big(\rho_{A},1_{B}\big)}&
}
\end{equation}
\begin{equation}\label{alpSigCo}\xymatrixrowsep{0.15in}\xymatrixcolsep{0.6in}
\xymatrix{
\Box\big(A,\Box(B,C)\big)\ar[r]^{\alpha_{A,B,C}}\ar[dd]^{\alpha_{A,B,C}}&\Box\big(\Box(A,B),C\big)\ar[r]^{\sigma_{\Box(A,B),C}}&\Box\big(C,\Box(A,B)\big)\ar[dd]^{\alpha_{C,A,B}}\\
& & \\
\Box\big(A,\Box(C,B)\big)\ar[r]_{\alpha^{-1}(A,C,B)} &\Box\big(\Box(A,C),B\big)\ar[r]_{\Box\big(\sigma_{A,C},1_{C}\big)} & \Box\big(\Box(C,A),B\big)
}
\end{equation}
\begin{equation}\label{simCo}
\xymatrix{
\Box(A,B)\ar[rd]^{1_{\Box(A,B)}}\ar[d]_{\sigma_{AB}}\ar[d] \\ \Box(B,A)\ar[r]_{\sigma_{BA}} & \Box(A,B)}
\end{equation}
\end{definition}
\begin{equation}\label{rhoSigmaLambdaCo}
\xymatrix{
& A\ar[d]^{\rho_{A}}\ar[ld]_{\lambda_{A}} \\ \Box(\text{I},A) & \Box(A,\text{I})\ar[l]^{\sigma_{A,\text{I}}}
}
\end{equation}
\newpage
\noindent Eqs.~\eqref{assocCo},\eqref{unitorCo},\eqref{alpSigCo},\eqref{simCo}, and \eqref{rhoSigmaLambdaCo} are the \textit{coherence conditions} for a symmetric monoidal category. The natural isomorphisms $\alpha$ and $\sigma$ are given special names: the \textit{associator} and the \textit{symmetor}, respectively (the latter is not conventional.) The bifunctor $\Box$ also receives a special name: the \textit{monoidal product}. A very deep result in category theory is that the coherence conditions guarantee that any diagram, with edges permuted instances of the monoidal product and edges permuted expansions of instances of the associator and the symmetor, commutes. This is the content of Mac\! Lane's celebrated coherence theorem \cite{MacLane1963}.

\noindent The category $\mathbf{FdHilb}$ equipped with the usual tensor product $\otimes$ is a symmetric monoidal category with monoidal unit $\mathbb{C}$ and the obvious natural isomorphisms. Moving one level higher in quantum abstraction, the category $\mathbf{CPM}(\mathbf{FdHilb})$ with objects again finite dimensional complex Hilbert spaces and morphisms completely positive linear homomorphisms thereupon is a symmetric monoidal category, where the monoidal product and monoidal unit are inherited from $\mathbf{FdHilb}$, and where the natural isomorphisms are again obvious in light of the underlying Hilbert space isomorphisms. On this view, quantum theory is, in a certain sense, a symmetric monoidal category. This precise mathematical statement expounds the compositional structure of quantum theory. The usual adjoint provides $\mathbf{CPM}(\mathbf{FdHilb})$ with the structure of a dagger symmetric monoidal category, whose definition we now promote to formal status. 

\begin{definition}\label{dagSymDef} \textit{A} dagger symmetric monoidal category \textit{is an octuple} $(\mathscr{C},\text{I},\Box,\dagger,\alpha,\lambda,\rho,\sigma)$\textit{, where the septuple} $(\mathscr{C},\text{I},\Box,\alpha,\lambda,\rho,\sigma)$ \textit{is a symmetric monoidal category as in \cref{smcDef}, with the natural isomorphisms} $\alpha,\lambda,\rho,\sigma$ \textit{unitary, and where the couple} $(\mathscr{C},\dagger)$ \textit{is a dagger category as in \cref{dagCatDef}.}
\end{definition}

\noindent We now push forward with a categorical notion introduced by Kelly and Laplaza \cite{Kelly1980}.

\begin{definition}\label{ccCatDef} \textit{A} compact closed category \textit{is a symmetric monoidal category} $(\mathscr{C},\text{I},\Box,\alpha,\lambda,\rho,\sigma)$ \textit{as in \cref{smcDef} where} $\forall A\in\text{ob}(\mathscr{C})$ $\exists A^{\star}\!\in\text{ob}(\mathscr{C})$\textit{,} $\exists\eta_{A},\epsilon_{A}\in\text{hom}(\mathscr{C})$ \textit{with} $\eta_{A}:\text{I}\longrightarrow \Box(A^{\star},A)$ \textit{and} $\epsilon_{A}:\Box(A,A^{\star})\longrightarrow\text{I}$ \textit{such that the following diagrams commute}
\begin{equation}
\xymatrixcolsep{1.08in}\xymatrixrowsep{0.2in}\xymatrix{A\ar[dd]^{1_{A}}\ar[r]^{\rho_{A}}&\Box(A,\text{I})\ar[r]^{\Box(1_{A},\eta_{A})}&\Box\big(A,\Box(A^{\star},A)\big)\ar[dd]^{\alpha_{A,A^{\star}\!\!,A}}\\
& & \\
A&\Box(\text{I},A)\ar[l]^{\lambda_{A}^{-1}}&\Box\big(\Box(A,A^{\star}),A\big)\ar[l]^{\Box(\epsilon_{A},1_{A})}}
\end{equation}
\begin{equation}
\xymatrixcolsep{1in}\xymatrixrowsep{0.2in}\xymatrix{A^{\star}\ar[dd]^{1_{A^{\star}}}\ar[r]^{\rho_{A^{\star}}}&\Box(A^{\star},\text{I})\ar[r]^{\Box(1_{A^{\star}},\eta_{A^{\star}})}&\Box\big(A^{\star},\Box(A,A^{\star})\big)\ar[dd]^{\alpha_{A^{\star},A\!\!,A^{\star}}}\\
& & \\
A^{\star}&\Box(\text{I},A^{\star})\ar[l]^{\lambda_{A^{\star}}^{-1}}&\Box\big(\Box(A^{\star},A),A^{\star}\big)\ar[l]^{\Box(\epsilon_{A^{\star}},1_{A^{\star}})}}
\end{equation}
\end{definition}



\begin{definition}\label{dagCom}\textit{A} dagger compact closed category \textit{is a compact closed category as in \cref{ccCatDef} equipped with the structure of a dagger category as in \cref{dagCatDef} such that} $\forall A\in\text{ob}(\mathscr{C})$ $\eta_{A}=\sigma_{A,A^{\star}}\circ\dagger(\epsilon_{A})$
\end{definition}

\noindent We will recall the dagger compact closed structure of quantum theory in \cref{sec: categories EJC}. It will in fact be useful for us to accomplish this task within a more general framework connected to the algebras to be discussed in the following section.

\newpage
\section{Jordan Algebraic Prerequisites}
\label{jordPrelims}
The notion of a Jordan algebra originated in the seminal paper of Pascual Jordan \cite{Jordan} (see also Pascual Jordan, John von Neumann, and Eugene Wigner \cite{Jordan1934}.) \textit{Jordan Operator Algebras} by Hanche-Olsen and St{\o}rmer \cite{HancheOlsen1984} is an indispensable resource for Jordanic theory. McCrimmon's \textit{A Taste of Jordan Algebras} \cite{McCrimmon2004} is also very good, and provides a particularly beautiful review of the history of the subject. We also point the reader to Alfsen and Shultz \cite{Alfsen2003} and to an earlier monograph by Topping \cite{Topping1965} for additional enlightening perspectives. In this section, we collect elements of Jordanic theory required for the balance of this thesis.

\begin{definition} \textit{Let $\mathcal{A}$ be a set. Let $\mathbb{F}$ be a field. A} vector space \textit{is a quadruple $\big(\mathcal{A},\mathbb{F},+,\cdot\big)$ where $\mathcal{A}$ is regarded as an abelian group with respect to} addition $+:\mathcal{A}\times\mathcal{A}\longrightarrow\mathcal{A}::(a,b)\longmapsto a+b$ \textit{and a monoid with respect to} $\mathbb{F}$-scalar multiplication $\cdot:\mathcal{A}\times\mathbb{F}\longrightarrow\mathcal{A}:(a,\alpha)\longmapsto a\alpha$\textit{. The addition and $\mathbb{F}$-scalar multiplication operations are required to satisfy the distributivity axioms:} $\cdot(a+b,\alpha)=a\alpha+b\alpha$ \textit{and} $+(a\alpha,a\beta)=a(\alpha+\beta)$\textit{. Additionally, multiplication in $\mathbb{F}$ and $\mathbb{F}$-scalar multiplication are required to commute:} $\cdot(a\alpha,\beta)=a(\alpha\beta)$\textit{. One says that} $\big(\mathcal{A},\mathbb{F},+,\cdot\big)$ \textit{is a vector space over $\mathbb{F}$.}
\end{definition}

\begin{definition}\label{algDef} \textit{Let $(\mathcal{A},\mathbb{F},+,\cdot)$ be a vector space. An } algebra \textit{is a quintuple $(\mathcal{A},\mathbb{F},+,\cdot,\jProd)$ where $\mathcal{A}$ is regarded to be equipped with} multiplication\textit{ $\jProd:\mathcal{A}\times\mathcal{A}\longrightarrow\mathcal{A}::(a,b)\longmapsto a\jProd b$. Multiplication is required to obey the distributivity axioms: $\jProd(a+b,c)=a\jProd c+b\jProd c$ and $\jProd(a,b+c)=a\jProd b+a\jProd c$ and $\jProd(a\lambda,b\mu)=\jProd(a,b)\lambda\mu$, where $\lambda,\mu\in\mathbb{F}$. An algebra is said to be} unital \textit{if there exists $u\in\mathcal{A}:\forall a\in\mathcal{A}:\jProd(u,a)=a$. An algebra is said to be} commutative i\textit{f $\forall a,b\in\mathcal{A}:\jProd(a,b)=\jProd(b,a)$. One says that $(\mathcal{A},\mathbb{F},+,\cdot,\jProd)$ is an algebra over $\mathbb{F}$, or an $\mathbb{F}$-algebra.}
\end{definition}
\noindent From now on, we abuse notation and write $\mathcal{A}$ for an $\mathbb{F}$-algebra $(\mathcal{A},\mathbb{F},+,\cdot,\jProd)$. 

\begin{definition}\label{jAlgDef} \textit{A} Jordan algebra \textit{is a commutative unital $\mathbb{R}$-algebra $\mathcal{A}$ wherein} $\forall a,b\in\mathcal{A}$
\begin{equation} 
(a\jProd a)\jProd(a\jProd b)=a\jProd\big((a\jProd a)\jProd b\big)\textit{.}
\label{jordanID}
\end{equation}
\end{definition}

\noindent Eq.~\eqref{jordanID} is the \textit{Jordan identity}, in light of which we write $a^{2}$ for $a\jProd a$ and similarly for higher exponentials. For Jordan algebras, we shall refer to $\jProd$ as the \textit{Jordan product}. Except for trivial cases: Jordan algebras are \textit{not} associative! The canonical example of a Jordan algebra is the set of $n\times n$ self-adjoint matrices with entries from $\mathbb{C}$, denoted $\mathcal{M}_{n}(\mathbb{C})_{\text{sa}}$, equipped with Jordan product $a\jProd b=(ab+ba)/2$, where juxtaposition denotes usual matrix multiplication. The Jordan algebra $\mathcal{M}_{n}(\mathbb{C})_{\text{sa}}$ is simply the ambient space for quantum cones wherein one represents self-adjoint linear endomorphisms as matices. It is easy to see that $\mathcal{M}_{n}(\mathbb{C})_{\text{sa}}$ is of the following type.

\begin{definition}\label{frDef} \textit{A} formally real Jordan algebra \textit{is a Jordan algebra} $\mathcal{A}$ \textit{such that} $\forall a_{1},\dots,a_{n}\in\mathcal{A}$
\begin{equation}
a_{1}^{2}+a_{2}^{2}+\dots+a_{n}^{2}=0\iff a_{1}=a_{2}=\dots=a_{n}=0\textit{.}
\end{equation}
\end{definition}

\noindent In this thesis, we will be exclusively concerned with \textit{finite dimensional} Jordan algebras, that is Jordan algebras $\mathcal{A}$ such that as a vector space $\text{dim}_{\mathbb{R}}\mathcal{A}$ is finite. In this case, formally real Jordan algebras are the same as Euclidean Jordan algebras. We define the latter presently, and then proceed with a proof.

\begin{definition}\label{ejaDef}\textit{A} Euclidean Jordan algebra \textit{(}\textsc{eja}\textit{)} \textit{is a finite dimensional Jordan algebra equipped with an inner product} $\langle\cdot|\cdot\rangle:\mathcal{A}\times\mathcal{A}\longrightarrow\mathbb{R}$ \textit{such that} $\forall a,b,c\in\mathcal{A}$
\begin{equation}
\langle a\jProd b|c\rangle=\langle a|b\jProd c\rangle\textit{.}
\label{euclidProp}
\end{equation}
\end{definition} 

\noindent Let $\mathcal{A}$ be an \textsc{eja}. Suppose there exists $a_{j}$ in $\mathcal{A}$ such that $\sum_{j}a_{j}^{2}=0$. By virtue of being Euclidean, $\mathcal{A}$ is endowed with an inner product $\langle\cdot|\cdot\rangle$, which in particular enjoys linearity so, $\langle 0|a\rangle=\langle c0|a\rangle=c\langle0|a\rangle$ for any real number $c$ and any $a$ in $\mathcal{A}$, so $\langle 0|a\rangle=0$ for any $a$ in $\mathcal{A}$, in particular $\langle 0|1\rangle=0$, with $1$ the unit. It follows that $\langle\sum_{j}a_{j}^{2}|1\rangle=0$. Now, by virtue of the Euclidean property expressed in Eq.~\eqref{euclidProp}, we have that $\langle a_{j}^{2}|1\rangle=\langle a_{j}|a_{j}\rangle$, so to conclude we have that $\sum_{j}\langle a_{j}|a_{j}\rangle=0$, implying each $a_{j}=0$, so $\mathcal{A}$ is formally real. The converse is easy to show in light of the Jordan-von Neumann-Wigner Classification Theorem, which we recall below as \cref{jvwClass}. 

\noindent Prior to the statement of \cref{jvwClass}, it will be helpful to introduce some preliminary notions. By the \textit{direct sum} of two Jordan algebras $\mathcal{A}$ and $\mathcal{B}$ one means the Jordan algebra $\mathcal{A}\oplus\mathcal{B}$ spanned by $a_{j}\oplus b_{j}$, with $a_{j}$ and $b_{j}$ basis elements for their respective Jordan algebras, equipped with component-wise Jordan multiplication. By the words `Jordan isomorphic' one means that there exists a vector space isomorphism preserving the relevant Jordan products. More generally, a \textit{Jordan homomorphism} is a linear map $f:\mathcal{A}\longrightarrow\mathcal{B}$ such that $\forall a_{1},a_{2}\in\mathcal{A}:f\big(a_{1}\jProd a_{2}\big)=f(a_{1})\jProd f(a_{2})$; furthermore $f$ must map the algebraic unit of $\mathcal{A}$ to the algebraic unit of $\mathcal{B}$. The terms \textit{Jordan isomorphism}, \textit{Jordan epimorphism}, \textit{Jordan monomorphism}, \textit{Jordan endomorphism}, and \textit{Jordan automorphism} are used respectively for bijective, surjective, injective, endomorphic, and automorphic Jordan homomorphisms.

\begin{theorem} (Jordan-von Neumann-Wigner \cite{Jordan1934}) \textit{Let} $n\in\mathbb{N}$\textit{. Let} $k\in\mathbb{N}_{>1}$\textit{. Let} $\cdot:\mathbb{R}^{n}\times\mathbb{R}^{n}\longrightarrow\mathbb{R}$ \textit{be the usual inner product on $\mathbb{R}^{n}$. Let} $\mathcal{A}$ \textit{be a finite dimensional formally real Jordan algebra. Then} $\mathcal{A}$ \textit{is Jordan isomorphic to a direct sum of Jordan algebras from the following list:}
\begin{itemize}\label{jvwClass}
\item $\mathcal{M}_{n}(\mathbb{R})_{\text{sa}}$ \textit{with} $a\jProd b=(ab+ba)/2$\textit{.}
\item $\mathcal{M}_{n}(\mathbb{C})_{\text{sa}}$ \textit{with} $a\jProd b=(ab+ba)/2$\textit{.}
\item $\mathcal{M}_{n}(\mathbb{H})_{\text{sa}}$ \textit{with} $a\jProd b=(ab+ba)/2$\textit{.}
\item $\mathcal{V}_{k}\cong \mathbb{R}\oplus\mathbb{R}^{k}$ \textit{with} $(\lambda_{0}\oplus\vec{\lambda})\jProd(\mu_{0}\oplus\vec{\mu}) =(\lambda_{0}\mu_{0}+\vec{\lambda}\cdot\vec{\mu})\oplus (\vec{\lambda}\mu_{0}+\vec{\mu}\lambda_{0})$\textit{.}
\item $\mathcal{M}_{3}(\mathbb{O})_{\text{sa}}$ \textit{with} $a\jProd b=(ab+ba)/2$\textit{.}
\end{itemize}
\textit{where} $\mathbb{R},\mathbb{C},\mathbb{H},\mathbb{O}$ \textit{are the reals, complexes, quaternions, and octonions, respectively, and wherein} $\mathcal{M}_{n}(\mathbb{D})_{\text{sa}}$ \textit{denotes the Jordan algebra of} $n\times n$ \textit{self-adjoint matrices over} $\mathbb{D}\in\{\mathbb{R},\mathbb{C},\mathbb{H},\mathbb{O}\}$\textit{.}
\end{theorem}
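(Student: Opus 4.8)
The theorem is a deep structural result whose full proof is beyond a short sketch, so the plan here is to organize the argument into the standard sequence of reductions that reduce a general Euclidean Jordan algebra to the listed simple factors. First I would reduce to the simple case. The key observation is that any \textsc{eja} $\mathcal{A}$ carries a canonical notion of idempotent, and a \emph{central} idempotent $c$ (one satisfying $c \jProd (a \jProd b) = (c \jProd a) \jProd b$ for all $a,b$) induces a Jordan-algebra decomposition $\mathcal{A} \cong c \jProd \mathcal{A} \oplus (1-c) \jProd \mathcal{A}$. Using the formally real condition from \cref{frDef} one shows the Peirce decomposition relative to such an idempotent is orthogonal with respect to the inner product, so the process terminates and $\mathcal{A}$ splits as a direct sum of \emph{simple} \textsc{eja}s. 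Thus it suffices to classify the simple ones, which is what the itemized list enumerates.

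Second, within the simple case I would build the spectral and Peirce machinery. Every element of a formally real Jordan algebra has a spectral decomposition into a real-linear combination of orthogonal primitive idempotents, yielding a well-defined \emph{rank} $r$ and a \emph{Jordan frame} $\{e_1,\dots,e_r\}$ of primitive orthogonal idempotents summing to the unit. Relative to a frame one obtains the Peirce decomposition $\mathcal{A} = \bigoplus_{i \leq j} \mathcal{A}_{ij}$, where the diagonal pieces $\mathcal{A}_{ii} = \mathbb{R} e_i$ (primitivity plus formal reality force one-dimensionality) and the off-diagonal pieces $\mathcal{A}_{ij}$ all share a common dimension $k$, the \emph{Peirce dimension}. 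The coordinatization theorem then enters: for rank $r \geq 3$ the off-diagonal Peirce spaces can be given the structure of a composition algebra $\mathbb{D}$, and the multiplication rules among the $\mathcal{A}_{ij}$ force $\mathbb{D}$ to be an alternative composition algebra over $\mathbb{R}$. By the Hurwitz theorem on composition algebras, $\mathbb{D} \in \{\mathbb{R},\mathbb{C},\mathbb{H},\mathbb{O}\}$, giving precisely the matrix algebras $\mathcal{M}_n(\mathbb{D})_{\text{sa}}$; the rank-$2$ case separately produces the spin factors $\mathcal{V}_k$.

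Third, I would handle the exceptional and low-rank constraints. The octonions are non-associative, and the associativity needed to coordinatize $n \times n$ matrices for $n \geq 4$ fails over $\mathbb{O}$; this is the point where one proves that $\mathcal{M}_n(\mathbb{O})_{\text{sa}}$ is a genuine Jordan algebra \emph{only} for $n \leq 3$, pinning down the single exceptional algebra $\mathcal{M}_3(\mathbb{O})_{\text{sa}}$. Finally, to close the biconditional promised in the surrounding text, the converse direction is the routine verification that each algebra on the list is formally real and admits an associative trace inner product $\langle a | b \rangle = \operatorname{Tr}(a \jProd b)$ satisfying \cref{euclidProp}, which combined with the forward argument of the excerpt shows formal reality and the Euclidean property coincide in finite dimensions. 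The \textbf{main obstacle} is the coordinatization step: extracting a genuine composition-algebra structure from the abstract Peirce multiplication and then invoking Hurwitz's theorem is the technical heart of the classification, and it is precisely where the non-associativity of $\mathbb{O}$ must be tracked carefully to force the rank restriction $n \leq 3$ in the exceptional case. For a thesis of this kind I would cite \cite{Jordan1934} and the treatment in \cite{HancheOlsen1984} rather than reproduce this machinery in full.
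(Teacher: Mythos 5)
The paper does not prove this theorem at all: it is stated as a classical result and attributed to \cite{Jordan1934}, exactly as you conclude one should do. Your outline (reduction to the simple case via central idempotents, Peirce decomposition relative to a Jordan frame, the coordinatization theorem plus Hurwitz's theorem for rank $\geq 3$, spin factors at rank $2$, and the failure of coordinatization over $\mathbb{O}$ for $n \geq 4$) is a faithful sketch of the standard argument found in the cited literature, so there is nothing to compare against and nothing to correct.
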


\noindent The algebras $\mathcal{M}_{n}(\mathbb{R})_{\text{sa}}$ and $\mathcal{M}_{n}(\mathbb{C})_{\text{sa}}$ are ubiquitous, and one should point out that by a self-adjoint real matrix one means the same thing as a real symmetric matrix. A review of the algebras $\mathcal{M}_{n}(\mathbb{H})_{\text{sa}}$ can be found in the author's MSc thesis \cite{Graydon2011}. As a matter of terminology, one refers to $\mathcal{M}_{n}(\mathbb{R})_{\text{sa}}$, $\mathcal{M}_{n}(\mathbb{C})_{\text{sa}}$, and $\mathcal{M}_{n}(\mathbb{H})_{\text{sa}}$ as the \textit{Jordan matrix algebras.} The Jordan matrix algebras are easily rendered Euclidean by virtue of the natural inner product $(a,b)\longmapsto\langle a|b\rangle\equiv\mathrm{Tr}(a\jProd b)$, with $\mathrm{Tr}$ the matrix trace. The \textit{exceptional Jordan algebra} of $3\times 3$ self-adjoint matrices with octonionic entries can also be rendered Euclidean in this way. Baez provides an accessible and comprehensive review of the octonions in \cite{Baez2002}. The remaining items, $\mathcal{V}_{k}$, on the list in \cref{jvwClass} are called \textit{spin factors}, which for the reader's convenience we discuss further in \cref{spinsApp}. The usual Euclidean inner product on the direct sum $\mathbb{R}^{k}\oplus\mathbb{R}$ renders $\mathcal{V}_{k}$ Euclidean. Therefore, we shall now use the term Euclidean Jordan Algebra (\textsc{eja}) for the finite dimensional formally real Jordan algebras equipped with the aforementioned inner products. The \textsc{eja}s listed in \cref{jvwClass} are \cite{Stormer1966} simple in the following sense; moreover by \cref{jvwClass} they are the only simple \textsc{eja}s.

\begin{definition}\textit{Let} $\mathcal{A}$ \textit{be a Jordan algebra. A} Jordan ideal \textit{is a Jordan subalgebra} $\mathcal{I}\subseteq\mathcal{A}$ \textit{such that} $b\in\mathcal{I}$ \textit{and} $a\in\mathcal{A}$ \textit{implies that} $b\jProd a\in\mathcal{I}$\textit{. A} trivial \textit{Jordan ideal} is $\mathcal{I}\subseteq\mathcal{A}$ \textit{such that} $\mathcal{I}\in\{\emptyset,\mathcal{A}\}$\textit{; otherwise $\mathcal{I}$ is} nontrivial\textit{. A} simple \textit{Jordan algebra is a Jordan algebra admitting no nontrivial Jordan ideals.}
\end{definition}

\noindent A \textit{projection} in an \textsc{eja} $\mathcal{A}$ is an element $a \in \mathcal{A}$ with $a^{2}=a$. If $p,q$ are projections with $p\jProd q=0$, then one says that $p$ and $q$ are \textit{orthogonal}. In this case, $p+q$ is another projection. A projection not representable as a sum of other projections is said to be \textit{primitive}. A  \textit{Jordan frame} is a set $E\subseteq \mathcal{A}$ of pairwise orthogonal primitive projections that sum to the algebraic unit.  The \textit{spectral theorem} for \textsc{eja}s \cite{FK} asserts that for every element $a \in \mathcal{A}$ there exists a Jordan frame $\{x_{j}\}$ and real numbers $\alpha_{j}\in\mathbb{R}$ such that $a$ can be expanded as the linear combination $a =\sum_{j} x_{j}\alpha_{j}$. The group of Jordan automorphisms on $\mathcal{A}$ acts transitively on the set of Jordan frames \cite{FK}.  Hence all Jordan frames for a given \textsc{eja} $\mathcal{A}$ have the same number of elements.  This number is called the \textit{dimension} of $\mathcal{A}$, and is denoted $\text{dim}_{\mathbb{R}}\mathcal{A}$. Simple counting arguments yield
\begin{eqnarray}
\text{dim}_{\mathbb{R}}\mathcal{M}_{n}(\mathbb{R})_{\text{sa}}&=&n(n+1)/2\text{,}\\
\text{dim}_{\mathbb{R}}\mathcal{M}_{n}(\mathbb{C})_{\text{sa}}&=&n^{2}\text{,}\\
\text{dim}_{\mathbb{R}}\mathcal{M}_{n}(\mathbb{H})_{\text{sa}}&=&n(2n-1)\text{,}\\
\text{dim}_{\mathbb{R}}\mathcal{V}_{k}&=&k+1\text{,}\\
\text{dim}_{\mathbb{R}}\mathcal{M}_{3}(\mathbb{O})_{\text{sa}}&=&27\text{,}\\
\end{eqnarray}
and so by \cref{jvwClass}, together with our forthcoming Eqs.~\eqref{tDefC},\eqref{pOdd}, and \eqref{pEven}, it follows that
\begin{eqnarray}
\mathcal{V}_{2}\cong\mathcal{M}_{2}(\mathbb{R})_{\text{sa}}\text{,}\label{v2iso}\\
\mathcal{V}_{3}\cong\mathcal{M}_{2}(\mathbb{C})_{\text{sa}}\text{,}\label{v3iso}\\
\mathcal{V}_{5}\cong\mathcal{M}_{2}(\mathbb{H})_{\text{sa}}\text{,}
\end{eqnarray}
which with the trivial isomorphisms $\mathbb{R}\cong\mathcal{M}_{1}(\mathbb{R},\mathbb{C},\mathbb{H})_{\text{sa}}$ constitute the only redundancies in \cref{jvwClass}. In order to discuss \textsc{eja}s in more detail, we shall require a brief foray into ordered vector spaces \cite{ASbasic}.

\noindent Let $\mathcal{A}$ be a vector space over $\mathbb{R}$. Let $\mathcal{K}\subseteq\mathcal{A}$ be a generating pointed convex cone as in \cref{coneDef}. In this case, $\mathcal{K}$ induces a partial ordering of $\mathcal{A}$, given by $a \leq b$ iff $b - a \in \mathcal{K}$; this is \textit{translation invariant}, \textit{i.e}.\ $a \leq b$ implies $a+c\leq b+c$ for all $a,b,c\in\mathcal{A}$, and \textit{homogeneous}, \textit{i.e}.\ $a\leq b$ implies $a\alpha \leq b\alpha$ for all $\alpha\in\mathbb{R}_{\geq 0}$. Conversely, such an ordering determines a cone, namely $\mathcal{K} =\{a\;\boldsymbol{|}a\geq 0\}$. Accordingly, an {\em ordered vector space} is a vector space $\mathcal{A}$ over $\mathbb{R}$ equipped with a designated cone $\mathcal{A}_{+}$  of positive elements. If $\mathcal{A}$ and $\mathcal{B}$ are ordered vector spaces, a linear function $f:\mathcal{A}\longrightarrow \mathcal{B}$ is \textit{positive} if and only if  $f(\mathcal{A}_{+})\subseteq\mathcal{B}_{+}$. If $f$ is bijective and $f(\mathcal{A}_{+})=\mathcal{B}_{+}$, then $f^{-1}(\mathcal{B}_{+})=\mathcal{A}_{+}$, so that$f^{-1}$ is also positive. In this case, we say that $f$ is an \textit{order isomorphism}.  An \textit{order automorphism} of $\mathcal{A}$ is an order isomorphism from $\mathcal{A}$ to itself; these form the corresponding \textit{order automorphism group} where the group operation is composition. Denoting the dual space of $\mathcal{A}$ by $\mathcal{A}^{\star}$, the \textit{dual cone}, denoted $\mathcal{A}^{\star}_{+}$, is the set of positive linear functionals on $\mathcal{A}$.

\begin{definition}\label{hsdDef} \textit{Let} $\mathcal{A}$ \textit{be an ordered vector space over} $\mathbb{R}$\textit{. The cone} $\mathcal{A}_{+}$ \textit{is} self-dual \textit{when there exists an inner product} $\langle\cdot|\cdot\rangle:\mathcal{A}\times\mathcal{A}\longrightarrow\mathbb{R}$ \textit{such that the} internal dual cone\textit{,} $A_{+}^{*\text{int}}\equiv\{a\in\mathcal{A}\;\boldsymbol{|}\;\forall b\in\mathcal{A}_{+}\langle a|b\rangle\geq 0\}$\textit{, satisfies} $A_{+}^{*\text{int}}=\mathcal{A}_{+}$. 
\end{definition}

\begin{definition}
\textit{Let} $\mathcal{A}$ \textit{be an ordered vector space over} $\mathbb{R}$\textit{. The cone} $\mathcal{A}_{+}$ \textit{is} homogeneous \textit{when its order automorphism group acts transitively on its interior.}
\end{definition}

\noindent We now recall a truly remarkable result proved independently by Koecher \cite{Koecher1958} and Vinberg \cite{Vinberg1960}.

\begin{theorem}(Koecher \cite{Koecher1958} and Vinberg \cite{Vinberg1960}) \textit{The only finite dimensional homogeneous self-dual cones are the positive cones of the Euclidean Jordan algebras.}
\end{theorem}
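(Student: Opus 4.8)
The statement to be proved is the Koecher–Vinberg theorem: every finite dimensional homogeneous self-dual cone arises as the positive cone $\mathcal{A}_+$ of some Euclidean Jordan algebra, and conversely. The converse direction is the easier half, so I would dispatch it first. Given an \textsc{eja} $\mathcal{A}$ in the sense of \cref{ejaDef}, I would show $\mathcal{A}_+ \equiv \{a^2 \;\boldsymbol{|}\; a \in \mathcal{A}\}$ (equivalently, the set of squares, which by the spectral theorem for \textsc{eja}s equals the elements with nonnegative spectrum) is a homogeneous self-dual cone. Self-duality follows directly from the associativity property $\langle a\jProd b|c\rangle = \langle a|b\jProd c\rangle$ in Eq.~\eqref{euclidProp}: one checks that the internal dual cone with respect to the Euclidean inner product coincides with $\mathcal{A}_+$, using the spectral decomposition $a = \sum_j x_j \alpha_j$ to test positivity against the Jordan frame. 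Homogeneity is the content of showing the order automorphism group acts transitively on the interior; the natural tool is the quadratic representation $U_a : b \mapsto 2a\jProd(a\jProd b) - a^2 \jProd b$, for which $U_a$ is an order automorphism whenever $a$ is invertible, and one verifies $U_{a^{1/2}}$ maps the unit to any prescribed interior point.

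The forward direction is the deep half and the main obstacle. Here one starts with an abstract finite dimensional cone $\Omega \subseteq \mathcal{A}$ that is open, convex, pointed, generating, self-dual with respect to some inner product, and homogeneous under its linear automorphism group $G = \Aut(\Omega)$. The plan is to manufacture a Jordan product purely from the cone geometry. The classical route, which I would follow, uses the \emph{characteristic function} $\varphi(x) = \int_{\Omega^*} e^{-\langle x|y\rangle}\,dy$ of the cone. One shows $\psi = -\log\varphi$ is a strictly convex $G$-invariant function whose Hessian defines a $G$-invariant Riemannian metric on $\Omega$, and whose gradient map $x \mapsto -\nabla\log\varphi(x)$ is an involutive bijection of $\Omega$ onto the dual cone $\Omega^*$; self-duality lets us identify $\Omega^* \iso \Omega$. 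I would then locate a distinguished basepoint $e \in \Omega$ fixed by a maximal compact subgroup of $G$ (the isotropy group at $e$), with $e$ playing the role of the algebraic unit.

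The heart of the construction is to define the Jordan product at the basepoint via the metric and the $G$-action: one sets $a \jProd b$ by polarizing the quadratic map coming from the differential of the group action, using that the Lie algebra $\mathfrak{g}$ of $G$ decomposes, under the Cartan involution determined by $e$, into a subalgebra $\mathfrak{k}$ (the isotropy algebra) and a complementary subspace $\mathfrak{p}$ that is identified with $\mathcal{A}$ itself. The map $X \mapsto X(e)$ gives a linear isomorphism $\mathfrak{p} \to \mathcal{A}$, and transporting the operator product through this isomorphism yields an operator $L_a$ for each $a \in \mathcal{A}$; one then \emph{defines} $a \jProd b \equiv L_a(b)$. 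The crux is verifying that this product is commutative and satisfies the Jordan identity Eq.~\eqref{jordanID}, and that $\mathcal{A}$ with this product is formally real (\cref{frDef}) with positive cone exactly $\Omega$. Commutativity and the power-associativity encoded in the Jordan identity are forced by the homogeneity together with the symmetry of the metric, while formal reality follows from self-duality: a sum of squares lying in $-\Omega \cap \Omega = \{0\}$ must vanish. I expect the genuine difficulty to lie in this verification of the Jordan identity — extracting a nonassociative algebraic identity from the differential-geometric data requires careful bookkeeping of the isotropy representation and the invariance of $\psi$, and this is where the Vinberg theory of $T$-algebras (or, alternatively, Koecher's direct analysis of the gradient map) does the real work. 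Once the Jordan structure is in hand, \cref{jvwClass} is not needed for the proof itself but confirms that the resulting \textsc{eja} is one of the classified type.
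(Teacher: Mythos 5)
First, a point of orientation: the thesis does not prove this theorem. It is stated as a cited classical result of Koecher and Vinberg, and the text explicitly defers to \cite{Koecher1958} and \cite{Vinberg1960}. So there is no in-paper argument to compare yours against; what follows is an assessment of your outline on its own terms.

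Your architecture is the standard one (Koecher's characteristic-function construction, as systematized in Faraut--Kor\'anyi). The easy direction is essentially complete and matches machinery the thesis itself later deploys: self-duality of $\mathcal{A}_{+}$ via the associativity of the trace form and the spectral theorem, and homogeneity via the quadratic representations $U_{a}$, using $U_{a}(u)=a^{2}$ to move the unit to any interior point. One small item you gloss over: positivity of the inner product on the cone (that $\langle a|b\rangle\geq 0$ for $a,b\in\mathcal{A}_{+}$) is not immediate from Eq.~\eqref{euclidProp} alone; it needs $L_{b}\geq 0$ for $b\in\mathcal{A}_{+}$, which in turn rests on the Peirce spectrum of $L_{p}$ for idempotents $p$. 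This is routine but should be said.

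The genuine gap is in the forward direction. Everything up to the identification $\mathfrak{p}\iso\mathcal{A}$ via $X\mapsto X(e)$ is a correct roadmap, but the theorem \emph{is} the step you then defer: proving that the bilinear product $a\jProd b:=L_{a}(b)$ so obtained is (i) commutative, i.e.\ $L_{a}(b)=L_{b}(a)$, and (ii) satisfies the Jordan identity, which in operator form is $[L_{a},L_{a^{2}}]=0$. Saying these are ``forced by the homogeneity together with the symmetry of the metric'' names the inputs but does not derive the output. Concretely, (i) requires showing that the elements of $\mathfrak{p}$ are exactly the operators symmetric with respect to the Hessian metric of $-\log\varphi$ at $e$ and exploiting $[\mathfrak{p},\mathfrak{p}]\subseteq\mathfrak{k}$ together with $\mathfrak{k}(e)=0$; and (ii) requires the bracket computation showing that $[L_{a},L_{b}]\in\mathfrak{k}$ acts as a derivation of the product, from which $[L_{a},L_{a^{2}}]=0$ follows by polarization. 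Formal reality then follows as you say, but only once $a^{2}\in\overline{\Omega}$ has been verified for the constructed product, which again depends on the deferred identities. As an outline of the classical proof your proposal is accurate, and it honestly flags where the difficulty lies; as a proof it stops precisely at the point where all of the content of Koecher's and Vinberg's work is concentrated.
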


\noindent A very simple characterization of the positive cone of an \textsc{eja} $\mathcal{A}$ can \cite{HancheOlsen1984} be given as follows: $\mathcal{A}_{+}$ is precisely the set of squares of elements in $\mathcal{A}$, which is to say that $\mathcal{A}_{+}=\{a^{2}\;\boldsymbol{|}\; a\in\mathcal{A}\}$. One of our goals is to construct composites in physical theories based on \textsc{eja}s. In order to accomplish this goal, we will appeal to aspects of the well known tensor product structure of C$^{*}$\!-algebras; hence our \cref{compositesEJA} rests heavily on the theory of representations of Jordan algebras, to be reviewed presently.

\noindent We begin by recalling that a \textit{Banach space} is a complete normed vector space over $\mathbb{F}$ with respect to a norm $\|{\cdot}\|$. Let $\mathcal{A}$ be an associative algebra such that, as a vector space, $\mathcal{A}$ is Banach. One says that $\mathcal{A}$ is a \textit{Banach algebra} if $\forall x,y\in\mathcal{A}:\|{xy}\|\leq\|{x}\|\|{y}\|$. Furthermore, one says that $\mathcal{A}$ is a \textit{Banach $^{*}$\!-algebra} when $\mathcal{A}$ comes equipped with an \textit{isometric involution} taking $x\longmapsto x^{*}$ --- \textit{i.e.} $\forall x,y\in\mathcal{A}$ and $\forall\lambda,\mu\in\mathbb{F}$ one has that $(\lambda x+\mu y)^{*}=\overline{\lambda}x^{*}+\overline{\mu}y^{*}$, $(xy)^{*}=y^{*}x^{*}$, and $\|x^{*}\|=\|x\|$.

\begin{definition}\label{cStarDef}\textit{A} C$^{*}$\!-algebra \textit{is a Banach $^{*}$\!-algebra $\mathfrak{B}$ over} $\mathbb{C}$ \textit{such that} $\forall x\in\mathfrak{B}$
\begin{equation}
\|x^{*}x\|=\|x\|^{2}\text{.}
\end{equation}
\end{definition}

\noindent In this thesis, we are interested the finite-dimensional C$^{*}$-algebras $\mathcal{M}_{n}\big(\mathbb{C}\big)$, and their self-adjoint Jordan subalgebras\footnote{By a \textit{Jordan subalgebra} of the self-adjoint part $\mathfrak{B}_{\text{sa}}$ of a C$^{*}$\!-algebra $\mathfrak{B}$ one means a subset $\mathcal{A}\subseteq\mathfrak{B}_{\text{sa}}$ that is a Jordan algebra with respect to the Jordan product $a\jProd b\equiv(ab+ba)/2$, where juxtaposition denotes the associative product in $\mathfrak{B}$.} $\mathcal{M}_{n}(\mathbb{C})_{\text{sa}}$. These furnish representations of all \textit{special} \textsc{eja}s, which are defined as follows.

\begin{definition}\label{repDef}\textit{Let} $\mathcal{A}$ \textit{be a Euclidean Jordan algebra as in \cref{ejaDef}. A} representation \textit{of} $\mathcal{A}$ \textit{is a Jordan monomorphism} $\pi:\mathcal{A}\longrightarrow\mathcal{M}_{n}(\mathbb{C})_{\text{sa}}$\textit{ for some} $n\in\mathbb{N}$\textit{. A} special \textsc{eja} \textit{is a Euclidean Jordan algebra for which there exists such a representation.} 
\end{definition}

\noindent With the sole exception of $\mathcal{M}_{3}(\mathbb{O})_{\text{sa}}$, all simple \textsc{eja}s are special; hence, $\mathcal{M}_{3}(\mathbb{O})_{\text{sa}}$ is called \textit{exceptional}. Indeed, one has the following \textit{standard} representations (where $i^{2}=j^{2}=k^{2}=ijk=-1$, and where we write an arbitrary element of $\mathcal{M}_{n}(\mathbb{H})_{\text{sa}}$ as $a=\Gamma_{1}+\Gamma_{2}j$ with $\Gamma_{1}=\Gamma_{1}^{*}\in\mathcal{M}_{n}(\mathbb{C})_{\text{sa}}$ and $\Gamma_{2}=-\Gamma_{2}^{\text{T}}\in\mathcal{M}_{n}(\mathbb{C})$)
\begin{eqnarray}
\pi_{\mathbb{R}}:\mathbb{R}\longrightarrow C^{*}_{s}(\mathbb{R})\equiv\mathbb{C}&::&a\longmapsto a\label{unitEmbed}\\[0.2cm]
\pi_{\mathcal{M}_{n}(\mathbb{R})_{\text{sa}}}:\mathcal{M}_{n}(\mathbb{R})_{\text{sa}}\longrightarrow C^{*}_{s}(\mathcal{M}_{n}(\mathbb{R})_{\text{sa}})\equiv \mathcal{M}_{n}(\mathbb{C})&::&a\longmapsto a \label{realEmbed}\\[0.2cm]
\pi_{\mathcal{M}_{n}(\mathbb{C})_{\text{sa}}}:\mathcal{M}_{n}(\mathbb{C})_{\text{sa}}\longrightarrow C^{*}_{s}(\mathcal{M}_{n}(\mathbb{C})_{\text{sa}})\equiv \mathcal{M}_{n}(\mathbb{C})&::&a\longmapsto a\label{complexEmbed}\\
\pi_{\mathcal{M}_{n}(\mathbb{H})_{\text{sa}}}:\mathcal{M}_{n}(\mathbb{H})_{\text{sa}}\longrightarrow C^{*}_{s}(\mathcal{M}_{n}(\mathbb{H})_{\text{sa}})\equiv \mathcal{M}_{2n}(\mathbb{C})&::&\Gamma_{1}+\Gamma_{2}j\longmapsto \begin{pmatrix} \;\;\;\Gamma_{1} & \Gamma_{2} \\ -\overline{\Gamma_{2}} & \overline{\Gamma_{1}} \end{pmatrix}\label{quatEmbed}
\end{eqnarray}

\noindent For any\footnote{The standard representation of a direct sum is defined as the direct sum of the summands' standard representations.} special \textsc{eja} $\mathcal{A}$, we call $C^{*}_{s}(\mathcal{A})$ its \textit{standard C$^{*}$\!-algebra}, with $\pi_{\mathcal{A}}$ denoting the \textit{standard representation} thereon. The Jordan homomorphisms in Eq.~\eqref{unitEmbed}, Eq.~\eqref{realEmbed}, and Eq.~\eqref{complexEmbed} are just inclusion. The quaternionic embedding in Eq.~\eqref{quatEmbed} is the symplectic one described in detail in the author's MSc thesis \cite{Graydon2011}, which merits a little discussion. Define 
\begin{equation}
\Phi:\mathcal{M}_{2n}(\mathbb{C})\longrightarrow\mathcal{M}_{2n}(\mathbb{C})::x\longmapsto -(JxJ)^{\text{T}}
\label{phiQuat}
\end{equation} 
with $(\cdot)^{\text{T}}$ the transpose and
\begin{equation}
J\equiv\begin{pmatrix} \mathbf{0}_{n} & \mathds{1}_{n} \\ -\mathds{1}_{n} & \mathbf{0}_{n}\end{pmatrix}\text{.}
\label{jMatrix}
\end{equation}
where $\mathbf{0}_{n}$ and $\mathds{1}_{n}$ are the $n\times n$ zero and identity matrices, respectively. The \textit{symplectic representation} of $\mathcal{M}_{n}(\mathbb{H})_{\text{sa}}$ is defined by the Jordan monomorphism in Eq.~\eqref{quatEmbed}, and one can prove, with $\mathcal{M}_{2n}(\mathbb{C})_{\text{sa}}^{\Phi}$ denoting the self-adjoint fixed points of $\Phi$, the Jordan isomorphism $\mathcal{M}_{2n}(\mathbb{C})_{\text{sa}}^{\Phi}\cong\mathcal{M}_{n}(\mathbb{H})_{\text{sa}}$.

\noindent The standard representation of $\mathcal{V}_{k}$ (recall $k\in\mathbb{N}_{>1}$) introduced by Barnum-Graydon-Wilce in \cite{Barnum2016}, is involved. To begin, recall the \textit{complex Pauli matrices}
\begin{equation}
\sigma_{o}=\begin{pmatrix}1&\hspace{0.2cm}0\\0&\hspace{0.2cm}1\end{pmatrix}\qquad
\sigma_{z}=\begin{pmatrix}1&\hspace{0.2cm}0\\0&-1\end{pmatrix}\qquad \sigma_{x}=\begin{pmatrix}0&\hspace{0.2cm}1\\1&\hspace{0.2cm}0\end{pmatrix}\qquad\sigma_{y}=\begin{pmatrix}0&-i\\i&\hspace{0.2cm}0\end{pmatrix}\text{.}
\end{equation}
$\forall p\in\mathbb{N}$ with $1\leq p\leq k$, and with $\left\lfloor{\cdot}\right\rfloor$ and $\left\lceil{\cdot}\right\rceil$ the usual floor and ceiling functions, define
\begin{eqnarray}
t_{p}=\begin{cases}
\begin{cases}
\sigma_{y}^{\otimes^{\left\lceil{\frac{p}{2}}\right\rceil-1}}\otimes\sigma_{z}\otimes\sigma_{o}^{\otimes^{\left\lfloor{\frac{k}{2}}\right\rfloor-\left\lceil{\frac{p}{2}}\right\rceil}} & p \text{ odd} \\ 
\sigma_{y}^{\otimes^{\left\lceil{\frac{p}{2}}\right\rceil-1}}\otimes\sigma_{x}\otimes\sigma_{o}^{\otimes^{\left\lfloor{\frac{k}{2}}\right\rfloor-\left\lceil{\frac{p}{2}}\right\rceil}} & p \text{ even}
\end{cases} & k \text{ even} \\[1cm]
\begin{cases}
\sigma_{y}^{\otimes^{\left\lceil{\frac{p}{2}}\right\rceil-1}}\otimes\sigma_{z}\otimes\sigma_{o}^{\otimes^{\left\lceil{\frac{k}{2}}\right\rceil-\left\lceil{\frac{p}{2}}\right\rceil}} & p \text{ odd} \\ 
\sigma_{y}^{\otimes^{\left\lceil{\frac{p}{2}}\right\rceil-1}}\otimes\sigma_{x}\otimes\sigma_{o}^{\otimes^{\left\lceil{\frac{k}{2}}\right\rceil-\left\lceil{\frac{p}{2}}\right\rceil}} & p \text{ even}
\end{cases} & k \text{ odd}
\end{cases}
\label{tDefC}\end{eqnarray}
where our notation is such that $x^{\otimes^{0}}=1\in\mathbb{R}$, $x\otimes 1=x=x^{\otimes^{1}}$, $x^{\otimes^{2}}=x\otimes x$, and so on. 

\noindent One can easily check that for each $k>1$, $\{t_{1},\dots,t_{k}\}$ generates a spin factor of dimension $1+k$ with $t_{p}\jProd t_{q}=(t_{p}t_{q}+t_{q}t_{p})/2$. Next, we define $v_{p}=t_{p}$ when $k$ is even, and when $k$ is odd, we define $\forall 1\leq p <k$
\begin{eqnarray}
v_{p}&=&\begin{cases}
\sigma_{y}^{\otimes^{\left\lceil{\frac{p}{2}}\right\rceil-1}}\otimes\sigma_{z}\otimes\sigma_{o}^{\otimes^{\left\lfloor{\frac{k}{2}}\right\rfloor-\left\lceil{\frac{p}{2}}\right\rceil}} &  p  \text{ odd} \label{pOdd}\\ 
\sigma_{y}^{\otimes^{\left\lceil{\frac{p}{2}}\right\rceil-1}}\otimes\sigma_{x}\otimes\sigma_{o}^{\otimes^{\left\lfloor{\frac{k}{2}}\right\rfloor-\left\lceil{\frac{p}{2}}\right\rceil}} & p  \text{ even}
\end{cases}\\
v_{k}&=&\sigma_{y}^{\otimes^{\left\lfloor{\frac{n}{2}}\right\rfloor}}\label{pEven}
\end{eqnarray}
and we embed $\mathcal{V}_{k}$ into its standard C$^{*}$\!-algebra via the following Jordan monomorphism
\begin{equation}
\pi_{k}:\mathcal{V}_{k}\longrightarrow\mathcal{M}_{2}(\mathbb{C})_{\text{sa}}^{\otimes^{\left\lfloor{\frac{k}{2}}\right\rfloor}}::s_{p}\longmapsto v_{p}\text{.}
\end{equation}
where $s_{p}$ are the abstract generators from \cref{ssDef}. One has the following examples of such concrete generators (\textit{i.e}.\ $v_{p}$) for the ambient spaces associated with real, complex, and quaternionic quantum ``bits.''

\noindent\textit{\textbf{Rebit Example}} ($k=2$)
\begin{eqnarray}
v_{1}&=&\sigma_{z}\\
v_{2}&=&\sigma_{x}
\end{eqnarray}
\textit{\textbf{Qubit Example}} ($k=3$)
\begin{eqnarray}
v_{1}&=&\sigma_{z}\\
v_{2}&=&\sigma_{x}\\
v_{3}&=&\sigma_{y}
\end{eqnarray}
\newpage
\noindent\textit{\textbf{Quabit Example\footnote{The present representation differs from the symplectic one only in complex phases; the two are projectively equivalent.}}} ($k=5$)
\begin{eqnarray}
v_{1}&=&\sigma_{z}\otimes\sigma_{o}\\
v_{2}&=&\sigma_{x}\otimes\sigma_{o}\\
v_{3}&=&\sigma_{y}\otimes\sigma_{z}\\
v_{4}&=&\sigma_{y}\otimes\sigma_{x}\\
v_{5}&=&\sigma_{y}\otimes\sigma_{y}
\end{eqnarray}

\noindent $\mathcal{V}_{2}$, $\mathcal{V}_{3}$, and $\mathcal{V}_{5}$ are \cite{HancheOlsen1983} the only spin factors enjoying reversibility, a property to be defined presently.

\begin{definition}\label{revDef}\textit{A} reversible \textsc{eja} \textit{is a Euclidean Jordan algebra} $\mathcal{A}$ \textit{as in \cref{ejaDef} such that there exists a representation} $\pi:\mathcal{A}\longrightarrow\mathcal{M}_{n}(\mathbb{C})_{\text{sa}}$ \textit{as in \cref{repDef} such that}\footnote{Notice that for $m=2$ this is just closure under the Jordan product.} $\forall a_{1},\dots,a_{m}\in\mathcal{A}$
\begin{equation}
\pi(a_{1})\pi(a_{2})\cdots\pi(a_{m})+\pi(a_{m})\pi(a_{m-1})\cdots\pi(a_{1})\in\pi\big(\mathcal{A})\text{.}
\label{revCon}
\end{equation}
\textit{One says that} $\pi$ \textit{as such is a} reversible representation\textit{. A} universally reversible \textsc{eja} \textit{is a Euclidean Jordan algebra} $\mathcal{A}$ \textit{for which all of its representations are reversible.}
\end{definition}

\noindent In fact, one has the following theorem.

\begin{theorem}\label{spinRevThm}\cite{HancheOlsen1983} \textit{A spin factor $\mathcal{V}_{k}$ is reversible if and only if $k\in\{2,3,5\}$, and universally reversible if and only if $k\in\{2,3\}$.} 
\end{theorem}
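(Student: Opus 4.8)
The statement to prove is \cref{spinRevThm}, characterizing reversibility and universal reversibility of spin factors $\mathcal{V}_k$. The plan is to work with the concrete generators $t_1,\dots,t_k$ (equivalently the abstract generators $s_p$ satisfying $s_p \jProd s_q = \delta_{pq}$) and analyze products of these generators under the associative matrix multiplication inherited from the standard representation $\pi_k : \mathcal{V}_k \longrightarrow \mathcal{M}_2(\mathbb{C})_{\text{sa}}^{\otimes^{\lfloor k/2\rfloor}}$. The key structural fact is that $\mathcal{V}_k$, as a vector space, is spanned by the unit together with $\{t_1,\dots,t_k\}$, and its image under $\pi_k$ generates an associative algebra isomorphic to (part of) a Clifford algebra; products of the $t_p$ of length $\geq 2$ generically land \emph{outside} the span $\pi_k(\mathcal{V}_k)$ unless symmetrization forces cancellation back into it.

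First I would reduce the reversibility condition Eq.~\eqref{revCon} to a condition on the symmetrized products $t_{p_1}t_{p_2}\cdots t_{p_m} + t_{p_m}\cdots t_{p_1}$ of generators, since by multilinearity it suffices to check \eqref{revCon} on basis elements. Using the anticommutation relations $t_p t_q + t_q t_p = 2\delta_{pq}\mathds{1}$ (which follow from $s_p \jProd s_q = \delta_{pq}$ together with the Clifford structure of the $\sigma$-matrix tensor products in Eq.~\eqref{tDefC}), I would show that the symmetrized product of an \emph{even} number of distinct generators is expressible via lower products, while the symmetrized product of generators reduces, after using the relations, to a scalar multiple of a single product $t_{p_1}\cdots t_{p_r}$ of $r$ \emph{distinct} generators. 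The crux is then: for which $k$ does every such distinct-generator product, when symmetrized, lie in $\mathrm{span}_{\mathbb{R}}\{\mathds{1}, t_1,\dots,t_k\}$? A product of $r$ distinct generators is symmetric (fixed by reversal) precisely when $r(r-1)/2$ is even, and lies in the span only when $r \leq 2$ essentially, except that for small $k$ there simply are not enough generators to form a ``bad'' product, or a top-degree product coincides (up to scalar) with a generator or the unit.

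The main computation, which I expect to be the principal obstacle, is the explicit case analysis for $k=2,3,5$ showing reversibility does hold, versus $k \geq 6$ (and $k=4$) showing it fails. For $k=2$ there is only the product $t_1 t_2$, whose symmetrization $t_1t_2 + t_2t_1 = 0$ trivially lies in the span; for $k=3$ the triple product $t_1t_2t_3$ is (up to a phase) a scalar multiple of the identity under $\pi_3$ because the three Pauli matrices multiply to $i\mathds{1}$, so symmetrization stays in the span; for $k=5$ one checks using Eq.~\eqref{pEven} that $t_1 t_2 t_3 t_4 t_5 \propto \mathds{1}$ again, the analogue of the five-dimensional ``volume element'' being central and scalar in the $4\times 4$ representation. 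For $k=4$ or $k\geq 6$, I would exhibit an explicit symmetrized product of distinct generators that is nonzero, symmetric, and linearly independent from $\{\mathds{1}, t_1,\dots,t_k\}$ (for instance a symmetrized product of four distinct generators, which exists once $k\geq 4$ and is genuinely a higher Clifford element once $k \geq 6$), thereby violating \eqref{revCon}. Reconciling $k=4$ (where reversibility fails despite small size) with $k=5$ (where it holds) is the delicate point, and I would handle it by direct inspection of whether the top volume element is scalar: it is scalar exactly when $k$ is odd with $k \equiv 1 \pmod 4$-type dimension parity matching, which singles out $k \in \{3,5\}$ together with the trivial $k=2$.

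Finally, for universal reversibility I would invoke \cref{repDef} and argue that one must quantify over \emph{all} representations, not just the standard $\pi_k$. The isomorphisms $\mathcal{V}_2 \cong \mathcal{M}_2(\mathbb{R})_{\text{sa}}$ and $\mathcal{V}_3 \cong \mathcal{M}_2(\mathbb{C})_{\text{sa}}$ from Eqs.~\eqref{v2iso},\eqref{v3iso} show these two algebras are universally reversible because universal reversibility of the $2\times 2$ matrix Jordan algebras over $\mathbb{R}$ and $\mathbb{C}$ can be verified directly (every representation of $\mathcal{M}_n(\mathbb{R},\mathbb{C})_{\text{sa}}$ is, up to multiplicity, a direct sum of copies of the standard one, reducing the check to the standard representation). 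For $\mathcal{V}_5 \cong \mathcal{M}_2(\mathbb{H})_{\text{sa}}$, I would show that a non-standard representation breaks reversibility: because the quaternionic structure admits an inequivalent ``conjugate'' embedding, one can build a representation in which the volume element fails to be scalar, so $\mathcal{V}_5$ is reversible but \emph{not} universally reversible. This last step, distinguishing reversibility from universal reversibility for $k=5$, is where I expect to lean most heavily on the representation theory of the quaternionic matrix algebra and the symplectic embedding $\Phi$ of Eq.~\eqref{phiQuat}, and it is the subtlest part of the argument.
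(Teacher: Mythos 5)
Your overall strategy --- reducing reversibility to the behaviour of symmetrized Clifford words in the anticommuting generators, with the reversal sign $(-1)^{r(r-1)/2}$ isolating $r\equiv 0,1\pmod 4$ as the only dangerous word lengths --- is the same skeleton as the proofs of \cref{v23prop}--\cref{v6prop} in \cref{spinsApp}. But there are two genuine gaps. First, for the negative claims you argue only in the standard representation $\pi_k$: you propose to exhibit, for $k=4$ and $k\geq 6$, a symmetrized product of four distinct generators lying outside $\mathrm{span}_{\mathbb{R}}\{\mathds{1},t_1,\dots,t_k\}$. By \cref{revDef}, however, ``reversible'' means that \emph{some} representation is reversible, so showing that the standard one fails proves nothing; you must rule out every representation. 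This is exactly what \cref{v4prop} and \cref{v6prop} do, representation-independently: for $k=4$, if $\pi(v)=t_1t_2t_3t_4$ for any Jordan monomorphism $\pi$, then $\pi(v)^2=\mathds{1}$ forces $v\neq 0$, while $\pi(v\jProd s_j)=0$ for all $j$ forces the traceless part of $v$ to vanish, leaving $v$ a nonzero multiple of the unit --- incompatible with $v\jProd s_j=0$; for $k\geq 6$, any putative preimage of $t_1t_2t_3t_4$ would have to both commute and anticommute with $t_6$. Relatedly, your focus on ``whether the top volume element is scalar'' misidentifies the obstruction separating $k=4$ from $k=5$: the relevant element is the \emph{four}-fold product, which for $k=5$ is absorbed as $\pm t_5$ but for $k=4$ cannot be absorbed anywhere in the six-dimensional-minus-one worth of room that $\mathcal{V}_4$ lacks.

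Second, your route to universal reversibility of $\mathcal{V}_2$ and $\mathcal{V}_3$ through the isomorphisms \eqref{v2iso}--\eqref{v3iso} rests on the claim that every Jordan representation of $\mathcal{M}_2(\mathbb{R})_{\text{sa}}$ or $\mathcal{M}_2(\mathbb{C})_{\text{sa}}$ is a multiple of the standard one. That is false for the complex case --- $a\mapsto a^{\mathrm{T}}$ and $a\mapsto a\oplus a^{\mathrm{T}}$ are inequivalent Jordan monomorphisms, which is precisely why $C^{*}_{u}(\mathcal{M}_n(\mathbb{C})_{\text{sa}})=\mathcal{M}_n(\mathbb{C})\oplus\mathcal{M}_n(\mathbb{C})$ --- and \cref{urThm} is stated only for $n\geq 3$; this thesis in fact \emph{deduces} universal reversibility of $\mathcal{M}_2(\mathbb{R})_{\text{sa}}$ and $\mathcal{M}_2(\mathbb{C})_{\text{sa}}$ from the present theorem, so your argument is circular here. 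The direct fix (and the content of \cref{v23prop}) is to take an \emph{arbitrary} Jordan monomorphism $\pi$, observe that the images $t_a=\pi(s_a)$ anticommute and square to $\mathds{1}$ in any representation, and use these relations alone to reduce every symmetrized word either to one of length at most two (hence to $\pi$ of a Jordan product) or to zero; for $k\leq 3$ there are too few distinct generators to form a nonvanishing length-four word. Your treatment of $\mathcal{V}_5$ is the right idea but needs the concrete non-reversible representation of \cref{v5prop}, namely $s_j\mapsto t_j\oplus t_j$ for $j\leq 4$ and $s_5\mapsto t_5\oplus(-t_5)$, under which $t_1t_2t_3t_4\mapsto t_5\oplus t_5$ escapes the image.
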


\noindent A proof of \cref{spinRevThm} can be found in \cite{HancheOlsen1983}. Following  Hanche-Olsen, we detail explicit proofs for \cref{v23prop}, \cref{v4prop}, \cref{v5prop}, and \cref{v6prop} (relegated to \cref{spinsApp}) to establish \cref{spinRevThm}.

\noindent The Jordan matrix algebras, $\mathcal{M}_{n}(\mathbb{R})_{\text{sa}}$, $\mathcal{M}_{n}(\mathbb{C})_{\text{sa}}$, and $\mathcal{M}_{n}(\mathbb{H})_{\text{sa}}$, are also reversible. In order to see this, it will be helpful to introduce the following notions. Let $\mathfrak{B},\mathfrak{C}$ be C$^{*}$\!-algebras. A \textit{$*$-homomorphism} is a $\mathbb{C}$-linear function $f:\mathfrak{B}\longrightarrow\mathfrak{C}$ such that $\forall x,y\in\mathfrak{B}$ one has $f(xy)=f(x)f(y)$ and $f(x^{*})=f(x)^{*}$. A \textit{$*$-antihomomorphism} is a linear function $f:\mathfrak{B}\longrightarrow\mathfrak{C}$ such that $\forall x,y\in\mathfrak{B}$ one has $f(xy)=f(y)f(x)$ and $f(x^{*})=f(x)^{*}$. The terms \textit{$*$-isomorphism}, \textit{$*$-epimorphism}, \textit{$*$-monomorphism}, \textit{$*$-endomorphism}, and \textit{$*$-automorphism} are used respectively for bijective, surjective, injective, endomorphic, and automorphic $*$-homomorphisms, with the antihomomorphic varieties termed in a similar manner. An \textit{involutory} $*$-homomorphism or $*$-antihomomorphism is a suitably defined function of period two, \textit{i.e}.\ $f(f(x))=x$. For example, the usual transpose on $\mathcal{M}_{n}(\mathbb{C})_{\text{sa}}$ is an involutory $*$-antiautomorphism, but the usual adjoint is not, because the adjoint is not $\mathbb{C}$-linear. Note that in some texts the word `involution' is reserved for antilinear functions. We make no use of antilinear functions in this thesis; however, with the adjoint we do see that $\mathcal{M}_{n}(\mathbb{C})_{\text{sa}}$ in its canonical representation is reversible, because the adjoint does reverse the order of associative multiplication. Similarly, we see that $\mathcal{M}_{n}(\mathbb{R})_{\text{sa}}$ in its standard representation in $\mathcal{M}_{n}(\mathbb{C})_{\text{sa}}$ is reversible: the transpose is a $*$-antiautomorphism, so Eq.~\eqref{revCon} is trivially valid. The case of $\mathcal{M}_{n}(\mathbb{H})_{\text{sa}}$ is also easy to prove in light of the fact that $\Phi$ as defined in Eq.~\eqref{phiQuat} is an involutory $*$-antiautomorphism. Regarding the universal reversibility of the Jordan matrix algebras, one has the following.

\begin{theorem}\label{urThm} Let $n\in\mathbb{N}_{\geq 3}$\textit{. Let} $\mathbb{D}\in\{\mathbb{R},\mathbb{C},\mathbb{H}\}$\textit{. Then} $\mathcal{M}_{n}(\mathbb{D})_{\text{sa}}$ \textit{is universally reversible.}
\end{theorem}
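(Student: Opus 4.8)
\textbf{Proof proposal for \cref{urThm}.}

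The plan is to establish universal reversibility of $\mathcal{M}_{n}(\mathbb{D})_{\text{sa}}$ for $n\geq 3$ and $\mathbb{D}\in\{\mathbb{R},\mathbb{C},\mathbb{H}\}$ by reducing the problem to a statement about arbitrary representations and then exploiting a structural rigidity result. Recall from \cref{revDef} that universal reversibility demands that \emph{every} representation $\pi:\mathcal{A}\longrightarrow\mathcal{M}_{N}(\mathbb{C})_{\text{sa}}$ satisfies the reversibility condition Eq.~\eqref{revCon}, not merely the standard one. First I would recall a general fact from the structure theory of Jordan algebras (Hanche-Olsen and St{\o}rmer \cite{HancheOlsen1984}): for a reversible \textsc{eja} $\mathcal{A}$, universal reversibility can be verified by checking that for any representation $\pi$, the associative subalgebra of $\mathcal{M}_{N}(\mathbb{C})$ generated by $\pi(\mathcal{A})$ admits a $*$-antiautomorphism fixing $\pi(\mathcal{A})$ pointwise; reversibility then follows because such an antiautomorphism reverses the order of associative products, so the symmetrized product in Eq.~\eqref{revCon} lands back in $\pi(\mathcal{A})$.

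The key technical input is that, for $n\geq 3$, the Jordan matrix algebras $\mathcal{M}_{n}(\mathbb{D})_{\text{sa}}$ enjoy a strong form of representation rigidity: their universal associative envelope is well understood, and any representation decomposes (up to unitary equivalence) into copies of the standard representation together with the antihomomorphic ``transposed'' representation. Concretely, I would argue that the $*$-algebra generated by $\pi(\mathcal{M}_{n}(\mathbb{D})_{\text{sa}})$ is $*$-isomorphic to a direct sum of full matrix blocks, on each of which $\pi$ acts either as the standard embedding $\pi_{\mathcal{A}}$ of Eq.~\eqref{realEmbed}, Eq.~\eqref{complexEmbed}, or Eq.~\eqref{quatEmbed}, or as its composition with a $*$-antiautomorphism. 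On the standard blocks, reversibility was already established in the discussion preceding the theorem (via the transpose for $\mathbb{R}$, the adjoint-reversal for $\mathbb{C}$, and the involutory $*$-antiautomorphism $\Phi$ of Eq.~\eqref{phiQuat} for $\mathbb{H}$). The antihomomorphic blocks are handled symmetrically, since composing with an order-reversing map preserves the symmetrized product in Eq.~\eqref{revCon}.

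The main obstacle, and the step I expect to require the most care, is justifying the claim that an \emph{arbitrary} representation of $\mathcal{M}_{n}(\mathbb{D})_{\text{sa}}$ decomposes in this controlled way. The crucial dimensional restriction is $n\geq 3$: this is exactly where associativity of the coordinate division algebra $\mathbb{D}$ (for $\mathbb{R},\mathbb{C},\mathbb{H}$) combines with the coordinatization theorem for Jordan algebras to force the universal $*$-algebra to be finite-dimensional and semisimple, with no exotic blocks on which $\pi$ could behave irreducibly in a non-reversible fashion. For $n=2$ this argument fails precisely because $\mathcal{M}_{2}(\mathbb{D})_{\text{sa}}$ coincides with a spin factor (see Eqs.~\eqref{v2iso}--\eqref{v5iso}), and \cref{spinRevThm} shows $\mathcal{V}_{5}\cong\mathcal{M}_{2}(\mathbb{H})_{\text{sa}}$ is reversible but \emph{not} universally reversible. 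Thus the proof hinges on invoking the Jordan coordinatization machinery to pin down the universal envelope, after which universal reversibility follows by the antiautomorphism argument sketched above. I would defer the detailed envelope computation to \cite{HancheOlsen1984}, where the universal specializations of $\mathcal{M}_{n}(\mathbb{D})_{\text{sa}}$ are catalogued, and cite that these envelopes carry the requisite canonical involution.
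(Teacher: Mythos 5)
The paper does not actually prove this theorem; it defers entirely to Hanche-Olsen \cite{HancheOlsen1983}, so there is no internal argument to compare yours against. Judged on its own terms, your sketch has a genuine gap at its load-bearing step. You assert as a ``general fact'' that reversibility of a representation $\pi$ follows once the $*$-algebra generated by $\pi(\mathcal{A})$ admits a $*$-antiautomorphism $\varphi$ fixing $\pi(\mathcal{A})$ pointwise, ``because the symmetrized product lands back in $\pi(\mathcal{A})$.'' It does not: the antiautomorphism argument only shows that $\pi(a_1)\cdots\pi(a_m)+\pi(a_m)\cdots\pi(a_1)$ lies in the self-adjoint fixed-point set $\mathfrak{B}_{\text{sa}}^{\varphi}$, which contains $\pi(\mathcal{A})$ but may be strictly larger. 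The paper itself furnishes the counterexample: by \cref{existUThm}(iii) \emph{every} special \textsc{eja} carries a canonical involution $\Phi_{\mathcal{A}}$ on $C^{*}_{u}(\mathcal{A})$ fixing $\psi_{\mathcal{A}}(\mathcal{A})$ pointwise, yet $\mathcal{V}_{4}$ is not reversible (\cref{v4prop}) --- the element $t_{1}t_{2}t_{3}t_{4}+t_{4}t_{3}t_{2}t_{1}$ is self-adjoint and fixed by the involution but lies outside the image --- and $\mathcal{V}_{5}\cong\mathcal{M}_{2}(\mathbb{H})_{\text{sa}}$ defeats even your strengthened version with the ``reversible'' hypothesis, since it is reversible but not universally so (\cref{spinRevThm}). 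Note also that \cref{hoThm} runs in the opposite direction from the one you need: it \emph{assumes} universal reversibility in order to conclude that the image coincides with the fixed-point algebra.

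To close the argument you must establish the equality $\pi(\mathcal{A})=\mathfrak{B}_{\text{sa}}^{\varphi}$, not merely the inclusion --- for instance by a dimension count against an independently computed universal envelope --- or else prove reversibility directly from the rank-$\geq 3$ structure: a Jordan frame of three or more orthogonal, pairwise exchangeable minimal projections supplies genuine associative matrix units, which is how the proof in the literature actually proceeds and is precisely what is unavailable for the rank-$2$ algebra $\mathcal{M}_{2}(\mathbb{H})_{\text{sa}}$. If you take the first route, beware of circularity within this paper's architecture: the envelope computations here (\cref{forFact2}, \cref{withGM}, \cref{qnProp}) are themselves derived \emph{from} \cref{urThm} via \cref{hoThm}, so the catalogue of universal specializations must be imported from a source that computes it without presupposing universal reversibility.
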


\noindent A proof of \cref{urThm} can be found in \cite{HancheOlsen1983}. We also point out that by \cref{spinRevThm} and Eq.~\eqref{v2iso} and Eq.~\eqref{v3iso}, one has that $\mathcal{M}_{2}(\mathbb{R})_{\text{sa}}$ and $\mathcal{M}_{2}(\mathbb{C})_{\text{sa}}$ are universally reversible. We now consider universal representations.

\section{Universal Representations}
\label{cStarPrelims}

In this section, we recall the universal C$^{*}$\!-algebras enveloping Euclidean Jordan algebras. These universal representations are vital for the balance of this thesis. All of the material found in this section is collected from Hanche-Olsen and St{\o}rmer \cite{HancheOlsen1984} and from Hanche-Olsen \cite{HancheOlsen1983}. Furthermore, \cite{HancheOlsen1983} and \cite{HancheOlsen1984} actually deal with the case of infinite dimensional Jordan algebras. The infinite dimensional case is very interesting; however, in this thesis, we are concerned only with the finite dimensional case. Therefore, to avoid introducing presently unnecessary mathematical overhead, we translate the relevant results from \cite{HancheOlsen1984} and \cite{HancheOlsen1983} into our finite dimensional setting. We shall not repeat proofs from the literature. We shall, however, provide some elementary proofs of propositions stated without proof in the literature.

\begin{definition}\textit{Let}\label{jGen} $\mathcal{A}$ \textit{be a Jordan algebra as in \cref{jAlgDef}. Let subset} $\mathcal{X}\subseteq\mathcal{A}$\textit{. The} Jordan algebraic closure \textit{of $\mathcal{X}$ is denoted} $\mathfrak{j}(\mathcal{X})$ \textit{and defined to be the smallest Jordan subalgebra of $\mathcal{A}$ containing $\mathcal{X}$.}
\end{definition}

\begin{definition}\textit{Let} $\mathfrak{B}$ \textit{be a C$^{*}$\!-algebra as in \cref{cStarDef}. Let subset} $\mathcal{Y}\subseteq\mathfrak{B}$\textit{. The} C$^{*}$\!-algebraic closure \textit{of $\mathcal{Y}$ is denoted} $\mathfrak{c}(\mathcal{Y})$ \textit{and defined to be the smallest C$^{*}$\!-subalgebra of $\mathfrak{B}$ containing $\mathcal{Y}$.}
\end{definition}

\noindent One says that $\mathfrak{j}(\mathcal{X})$ is the Jordan algebra \textit{generated} by $\mathcal{X}$ in $\mathcal{A}$. Concretely, one computes $\mathfrak{j}(\mathcal{X})$ by finding the Jordan algebra produced from $\mathbb{R}$-linear combinations and Jordan products of elements of $\mathcal{X}$. Similarly, one says that $\mathfrak{c}(\mathcal{Y})$ is the C$^{*}$\!-algebra \textit{generated} by $\mathcal{Y}$ in $\mathfrak{B}$. Again, concretely, one computes $\mathfrak{c}(\mathcal{Y})$ by finding the C$^{*}$\!-algebra produced from $\mathbb{C}$-linear combinations and associative products of elements of $\mathfrak{B}$.

\begin{definition}\label{uRep}\textit{Let} $\mathcal{A}$ \textit{be a Euclidean Jordan algebra. A} universal representation \textit{of $\mathcal{A}$ is a pair} $\big(C^{*}_{u}(\mathcal{A}),\psi_{\mathcal{A}}\big)$ \textit{where} $\psi_{\mathcal{A}}:\mathcal{A}\longrightarrow C^{*}_{u}(\mathcal{A})$ \textit{is a representation as in \cref{repDef} and with} $\big(C^{*}_{u}(\mathcal{A}),\psi_{\mathcal{A}}\big)$ universal \textit{that is, as in the following \cref{existUThm}.}
\end{definition}

\begin{theorem}\label{existUThm} \cite{HancheOlsen1984} \textit{Let} $\mathcal{A}$ \textit{be a special} \textsc{eja}\textit{. Then there exists up to *-isomorphism a unique universal C$^{*}$-algebra} $C^{*}_{u}\big(\mathcal{A}\big)$ \textit{and a Jordan monomorphism} $\psi_{\mathcal{A}}:\mathcal{A}\rightarrow C^{*}_{u}\big(\mathcal{A}\big)_{sa}$ \textit{such that:}
\begin{enumerate}[(i)]
\item $\mathfrak{c}\Big(\psi_{\mathcal{A}}(\mathcal{A})\Big)=C^{*}_{u}\big(\mathcal{A}\big)$\textit{.} 
\item \textit{If} $\mathfrak{B}$ \textit{is a C$^{*}$-algebra and} $\pi:\mathcal{A}\longrightarrow\mathfrak{B}_{sa}$ \textit{is a Jordan homomorphism, then there exists a $^{*}$-homomorphism} $\hat{\pi}:C^{*}_{u}\big(\mathcal{A}\big)\longrightarrow\mathfrak{B}$ \textit{such that} $\pi=\hat{\pi}\circ\psi_{\mathcal{A}}$\textit{.}
\item \textit{There exists\footnote{$\Phi_{\mathcal{A}}$ is unique (see \cite{alfsen1980state} Corollary 5.2).} an involutive  $^{*}$-antiautomorphism} $\Phi_{\mathcal{A}}$ \textit{of} $C^{*}_{u}\big(\mathcal{A}\big)$ \textit{such that} $\Phi_{\mathcal{A}}\big(\psi_{\mathcal{A}}(a)\big)=\psi_{\mathcal{A}}(a)$ $\forall a\in\mathcal{A}$\textit{.}
\end{enumerate}
\end{theorem}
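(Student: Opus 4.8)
The plan is to realize $C^{*}_{u}(\mathcal{A})$ as the enveloping $C^{*}$-algebra of a universal $*$-algebra presented by generators and relations, in direct analogy with the construction of a group $C^{*}$-algebra or the universal enveloping algebra of a Lie algebra. First I would form the free unital complex $*$-algebra $\mathfrak{F}$ on symbols $\{x_{a}\;\boldsymbol{|}\;a\in\mathcal{A}\}$ and quotient by the two-sided $*$-ideal enforcing exactly the relations that make $a\longmapsto x_{a}$ a unital Jordan homomorphism into the self-adjoint part: namely $x_{a+b}=x_{a}+x_{b}$, $x_{a\lambda}=x_{a}\lambda$ for $\lambda\in\mathbb{R}$, $x_{a}^{*}=x_{a}$, $x_{u}=1$, and the linearized Jordan relation $x_{a}x_{b}+x_{b}x_{a}=2x_{a\jProd b}$. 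By construction every Jordan homomorphism $\pi:\mathcal{A}\longrightarrow\mathfrak{B}_{\mathrm{sa}}$ extends uniquely to a $*$-homomorphism $\mathfrak{F}\longrightarrow\mathfrak{B}$, so the universal factorization property (ii) is already encoded at the purely algebraic level.

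The crucial analytic step, and the one I expect to be the main obstacle, is to equip $\mathfrak{F}$ with its largest $C^{*}$-seminorm and to show it is finite. I would set $\|X\|_{u}=\sup_{\rho}\|\rho(X)\|$, the supremum over all $*$-representations $\rho$ of $\mathfrak{F}$ on Hilbert spaces, and define $C^{*}_{u}(\mathcal{A})$ to be the Hausdorff completion of $\mathfrak{F}$ with respect to $\|\cdot\|_{u}$ (in the present finite dimensional setting the completion is automatic). Finiteness is not for free: it requires a uniform bound on $\|\rho(x_{a})\|$ independent of $\rho$. Here I would exploit that the Jordan relation forces $x_{a}^{2}=x_{a\jProd a}$, so the singly generated subalgebra is governed by the functional calculus of $a$ and the spectrum of $\rho(x_{a})$ lies inside the Jordan spectrum of $a$. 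Invoking the spectral theorem for \textsc{eja}s (\cite{FK}) then yields $\|\rho(x_{a})\|\leq\max_{j}|\alpha_{j}|$, where $a=\sum_{j}x_{j}\alpha_{j}$ is a spectral decomposition, giving the desired uniform bound. That $\mathcal{A}$ is special (\cref{repDef}) guarantees at least one faithful representation into some $\mathcal{M}_{n}(\mathbb{C})_{\text{sa}}$, so $\|\cdot\|_{u}$ restricts to a genuine norm on the image of the generators; composing that faithful representation with the induced extension then shows $\psi_{\mathcal{A}}$ is injective, hence a Jordan monomorphism.

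With the seminorm in hand the remaining items follow by soft universal-property arguments. Property (i) holds because $\mathfrak{F}$ is $*$-generated by the $x_{a}$, whence its completion is $C^{*}$-generated by $\psi_{\mathcal{A}}(\mathcal{A})$. For (ii), any Jordan homomorphism $\pi:\mathcal{A}\longrightarrow\mathfrak{B}_{\mathrm{sa}}$ extends to a $*$-homomorphism on $\mathfrak{F}$ that is automatically $\|\cdot\|_{u}$-contractive, since $\|\cdot\|_{u}$ dominates the seminorm it pulls back; it therefore descends to $\hat{\pi}:C^{*}_{u}(\mathcal{A})\longrightarrow\mathfrak{B}$ with $\pi=\hat{\pi}\circ\psi_{\mathcal{A}}$. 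Uniqueness up to $*$-isomorphism is the standard consequence of (i) and (ii): a second pair satisfying the universal property yields mutually inverse $*$-homomorphisms fixing the generators, which must be identities by (i).

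Finally, for the canonical involution (iii), I would apply (ii) with target the opposite algebra $C^{*}_{u}(\mathcal{A})^{\mathrm{op}}$, which is again a $C^{*}$-algebra in which $\psi_{\mathcal{A}}(\mathcal{A})$ still lands in the self-adjoint part and is still a Jordan homomorphism, because the Jordan product $\tfrac{1}{2}(xy+yx)$ is unchanged when multiplication is reversed. The resulting $*$-homomorphism $C^{*}_{u}(\mathcal{A})\longrightarrow C^{*}_{u}(\mathcal{A})^{\mathrm{op}}$ is precisely a $*$-antiautomorphism $\Phi_{\mathcal{A}}$ of $C^{*}_{u}(\mathcal{A})$ fixing each $\psi_{\mathcal{A}}(a)$; since $\Phi_{\mathcal{A}}^{2}$ is then a $*$-automorphism fixing the generators, uniqueness forces $\Phi_{\mathcal{A}}^{2}=\mathrm{id}$, establishing involutivity (its uniqueness being the cited consequence of Alfsen--Shultz). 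This completes the construction and verifies (i)--(iii).
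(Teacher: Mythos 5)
Your construction is correct and is essentially the standard one from Hanche-Olsen and St{\o}rmer, which is exactly what the thesis does here: it states the theorem and defers the (``intense'') proof to \cite{HancheOlsen1984} rather than reproving it. Your generators-and-relations presentation, the spectral bound $\|\rho(x_{a})\|\leq\rho(a)$ via $x_{a^{n}}=x_{a}^{n}$ and the minimal polynomial, the descent of Jordan homomorphisms through the largest C$^{*}$-seminorm, and the derivation of $\Phi_{\mathcal{A}}$ from the universal property applied to the opposite algebra all match the cited argument, so there is nothing to correct.
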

\noindent \cref{existUThm} is nicely captured by the following commutative diagram:
\begin{equation}
\xymatrix{\mathcal{A}\ar[rr]^{\psi_{\mathcal{A}}}\ar[rrd]_{\pi}& &C^{*}_{u}(\mathcal{A})_{\text{sa}}\ar[rr]^{\text{inc}} & & C^{*}_{u}(\mathcal{A})\ar[d]^{\hat{\pi}}\\
&& \mathfrak{B}_{\text{sa}}\ar[rr]_{\text{inc}} & & \mathfrak{B}}
\end{equation}

\noindent 
\noindent An intense proof for \cref{existUThm} is found in \cite{HancheOlsen1984}. The statement of \cref{existUThm} therein is actually much more general, covering, in particular, the infinite dimensional case.

\noindent  As matters of terminology, one says that $C^{*}_{u}(\mathcal{A})$ is the \textit{universal C$^{*}$\!-algebra} (\textit{enveloping}) $\mathcal{A}$, and one says that $\psi_{\mathcal{A}}$ is the \textit{canonical embedding}. One says that involutive  $^{*}$-antiautomorphism $\Phi_{\mathcal{A}}$ is the \textit{canonical involution} on the universal C$^{*}$-algebra enveloping $\mathcal{A}$. 

\noindent We now come to Hanche-Olsen's Theorem 4.4 from \cite{HancheOlsen1983}, which provides a beautiful way to compute universal C$^{*}$\!-algebras enveloping universally reversible \textsc{eja}s. In general, by $\mathfrak{B}_{\text{sa}}^{\Phi}$ one denotes the self-adjoint fixed points of a *-antiautomorphism $\Phi$ on a C$^{*}$\!-algebra $\mathfrak{B}$.

\begin{theorem}\label{hoThm}  \cite{HancheOlsen1983} \textit{With the notation above} $\mathcal{A}$ \textit{is universally reversible if and only if } $\mathcal{A}\cong C^{*}_{u}(\mathcal{A})_{\text{sa}}^{\Phi_{\mathcal{A}}}$\textit{. More generally, let} $\mathcal{A}$ \textit{be a universally reversible} \textsc{eja}\textit{. Let} $\mathfrak{B}$ \textit{be a C$^{*}$\!-algebra. Let} $\theta:\mathcal{A}\longrightarrow\mathfrak{B}_{\text{sa}}$ \textit{be an injective Jordan homomorphism such that} $\mathfrak{c}\big(\theta(\mathcal{A})\big)=\mathfrak{B}$\textit{. If $\mathfrak{B}$ admits a  *-antiautomorphism $\varphi$ such that} $\varphi\circ\theta=\theta$\textit{, then $\theta$ lifts to a $*$-isomorphism} \textit{of $C^{*}_{u}(\mathcal{A})$ onto $\mathfrak{B}$ as follows:}
\begin{equation}
\xymatrix@C=3pc@R=2pc{C^{*}_{u}(\mathcal{A})\ar[r]^{\hat{\theta}}&\mathfrak{B}\ar[r]^{\varphi} & \mathfrak{B} \\
\mathcal{A}\ar[u]^{\psi_{\mathcal{A}}}\ar[r]_{\theta} & \mathfrak{B}_{\text{sa}} \ar[u]_{\text{inc}}\ar[ur]_{\text{inc}}&}
\end{equation}
\textit{which is to say that the $*$-isomorphism} $\hat{\theta}$\textit{ is such that} $\psi_{\mathcal{A}}(\mathcal{A})\cong\mathfrak{B}_{\text{sa}}^{\varphi}$\textit{.}
\end{theorem}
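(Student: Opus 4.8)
The plan is to derive both assertions from the universal property of \cref{existUThm}, using the canonical involution $\Phi_{\mathcal{A}}$ to control the relevant fixed-point subspaces. First I would settle the biconditional. Since $\psi_{\mathcal{A}}$ lands in $C^{*}_{u}(\mathcal{A})_{\text{sa}}$ and $\Phi_{\mathcal{A}}$ fixes $\psi_{\mathcal{A}}(\mathcal{A})$ pointwise (part (iii) of \cref{existUThm}), one always has $\psi_{\mathcal{A}}(\mathcal{A})\subseteq C^{*}_{u}(\mathcal{A})_{\text{sa}}^{\Phi_{\mathcal{A}}}$, so the content is the reverse inclusion. I would show that $C^{*}_{u}(\mathcal{A})_{\text{sa}}^{\Phi_{\mathcal{A}}}$ is exactly the real span of the symmetrized products $\psi_{\mathcal{A}}(a_{1})\cdots\psi_{\mathcal{A}}(a_{m})+\psi_{\mathcal{A}}(a_{m})\cdots\psi_{\mathcal{A}}(a_{1})$: each is self-adjoint and $\Phi_{\mathcal{A}}$-fixed because $\Phi_{\mathcal{A}}$ reverses products while fixing the generators, and conversely, since $\psi_{\mathcal{A}}(\mathcal{A})$ generates $C^{*}_{u}(\mathcal{A})$ (part (i)) and $C^{*}_{u}(\mathcal{A})$ is finite dimensional, every $\Phi_{\mathcal{A}}$-fixed self-adjoint element is a real combination of products of generators, which after symmetrizing is such a combination. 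Because any representation factors through $\psi_{\mathcal{A}}$ via the $*$-homomorphism $\hat{\pi}$ of \cref{existUThm}, reversibility of every representation in the sense of \cref{revDef} is equivalent to reversibility of $\psi_{\mathcal{A}}$ itself, i.e.\ to these symmetrized products lying back in $\psi_{\mathcal{A}}(\mathcal{A})$. Thus $\psi_{\mathcal{A}}(\mathcal{A})=C^{*}_{u}(\mathcal{A})_{\text{sa}}^{\Phi_{\mathcal{A}}}$ exactly when $\mathcal{A}$ is universally reversible, giving $\mathcal{A}\cong C^{*}_{u}(\mathcal{A})_{\text{sa}}^{\Phi_{\mathcal{A}}}$.

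For the second assertion, I would feed $\theta$ into the universal property to obtain a $*$-homomorphism $\hat{\theta}:C^{*}_{u}(\mathcal{A})\longrightarrow\mathfrak{B}$ with $\theta=\hat{\theta}\circ\psi_{\mathcal{A}}$. Surjectivity is immediate, since $\hat{\theta}\big(C^{*}_{u}(\mathcal{A})\big)$ is a C$^{*}$-subalgebra containing $\theta(\mathcal{A})$ and hence equals $\mathfrak{c}\big(\theta(\mathcal{A})\big)=\mathfrak{B}$. The structural engine is an intertwining relation: the maps $\varphi\circ\hat{\theta}$ and $\hat{\theta}\circ\Phi_{\mathcal{A}}$ are both $*$-antihomomorphisms, and they agree on the generating set $\psi_{\mathcal{A}}(\mathcal{A})$ (the first because $\varphi\circ\theta=\theta$, the second because $\Phi_{\mathcal{A}}$ fixes $\psi_{\mathcal{A}}(\mathcal{A})$), so $\varphi\circ\hat{\theta}=\hat{\theta}\circ\Phi_{\mathcal{A}}$ on all of $C^{*}_{u}(\mathcal{A})$.

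It then remains to show $\hat{\theta}$ is injective. Its kernel $\mathcal{I}$ is a closed two-sided ideal, and the intertwining relation makes $\mathcal{I}$ invariant under $\Phi_{\mathcal{A}}$. I would proceed in two steps. First, any self-adjoint $y\in\mathcal{I}$ is anti-fixed: $y+\Phi_{\mathcal{A}}(y)$ is a self-adjoint, $\Phi_{\mathcal{A}}$-fixed element of $\mathcal{I}$, hence lies in $\mathcal{I}\cap C^{*}_{u}(\mathcal{A})_{\text{sa}}^{\Phi_{\mathcal{A}}}=\mathcal{I}\cap\psi_{\mathcal{A}}(\mathcal{A})$, which is zero because $\hat{\theta}|_{\psi_{\mathcal{A}}(\mathcal{A})}=\theta\circ\psi_{\mathcal{A}}^{-1}$ is injective; thus $\Phi_{\mathcal{A}}(y)=-y$. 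Second, for anti-fixed self-adjoint $y_{1},y_{2}\in\mathcal{I}$ the element $z=y_{1}y_{2}+y_{2}y_{1}$ is self-adjoint and lies in $\mathcal{I}$, and since $\Phi_{\mathcal{A}}$ reverses products, $\Phi_{\mathcal{A}}(z)=\Phi_{\mathcal{A}}(y_{2})\Phi_{\mathcal{A}}(y_{1})+\Phi_{\mathcal{A}}(y_{1})\Phi_{\mathcal{A}}(y_{2})=y_{2}y_{1}+y_{1}y_{2}=z$, so $z$ is $\Phi_{\mathcal{A}}$-fixed and therefore $z=0$ by the same argument. Taking $y_{1}=y_{2}=y$ gives $2y^{2}=0$, whence $\|y\|^{2}=\|y^{2}\|=0$ and $y=0$; so $\mathcal{I}$ has no nonzero self-adjoint elements and $\mathcal{I}=0$. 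Hence $\hat{\theta}$ is a $*$-isomorphism, and the intertwining relation shows it carries $\Phi_{\mathcal{A}}$-fixed points bijectively onto $\varphi$-fixed points, giving $\psi_{\mathcal{A}}(\mathcal{A})=C^{*}_{u}(\mathcal{A})_{\text{sa}}^{\Phi_{\mathcal{A}}}\cong\mathfrak{B}_{\text{sa}}^{\varphi}$.

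I expect the main obstacle to be the first part rather than the injectivity argument: identifying $C^{*}_{u}(\mathcal{A})_{\text{sa}}^{\Phi_{\mathcal{A}}}$ with the reversible closure of $\psi_{\mathcal{A}}(\mathcal{A})$, and reducing universal reversibility (a statement about \emph{all} representations) to reversibility of the universal representation, both require careful bookkeeping with the $\Phi_{\mathcal{A}}$-fixed-point structure and the factorization through $\hat{\pi}$. Once that characterization is in hand, the injectivity step above is self-contained and uses only that squares of self-adjoint elements control the norm.
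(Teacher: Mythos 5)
The paper does not actually prove this statement: it is recalled verbatim from Hanche-Olsen's 1983 paper (his Lemma 4.2 and Theorem 4.4) and used as a black box, so there is no in-paper proof to compare yours against. Judged on its own, your argument is correct and essentially reconstructs Hanche-Olsen's reasoning. The two load-bearing observations are both right: (i) $C^{*}_{u}(\mathcal{A})_{\text{sa}}^{\Phi_{\mathcal{A}}}$ is the real span of the symmetrized words $\psi_{\mathcal{A}}(a_{1})\cdots\psi_{\mathcal{A}}(a_{m})+\psi_{\mathcal{A}}(a_{m})\cdots\psi_{\mathcal{A}}(a_{1})$ --- the averaging identity $x=\tfrac{1}{4}\big(x+x^{*}+\Phi_{\mathcal{A}}(x)+\Phi_{\mathcal{A}}(x)^{*}\big)$ applied word by word turns a complex combination of words into a real combination of symmetrized words, so the fixed-point algebra coincides with the reversible closure of $\psi_{\mathcal{A}}(\mathcal{A})$; and (ii) the kernel trick, where $\Phi_{\mathcal{A}}$-invariance of $\ker\hat{\theta}$ forces every self-adjoint element of the kernel to be anti-fixed, whence $2y^{2}=y_{1}y_{2}+y_{2}y_{1}\big|_{y_{1}=y_{2}=y}$ is a fixed self-adjoint element of the kernel, hence in $\psi_{\mathcal{A}}(\mathcal{A})\cap\ker\hat{\theta}=\{0\}$, and the C$^{*}$-identity kills $y$.

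Two small points you should make explicit if you write this up. First, in reducing universal reversibility to reversibility of $\psi_{\mathcal{A}}$, note that \cref{repDef} defines representations as maps into $\mathcal{M}_{n}(\mathbb{C})_{\text{sa}}$, whereas $\psi_{\mathcal{A}}$ lands in $C^{*}_{u}(\mathcal{A})_{\text{sa}}$; since $C^{*}_{u}(\mathcal{A})$ is a finite-dimensional C$^{*}$-algebra it embeds $*$-isomorphically into a full matrix algebra, and reversibility is preserved and reflected by that embedding, so the hypothesis of universal reversibility really does apply to $\psi_{\mathcal{A}}$. Second, the final step "no nonzero self-adjoint elements implies $\mathcal{I}=0$" uses that the kernel of a $*$-homomorphism is a $*$-ideal, so $x+x^{*}$ and $i(x-x^{*})$ both lie in $\mathcal{I}$; worth a sentence. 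Neither is a gap, just bookkeeping you have implicitly assumed.
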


\noindent Concretely, \cref{hoThm} says that if a universally reversible Jordan algebra $\mathcal{A}$ represented within $\mathfrak{B}_{\text{sa}}$ generates $\mathfrak{B}$ as a C$^{*}$\!-algebra \text{and} if there exists a $*$-antiautomorphism $\varphi$ on $\mathfrak{B}$ such that $\mathcal{A}$ as represented therein is precisely the set of its self-adjoint fixed points, \textit{i.e}.\ $\mathfrak{B}_{\text{sa}}^{\varphi}$, then $\mathfrak{B}$ \textit{is} the universal C$^{*}$\!-algebra enveloping $\mathcal{A}$, that is, up to a $*$-isomorphism. The fruits of this theorem are plentiful. In particular, we will call on \cref{hoThm} to prove our main result in the following chapter. At the moment, we have:

\begin{proposition}\label{forFact2}
\begin{equation}
C^{*}_{u}\big(\mathcal{M}_{n}(\mathbb{R})_{sa}\big)=\mathcal{M}_{n}(\mathbb{C})\text{.}\label{rnCstar}
\end{equation}
\begin{proof} $\mathcal{M}_{n}(\mathbb{R})_{sa}$ is universally reversible by \cref{urThm}. Let $|e_{r}\rangle$ denote the standard basis vectors so 
\begin{equation} 
\text{span}_{\mathbb{C}}\big\{|e_{r}\rangle\langle e_{s}|:r,s\in\{1,\dots,n\}\big\}=\mathcal{M}_{n}(\mathbb{C})\text{.}
\end{equation} 
Observe the C$^{*}$\!-algebraic product of the following real symmetric matrices
\begin{equation}
\Big(|e_{r}\rangle\langle e_{r}|\Big)\Big(|e_{r}\rangle\langle e_{s}|+|e_{s}\rangle\langle e_{r}|\Big)=|e_{r}\rangle\langle e_{s}|+\delta_{rs}|e_{r}\rangle\langle e_{r}|. 
\label{smallCalc}
\end{equation}
Obviously the inclusion of $\mathcal{M}_{n}(\mathbb{R})_{\text{sa}}$ into $\mathcal{M}_{n}(\mathbb{C})$ is injective. Eq.~\eqref{smallCalc} shows that $\mathcal{M}_{n}(\mathbb{R})_{\text{sa}}$ generates $\mathcal{M}_{n}(\mathbb{C})$. There exists a *-antiautomorphism on $\mathcal{M}_{n}(\mathbb{C})$ whose self-adjoint fixed points are exactly the real symmetric matrices, namely the transpose! So with $\varphi$ the transpose we complete the proof in light of \cref{hoThm}. 
\end{proof}
\end{proposition}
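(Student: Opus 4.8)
The statement to prove is Proposition~\ref{forFact2}, namely the *-isomorphism $C^{*}_{u}\big(\mathcal{M}_{n}(\mathbb{R})_{\text{sa}}\big)=\mathcal{M}_{n}(\mathbb{C})$. My plan is to verify the three hypotheses of Hanche-Olsen's \cref{hoThm} for the candidate C$^{*}$\!-algebra $\mathfrak{B}=\mathcal{M}_{n}(\mathbb{C})$ together with the standard inclusion $\theta=\pi_{\mathcal{M}_{n}(\mathbb{R})_{\text{sa}}}$ from Eq.~\eqref{realEmbed}, and then read off the conclusion. The three hypotheses are: (1) $\mathcal{M}_{n}(\mathbb{R})_{\text{sa}}$ is universally reversible; (2) the inclusion $\theta$ is an injective Jordan homomorphism whose C$^{*}$\!-algebraic closure is all of $\mathcal{M}_{n}(\mathbb{C})$; and (3) there is a *-antiautomorphism $\varphi$ on $\mathcal{M}_{n}(\mathbb{C})$ fixing exactly the image $\theta\big(\mathcal{M}_{n}(\mathbb{R})_{\text{sa}}\big)$ among the self-adjoint elements.

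First I would dispose of hypothesis (1): universal reversibility of $\mathcal{M}_{n}(\mathbb{R})_{\text{sa}}$ is exactly \cref{urThm} for $n\geq 3$, while the cases $n=1,2$ follow from $\mathcal{M}_{1}(\mathbb{R})_{\text{sa}}\cong\mathbb{R}$ and the isomorphism $\mathcal{M}_{2}(\mathbb{R})_{\text{sa}}\cong\mathcal{V}_{2}$ of Eq.~\eqref{v2iso} together with \cref{spinRevThm}, so (1) holds in all dimensions. Hypothesis (2) splits into injectivity, which is immediate since $\theta$ is literally the inclusion of real symmetric matrices into $\mathcal{M}_{n}(\mathbb{C})$, and the generation claim. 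For generation, the key computation is Eq.~\eqref{smallCalc}: taking the diagonal projector $|e_{r}\rangle\langle e_{r}|$ and the real symmetric matrix $|e_{r}\rangle\langle e_{s}|+|e_{s}\rangle\langle e_{r}|$ (both of which lie in $\theta(\mathcal{M}_{n}(\mathbb{R})_{\text{sa}})$) and forming their associative product yields $|e_{r}\rangle\langle e_{s}|+\delta_{rs}|e_{r}\rangle\langle e_{r}|$; for $r\neq s$ this is exactly the elementary matrix unit $|e_{r}\rangle\langle e_{s}|$, and the diagonal units $|e_{r}\rangle\langle e_{r}|$ are already present, so all matrix units $\{|e_{r}\rangle\langle e_{s}|\}$ lie in the generated C$^{*}$\!-algebra. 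Since these span $\mathcal{M}_{n}(\mathbb{C})$ over $\mathbb{C}$, we get $\mathfrak{c}\big(\theta(\mathcal{M}_{n}(\mathbb{R})_{\text{sa}})\big)=\mathcal{M}_{n}(\mathbb{C})$.

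The heart of the argument is hypothesis (3), and this is where I would take the most care. I would take $\varphi$ to be the matrix transpose $\mathrm{T}_{n}$, which is $\mathbb{C}$-linear, period two, and reverses products, hence an involutory *-antiautomorphism of $\mathcal{M}_{n}(\mathbb{C})$. The claim to verify is that its self-adjoint fixed-point set $\mathcal{M}_{n}(\mathbb{C})_{\text{sa}}^{\mathrm{T}_{n}}$ coincides with $\theta\big(\mathcal{M}_{n}(\mathbb{R})_{\text{sa}}\big)$. A self-adjoint $x=x^{*}$ fixed by the transpose satisfies $x=x^{\mathrm{T}}$ and $x=\overline{x}^{\mathrm{T}}$, and combining these forces $x=\overline{x}$, i.e. $x$ is real; being real and symmetric it is a real symmetric matrix, which is precisely the image of $\theta$. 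The reverse inclusion is clear since real symmetric matrices are self-adjoint and transpose-fixed. Having checked all three hypotheses, \cref{hoThm} immediately lifts $\theta$ to a *-isomorphism $\hat{\theta}:C^{*}_{u}\big(\mathcal{M}_{n}(\mathbb{R})_{\text{sa}}\big)\longrightarrow\mathcal{M}_{n}(\mathbb{C})$, which is the desired conclusion Eq.~\eqref{rnCstar}. I do not anticipate a genuine obstacle here—the proposition is essentially a direct application of \cref{hoThm}—but the one step demanding precision is the fixed-point verification in (3), where one must correctly combine the self-adjointness condition with the transpose-invariance condition to conclude reality rather than merely symmetry.
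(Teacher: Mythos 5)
Your proposal is correct and follows essentially the same route as the paper's proof: the same generation computation via the product of $|e_{r}\rangle\langle e_{r}|$ with $|e_{r}\rangle\langle e_{s}|+|e_{s}\rangle\langle e_{r}|$, the transpose as the fixing *-antiautomorphism, and an appeal to \cref{hoThm}. The only additions are welcome bits of care the paper elides — handling $n=1,2$ of universal reversibility via Eq.~\eqref{v2iso} and \cref{spinRevThm}, and spelling out why a self-adjoint transpose-fixed matrix must be real — but these do not change the argument.
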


\begin{proposition}\label{withGM}
\begin{equation}
C^{*}_{u}\big(\mathcal{M}_{n}(\mathbb{C})_{sa}\big)=\mathcal{M}_{n}(\mathbb{C})\oplus\mathcal{M}_{n}(\mathbb{C})\text{.}\label{cnCstar}
\end{equation}
\begin{proof}
Define $\psi:\mathcal{M}_{n}(\mathbb{C})_{sa}\rightarrow\mathcal{M}_{n}(\mathbb{C})\oplus\mathcal{M}_{n}(\mathbb{C})$ for all $a\in\mathcal{M}_{n}(\mathbb{C})_{sa}$ via $\psi(a)=a\oplus a^{t}$. Let $\mathfrak{B}$ be the C$^{*}-$algebra generated by $\psi(\mathcal{M}_{n}(\mathbb{C})_{sa})$. Note that $\mathfrak{B}\subseteq\mathcal{M}_{n}(\mathbb{C})\oplus\mathcal{M}_{n}(\mathbb{C})$. There exists an orthonormal basis $B$ for $\mathcal{M}_{n}(\mathbb{C})$ such that $B\subset\mathcal{M}_{n}(\mathbb{C})_{sa}$ --- for instance, the generalized Gell-Mann matrices. Thus $\{a\oplus a^{t}:a\in\mathcal{M}_{n}(\mathbb{C})\}\subset\mathfrak{B}$. Choose $a,b\in\mathcal{M}_{n}(\mathbb{C})$ such that $x\equiv ab-ba$ admits a multiplicative inverse. Indeed, such $a$ and $b$ exist in all dimensions --- for example, choose $a$ to be the generalized Pauli X operator and $b$ to be the generalized Pauli Z operator, so that $ab-ba=(1-\overline{\omega})XZ$, where $\omega=\text{exp}[\textstyle{\frac{2\pi i}{n}}]$. Observe the following:
\begin{equation}
(a\oplus a^{t})(b\oplus b^{t})-ba\oplus a^{t}b^{t}=ab-ba\oplus 0=x\oplus0\in\mathfrak{B}\text{,}
\end{equation}
\begin{equation}
(a^{t}\oplus a)(b^{t}\oplus b)-a^{t}b^{t}\oplus ba=0\oplus ab-ba= 0\oplus x\in\mathfrak{B}\text{.}
\end{equation}
Note that $yx^{-1},(x^{-1})^{t}y^{t}\in\mathcal{M}_{n}(\mathbb{C})$ for all $y\in\mathcal{M}_{n}(\mathbb{C})$. Therefore $\mathfrak{B}=\mathcal{M}_{n}(\mathbb{C})\oplus\mathcal{M}_{n}(\mathbb{C}).$\\[0.3cm]
Define $\Phi:(\mathcal{M}_{n}(\mathbb{C})\oplus\mathcal{M}_{n}(\mathbb{C}))\rightarrow\mathcal{M}_{n}(\mathbb{C})\oplus\mathcal{M}_{n}(\mathbb{C})$ for all $a\oplus b$ via $\Phi(a\oplus b)=b^{t}\oplus a^{t}$. The proof now follows from \cref{urThm} and \cref{existUThm} and \cref{hoThm}.
\end{proof}
\end{proposition}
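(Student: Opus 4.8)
\textbf{The plan} is to apply Hanche-Olsen's \cref{hoThm} to $\mathcal{A}=\mathcal{M}_{n}(\mathbb{C})_{\text{sa}}$, exactly as was done for $\mathcal{M}_{n}(\mathbb{R})_{\text{sa}}$ in \cref{forFact2}, but now with the target C$^{*}$\!-algebra being the direct sum $\mathfrak{B}=\mathcal{M}_{n}(\mathbb{C})\oplus\mathcal{M}_{n}(\mathbb{C})$ rather than a single matrix algebra. The reason a direct sum appears is the well-known fact (manifest in \cref{cnCstar}) that the complex Hermitian matrices are universally reversible but \emph{not} self-adjointly rigid inside $\mathcal{M}_{n}(\mathbb{C})$ alone: there is no single $*$-antiautomorphism of $\mathcal{M}_{n}(\mathbb{C})$ fixing exactly $\mathcal{M}_{n}(\mathbb{C})_{\text{sa}}$, since the transpose fixes only the real symmetric matrices. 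Doubling the representation via $a\longmapsto a\oplus a^{\text{T}}$ and using the flip-transpose $\Phi$ repairs this.

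\textbf{First} I would establish the three hypotheses of \cref{hoThm}. Universal reversibility of $\mathcal{M}_{n}(\mathbb{C})_{\text{sa}}$ is immediate from \cref{urThm} for $n\geq 3$, while for $n=2$ it follows from \cref{spinRevThm} together with the isomorphism $\mathcal{M}_{2}(\mathbb{C})_{\text{sa}}\cong\mathcal{V}_{3}$ in Eq.~\eqref{v3iso}; so the theorem applies in all dimensions. Next, $\theta\equiv\psi:a\longmapsto a\oplus a^{\text{T}}$ is an injective Jordan homomorphism into $\mathfrak{B}_{\text{sa}}$ (injectivity is clear from the first summand, and the Jordan-preservation follows because the transpose is an involutory $*$-antiautomorphism, so $(a\jProd b)^{\text{T}}=a^{\text{T}}\jProd b^{\text{T}}$). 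Then I would verify the fixed-point condition: with $\Phi(a\oplus b)=b^{\text{T}}\oplus a^{\text{T}}$, one checks directly that $\Phi$ is an involutive $*$-antiautomorphism of $\mathfrak{B}$ and that $\Phi\circ\psi=\psi$, since $\Phi(a\oplus a^{\text{T}})=(a^{\text{T}})^{\text{T}}\oplus a^{\text{T}}=a\oplus a^{\text{T}}$ for self-adjoint $a$.

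\textbf{The main obstacle} --- and the only genuinely nontrivial step --- is proving the generation condition $\mathfrak{c}\big(\psi(\mathcal{M}_{n}(\mathbb{C})_{\text{sa}})\big)=\mathcal{M}_{n}(\mathbb{C})\oplus\mathcal{M}_{n}(\mathbb{C})$; that is, that the doubled image generates the \emph{entire} direct sum and not some diagonal-like proper C$^{*}$\!-subalgebra. This is precisely where I must exploit noncommutativity. I would first note that an orthonormal basis of $\mathcal{M}_{n}(\mathbb{C})$ lying inside $\mathcal{M}_{n}(\mathbb{C})_{\text{sa}}$ exists (the generalized Gell-Mann matrices), so $\{a\oplus a^{\text{T}}\;\boldsymbol{|}\;a\in\mathcal{M}_{n}(\mathbb{C})\}$ lies in the generated algebra $\mathfrak{B}$ by complex-linear combination. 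The key trick is to produce an element of the form $x\oplus 0$ with $x$ invertible: choosing $a,b\in\mathcal{M}_{n}(\mathbb{C})$ with $x\equiv ab-ba$ invertible (e.g.\ $a=X$ and $b=Z$ the generalized Pauli operators, giving $x=(1-\overline{\omega})XZ$ with $\omega=\exp[2\pi i/n]$), the commutator $(a\oplus a^{\text{T}})(b\oplus b^{\text{T}})-(ba\oplus a^{\text{T}}b^{\text{T}})$ collapses to $x\oplus 0$ because transposition reverses products and hence kills the second-summand commutator. Symmetrically one obtains $0\oplus x$. Then for arbitrary $y$, the products $(yx^{-1})(x\oplus 0)$ and $(0\oplus (x^{-1})^{\text{T}}y^{\text{T}})$ recover $y\oplus 0$ and $0\oplus y$ respectively, so both summands are generated in full and $\mathfrak{B}=\mathcal{M}_{n}(\mathbb{C})\oplus\mathcal{M}_{n}(\mathbb{C})$. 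With all hypotheses in hand, \cref{hoThm} yields the $*$-isomorphism $C^{*}_{u}\big(\mathcal{M}_{n}(\mathbb{C})_{\text{sa}}\big)\cong\mathcal{M}_{n}(\mathbb{C})\oplus\mathcal{M}_{n}(\mathbb{C})$, completing the proof. I would flag the degenerate case $n=1$ separately if needed, where $\mathcal{M}_{1}(\mathbb{C})_{\text{sa}}\cong\mathbb{R}$ and the commutator trick is unavailable, though here the claim reduces to the elementary $C^{*}_{u}(\mathbb{R})=\mathbb{C}\oplus\mathbb{C}$.
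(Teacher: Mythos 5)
Your proposal is correct and follows essentially the same route as the paper's own proof: the doubled embedding $a\longmapsto a\oplus a^{\text{T}}$, the Gell-Mann basis to pass to complex linear combinations, the commutator $x\oplus 0$ with $x=(1-\overline{\omega})XZ$ invertible to split the summands, and the flip-transpose involution $\Phi$ feeding into \cref{hoThm}. Your extra remarks on the low-dimensional cases ($n=1,2$) are a small but welcome addition of care beyond what the paper records.
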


\noindent We now come to the case of quaternionic Jordan matrix algebras.

\begin{proposition}\label{qnProp}
\begin{equation}
C^{*}_{u}\big(\mathcal{M}_{n\geq 3}(\mathbb{H})_{sa}\big)=\mathcal{M}_{2n}(\mathbb{C})\text{.}\label{qnCstar}
\end{equation}
\end{proposition}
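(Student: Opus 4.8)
The plan is to apply Hanche-Olsen's \cref{hoThm} to the symplectic standard representation \eqref{quatEmbed}, exactly as was done for the real and complex cases in \cref{forFact2} and \cref{withGM}. Recall that for $n \geq 3$, \cref{urThm} guarantees that $\mathcal{M}_{n}(\mathbb{H})_{\text{sa}}$ is universally reversible, so \cref{hoThm} applies. The strategy has three ingredients: first, identify a C$^{*}$\!-algebra $\mathfrak{B}$ together with an injective Jordan homomorphism $\theta$ realizing $\mathcal{M}_{n}(\mathbb{H})_{\text{sa}}$ inside $\mathfrak{B}_{\text{sa}}$; second, show this representation generates $\mathfrak{B}$ as a C$^{*}$\!-algebra; third, exhibit a $*$-antiautomorphism $\varphi$ of $\mathfrak{B}$ fixing precisely $\theta(\mathcal{M}_{n}(\mathbb{H})_{\text{sa}})$. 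Given \cref{hoThm}, these together force $\mathfrak{B} \cong C^{*}_{u}(\mathcal{M}_{n}(\mathbb{H})_{\text{sa}})$.

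For the first ingredient, I would take $\mathfrak{B} = \mathcal{M}_{2n}(\mathbb{C})$ and let $\theta = \pi_{\mathcal{M}_{n}(\mathbb{H})_{\text{sa}}}$ be the symplectic embedding of \eqref{quatEmbed}, sending $\Gamma_{1} + \Gamma_{2}j \longmapsto \left(\begin{smallmatrix} \Gamma_{1} & \Gamma_{2} \\ -\overline{\Gamma_{2}} & \overline{\Gamma_{1}} \end{smallmatrix}\right)$. Injectivity is immediate. For the third ingredient, the natural candidate for $\varphi$ is built from the map $\Phi$ of \eqref{phiQuat}, namely $\Phi(x) = -(JxJ)^{\text{T}}$ with $J$ as in \eqref{jMatrix}; the excerpt already records that $\Phi$ is an involutory $*$-antiautomorphism of $\mathcal{M}_{2n}(\mathbb{C})$ and that its self-adjoint fixed points satisfy the Jordan isomorphism $\mathcal{M}_{2n}(\mathbb{C})_{\text{sa}}^{\Phi} \cong \mathcal{M}_{n}(\mathbb{H})_{\text{sa}}$. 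So taking $\varphi = \Phi$, the fixed-point condition $\varphi \circ \theta = \theta$ holds precisely because $\theta(\mathcal{M}_{n}(\mathbb{H})_{\text{sa}})$ is exactly $\mathcal{M}_{2n}(\mathbb{C})_{\text{sa}}^{\Phi}$.

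The step I expect to be the main obstacle is the second ingredient: verifying that the symplectically embedded quaternionic self-adjoint matrices generate all of $\mathcal{M}_{2n}(\mathbb{C})$ as a C$^{*}$\!-algebra. This is the analogue of the short computations \eqref{smallCalc} in the real case and the commutator argument in the complex case, but here the block structure makes it more delicate, and it is exactly where the hypothesis $n \geq 3$ must enter (for $n = 2$ one instead has $\mathcal{M}_{2}(\mathbb{H})_{\text{sa}} \cong \mathcal{V}_{5}$, whose universal algebra is governed by \cref{spinRevThm} and is genuinely different). I would proceed by taking products of embedded elementary symmetric matrices: first show that the embedded images of $|e_{r}\rangle\langle e_{r}|$ (diagonal quaternionic units) and of the off-diagonal symmetric combinations generate enough matrix units $|e_{a}\rangle\langle e_{b}|$ in $\mathcal{M}_{2n}(\mathbb{C})$, then use the quaternionic $j$-component generators — whose images carry the off-diagonal blocks $\Gamma_{2}$ and $-\overline{\Gamma_{2}}$ — to mix the two $n \times n$ blocks. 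Products of these, together with their adjoints, should recover the full set of $2n \times 2n$ matrix units, establishing $\mathfrak{c}(\theta(\mathcal{M}_{n}(\mathbb{H})_{\text{sa}})) = \mathcal{M}_{2n}(\mathbb{C})$.

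With generation established, \cref{hoThm} immediately yields $\hat{\theta}: C^{*}_{u}(\mathcal{M}_{n}(\mathbb{H})_{\text{sa}}) \xrightarrow{\;\sim\;} \mathcal{M}_{2n}(\mathbb{C})$ with $\psi_{\mathcal{A}}(\mathcal{A}) \cong \mathcal{M}_{2n}(\mathbb{C})_{\text{sa}}^{\Phi}$, which is precisely the claimed equality \eqref{qnCstar}. I would close by noting the clean contrast with the complex case \eqref{cnCstar}, where the universal algebra doubles into $\mathcal{M}_{n}(\mathbb{C}) \oplus \mathcal{M}_{n}(\mathbb{C})$: here the quaternionic structure is already ``doubled'' inside a single $\mathcal{M}_{2n}(\mathbb{C})$ block via the symplectic embedding, and the canonical involution $\Phi_{\mathcal{A}}$ is realized concretely as $\Phi$.
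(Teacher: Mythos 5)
Your proposal is correct and follows essentially the same route as the paper, which defers the details to the author's MSc thesis but states explicitly that the proof goes through \cref{hoThm}, with the generation step recorded separately as \cref{fac4}: symplectic embedding, generation of $\mathcal{M}_{2n}(\mathbb{C})$, and the involutory $*$-antiautomorphism $\Phi(x)=-(JxJ)^{\mathrm{T}}$ whose self-adjoint fixed points are exactly the embedded algebra.

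One correction is worth making: the hypothesis $n\geq 3$ does \emph{not} enter in the generation step. Generation holds for $n=2$ as well --- the five symplectically embedded quaternionic Pauli matrices form a spin system whose irreducible Clifford-algebra representation generates all of $\mathcal{M}_{4}(\mathbb{C})$, and indeed \cref{fac4} is stated without any restriction on $n$. What fails for $n=2$ is universal reversibility: $\mathcal{M}_{2}(\mathbb{H})_{\text{sa}}\cong\mathcal{V}_{5}$ is reversible but not universally reversible by \cref{spinRevThm}, so \cref{hoThm} is simply inapplicable there. This is why $C^{*}_{u}\big(\mathcal{M}_{2}(\mathbb{H})_{\text{sa}}\big)$ comes out as $\mathcal{M}_{4}(\mathbb{C})\oplus\mathcal{M}_{4}(\mathbb{C})$ in \cref{q2Prop} even though all three of your listed ingredients hold in that case. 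Your invocation of \cref{urThm} at the outset is where $n\geq 3$ is genuinely used, and with that in place the rest of your argument goes through.
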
 

\noindent A proof of \cref{qnProp} using \cref{hoThm} can be found in the author's MSc thesis \cite{Graydon2011}.

\noindent  The case of $\mathcal{M}_{2}(\mathbb{H})_{\text{sa}}$ is much more difficult because $\mathcal{M}_{2}(\mathbb{H})_{\text{sa}}\cong\mathcal{V}_{5}$, and the spin factor $\mathcal{V}_{5}$ is not universally reversible. Let us consider $\mathcal{V}_{5}$ in some detail. Let $\{1,i,j,k\}\subset\mathbb{H}$ be the usual quaternionic basis, with $i^{2}=j^{2}=k^{2}=ijk=-1$. The \textit{quaternionic Pauli matrices} are:
\begin{equation}
q_{0}=\begin{pmatrix} 1 & \hspace{0.3cm}0 \\ 0 & \hspace{0.3cm}1 \end{pmatrix},\hspace{0.1cm}q_{1}=\begin{pmatrix} 1 & \hspace{0.3cm}0 \\ 0 & -1 \end{pmatrix},\hspace{0.1cm} q_{2}=\begin{pmatrix} 0 & \hspace{0.3cm}1 \\ 1 & \hspace{0.3cm}0 \end{pmatrix},\hspace{0.1cm} q_{3}=\begin{pmatrix} 0 & -i \\ i & \hspace{0.3cm}0 \end{pmatrix},\hspace{0.1cm} q_{4}=\begin{pmatrix} 0 & -j \\ j & \hspace{0.3cm}0 \end{pmatrix},\hspace{0.1cm} q_{5}=\begin{pmatrix} 0 & -k \\ k & \hspace{0.3cm}0 \end{pmatrix}\text{.}
\end{equation}
Note $\mathcal{M}_{2}(\mathbb{H})_{\text{sa}}=\text{span}_{\mathbb{R}}\{q_{0},\dots,q_{5}\}$; moreover the quaternionic Pauli matrices are $\mathbb{R}$-linearly independent. Thus $\text{dim}_{\mathbb{R}}\mathcal{M}_{2}(\mathbb{H})_{\text{sa}}=6$. Furthermore, one notes that the quaternionic Pauli matrices (excluding $q_{0}$) are a spin system of cardinality $5$, which is to say that $\forall t,v\in\{1,\dots,5\}:q_{t}\jProd q_{v}=q_{0}\delta_{t,v}$, where $\forall a_{1},a_{2}\in\mathcal{M}_{2}(\mathbb{H})_{\text{sa}}$ we define $a_{1}\jProd a_{2}=(a_{1}a_{2}+a_{2}a_{1})/2$, with juxtaposition denoting the usual associative product in the real associative $*$-algebra $\mathcal{M}_{2}(\mathbb{H})$. For completeness, we note that the composition of matrix transposition $a\longmapsto a^{T}$ with quaternionic conjugation (done entry-wise) $a\longmapsto \overline{a}$ defines the $*$-operation on $\mathcal{M}_{2}(\mathbb{H})$; moreover that $\mathcal{M}_{2}(\mathbb{H})_{\text{sa}}=\{a\in\mathcal{M}_{2}(\mathbb{H}):a=a^{*}\}$. The quaternionic Pauli matrices generate $\mathcal{M}_{2}(\mathbb{H})_{\text{sa}}$ as a Jordan algebra. Equipping $\mathcal{M}_{2}(\mathbb{H})_{\text{sa}}$ with the inner product $\langle a_{1},a_{2}\rangle=\mathrm{Tr}(a_{1}\jProd a_{2})$, we see that the quaternionic Pauli matrices are in fact an orthogonal basis for $\mathcal{M}_{2}(\mathbb{H})_{\text{sa}}$; further, that $\mathcal{M}_{2}(\mathbb{H})_{\text{sa}}$ is a Euclidean Jordan algebra. We are now ready for the following.

\begin{proposition}\label{q2Prop}
\begin{equation}
C^{*}_{u}\big(\mathcal{M}_{2}(\mathbb{H})_{sa}\big)=\mathcal{M}_{4}(\mathbb{C})\oplus\mathcal{M}_{4}(\mathbb{C})\text{.}\label{q2Cstar}
\end{equation}
\begin{proof}
Recall the usual complex Pauli matrices in $\mathcal{M}_{2}(\mathbb{C})_{\text{sa}}$:
\begin{equation}
u_{\mathcal{M}_{2}(\mathbb{C})_{\text{sa}}}=\sigma_{o}=\begin{pmatrix} 1 & \hspace{0.3cm}0 \\ 0 & \hspace{0.3cm}1 \end{pmatrix},\hspace{0.1cm}\sigma_{z}=\begin{pmatrix} 1 & \hspace{0.3cm}0 \\ 0 & -1 \end{pmatrix},\hspace{0.1cm} \sigma_{x}=\begin{pmatrix} 0 & \hspace{0.3cm}1 \\ 1 & \hspace{0.3cm}0 \end{pmatrix},\hspace{0.1cm} \sigma_{y}=\begin{pmatrix} 0 & -i \\ i & \hspace{0.3cm}0\end{pmatrix}\text{.}
\end{equation}
Our complex $*$-algebra tensor product convention is such that $\mathcal{M}_{n}(\mathbb{C})\otimes\mathcal{M}_{m}(\mathbb{C})=\mathcal{M}_{m}\Big(\mathcal{M}_{n}(\mathbb{C})\Big)$. As such we have
\begin{eqnarray}
s_{0}&=&\sigma_{o}\otimes\sigma_{o}\otimes\sigma_{o}\in\mathcal{M}_{4}(\mathbb{C})_{\text{sa}}\oplus\mathcal{M}_{4}(\mathbb{C})_{\text{sa}},\nonumber\\
s_{1}&=&\sigma_{z}\otimes\sigma_{o}\otimes\sigma_{o}\in\mathcal{M}_{4}(\mathbb{C})_{\text{sa}}\oplus\mathcal{M}_{4}(\mathbb{C})_{\text{sa}},\nonumber\\
s_{2}&=&\sigma_{x}\otimes\sigma_{o}\otimes\sigma_{o}\in\mathcal{M}_{4}(\mathbb{C})_{\text{sa}}\oplus\mathcal{M}_{4}(\mathbb{C})_{\text{sa}},\nonumber\\
s_{3}&=&\sigma_{y}\otimes\sigma_{z}\otimes\sigma_{o}\in\mathcal{M}_{4}(\mathbb{C})_{\text{sa}}\oplus\mathcal{M}_{4}(\mathbb{C})_{\text{sa}},\nonumber\\
s_{4}&=&\sigma_{y}\otimes\sigma_{x}\otimes\sigma_{o}\in\mathcal{M}_{4}(\mathbb{C})_{\text{sa}}\oplus\mathcal{M}_{4}(\mathbb{C})_{\text{sa}},\nonumber\\
s_{5}&=&\sigma_{y}\otimes\sigma_{y}\otimes\sigma_{z}\in\mathcal{M}_{4}(\mathbb{C})_{\text{sa}}\oplus\mathcal{M}_{4}(\mathbb{C})_{\text{sa}}\text{.}
\end{eqnarray}
Define and extend $\mathbb{R}$-linearly on its domain the following map, where $t\in\{0,\dots,5\}$:
\begin{equation}
\psi:\mathcal{M}_{2}(\mathbb{H})_{\text{sa}}\longrightarrow\mathcal{M}_{4}(\mathbb{C})_{\text{sa}}\oplus\mathcal{M}_{4}(\mathbb{C})_{\text{sa}}:: q_{t}\longmapsto s_{t}\text{.}
\end{equation}
Noting that $\{s_{1},\dots,s_{5}\}$ is a spin system in $\mathcal{M}_{4}(\mathbb{C})_{\text{sa}}\oplus\mathcal{M}_{4}(\mathbb{C})_{\text{sa}}$, we see that $\psi$ is an injective Jordan homomorphism. One can easily check that $\{s_{1},\dots,s_{5}\}$ generates $\mathcal{M}_{4}(\mathbb{C})\oplus\mathcal{M}_{4}(\mathbb{C})$ as a C$^{*}$\!-algebra, simply by computing their $11 + 10 + 5 + 1 = 27$ projectively distinct, pairwise $\mathbb{C}$-linearly independent double, triple, quadruple, and quintuple associative products, respectively. Now, let $\psi^{u}$ be the canonical injection of $\mathcal{M}_{2}(\mathbb{H})_{\text{sa}}$ into its universal C$^{*}$\!-algebra. Then $\psi^{u}(\mathcal{M}_{2}(\mathbb{H})_{\text{sa}})$ generates $C^{*}_{u}(\mathcal{M}_{2}(\mathbb{H})_{\text{sa}})$ as a C$^{*}$-algebra. So $\mathrm{dim}_{\mathbb{C}}C^{*}_{u}(\mathcal{M}_{2}(\mathbb{H})_{\text{sa}})\leq 32$, because $C^{*}_{u}(\mathcal{M}_{2}(\mathbb{H})_{\text{sa}})$ is generated by the images of 5 anticommuting symmetries under Jordan monomorphism. However, by \cref{existUThm} there exists a $*$-homomorphism $\hat{\psi}$ such that the following diagram commutes
\begin{equation}
\xymatrixrowsep{1cm}
\xymatrixcolsep{3cm}
\xymatrix{C^{*}_{u}\big(\mathcal{M}_{2}(\mathbb{H})_{\text{sa}}\big)\ar[r]^{\hat{\psi}} & \mathcal{M}_{4}(\mathbb{C})\oplus\mathcal{M}_{4}(\mathbb{C})\\ \mathcal{M}_{2}(\mathbb{H})_{\text{sa}}\ar[r]_{\psi}\ar[u]^{\psi^{u}} & \mathcal{M}_{4}(\mathbb{C})_{\text{sa}}\oplus\mathcal{M}_{4}(\mathbb{C})_{\text{sa}}\ar[u]_{\text{inc}}}
\end{equation}
Therefore $\hat{\psi}$ must be a $*$-isomorphism, because with $\text{inc}\hspace{0.05cm}\circ\psi=\hat{\psi}\circ\psi^{u}$, there must exist 32 linearly independent elements in $C^{*}_{u}(\mathcal{A})$ corresponding to the images of the quaternionic Pauli matrices and their aforementioned 27 $\mathbb{C}$-linearly independent associative products under $\psi$ in $\mathcal{M}_{4}(\mathbb{C})\oplus\mathcal{M}_{4}(\mathbb{C})$.
\end{proof}
\end{proposition}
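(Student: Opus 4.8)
The plan is to apply Hanche-Olsen's \cref{hoThm} together with the universal property in \cref{existUThm}, exactly as was done for the other quaternionic and complex matrix algebras, but paying special attention to the fact that $\mathcal{M}_{2}(\mathbb{H})_{\text{sa}}\cong\mathcal{V}_{5}$ is reversible but \emph{not} universally reversible (by \cref{spinRevThm}). This is the crux: because \cref{hoThm} requires universal reversibility, I cannot simply exhibit a single $*$-antiautomorphism whose self-adjoint fixed points recover the algebra and conclude $C^{*}_{u}=\mathfrak{B}$ directly. Instead the strategy will be to pin down $C^{*}_{u}\big(\mathcal{M}_{2}(\mathbb{H})_{\text{sa}}\big)$ by a dimension-counting argument bounding it from above, combined with the universal property to force a $*$-isomorphism onto an explicitly constructed target.

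First I would set up the concrete data: record the quaternionic Pauli matrices $q_{0},\dots,q_{5}$, noting that $\{q_{1},\dots,q_{5}\}$ forms a spin system of cardinality $5$ and that these generate $\mathcal{M}_{2}(\mathbb{H})_{\text{sa}}$ as a Jordan algebra, so $\text{dim}_{\mathbb{R}}\mathcal{M}_{2}(\mathbb{H})_{\text{sa}}=6$. Then I would build the candidate target $\mathcal{M}_{4}(\mathbb{C})\oplus\mathcal{M}_{4}(\mathbb{C})$ by specifying images $s_{0},\dots,s_{5}$ built from tensor products of complex Pauli matrices, arranged so that $\{s_{1},\dots,s_{5}\}$ is again a spin system. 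The map $\psi::q_{t}\longmapsto s_{t}$, extended $\mathbb{R}$-linearly, is then an injective Jordan homomorphism precisely because spin systems are mapped to spin systems (the defining relations $q_{t}\jProd q_{v}=q_{0}\delta_{t,v}$ are preserved).

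The key computational step is to verify that $\{s_{1},\dots,s_{5}\}$ generates all of $\mathcal{M}_{4}(\mathbb{C})\oplus\mathcal{M}_{4}(\mathbb{C})$ as a C$^{*}$\!-algebra, and simultaneously to count its dimension: the associative products of the five anticommuting symmetries yield $\binom{5}{2}=10$ double products (of which, together with one quintuple product, one organizes the $11+10+5+1=27$ projectively distinct $\mathbb{C}$-linearly independent products of orders two through five), which with the six generators gives $32$ linearly independent elements matching $\text{dim}_{\mathbb{C}}\big(\mathcal{M}_{4}(\mathbb{C})\oplus\mathcal{M}_{4}(\mathbb{C})\big)=32$. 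The main obstacle is precisely establishing this independence/generation count cleanly, since the two summands cause the products to behave differently than in a single factor; one must confirm that the $27$ higher products plus the $5$ (or $6$, counting the unit) generators really are independent and really fill both copies.

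Finally I would invoke \cref{existUThm}: since $\psi^{u}(\mathcal{M}_{2}(\mathbb{H})_{\text{sa}})$ generates $C^{*}_{u}\big(\mathcal{M}_{2}(\mathbb{H})_{\text{sa}}\big)$, and this universal algebra is generated by the images of five anticommuting symmetries under a Jordan monomorphism, its complex dimension is at most $32$. The universal property furnishes a $*$-homomorphism $\hat{\psi}:C^{*}_{u}\big(\mathcal{M}_{2}(\mathbb{H})_{\text{sa}}\big)\longrightarrow\mathcal{M}_{4}(\mathbb{C})\oplus\mathcal{M}_{4}(\mathbb{C})$ with $\text{inc}\circ\psi=\hat{\psi}\circ\psi^{u}$. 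Because $\psi$ is injective and its image together with the $27$ products spans the $32$-dimensional target, $\hat{\psi}$ is surjective; the upper bound on the source dimension then forces $\hat{\psi}$ to be a $*$-isomorphism, giving $C^{*}_{u}\big(\mathcal{M}_{2}(\mathbb{H})_{\text{sa}}\big)\cong\mathcal{M}_{4}(\mathbb{C})\oplus\mathcal{M}_{4}(\mathbb{C})$. I expect the delicate point throughout to be the bookkeeping that reconciles the $\dim\leq 32$ bound from the universal side with the explicit $\dim=32$ count on the concrete side, which is what ultimately rules out the two factors collapsing into one.
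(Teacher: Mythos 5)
Your proposal is correct and follows essentially the same route as the paper: it identifies the failure of universal reversibility of $\mathcal{M}_{2}(\mathbb{H})_{\text{sa}}\cong\mathcal{V}_{5}$ as the reason \cref{hoThm} cannot be applied directly, constructs the same explicit spin system $\{s_{1},\dots,s_{5}\}$ in $\mathcal{M}_{4}(\mathbb{C})\oplus\mathcal{M}_{4}(\mathbb{C})$, and uses the same $27+5$ product count together with the $\dim\leq 32$ bound and the universal property from \cref{existUThm} to force $\hat{\psi}$ to be a $*$-isomorphism. No substantive differences from the paper's argument.
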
 

\noindent The universal C$^{*}$\!-algebras enveloping the spin factors will not play a direct role in the sequel. We thus include the following proposition (whose proof can be found in \cite{HancheOlsen1984}) only for completeness.

\begin{proposition}
\begin{equation}
C^{*}_{u}(\mathcal{V}_{k})=\begin{cases} \mathcal{M}_{2^n}(\mathbb{C}) & k=2n \\  \mathcal{M}_{2^n}(\mathbb{C}) \oplus \mathcal{M}_{2^n}(\mathbb{C}) & k=2n+1 \end{cases}
\end{equation}
\end{proposition}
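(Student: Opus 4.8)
The plan is to recognize $C^{*}_{u}(\mathcal{V}_{k})$ as the universal $C^{*}$-algebra generated by $k$ anticommuting symmetries --- that is, a complex Clifford algebra --- and to pin down its dimension and structure by the same template used in \cref{q2Prop}. Recall that $\mathcal{V}_{k}$ is generated as a Jordan algebra by a spin system $\{s_{1},\dots,s_{k}\}$ with $s_{p}\jProd s_{q}=u\delta_{p,q}$ (the abstract generators from \cref{ssDef}). Under the canonical embedding $\psi_{\mathcal{V}_{k}}$, the images $\psi_{\mathcal{V}_{k}}(s_{p})$ are self-adjoint and satisfy $\psi_{\mathcal{V}_{k}}(s_{p})\psi_{\mathcal{V}_{k}}(s_{q})+\psi_{\mathcal{V}_{k}}(s_{q})\psi_{\mathcal{V}_{k}}(s_{p})=2u\delta_{p,q}$ in $C^{*}_{u}(\mathcal{V}_{k})$; that is, they are anticommuting symmetries. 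The first step is a word-reduction argument: using $\psi_{\mathcal{V}_{k}}(s_{p})^{2}=u$ and the anticommutation relations, every associative monomial in the generators reduces, up to sign, to an ordered product $\psi_{\mathcal{V}_{k}}(s_{p_{1}})\cdots\psi_{\mathcal{V}_{k}}(s_{p_{m}})$ with $p_{1}<\cdots<p_{m}$. Since such products are indexed by subsets of $\{1,\dots,k\}$ and, by part (i) of \cref{existUThm}, they span $C^{*}_{u}(\mathcal{V}_{k})$, this yields the upper bound $\dim_{\mathbb{C}}C^{*}_{u}(\mathcal{V}_{k})\leq 2^{k}$.

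Next I would exhibit an explicit faithful representation into the conjectured target and invoke the universal property to force equality. Write $T=\mathcal{M}_{2^{n}}(\mathbb{C})$ when $k=2n$ and $T=\mathcal{M}_{2^{n}}(\mathbb{C})\oplus\mathcal{M}_{2^{n}}(\mathbb{C})$ when $k=2n+1$; in both cases $\dim_{\mathbb{C}}T=2^{k}$. For even $k$ the concrete matrices $v_{1},\dots,v_{2n}$ of Eq.~\eqref{tDefC} realize the generators inside $\mathcal{M}_{2}(\mathbb{C})^{\otimes^{n}}=\mathcal{M}_{2^{n}}(\mathbb{C})$, giving an injective Jordan homomorphism $\psi:\mathcal{V}_{k}\longrightarrow T_{\text{sa}}$. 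For odd $k$ I would follow the $k=5$ construction of \cref{q2Prop} in spirit: send $s_{p}\longmapsto v_{p}$ for $p<k$ and let the final generator carry a factor distinguishing the two summands (a $+\sigma_{z}$ and a $-\sigma_{z}$ on the respective blocks), so that the representation is the direct sum of the two inequivalent irreducibles separated by the sign of the central volume element $s_{1}\cdots s_{k}$. By part (ii) of \cref{existUThm} this lifts to a $*$-homomorphism $\hat{\psi}:C^{*}_{u}(\mathcal{V}_{k})\longrightarrow T$, and one checks that $\psi(\mathcal{V}_{k})$ generates $T$ as a $C^{*}$-algebra, so $\hat{\psi}$ is surjective. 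Combining surjectivity with $\dim_{\mathbb{C}}C^{*}_{u}(\mathcal{V}_{k})\leq 2^{k}=\dim_{\mathbb{C}}T$ forces $\hat{\psi}$ to be a $*$-isomorphism, establishing the claim.

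The main obstacle is the generation step --- verifying that the concrete generators generate all of $T$, and in particular, in the odd case, ensuring the representation is genuinely $\rho_{+}\oplus\rho_{-}$ rather than a diagonal copy $\rho\oplus\rho$, which would generate only the diagonal $\mathcal{M}_{2^{n}}(\mathbb{C})$ and fall short of dimension $2^{k}$. The sign flip on the last generator is exactly what separates the two blocks, and I would verify generation by the explicit product count used in \cref{q2Prop}: enumerate the projectively distinct, pairwise $\mathbb{C}$-linearly independent ordered products of the $v_{p}$ and confirm there are $2^{k}$ of them. As consistency checks, the formula reproduces $C^{*}_{u}(\mathcal{M}_{2}(\mathbb{R})_{\text{sa}})=\mathcal{M}_{2}(\mathbb{C})$ via $\mathcal{V}_{2}\cong\mathcal{M}_{2}(\mathbb{R})_{\text{sa}}$ (\cref{forFact2}), $C^{*}_{u}(\mathcal{M}_{2}(\mathbb{C})_{\text{sa}})=\mathcal{M}_{2}(\mathbb{C})\oplus\mathcal{M}_{2}(\mathbb{C})$ via $\mathcal{V}_{3}\cong\mathcal{M}_{2}(\mathbb{C})_{\text{sa}}$ (\cref{withGM}), and $C^{*}_{u}(\mathcal{M}_{2}(\mathbb{H})_{\text{sa}})=\mathcal{M}_{4}(\mathbb{C})\oplus\mathcal{M}_{4}(\mathbb{C})$ via $\mathcal{V}_{5}\cong\mathcal{M}_{2}(\mathbb{H})_{\text{sa}}$ (\cref{q2Prop}), all of which match the stated values at $k=2,3,5$.
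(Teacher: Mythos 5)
Your proof is correct. Note that the paper itself does not prove this proposition --- it is stated ``only for completeness'' with a citation to Hanche-Olsen and St{\o}rmer --- so there is no in-paper argument to compare against; what you have done is supply the missing proof by generalizing the paper's own treatment of the $k=5$ case in \cref{q2Prop} to arbitrary $k$, which is exactly the standard identification of $C^{*}_{u}(\mathcal{V}_{k})$ with the complex Clifford algebra on $k$ generators. All the pieces check out: the canonical images $\psi_{\mathcal{V}_{k}}(s_{p})$ are anticommuting symmetries because $\psi_{\mathcal{V}_{k}}$ is a unital Jordan monomorphism and $s_{p}\jProd s_{q}=u\delta_{p,q}$; the span of the $2^{k}$ ordered words is a finite-dimensional (hence closed) $*$-subalgebra containing $\psi_{\mathcal{V}_{k}}(\mathcal{V}_{k})$, so part (i) of \cref{existUThm} gives $\dim_{\mathbb{C}}C^{*}_{u}(\mathcal{V}_{k})\leq 2^{k}$; and your target $T$ has dimension exactly $2^{k}$ in both parities, so surjectivity of the lifted map $\hat{\psi}$ forces a $*$-isomorphism. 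You correctly isolate the one genuinely delicate point, namely that for odd $k$ the representation must be $\rho_{+}\oplus\rho_{-}$ rather than a diagonal copy; your sign flip on the last generator makes the central volume element $t_{1}\cdots t_{k}$ act with opposite signs on the two blocks, so that the products omitting $t_{k}$ span the diagonal $\{x\oplus x\}$ while those containing $t_{k}$ span the antidiagonal $\{y\oplus(-y)\}$, together filling out all of $\mathcal{M}_{2^{n}}(\mathbb{C})\oplus\mathcal{M}_{2^{n}}(\mathbb{C})$. The consistency checks at $k=2,3,5$ against \cref{forFact2}, \cref{withGM}, and \cref{q2Prop} are a nice touch and all agree.
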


\noindent In closing this chapter, we present the following lemma, to be called upon later at key junctures.

\begin{lemma}\label{dsLem}\textit{Let} $\mathcal{A}$ \textit{and} $\mathcal{B}$ \textit{be Euclidean Jordan algebras. Then}
\begin{equation}
C^{*}_{u}(\mathcal{A}\oplus\mathcal{B})=C^{*}_{u}(\mathcal{A})\oplus C^{*}_{u}(\mathcal{A})\text{.}
\end{equation}
\end{lemma}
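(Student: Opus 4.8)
The plan is to verify that the triple $\big(C^{*}_{u}(\mathcal{A})\oplus C^{*}_{u}(\mathcal{B}),\psi,\Phi\big)$, for suitable $\psi$ and $\Phi$, satisfies the three defining properties of \cref{existUThm}, and then to invoke the uniqueness clause of that theorem to conclude $C^{*}_{u}(\mathcal{A}\oplus\mathcal{B})\cong C^{*}_{u}(\mathcal{A})\oplus C^{*}_{u}(\mathcal{B})$. (Here $\mathcal{A}$ and $\mathcal{B}$ are assumed special, so that $\mathcal{A}\oplus\mathcal{B}$ is special and all the relevant universal C$^{*}$\!-algebras exist.) First I would set $\psi\equiv\psi_{\mathcal{A}}\oplus\psi_{\mathcal{B}}$, sending $a\oplus b\longmapsto\psi_{\mathcal{A}}(a)\oplus\psi_{\mathcal{B}}(b)$; since the Jordan product in a direct sum is componentwise and each of $\psi_{\mathcal{A}},\psi_{\mathcal{B}}$ is a Jordan monomorphism, $\psi$ is a Jordan monomorphism carrying the unit $u_{\mathcal{A}}\oplus u_{\mathcal{B}}$ to the unit of the direct-sum C$^{*}$\!-algebra. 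Property (i) (generation) is then routine: the image $\psi(\mathcal{A}\oplus\mathcal{B})$ contains $\psi_{\mathcal{A}}(a)\oplus 0$ and $0\oplus\psi_{\mathcal{B}}(b)$ for all $a,b$, so the C$^{*}$\!-algebra it generates contains $\mathfrak{c}(\psi_{\mathcal{A}}(\mathcal{A}))\oplus 0=C^{*}_{u}(\mathcal{A})\oplus 0$ and $0\oplus C^{*}_{u}(\mathcal{B})$, hence all of $C^{*}_{u}(\mathcal{A})\oplus C^{*}_{u}(\mathcal{B})$.

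The crux is property (ii), the universal factorization, and I expect this to be the main obstacle. Given any C$^{*}$\!-algebra $\mathfrak{B}$ and a unital Jordan homomorphism $\pi\colon\mathcal{A}\oplus\mathcal{B}\longrightarrow\mathfrak{B}_{\text{sa}}$, I would first replace $\mathfrak{B}$ by the C$^{*}$\!-subalgebra generated by $\pi(\mathcal{A}\oplus\mathcal{B})$ without loss of generality, and then set $p\equiv\pi(u_{\mathcal{A}}\oplus 0)$. Since $u_{\mathcal{A}}\oplus 0$ is idempotent and Jordan homomorphisms preserve squares, $p$ is a projection. The key step is to show that $p$ is \emph{central}: writing $x\equiv\pi(a\oplus 0)$ and $y\equiv\pi(0\oplus b)$, the relation $(0\oplus b)\jProd(u_{\mathcal{A}}\oplus 0)=0$ gives $py+yp=0$, and a short manipulation using $p^{2}=p$ forces $py=yp=0$; symmetrically, using $q\equiv\pi(0\oplus u_{\mathcal{B}})=1-p$ and $(a\oplus 0)\jProd(0\oplus u_{\mathcal{B}})=0$, one obtains $px=xp=x$. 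Thus $p$ commutes with every generator $x,y$ of $\mathfrak{B}$, so $p$ is a central projection and $\mathfrak{B}=p\mathfrak{B}\oplus q\mathfrak{B}$ as a direct sum of C$^{*}$\!-algebras.

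With this central splitting in hand, the maps $a\longmapsto\pi(a\oplus 0)$ and $b\longmapsto\pi(0\oplus b)$ are unital Jordan homomorphisms into $(p\mathfrak{B})_{\text{sa}}$ and $(q\mathfrak{B})_{\text{sa}}$ respectively, so \cref{existUThm}(ii) applied separately to $\mathcal{A}$ and to $\mathcal{B}$ yields $*$-homomorphisms $\hat{\pi}_{\mathcal{A}}\colon C^{*}_{u}(\mathcal{A})\to p\mathfrak{B}$ and $\hat{\pi}_{\mathcal{B}}\colon C^{*}_{u}(\mathcal{B})\to q\mathfrak{B}$ with $\pi(a\oplus 0)=\hat{\pi}_{\mathcal{A}}\circ\psi_{\mathcal{A}}(a)$ and $\pi(0\oplus b)=\hat{\pi}_{\mathcal{B}}\circ\psi_{\mathcal{B}}(b)$. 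Their direct sum $\hat{\pi}\equiv\hat{\pi}_{\mathcal{A}}\oplus\hat{\pi}_{\mathcal{B}}$ is a $*$-homomorphism satisfying $\hat{\pi}\circ\psi=\pi$ by linearity of $\pi$, establishing (ii). Finally, property (iii) is immediate: $\Phi\equiv\Phi_{\mathcal{A}}\oplus\Phi_{\mathcal{B}}$ is an involutive $*$-antiautomorphism of $C^{*}_{u}(\mathcal{A})\oplus C^{*}_{u}(\mathcal{B})$ fixing each $\psi_{\mathcal{A}}(a)\oplus\psi_{\mathcal{B}}(b)$ pointwise, since $\Phi_{\mathcal{A}}$ and $\Phi_{\mathcal{B}}$ fix the respective canonical images. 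As properties (i) and (ii) characterise the universal C$^{*}$\!-algebra up to a unique $*$-isomorphism, the stated identification follows.
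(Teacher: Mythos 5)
Your proof is correct, but it takes a genuinely different route from the paper's. The paper proves this lemma in \cref{appendix: extending} using Hanche-Olsen's theorem that $\mathcal{A}\mapsto C^{*}_{u}(\mathcal{A})$ is an \emph{exact functor}: it applies $C^{*}_{u}$ to the split short exact sequence $0\to\mathcal{A}\to\mathcal{A}\oplus\mathcal{B}\to\mathcal{B}\to 0$ and to its reverse, obtaining two split exact sequences of C$^{*}$\!-algebras, and then invokes \cref{exact sequences and ideals} (a split exact sequence of $\ast$-algebras gives a direct-sum decomposition precisely when the splitting map's image is a $\ast$-ideal, which the reversed sequence guarantees). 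You instead verify directly that $\big(C^{*}_{u}(\mathcal{A})\oplus C^{*}_{u}(\mathcal{B}),\,\psi_{\mathcal{A}}\oplus\psi_{\mathcal{B}},\,\Phi_{\mathcal{A}}\oplus\Phi_{\mathcal{B}}\big)$ satisfies the three clauses of \cref{existUThm}; the essential content is your observation that any unital Jordan homomorphism $\pi$ out of $\mathcal{A}\oplus\mathcal{B}$ sends $u_{\mathcal{A}}\oplus 0$ to a projection $p$ that is central in the C$^{*}$\!-subalgebra generated by the image of $\pi$ (the manipulation $py+yp=0$, $p^{2}=p$ $\Rightarrow$ $py=yp=0$ is sound, and centrality passes from generators to the whole subalgebra since the commutant of $p$ is a C$^{*}$\!-subalgebra), so that $\pi$ splits into components to which the universal properties of $C^{*}_{u}(\mathcal{A})$ and $C^{*}_{u}(\mathcal{B})$ apply separately. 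Your argument is the more self-contained of the two, needing nothing beyond the universal property itself, while the paper's is shorter given the (cited, nontrivial) exactness theorem and makes the functorial mechanism explicit; both approaches also yield the identification $\Phi_{\mathcal{A}\oplus\mathcal{B}}=\Phi_{\mathcal{A}}\oplus\Phi_{\mathcal{B}}$ that the paper records immediately after the proposition. One small caveat: as you note parenthetically, the statement implicitly assumes $\mathcal{A}$ and $\mathcal{B}$ are special, since \cref{existUThm} is stated only for special \textsc{eja}s; this matches how the lemma is used in the paper.
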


\noindent We relegate a proof of \cref{dsLem} to \cref{appendix: extending}.

\chapter{Jordanic Composites}
\label{compositesEJA}

\epigraphhead[40]
	{
		\epigraph{``The beauty of a living thing is not the atoms that go into it, but the way those atoms are put together.''}{---\textit{Carl Sagan}\\ Cosmos: A Personal Voyage (1980)}
	}

\noindent In \cite{Barnum2015} and \cite{Barnum2016b}, the author and H.\ Barnum and A.\ Wilce consider Jordan algebraic physical theories. This chapter centres on the compositional aspects of those publications.

\noindent In quantum theory, the usual tensor product $\otimes$ of finite dimensional vector spaces over $\mathbb{C}$, as defined in \cref{tensorDef}, is a mathematical apparatus for the formulation of quantum physics pertaining to composite systems. Specifically, given two physical systems with associated quantum cones $\mathcal{L}_{\text{sa}}(\mathcal{H}_{d_{\mathrm{A}}})_{+}$ and $\mathcal{L}_{\text{sa}}(\mathcal{H}_{d_{\mathrm{B}}})_{+}$, the cone of unnormalized states and effects for the composite system is  $\mathcal{L}_{\text{sa}}(\mathcal{H}_{d_{\mathrm{A}}}\otimes\mathcal{H}_{d_{\mathrm{B}}})_{+}$, with bipartite quantum channels thereupon corresponding to convex sets of completely positive trace preserving maps. In a general probabilistic theory, however, one must impose a particular compositional structure. There is no canonical choice. Physical principles, however, can single out compositional structures.

\noindent We postulate that a general probabilistic theory ought to peacefully coexist with special relativity. Of course, an arbitrary general probabilistic theory need not be as such; however, a primary motivation for the study of post-quantum theories is to expand the perimeter encompassing quantum theory so as to merge with general relativistic notions, and so our postulate is natural from this point of view. Furthermore, Einstein's postulates for special relativity are so compelling, and his theory so empirically successful, that a sacrifice of the causal structure imposed by special relativity seems premature, at least at present. We therefore posit that a general probabilistic theory ought to be nonsignaling: local operations carried out by Alice on her component of a bipartite physical system can not be used to transmit superluminal signals to Bob. Indeed, quantum theory enjoys nonsignaling, which is implied by the commutation of local operations \cite{Barrett2007}. The notion of nonsignaling post-quantum theories arose in the seminal work of Popescu and Rohrlich \cite{Popescu1994}. In this chapter, we consider a restricted class of post-quantum theories: those for which unnormalized state and effect cones are the positive cones of Euclidean Jordan algebras. These cones have recently been \textit{derived} for physical systems via information-theoretic principles \cite{BMU}\cite{MU}\cite{Wilce09}\cite{Wilce11}\cite{Wilce12}. All of these derivations leave the formulation of composites as an open problem. We impose a nonsignaling compositional structure for these cones: the \textit{canonical tensor product} $\odot$, introduced in \cref{def: canonical tensor product}. 

\noindent We structure the balance of this chapter as follows. In \cref{physMot}, we review general probabilistic theories and define composites thereof. In \cref{sec: composites Jordan}, we specialize to our case of interest: Jordanic composites. We prove, in particular, that the canonical tensor product is always an ideal of Hanche-Olsen's universal tensor product \cite{HancheOlsen1983}. In \cref{canTPcomps}, we compute all canonical tensor products involving reversible \textsc{eja}s.

\newpage
\section{General and Jordan Algebraic Probabilistic Theories}\label{physMot}  
\noindent In this section we review the usual framework for general probabilistic theories, which are built up on ordered vector spaces. We point the reader to Alfsen and Shultz \cite{ASbasic} for a general introduction to ordered vector spaces. The use of ordered vector spaces with order units for general probabilistic theories dates back to at least the work of Ludwig \cite{Ludwig}\cite{LudwigAx}, and additionally the work of Davies and Lewis \cite{Davies-Lewis}, Edwards \cite{Edwards}, and Holevo \cite{Holevo}. For a more recent survey, we refer the reader to \cite{BarnumWilceFoils}. Along the way, we specialize to our case of interest: Jordan algebraic general probabilistic theories.

\noindent Recall from our discussion immediately prior to \cref{hsdDef} that an {\em ordered vector space} is a vector space $\mathcal{A}$ over $\mathbb{R}$ equipped with a designated cone $\mathcal{A}_{+}$  of positive elements. An {\em order unit} in an ordered vector space $\mathcal{A}$ is an element $u\in\mathcal{A}_{+}$ such that, for all $a\in\mathcal{A}$, $a\leq tu$ for some $t\in\R_{\geq 0}$. In finite dimensions, this is equivalent to $u$ belonging to the convex interior of $\mathcal{A}_{+}$ \cite{AT}. An {\em order unit space} is a pair $(\mathcal{A},u)$ where $\mathcal{A}$ is an ordered vector space and $u$ is a designated order unit. An order unit space provides the machinery to discuss probabilistic concepts. A {\em state} on $(\mathcal{A},u)$ is a positive linear functional $\alpha \in \mathcal{A}^{\star}$ with $\alpha(u) = 1$. An {\em effect} is an element $a \in \mathcal{A}_{+}$ with $a\leq u$.  If $\alpha$ is a state and $a$ is an effect, we have $0\leq \alpha(a) \leq 1$: we interpret this as the {\em probability} of the given effect on the given state.  A {\em discrete  observable} on $\mathcal{A}$ with values $\lambda \in \Lambda$ is represented by an indexed family $\{a_{\lambda} | \lambda \in \Lambda\}$ of effects summing to $u$, the effect $a_{\lambda}$ associated with the event of obtaining value $\lambda$ in a measurement of the observable. {\redd Thus, if $\alpha$ is a state, $\lambda \mapsto \alpha(a_{\lambda})$ gives a probability weight on $\Lambda$.} One can extend this discussion to include more general observables by considering effect-valued measures \cite{Edwards}, but we will not. We denote the set of all states of $\mathcal{A}$ by $\Omega$; the set of all effects {\magenta --- the interval between $0$ and $u$ --- is} denoted $[0,u]$. In our present finite-dimensional setting, both are compact convex sets. Extreme points of $\Omega$ are called {\em pure states}.

\noindent One may wish to privilege certain states and/or certain effects of a probabilistic model as being ``physically possible".  One way of doing so is to consider ordered subspaces $\mathcal{E}$ of $\mathcal{A}$, with $u_{\mathcal{A}}\in \mathcal{E}$, and $\mathcal{V}$ of $\mathcal{A}^{\star}$: this picks out the set of states $\alpha \in\mathcal{V}\cap\mathcal{A}^{\star}_{+}$ and the set of effects $a\in\mathcal{E}\cap\mathcal{A}_{+}$, $a\leq u$. The pair $(\mathcal{E},\mathcal{V})$ then serves as a probabilistic model for a system having these
allowed states and effects. However, in quantum theory, and in those theories that concern us in the rest of this thesis, it is always possible to regard {\em all} states in $\mathcal{A}^{\star}$, and {\em all} effects in $\mathcal{A}$, as allowed. Henceforth, then, when we speak of a {\em probabilistic model} --- or, more briefly, a {\em model} --- we simply mean an order unit space $(\mathcal{A},u)$. It will be convenient to adopt the shorthand $\mathcal{A}$ for such a pair, writing $u_{\mathcal{A}}$ for the order unit where necessary.

\noindent By a {\em process} affecting a system associated with a probabilistic model $\mathcal{A}$, we mean a positive linear mapping $\phi:\mathcal{A} \longrightarrow\mathcal{A}$, subject to the condition that $\phi(u_{\mathcal{A}})\leq u_{\mathcal{A}}$. The probability of observing an effect $a$ after the system has been prepared in a state $\alpha$ and then subjected to a process $\phi$ is $\alpha(\phi(a))$. One can regard $\alpha(\phi(u))$ as the probability that the system is not destroyed by the process. We can, of course, replace $\phi:\mathcal{A}\longrightarrow\mathcal{A}$ with the adjoint mapping $\phi^{\ast}:\mathcal{A}^{\star}\longrightarrow \mathcal{A}^{\star}$ given by $\phi^{\ast}(\alpha)=\alpha \circ \phi$, so as to think of a process as a mapping from states to possibly subnormalized states. Thus, we can view processes either as acting on effects (the ``Heisenberg picture"), or on states (the ``Schr\"{o}dinger picture"). Any nonzero positive linear mapping $\phi:\mathcal{A}\longrightarrow\mathcal{A}$ is a nonnegative scalar multiple of a process in the aforementioned sense: since the set of states $\Omega(\mathcal{A})$ is compact, $\{\alpha(\phi(u))|\alpha\in\Omega(A)\}$ is a compact set of real numbers, not all zero, and so, has a maximum value $m(\phi) > 0$; $m(\phi)^{-1}\phi$ is then a process. For this reason, we make little further distinction here between processes and positive mappings. In particular, if $\phi$ is an order automorphism of $\mathcal{A}$, then both $\phi$ and $\phi^{-1}$ are scalar multiples of processes in the above sense: each of these processes ``undoes" the other, {\em up to normalization}, that is with nonzero probability.  A process that can be reversed with probability one (a {\em symmetry} of $\mathcal{A}$) is associated with an order-automorphism $\phi$ such that $\phi(u_{\mathcal{A}}) =u_{\mathcal{A}}$.


\noindent We denote the group of all order-automorphisms of $\mathcal{A}$ by $\Aut(\mathcal{A})$. This is a Lie group \cite{HilgertHoffmanLawson}, with its connected identity component (consisting of those processes that can be obtained by continuously deforming the identity map) is denoted $\Aut_{0}(\mathcal{A})$.  A possible (probabilistically) reversible {\em dynamics} for a system modelled by
$\mathcal{A}$ is a homomorphism $t \mapsto \phi_t$ from $(\R,+)$ to $\Aut(\mathcal{A})$, that is a one-parameter subgroup of $\Aut(\mathcal{A})$. The set of symmetries forms a compact subgroup, $\Sym(\mathcal{A})$ of $\Aut(\mathcal{A})$. One might wish to privilege certain processes as reflecting physically possible motions or evolutions of the system. In that case, one might add to the basic data $(\mathcal{A},u)$ a preferred subgroup $G(\mathcal{A})$ of order automorphisms.  We refer to such a structure as a {\em dynamical} probabilistic model, since the choice of $G(\mathcal{A})$ constrains the permitted probabilistically reversible dynamics of the model.

\noindent An inner product $\IP$ on an ordered vector space $\mathcal{A}$ is {\em positive} if and only if the associated mapping $\mathcal{A} \rightarrow\mathcal{A}^{\star}$, $a \mapsto \langle a|$, is positive, that is if $\langle a | b \rangle \geq 0$ for all $a, b \in \mathcal{A}_{+}$. We say that $\IP$ is {\em self-dualizing} if $a \mapsto \langle a |$ maps $\mathcal{A}_{+}$ {\em onto} $\mathcal{A}^{\star}_{+}$, so that $a \in\mathcal{A}_{+}$ if and only if $\langle a|b\rangle\geq0$ for all $b\in\mathcal{B}$.  We say that $\mathcal{A}$ (or its positive cone) is {\em self-dual} if $\mathcal{A}$ admits a self-dualizing inner product.  If $\mathcal{A}$ is an order unit space, we ordinarily normalize such an inner product so that $\langle u_{\mathcal{A}}|u_{\mathcal{A}} \rangle = 1$. In this case, we can represent states of $\mathcal{A}$ {\em internally}: if $\alpha \in A^{\ast}_+$ with $\alpha(u) = 1$, there is a unique $a \in A_+$ with $\langle a | b \rangle = \alpha(b)$ for all $b \in \mathcal{A}_{+}$. Conversely, if $a \in \mathcal{A}_{+}$ with $\langle a | u \rangle = 1$, then $\langle a |$ is a state. If $\mathcal{A}$ and $\mathcal{B}$ are both self-dual and $\phi:\mathcal{A}\longrightarrow\mathcal{B}$ is a positive linear mapping, we can use  
self-dualizing inner products on $\mathcal{A}$ and $\mathcal{B}$ to represent the mapping $\phi^{\ast}:\mathcal{B}^{\star}\longrightarrow\mathcal{A}^{\star}$ as a positive linear mapping $\phi^{\dagger} :\mathcal{B}\longrightarrow\mathcal{A}$, setting $\langle a | \phi^{\dagger}(b) \rangle = \langle \phi(a) | b \rangle$ for all $a \in \mathcal{A}$ and $b \in \mathcal{B}$. If $\phi : \mathcal{A} \rightarrow \mathcal{A}$ is an order-automorphism, then so is $\phi^{\dagger}$. 

\noindent We now turn our attention to our specific case of interest: probabilistic models based on the Euclidean Jordan algebras (\textsc{eja}s) reviewed in \cref{jordPrelims}. Let $\mathcal{A}$ be an \textsc{eja}. By the spectral theorem, $a = b^2$ for some $b \in \mathcal{A}$ if and only if $a$ has a spectral decomposition $a = \sum_{i} \lambda_i x_i$ in which all the coefficients $\lambda_i$ are non-negative. The Jordan unit $u$ is also an order unit; thus, any \textsc{eja} $\mathcal{A}$ can serve as a probabilistic model: physical \textit{states} correspond to normalized positive linear functionals on $\mathcal{A}$, while measurement-outcomes are associated with \emph{effects}, \textit{i.e}.\, elements $a \in \mathcal{A}_+$ with $0 \leq a \leq u$, and (discrete) observables, by sets $\{e_i\}$ of events with $\sum_i e_i = u$. 

\noindent Indeed, the inner product on $\mathcal{A}$ allows us to represent states internally, \textit{i.e.} for every state $\alpha$ there exists a unique $a \in \mathcal{A}_+$ with $\alpha(x) = \langle a | x \rangle$ for all $x \in \mathcal{A}$; conversely, every vector $a \in \mathcal{A}_{+}$ with $\langle a | u \rangle = 1$ defines a state in this way. Now, if $a$ is a projection, i.e., $a^2 = a$, let $\hat{a} = \|a\|^{-2} a$: 
then 
\begin{eqnarray}
\langle \hat{a}|u \rangle&=& \frac{1}{\|a\|^2} \langle a | u \rangle= \frac{1}{\|a\|^2} \langle a^2 | u \rangle=\frac{1}{\|a\|^2} \langle a | a \rangle= 1\text{.}
\end{eqnarray}
Thus, $\hat{a}$ represents a state. A similar computation shows that $\langle \hat{a} | a \rangle = 1$. Thus, every projection, regarded as an effect, has probability $1$ in some state.


\noindent Let $\mathcal{A}$ be an \textsc{eja}. A \textit{symmetry of} $\mathcal{A}$ is an order-automorphism preserving the unit $u_{\mathcal{A}}$. By \cite{AS} Theorem 2.80, any symmetry of $\mathcal{A}$ is a Jordan automorphism.  Another class of order automorphisms is given by the {\em quadratic representations} of certain elements of $\mathcal{A}$. The quadratic representation of $a \in \mathcal{A}$ is the mapping $U_a : \mathcal{A} \longrightarrow \mathcal{A}$ given by 
\begin{equation}
U_a = 2L_{a}^{2} - L_{a^2}\text{,}
\end{equation}
where $L_{a}:\mathcal{A}\longrightarrow\mathcal{A}::b\longmapsto a\jProd b$ is left-multiplication by $a\in\mathcal{A}$, \textit{i.e}.\ 
\begin{equation}
U_a(x) = 2a\jProd(a\jProd x) - a^2\jProd x\text{.}
\end{equation} 
These mappings have direct physical interpretations as {\em filters} in the sense of \cite{Wilce12}. We collect the following nontrivial as a proposition

\begin{proposition}\label{prop: quadratic representation} \textit{Let} $a \in \mathcal{A}$\textit{. Then} 
\begin{itemize} 
\item[(a)] $U_a$ is a positive mapping; 
\item[(b)] \textit{If} $a$ \textit{lies in the interior of} $\mathcal{A}_{+}$, \textit{then} $U_a$ \textit{is invertible,  with 
inverse given by} $U_{a^{-1}}$\textit{;}
\item[(c)] $e^{L_{a}} = U_{e^{a/2}}$
\end{itemize} 
\end{proposition}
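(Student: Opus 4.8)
The plan is to reduce all three statements to computations in an associative ${}^*$-algebra, exploiting the explicit form $U_a = 2L_a^2 - L_{a^2}$, where $L_a(b)=a\jProd b$. First I would record the identity, valid in any special Euclidean Jordan algebra, that under a Jordan representation $\pi:\mathcal{A}\longrightarrow\mathcal{M}_n(\mathbb{C})_{\text{sa}}$ as in \cref{repDef} one has $\pi\big(U_a(x)\big)=\pi(a)\,\pi(x)\,\pi(a)$; this follows by expanding $2a\jProd(a\jProd x)-a^2\jProd x$ using $a\jProd b=(ab+ba)/2$ and cancelling, exactly as in the computation that $\mathcal{M}_n(\mathbb{C})_{\text{sa}}$ is a Jordan algebra. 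For part (a), I then note that if $\pi(x)\geq 0$, writing $\pi(x)=z^*z$ gives $\pi(a)\pi(x)\pi(a)=(z\pi(a))^*(z\pi(a))\geq 0$ because $\pi(a)$ is self-adjoint; since $\pi$ is a Jordan monomorphism carrying squares to squares, it is an order embedding, so $U_a(x)\in\mathcal{A}_+$. To cover the exceptional algebra $\mathcal{M}_3(\mathbb{O})_{\text{sa}}$ uniformly, I would invoke the Shirshov--Cohn theorem: the unital Jordan subalgebra generated by the two elements $a$ and $x$ is special, and positivity in a unital Jordan subalgebra of an \textsc{eja} coincides with positivity in $\mathcal{A}$ (in both, being positive means being a square), so the associative computation applies there as well.

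For part (b), the element $a$ in the interior of $\mathcal{A}_+$ is invertible, with Jordan inverse $a^{-1}$ lying in the associative subalgebra generated by $a$. In the associative picture $\pi(a)$ is positive-definite, hence invertible, and $U_{a^{-1}}$ corresponds to $y\longmapsto\pi(a)^{-1}y\,\pi(a)^{-1}$, a two-sided inverse of $x\longmapsto\pi(a)\,x\,\pi(a)$; thus $U_{a^{-1}}=U_a^{-1}$. The same conclusion holds uniformly (including the exceptional case) from the fundamental formula $U_{U_a b}=U_aU_bU_a$ together with the power-associative identity $U_a a^{-1}=a$: these give $U_aU_{a^{-1}}U_a=U_a$, and since $U_a$ is invertible for invertible $a$ (justified either by the associative reduction above or by the generic-norm determinant identity $\det U_a=N(a)^{2\dim\mathcal{A}/\operatorname{rank}\mathcal{A}}\neq 0$), cancelling $U_a$ yields $U_{a^{-1}}=U_a^{-1}$.

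For part (c), I would show that $\phi(t):=U_{e^{ta/2}}$ is a one-parameter group of operators on $\mathcal{A}$ whose generator is $L_a$, and then appeal to uniqueness of solutions of the linear initial-value problem $\phi'(t)=L_a\phi(t)$, $\phi(0)=\mathbf{I}$, to conclude $\phi(t)=e^{tL_a}$ and set $t=1$. The group law $\phi(s)\phi(t)=\phi(s+t)$ comes from the multiplicativity $U_bU_c=U_{b\jProd c}$ for elements $b=e^{sa/2}$, $c=e^{ta/2}$ of the commutative associative subalgebra generated by $a$ (verified by the associative identity $(bc)x(cb)=(bc)x(bc)$ when $b,c$ commute), using $e^{sa/2}\jProd e^{ta/2}=e^{(s+t)a/2}$ in the Jordan functional calculus. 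The generator is computed by differentiating $U_{c(t)}=2L_{c(t)}^2-L_{c(t)^2}$ at $t=0$, where $c(0)=u$ and $c'(0)=a/2$: since $L_u=\mathbf{I}$ and $(c^2)'(0)=2\,u\jProd c'(0)=a$, one gets $\phi'(0)=2\big(L_{a/2}+L_{a/2}\big)-L_a=L_a$.

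The hard part will be the exceptional algebra, where no associative representation exists: the crux is the Shirshov--Cohn reduction to the special Jordan subalgebra generated by the (at most two) elements appearing in each identity, together with the checks that positivity, inverses, and the operator identities $U_bU_c=U_{b\jProd c}$ are inherited correctly between an \textsc{eja} and its unital Jordan subalgebras. The remaining inputs --- the fundamental formula, the identity $U_a a^{-1}=a$, and the differentiation of the quadratic representation --- are routine once the functional calculus and power-associativity are in hand.
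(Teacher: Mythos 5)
Your proposal is correct, but it is worth saying up front that the paper does not actually prove \cref{prop: quadratic representation}: it simply cites Theorem 1.25 of Alfsen--Shultz for (a), Lemma 1.23 of Alfsen--Shultz (or Proposition II.3.1 of Faraut--Kor\'anyi) for (b), and Proposition II.3.4 of Faraut--Kor\'anyi for (c). What you have written is, in effect, a reconstruction of the standard proofs behind those citations, and the three ingredients you isolate are exactly the right ones: the Shirshov--Cohn reduction to the two-generated special subalgebra where $U_a(x)$ becomes $axa$ (for (a)), the fundamental formula $U_{U_a b}=U_aU_bU_a$ together with $U_a a^{-1}=a$ and invertibility of $U_a$ (for (b)), and the identification of $t\longmapsto U_{e^{ta/2}}$ as a one-parameter group with generator $L_a$, followed by uniqueness for the linear initial-value problem (for (c)); the derivative computation $\phi'(0)=2L_a-L_a=L_a$ is right. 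So your route buys a self-contained argument where the thesis only offers pointers, at the cost of having to import Shirshov--Cohn and the fundamental formula, neither of which the thesis develops.

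Two places deserve slightly more care than your sketch gives them. First, in (a) you justify that $\pi$ is an order embedding by saying it ``carries squares to squares''; that only shows $\pi$ is positive, not that $\pi^{-1}$ is. The correct justification (which also underlies your claim that positivity in a unital Jordan subalgebra agrees with positivity in $\mathcal{A}$) is spectral: writing $b=\sum_i x_i\lambda_i$ over a Jordan frame of the subalgebra generated by $b$ and $u$, the $\lambda_i$ are intrinsic to $b$, each $\pi(x_i)$ is a nonzero projection, so $\pi(b)\geq 0$ forces every $\lambda_i\geq 0$ and hence $b$ is a square already in the subalgebra. Second, each operator identity you use ($U_aU_{a^{-1}}=\mathbf{I}$, $U_bU_c=U_{b\jProd c}$ for $b,c$ in the subalgebra generated by $a$) is an identity in the variable $x$, so the Shirshov--Cohn subalgebra must be taken to contain $x$ as well as $a$; you note this for (a) and (b), and the same remark is needed for the group law in (c). Neither point is a gap in substance --- both are routine --- but they should be made explicit if this is to replace the citation.
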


\noindent For proof of (a), see  Theorem 1.25 of \cite{AS}; for (b), \cite{AS} Lemma 1.23 or \cite{FK}, Proposition II.3.1. Part (c) is Proposition II.3.4 in \cite{FK}. Combining (a) and (b), $U_a$ is an order automorphism for every $a$ in the interior of $A_+$. Regarding (c), note that $e^{L_a}$ is the ordinary operator exponential; in other words, $\phi_{t} := e^{tL_{a}} = U_{\frac{t}{2}a}$ is a one-parameter group of order-automorphisms in $G(\mathcal{A})$ with $\phi'(0) = L_{a}$.

\noindent Since $U_a(u_A) = 2a^2 - a^2 = a^2$, it follows that the group of order-automorphisms of $\mathcal{A}$ act transitively on the interior of $\mathcal{A}_+$. Abstractly, an ordered vector space having this property is said to be {\em homogeneous}. It follows that if $\phi$ is any order-automorphism with $\phi(u_A) = a^2 \in \mathcal{A}_+$, then $U_{a}^{-1} \circ \phi$ is a symmetry of $A$. Hence, every order-automorphism of $A$ decomposes as $\phi = U_{a} \circ g$ where $g$ is a symmetry. In fact, one can show that $a$ can be chosen to belong to the interior of $\mathcal{A}_+$, and that, with this choice, the decomposition is unique (\cite{FK}, III.5.1). 

\noindent We now return to the general case and consider composites of probabilistic models.

\noindent If $\mathcal{A}$ and $\mathcal{B}$ are probabilistic models of two physical systems, one may want to construct a model of  the pair of systems considered together. This model is denoted $\mathcal{A}\mathcal{B}$. In the framework of general probabilistic theories, there is no canonical choice for a model of a composite system. However, one can at least say what one {\em means} by a composite of two probabilistic models: at a minimum, one should be able to perform measurements on the two systems separately, and compare the results. More formally, there should be a mapping $\pi:\mathcal{A}
\times \mathcal{B} \longrightarrow \mathcal{AB}$ taking each pair of effects $(a,b) \in \mathcal{A}
\times \mathcal{B}$ to an effect $\pi(a,b) \in \mathcal{AB}$. One would like this to be {\em nonsignaling}, meaning that the probability of obtaining a particular effect on one of the component systems in a state $\omega \in \Omega(\mathcal{AB})$ should be independent of what observable is measured on the other system. One can show that this is equivalent to $\pi$'s being bilinear, with $\pi(u_{\mathcal{A}}, u_{\mathcal{B}}) = u_{\mathcal{AB}}$ \cite{BarnumWilceFoils}. Finally, one would like to be able to prepare $\mathcal{A}$ and $\mathcal{B}$ separately in arbitrary states. Therefore, in summary one has the following definition.

\begin{definition}\label{def: composites}\textit{A} composite \textit{of two probabilistic models} $\mathcal{A}$ \textit{and} $\mathcal{B}$ \textit{is a pair} $(\mathcal{AB},\pi)$ \textit{where} $\mathcal{AB}$ \textit{is a probabilistic model and} $\pi:\mathcal{A}\times\mathcal{B}\longrightarrow\mathcal{AB}$ \textit{is a bilinear mapping such that} 
\begin{enumerate}[(a)]
\item $\pi(a,b)\in(\mathcal{AB})_{+}$ \textit{for all} $a\in\mathcal{A}_{+}$ \textit{and} $b\in\mathcal{B}_{+}$
\item $\pi(u_{\mathcal{A}},u_{\mathcal{B}})=u_{\mathcal{AB}}$ 
\item \textit{For all states} $\alpha\in\Omega(\mathcal{A})$ and $\beta\in\Omega(\mathcal{B})$ \textit{there exists a state} $\gamma\in\Omega(\mathcal{AB})$ \textit{such that} $\gamma(\pi(a,b))=\alpha(a)\beta(b)$\textit{.}
\end{enumerate} 
\end{definition}

\noindent Since $\pi$ is bilinear, it extends uniquely to a linear mapping $\mathcal{A}\otimes\mathcal{B}\longrightarrow\mathcal{AB}$, which 
we continue to denote by $\pi$ (so that $\pi(a \otimes b) = \pi(a,b)$ for $a \in \mathcal{A}, b \in \mathcal{B}$), and where $\mathcal{A}\otimes\mathcal{B}$ is the usual tensor product of vector spaces over $\mathbb{R}$. 

\begin{lemma}\label{injLem} $\pi$ \textit{is injective.} 
\begin{proof} If $\pi(T) = 0$ for some $T \in \mathcal{A} \otimes \mathcal{B}$, then for all states $\alpha, \beta$ on $\mathcal{A}$ and $\mathcal{B}$ we have a state $\gamma$ on $\mathcal{AB}$ with $(\alpha \otimes \beta)(T)=\gamma(\pi(T))=0$. But then $T=0$.
\end{proof}
\end{lemma}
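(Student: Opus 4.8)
The plan is to show that the extension $\pi : \mathcal{A} \otimes \mathcal{B} \longrightarrow \mathcal{AB}$ has trivial kernel by exploiting condition (c) of \cref{def: composites}, which furnishes, for each pair of states, a product state on the composite. The key observation is that the collection of all product functionals $\alpha \otimes \beta$, with $\alpha$ ranging over states on $\mathcal{A}$ and $\beta$ over states on $\mathcal{B}$, is rich enough to separate points of $\mathcal{A} \otimes \mathcal{B}$.

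\emph{First} I would take an arbitrary $T \in \mathcal{A} \otimes \mathcal{B}$ with $\pi(T) = 0$. For any states $\alpha \in \Omega(\mathcal{A})$ and $\beta \in \Omega(\mathcal{B})$, condition (c) gives a state $\gamma \in \Omega(\mathcal{AB})$ satisfying $\gamma(\pi(a,b)) = \alpha(a)\beta(b)$ for all $a \in \mathcal{A}$, $b \in \mathcal{B}$. By bilinearity and the fact that both $\gamma \circ \pi$ and $\alpha \otimes \beta$ are linear on $\mathcal{A} \otimes \mathcal{B}$ and agree on pure tensors, we conclude $\gamma(\pi(T)) = (\alpha \otimes \beta)(T)$. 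Since $\pi(T) = 0$, this yields $(\alpha \otimes \beta)(T) = 0$ for every choice of states $\alpha, \beta$.

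\emph{Next} I would promote ``vanishing on product states'' to ``vanishing identically.'' The states $\alpha$ on an order unit space $(\mathcal{A}, u_{\mathcal{A}})$ span the full dual $\mathcal{A}^{\star}$: indeed, since $u_{\mathcal{A}}$ is an interior point of $\mathcal{A}_+$, every positive functional is a nonnegative multiple of a state, and every functional is a difference of positive functionals, so $\operatorname{span}_{\mathbb{R}} \Omega(\mathcal{A}) = \mathcal{A}^{\star}$, and likewise for $\mathcal{B}$. Consequently the product functionals $\alpha \otimes \beta$ span $\mathcal{A}^{\star} \otimes \mathcal{B}^{\star} \cong (\mathcal{A} \otimes \mathcal{B})^{\star}$. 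A vector annihilated by every element of the full dual space must be zero, so $(\alpha \otimes \beta)(T) = 0$ for all $\alpha, \beta$ forces $T = 0$. This establishes injectivity.

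I do not expect a serious obstacle here; the argument is essentially the one already sketched in the excerpt. The only point requiring a little care is the passage from separation by product states to separation by the whole dual, which rests on the spanning fact $\operatorname{span}_{\mathbb{R}}\Omega(\mathcal{A}) = \mathcal{A}^{\star}$ together with the natural identification $(\mathcal{A}\otimes\mathcal{B})^{\star} \cong \mathcal{A}^{\star}\otimes\mathcal{B}^{\star}$ in finite dimensions. Both are standard, so the short proof given in the excerpt --- deriving $(\alpha\otimes\beta)(T)=0$ and then concluding $T=0$ --- is exactly the right level of detail.
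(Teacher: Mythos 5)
Your argument is correct and is exactly the paper's proof: derive $(\alpha\otimes\beta)(T)=\gamma(\pi(T))=0$ from condition (c) of \cref{def: composites}, then conclude $T=0$. You have merely filled in the final step (states span the dual in finite dimensions, so product functionals separate points of $\mathcal{A}\otimes\mathcal{B}$), which the paper leaves implicit.
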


\noindent\cref{injLem} warrants our treating $\mathcal{A}\otimes\mathcal{B}$ as a subspace of $\mathcal{AB}$ and writing $a \otimes b$ for $\pi(a,b)$. Note that if $\omega$ is a state on $\mathcal{AB}$, then $\pi^{\ast}(\omega) := \omega \circ \pi$ defines a joint probability assignment on effects of $\mathcal{A}$ and $\mathcal{B}$: $\pi^{\ast}(\omega)(a,b) = \omega(a \otimes b)$. This gives us marginal states $\omega_{\mathcal{A}}= \omega(u_{\mathcal{A}}\otimes\;\cdot\;)$ and $\omega_{\mathcal{B}} = \omega(\;\cdot\;\otimes u_B)$. Where these are nonzero, we can also define conditional states $\omega_{1|b}(a) \equiv \omega(a \otimes b)/\omega_{\mathcal{B}}(b)$ and $\omega_{2|a}(b) \equiv \omega(a \otimes b)/\omega_{\mathcal{A}}(a)$.

\noindent If the mapping $\pi:\mathcal{A}\otimes\mathcal{B}\longrightarrow\mathcal{AB}$ is surjective, then we can identify $\mathcal{AB}$ with $\mathcal{A}\otimes\mathcal{B}$. The joint probability assigment $\pi^{\ast}(\omega)$ then completely determines $\omega$, so that states on $\mathcal{AB}$ {\em are} such joint probability assignments. In this case, we say that $\mathcal{AB}$ is {\em locally tomographic}, since states of $\mathcal{AB}$ can be determined by the joint statistics of local measurements. In finite dimensions, both classical and quantum composites have this feature, while composites of real quantum systems are not locally tomographic \cite{Araki80}\cite{Hardy-Wootters}.

\noindent When dealing with dynamical probabilistic models, one needs to supplement conditions (a), (b) and (c) with the further condition that it should be possible for $\mathcal{A}$ and $\mathcal{B}$ to evolve independently within the composite $\mathcal{A}\mathcal{B}$; hence the following definition (recall that $G(\mathcal{A})$ denotes an arbitrary preferred subgroup of order automorphisms of $\mathcal{A}$.)

\begin{definition}\label{def: dynamical composites} \textit{A} composite \textit{of dynamical probabilistic models} $\mathcal{A}$ \textit{and} $\mathcal{B}$ \textit{is a composite} $\mathcal{AB}$ \textit{in the sense of \cref{def: composites}, plus a mapping} $\otimes:G(\mathcal{A})\times G(\mathcal{B})\longrightarrow G(\mathcal{AB})$ \textit{selecting for each} $g\in G(\mathcal{A})$ and $h\in G(\mathcal{B})$ an element $g\otimes h\in G(\mathcal{AB})$ \textit{such that }
\begin{enumerate}[(a)]
\item $(g \otimes h)(a \otimes b) = ga \otimes hb$ \textit{for all} $g \in G(\mathcal{A})$, $h \in G(\mathcal{B})$, $a \in \mathcal{A}$ and $b \in \mathcal{B}$ 
\item \textit{for} $g_1, g_2 \in G(\mathcal{A})$ \textit{and} $h_1, h_2 \in G(\mathcal{B})$, 
$(g_1 \circ g_2) \otimes (h_1 \circ h_2) = (g_1 \otimes h_1) \circ (g_2 \otimes h_2)$\textit{.}
\end{enumerate}
\end{definition}

\noindent With regard to \cref{def: dynamical composites}, one notes that since $\mathcal{AB}$ may be larger than the algebraic tensor product $\mathcal{A}\otimes\mathcal{B}$, the order automorphism $(g \otimes h)$ need not be uniquely determined by condition (a).

\noindent A generalized probabilistic theory is more than a collection of models. At a minimum, one also needs the means to describe interactions between physical systems. A natural way of accomplishing this is to treat physical theories as {\em categories}, in which objects are associated with physical systems, and morphisms with processes. In the setting of this thesis, then, it's natural to regard a probabilistic {\em theory} as a category $\Cat$ in which objects are probabilistic models, that is order unit spaces, and in which morphisms give rise to positive linear mappings between these. The reason for this phrasing --- morphisms {\em giving rise to}, as opposed to simply {\em being}, positive linear mappings --- is to allow for the possibility that two abstract processes that behave the same way on effects of their source system, may differ in other ways---even in detectable ways, such as their effect on composite systems of which the source and target systems are
components. Notice that invertible morphisms $A \longrightarrow A$ that preserve the order unit then induce processes in the sense given above, so that every model $\mathcal{A}\in\Cat$ carries a distinguished group of reversible processes: models in $\Cat$, in other words, are automatically {\em dynamical} models.  

\noindent In order to allow for the formation of composite systems, it is natural to ask that $\Cat$ be a symmetric monoidal category, in particular to accommodate nonsignaling. Of course, we want to take $\mathrm{I} = \R$. Moreover, for objects $\mathcal{A},\mathcal{B}\in \Cat$, we want $\mathcal{A}\otimes\mathcal{B}$ to be a composite in the sense of Definitions \ref{def: composites} and \ref{def: dynamical composites} above. In fact, though, every part of those definitions simply {\em follows from} the monoidality of $\Cat$, except for part (b) of \ref{def: composites}; we must add ``by hand" the requirement that $u_{\mathcal{A}} \otimes u_{\mathcal{B}} = u_{\mathcal{A}\otimes\mathcal{B}}$.  The category will also pick out, for each object $\mathcal{A}$, a preferred group $G(\mathcal{A})$, namely, the group of invertible morphisms in $\text{hom}(\mathcal{A},\mathcal{A})$. The monoidal structure then picks out, for $g\in G(\mathcal{A})$ and $h\in G(\mathcal{B})$, a preferred $g \otimes h \in G(\mathcal{AB})$.  

\section{Composites of Euclidean Jordan Algebras} 
\label{sec: composites Jordan}

\noindent In this section, we first formalize the notion of composites within general probabilistic theories based on \textsc{eja}s. We then move to derive the structure of Jordan algebraic composites.

\noindent Let $\mathcal{A}$ be a Euclidean Jordan algebra. Henceforth, we shall take $G(\mathcal{A})$ to be the connected component of the identity in the group $\Aut(\mathcal{A})$ of order-automorphisms of $\mathcal{A}$. Note that if $\phi\in G(\mathcal{A})$, then $\phi^{\dagger}\in G(\mathcal{A})$ as well. Henceforth, we shall treat Jordan models as dynamical models, using $G(\mathcal{A})$ as the dynamical group. This is a reasonable choice. First, elements of $G(\mathcal{A})$ are exactly those automorphisms of $\mathcal{A}_+$ that figure in the system's possible dynamics, as elements of one-parameter groups of automorphisms. This suggests that the physical dynamical group of a dynamical model based on $\mathcal{\mathcal{A}}$ should at least be a subgroup of $G(\mathcal{A})$, so that the latter is the least constrained choice. This suggests, then, the following definition.

\begin{definition}\label{def: new Jordan composite} \textit{A} composite \textit{of} \textsc{eja}s $\mathcal{A}$ \textit{and} $\mathcal{B}$ \textit{is an} \textsc{eja} $\mathcal{AB}$\textit{ plus bilinear} $\pi:\mathcal{A}\otimes\mathcal{B}\longrightarrow\mathcal{AB}$ \textit{such that} 
\begin{enumerate} [(a)]
\item $\pi$ \textit{renders} $(\mathcal{AB},G(\mathcal{AB}))$ \textit{a dynamical composite of} $(\mathcal{A},G(\mathcal{A}))$ \textit{and} $(\mathcal{B},G(\mathcal{B}))$ \textit{as in Definition \ref{def: dynamical composites}.}
\item $(\phi \otimes \psi)^{\dagger} = \phi^{\dagger} \otimes \psi^{\dagger}$ \textit{for all} $\phi\in G(\mathcal{A})$ \textit{and} $\psi\in G(\mathcal{B})$\textit{.}
\item $\mathcal{AB}$ \textit{is generated as a Jordan algebra by the images of pure tensors.} 
\end{enumerate}
\end{definition}

\noindent In light of \cref{injLem} the $\pi:\mathcal{A}\otimes\mathcal{B}\longrightarrow\mathcal{AB}$ is injective; hence, we can, and shall, identify $\mathcal{A}\otimes\mathcal{B}$ with its image under $\pi$ in $\mathcal{AB}$, writing $\pi(a,b)$ as $a \otimes b$. Condition (b) is rather strong, but natural if we keep in mind that our ultimate aim is to construct dagger compact closed categories of \textsc{eja}s. Regarding condition (c), suppose $\pi : \mathcal{A}\times\mathcal{B}\longrightarrow \mathcal{AB}$ satisfied only (a) and (b): letting $\mathcal{A} \odot \mathcal{B}$ denote the Jordan subalgebra of $\mathcal{AB}$ generated by $\pi(\mathcal{A} \otimes \mathcal{B})$, one can show that the corestriction of $\pi$ to $\mathcal{A} \odot \mathcal{B}$ (\textit{i.e}.\ $\pi_{\text{co}}:\mathcal{A}\otimes\mathcal{B}\longrightarrow\mathcal{A}\odot\mathcal{B}$) also satisfies (a) and (b); thus, any composite in the weaker sense defined by (a) and (b) contains a composite satisfying all three conditions. 

\noindent The universal representation (see \cref{uRep}) allows one to define a natural tensor product of special EJAs, which was first studied by H. Hanche-Olsen \cite{HO}.

\begin{definition}(Hanche-Olsen \cite{HO}) \textit{The} universal tensor product \textit{of two special} \textsc{eja}s $\mathcal{A}$ \textit{and} $\mathcal{B}$ \textit{is denoted} $\mathcal{A}\hotimes\mathcal{B}$ \textit{and defined as the Jordan subalgebra of the C$^{*}$\!-algebraic tensor product} $\Cu(\mathcal{A})\otimes \Cu(\mathcal{B})$ \textit{generated by} $\psi_{\mathcal{A}}(\mathcal{A})\otimes_{\mathbb{R}}\psi_{\mathcal{B}}(\mathcal{B})\equiv\text{span}_{\mathbb{R}}\{\psi_{\mathcal{A}}(a)\otimes \psi_{\mathcal{B}}(b)\;\boldsymbol{|}\;a\in\mathcal{A},b\in\mathcal{B}\}$\textit{, i.e.} $\mathcal{A}\tilde{\otimes}\mathcal{B}\equiv\mathfrak{j}(\psi_{\mathcal{A}}(\mathcal{A})\otimes_{\mathbb{R}}\psi_{\mathcal{B}}(\mathcal{B}))$\textit{.}
\end{definition} 

\noindent Important facts about the universal tensor product are collected from \cite{HO}
in the following proposition.

\begin{proposition} \textit{Let} $\mathcal{A}$ \textit{and} $\mathcal{B}$ \textit{and} $\mathcal{C}$ \textit{denote} \textsc{eja}s\textit{.} 
\label{prop: universal tensor product properties}
\begin{enumerate}[(i)]
\item \textit{If} $\phi :\mathcal{A}\longrightarrow\mathcal{C}$, $\psi :\mathcal{B}\longrightarrow\mathcal{C}$ \textit{are unital Jordan homomorphisms with operator commuting ranges\footnote{\textit{i.e}.\ $\forall a\in\mathcal{A}$ and $\forall b\in\mathcal{B}$ and $\forall c\in\mathcal{C}$ one has that $\phi(a)\jProd(\psi(b)\jProd c)=\psi(b)\jProd(\phi(a)\jProd c)$.}, then $\exists$ a unique Jordan homomorphism} $\mathcal{A} \hotimes \mathcal{B} \longrightarrow\mathcal{C}::\psi_{\mathcal{A}}(a)\otimes \psi_{\mathcal{B}}(b)\longmapsto\phi(a)\jProd\psi(b)$ $\forall a \in \mathcal{A}$, $b \in \mathcal{B}$\textit{.}
\item\label{whichTheorem} $\Cu(\mathcal{A} \hotimes \mathcal{B}) = \Cu(\mathcal{A}) \otimes \Cu(\mathcal{B})$ \textit{and} $\Phi_{\mathcal{A} \hotimes \mathcal{B}} = \Phi_{\mathcal{A}} \otimes \Phi_{\mathcal{B}}$\textit{, where} $\Phi_{\mathcal{A} \hotimes \mathcal{B}}$ \textit{and} $\Phi_{\mathcal{A}}$ \textit{and} $\Phi_{\mathcal{B}}$ \textit{are the relevant canonical involutions.} 
\item $\mathcal{A}\hotimes\mathcal{B}$\textit{ is universally reversible unless one of the factors has a one-dimensional summand and the other has a representation onto a spin factor} $\mathcal{V}_n$ \textit{with} $n=4$ \textit{or} $n \ge 6$\textit{.}
\item \textit{If} $\mathcal{A}$ \textit{is universally reversible, then} $\mathcal{A}\hotimes \mathcal{M}_{n}(\C)_{\text{sa}} = (\Cu(\mathcal{A}) \otimes \mathcal{M}_{n}(\C))_{\text{sa}}$\textit{.}\label{8234}
\item $u_{\mathcal{A} \otilde \mathcal{B}} = u_{\mathcal{A}} \otimes u_{\mathcal{B}} = u_{C^{*}_{u}(\mathcal{A} \otilde \mathcal{B})}$\textit{.}
\end{enumerate}
\end{proposition}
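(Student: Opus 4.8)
The plan is to dispatch parts (i)--(iv) by appeal to the literature and to verify the unit identity (v) by a direct computation inside the enveloping $C^{*}$-algebra. Parts (i)--(iv) are precisely the structural results on Hanche-Olsen's universal tensor product established in \cite{HO}: (i) is the universal (``operator commuting'') mapping property that characterizes $\hotimes$; (ii) is the identification of the universal $C^{*}$-algebra and canonical involution of $\mathcal{A}\hotimes\mathcal{B}$ as the $C^{*}$-tensor products $\Cu(\mathcal{A})\otimes\Cu(\mathcal{B})$ and $\Phi_{\mathcal{A}}\otimes\Phi_{\mathcal{B}}$; (iii) is the reversibility dichotomy; and (iv) is the consequence, for universally reversible $\mathcal{A}$, that $\mathcal{A}\hotimes\mathcal{M}_{n}(\C)_{\text{sa}}$ exhausts the self-adjoint part $(\Cu(\mathcal{A})\otimes\mathcal{M}_{n}(\C))_{\text{sa}}$. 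For these I would simply cite the corresponding statements of \cite{HO}, taking care to restate them in our finite-dimensional normalization; no new argument is required.

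The substance of the proof lies in (v), which I would establish in three short steps. First, since $\psi_{\mathcal{A}}$ and $\psi_{\mathcal{B}}$ are \emph{unital} Jordan monomorphisms (\cref{existUThm}), they carry the Jordan units to the multiplicative identities: $\psi_{\mathcal{A}}(u_{\mathcal{A}})=1_{\Cu(\mathcal{A})}$ and $\psi_{\mathcal{B}}(u_{\mathcal{B}})=1_{\Cu(\mathcal{B})}$. Under the identification of $\mathcal{A}\otimes_{\mathbb{R}}\mathcal{B}$ with its image inside $\Cu(\mathcal{A})\otimes\Cu(\mathcal{B})$ via $\psi_{\mathcal{A}}\otimes\psi_{\mathcal{B}}$, the element $u_{\mathcal{A}}\otimes u_{\mathcal{B}}$ is therefore $1_{\Cu(\mathcal{A})}\otimes 1_{\Cu(\mathcal{B})}$, the multiplicative identity of the $C^{*}$-tensor product.

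Second, this identity is a pure tensor lying in the generating set $\psi_{\mathcal{A}}(\mathcal{A})\otimes_{\mathbb{R}}\psi_{\mathcal{B}}(\mathcal{B})$, so it belongs to the Jordan subalgebra $\mathcal{A}\hotimes\mathcal{B}$. Because the special Jordan product is $x\jProd y=(xy+yx)/2$, the $C^{*}$-multiplicative identity acts as a two-sided Jordan unit on the entire ambient algebra, and a fortiori on any unital Jordan subalgebra containing it; thus $u_{\mathcal{A}\hotimes\mathcal{B}}=1_{\Cu(\mathcal{A})}\otimes 1_{\Cu(\mathcal{B})}=u_{\mathcal{A}}\otimes u_{\mathcal{B}}$. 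Third, invoking part (ii) we have $\Cu(\mathcal{A}\hotimes\mathcal{B})=\Cu(\mathcal{A})\otimes\Cu(\mathcal{B})$, whose multiplicative identity $u_{\Cu(\mathcal{A}\hotimes\mathcal{B})}$ is again $1_{\Cu(\mathcal{A})}\otimes 1_{\Cu(\mathcal{B})}$, closing the chain of equalities.

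The only point demanding any care --- and the nearest thing to an obstacle --- is keeping the several identifications straight: $u_{\mathcal{A}}\otimes u_{\mathcal{B}}$ must be read through $\psi_{\mathcal{A}}\otimes\psi_{\mathcal{B}}$, and one must confirm that the Jordan unit of a unital special Jordan algebra coincides with the multiplicative unit of its enveloping $C^{*}$-algebra whenever the latter lies in the subalgebra. Both are immediate from the definitions, so (v) carries essentially no analytic content once (ii) is in hand; the genuine mathematical weight of the proposition resides entirely in the cited results (i)--(iv).
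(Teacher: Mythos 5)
Your proposal is correct and takes essentially the same route as the paper, which offers no proof at all for this proposition and simply collects parts (i)--(v) as facts from Hanche-Olsen's work \cite{HO}. Your added direct verification of (v) --- identifying $u_{\mathcal{A}}\otimes u_{\mathcal{B}}$ with $1_{\Cu(\mathcal{A})}\otimes 1_{\Cu(\mathcal{B})}$ via the unital embeddings and noting that this multiplicative identity serves as the Jordan unit of any subalgebra containing it --- is sound and consistent with the conventions the paper sets up in \cref{existUThm}.
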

\noindent Our next goal is to show that any composite of simple nontrivial \textsc{eja}s is a special universally reversible \textsc{eja} --- \textit{i.e}.\ \cref{thm: composites special}. We shall first require a brief foray into the theory of projections in \textsc{eja}s. Let $\mathcal{A}$ be an \textsc{eja} as in \cref{ejaDef}. The \textit{centre} of $\mathcal{A}$ is the set of elements of $\mathcal{A}$ that operator commute with all of the elements of $\mathcal{A}$. Denoting the centre of $\mathcal{A}$ by $\mathrm{C}(\mathcal{A})$, we then have by definition $\mathrm{C}(\mathcal{A})=\{a\in\mathcal{A}\;\boldsymbol{|}\;\forall b\in\mathcal{A}\; L_{a}\circ L_{b}=L_{b}\circ L_{a}\}$. Let projection $p\in\mathcal{A}$, \textit{i.e}.\ $p\jProd p=p$. The \textit{central cover} of $p$ is denoted by $c(p)$ and defined to be the smallest projection in the centre of $\mathcal{A}$ larger than or equal to $p$, \textit{i.e}.\ $p\leq c(p)$ with respect to the partial order of the projection lattice of $\mathcal{A}$. For further details regarding the projection lattice, we refer the reader to Chapter 5 \cite{HancheOlsen1984}.

\noindent We now recall two results from \cite{AS}: (1) for any projection $p$ in any \textsc{eja} $\mathcal{A}$, the central cover of $p$ exists (Lemma 2.37 \cite{AS}); (2) a subspace $\mathcal{M}$ of an \textsc{eja} $\mathcal{A}$ is a Jordan ideal of $\mathcal{A}$ if and only if $\mathcal{M}=c\jProd\mathcal{A}\equiv\{c\jProd a\;\boldsymbol{|}\;a\in\mathcal{A}\}$ for a (necessarily unique) projection $c\in\mathrm{C}(\mathcal{A})$ (Proposition 2.39 \cite{AS}). Furthermore $c$ is the unit in $\mathcal{M}$, \textit{i.e}.\ $c\jProd m=m$ for all $m\in\mathcal{M}$. Thus, with $\mathcal{A}$ an \textsc{eja} and $p\in\mathcal{A}$ a projection, $c(p)\geq p$ exists, and $\mathcal{M}\equiv c(p)\jProd\mathcal{A}$ is a Jordan ideal of $\mathcal{A}$ with unit $c(p)$. Now suppose that $\mathcal{M}$ is a simple Jordan ideal, which is to say that the only Jordan ideals of $\mathcal{M}$ are the empty set and $\mathcal{M}$ itself. Therefore in light of (2), the only central projections in $\mathrm{C}(\mathcal{M})$ are $0$ and $c(p)$. Therefore, in light of (1), the nonzero projections in $\mathcal{M}$ share the same central cover, namely $c(p)$. Actually, one can say something more general.
\noindent\begin{proposition}\label{for84}\textit{Let $\mathcal{A}$ be an \textsc{eja}. Let projections $p,q\in\mathcal{A}$ such that $c(p)=c(q)$. Let $\mathcal{M}_{\alpha}=c_{\alpha}\jProd\mathcal{A}$ be a simple Jordan ideal of $\mathcal{A}$, where $c_{\alpha}$ is a central projection. Let $\tilde{p}=c\jProd p$ and let $\tilde{q}=c\jProd q$. Then $\tilde{p}\neq 0$ if and only if $\tilde{q}\neq 0$.}
\begin{proof}
From Lemma 4.3.5 in \cite{HancheOlsen1984} we have that $c(c_{\alpha}\jProd p)=c_{\alpha}\jProd c(p)$. So $c(\tilde{p})=c_{\alpha}\jProd c(p)$. Likewise, $c(\tilde{q})=c_{\alpha}\jProd c(q)$. By assumption $c(p)=c(q)$. Therefore $c(\tilde{p})=c(\tilde{q})$. Suppose $\tilde{p}\neq 0$. Then $c(\tilde{p})\neq 0$, so $\tilde{q}\neq 0$ because $c(\tilde{p})=c(\tilde{q})$.
\end{proof}
\end{proposition}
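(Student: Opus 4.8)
The plan is to reduce everything to the behaviour of central covers under multiplication by central projections, and then to exploit the fact that $c(p) = c(q)$ by assumption. The key structural input is Lemma 4.3.5 from Hanche-Olsen and St{\o}rmer \cite{HancheOlsen1984}, which computes the central cover of a product $c_\alpha \jProd p$ (with $c_\alpha$ central) as $c(c_\alpha \jProd p) = c_\alpha \jProd c(p)$. This is the only nontrivial fact I would invoke; the remaining argument is a short symmetry-driven chain of equalities and implications.

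First I would set $\tilde{p} = c_\alpha \jProd p$ and $\tilde{q} = c_\alpha \jProd q$ (matching the notation $\tilde{p} = c \jProd p$ and $\tilde{q} = c \jProd q$ in the statement, where $c$ denotes the central projection $c_\alpha$ defining the simple ideal $\mathcal{M}_\alpha$). Applying the cited lemma to each gives $c(\tilde{p}) = c_\alpha \jProd c(p)$ and $c(\tilde{q}) = c_\alpha \jProd c(q)$. The hypothesis $c(p) = c(q)$ then immediately forces $c(\tilde{p}) = c(\tilde{q})$, since the two expressions differ only by replacing $c(p)$ with the identical element $c(q)$.

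The remaining step is to transfer this equality of central covers to the desired biconditional about nonvanishing. Here I would use the elementary property that a projection $x$ is zero if and only if its central cover $c(x)$ is zero --- this is immediate because $x \leq c(x)$, so $c(x) = 0$ gives $x = 0$, while $x = 0$ gives $c(x) = 0$ as $0$ is itself central and dominates $0$. Assuming $\tilde{p} \neq 0$, we get $c(\tilde{p}) \neq 0$, hence $c(\tilde{q}) = c(\tilde{p}) \neq 0$, hence $\tilde{q} \neq 0$. By the symmetry of the roles of $p$ and $q$ (the hypothesis $c(p) = c(q)$ is symmetric), the converse implication follows identically, establishing the biconditional.

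I do not anticipate a genuine obstacle here, since the proposition is essentially a bookkeeping consequence of the cited central-cover formula; the only point requiring mild care is making explicit the trivial lemma that a projection vanishes exactly when its central cover vanishes, which the excerpt uses implicitly. The proof as sketched is already quite close to the brief argument given in the statement, so my version would simply make the intermediate equality $c(\tilde{p}) = c(\tilde{q})$ and the zero-iff-zero-cover principle fully explicit before drawing the final conclusion.
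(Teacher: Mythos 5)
Your proof is correct and follows essentially the same route as the paper's: apply the central-cover formula $c(c_\alpha \jProd p) = c_\alpha \jProd c(p)$ from Lemma 4.3.5 of Hanche-Olsen and St{\o}rmer, deduce $c(\tilde{p}) = c(\tilde{q})$ from $c(p) = c(q)$, and conclude via the fact that a projection vanishes exactly when its central cover does. Your version merely makes explicit the zero-iff-zero-cover principle and the symmetry of the converse, both of which the paper leaves implicit.
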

\noindent We now come to address the exchange of projections by symmetries. Let $\mathcal{A}$ be an \textsc{eja}. A \textit{symmetry in} $\mathcal{A}$ (note, this is different than a symemtry \textit{of} $\mathcal{A}$) is $s\in\mathcal{A}$ such that $s^{2}=u$. Let $a\in\mathcal{A}$. Let projections $p,q\in\mathcal{A}$. One says that $p$ and $q$ are \textit{exchanged by a symmetry} when there exists a symmetry $s\in\mathcal{A}$ such that $U_{s}(p)=q$. If there exists a sequence of symmetries of $s_{1},s_{2},\dots,s_{n}\in\mathcal{A}$ such that $(U_{s_{1}}\circ U_{s_{2}}\circ\cdots\circ U_{s_{n}})(p)=q$, then $p$ and $q$ are said to be \textit{equivalent}. Equivalent projections have the same central cover (\cref{AS39}.) Furthermore, if $p_{1},\dots,p_{n}\in\mathcal{A}$ and $q_{1},\dots,q_{m}\in\mathcal{B}$ are projections, then the projections $p_{i}\otimes q_{j}$ in $\mathcal{AB}$ are equivalent (\cref{lemma: exchange}) The fact that $p_{i}\otimes q_{j}$ are projections is \cref{lemma: product projections}. Finally, one notes from Theorem 5.1.5 and Theorem 5.3.5 in \cite{HancheOlsen1984} that any \textsc{eja} $\mathcal{A}$ decomposes into the direct sum of simple ideals. We are now ready for \cref{thm: composites special}.
\begin{theorem}\label{thm: composites special}\textit{ Let 
$\mathcal{AB}$ be a composite of simple, {\red nontrivial} \textsc{eja}s $\mathcal{A}$ and $\mathcal{B}$.   
Then $\mathcal{AB}$ is a special, {\red universally reversible} EJA.}
\begin{proof} 
We shall show that every irreducible direct summand of $\mathcal{AB}$ has rank $\geq 4$, from which the result follows. The result will indeed follow, because of the Jordan-von Neumann-Wigner Classification Theorem (\cref{jvwClass}) and the fact that the Jordan matrix algebras $\mathcal{M}_{n>3}(\mathbb{D})_{\text{sa}}$ are universally reversible (\cref{urThm}.)\\[0.2cm]
Decompose $\mathcal{AB}$ as a direct sum of simple ideals, say $AB = \bigoplus_{\alpha} \mathcal{M}_{\alpha}$. Again, we can do this in light of Theorem 5.1.5 and Theorem 5.3.5 in \cite{HancheOlsen1984}.\\[0.2cm] 
Let $\pi_{\alpha}:\mathcal{AB}\longrightarrow \mathcal{M}_{\alpha}$ be the corresponding projections, and let
$u_{\alpha}\equiv\pi_{\alpha}(u_{\mathcal{AB}})$ be the unit in $\mathcal{M}_{\alpha}$. Suppose now that $\{p_1,...,p_n\}$ is a Jordan frame in $\mathcal{A}$ and $\{q_1,...,q_m\}$ is a Jordan frame in $\mathcal{B}$. This means, in particular, that $p_{1}+\cdots+p_{n}=u_{\mathcal{A}}$ and $q_{1}+\cdots+q_{m}=u_{\mathcal{B}}$.\\[0.2cm]
By Lemma 3.19 in \cite{AS}, there are symmetries in $\mathcal{A}$ exchanging the $p_i$, and there are symmetries in $\mathcal{B}$ exchanging the $q_j$. By \cref{lemma: exchange} therefore, the projections $p_i \otimes q_j$ are pairwise equivalent. By Lemma 3.9 in \cite{AS}, therefore, these projections have the same central cover $c$. This means that for each $\alpha$, the projection $\pi_{\alpha}:
\mathcal{AB}\longrightarrow \mathcal{M}_{\alpha}$ takes none of the projections $p_i \otimes
q_j$ to the zero projection in $\mathcal{M}_{\alpha}$, or it takes all of them
to zero, that is, in light of \cref{for84}.\\[0.2cm]
If $\mathcal{M}_{\alpha}$ is of the first type, $\{\pi_{\alpha}(p_i \otimes q_j)|i = 1,...,n, j = 1,...,m\}$ consists of $nm$ distinct orthogonal
projections in $M_{\alpha}$, summing to the unit $\pi_{\alpha}(u_{\mathcal{AB}})=u_{\alpha}$. Indeed, here, one notes that $\pi_{\alpha}$ is linear, $\pi_{\alpha}(x)=c\jProd x$ for some central projection $c$; hence $\sum_{i,j}\pi_{\alpha}(p_{i}\otimes q_{j})=\pi_{\alpha}(u_{\mathcal{A}}\otimes u_{\mathcal{B}})=\pi_{\alpha}(u_{\mathcal{AB}})$, where the last equality follows by our definition of composites, specifically $u_{\mathcal{AB}}=u_{\mathcal{A}}\otimes u_{\mathcal{B}}$. One also notes that $\pi_{\alpha}((p_{i}\otimes q_{j})\jProd(p_{k}\otimes q_{l}))=\delta_{ik}\delta_{jl}\pi_{\alpha}(p_{i}\otimes q_{j})$.\\[0.2cm] 
Hence, the rank of $\mathcal{M}_{\alpha}$ is at least $nm$. The rank of $\mathcal{M}_{\alpha}$ could exceed $nm$ because $\pi_{\alpha}(p_{i}\otimes q_{j})$ may not be primitive projections. In particular, since $\mathcal{A}$ and $\mathcal{B}$ are nontrivial, $n,m \geq 2$, whence, $\mathcal{M}_{\alpha}$ has rank at least 4, and hence, is special.\\[0.2cm]
\noindent Now let $p, q$ be arbitrary projections in $\mathcal{A}$ and $\mathcal{B}$, respectively:
extending each to a Jordan frame, as above, we see that for all
$\alpha$, if $\pi_{\alpha}(p \otimes q) \not = 0$, then $M_{\alpha}$
is special. Hence, $p \otimes q$ belongs to the direct sum of the
special summands of $\mathcal{AB}$. Since projections $p \otimes q$ generate $\mathcal{AB}$, the latter is 
special. Indeed, $\mathcal{AB}$ is \textit{by definition} generated as the Jordan hull of pure tensors, which coincides with the Jordan hull of pure tensors of minimal projections, since the minimal projections of the components are spanning and the closure under the Jordan hull is closure under linear combinations and Jordan products.\\[0.2cm]
\noindent The argument also shows that each simple direct summand $\mathcal{M}_{\alpha}$, in addition to being special, is not a spin factor (because the rank of any spin factor is 2), and hence, is universally reversible. It follows from this, plus the fact that direct sums of universally reversible EJAs are again universally reversible, that $\mathcal{AB}$ must be universally reversible.
\end{proof}
\end{theorem}

\begin{corollary}\label{cor: no composite with exceptional} 
\textit{If $\mathcal{A}$ is simple and $\mathcal{B}$ is exceptional, there exists 
no composite satisfying \cref{def: new Jordan composite}.}
\begin{proof}
The mapping $\mathcal{B}\longrightarrow \mathcal{AB}$ given by $b \longmapsto u_{\mathcal{A}} \otimes b$ is a Jordan monomorphism, by Proposition \ref{prop: main equation}. But there exists no Jordan monomorphism from $B$ into a special Euclidean Jordan algebra.
\end{proof}
\end{corollary}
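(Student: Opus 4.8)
The plan is to argue by contradiction, leveraging \cref{thm: composites special} to force any putative composite to be special, and then deriving the impossible conclusion that the exceptional factor embeds as a Jordan subalgebra of a special algebra. Concretely, suppose a composite $\mathcal{AB}$ satisfying \cref{def: new Jordan composite} exists. The exceptional algebra $\mathcal{M}_{3}(\mathbb{O})_{\text{sa}}$ is itself simple and nontrivial (it has rank $3$ and dimension $27$), so both factors $\mathcal{A}$ and $\mathcal{B}$ are simple and nontrivial. Hence \cref{thm: composites special} applies directly and tells us $\mathcal{AB}$ is a special (indeed universally reversible) \textsc{eja}.

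The second ingredient is the embedding $\iota:\mathcal{B}\longrightarrow\mathcal{AB}$ given by $b\longmapsto u_{\mathcal{A}}\otimes b$. First I would record that $\iota$ is injective: since $\pi$ is injective by \cref{injLem}, the pure tensor $u_{\mathcal{A}}\otimes b$ vanishes in $\mathcal{AB}$ only if $u_{\mathcal{A}}\otimes b=0$ already as an element of $\mathcal{A}\otimes\mathcal{B}$, forcing $b=0$. Next I would invoke \cref{prop: main equation} to see that $\iota$ respects the Jordan product, \emph{i.e.} $u_{\mathcal{A}}\otimes(b_{1}\jProd b_{2})=(u_{\mathcal{A}}\otimes b_{1})\jProd(u_{\mathcal{A}}\otimes b_{2})$, and that it carries the unit of $\mathcal{B}$ to the unit of $\mathcal{AB}$ (this last point uses $u_{\mathcal{AB}}=u_{\mathcal{A}}\otimes u_{\mathcal{B}}$ from the definition of a composite). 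Together these show $\iota$ is a Jordan monomorphism.

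Composing $\iota$ with a representation $\mathcal{AB}\longrightarrow\mathcal{M}_{n}(\mathbb{C})_{\text{sa}}$ (which exists precisely because $\mathcal{AB}$ is special, by \cref{repDef}) yields a Jordan monomorphism $\mathcal{B}\longrightarrow\mathcal{M}_{n}(\mathbb{C})_{\text{sa}}$, so $\mathcal{B}$ would itself be special. This contradicts the exceptionality of $\mathcal{B}=\mathcal{M}_{3}(\mathbb{O})_{\text{sa}}$, which by definition admits no representation in any $\mathcal{M}_{n}(\mathbb{C})_{\text{sa}}$. The main obstacle is the multiplicativity of $\iota$: verifying $u_{\mathcal{A}}\otimes(b_{1}\jProd b_{2})=(u_{\mathcal{A}}\otimes b_{1})\jProd(u_{\mathcal{A}}\otimes b_{2})$ is exactly the content I expect \cref{prop: main equation} to supply, and everything else is a short chain of already-established facts. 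I would therefore present the corollary as an immediate consequence once that multiplicativity and \cref{thm: composites special} are in hand.
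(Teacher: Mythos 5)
Your argument is correct and follows essentially the same route as the paper's own proof: the map $b\longmapsto u_{\mathcal{A}}\otimes b$ is a Jordan monomorphism by \cref{prop: main equation}, and since \cref{thm: composites special} forces $\mathcal{AB}$ to be special, the exceptional factor $\mathcal{B}$ would embed in a special \textsc{eja}, which is impossible. Your version merely spells out the details (injectivity via \cref{injLem}, unitality via $u_{\mathcal{AB}}=u_{\mathcal{A}}\otimes u_{\mathcal{B}}$, and composition with a concrete representation into $\mathcal{M}_{n}(\mathbb{C})_{\text{sa}}$) that the paper leaves implicit.
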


\begin{theorem}\label{thm: simple composites ideals} \textit{Let $\mathcal{A}$ and $\mathcal{B}$ be simple, nontrivial \textsc{eja}s. Then $\mathcal{AB}$ is an ideal in $\mathcal{A}\hotimes\mathcal{B}$.}
\begin{proof} 
By \cref{cor: cor to main equation}, we have Jordan homomorphisms $\mathcal{A}, \mathcal{B} \longrightarrow \mathcal{AB}$ with operator-commuting ranges. Since $\mathcal{AB}$ is special, elements of $\mathcal{AB}$ operator-commute if and only if their images in $\Cu(\mathcal{AB})$ operator commute 
 (\cite{HO}, Lemma 5.1). Thus, we have Jordan homomorphisms $\mathcal{A}, \mathcal{B}\longrightarrow \Cu(\mathcal{AB})$ with operator-commuting ranges. The universal property of $\mathcal{A} \hotimes \mathcal{B}$ yields a Jordan homomorphism $\phi : \mathcal{A} \hotimes \mathcal{B} \longrightarrow \Cu(\mathcal{AB})$ taking 
(the image of) $a \otimes b$ in $\mathcal{A} \hotimes \mathcal{B}$ to (the image of) $a \otimes b$ in $\Cu(\mathcal{AB})$. Since both 
$\mathcal{A} \hotimes \mathcal{B}$ and $\mathcal{AB}$ are generated by pure tensors, $\phi$ takes $\mathcal{A}\hotimes \mathcal{B}$ onto $\mathcal{AB}$. Letting $K$ denote the kernel of $\phi$, an ideal of $\mathcal{A} \hotimes \mathcal{B}$, we have $\mathcal{A} \hotimes \mathcal{B} = K' \oplus K$, where $K'$ is the complementary ideal; 
the mapping $\phi$ factors through the projection $\mathcal{A} \hotimes \mathcal{B} \rightarrow K'$ to give an isomorphism 
$K' \simeq \mathcal{AB}$\ffootnote{{\red [AW: I've now checked, and this does work. But I must {\red write down the details!}]}}.
\end{proof}
\end{theorem}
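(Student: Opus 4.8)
The plan is to exploit the universal property of the universal tensor product $\mathcal{A}\hotimes\mathcal{B}$ (part (i) of \cref{prop: universal tensor product properties}) together with the fact that any composite $\mathcal{AB}$ of simple nontrivial \textsc{eja}s is special and universally reversible (\cref{thm: composites special}). First I would recall that by \cref{cor: cor to main equation} the canonical inclusions $a\longmapsto a\otimes u_{\mathcal{B}}$ and $b\longmapsto u_{\mathcal{A}}\otimes b$ furnish Jordan homomorphisms $\mathcal{A},\mathcal{B}\longrightarrow\mathcal{AB}$ with operator-commuting ranges. The key step is to \emph{lift} these maps through the universal representation: since $\mathcal{AB}$ is special, two of its elements operator-commute if and only if their images in $C^{*}_{u}(\mathcal{AB})$ operator-commute (Lemma 5.1 of \cite{HO}), so composing with the canonical embedding $\psi_{\mathcal{AB}}$ yields Jordan homomorphisms $\mathcal{A},\mathcal{B}\longrightarrow C^{*}_{u}(\mathcal{AB})_{\text{sa}}$ that still have operator-commuting ranges.

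Next I would invoke the universal property (i) of $\mathcal{A}\hotimes\mathcal{B}$ to obtain a \emph{unique} Jordan homomorphism $\phi:\mathcal{A}\hotimes\mathcal{B}\longrightarrow C^{*}_{u}(\mathcal{AB})_{\text{sa}}$ sending $\psi_{\mathcal{A}}(a)\otimes\psi_{\mathcal{B}}(b)$ to (the image under $\psi_{\mathcal{AB}}$ of) $a\otimes b$. Because both $\mathcal{A}\hotimes\mathcal{B}$ and $\mathcal{AB}$ are, by construction, generated as Jordan algebras by their pure tensors, the image of $\phi$ is precisely $\psi_{\mathcal{AB}}(\mathcal{AB})$, so (after identifying $\mathcal{AB}$ with $\psi_{\mathcal{AB}}(\mathcal{AB})$ via the injective canonical embedding) $\phi$ is a surjective Jordan homomorphism onto $\mathcal{AB}$. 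The final step is the structural observation that the kernel $K=\ker\phi$ is a Jordan ideal of the \textsc{eja} $\mathcal{A}\hotimes\mathcal{B}$, and that any ideal of an \textsc{eja} is complemented: writing $\mathcal{A}\hotimes\mathcal{B}=K'\oplus K$ with $K'=c\jProd(\mathcal{A}\hotimes\mathcal{B})$ for the central projection $c$ associated to the complementary ideal (using Proposition 2.39 of \cite{AS}), the restriction $\phi|_{K'}$ is a Jordan \emph{isomorphism} $K'\simeq\mathcal{AB}$. Since $K'$ is itself an ideal of $\mathcal{A}\hotimes\mathcal{B}$, this exhibits $\mathcal{AB}$ as (isomorphic to) an ideal, which is the claim.

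The main obstacle I anticipate is the step asserting $K'\simeq\mathcal{AB}$, \textit{i.e}.\ that $\phi$ restricted to the complementary ideal is injective, not merely surjective; this is exactly the point flagged in the excerpt's footnote (``I must write down the details''). The subtlety is that $\phi$ is only guaranteed to be a Jordan homomorphism, and surjectivity onto $\mathcal{AB}$ plus the direct-sum decomposition $\mathcal{A}\hotimes\mathcal{B}=K'\oplus K$ gives $\phi|_{K'}$ as a surjection $K'\twoheadrightarrow\mathcal{AB}$ with trivial kernel (since $K=\ker\phi$ is entirely the $K$-summand). To make this rigorous I would verify that $\ker(\phi|_{K'})=K'\cap K=\{0\}$, which follows because $K'$ and $K$ are complementary ideals meeting only in zero; hence $\phi|_{K'}$ is a bijective Jordan homomorphism and therefore a Jordan isomorphism. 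I would want to double-check that the operator-commuting condition genuinely transfers through $\psi_{\mathcal{AB}}$ (the one place where speciality of $\mathcal{AB}$, guaranteed by \cref{thm: composites special}, is essential) and that the identification of pure tensors on both sides is compatible, so that ``$\phi$ takes $\mathcal{A}\hotimes\mathcal{B}$ onto $\mathcal{AB}$'' is literally correct rather than merely onto a subalgebra.
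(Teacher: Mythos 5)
Your proposal follows the paper's own argument step for step: the same lift of the operator-commuting Jordan homomorphisms into $C^{*}_{u}(\mathcal{AB})$ via speciality, the same appeal to the universal property to obtain the surjection $\phi$, and the same decomposition $\mathcal{A}\hotimes\mathcal{B}=K'\oplus K$. You also correctly supply the injectivity of $\phi|_{K'}$ (via $K'\cap K=\{0\}$), which is precisely the detail the paper's own footnote leaves unwritten.
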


\noindent \cref{thm: simple composites ideals} sharply restricts the possibilities 
for composites of simple \textsc{eja}s. In particular, it follows that {\em 
if $\mathcal{A} \hotimes \mathcal{B}$ is itself simple}, then $\mathcal{AB}
\simeq \mathcal{A} \hotimes \mathcal{B}$. In other words, in this case the {\redd universal tensor product} 
is the only ``reasonable" tensor product (to the extent that we 
think the 
conditions of Definition \ref{def: new Jordan composite} 
constitute reasonableness, in this context). If $\mathcal{A} = \mathcal{B} = \mathcal{M}_{n}(\mathbb{C})_{\text{sa}}$, 
so that\footnote{This calculation is involved and relegated to \cref{2copies}.} $\mathcal{A} \hotimes \mathcal{B} = \mathcal{M}_{n^2}({\mathbb{C}})_{\text{sa}} \oplus \mathcal{M}_{n^2}({\mathbb{C}})_{\text{sa}}$, we have another candidate, \textit{i.e.} the ideal $\mathcal{M}_{n^2}({\mathbb{C}})_{\text{sa}}$ --- the usual composite in quantum theory. If $\mathcal{A} = \mathcal{B} = \mathcal{M}_{2}(\mathbb{H})_{\text{sa}}$ (that is, if $\mathcal{A}$ and $\mathcal{B}$ are two quabits), then we have\footnote{This calculation is involved and relegated to \cref{quabitApp}.} $\mathcal{A} \hotimes \mathcal{B} = \mathcal{M}_{16}(\R)_{\text{sa}} \oplus \mathcal{M}_{16}(\R)_{\text{sa}}  \oplus \mathcal{M}_{16}(\R)_{\text{sa}}  \oplus \mathcal{M}_{16}(\R)_{\text{sa}}$, giving us four possibilities for $\mathcal{AB}$. {\em These exhaust the possibilities for real, complex 
and quaternionic quantum composites!}

\noindent In view of Theorem \ref{thm: composites special} and Corollary
\ref{cor: no composite with exceptional}, we now restrict our
attention to special \textsc{eja}s.  A {\em JC-algebra} is variously defined as
a norm-closed Jordan subalgebra of bounded linear operators on a real or complex
Hilbert space, or as a Jordan algebra that is
Jordan-isomorphic to such an algebra. In finite dimensions, any JC-algebra is Euclidean, and any special Euclidean Jordan algebra is JC
(on the second definition).  Here, we consider \textsc{eja}s
that are embedded, not necessarily in $\mathcal{L}(\mathcal{H})$ for a specific
Hilbert space, but in some definite complex $\ast$-algebra.

\begin{definition}\label{ejcDef}{\em An {\em embedded Euclidean JC-algebra} is a pair $(\mathcal{A},\M_{\mathcal{A}})$ where $\M_{\mathcal{A}}$ is a {\red finite-dimensional} complex $\ast$-algebra and $\mathcal{A}$ is a unital Jordan subalgebra of $(\M_{\mathcal{A}})_{\sa}$.} \end{definition} 

\noindent For aesthetic reasons, we abbreviate the phrase ``embedded Euclidean JC'' by EJC (rather than EEJC), letting the initial E stand simultaneously for {\em embedded} and {\em Euclidean}. We now develop a canonical tensor product for EJC
algebras, {\redd and use this to construct several symmetric monoidal 
categories of EJC-algebras.} 
In one case, we obtain a category of reversible EJAs, with
a monoidal product extending that of ordinary complex matrix algebras;
{\redd in another, which we call $\InvQM$}, we 
restrict attention to EJAs that arise as fixed-point algebras of 
involutions on {\redd complex} $\ast$-algebras (a class that includes all 
universally reversible EJCs, but also includes the quaternionic bit, 
$\mathcal{M}_{2}(\mathbb{H})_{\text{sa}}$, {\redd as symplectically embedded in $\mathcal{M}_{4}(\C)$}. Here, the monoidal structure agrees with the Hanche-Olsen tensor product 
{\redd in the cases in which the factors are universally reversible.} In what follows, recall the definition of the Jordan algebraic closure operation $\mathfrak{j}$ given in \cref{jGen}

\begin{definition} \label{def: canonical tensor product} 
\textit{The} canonical tensor product\textit{ of $(\mathcal{A},\M_{\mathcal{A}})$ and
  $(\mathcal{B},\M_{\mathcal{B}})$ is $(\mathcal{A} \odot \mathcal{B}, \M_{\mathcal{A}} \otimes \M_{\mathcal{B}})$, where 
$\mathcal{A} \odot \mathcal{B}\equiv \mathfrak{j}(\mathcal{A} \otimes \mathcal{B}) \subseteq \M_{\mathcal{A}}
  \otimes \M_{\mathcal{B}}$, the Jordan subalgebra of $(\M_{\mathcal{A}} \otimes \M_{\mathcal{B}})_{\text{sa}}$ generated 
  by $\mathcal{A} \otimes \mathcal{B}$.}
\end{definition}

\noindent Note that this makes it a matter of \textit{definition} that $\M_{\mathcal{A} \odot \mathcal{B}} = \M_{\mathcal{A}} \otimes \M_{\mathcal{B}}$. 
In particular, if $\mathcal{A}$ and $\mathcal{B}$ are universally embedded, so that $\M_{\mathcal{A}} = \Cu(\mathcal{A})$ and $\M_{\mathcal{B}} = \Cu(\mathcal{B})$, then 
$\mathcal{A} \odot \mathcal{B} = \mathcal{A} \hotimes \mathcal{B}$, so that $\M_{\mathcal{A} \hotimes \mathcal{B}} = \Cu(\mathcal{A}) \otimes \Cu(\mathcal{B})$ by definition; 
but the fact that this last is $\Cu(\mathcal{A} \hotimes \mathcal{B})$ {\redd (whence, $\mathcal{A} \odot \mathcal{B}$ is universally reversible)} is a theorem (\textit{i.e}.\ \cref{prop: universal tensor product properties} (\ref{whichTheorem}).) Let us call an EJC $(\mathcal{A},\M_{\mathcal{A}})$ {\em reversible} iff $\mathcal{A}$ is reversible in $\M_{\mathcal{A}}$, \textit{i.e}.\ with respect to the inclusion map $\mathcal{A}\hookrightarrow(\mathbf{M}_{\mathcal{A}})_{\text{sa}}$. Note that if $\mathcal{A}$ is a 
simple EJC {\em standardly} 
embedded in $\M_{\mathcal{A}}$, then $\mathcal{A}$ is reversible iff $A =\mathcal{M}_{n}(\mathbb{D})_{\text{sa}}$ for some $\mathbb{D}\in\{\mathbb{R},\mathbb{C},\mathbb{H}\}$ and for some $n$.

\begin{theorem} \label{prop: canonical product composite}
\textit{If $(\mathcal{A},M_{\mathcal{A}})$ and $(\mathcal{B},M_{\mathcal{B}})$ are {\em \red reversible} EJC-algebras 
then their canonical tensor product $\mathcal{A} \odot \mathcal{B}$ is a 
{\redd composite in the sense of \cref{def: new Jordan composite}.}}
\begin{proof}
We must show that $\mathcal{A} \odot \mathcal{B}$ is a dynamical composite, in the 
sense of Definition \ref{def: dynamical composites}. In particular, it must 
be a composite of probabilistic models in the sense of Definition \ref{def: composites}. 
Conditions (a)-(c) of that definition are easily verified:   
The condition $u_{\mathcal{AB}} = u_{\mathcal{A}} \otimes u_{\mathcal{B}}$ follows from the unitality of 
the embeddings $\mathcal{A}\longmapsto \M_{\mathcal{A}}$, $\mathcal{B}\longmapsto \M_{\mathcal{B}}$ and $\mathcal{AB}\longmapsto \M_{\mathcal{AB}}$. 
For all $a, x \in (\M_\mathcal{A})_{\text{sa}}$ and $b, y \in (\M_{\mathcal{B}})_{\text{sa}}$, we have 
\[\langle a \otimes b | x \otimes y \rangle = \Tr(ax \otimes by) = \Tr(ax)\Tr(by) = \langle a |x \rangle \langle b | y \rangle\]
Thus,  for any states $\alpha = \langle a |$ and $\beta = \langle b |$, where $a \in \mathcal{A}_+$ and $b \in \mathcal{B}_+$, we have 
a state $\gamma = \langle a \otimes b |$ on $\mathcal{AB}$ with $\gamma(x \otimes y) = \alpha(x) \beta(y)$ for 
all $x \in \mathcal{A}_+$, $y \in \mathcal{B}_+$.   

\noindent That $\mathcal{A} \odot \mathcal{B}$ satisfies the additional conditions required
to be a \emph{dynamical} composite in the sense of Definition \ref{def: dynamical composites} is not trivial.  
 However, using a result of Upmeier on
 the extension of derivations on reversible JC-algebras
 \cite{Upmeier}, one can show that any $\phi \in G(\mathcal{A})$ extends to an
 element $\hat{\phi} \in G((\M_{\mathcal{A}})_{\text{sa}})$ that preserves $\mathcal{A}$, and that
 $\hat{\phi} \otimes 1_{M_{\mathcal{B}}}$ is an order-automorphism of $(\M_{\mathcal{A}}
 \otimes \M_{\mathcal{B}})_{\text{sa}}$ preserving $\mathcal{A} \odot \mathcal{B}$.  It follows that 
 $\hat{\phi} \otimes \hat{\psi} = (\hat{\phi} \otimes 1_{\M_{\mathcal{B}}}) \circ (1_{\M_{\mathcal{A}}} \otimes \hat{\psi})$ 
 preserves $\mathcal{A} \odot \mathcal{B}$ as well. It is not difficult to verify that 
 $(\hat{\phi} \otimes \hat{\psi})^{\dagger} = \hat{\phi}^{\dagger} \otimes \hat{\psi}^{\dagger}$. 
 The details are presented in Appendix \ref{appendix: extending}. 
Thus $\mathcal{A} \odot \mathcal{B}$ satisfies
 conditions (a) and (b) of Definition \ref{def: new Jordan composite}.  Condition 
(c) is immediate from the definition of $\mathcal{A} \odot \mathcal{B}$.
\end{proof}
\end{theorem}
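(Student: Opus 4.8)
The plan is to unpack \cref{def: new Jordan composite} into its constituent requirements and dispatch them in increasing order of difficulty, isolating the one genuinely hard point at the end. Condition (c) of \cref{def: new Jordan composite} --- that $\mathcal{A}\odot\mathcal{B}$ be generated as a Jordan algebra by pure tensors --- is immediate from the \emph{definition} $\mathcal{A}\odot\mathcal{B}=\mathfrak{j}(\mathcal{A}\otimes\mathcal{B})$, so nothing is needed there. The three conditions of \cref{def: composites} I expect to be routine: positivity of $\pi(a,b)=a\otimes b$ for $a\in\mathcal{A}_{+}$ and $b\in\mathcal{B}_{+}$ holds because a tensor product of positive elements of the $\ast$-algebras $\M_{\mathcal{A}},\M_{\mathcal{B}}$ is positive in $\M_{\mathcal{A}}\otimes\M_{\mathcal{B}}$; the unit condition $u_{\mathcal{A}}\otimes u_{\mathcal{B}}=u_{\mathcal{A}\odot\mathcal{B}}$ follows from unitality of the three embeddings; and for product states I would use the self-dualizing trace inner products, writing states internally as $\alpha=\langle a\,|$, $\beta=\langle b\,|$ with $a\in\mathcal{A}_{+}$, $b\in\mathcal{B}_{+}$, and checking that $\gamma=\langle a\otimes b\,|$ obeys $\gamma(x\otimes y)=\Tr(ax)\Tr(by)=\alpha(x)\beta(y)$ via the factorization $\Tr(ax\otimes by)=\Tr(ax)\Tr(by)$. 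Injectivity of $\pi$ is already supplied by \cref{injLem}, legitimizing the identifications.

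The substance is the dynamical part: for each $\phi\in G(\mathcal{A})$ and $\psi\in G(\mathcal{B})$ I must produce an order automorphism $\phi\otimes\psi\in G(\mathcal{A}\odot\mathcal{B})$ with $(\phi\otimes\psi)(a\otimes b)=\phi a\otimes\psi b$, the functoriality condition (b) of \cref{def: dynamical composites}, and the adjoint compatibility (b) of \cref{def: new Jordan composite}. The obstacle is that $\phi$ is an abstract order automorphism of the \emph{small} algebra $\mathcal{A}$, with no prima facie relation to the ambient $\ast$-algebra $\M_{\mathcal{A}}$ in which the tensoring must occur. My strategy is to exploit that $G(\mathcal{A})$ is the connected identity component, hence generated by one-parameter groups $t\mapsto e^{tD}$ with $D$ an order derivation of $\mathcal{A}$, and to lift at the level of derivations. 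This is exactly where \emph{reversibility} (\cref{revDef}) of $(\mathcal{A},\M_{\mathcal{A}})$ is indispensable: Upmeier's extension theorem for derivations of reversible JC-algebras \cite{Upmeier} ensures each such $D$ extends to a $\ast$-derivation $\hat{D}$ of $\M_{\mathcal{A}}$, so that $e^{\hat{D}}$ is a $\ast$-automorphism of $\M_{\mathcal{A}}$ restricting to an order automorphism of $(\M_{\mathcal{A}})_{\text{sa}}$ that extends $e^{tD}$ and preserves $\mathcal{A}$. Composing generators then gives, for every $\phi\in G(\mathcal{A})$, an extension $\hat{\phi}\in G((\M_{\mathcal{A}})_{\text{sa}})$ preserving $\mathcal{A}$, and likewise $\hat{\psi}$ for $\mathcal{B}$.

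With the extensions in hand I set $\phi\otimes\psi:=\bigl(\hat{\phi}\otimes 1_{\M_{\mathcal{B}}}\bigr)\circ\bigl(1_{\M_{\mathcal{A}}}\otimes\hat{\psi}\bigr)$ and restrict to $\mathcal{A}\odot\mathcal{B}$. Each factor is a $\ast$-automorphism of $\M_{\mathcal{A}}\otimes\M_{\mathcal{B}}$, hence an order (indeed Jordan) automorphism of $(\M_{\mathcal{A}}\otimes\M_{\mathcal{B}})_{\text{sa}}$; since $\hat{\phi}\otimes 1$ carries $\mathcal{A}\otimes\mathcal{B}$ into itself and respects the Jordan product, it preserves $\mathcal{A}\odot\mathcal{B}=\mathfrak{j}(\mathcal{A}\otimes\mathcal{B})$, and symmetrically for $1\otimes\hat{\psi}$. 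The action on pure tensors is then immediate, and condition (b) of \cref{def: dynamical composites} follows because $\hat{\cdot}$ is arranged to be a group homomorphism on each factor while tensoring with the identity is functorial and the two legs commute. For the adjoint condition I would use that the trace inner product on $\M_{\mathcal{A}}\otimes\M_{\mathcal{B}}$ factorizes, whence the $\dagger$-adjoint of $\hat{\phi}\otimes 1$ is $\hat{\phi}^{\dagger}\otimes 1$, yielding $(\phi\otimes\psi)^{\dagger}=\phi^{\dagger}\otimes\psi^{\dagger}$.

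The hard part, where I expect to spend real effort, is making the lift $\phi\mapsto\hat{\phi}$ simultaneously well-defined and compatible with composition --- that is, checking that the linear, Lie-homomorphic derivation extension $D\mapsto\hat{D}$ exponentiates to an honest homomorphism $G(\mathcal{A})\to G((\M_{\mathcal{A}})_{\text{sa}})$ preserving $\mathcal{A}$. Here I am aided by the explicit structure recorded earlier: every $\phi\in G(\mathcal{A})$ factors as $\phi=U_{a}\circ g$ with $a$ interior to $\mathcal{A}_{+}$ and $g$ a symmetry, and the quadratic-representation piece extends transparently, since $\hat{U}_{a}(x)=axa$ on $\M_{\mathcal{A}}$ restricts correctly (reversibility gives $axa\in\mathcal{A}$) and $\hat{U}_{a}\otimes 1=U_{a\otimes u_{\mathcal{B}}}$ manifestly preserves $\mathcal{A}\odot\mathcal{B}$ because $a\otimes u_{\mathcal{B}}\in\mathcal{A}\otimes\mathcal{B}$; the only delicate extension is thus the symmetry part, which is precisely what Upmeier's theorem handles. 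Non-uniqueness of $\hat{\phi}$ is harmless, since the remark following \cref{def: dynamical composites} already permits $\phi\otimes\psi$ to be non-canonical, so I need only exhibit one consistent homomorphic choice.
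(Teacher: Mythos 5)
Your proposal is correct and follows essentially the same route as the paper: conditions (a)--(c) of \cref{def: composites} via unitality of the embeddings and the factorization $\Tr(ax\otimes by)=\Tr(ax)\Tr(by)$ of the trace inner product, condition (c) of \cref{def: new Jordan composite} directly from $\mathcal{A}\odot\mathcal{B}=\mathfrak{j}(\mathcal{A}\otimes\mathcal{B})$, and the dynamical conditions by extending order automorphisms to $(\M_{\mathcal{A}})_{\text{sa}}$ at the level of order derivations via Upmeier's theorem (this is exactly what the paper relegates to \cref{appendix: extending}, where the derivation is split as $L_a$ plus a skew part, matching your $U_a\circ g$ observation). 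The only cosmetic difference is that you frame the lift as a global homomorphism $G(\mathcal{A})\to G((\M_{\mathcal{A}})_{\text{sa}})$, whereas the paper only needs each extension to exist compatibly with composition, which amounts to the same thing.
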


\noindent Combining \cref{prop: canonical product composite} with Theorem \ref{thm: composites special}, we see that if $\mathcal{A}$ and $\mathcal{B}$ are simple reversible EJCs for which $\mathcal{A} \hotimes \mathcal{B}$ is also simple, then the {\redd canonical} and universal tensor products of $\mathcal{A}$ and $\mathcal{B}$ coincide. 
This covers all cases except those involving two factors of the form $\mathcal{M}_{n}(\mathbb{C})_{\text{sa}}$ and $\mathcal{M}_{k}(\mathbb{C})_{\text{sa}}$, and those involving 
$\mathcal{M}_{2}(\mathbb{H})_{\text{sa}}$ as a factor. In fact, many of the latter are covered by the following.


\begin{corollary}\label{cor: canonical composites with involutions}
\textit{Let $(\mathcal{A},\M_{\mathcal{A}})$ and $(\mathcal{B},\M_{\mathcal{B}})$ be reversible EJCs with $\mathcal{A}$ generating
$\M_{\mathcal{A}}$ and $\mathcal{B}$ generating $\M_{\mathcal{B}}$ as $\ast$-algebras.  Suppose $\M_{\mathcal{A}}$ and
$\M_{\mathcal{B}}$ carry involutions $\Phi$ and $\Psi$, respectively, with $\Phi$
fixing points of $\mathcal{A}$ and $\Psi$ fixing points of $\mathcal{B}$ (i.e.
  $\mathcal{A} \subseteq \M_{\mathcal{A}}^\Phi$ and $\mathcal{B} \subseteq \M_{\mathcal{B}}^\Psi$). Then $\mathcal{A} \odot
\mathcal{B} = (\M_{\mathcal{A}} \otimes \M_{\mathcal{B}})^{\Phi \otimes \Psi}_{\text{sa}}$, the set of
self-adjoint fixed points of $\Phi \otimes \Psi$.}
\begin{proof} 
By \cref{prop: canonical product composite}, $\mathcal{A} \odot \mathcal{B}$ is a dynamical
composite of $\mathcal{A}$ and $\mathcal{B}$; hence, by  \cref{thm: composites
  special}, $\mathcal{A} \odot \mathcal{B}$ is universally reversible. Since $\Phi \otimes
\Psi$ fixes points of $\mathcal{A} \otimes \mathcal{B}$, it also fixes points of the
Jordan algebra this generates in $\M_{\mathcal{A}} \otimes \M_\mathcal{B}$, \textit{i.e}.\ of $\mathcal{A} \odot
\mathcal{B}$. But then, by Proposition \cref{hoThm}, $\mathcal{A} \odot \mathcal{B}$
is exactly the set of fixed points of $\Phi \otimes \Psi$.
\end{proof}
\end{corollary}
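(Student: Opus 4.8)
The plan is to verify the two hypotheses needed to invoke Hanche-Olsen's \cref{hoThm} for the universally reversible EJA $\mathcal{A}\odot\mathcal{B}$ sitting inside the $\ast$-algebra $\mathbf{M}_{\mathcal{A}}\otimes\mathbf{M}_{\mathcal{B}}$, with the candidate $\ast$-antiautomorphism being $\Phi\otimes\Psi$. First I would establish that $\mathcal{A}\odot\mathcal{B}$ is universally reversible. This is immediate: by \cref{prop: canonical product composite}, since $(\mathcal{A},\mathbf{M}_{\mathcal{A}})$ and $(\mathcal{B},\mathbf{M}_{\mathcal{B}})$ are reversible EJCs, their canonical tensor product $\mathcal{A}\odot\mathcal{B}$ is a composite in the sense of \cref{def: new Jordan composite}; and by \cref{thm: composites special}, any composite of (the simple summands of) nontrivial EJAs is universally reversible. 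This gives the hypothesis of \cref{hoThm} that the Jordan algebra in question is universally reversible.

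Next I would check the remaining hypotheses of \cref{hoThm}: that $\mathcal{A}\odot\mathcal{B}$, regarded via inclusion in $(\mathbf{M}_{\mathcal{A}}\otimes\mathbf{M}_{\mathcal{B}})_{\text{sa}}$, generates $\mathbf{M}_{\mathcal{A}}\otimes\mathbf{M}_{\mathcal{B}}$ as a $\ast$-algebra, and that $\Phi\otimes\Psi$ is a $\ast$-antiautomorphism fixing the image of $\mathcal{A}\odot\mathcal{B}$. For the generation property, I would argue that since $\mathcal{A}$ generates $\mathbf{M}_{\mathcal{A}}$ and $\mathcal{B}$ generates $\mathbf{M}_{\mathcal{B}}$ as $\ast$-algebras (a standing hypothesis of the corollary), the products $\mathcal{A}\otimes\mathcal{B}\subseteq\mathbf{M}_{\mathcal{A}}\otimes\mathbf{M}_{\mathcal{B}}$ generate $\mathbf{M}_{\mathcal{A}}\otimes\mathbf{M}_{\mathcal{B}}$; and since $\mathcal{A}\odot\mathcal{B}\supseteq\mathcal{A}\otimes\mathcal{B}$ by definition, a fortiori $\mathcal{A}\odot\mathcal{B}$ generates the tensor-product $\ast$-algebra. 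For the fixed-point property, the key computation is that $\Phi\otimes\Psi$ is an involutive $\ast$-antiautomorphism of $\mathbf{M}_{\mathcal{A}}\otimes\mathbf{M}_{\mathcal{B}}$ (the tensor product of two involutive $\ast$-antiautomorphisms is again one, since reversing the order in each factor reverses the order in the product), and that on pure tensors $(\Phi\otimes\Psi)(a\otimes b)=\Phi(a)\otimes\Psi(b)=a\otimes b$ for $a\in\mathcal{A}$, $b\in\mathcal{B}$, using $\mathcal{A}\subseteq\mathbf{M}_{\mathcal{A}}^{\Phi}$ and $\mathcal{B}\subseteq\mathbf{M}_{\mathcal{B}}^{\Psi}$.

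The one genuine step requiring care is promoting ``$\Phi\otimes\Psi$ fixes the pure tensors $\mathcal{A}\otimes\mathcal{B}$'' to ``$\Phi\otimes\Psi$ fixes all of $\mathcal{A}\odot\mathcal{B}$.'' Since $\Phi\otimes\Psi$ is a $\ast$-antiautomorphism, its fixed-point set is closed under $\mathbb{R}$-linear combinations and under the Jordan product $x\jProd y=(xy+yx)/2$ (because $(\Phi\otimes\Psi)(xy+yx)=(\Phi\otimes\Psi)(y)(\Phi\otimes\Psi)(x)+(\Phi\otimes\Psi)(x)(\Phi\otimes\Psi)(y)$, which equals $yx+xy$ when $x,y$ are fixed). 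Hence the set of fixed self-adjoint points is a Jordan subalgebra containing $\mathcal{A}\otimes\mathcal{B}$, and therefore contains the Jordan algebra $\mathcal{A}\odot\mathcal{B}=\mathfrak{j}(\mathcal{A}\otimes\mathcal{B})$ that this set generates. With all hypotheses of \cref{hoThm} verified, that theorem yields that $\mathcal{A}\odot\mathcal{B}$ equals exactly the set of self-adjoint fixed points $(\mathbf{M}_{\mathcal{A}}\otimes\mathbf{M}_{\mathcal{B}})^{\Phi\otimes\Psi}_{\text{sa}}$, which is the desired conclusion. The main obstacle, such as it is, lies not in any hard estimate but in correctly matching the abstract hypotheses of \cref{hoThm} (in particular the universal reversibility, supplied by \cref{thm: composites special}) to the concrete embedded setting here; the identification $(\Phi\otimes\Psi)$ as the required involution $\varphi$ is then essentially bookkeeping.
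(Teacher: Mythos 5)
Your proof is correct and follows essentially the same route as the paper's: universal reversibility via \cref{prop: canonical product composite} and \cref{thm: composites special}, then the observation that $\Phi\otimes\Psi$ fixes $\mathcal{A}\odot\mathcal{B}$, and finally \cref{hoThm}. Your explicit verification of the generation hypothesis and of the closure of the fixed-point set under the Jordan product fills in steps the paper leaves implicit, but the argument is the same.
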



\tempout{\hbcomment{ I think the more serious worry about the claim that $A
  \odot B$ is a composite may concern conditioning.  As mentioned
  above, I think we need to either show, from Definition 
\ref{def: composite probabilistic} or 
\ref{def: new Jordan composite}, or stipulate in a 
modified definition, that conditional
  states and effects are in the correct state space.  If we must
  stipulate it in the definition, then we need to show it in the above
proof.  This means that for \emph{any} states $y \in A \odot B$ and states
$a \in A$ respectively, we need to show 
that the functional $b \mapsto \tr (y (a \otimes b))$ from $B_+$ to $\R$ is
in $B_+^*$. {\blue [AW: I've added a note after the def. of composites explaining why (for 
models with full-dual state spaces) we always do have conditional states. Regarding the functional above: 
$a \otimes b$ is positive in $\mathcal{AB}$ for all fixed positive $a$ and positive $b$; since $y$ is positive 
in $\mathcal{AB}$ and $T(x,y) := \Tr(xy)$ is positive when both factors are positive (i.e., is a positive inner product), 
we do have positivity here.]} }}

\noindent It follows from \cref{prop: canonical product composite}, together with \cref{thm: simple composites ideals} that the canonical and universal tensor products coincide for simple, reversible EJCs whose universal tensor products are simple. This covers all cases except for those in which one factor is $\mathcal{M}_{2}(\Q)_{\text{sa}}$ (the quaternionic bit), and those in which both factors have the form $\mathcal{M}_{n}(\C)_{\text{sa}}$ for some $n$. In the following section we prove the lemmas that directly yield the following theorem.

\begin{theorem}\label{mgRes} \textit{The canonical tensor products of the Jordan matrix algebras with respect to their standard embeddings are as follows:} 
\noindent \begin{center}
  \begin{tabular}{c | c  c  c  c }
  $\odot$ & $\mathbb{R}$ & $\mathcal{M}_{n}(\mathbb{R})_{\text{sa}}$ & $\mathcal{M}_{n}(\mathbb{C})_{\text{sa}}$  & $\mathcal{M}_{n}(\mathbb{H})_{\text{sa}}$ \\
  \hline
  $\mathbb{R}$ & $\mathbb{R}$ & $\mathcal{M}_{n}(\mathbb{R})_{\text{sa}}$ & $\mathcal{M}_{n}(\mathbb{C})_{\text{sa}}$  & $\mathcal{M}_{n}(\mathbb{H})_{\text{sa}}$ \\
  $\mathcal{M}_{m}(\mathbb{R})_{\text{sa}}$ &  & $\mathcal{M}_{nm}(\mathbb{R})_{\text{sa}}$ & $\mathcal{M}_{nm}(\mathbb{C})_{\text{sa}}$ & $\mathcal{M}_{nm}(\mathbb{H})_{\text{sa}}$ \\
  $\mathcal{M}_{m}(\mathbb{C})_{\text{sa}}$ & & & $\mathcal{M}_{nm}(\mathbb{C})_{\text{sa}}$ & $\mathcal{M}_{2nm}(\mathbb{C})_{\text{sa}}$ \\
  $\mathcal{M}_{m}(\mathbb{H})_{\text{sa}}$ & &  & & $\mathcal{M}_{4nm}(\mathbb{R})_{\text{sa}}$
  \end{tabular}
\end{center}
\end{theorem}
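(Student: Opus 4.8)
The plan is to verify each entry of the table by computing the canonical tensor product $\mathcal{A} \odot \mathcal{B} = \mathfrak{j}(\mathcal{A} \otimes \mathcal{B}) \subseteq (\M_{\mathcal{A}} \otimes \M_{\mathcal{B}})_{\text{sa}}$ directly from the standard embeddings recorded in Eqs.~\eqref{unitEmbed}--\eqref{quatEmbed}. The organizing principle is \cref{cor: canonical composites with involutions}: for each pair of standardly embedded factors, I identify the relevant involutions $\Phi$ on $\M_{\mathcal{A}}$ and $\Psi$ on $\M_{\mathcal{B}}$ fixing points of $\mathcal{A}$ and $\mathcal{B}$ respectively, and then compute the fixed-point algebra $(\M_{\mathcal{A}} \otimes \M_{\mathcal{B}})^{\Phi \otimes \Psi}_{\text{sa}}$. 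The crucial observation is that the standard embeddings all arise as fixed points of natural involutions: $\mathcal{M}_{n}(\mathbb{R})_{\text{sa}}$ sits inside $\mathcal{M}_{n}(\mathbb{C})$ as the fixed points of the transpose $\Phi_{\mathbb{R}}$; $\mathcal{M}_{n}(\mathbb{C})_{\text{sa}}$ sits inside $\mathcal{M}_{n}(\mathbb{C}) \oplus \mathcal{M}_{n}(\mathbb{C})$ as fixed points of $\Phi_{\mathbb{C}}(a \oplus b) = b^{\text{T}} \oplus a^{\text{T}}$ (by \cref{withGM}); and $\mathcal{M}_{n}(\mathbb{H})_{\text{sa}}$ sits inside $\mathcal{M}_{2n}(\mathbb{C})$ as fixed points of the symplectic involution $\Phi$ of Eq.~\eqref{phiQuat}.

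The routine half of the table---the row and column headed by $\mathbb{R}$, and all entries not involving a complex or quaternionic factor paired with another such---follows immediately. For instance, $\mathcal{M}_{m}(\mathbb{R})_{\text{sa}} \odot \mathcal{M}_{n}(\mathbb{R})_{\text{sa}}$ lives in $\mathcal{M}_{m}(\mathbb{C}) \otimes \mathcal{M}_{n}(\mathbb{C}) = \mathcal{M}_{mn}(\mathbb{C})$, and the relevant involution $\Phi_{\mathbb{R}} \otimes \Phi_{\mathbb{R}}$ is just the transpose on $\mathcal{M}_{mn}(\mathbb{C})$, whose self-adjoint fixed points are $\mathcal{M}_{mn}(\mathbb{R})_{\text{sa}}$. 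The entries $\mathcal{M}_{m}(\mathbb{R})_{\text{sa}} \odot \mathcal{M}_{n}(\mathbb{C})_{\text{sa}}$ and $\mathcal{M}_{m}(\mathbb{R})_{\text{sa}} \odot \mathcal{M}_{n}(\mathbb{H})_{\text{sa}}$ proceed analogously, using $\mathcal{M}_{m}(\mathbb{C}) \otimes (\mathcal{M}_{n}(\mathbb{C}) \oplus \mathcal{M}_{n}(\mathbb{C})) \cong \mathcal{M}_{mn}(\mathbb{C}) \oplus \mathcal{M}_{mn}(\mathbb{C})$ and $\mathcal{M}_{m}(\mathbb{C}) \otimes \mathcal{M}_{2n}(\mathbb{C}) = \mathcal{M}_{2mn}(\mathbb{C})$ respectively, and in each case verifying that the product involution has fixed-point algebra of the claimed type. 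For the complex-complex entry $\mathcal{M}_{m}(\mathbb{C})_{\text{sa}} \odot \mathcal{M}_{n}(\mathbb{C})_{\text{sa}}$, the ambient algebra is $(\mathcal{M}_{m}(\mathbb{C}) \oplus \mathcal{M}_{m}(\mathbb{C})) \otimes (\mathcal{M}_{n}(\mathbb{C}) \oplus \mathcal{M}_{n}(\mathbb{C}))$, a direct sum of four copies of $\mathcal{M}_{mn}(\mathbb{C})$; the involution $\Phi_{\mathbb{C}} \otimes \Phi_{\mathbb{C}}$ swaps these summands in pairs, so the self-adjoint fixed points form a single copy of $\mathcal{M}_{mn}(\mathbb{C})_{\text{sa}}$ (recovering the standard quantum composite). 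Here I would take care to track precisely how the four summands are permuted, since this determines that exactly one, rather than two, copies of $\mathcal{M}_{mn}(\mathbb{C})_{\text{sa}}$ survives.

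The genuinely delicate entries are those involving $\mathbb{H}$ paired with $\mathbb{C}$ or $\mathbb{H}$, because the symplectic embedding doubles dimensions and \cref{cor: canonical composites with involutions} must be combined with care about which real matrix algebra emerges. For $\mathcal{M}_{m}(\mathbb{C})_{\text{sa}} \odot \mathcal{M}_{n}(\mathbb{H})_{\text{sa}}$, the ambient algebra is $(\mathcal{M}_{m}(\mathbb{C}) \oplus \mathcal{M}_{m}(\mathbb{C})) \otimes \mathcal{M}_{2n}(\mathbb{C})$, two copies of $\mathcal{M}_{2mn}(\mathbb{C})$, with the combined involution $\Phi_{\mathbb{C}} \otimes \Phi$ exchanging and transposing them; the fixed-point computation should yield a single $\mathcal{M}_{2mn}(\mathbb{C})_{\text{sa}}$, consistent with the extra classical bit noted in the abstract. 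The hardest entry will be $\mathcal{M}_{m}(\mathbb{H})_{\text{sa}} \odot \mathcal{M}_{n}(\mathbb{H})_{\text{sa}} = \mathcal{M}_{4mn}(\mathbb{R})_{\text{sa}}$, where the ambient algebra is $\mathcal{M}_{2m}(\mathbb{C}) \otimes \mathcal{M}_{2n}(\mathbb{C}) = \mathcal{M}_{4mn}(\mathbb{C})$ and the involution $\Phi \otimes \Phi$ is the tensor product of two symplectic involutions. The key fact to establish is that the composite of two symplectic involutions is (up to conjugation) an \emph{orthogonal}-type involution---concretely, that $\Phi \otimes \Phi$ is a transpose-type $\ast$-antiautomorphism whose self-adjoint fixed-point algebra is $\mathcal{M}_{4mn}(\mathbb{R})_{\text{sa}}$, reflecting the Morita-theoretic collapse $\mathbb{H} \otimes_{\mathbb{R}} \mathbb{H} \cong \mathcal{M}_{4}(\mathbb{R})$. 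This is where purity is not preserved and where I would spend the most effort, verifying via the explicit symplectic form $J$ of Eq.~\eqref{jMatrix} that $(J \otimes J) x (J \otimes J)^{-1}$ transposed reproduces the real structure; for the special small cases $\mathcal{M}_{2}(\mathbb{H})_{\text{sa}}$ this must be handled separately (and is relegated to \cref{quabitApp}), since $\mathcal{M}_{2}(\mathbb{H})_{\text{sa}} \cong \mathcal{V}_{5}$ fails universal reversibility and so \cref{cor: canonical composites with involutions} applies only after invoking the explicit universal algebra of \cref{q2Prop}.
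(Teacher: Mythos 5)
There is a genuine gap, and it affects every entry of the table with a complex factor. The theorem is about the canonical tensor products with respect to the \emph{standard} embeddings, and by Eq.~\eqref{complexEmbed} the standard embedding of $\mathcal{M}_{n}(\mathbb{C})_{\text{sa}}$ is the inclusion into $\M_{\mathcal{A}} = \mathcal{M}_{n}(\mathbb{C})$ --- not into $\mathcal{M}_{n}(\mathbb{C})\oplus\mathcal{M}_{n}(\mathbb{C})$, which is the \emph{universal} C$^{*}$\!-algebra of \cref{withGM} and governs $\hotimes$, not the standardly-embedded $\odot$. Your involution strategy then breaks down precisely for these entries, because there is no $\ast$-antiautomorphism of $\mathcal{M}_{n}(\mathbb{C})$ whose self-adjoint fixed points are exactly $\mathcal{M}_{n}(\mathbb{C})_{\text{sa}}$ (the adjoint is not $\mathbb{C}$-linear), so \cref{cor: canonical composites with involutions} simply does not apply. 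The symptom of the confusion is visible in your complex--complex entry: had you carried out the fixed-point computation you describe in the universal embedding correctly, the involution $\Phi_{\mathbb{C}}\otimes\Phi_{\mathbb{C}}$ pairs the four summands into two orbits and the fixed-point algebra is $\mathcal{M}_{nm}(\mathbb{C})_{\text{sa}}\oplus\mathcal{M}_{nm}(\mathbb{C})_{\text{sa}}$ --- two copies, which is exactly the paper's statement that $\mathcal{M}_{n}(\mathbb{C})_{\text{sa}}\hotimes\mathcal{M}_{m}(\mathbb{C})_{\text{sa}}$ carries an extra classical bit --- not the single copy the table (correctly) asserts for the standard embedding. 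In the standard embedding the complex--complex entry is instead trivial, since $\mathcal{M}_{m}(\mathbb{C})_{\text{sa}}\otimes_{\mathbb{R}}\mathcal{M}_{n}(\mathbb{C})_{\text{sa}}$ already equals $\mathcal{M}_{nm}(\mathbb{C})_{\text{sa}}$ before taking any Jordan closure (\cref{comComLem}), while the mixed entries $\mathcal{M}_{m}(\mathbb{R})_{\text{sa}}\odot\mathcal{M}_{n}(\mathbb{C})_{\text{sa}}$ and $\mathcal{M}_{m}(\mathbb{H})_{\text{sa}}\odot\mathcal{M}_{n}(\mathbb{C})_{\text{sa}}$ require a genuinely different argument: one must show by explicit Jordan-algebraic generation (the paper uses generalized and quaternionic Gell-Mann matrices in \cref{realComLem} and \cref{quatComLem}) that the Jordan hull of the pure tensors exhausts $\mathcal{M}_{nm}(\mathbb{C})_{\text{sa}}$, respectively $\mathcal{M}_{2nm}(\mathbb{C})_{\text{sa}}$.

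For the remaining entries your plan is essentially the paper's. The real--real, real--quaternionic and quaternionic--quaternionic cases do proceed via Hanche-Olsen's theorem applied to the product involutions $\mathcal{T}\otimes\mathcal{T}$, $\mathcal{T}\otimes\mathcal{J}$ and $\mathcal{J}\otimes\mathcal{J}$ inside $\mathcal{M}_{nm}(\mathbb{C})$, $\mathcal{M}_{2nm}(\mathbb{C})$ and $\mathcal{M}_{4nm}(\mathbb{C})$ (\cref{realRealLem}, \cref{realQuatLem}, \cref{quatQuatLem}); the paper then decides between the real and quaternionic outcome by the eigenspace dimension count of \cref{lem2} rather than by your explicit $(J\otimes J)$-conjugation, but these are interchangeable. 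Two smaller remarks: your worry that $\mathcal{M}_{2}(\mathbb{H})_{\text{sa}}$ needs separate treatment is unnecessary here --- it is reversible in its standard embedding, which is all the involution argument requires; and one should note that Hanche-Olsen's theorem needs universal reversibility of the \emph{composite}, which follows from \cref{thm: composites special} since the composite has rank at least four.
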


\section{Explicit Computations}\label{canTPcomps}
\noindent In this section, we provide explicit proofs of the lemmas which yield our proof of \cref{mgRes} 

\noindent We will require the following propositions, wherein to remind the reader $\mathfrak{c}(X)$ denotes the C$^{*}$\!-subalgebra of $\mathcal{M}_{n}(\mathbb{C})$ generated\footnote{\textit{i.e}.\ the C$^{*}$\!-algebraic closure} (as a C$^{*}$\!-algebra) by $X\subseteq\mathcal{M}_{n}(\mathbb{C})_{\text{sa}}$, and $X\otimes_{\mathbb{R}}Y\equiv\text{span}_{\mathbb{R}}\big\{x\otimes y:x\in X\text{ and } y\in Y\big\}\subseteq\mathcal{M}_{n}(\mathbb{C})_{\text{sa}}\otimes\mathcal{M}_{m}(\mathbb{C})_{\text{sa}}=\mathcal{M}_{nm}(\mathbb{C})_{\text{sa}}$. Furthermore, we denote the C$^{*}$\!-algebraic unit of $\mathcal{M}_{n}(\mathbb{C})$ by $\mathds{1}_{n}$, which of course coincides with the $\mathbb{R}$-algebraic unit of $\mathcal{M}_{n}(\mathbb{C})_{\text{sa}}$.

\begin{proposition}\label{lem1}\textit{ Let $\mathds{1}_{n}\in X\subseteq\mathcal{M}_{n}(\mathbb{C})_{\text{sa}}$ and $\mathds{1}_{m}\in Y\subseteq\mathcal{M}_{m}(\mathbb{C})_{\text{sa}}$ such that $\mathfrak{c}(X)=\mathcal{M}_{n}(\mathbb{C})$ and $\mathfrak{c}(Y)=\mathcal{M}_{m}(\mathbb{C})$. Then} 
\begin{equation}
\mathfrak{c}(X\otimes_{\mathbb{R}} Y)=\mathcal{M}_{nm}(\mathbb{C})\text{.}
\end{equation}
\begin{proof} Note that $\mathfrak{c}\big(X\otimes_{\mathbb{R}}\{\mathds{1}_{m}\}\big)=\mathcal{M}_{n}(\mathbb{C})\otimes \mathds{1}_{m}=\big\{x\otimes \mathds{1}_{m}:x\in\mathcal{M}_{n}(\mathbb{C})\big\}$. Likewise $\mathfrak{c}\big(\{\mathds{1}_{n}\}\otimes_{\mathbb{R}}Y\big)=\mathds{1}_{n}\otimes\mathcal{M}_{m}(\mathbb{C})$. Thus $\mathcal{M}_{n}(\mathbb{C})\otimes\mathcal{M}_{m}(\mathbb{C})\subseteq\mathfrak{c}\big(X\otimes_{\mathbb{R}} Y\big)$. Moreover, by construction $\mathfrak{c}\big(X\otimes_{\mathbb{R}} Y\big)\subseteq\mathcal{M}_{n}(\mathbb{C})\otimes\mathcal{M}_{m}(\mathbb{C})$.
\end{proof}
\end{proposition}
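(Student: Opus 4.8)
The plan is to establish the set equality by a double inclusion, disposing of the trivial containment first and then building up the reverse one by ``slicing'' the two factors with the units $\mathds{1}_m$ and $\mathds{1}_n$. For the easy direction, I would observe that every generator $x \otimes y$ with $x \in X$, $y \in Y$ already lies in $\mathcal{M}_n(\mathbb{C}) \otimes \mathcal{M}_m(\mathbb{C}) = \mathcal{M}_{nm}(\mathbb{C})$, so $X \otimes_{\mathbb{R}} Y$ is a subset of this ambient C$^*$-algebra; since the C$^*$-algebraic closure of a subset of a C$^*$-algebra never leaves that algebra, we get $\mathfrak{c}(X \otimes_{\mathbb{R}} Y) \subseteq \mathcal{M}_{nm}(\mathbb{C})$ at once.

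For the reverse inclusion, first I would exploit the hypothesis $\mathds{1}_m \in Y$ to note that $\{x \otimes \mathds{1}_m : x \in X\} \subseteq X \otimes_{\mathbb{R}} Y$, and symmetrically $\{\mathds{1}_n \otimes y : y \in Y\} \subseteq X \otimes_{\mathbb{R}} Y$ using $\mathds{1}_n \in X$. The key claim is then that $\mathfrak{c}\big(\{x \otimes \mathds{1}_m : x \in X\}\big) = \mathcal{M}_n(\mathbb{C}) \otimes \mathds{1}_m$ and likewise $\mathfrak{c}\big(\{\mathds{1}_n \otimes y : y \in Y\}\big) = \mathds{1}_n \otimes \mathcal{M}_m(\mathbb{C})$. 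Granting this, both ``slices'' sit inside $\mathfrak{c}(X \otimes_{\mathbb{R}} Y)$, and since this set is a C$^*$-algebra and hence closed under the associative product, it contains all products $(a \otimes \mathds{1}_m)(\mathds{1}_n \otimes b) = a \otimes b$ for $a \in \mathcal{M}_n(\mathbb{C})$ and $b \in \mathcal{M}_m(\mathbb{C})$. As the pure tensors $a \otimes b$ span $\mathcal{M}_{nm}(\mathbb{C})$ over $\mathbb{C}$, this yields $\mathcal{M}_{nm}(\mathbb{C}) \subseteq \mathfrak{c}(X \otimes_{\mathbb{R}} Y)$ and closes the argument.

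The step I expect to carry the real content is the key claim that C$^*$-generation commutes with the embedding, i.e.\ $\mathfrak{c}(X \otimes \mathds{1}_m) = \mathfrak{c}(X) \otimes \mathds{1}_m = \mathcal{M}_n(\mathbb{C}) \otimes \mathds{1}_m$. To prove it I would invoke that $\iota : \mathcal{M}_n(\mathbb{C}) \longrightarrow \mathcal{M}_{nm}(\mathbb{C})$, $a \longmapsto a \otimes \mathds{1}_m$, is an injective unital $*$-homomorphism of finite-dimensional C$^*$-algebras, and that such a map sends the C$^*$-subalgebra generated by any subset $X$ onto the C$^*$-subalgebra generated by $\iota(X)$ — this holds because $\iota$ preserves $\mathbb{C}$-linear combinations, adjoints, and products, while its image is automatically norm-closed in finite dimensions. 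Hence $\mathfrak{c}(\iota(X)) = \iota(\mathfrak{c}(X)) = \iota(\mathcal{M}_n(\mathbb{C})) = \mathcal{M}_n(\mathbb{C}) \otimes \mathds{1}_m$, using the hypothesis $\mathfrak{c}(X) = \mathcal{M}_n(\mathbb{C})$, with the second factor handled identically. Once this commutation of generation with the unital embedding is secured, everything else is bookkeeping; it is the one point that genuinely requires the $*$-homomorphism structure rather than mere set-theoretic manipulation.
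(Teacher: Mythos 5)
Your proposal is correct and follows essentially the same route as the paper: both arguments obtain the two ``slices'' $\mathcal{M}_n(\mathbb{C})\otimes\mathds{1}_m$ and $\mathds{1}_n\otimes\mathcal{M}_m(\mathbb{C})$ inside $\mathfrak{c}(X\otimes_{\mathbb{R}}Y)$ via the units, multiply them to recover all pure tensors, and note the trivial reverse containment. The only difference is that you supply the justification (via the unital $*$-monomorphism $a\mapsto a\otimes\mathds{1}_m$) for the generation claim that the paper states without proof.
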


\noindent For the next proposition, let $\mathcal{A}^{\Phi_{\mathcal{A}}+}$ (respectively, $\mathcal{A}^{\Phi_{\mathcal{A}}-}$) denote the $+1$ (respectively, $-1$) eigenspace of a real orthogonal transformation $\Phi_{\mathcal{A}}$ with $\Phi_{\mathcal{A}}\circ\Phi_{\mathcal{A}}=\mathtt{id}_{\mathcal{A}}:\mathcal{A}\longrightarrow\mathcal{A}::a\longmapsto a$ on a real vector space $\mathcal{A}$ (so $\mathcal{A}=\mathcal{A}^{\Phi_{\mathcal{A}}+}\oplus\mathcal{A}^{\Phi_{\mathcal{A}}-}$).

\begin{proposition}\label{lem2}\textit{Let $\mathcal{A},\mathcal{B}$ be vector spaces over $\mathbb{R}$ equipped with inner products $\langle\cdot,\cdot\rangle_{\mathcal{A}}$ and $\langle\cdot,\cdot\rangle_{\mathcal{B}}$. Let $\Phi_{\mathcal{A}}:\mathcal{A}\longrightarrow\mathcal{A}$ and $\Phi_{\mathcal{B}}:\mathcal{B}\longrightarrow\mathcal{B}$ be real orthogonal transformations\footnote{\textit{i.e}.\ $\forall a_{1},a_{2}\in\mathcal{A}$ one has $\langle a_{1},a_{2}\rangle_{\mathcal{A}}=\big\langle\Phi_{\mathcal{A}}(a_{1}),\Phi_{\mathcal{A}}(a_{2})\big\rangle_{\mathcal{A}}$ and $\forall b_{1},b_{2}\in\mathcal{B}$ one has $\langle b_{1},b_{2}\rangle_{\mathcal{B}}=\big\langle \Phi_{\mathcal{B}}(b_{1}),\Phi_{\mathcal{B}}(b_{2})\big\rangle_{\mathcal{B}}$.} with $\Phi_{\mathcal{A}}\circ\Phi_{\mathcal{A}}=\mathtt{id}_{\mathcal{A}}$ and $\Phi_{\mathcal{B}}\circ\Phi_{\mathcal{B}}=\mathtt{id}_{\mathcal{B}}$. Then}
\begin{equation}
\mathrm{dim}_{\mathbb{R}}\Big(\big(\mathcal{A}\otimes_{\mathbb{R}}\mathcal{B}\big)^{(\Phi_{\mathcal{A}}\otimes\Phi_{\mathcal{B}})+}\Big)=\mathrm{dim}_{\mathbb{R}}\Big(\mathcal{A}^{\Phi_{\mathcal{A}}+}\Big)\mathrm{dim}_{\mathbb{R}}\Big(\mathcal{B}^{\Phi_{\mathcal{B}}+}\Big)+\mathrm{dim}_{\mathbb{R}}\Big(\mathcal{A}^{\Phi_{\mathcal{A}}-}\Big)\mathrm{dim}_{\mathbb{R}}\Big(\mathcal{B}^{\Phi_{\mathcal{B}}-}\Big)
\end{equation}
\begin{proof} The eigenvalues of $\Phi_{\mathcal{A}}\otimes\Phi_{\mathcal{B}}$ are the $\mathbb{R}$-multiplicative products of the eigenvalues of $\Phi_{\mathcal{A}}$ and $\Phi_{\mathcal{B}}$. Any $\mathbb{R}$-linearly independent subsets $\{a_{1},\dots,a_{n}\}\subset\mathcal{A}$ and $\{b_{1},\dots,b_{m}\}\subset\mathcal{B}$ are such that $\big\{a_{k}\otimes b_{l}:k\in\{1,\dots,n\}\wedge l\in\{1,\dots,m\}\big\}$ is an $\mathbb{R}$-linearly independent subset of $\mathcal{A}\otimes_{\mathbb{R}}\mathcal{B}$. Finally, note $\text{dim}_{\mathbb{R}}\big(\mathcal{A}\otimes_{\mathbb{R}}\mathcal{B}\big)=\text{dim}_{\mathbb{R}}\mathcal{A}\text{dim}_{\mathbb{R}}\mathcal{B}$.
\end{proof}
\end{proposition}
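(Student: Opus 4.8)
The plan is to reduce the claim to two elementary observations: an involution is diagonalizable over $\mathbb{R}$ with eigenvalues confined to $\{+1,-1\}$, and a tensor product of eigenvectors is again an eigenvector whose eigenvalue is the product of the constituent eigenvalues. First I would note that $\Phi_{\mathcal{A}}\circ\Phi_{\mathcal{A}}=\mathtt{id}_{\mathcal{A}}$ forces the direct sum decomposition $\mathcal{A}=\mathcal{A}^{\Phi_{\mathcal{A}}+}\oplus\mathcal{A}^{\Phi_{\mathcal{A}}-}$: for any $a\in\mathcal{A}$ one writes $a=\tfrac{1}{2}(a+\Phi_{\mathcal{A}}(a))+\tfrac{1}{2}(a-\Phi_{\mathcal{A}}(a))$, and the two summands lie in the $+1$ and $-1$ eigenspaces respectively, with the intersection of the eigenspaces trivial. (Orthogonality of $\Phi_{\mathcal{A}}$ is not even needed for this step, though of course it holds.) The identical argument applied to $\Phi_{\mathcal{B}}$ yields $\mathcal{B}=\mathcal{B}^{\Phi_{\mathcal{B}}+}\oplus\mathcal{B}^{\Phi_{\mathcal{B}}-}$.

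Next I would fix bases adapted to these decompositions: say $\{a_{i}\}$ for $\mathcal{A}^{\Phi_{\mathcal{A}}+}$ and $\{a'_{j}\}$ for $\mathcal{A}^{\Phi_{\mathcal{A}}-}$, and likewise $\{b_{k}\}$ for $\mathcal{B}^{\Phi_{\mathcal{B}}+}$ and $\{b'_{l}\}$ for $\mathcal{B}^{\Phi_{\mathcal{B}}-}$. As noted in the statement, the collection of all pure-tensor products of these four families forms a basis of $\mathcal{A}\otimes_{\mathbb{R}}\mathcal{B}$, since tensor products of $\mathbb{R}$-linearly independent families are $\mathbb{R}$-linearly independent and $\dim_{\mathbb{R}}(\mathcal{A}\otimes_{\mathbb{R}}\mathcal{B})=\dim_{\mathbb{R}}\mathcal{A}\cdot\dim_{\mathbb{R}}\mathcal{B}$. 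I would then compute, for each of the four types of basis tensor, $(\Phi_{\mathcal{A}}\otimes\Phi_{\mathcal{B}})(x\otimes y)=\Phi_{\mathcal{A}}(x)\otimes\Phi_{\mathcal{B}}(y)$, so that $a_{i}\otimes b_{k}$ and $a'_{j}\otimes b'_{l}$ are fixed (eigenvalues $(+1)(+1)$ and $(-1)(-1)$, both equal to $+1$), while $a_{i}\otimes b'_{l}$ and $a'_{j}\otimes b_{k}$ are negated (eigenvalue $(\pm 1)(\mp 1)=-1$).

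Because these basis tensors are themselves eigenvectors of $\Phi_{\mathcal{A}}\otimes\Phi_{\mathcal{B}}$ with definite eigenvalues, the $+1$ eigenspace is exactly the span of the $+1$ eigen-tensors, namely $(\mathcal{A}^{\Phi_{\mathcal{A}}+}\otimes_{\mathbb{R}}\mathcal{B}^{\Phi_{\mathcal{B}}+})\oplus(\mathcal{A}^{\Phi_{\mathcal{A}}-}\otimes_{\mathbb{R}}\mathcal{B}^{\Phi_{\mathcal{B}}-})$; no vector with a nonzero component along a $-1$ eigen-tensor can be fixed. Counting basis elements of this span then gives $\dim_{\mathbb{R}}(\mathcal{A}^{\Phi_{\mathcal{A}}+})\dim_{\mathbb{R}}(\mathcal{B}^{\Phi_{\mathcal{B}}+})+\dim_{\mathbb{R}}(\mathcal{A}^{\Phi_{\mathcal{A}}-})\dim_{\mathbb{R}}(\mathcal{B}^{\Phi_{\mathcal{B}}-})$, which is the asserted identity. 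There is no genuine obstacle in this argument; the only point meriting a word of care is the clean separation of the $+1$ and $-1$ eigen-tensors, and this is immediate once the bases are chosen adapted to the eigenspace decompositions established in the first step.
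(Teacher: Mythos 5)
Your proof is correct and follows essentially the same route as the paper's: the paper's terse argument is precisely your argument in sketch form (involutivity splits each space into $\pm 1$ eigenspaces, pure tensors of basis eigenvectors are eigenvectors with multiplied eigenvalues, and linear independence of tensored bases plus $\dim_{\mathbb{R}}(\mathcal{A}\otimes_{\mathbb{R}}\mathcal{B})=\dim_{\mathbb{R}}\mathcal{A}\,\dim_{\mathbb{R}}\mathcal{B}$ yields the count). Your explicit averaging decomposition $a=\tfrac{1}{2}\big(a+\Phi_{\mathcal{A}}(a)\big)+\tfrac{1}{2}\big(a-\Phi_{\mathcal{A}}(a)\big)$ and the observation that orthogonality is not actually needed are fine elaborations of details the paper leaves implicit.
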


\noindent We shall also require the following elementary propostions concerning generation, the first being trivial.

\begin{proposition}\label{fac1}$\mathfrak{c}\Big(\pi_{\mathbb{R}}\big(\mathbb{R}\big)\Big)=\mathbb{C}$\textit{.}
\end{proposition}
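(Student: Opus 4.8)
The plan is to unwind the two definitions in play and observe that the entire content of the statement is the distinction between $\mathbb{C}$-linear and $\mathbb{R}$-linear spans. First I would identify the set $\pi_{\mathbb{R}}(\mathbb{R})$ explicitly. By Eq.~\eqref{unitEmbed}, the standard representation $\pi_{\mathbb{R}}:\mathbb{R}\longrightarrow C^{*}_{s}(\mathbb{R})\equiv\mathbb{C}$ is simply the inclusion $a\longmapsto a$, so $\pi_{\mathbb{R}}(\mathbb{R})=\mathbb{R}\subseteq\mathbb{C}$. Thus the claim reduces to the assertion that the smallest C$^{*}$-subalgebra of $\mathbb{C}$ containing the real line is all of $\mathbb{C}$.

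Next I would invoke the description of the C$^{*}$-algebraic closure $\mathfrak{c}$ recalled just after Definition~\ref{uRep}: one computes $\mathfrak{c}(\mathcal{Y})$ by closing $\mathcal{Y}$ under $\mathbb{C}$-linear combinations, associative products, and the involution. The crucial point is that the scalars here are \emph{complex}, in contrast to the Jordan algebraic closure $\mathfrak{j}$, which uses only $\mathbb{R}$-linear combinations. Since $1\in\mathbb{R}=\pi_{\mathbb{R}}(\mathbb{R})$, forming the $\mathbb{C}$-scalar multiple $i\cdot 1=i$ already places $i$ inside $\mathfrak{c}(\pi_{\mathbb{R}}(\mathbb{R}))$. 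Consequently the closure contains $\mathbb{R}\cdot 1+\mathbb{R}\cdot i=\mathbb{C}$, giving the inclusion $\mathbb{C}\subseteq\mathfrak{c}(\pi_{\mathbb{R}}(\mathbb{R}))$. The reverse inclusion $\mathfrak{c}(\pi_{\mathbb{R}}(\mathbb{R}))\subseteq\mathbb{C}$ is immediate, since the ambient C$^{*}$-algebra is $\mathbb{C}$ itself, whence equality.

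There is essentially no obstacle here beyond bookkeeping: the proposition is a one-line unravelling of definitions, and the only thing one must be careful to flag is precisely that the generation in $\mathfrak{c}$ permits multiplication by $i$ as a scalar, which is exactly what manufactures the imaginary unit from the real unit. I would state the argument in two sentences and not belabor it, perhaps remarking that this is the degenerate ($n=1$) analogue of Proposition~\ref{forFact2}, where the transpose plays the role of the distinguished antiautomorphism; here the $+1$-eigenspace of complex conjugation on $\mathbb{C}$ is exactly $\mathbb{R}=\pi_{\mathbb{R}}(\mathbb{R})$, consistent with $C^{*}_{u}(\mathbb{R})=\mathbb{C}$ via \cref{hoThm}.
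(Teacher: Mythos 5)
Your proof is correct, and it is exactly what the paper intends: the paper offers no argument at all for this proposition, simply labelling it trivial, and your two-line unwinding (the $\mathbb{C}$-algebraic closure permits the scalar multiple $i\cdot 1=i$, so the closure of $\mathbb{R}$ inside $\mathbb{C}$ is all of $\mathbb{C}$) is the intended verification. One caution about your closing aside: complex conjugation on $\mathbb{C}$ is $\mathbb{C}$-antilinear and hence is not a $*$-antiautomorphism in the paper's sense (\cref{hoThm} requires $\mathbb{C}$-linearity); the correct degenerate analogue of the transpose on $\mathcal{M}_{1}(\mathbb{C})=\mathbb{C}$ is the identity map, whose \emph{self-adjoint} fixed points are indeed $\mathbb{R}$, so the conclusion of your remark survives but the mechanism you cite does not.
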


\begin{proposition}\label{fac2} $\mathfrak{c}\Big(\pi_{\mathcal{M}_{n}(\mathbb{R})_{\text{sa}}}\big(\mathcal{M}_{n}(\mathbb{R})_{\text{sa}}\big)\Big)=\mathcal{M}_{n}(\mathbb{C})$\textit{.}
\begin{proof} Immediate from our proof of \cref{forFact2}.
\end{proof}
\end{proposition}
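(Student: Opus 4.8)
The plan is to observe that the statement is precisely the C$^{*}$-generation claim already isolated inside the proof of \cref{forFact2}, stripped of the universal-algebra machinery. First I would recall that by Eq.~\eqref{realEmbed} the standard representation $\pi_{\mathcal{M}_{n}(\mathbb{R})_{\text{sa}}}$ is simply the inclusion $a\longmapsto a$ of $\mathcal{M}_{n}(\mathbb{R})_{\text{sa}}$ into $\mathcal{M}_{n}(\mathbb{C})$; hence its image is literally the set of real symmetric matrices sitting inside $\mathcal{M}_{n}(\mathbb{C})$. The containment $\mathfrak{c}\big(\pi_{\mathcal{M}_{n}(\mathbb{R})_{\text{sa}}}(\mathcal{M}_{n}(\mathbb{R})_{\text{sa}})\big)\subseteq\mathcal{M}_{n}(\mathbb{C})$ is then automatic, since $\mathcal{M}_{n}(\mathbb{C})$ is itself a C$^{*}$-algebra containing the generators and $\mathfrak{c}$ is by definition the \emph{smallest} such C$^{*}$-subalgebra.

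For the reverse containment, the key step is to show that every matrix unit $|e_{r}\rangle\langle e_{s}|$ lies in the generated C$^{*}$-algebra. The diagonal units $|e_{r}\rangle\langle e_{r}|$ are already real symmetric, hence generators. For $r\neq s$, I would invoke the computation of Eq.~\eqref{smallCalc}, which exhibits the off-diagonal unit $|e_{r}\rangle\langle e_{s}|$ as the associative (matrix) product of the two real symmetric matrices $|e_{r}\rangle\langle e_{r}|$ and $|e_{r}\rangle\langle e_{s}|+|e_{s}\rangle\langle e_{r}|$, the $\delta_{rs}$ term vanishing when $r\neq s$. Since a C$^{*}$-algebra is closed under the associative product, each such $|e_{r}\rangle\langle e_{s}|$ belongs to $\mathfrak{c}\big(\pi_{\mathcal{M}_{n}(\mathbb{R})_{\text{sa}}}(\mathcal{M}_{n}(\mathbb{R})_{\text{sa}})\big)$.

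Finally, because a C$^{*}$-subalgebra is in particular a $\mathbb{C}$-linear subspace, and because $\text{span}_{\mathbb{C}}\{|e_{r}\rangle\langle e_{s}|\}=\mathcal{M}_{n}(\mathbb{C})$, the generated algebra contains all $\mathbb{C}$-linear combinations of the matrix units, and therefore all of $\mathcal{M}_{n}(\mathbb{C})$. Combining the two containments yields the asserted equality.

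The only point worth flagging — which is the crux rather than a genuine obstacle — is that the passage from the real Jordan algebra to its C$^{*}$-closure essentially requires the \emph{associative}, non-commutative product rather than the Jordan product: the real symmetric matrices span only an $n(n+1)/2$-dimensional real subspace, whereas $\mathcal{M}_{n}(\mathbb{C})$ has real dimension $2n^{2}$, so the ``missing'' directions must be manufactured by products such as the one in Eq.~\eqref{smallCalc}. Note in particular that, unlike the proof of \cref{forFact2}, no appeal to universal reversibility (\cref{urThm}) or to \cref{hoThm} is needed here: those ingredients served to identify the fixed-point structure of the transpose and thereby pin down the \emph{universal} C$^{*}$-algebra, whereas the present statement asks only for the C$^{*}$-algebra that the standard image generates, which is settled by Eq.~\eqref{smallCalc} alone.
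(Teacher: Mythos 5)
Your proof is correct and follows essentially the same route as the paper, which simply points back to the generation step in the proof of \cref{forFact2}: the computation in Eq.~\eqref{smallCalc} produces the off-diagonal matrix units, and the $\mathbb{C}$-linear span then gives all of $\mathcal{M}_{n}(\mathbb{C})$. Your closing observation that neither \cref{urThm} nor \cref{hoThm} is needed for this generation claim is accurate, but it does not change the substance of the argument.
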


\begin{proposition}\label{fac3} $\mathfrak{c}\Big(\pi_{\mathcal{M}_{n}(\mathbb{C})_{\text{sa}}}\big(\mathcal{M}_{n}(\mathbb{C})_{\text{sa}}\big)\Big)=\mathcal{M}_{n}(\mathbb{C})$\textit{.}
\begin{proof}$\forall X\in\mathcal{M}_{n}(\mathbb{C})$ one has  $X=A+B$ with $A=(X+X^{*})/2\in\mathcal{M}_{n}(\mathbb{C})_{\text{sa}}$ and $B=(X-X^{*})/2$. The problem thus reduces to proving that all skew-self-adjoint matrices can be generated by the self-adjoint matrices as a C$^{*}$\!-algebra. Let $B$ be any skew-self-adjoint matrix. Then $iB$ is self-adjoint. So we can generate the matrix $B$ just be multiplying $iB$ by the complex scalar $-i$. $\square$
\end{proof}
\end{proposition}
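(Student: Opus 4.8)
The plan is to prove both inclusions of the claimed equality, with the forward direction being essentially trivial and the reverse direction resting on the single observation that the C$^{*}$-algebraic closure operation $\mathfrak{c}$ produces a \emph{complex} linear space.

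First I would record that the standard representation $\pi_{\mathcal{M}_{n}(\mathbb{C})_{\text{sa}}}$ of Eq.~\eqref{complexEmbed} is simply inclusion, so that $\pi_{\mathcal{M}_{n}(\mathbb{C})_{\text{sa}}}\big(\mathcal{M}_{n}(\mathbb{C})_{\text{sa}}\big)=\mathcal{M}_{n}(\mathbb{C})_{\text{sa}}$, the set of self-adjoint elements sitting inside the C$^{*}$-algebra $\mathcal{M}_{n}(\mathbb{C})$. The inclusion $\mathfrak{c}\big(\mathcal{M}_{n}(\mathbb{C})_{\text{sa}}\big)\subseteq\mathcal{M}_{n}(\mathbb{C})$ is then immediate: $\mathcal{M}_{n}(\mathbb{C})$ is itself a C$^{*}$-algebra containing $\mathcal{M}_{n}(\mathbb{C})_{\text{sa}}$, and $\mathfrak{c}$ returns by definition the \emph{smallest} such algebra.

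For the reverse inclusion I would exhibit every $X\in\mathcal{M}_{n}(\mathbb{C})$ as a $\mathbb{C}$-linear combination of two self-adjoint matrices. Writing $A=(X+X^{*})/2$ and $B=(X-X^{*})/2$ gives $X=A+B$ with $A$ self-adjoint and $B$ skew-self-adjoint; since $(iB)^{*}=-iB^{*}=i(-B^{*})=iB$, the matrix $iB$ is self-adjoint, whence $B=-i(iB)$ and therefore $X=A+(-i)(iB)$. Both $A$ and $iB$ lie in $\mathcal{M}_{n}(\mathbb{C})_{\text{sa}}\subseteq\mathfrak{c}\big(\mathcal{M}_{n}(\mathbb{C})_{\text{sa}}\big)$, and because a C$^{*}$-subalgebra is closed under multiplication by complex scalars and under addition, $X$ lies in $\mathfrak{c}\big(\mathcal{M}_{n}(\mathbb{C})_{\text{sa}}\big)$ as well. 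As $X$ was arbitrary, $\mathcal{M}_{n}(\mathbb{C})\subseteq\mathfrak{c}\big(\mathcal{M}_{n}(\mathbb{C})_{\text{sa}}\big)$, and combining the two inclusions yields the desired equality.

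The only point deserving emphasis --- and the sole place the argument could go astray --- is that one must invoke closure under \emph{complex} (not merely real) scalar multiplication in order to absorb the skew-self-adjoint summand $B=-i(iB)$; this is exactly what distinguishes the C$^{*}$-algebraic closure from the Jordan-algebraic closure, and it is what lets the proof bypass any appeal to associative products among the generators. Consequently no genuine obstacle arises: the result follows from the Cartesian decomposition of a complex matrix together with the defining closure properties of a C$^{*}$-subalgebra.
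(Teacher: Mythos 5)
Your proposal is correct and follows essentially the same route as the paper's own proof: the Cartesian decomposition $X=A+B$ with $A=(X+X^{*})/2$ self-adjoint, followed by the observation that $iB$ is self-adjoint so $B=-i(iB)$ is recovered via complex scalar multiplication. Your version merely adds the (trivial but welcome) forward inclusion and makes explicit the closure properties being invoked, which the paper leaves implicit.
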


\begin{proposition}\label{fac4} $\mathfrak{c}\Big(\pi_{\mathcal{M}_{n}(\mathbb{H})_{\text{sa}}}\big(\mathcal{M}_{n}(\mathbb{H})_{\text{sa}}\big)\Big)=\mathcal{M}_{2n}(\mathbb{C})$\textit{.}
\end{proposition}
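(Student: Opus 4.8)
The plan is to show that the unital $\ast$-subalgebra of $\mathcal{M}_{2n}(\mathbb{C})$ generated by the image $\mathcal{A}\equiv\pi_{\mathcal{M}_{n}(\mathbb{H})_{\text{sa}}}(\mathcal{M}_{n}(\mathbb{H})_{\text{sa}})$ acts irreducibly on $\mathbb{C}^{2n}$, and then invoke the double commutant theorem. Since $\mathcal{A}$ consists of self-adjoint matrices and contains $\mathds{1}_{2n}=\pi_{\mathcal{M}_{n}(\mathbb{H})_{\text{sa}}}(\mathds{1}_{n})$, the commutant of the generated C$^{*}$-algebra coincides with that of the generating set, $\mathfrak{c}(\mathcal{A})'=\mathcal{A}'$. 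In finite dimensions von Neumann's double commutant theorem gives $\mathfrak{c}(\mathcal{A})=\mathfrak{c}(\mathcal{A})''=\mathcal{A}''$, so it suffices to prove $\mathcal{A}'=\mathbb{C}\mathds{1}_{2n}$; then $\mathcal{A}''=(\mathbb{C}\mathds{1}_{2n})'=\mathcal{M}_{2n}(\mathbb{C})$, as desired.

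Recall that elements of $\mathcal{A}$ have the form $\begin{pmatrix}\Gamma_1 & \Gamma_2\\ -\overline{\Gamma_2} & \overline{\Gamma_1}\end{pmatrix}$ with $\Gamma_1$ Hermitian and $\Gamma_2$ complex antisymmetric. I would write a candidate commuting element as $M=\begin{pmatrix} A & B\\ C & D\end{pmatrix}$ and extract constraints. Commuting with the block-diagonal elements $\begin{pmatrix}\Gamma_1 & 0\\ 0 & \overline{\Gamma_1}\end{pmatrix}$ (i.e.\ $\Gamma_2=0$) for all Hermitian $\Gamma_1$ forces $A\Gamma_1=\Gamma_1 A$ and $D\overline{\Gamma_1}=\overline{\Gamma_1}D$; since the Hermitian matrices $\mathbb{C}$-span $\mathcal{M}_n(\mathbb{C})$, this yields $A=a\mathds{1}_n$ and $D=d\mathds{1}_n$. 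Commuting with the off-diagonal elements $\begin{pmatrix}0 & \Gamma_2\\ -\overline{\Gamma_2} & 0\end{pmatrix}$ for all antisymmetric $\Gamma_2$ then gives $a\Gamma_2=d\Gamma_2$; for $n\ge 2$ there is a nonzero antisymmetric $\Gamma_2$, so $a=d$. The top-right block of the block-diagonal commutation furnishes the mixed relation $B\overline{\Gamma_1}=\Gamma_1 B$, which I would exploit to kill the corners: testing $\Gamma_1=E_{rr}$ forces $B$ diagonal, testing $\Gamma_1=E_{rs}+E_{sr}$ forces $B=b\mathds{1}_n$, and finally the genuinely complex Hermitian generator $\Gamma_1=i(E_{rs}-E_{sr})$ with $r\neq s$ gives $b\,\overline{\Gamma_1}=\Gamma_1\, b$, i.e.\ $-ib=ib$, whence $b=0$; the bottom-left relation $C\Gamma_1=\overline{\Gamma_1}C$ yields $C=0$ by an identical argument. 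Hence $M=a\mathds{1}_{2n}$ and $\mathcal{A}'=\mathbb{C}\mathds{1}_{2n}$.

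The main obstacle is the off-diagonal blocks: the block-diagonal elements alone only pin down $A$ and $D$, and it is the interplay of the real symmetric and the genuinely imaginary Hermitian generators --- available precisely because $\Gamma_1$ ranges over all Hermitian matrices --- that forces $B=C=0$. I would flag that the argument requires $n\ge 2$, since it uses both a nonzero antisymmetric $\Gamma_2$ and an off-diagonal generator $i(E_{rs}-E_{sr})$. The degenerate case $n=1$ is genuinely different, as $\mathcal{M}_1(\mathbb{H})_{\text{sa}}\cong\mathbb{R}$ embeds as $\mathbb{C}\mathds{1}_2$ and is already covered by \cref{fac1} with standard C$^{*}$-algebra $\mathbb{C}$; for $n\ge 3$ one could alternatively read the claim off the proof of \cref{qnProp}, but the commutant computation has the advantage of treating all $n\ge 2$ uniformly and without appealing to universal reversibility.
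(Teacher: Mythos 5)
Your proof is correct, but it takes a genuinely different route from the paper's. The paper does not prove \cref{fac4} in the text at all --- it defers to the author's MSc thesis --- and, judging from the companion results it does prove (\cref{forFact2} for $\mathcal{M}_{n}(\mathbb{R})_{\text{sa}}$ and \cref{q2Prop} for the quabit), the intended argument is \emph{explicit generation}: exhibit enough associative products of symplectically embedded elements to span $\mathcal{M}_{2n}(\mathbb{C})$ over $\mathbb{C}$. You instead compute the commutant of the image $\mathcal{A}=\pi_{\mathcal{M}_{n}(\mathbb{H})_{\text{sa}}}(\mathcal{M}_{n}(\mathbb{H})_{\text{sa}})$ block by block and invoke the finite-dimensional double commutant theorem, using $\mathfrak{c}(\mathcal{A})'=\mathcal{A}'$ and $\mathfrak{c}(\mathcal{A})=\mathcal{A}''$. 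Your block computation is sound: the diagonal relations $A\Gamma_{1}=\Gamma_{1}A$ and $D\overline{\Gamma_{1}}=\overline{\Gamma_{1}}D$ scalarize $A$ and $D$ because Hermitian matrices $\mathbb{C}$-span $\mathcal{M}_{n}(\mathbb{C})$; the mixed relation $B\overline{\Gamma_{1}}=\Gamma_{1}B$ tested against $E_{rr}$, $E_{rs}+E_{sr}$ and $i(E_{rs}-E_{sr})$ correctly kills $B$ (and $C$ symmetrically); and a single nonzero antisymmetric $\Gamma_{2}$ forces $a=d$. Your flag that $n\geq 2$ is genuinely needed is also correct --- for $n=1$ the image is $\mathbb{R}\mathds{1}_{2}$ and generates only $\mathbb{C}\mathds{1}_{2}$, so the proposition as literally stated fails there (the paper implicitly handles $\mathbb{R}$ via \cref{fac1} and its own one-dimensional standard embedding). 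What each approach buys: yours is uniform in $n\geq 2$, avoids the combinatorial bookkeeping of products, and pinpoints exactly where the rank hypothesis enters; the paper's style of explicit generation is constructive and hands you the matrix units directly, which is the form of the hypothesis required verbatim when feeding the result into Hanche-Olsen's theorem (\cref{hoThm}), though of course the two establish the same fact. Your aside about reading the claim off \cref{qnProp} for $n\geq 3$ is defensible (via simplicity of $\mathcal{M}_{2n}(\mathbb{C})$ and a dimension count), but note that in the paper's logical order the generation statement is an input to that proposition rather than an output, so the commutant argument is the cleaner self-contained route.
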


\noindent A proof of \cref{fac4} is found in the authors MSc thesis \cite{Graydon2011}. We are now ready for our main lemmas.

\begin{lemma}\label{realArblem}
\begin{equation}
\mathbb{R}\odot\mathcal{M}_{n}(\mathbb{R})_{\text{sa}}\cong\mathcal{M}_{n}(\mathbb{R})_{\text{sa}}\
\label{RodotMR}
\end{equation}
\begin{proof}
By definition $\pi_{\mathbb{R}}\big(\mathbb{R}\big)\otimes_{\mathbb{R}}\pi_{\mathcal{M}_{n}(\mathbb{R})_{\text{sa}}}\big(\mathcal{M}_{n}(\mathbb{R})_{\text{sa}}\big)\cong\mathcal{M}_{n}(\mathbb{R})_{\text{sa}}$: a closed Jordan algebra!.
\end{proof}
\end{lemma}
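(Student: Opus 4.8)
```latex
The plan is to prove \cref{realArblem} by unwinding the definitions of the canonical tensor product (\cref{def: canonical tensor product}) and the standard embeddings given in Eq.~\eqref{unitEmbed} and Eq.~\eqref{realEmbed}. The key observation is that $\mathbb{R}$ is the monoidal unit, so its standard embedding $\pi_{\mathbb{R}}:\mathbb{R}\longrightarrow\mathbb{C}$ is just inclusion of the reals into $\mathbb{C}$ as scalars, while $\pi_{\mathcal{M}_{n}(\mathbb{R})_{\text{sa}}}:\mathcal{M}_{n}(\mathbb{R})_{\text{sa}}\longrightarrow\mathcal{M}_{n}(\mathbb{C})$ is inclusion of the real symmetric matrices. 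The canonical tensor product is defined as $\mathfrak{j}\big(\pi_{\mathbb{R}}(\mathbb{R})\otimes_{\mathbb{R}}\pi_{\mathcal{M}_{n}(\mathbb{R})_{\text{sa}}}(\mathcal{M}_{n}(\mathbb{R})_{\text{sa}})\big)$, the Jordan subalgebra generated inside $\mathbb{C}\otimes\mathcal{M}_{n}(\mathbb{C})\cong\mathcal{M}_{n}(\mathbb{C})$ by the pure tensors.

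First I would note that, since the first factor is one-dimensional and spanned by the unit (as an $\mathbb{R}$-algebra, $\mathbb{R}$ is just $\mathbb{R}\cdot u_{\mathbb{R}}$), the set $\pi_{\mathbb{R}}(\mathbb{R})\otimes_{\mathbb{R}}\pi_{\mathcal{M}_{n}(\mathbb{R})_{\text{sa}}}(\mathcal{M}_{n}(\mathbb{R})_{\text{sa}})$ is simply $\{1\otimes a \mid a\in\mathcal{M}_{n}(\mathbb{R})_{\text{sa}}\}$, which is $\mathbb{R}$-linearly isomorphic to $\mathcal{M}_{n}(\mathbb{R})_{\text{sa}}$ itself via the canonical identification $\mathbb{C}\otimes\mathcal{M}_{n}(\mathbb{C})\cong\mathcal{M}_{n}(\mathbb{C})$. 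The crucial point is then that this subspace is \emph{already} a Jordan algebra: the real symmetric matrices are closed under the Jordan product $a\jProd b=(ab+ba)/2$, since $(ab+ba)^{\text{T}}=b^{\text{T}}a^{\text{T}}+a^{\text{T}}b^{\text{T}}=ba+ab$ whenever $a=a^{\text{T}}$ and $b=b^{\text{T}}$. Consequently, applying the Jordan algebraic closure $\mathfrak{j}$ adds nothing, and $\mathbb{R}\odot\mathcal{M}_{n}(\mathbb{R})_{\text{sa}}$ coincides with the image of $\mathcal{M}_{n}(\mathbb{R})_{\text{sa}}$, establishing the desired Jordan isomorphism.

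I do not expect any serious obstacle here; the content is essentially the remark made in the proof sketch that $\pi_{\mathbb{R}}(\mathbb{R})\otimes_{\mathbb{R}}\pi_{\mathcal{M}_{n}(\mathbb{R})_{\text{sa}}}(\mathcal{M}_{n}(\mathbb{R})_{\text{sa}})$ is a closed Jordan algebra. The only point requiring a line of care is the verification that the image subspace is genuinely Jordan-closed so that the closure operation $\mathfrak{j}$ is idempotent on it --- this is the one step that distinguishes this lemma from cases (such as those involving two complex or two quaternionic factors) where passing to the Jordan hull strictly enlarges the span. Since $\mathbb{R}$ is the monoidal unit and its embedding is unital, tensoring by it acts as the identity up to the natural isomorphism $\mathbb{C}\otimes\mathcal{M}_{n}(\mathbb{C})\cong\mathcal{M}_{n}(\mathbb{C})$, so the result is immediate once this closure fact is recorded.
```
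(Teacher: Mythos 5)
Your proposal is correct and takes essentially the same route as the paper: the paper's one-line proof simply observes that $\pi_{\mathbb{R}}(\mathbb{R})\otimes_{\mathbb{R}}\pi_{\mathcal{M}_{n}(\mathbb{R})_{\text{sa}}}(\mathcal{M}_{n}(\mathbb{R})_{\text{sa}})\cong\mathcal{M}_{n}(\mathbb{R})_{\text{sa}}$ is already a closed Jordan algebra, which is exactly the point you verify. Your expansion --- identifying the pure tensors as $\{1\otimes a\}$ under $\mathbb{C}\otimes\mathcal{M}_{n}(\mathbb{C})\cong\mathcal{M}_{n}(\mathbb{C})$ and checking closure of the real symmetric matrices under $\jProd$ so that $\mathfrak{j}$ acts trivially --- just makes the paper's exclamation explicit.
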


\begin{lemma}\label{realRealLem}
\begin{equation}
\mathcal{M}_{m}(\mathbb{R})_{\text{sa}}\odot\mathcal{M}_{n}(\mathbb{R})_{\text{sa}}\cong\mathcal{M}_{nm}(\mathbb{R})_{\text{sa}}\
\label{MRodotMR}
\end{equation}
\begin{proof} We will begin by proving that 
\begin{equation}
C^{*}_{u}\Big(\mathcal{M}_{m}(\mathbb{R})_{\text{sa}}\odot\mathcal{M}_{n}(\mathbb{R})_{\text{sa}}\Big)\cong\mathcal{M}_{nm}(\mathbb{C})\text{.}
\label{CstarMRMR}
\end{equation}
By \cref{lem2} and \cref{fac2} $\pi_{\mathcal{M}_{m}(\mathbb{R})_{\text{sa}}}\big(\mathcal{M}_{m}(\mathbb{R})_{\text{sa}}\big)\otimes_{\mathbb{R}}\pi_{\mathcal{M}_{n}(\mathbb{R})_{\text{sa}}}\big(\mathcal{M}_{n}(\mathbb{R})_{\text{sa}}\big)\subset\mathcal{M}_{m}(\mathbb{R})_{\text{sa}}\odot\mathcal{M}_{n}(\mathbb{R})_{\text{sa}}\subset\mathcal{M}_{nm}(\mathbb{C})_{\text{sa}}$ in fact generates $\mathcal{M}_{nm}(\mathbb{C})$ as a C$^{*}$\!-algebra. Furthermore, there exists an antiautomorphism of $\mathcal{M}_{nm}(\mathbb{C})$ that fixes the inclusion of $\mathcal{M}_{m}(\mathbb{R})_{\text{sa}}\odot\mathcal{M}_{n}(\mathbb{R})_{\text{sa}}$ --- namely the transpose! Note that $\mathcal{M}_{m}(\mathbb{R})_{\text{sa}}\odot\mathcal{M}_{n}(\mathbb{R})_{\text{sa}}$ is not a spin factor because it has rank at least 4. Therefore by Hanche-Olsen's Theorem Eq.~\eqref{CstarMRMR} holds. Our knowledge of the universal C$^{*}$\!-algebras of all items in the Jordan-von Neumann-Wigner Classification Theorem now already implies that $\mathcal{M}_{m}(\mathbb{R})_{\text{sa}}\odot\mathcal{M}_{n}(\mathbb{R})_{\text{sa}}$ must be Jordan isomorphic to a simple real or quaternionic matrix algebra. It is real. For the proof, we simply apply \cref{lem2} with $\mathcal{A}\cong\mathcal{M}_{m}(\mathbb{C})_{\text{sa}}$ and $\mathcal{B}\cong\mathcal{M}_{n}(\mathbb{C})_{\text{sa}}$, with $\Phi_{\mathcal{A}}$ and $\Phi_{\mathcal{B}}$ the local transposes, and count dimensions:
\begin{eqnarray}
\mathrm{dim}_{\mathbb{R}}\Big(\mathcal{M}_{m}(\mathbb{R})_{\text{sa}}\odot\mathcal{M}_{n}(\mathbb{R})_{\text{sa}}\Big)&=&\frac{m(m+1)}{2}\frac{n(n+1)}{2}+\frac{m(m-1)}{2}\frac{n(n-1)}{2}\\
&=&\frac{nm}{4}(nm+m+n+1+nm-m-n+1)\\
&=&\frac{nm(nm+1)}{2}\\[0.1cm]
&=&\mathrm{dim}_{\mathbb{R}}\mathcal{M}_{nm}(\mathbb{R})_{\text{sa}}\text{.}
\end{eqnarray}
The proof is complete because we know that the self-adjoint fixed points of the canonical antiautomorphism of the universal C$^{*}$\!-algebra of a  universally reversible \textsc{eja} are \textit{precisely} the images of the \textsc{eja} under canonical injection.
\end{proof}
\end{lemma}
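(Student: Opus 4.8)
The plan is to realize $\mathcal{M}_{m}(\mathbb{R})_{\text{sa}} \odot \mathcal{M}_{n}(\mathbb{R})_{\text{sa}}$ as the self-adjoint fixed-point set of a concretely identifiable involution on $\mathcal{M}_{nm}(\mathbb{C})$ by appealing to \cref{cor: canonical composites with involutions}, and then simply to read off that this fixed-point set is the real symmetric matrices. Throughout I would assume $m,n \geq 2$, since the degenerate cases in which one factor is $\mathbb{R} = \mathcal{M}_{1}(\mathbb{R})_{\text{sa}}$ are already handled by \cref{realArblem} (and are consistent with the stated formula, as $\mathbb{R}\odot\mathcal{M}_{n}(\mathbb{R})_{\text{sa}}\cong\mathcal{M}_{n}(\mathbb{R})_{\text{sa}}=\mathcal{M}_{1\cdot n}(\mathbb{R})_{\text{sa}}$).

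First I would assemble the hypotheses of \cref{cor: canonical composites with involutions}. With the standard embeddings one has $\mathbf{M}_{\mathcal{M}_{m}(\mathbb{R})_{\text{sa}}} = \mathcal{M}_{m}(\mathbb{C})$ and $\mathbf{M}_{\mathcal{M}_{n}(\mathbb{R})_{\text{sa}}} = \mathcal{M}_{n}(\mathbb{C})$, and by \cref{fac2} the real symmetric matrices generate the full complex matrix algebra as a C$^{*}$-algebra in each factor. Each factor is a reversible EJC, since the transpose is an involutory $\ast$-antiautomorphism fixing precisely the real symmetric matrices, so the reversibility condition holds trivially. Taking both involutions $\Phi$ and $\Psi$ to be the matrix transpose, the conditions $\mathcal{A} \subseteq \mathbf{M}_{\mathcal{A}}^{\Phi}$ and $\mathcal{B} \subseteq \mathbf{M}_{\mathcal{B}}^{\Psi}$ are immediate.

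The corollary then gives $\mathcal{M}_{m}(\mathbb{R})_{\text{sa}} \odot \mathcal{M}_{n}(\mathbb{R})_{\text{sa}} = (\mathcal{M}_{m}(\mathbb{C}) \otimes \mathcal{M}_{n}(\mathbb{C}))^{\Phi \otimes \Psi}_{\text{sa}}$. The key observation is that under the Kronecker identification $\mathcal{M}_{m}(\mathbb{C}) \otimes \mathcal{M}_{n}(\mathbb{C}) \cong \mathcal{M}_{nm}(\mathbb{C})$, the involution $\Phi \otimes \Psi$ is nothing but the global transpose, because $(A \otimes B)^{\mathrm{T}} = A^{\mathrm{T}} \otimes B^{\mathrm{T}}$. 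An element $X \in \mathcal{M}_{nm}(\mathbb{C})$ in the self-adjoint fixed-point set satisfies both $X = X^{*}$ and $X = X^{\mathrm{T}}$; together these force $X = \overline{X}$, so $X$ is real, while $X = X^{\mathrm{T}}$ makes it symmetric. Hence the fixed-point set is exactly $\mathcal{M}_{nm}(\mathbb{R})_{\text{sa}}$, establishing the claim.

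I expect the one genuine subtlety to be bookkeeping rather than conceptual: verifying that the transpose involutions on the two factors really do tensor to the global transpose under the Kronecker identification implicit in \cref{def: canonical tensor product}. As a self-contained alternative avoiding \cref{cor: canonical composites with involutions}, I could instead argue directly: use \cref{hoThm} to show the generated Jordan algebra has universal C$^{*}$-algebra $\mathcal{M}_{nm}(\mathbb{C})$ (the real symmetric generators generate $\mathcal{M}_{nm}(\mathbb{C})$ and are fixed by the transpose, and the composite is not a spin factor, hence is universally reversible by \cref{thm: composites special}), then conclude from the Jordan--von Neumann--Wigner list (\cref{jvwClass}) that it must be a simple real or quaternionic matrix algebra, and finally distinguish the two by the dimension count $\frac{m(m+1)}{2}\cdot\frac{n(n+1)}{2} + \frac{m(m-1)}{2}\cdot\frac{n(n-1)}{2} = \frac{nm(nm+1)}{2} = \mathrm{dim}_{\mathbb{R}}\,\mathcal{M}_{nm}(\mathbb{R})_{\text{sa}}$, where the two summands arise from the symmetric--symmetric and antisymmetric--antisymmetric parts via \cref{lem2}. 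In that alternative route the dimension count is the decisive and most delicate step.
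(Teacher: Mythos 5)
Your primary argument is correct and goes by a genuinely different route than the paper's own proof. The paper proceeds in two stages: it first establishes $C^{*}_{u}\big(\mathcal{M}_{m}(\mathbb{R})_{\text{sa}}\odot\mathcal{M}_{n}(\mathbb{R})_{\text{sa}}\big)\cong\mathcal{M}_{nm}(\mathbb{C})$ via \cref{hoThm} (generation by \cref{fac2}, fixed by the transpose, rank at least $4$ so universally reversible), then invokes the Jordan--von Neumann--Wigner classification to narrow the composite to a simple real or quaternionic matrix algebra, and finally disambiguates by the dimension count $\frac{m(m+1)}{2}\frac{n(n+1)}{2}+\frac{m(m-1)}{2}\frac{n(n-1)}{2}=\frac{nm(nm+1)}{2}$ --- exactly the ``alternative route'' you sketch at the end, where the dimension count is indeed the decisive step. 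Your main route instead applies \cref{cor: canonical composites with involutions} directly with $\Phi=\Psi=\mathrm{T}$: the hypotheses (reversibility, generation via \cref{fac2}, transpose-fixedness) all hold for $m,n\geq 2$, the identity $(A\otimes B)^{\mathrm{T}}=A^{\mathrm{T}}\otimes B^{\mathrm{T}}$ identifies $\Phi\otimes\Psi$ with the global transpose on $\mathcal{M}_{nm}(\mathbb{C})$, and $X=X^{*}=X^{\mathrm{T}}$ forces $X$ real symmetric. This buys you an explicit identification of the composite without ever appealing to the classification theorem or counting dimensions; the cost is that all the heavy lifting (universal reversibility via \cref{thm: composites special} and Hanche-Olsen's theorem) is hidden inside the corollary, whereas the paper's version exhibits it in situ. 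Your handling of the degenerate cases $m=1$ or $n=1$ via \cref{realArblem} is also correct and is a detail the paper's proof glosses over.
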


\begin{lemma}\label{realQuatLem}
\begin{equation}
\mathcal{M}_{m}(\mathbb{R})_{\text{sa}}\odot\mathcal{M}_{n}(\mathbb{H})_{\text{sa}}\cong\mathcal{M}_{nm}(\mathbb{H})_{\text{sa}}
\label{MRodotMH}
\end{equation}
\begin{proof} The proof is entirely similar to the previous case; however let us keep a full record. First, we will show that
\begin{equation}
C^{*}_{u}\Big(\mathcal{M}_{m}(\mathbb{R})_{\text{sa}}\odot\mathcal{M}_{n}(\mathbb{H})_{\text{sa}}\Big)\cong\mathcal{M}_{2nm}(\mathbb{C})\text{.}
\label{CstarMRMH}
\end{equation}
By \cref{lem1} together with \cref{fac2} and \cref{fac4}, 
\begin{equation}
\pi_{\mathcal{M}_{m}(\mathbb{R})_{\text{sa}}}\big(\mathcal{M}_{m}(\mathbb{R})_{\text{sa}}\big)\otimes_{\mathbb{R}}\pi_{\mathcal{M}_{n}(\mathbb{H})_{\text{sa}}}\big(\mathcal{M}_{n}(\mathbb{H})_{\text{sa}}\big)\subset\mathcal{M}_{m}(\mathbb{R})_{\text{sa}}\odot\mathcal{M}_{n}(\mathbb{H})_{\text{sa}}\subset\mathcal{M}_{2nm}(\mathbb{C})_{\text{sa}}
\end{equation} 
in fact generates $\mathcal{M}_{2nm}(\mathbb{C})$ as a C$^{*}$\!-algebra. Recall $\mathcal{M}_{2n}(\mathbb{C})$ admits the *-antiautomorphism $\mathcal{J}::x\longmapsto -Jx^{T}J$ with $J^{2}=-\mathds{1}_{2n}$ the usual symplectic matrix; moreover, the set of self-adjoint fixed points of $\mathcal{J}$ is precisely the image of $\mathcal{M}_{n}(\mathbb{H})_{\text{sa}}$ in standard representation. Recall also that $\mathcal{M}_{m}(\mathbb{C})$ enjoys the usual transpose antiautomorphism, say $\mathcal{T}$, whose set of self-adjoint fixed points is precisely the real symmetric matrices. Therefore $\mathcal{M}_{m}(\mathbb{R})_{\text{sa}}\odot\mathcal{M}_{n}(\mathbb{H})_{\text{sa}}\subset\mathcal{M}_{2nm}(\mathbb{C})_{\text{sa}}$ is pointwise fixed by the antiautomorphism $\mathcal{T}\otimes\mathcal{J}$ (again, recalling our general arguments in the notes mentioned in the previous proof.) Noting that $\mathcal{M}_{m}(\mathbb{R})_{\text{sa}}\odot\mathcal{M}_{n}(\mathbb{H})_{\text{sa}}$ has rank at least 4, we conclude that Eq.~\eqref{CstarMRMH} holds in light of Hanche-Olsen's Theorem. Once again, our knowledge of the universal C$^{*}$\!-algebras of all items in the Jordan-von Neumann-Wigner Classification \cite{Jordan1934} now already implies that $\mathcal{M}_{m}(\mathbb{R})_{\text{sa}}\odot\mathcal{M}_{n}(\mathbb{H})_{\text{sa}}$ must be Jordan isomorphic to a simple real or quaternionic matrix algebra. It is quaternionic. For the proof, we simply apply \cref{lem2} with $\mathcal{A}\cong\mathcal{M}_{m}(\mathbb{C})_{\text{sa}}$ and $\mathcal{B}\cong\mathcal{M}_{2n}(\mathbb{C})_{\text{sa}}$, with $\Phi_{\mathcal{A}}=\mathcal{T}$ and $\Phi_{\mathcal{B}}=\mathcal{J}$ the local involutions and count dimensions:
\begin{eqnarray}
\mathrm{dim}_{\mathbb{R}}\Big(\mathcal{M}_{m}(\mathbb{R})_{\text{sa}}\odot\mathcal{M}_{n}(\mathbb{H})_{\text{sa}}\Big)&=&\frac{m(m+1)}{2}n(2n-1)+\frac{m(m-1)}{2}(4n^{2}-2n^{2}+n)\\
&=&\frac{nm}{2}\Big((m+1)(2n-1)+(m-1)(2n+1)\Big)\\
&=&\frac{nm}{2}(2nm+2n-m-1+2nm-2n+m-1)\\[0.2cm]
&=&nm(2nm-1)\\[0.2cm]
&=&\mathrm{dim}_{\mathbb{R}}\mathcal{M}_{nm}(\mathbb{H})_{\text{sa}}\text{.}
\end{eqnarray}
\end{proof}
\end{lemma}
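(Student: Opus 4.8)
The plan is to follow the blueprint of \cref{realRealLem}: first pin down the universal C$^{*}$-algebra of the canonical tensor product via Hanche-Olsen's \cref{hoThm}, then use the Jordan-von Neumann-Wigner classification to narrow the possibilities to a single simple real or quaternionic matrix algebra, and finally settle which one by a dimension count. Throughout I would take $m,n\geq 2$, the case $m=1$ being the trivial identity $\mathbb{R}\odot\mathcal{M}_{n}(\mathbb{H})_{\text{sa}}\cong\mathcal{M}_{n}(\mathbb{H})_{\text{sa}}$ handled as in \cref{realArblem}.

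First I would record the ambient data. Under the standard embeddings one has $\mathbf{M}_{\mathcal{M}_{m}(\mathbb{R})_{\text{sa}}}=\mathcal{M}_{m}(\mathbb{C})$ and $\mathbf{M}_{\mathcal{M}_{n}(\mathbb{H})_{\text{sa}}}=\mathcal{M}_{2n}(\mathbb{C})$, so by \cref{def: canonical tensor product} the canonical tensor product lives inside $\mathcal{M}_{m}(\mathbb{C})\otimes\mathcal{M}_{2n}(\mathbb{C})=\mathcal{M}_{2nm}(\mathbb{C})$. Since both factors are simple, nontrivial, standardly embedded Jordan matrix algebras, they are reversible EJCs, so \cref{prop: canonical product composite} guarantees that $\mathcal{M}_{m}(\mathbb{R})_{\text{sa}}\odot\mathcal{M}_{n}(\mathbb{H})_{\text{sa}}$ is a genuine composite, and \cref{thm: composites special} then tells me it is special and universally reversible (equivalently, it has rank at least $mn\geq 4$, hence no spin-factor summand). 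This is exactly the hypothesis needed to invoke \cref{hoThm}. To apply that theorem I would verify its two conditions: by \cref{lem1}, combined with \cref{fac2} and \cref{fac4}, the subspace $\pi_{\mathcal{M}_{m}(\mathbb{R})_{\text{sa}}}(\mathcal{M}_{m}(\mathbb{R})_{\text{sa}})\otimes_{\mathbb{R}}\pi_{\mathcal{M}_{n}(\mathbb{H})_{\text{sa}}}(\mathcal{M}_{n}(\mathbb{H})_{\text{sa}})$ already generates $\mathcal{M}_{2nm}(\mathbb{C})$ as a C$^{*}$-algebra, so a fortiori does the Jordan algebra $\mathcal{M}_{m}(\mathbb{R})_{\text{sa}}\odot\mathcal{M}_{n}(\mathbb{H})_{\text{sa}}$; and the fixing antiautomorphism is $\mathcal{T}\otimes\mathcal{J}$, where $\mathcal{T}$ is the transpose on $\mathcal{M}_{m}(\mathbb{C})$ (whose self-adjoint fixed points are the real symmetric matrices) and $\mathcal{J}\colon x\longmapsto -Jx^{\mathrm{T}}J$ with $J$ the symplectic matrix of Eq.~\eqref{jMatrix} (whose self-adjoint fixed points are the symplectically embedded $\mathcal{M}_{n}(\mathbb{H})_{\text{sa}}$). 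Then $\mathcal{T}\otimes\mathcal{J}$ is an involutive $*$-antiautomorphism of $\mathcal{M}_{2nm}(\mathbb{C})$ fixing every pure tensor $a\otimes b$ with $a\in\mathcal{M}_{m}(\mathbb{R})_{\text{sa}}$ and $b\in\mathcal{M}_{n}(\mathbb{H})_{\text{sa}}$, hence fixing the Jordan algebra they generate. \cref{hoThm} then yields $C^{*}_{u}\big(\mathcal{M}_{m}(\mathbb{R})_{\text{sa}}\odot\mathcal{M}_{n}(\mathbb{H})_{\text{sa}}\big)\cong\mathcal{M}_{2nm}(\mathbb{C})$ together with the identification $\mathcal{M}_{m}(\mathbb{R})_{\text{sa}}\odot\mathcal{M}_{n}(\mathbb{H})_{\text{sa}}=(\mathcal{M}_{m}(\mathbb{C})\otimes\mathcal{M}_{2n}(\mathbb{C}))^{\mathcal{T}\otimes\mathcal{J}}_{\text{sa}}$; the same conclusion can be reached in one step from \cref{cor: canonical composites with involutions}.

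The hard part — and the reason the universal C$^{*}$-algebra alone does not close the argument — is that the classification admits \emph{two} simple EJAs with universal algebra $\mathcal{M}_{2nm}(\mathbb{C})$: by \cref{forFact2} and \cref{qnProp}, both $\mathcal{M}_{2nm}(\mathbb{R})_{\text{sa}}$ and $\mathcal{M}_{nm}(\mathbb{H})_{\text{sa}}$ have exactly this universal algebra. To break the tie I would compute the real dimension of the fixed-point algebra using \cref{lem2} with $\mathcal{A}\cong\mathcal{M}_{m}(\mathbb{C})_{\text{sa}}$, $\mathcal{B}\cong\mathcal{M}_{2n}(\mathbb{C})_{\text{sa}}$, $\Phi_{\mathcal{A}}=\mathcal{T}$, and $\Phi_{\mathcal{B}}=\mathcal{J}$. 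Here $\dim_{\mathbb{R}}\mathcal{A}^{\mathcal{T}+}=\tfrac{m(m+1)}{2}$, $\dim_{\mathbb{R}}\mathcal{A}^{\mathcal{T}-}=\tfrac{m(m-1)}{2}$, while $\dim_{\mathbb{R}}\mathcal{B}^{\mathcal{J}+}=n(2n-1)$ and $\dim_{\mathbb{R}}\mathcal{B}^{\mathcal{J}-}=n(2n+1)$, so that
\begin{equation}
\dim_{\mathbb{R}}\Big(\mathcal{M}_{m}(\mathbb{R})_{\text{sa}}\odot\mathcal{M}_{n}(\mathbb{H})_{\text{sa}}\Big)=\frac{m(m+1)}{2}n(2n-1)+\frac{m(m-1)}{2}n(2n+1)=nm(2nm-1)\text{.}
\end{equation}
This equals $\dim_{\mathbb{R}}\mathcal{M}_{nm}(\mathbb{H})_{\text{sa}}$ and is strictly smaller than $\dim_{\mathbb{R}}\mathcal{M}_{2nm}(\mathbb{R})_{\text{sa}}=nm(2nm+1)$. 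Since the self-adjoint fixed points of the canonical antiautomorphism on the universal C$^{*}$-algebra of a universally reversible EJA recover that EJA exactly, the dimension match forces the isomorphism $\mathcal{M}_{m}(\mathbb{R})_{\text{sa}}\odot\mathcal{M}_{n}(\mathbb{H})_{\text{sa}}\cong\mathcal{M}_{nm}(\mathbb{H})_{\text{sa}}$, completing the proof.
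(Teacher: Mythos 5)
Your proposal is correct and follows essentially the same route as the paper's proof: generation of $\mathcal{M}_{2nm}(\mathbb{C})$ via \cref{lem1}, \cref{fac2}, and \cref{fac4}, the fixing antiautomorphism $\mathcal{T}\otimes\mathcal{J}$ fed into Hanche-Olsen's \cref{hoThm}, and the tie between $\mathcal{M}_{2nm}(\mathbb{R})_{\text{sa}}$ and $\mathcal{M}_{nm}(\mathbb{H})_{\text{sa}}$ broken by the identical eigenspace dimension count from \cref{lem2}, yielding $nm(2nm-1)=\mathrm{dim}_{\mathbb{R}}\mathcal{M}_{nm}(\mathbb{H})_{\text{sa}}$. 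Your additional justifications (invoking \cref{thm: composites special} for the rank condition and making the real-versus-quaternionic comparison explicit) are correct refinements of the same argument rather than a different method.
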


\begin{lemma}\label{quatQuatLem}
\begin{equation}
\mathcal{M}_{m}(\mathbb{H})_{\text{sa}}\odot\mathcal{M}_{n}(\mathbb{H})_{\text{sa}}\cong\mathcal{M}_{4nm}(\mathbb{R})_{\text{sa}}\textit{.}
\label{MHodotMH}
\end{equation}
\begin{proof} The proof is entirely similar to the proofs of \cref{realRealLem} and \cref{realQuatLem}. In particular,
\begin{equation}
C^{*}_{u}\Big(\mathcal{M}_{m}(\mathbb{H})_{\text{sa}}\odot\mathcal{M}_{n}(\mathbb{H})_{\text{sa}}\Big)\cong\mathcal{M}_{4nm}(\mathbb{C})\text{.}
\end{equation}
And
\begin{eqnarray}
\mathrm{dim}_{\mathbb{R}}\Big(\mathcal{M}_{m}(\mathbb{H})_{\text{sa}}\odot\mathcal{M}_{n}(\mathbb{H})_{\text{sa}}\Big)&=&m(2m-1)n(2n-1)+\big(4m^{2}-2m^{2}+m\big)\big(4n^{2}-2n^{2}+n\big)\\
&=&nm(4nm-2m-2n+1)+nm(2m+1)(2n+1)\\[0.2cm]
&=&nm(4nm-2m-2n+1+4nm+2m+2n+1)\\[0.2cm]
&=&nm(8nm+2)\\[0.2cm]
&=&\frac{4nm(4nm+1)}{2}\\
&=&\mathrm{dim}_{\mathbb{R}}\mathcal{M}_{4nm}(\mathbb{R})_{\text{sa}}\text{.}
\end{eqnarray}
\end{proof}
\end{lemma}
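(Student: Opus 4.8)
The plan is to follow the template established in the proofs of \cref{realRealLem} and \cref{realQuatLem}: first pin down the universal C$^{*}$-algebra of the composite, then use our catalogue of universal C$^{*}$-algebras of the simple \textsc{eja}s to narrow the candidate down to a \emph{single} simple matrix algebra, and finally settle whether that algebra is real or quaternionic by a dimension count. Throughout, both factors are taken in their standard (symplectic) embeddings, so that $\mathcal{M}_{m}(\mathbb{H})_{\text{sa}} \subset \mathcal{M}_{2m}(\mathbb{C})$ and $\mathcal{M}_{n}(\mathbb{H})_{\text{sa}} \subset \mathcal{M}_{2n}(\mathbb{C})$, and $\mathcal{M}_{m}(\mathbb{H})_{\text{sa}} \odot \mathcal{M}_{n}(\mathbb{H})_{\text{sa}}$ sits inside $\mathcal{M}_{2m}(\mathbb{C}) \otimes \mathcal{M}_{2n}(\mathbb{C}) = \mathcal{M}_{4nm}(\mathbb{C})$.

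The first step is to establish
\begin{equation}
C^{*}_{u}\Big(\mathcal{M}_{m}(\mathbb{H})_{\text{sa}}\odot\mathcal{M}_{n}(\mathbb{H})_{\text{sa}}\Big)\cong\mathcal{M}_{4nm}(\mathbb{C})\text{.}
\end{equation}
By \cref{lem1}, together with two applications of \cref{fac4}, the set $\pi_{\mathcal{M}_{m}(\mathbb{H})_{\text{sa}}}(\mathcal{M}_{m}(\mathbb{H})_{\text{sa}}) \otimes_{\mathbb{R}} \pi_{\mathcal{M}_{n}(\mathbb{H})_{\text{sa}}}(\mathcal{M}_{n}(\mathbb{H})_{\text{sa}})$ already generates $\mathcal{M}_{4nm}(\mathbb{C})$ as a C$^{*}$-algebra, hence so does the larger canonical tensor product. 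Next I would exhibit a $*$-antiautomorphism of $\mathcal{M}_{4nm}(\mathbb{C})$ fixing $\mathcal{M}_{m}(\mathbb{H})_{\text{sa}} \odot \mathcal{M}_{n}(\mathbb{H})_{\text{sa}}$ pointwise, namely $\mathcal{J}_{m} \otimes \mathcal{J}_{n}$, where $\mathcal{J}_{k}::x \longmapsto -J_{k}x^{\mathrm{T}}J_{k}$ is the symplectic involution of \eqref{phiQuat} whose self-adjoint fixed points are exactly the standardly embedded quaternionic self-adjoint matrices. Since the tensor product of two $*$-antiautomorphisms is again a $*$-antiautomorphism and fixes each pure tensor $a \otimes b$ with $a \in \mathcal{M}_{m}(\mathbb{H})_{\text{sa}}$, $b \in \mathcal{M}_{n}(\mathbb{H})_{\text{sa}}$, it fixes the Jordan algebra these generate. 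As in the previous lemmas, $\mathcal{M}_{m}(\mathbb{H})_{\text{sa}} \odot \mathcal{M}_{n}(\mathbb{H})_{\text{sa}}$ has rank at least $4$ and so is not a spin factor; by \cref{thm: composites special} it is universally reversible, and Hanche-Olsen's \cref{hoThm} then yields the displayed isomorphism.

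With the universal C$^{*}$-algebra identified as $\mathcal{M}_{4nm}(\mathbb{C})$, our computations \eqref{rnCstar} and \eqref{qnCstar} of the universal C$^{*}$-algebras of the real and quaternionic matrix algebras, together with the Jordan-von Neumann-Wigner classification \cref{jvwClass}, force the composite to be Jordan isomorphic to exactly one of $\mathcal{M}_{4nm}(\mathbb{R})_{\text{sa}}$ or $\mathcal{M}_{2nm}(\mathbb{H})_{\text{sa}}$. To decide between them I would count real dimensions using \cref{lem2}, applied with $\mathcal{A} \cong \mathcal{M}_{2m}(\mathbb{C})_{\text{sa}}$, $\mathcal{B} \cong \mathcal{M}_{2n}(\mathbb{C})_{\text{sa}}$, $\Phi_{\mathcal{A}} = \mathcal{J}_{m}$ and $\Phi_{\mathcal{B}} = \mathcal{J}_{n}$. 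The $+1$ eigenspace of $\mathcal{J}_{k}$ has dimension $\dim_{\mathbb{R}} \mathcal{M}_{k}(\mathbb{H})_{\text{sa}} = k(2k-1)$, and its $-1$ eigenspace has dimension $(2k)^{2} - k(2k-1) = k(2k+1)$. Substituting into \cref{lem2} gives
\begin{equation}
\dim_{\mathbb{R}}\Big(\mathcal{M}_{m}(\mathbb{H})_{\text{sa}}\odot\mathcal{M}_{n}(\mathbb{H})_{\text{sa}}\Big) = m(2m-1)\,n(2n-1) + m(2m+1)\,n(2n+1) = nm(8nm+2)\text{,}
\end{equation}
which equals $\frac{4nm(4nm+1)}{2} = \dim_{\mathbb{R}}\mathcal{M}_{4nm}(\mathbb{R})_{\text{sa}}$ rather than $\dim_{\mathbb{R}}\mathcal{M}_{2nm}(\mathbb{H})_{\text{sa}} = nm(8nm-2)$, so the real algebra is selected. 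Since the self-adjoint fixed points of the canonical involution on the universal C$^{*}$-algebra of a universally reversible \textsc{eja} are precisely the image of that \textsc{eja} under canonical injection, the isomorphism \eqref{MHodotMH} follows.

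The main obstacle, and the reason this case is genuinely distinct from the real-real and real-quaternionic cases rather than cosmetically identical, is the verification that $\mathcal{J}_{m} \otimes \mathcal{J}_{n}$ really is the relevant involution and that its fixed-point algebra is \emph{real}. Geometrically, each $\mathcal{J}_{k}$ is built from a symplectic matrix $J_{k}$ with $J_{k}^{2} = -\mathds{1}_{2k}$, but $K = J_{m} \otimes J_{n}$ satisfies $K^{2} = (-\mathds{1}_{2m}) \otimes (-\mathds{1}_{2n}) = +\mathds{1}_{4nm}$ and $K^{\mathrm{T}} = K$, so $z \mapsto Kz^{\mathrm{T}}K$ behaves like a transpose-type (orthogonal) involution; this is precisely why two quaternionic factors compose to a real algebra. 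The dimension-counting route I have sketched sidesteps an explicit structural analysis of this fixed-point algebra, letting \cref{lem2} and the classification do the work, and I expect the only delicate bookkeeping to be confirming that the rank is at least $4$ (so that \cref{thm: composites special} and \cref{hoThm} apply) and that the eigenspace dimensions of $\mathcal{J}_{k}$ are correctly identified.
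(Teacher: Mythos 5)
Your proposal is correct and follows exactly the route the paper takes (the paper's own proof simply says it is ``entirely similar'' to \cref{realRealLem} and \cref{realQuatLem} and records the same universal C$^{*}$-algebra and the same dimension count); you have merely filled in the details that the paper leaves implicit, namely generation of $\mathcal{M}_{4nm}(\mathbb{C})$ via \cref{lem1} and \cref{fac4}, the fixing antiautomorphism $\mathcal{J}_{m}\otimes\mathcal{J}_{n}$ feeding into \cref{hoThm}, and the eigenspace dimensions $k(2k-1)$ and $k(2k+1)$ entering \cref{lem2}. The arithmetic matches the paper's line by line, so nothing further is needed.
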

\noindent Next, we consider nontrivial canonical tensor products with respect to the standard embeddings where at least one factor is complex. Our technique used for the proofs of \cref{realRealLem}, \cref{realQuatLem}, and \cref{quatQuatLem} can not be applied in this case, because there is no *-antiautomorphism on $\mathcal{M}_{n}(\mathbb{C})$ whose fixed points are exactly the self-adjoint matrices (note that the usual adjoint is not an antiautomorphism because it is not $\mathbb{C}$-linear.) Instead, our proofs of the following lemmas rely on the structure of the Generalized Gell-Mann matrices reviewed in \cref{gmApp}.

\begin{lemma}\label{realComLem} $\mathcal{M}_{m}(\mathbb{R})_{\text{sa}}\odot\mathcal{M}_{n}(\mathbb{C})_{\text{sa}}=\mathcal{M}_{nm}(\mathbb{C})_{\text{sa}}$\textit{.}
\begin{proof}
By definition $\mathcal{M}_{m}(\mathbb{R})_{\text{sa}}\odot\mathcal{M}_{n}(\mathbb{C})_{\text{sa}}$ is the Jordan algebraic closure of 
\begin{equation}
\mathcal{M}_{m}(\mathbb{R})_{\text{sa}}\otimes_{\mathbb{R}}\mathcal{M}_{n}(\mathbb{C})_{\text{sa}}=\text{span}_{\mathbb{R}}\big\{a\otimes b:a\in\mathcal{M}_{m}(\mathbb{R})_{\text{sa}}\text{ and } b\in\mathcal{M}_{m}(\mathbb{C})_{\text{sa}}\big\}\subseteq\mathcal{M}_{nm}(\mathbb{C})_{\text{sa}}\text{.}
\end{equation}
Our method of proof will be to obtain the following inclusion by Jordan-algebraic generation
\begin{equation}
\mathcal{M}_{nm}(\mathbb{C})_{\text{sa}}\subseteq\mathcal{M}_{m}(\mathbb{R})_{\text{sa}}\odot\mathcal{M}_{n}(\mathbb{C})_{\text{sa}}\text{.}
\end{equation}
For reference, recall the Generalized Gell-Mann Matrices for $r,s\in\{1,\dots,m\}$:
\begin{equation}
G_{r,s}=
\begin{cases}
\textstyle{\frac{1}{\sqrt{2}}}\big(E_{r,s}+E_{s,r}\big)\in\mathcal{M}_{m}(\mathbb{R})_{\text{sa}}\subset\mathcal{M}_{m}(\mathbb{C})_{\text{sa}} & r<s \\
\textstyle{\frac{i}{\sqrt{2}}}\big(E_{r,s}-E_{s,r}\big)\in\mathcal{M}_{m}(\mathbb{C})_{\text{sa}} & s<r \\
\textstyle{\frac{1}{\sqrt{r(r+1)}}}\big(-rE_{r+1,r+1}+\sum_{k=1}^{r}E_{k,k}\big)\in\mathcal{M}_{m}(\mathbb{R})_{\text{sa}}\subset\mathcal{M}_{m}(\mathbb{C})_{\text{sa}}  & s=r\neq n
\end{cases}
\label{GMm}
\end{equation}
For convenience introduce for $j,k\in\{1,\dots,n\}$:
\begin{equation}
H_{r,s}=
\begin{cases}
\textstyle{\frac{1}{\sqrt{2}}}\big(F_{j,k}+F_{k,j}\big)\in\mathcal{M}_{n}(\mathbb{C})_{\text{sa}} & j<k \\
\textstyle{\frac{i}{\sqrt{2}}}\big(F_{j,k}-F_{k,j}\big)\in\mathcal{M}_{n}(\mathbb{C})_{\text{sa}} & k<j \\
\textstyle{\frac{1}{\sqrt{j(j+1)}}}\big(-jF_{j+1,j+1}+\sum_{l=1}^{j}F_{k,k}\big)\in\mathcal{M}_{n}(\mathbb{C})_{\text{sa}}  & j=k\neq n
\end{cases}
\label{GMc}
\end{equation}
where $F_{j,k}$ are defined just like $E_{r,s}$ (a symbolic distinction that we introduce to signal a possibly different underlying dimensionality.) 

\noindent We will now generate $i\big(E_{r,s}-E_{s,r}\big)/\sqrt{2}$ to appear in the left factor. Observe that for any $r<s\in\{1,\dots,m\}$ and for any $j<k\in\{1,\dots,n\}$
\begin{eqnarray}
Z_{r,s}&=E_{r,r}-E_{s,s}&\in\mathcal{M}_{m}(\mathbb{R})_{\text{sa}}\\
W_{j,k}&=F_{j,j}-F_{k,k}&\in\mathcal{M}_{n}(\mathbb{C})_{\text{sa}}
\end{eqnarray} 
and one therefore has $Z_{r,s}\otimes W_{j,k}\in\mathcal{M}_{m}(\mathbb{R})_{\text{sa}}\odot\mathcal{M}_{n}(\mathbb{C})_{\text{sa}}$. One also has for any $r<s\in\{1,\dots,m\}$ and for any $j<k\in\{1,\dots,n\}$ that $G_{r,s}\otimes H_{j,k}\in\mathcal{M}_{m}(\mathbb{R})_{\text{sa}}\odot\mathcal{M}_{n}(\mathbb{C})_{\text{sa}}$. Their Jordan algebraic product is contained in the canonical tensor product as well, explicitly
\begin{eqnarray}
\big(Z_{r,s}\otimes W_{j,k}\big)\jProd \big(G_{r,s}\otimes H_{j,k}\big)&=&\Big(Z_{r,s}G_{r,s}\otimes W_{j,k}H_{j,k}+G_{r,s}Z_{r,s}\otimes H_{j,k}W_{j,k}\Big)/2\nonumber\\
&=&\Big(\big(E_{r,r}-E_{s,s}\big)\big(E_{r,s}+E_{s,r}\big)\otimes\big(F_{j,j}-F_{k,k}\big)\big(F_{j,k}+F_{k,j}\big)\Big)/4\nonumber\\
&+&\Big(\big(E_{r,s}+E_{s,r}\big)\big(E_{r,r}-E_{s,s}\big)\otimes\big(F_{j,k}+F_{k,j}\big)\big(F_{j,j}-F_{k,k}\big)\Big)/4\nonumber\\
&=&\Big(\big(E_{r,s}-E_{s,r}\big)\otimes\big(F_{j,k}-F_{k,j}\big)\Big)/4\nonumber\\
&+&\Big(\big(-E_{r,s}+E_{s,r}\big)\otimes\big(-F_{j,k}+F_{k,j}\big)\Big)/4\nonumber\\
&=&\Big(\big(E_{r,s}-E_{s,r}\big)\otimes\big(F_{j,k}-F_{k,j}\big)\Big)/2\in\mathcal{M}_{m}(\mathbb{R})_{\text{sa}}\odot\mathcal{M}_{n}(\mathbb{C})_{\text{sa}}
\end{eqnarray}
So $\big(E_{r,s}-E_{s,r}\big)\otimes\big(F_{j,k}-F_{k,j}\big)\in\mathcal{M}_{m}(\mathbb{R})_{\text{sa}}\odot\mathcal{M}_{n}(\mathbb{C})_{\text{sa}}$. Also, $\mathds{1}_{m}\otimes i\big(F_{j,k}-F_{k,j}\big)\in\mathcal{M}_{m}(\mathbb{R})_{\text{sa}}\odot\mathcal{M}_{n}(\mathbb{C})_{\text{sa}}$. Observe
\begin{eqnarray}
&&\Big(\big(E_{r,s}-E_{s,r}\big)\otimes\big(F_{j,k}-F_{k,j}\big)\Big)\jProd\Big(\mathds{1}_{m}\otimes i\big(F_{j,k}-F_{k,j}\big)\Big)\nonumber\\
&=&i\big(E_{r,s}-E_{s,r}\big)\otimes\big((F_{j,k}-F_{k,j})(F_{j,k}-F_{k,j})\big)\nonumber\\
&=&i\big(E_{r,s}-E_{s,r}\big)\otimes\big(F_{j,j}-F_{k,k}\big)\in\mathcal{M}_{m}(\mathbb{R})_{\text{sa}}\odot\mathcal{M}_{n}(\mathbb{C})_{\text{sa}}
\end{eqnarray}
Of course, $\mathds{1}_{m}\otimes F_{k,k}\in\mathcal{M}_{m}(\mathbb{R})_{\text{sa}}\odot\mathcal{M}_{n}(\mathbb{C})_{\text{sa}}$. So, 
\begin{equation}
i\big(E_{r,s}-E_{s,r}\big)\otimes\big(F_{j,j}-F_{k,k}\big)\jProd(\mathds{1}_{m}\otimes F_{j,j})=i\big(E_{r,s}-E_{s,r}\big)\otimes F_{j,j}\in\mathcal{M}_{m}(\mathbb{R})_{\text{sa}}\odot\mathcal{M}_{n}(\mathbb{C})_{\text{sa}}\text{.}
\end{equation} 
Note $\sum_{j=1}^{n}F_{j,j}=\mathds{1}_{n}$. Thus, by $\mathbb{R}$-linear generation:
\begin{equation}
\textstyle{\frac{i}{\sqrt{2}}}\big(E_{r,s}-E_{s,r}\big)\otimes\mathds{1}_{n}\in\mathcal{M}_{m}(\mathbb{R})_{\text{sa}}\odot\mathcal{M}_{n}(\mathbb{C})_{\text{sa}}\text{.}
\end{equation}
Therefore, for \textit{any} $r,s\in\{1,\dots,m\}$, and \textit{any} $j,k\in\{1,\dots,n\}$, one has that $\{G_{r,s}\}\otimes\mathds{1}_{n}$ and $\mathds{1}_{m}\otimes\{H_{j,k}\}$ are both in the canonical tensor product. Of course, so too is $\mathds{1}_{m}\otimes\mathds{1}_{n}$.
\end{proof}
\end{lemma}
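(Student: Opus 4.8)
The plan is to establish the nontrivial inclusion $\mathcal{M}_{nm}(\mathbb{C})_{\text{sa}}\subseteq\mathcal{M}_{m}(\mathbb{R})_{\text{sa}}\odot\mathcal{M}_{n}(\mathbb{C})_{\text{sa}}$; the reverse inclusion is immediate from \cref{def: canonical tensor product}, since $\mathcal{M}_{m}(\mathbb{R})_{\text{sa}}\odot\mathcal{M}_{n}(\mathbb{C})_{\text{sa}}=\mathfrak{j}(\mathcal{M}_{m}(\mathbb{R})_{\text{sa}}\otimes\mathcal{M}_{n}(\mathbb{C})_{\text{sa}})$ sits inside $(\mathcal{M}_{m}(\mathbb{C})\otimes\mathcal{M}_{n}(\mathbb{C}))_{\text{sa}}=\mathcal{M}_{nm}(\mathbb{C})_{\text{sa}}$ by construction. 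The antiautomorphism technique used for \cref{realRealLem}, \cref{realQuatLem}, and \cref{quatQuatLem} is unavailable here, because no $*$-antiautomorphism of $\mathcal{M}_{n}(\mathbb{C})$ fixes exactly the self-adjoint matrices (the usual adjoint is not $\mathbb{C}$-linear). So instead I would argue by explicit Jordan-algebraic generation using the generalized Gell-Mann matrices reviewed in \cref{gmApp}.

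The key observation is that a real orthonormal basis of $\mathcal{M}_{nm}(\mathbb{C})_{\text{sa}}$ is furnished by the pure tensors $G_{r,s}\otimes H_{j,k}$, where $G_{r,s}$ and $H_{j,k}$ run over the generalized Gell-Mann matrices of $\mathcal{M}_{m}(\mathbb{C})_{\text{sa}}$ and $\mathcal{M}_{n}(\mathbb{C})_{\text{sa}}$. Since $A\otimes\mathds{1}_{n}$ and $\mathds{1}_{m}\otimes B$ operator-commute, their Jordan product is the ordinary tensor product $A\otimes B$; hence it suffices to show that every $G_{r,s}\otimes\mathds{1}_{n}$, every $\mathds{1}_{m}\otimes H_{j,k}$, and $\mathds{1}_{m}\otimes\mathds{1}_{n}$ lie in $\mathcal{M}_{m}(\mathbb{R})_{\text{sa}}\odot\mathcal{M}_{n}(\mathbb{C})_{\text{sa}}$. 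The factors $\mathds{1}_{m}\otimes H_{j,k}$ and $\mathds{1}_{m}\otimes\mathds{1}_{n}$ are present immediately, as each $H_{j,k}\in\mathcal{M}_{n}(\mathbb{C})_{\text{sa}}$; likewise $G_{r,s}\otimes\mathds{1}_{n}$ is immediate whenever $G_{r,s}$ is one of the real-symmetric Gell-Mann matrices (the symmetric off-diagonal generators and the diagonal generators), since those already lie in $\mathcal{M}_{m}(\mathbb{R})_{\text{sa}}$.

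The single obstacle, and the crux of the argument, is the generation of the imaginary antisymmetric generators $\tfrac{i}{\sqrt{2}}(E_{r,s}-E_{s,r})\otimes\mathds{1}_{n}$, which are absent from the left factor. The idea is to transfer the ``imaginary'' character from the right factor onto the antisymmetric combination on the left via a short chain of Jordan products. Concretely, starting from real generators such as $Z_{r,s}=E_{r,r}-E_{s,s}\in\mathcal{M}_{m}(\mathbb{R})_{\text{sa}}$ and $G_{r,s}\in\mathcal{M}_{m}(\mathbb{R})_{\text{sa}}$ together with complex generators $W_{j,k}$, $H_{j,k}$, and $i(F_{j,k}-F_{k,j})\in\mathcal{M}_{n}(\mathbb{C})_{\text{sa}}$: a first Jordan product of the form $(Z_{r,s}\otimes W_{j,k})\jProd(G_{r,s}\otimes H_{j,k})$ produces, up to a scalar, $(E_{r,s}-E_{s,r})\otimes(F_{j,k}-F_{k,j})$; Jordan-multiplying this by $\mathds{1}_{m}\otimes i(F_{j,k}-F_{k,j})$ converts it into $i(E_{r,s}-E_{s,r})$ tensored with a combination of diagonal matrix units; a further Jordan product with $\mathds{1}_{m}\otimes F_{j,j}$ localizes the right factor to a single $F_{j,j}$; and finally summing over $j$ via $\sum_{j}F_{j,j}=\mathds{1}_{n}$ delivers $\tfrac{i}{\sqrt{2}}(E_{r,s}-E_{s,r})\otimes\mathds{1}_{n}$ by real-linearity. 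The delicate part is verifying these product identities, since the cross terms must cancel in exactly the right way; this book-keeping is where the real work lies.

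Having thus placed every $G_{r,s}\otimes\mathds{1}_{n}$ (including the imaginary generators), every $\mathds{1}_{m}\otimes H_{j,k}$, and $\mathds{1}_{m}\otimes\mathds{1}_{n}$ in the canonical tensor product, I would recover each basis element $G_{r,s}\otimes H_{j,k}$ as the operator-commuting Jordan product $(G_{r,s}\otimes\mathds{1}_{n})\jProd(\mathds{1}_{m}\otimes H_{j,k})$. Since these span $\mathcal{M}_{nm}(\mathbb{C})_{\text{sa}}$ over $\mathbb{R}$, the reverse inclusion follows, completing the proof.
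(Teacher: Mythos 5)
Your proposal is correct and follows essentially the same route as the paper's own proof: reduce to generating the generalized Gell-Mann basis, observe that the only nontrivial elements are the imaginary antisymmetric generators $\tfrac{i}{\sqrt{2}}(E_{r,s}-E_{s,r})\otimes\mathds{1}_{n}$, and produce these via the same chain of Jordan products starting from $(Z_{r,s}\otimes W_{j,k})\jProd(G_{r,s}\otimes H_{j,k})$, then multiplying by $\mathds{1}_{m}\otimes i(F_{j,k}-F_{k,j})$ and $\mathds{1}_{m}\otimes F_{j,j}$ and summing over $j$. Your closing step, recovering each $G_{r,s}\otimes H_{j,k}$ as the Jordan product of operator-commuting factors, is only implicit in the paper but is the right way to finish.
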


\begin{lemma}\label{comComLem} $\mathcal{M}_{m}(\mathbb{C})_{\text{sa}}\odot\mathcal{M}_{n}(\mathbb{C})_{\text{sa}}=\mathcal{M}_{nm}(\mathbb{C})_{\text{sa}}$\textit{.}
\begin{proof} Trivial, by definition $\mathcal{M}_{m}(\mathbb{C})_{\text{sa}}\otimes_{\mathbb{R}}\mathcal{M}_{n}(\mathbb{C})_{\text{sa}}\cong\mathcal{M}_{nm}(\mathbb{C})_{\text{sa}}$.
\end{proof}
\end{lemma}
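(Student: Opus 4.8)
The plan is to unwind the definition of the canonical tensor product (\cref{def: canonical tensor product}) in the complex case and observe that the real-linear span of the pure tensors already exhausts the full self-adjoint part of the ambient matrix algebra, so that the Jordan closure adds nothing. First I would record the ambient $*$-algebras. Under the standard embedding of Eq.~\eqref{complexEmbed}, one has $\mathbf{M}_{\mathcal{M}_{m}(\mathbb{C})_{\text{sa}}}=\mathcal{M}_{m}(\mathbb{C})$ and $\mathbf{M}_{\mathcal{M}_{n}(\mathbb{C})_{\text{sa}}}=\mathcal{M}_{n}(\mathbb{C})$, whence the ambient algebra of the canonical product is $\mathcal{M}_{m}(\mathbb{C})\otimes\mathcal{M}_{n}(\mathbb{C})=\mathcal{M}_{nm}(\mathbb{C})$, and by definition $\mathcal{M}_{m}(\mathbb{C})_{\text{sa}}\odot\mathcal{M}_{n}(\mathbb{C})_{\text{sa}}=\mathfrak{j}\big(\mathcal{M}_{m}(\mathbb{C})_{\text{sa}}\otimes_{\mathbb{R}}\mathcal{M}_{n}(\mathbb{C})_{\text{sa}}\big)\subseteq\mathcal{M}_{nm}(\mathbb{C})_{\text{sa}}$.

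The crux is the claim $\mathcal{M}_{m}(\mathbb{C})_{\text{sa}}\otimes_{\mathbb{R}}\mathcal{M}_{n}(\mathbb{C})_{\text{sa}}=\mathcal{M}_{nm}(\mathbb{C})_{\text{sa}}$. One inclusion is immediate: for self-adjoint $a$ and $b$ one has $(a\otimes b)^{*}=a^{*}\otimes b^{*}=a\otimes b$, so every pure tensor, and hence every real-linear combination, lies in $\mathcal{M}_{nm}(\mathbb{C})_{\text{sa}}$. For the reverse inclusion I would exhibit a spanning set. Let $\{G_{r,s}\}$ be the generalized Gell-Mann matrices of $\mathcal{M}_{m}(\mathbb{C})$ together with the identity, reviewed in \cref{gmApp}: these are self-adjoint and form a Hilbert--Schmidt orthogonal $\mathbb{R}$-basis of $\mathcal{M}_{m}(\mathbb{C})_{\text{sa}}$, and similarly let $\{H_{j,k}\}$ be the corresponding basis of $\mathcal{M}_{n}(\mathbb{C})_{\text{sa}}$. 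The $m^{2}n^{2}$ matrices $G_{r,s}\otimes H_{j,k}$ are again self-adjoint, and since $\mathrm{Tr}\big((G_{r,s}\otimes H_{j,k})(G_{r',s'}\otimes H_{j',k'})\big)=\mathrm{Tr}(G_{r,s}G_{r',s'})\,\mathrm{Tr}(H_{j,k}H_{j',k'})$, they are pairwise orthogonal, hence $\mathbb{R}$-linearly independent. As $\mathrm{dim}_{\mathbb{R}}\mathcal{M}_{nm}(\mathbb{C})_{\text{sa}}=(nm)^{2}=m^{2}n^{2}$, they constitute a basis of $\mathcal{M}_{nm}(\mathbb{C})_{\text{sa}}$, so the real span of the pure tensors is all of $\mathcal{M}_{nm}(\mathbb{C})_{\text{sa}}$.

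Finally, since $\mathcal{M}_{nm}(\mathbb{C})_{\text{sa}}$ is the self-adjoint part of a $*$-algebra it is already closed under the Jordan product $a\jProd b=(ab+ba)/2$, so its Jordan algebraic closure is itself; combining the previous steps gives $\mathcal{M}_{m}(\mathbb{C})_{\text{sa}}\odot\mathcal{M}_{n}(\mathbb{C})_{\text{sa}}=\mathfrak{j}\big(\mathcal{M}_{nm}(\mathbb{C})_{\text{sa}}\big)=\mathcal{M}_{nm}(\mathbb{C})_{\text{sa}}$. There is essentially no obstacle here: in contrast to the real and quaternionic lemmas (\cref{realRealLem}, \cref{realQuatLem}, \cref{quatQuatLem}), which needed the universal $C^{*}$-algebra and fixed-point-of-antiautomorphism machinery of \cref{hoThm}, the complex case bypasses all of that precisely because no proper antiautomorphism is required---the real span of products of self-adjoint matrices already fills the self-adjoint part. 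The only point demanding the slightest care is verifying that this real span, as opposed to its Jordan closure, is already complete, which the orthogonal Gell-Mann basis settles by the dimension count above.
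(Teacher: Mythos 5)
Your proof is correct and is essentially the paper's proof with the "trivial, by definition" step made explicit: the paper likewise rests the whole lemma on the identity $\mathcal{M}_{m}(\mathbb{C})_{\text{sa}}\otimes_{\mathbb{R}}\mathcal{M}_{n}(\mathbb{C})_{\text{sa}}\cong\mathcal{M}_{nm}(\mathbb{C})_{\text{sa}}$, which you verify by the Gell-Mann basis and the dimension count $m^{2}n^{2}=(nm)^{2}$. Your closing observation that the Jordan closure then adds nothing is exactly the point the paper leaves implicit.
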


\begin{lemma}\label{quatComLem} $\mathcal{M}_{m}(\mathbb{H})_{\text{sa}}\odot\mathcal{M}_{n}(\mathbb{C})_{\text{sa}}=\mathcal{M}_{2nm}(\mathbb{C})_{\text{sa}}$\textit{.}
\begin{proof} We begin with a quaternionic generalization of the Generalized Gell-Mann Matrices, which we will call the \textit{quaternionic Gell-Mann matrices}. Let $r,s\in\{1,\dots,m\}$. Define
\begin{eqnarray}
Q_{o_{r,s}}=\textstyle{\frac{1}{\sqrt{2}}}\big(E_{r,s}+E_{s,r}\big)&\hspace{0.3cm}\text{ for } r<s\\
Q_{i_{r,s}}=\textstyle{\frac{i}{\sqrt{2}}}\big(E_{r,s}-E_{s,r}\big)&\hspace{0.3cm}\text{ for } s<r\\
Q_{j_{r,s}}=\textstyle{\frac{j}{\sqrt{2}}}\big(E_{r,s}-E_{s,r}\big)&\hspace{0.3cm}\text{ for } s<r\\
Q_{k_{r,s}}=\textstyle{\frac{k}{\sqrt{2}}}\big(E_{r,s}-E_{s,r}\big)&\hspace{0.3cm}\text{ for } s<r\\
Q_{l_{r,r}}=\textstyle{\frac{1}{\sqrt{r(r+1)}}}\big(-rE_{r+1,r+1}+\sum_{k=1}^{r}E_{k,k}\big)&\hspace{0.3cm}\text{ for } r=s\neq m
\end{eqnarray}
We have defined $4m(m-1)/2+(m-1)=2m^2-m-1=m(2m-1)-1$ traceless $\mathbb{R}$-linearly independent self-adjoint $m\times m$ quaternionic matrices. Thus the cardinality of the $\mathbb{R}$-linearly independent set $\{Q_{o_{r,s}},Q_{i_{r,s}},Q_{j_{r,s}},Q_{k_{r,s}},Q_{l_{r,r}},\mathds{1}_{m}\}$ is $m(2m-1)$, which is equal to the dimension of $\mathcal{M}_{m}(\mathbb{H})_{\text{sa}}$. Exclamation point! Let us now consider the symplectic (\textit{i.e}.\ \textit{standard}) embedding of the quaternionic Gell-Mann matrices. Let $\pi$ be the symplectic embedding. Recall \cite{Graydon2011}:
\begin{equation}
\pi:\mathcal{M}_{m}(\mathbb{H})_{\text{sa}}\longrightarrow\mathcal{M}_{2m}(\mathbb{C})_{\text{sa}}::\Gamma_{1}+\Gamma_{2}j\longmapsto\begin{pmatrix} \;\;\;\Gamma_{1} & \Gamma_{2} \\ -\overline{\Gamma_{2}} & \overline{\Gamma_{1}} \end{pmatrix}\text{.}
\end{equation}
Let $F_{p,q}=|f_{p}\rangle\langle f_{q}|\in\mathcal{M}_{2m}(\mathbb{C})$ where $\big\{|f_{p}\rangle:j\in\{1,\dots,2m\}\big\}$ is the standard orthonormal basis for $\mathbb{C}^{2m}$. Then
\begin{eqnarray}
Q_{o_{r,s}}&\stackrel{\pi}{\longmapsto}&\textstyle{\frac{1}{\sqrt{2}}}\big(F_{r,s}+F_{s,r}+F_{r+m,s+m}+F_{s+m,r+m}\big)\\
Q_{i_{r,s}}&\stackrel{\pi}{\longmapsto}&\textstyle{\frac{i}{\sqrt{2}}}\big(F_{r,s}-F_{s,r}-F_{r+m,s+m}+F_{s+m,r+m}\big)\\
Q_{j_{r,s}}&\stackrel{\pi}{\longmapsto}&\textstyle{\frac{1}{\sqrt{2}}}\big(F_{r,s+m}-F_{s,r+m}-F_{r+m,s}+F_{s+m,r}\big)\\
Q_{k_{r,s}}&\stackrel{\pi}{\longmapsto}&\textstyle{\frac{i}{\sqrt{2}}}\big(F_{r,s+m}-F_{s,r+m}+F_{r+m,s}-F_{s+m,r}\big)\\
Q_{l_{r,r}}&\stackrel{\pi}{\longmapsto}&\textstyle{\frac{1}{\sqrt{r(r+1)}}}\big(-rF_{r+1,r+1}+\sum_{k=1}^{r}F_{k,k}-rF_{r+1+m,r+1+m}+\sum_{k=1}^{r}F_{k+m,k+m}\big)
\end{eqnarray}
Let $G_{p,q}$ be the Generalized Gell-Mann Matrices for $\mathcal{M}_{2m}(\mathbb{C})$; with $p,q\in\{1,\dots,2m\}$:
\begin{equation}
G_{p,q}=
\begin{cases}
\textstyle{\frac{1}{\sqrt{2}}}\big(F_{p,q}+F_{q,p}\big) & p<q \\
\textstyle{\frac{i}{\sqrt{2}}}\big(F_{p,q}-F_{q,p}\big) & q<p \\
\textstyle{\frac{1}{\sqrt{p(p+1)}}}\big(-pE_{p+1,p+1}+\sum_{k=1}^{p}E_{k,k}\big) & p=q\neq 2m
\end{cases}
\label{GM2m}
\end{equation}
Let $L_{t,v}=|g_{t}\rangle\langle g_{v}|\in\mathcal{M}_{n}(\mathbb{C})$ where $\big\{|g_{t}\rangle:t\in\{1,\dots,n\}\big\}$ is the standard orthonormal basis for $\mathbb{C}^{n}$. Let $H_{t,v}$ be the Generalized Gell-Mann Matrices for $\mathcal{M}_{n}(\mathbb{C})$; with $t,v\in\{1,\dots,n\}$:
\begin{equation}
H_{t,v}=
\begin{cases}
\textstyle{\frac{1}{\sqrt{2}}}\big(L_{t,v}+L_{v,t}\big) & t<v \\
\textstyle{\frac{i}{\sqrt{2}}}\big(L_{t,v}-L_{v,t}\big) & v<t \\
\textstyle{\frac{1}{\sqrt{t(t+1)}}}\big(-tL_{t+1,t+1}+\sum_{k=1}^{t}L_{k,k}\big) & t=v\neq n
\end{cases}
\label{GMn}
\end{equation}
Now, by definition $\mathcal{M}_{m}(\mathbb{H})_{\text{sa}}\odot\mathcal{M}_{n}(\mathbb{C})_{\text{sa}}\subseteq\mathcal{M}_{2nm}(\mathbb{C})_{\text{sa}}$. We will establish the reverse inclusion, explicitly put: $\mathcal{M}_{2nm}(\mathbb{C})_{\text{sa}}\subseteq\mathcal{M}_{m}(\mathbb{H})_{\text{sa}}\odot\mathcal{M}_{n}(\mathbb{C})_{\text{sa}}$, by proving that one can generate $\{G_{p,q}\}\otimes\mathds{1}_{n}$ --- obviously $\mathds{1}_{2m}\otimes\{H_{t,v}\}$ is in the canonical tensor product; indeed the real vector space tensor product to begin with. So the task becomes generating the $G_{p,q}$ from the images of the $m\times m$ self-adjoint quaternionic matrices under $\pi$. This will be tedious.

\noindent To start, we compute
\begin{eqnarray}
&&-8\Big(\pi\big(Q_{o_{r,s}}\big)\otimes H_{t<v,v}\Big)\jProd\Big(\pi\big(Q_{i_{r,s}}\big)\otimes H_{t>v,v}\Big)\nonumber\\
&=&\big(-F_{r,r}+F_{s,s}+F_{r+m,r+m}-F_{s+m,s+m}\big)\otimes\big(-L_{t,t}+L_{v,v}\big)\nonumber\\
&+&\big(F_{r,r}-F_{s,s}-F_{r+m,r+m}+F_{s+m,r+m}\big)\otimes\big(L_{t,t}-L_{v,v}\big)\nonumber\\
&=&\big(-F_{r,r}+F_{s,s}+F_{r+m,r+m}-F_{s+m,s+m}\big)\otimes\big(-L_{t,t}+L_{v,v}\big)\nonumber\\
&+&(-1)^{2}\big(-F_{r,r}+F_{s,s}+F_{r+m,r+m}-F_{s+m,s+m}\big)\otimes\big(-L_{t,t}+L_{v,v}\big)\nonumber\\
&=&2\big(-F_{r,r}+F_{s,s}+F_{r+m,r+m}-F_{s+m,s+m}\big)\otimes\big(-L_{t,t}+L_{v,v}\big)\nonumber\\
&\equiv&A\in\mathcal{M}_{m}(\mathbb{H})_{\text{sa}}\odot\mathcal{M}_{n}(\mathbb{C})_{\text{sa}}
\end{eqnarray}
Then, using $\sum_{v=1}^{n}L_{v,v}=\mathds{1}_{n}$, we compute
\begin{equation}
\textstyle{\frac{1}{4}}\sum_{v=1}^{n} A\jProd\big(\mathds{1}_{2m}\otimes L_{v,v}\big)=\big(-F_{r,r}+F_{s,s}+F_{r+m,r+m}-F_{s+m,s+m}\big)\otimes\mathds{1}_{n}\equiv B\in\mathcal{M}_{m}(\mathbb{H})_{\text{sa}}\odot\mathcal{M}_{n}(\mathbb{C})_{\text{sa}}\text{.}
\end{equation}
We will need the following image: $E_{r,r}\stackrel{\pi}\longmapsto F_{r,r}+F_{r+m,r+m}$. With this, we compute
\begin{equation}
B\jProd\Big(\big(F_{r,r}+F_{r+m,r+m}\big)\otimes\mathds{1}_{n}\Big)=\big(-F_{r,r}+F_{r+m,r+m}\big)\otimes\mathds{1}_{n}\equiv C\in\mathcal{M}_{m}(\mathbb{H})_{\text{sa}}\odot\mathcal{M}_{n}(\mathbb{C})_{\text{sa}}
\end{equation}
Thus
\begin{eqnarray}
\textstyle{\frac{1}{2}}\Big(C+\pi\big(E_{r,r}\big)\otimes\mathds{1}_{n}\Big)&=&F_{r+m,r+m}\otimes\mathds{1}_{n}\in\mathcal{M}_{m}(\mathbb{H})_{\text{sa}}\odot\mathcal{M}_{n}(\mathbb{C})_{\text{sa}}\label{Frpm}\\
-\textstyle{\frac{1}{2}}\Big(C-\pi\big(E_{r,r}\big)\otimes\mathds{1}_{n}\Big)&=&F_{r,r}\label{Fr}\otimes\mathds{1}\in\mathcal{M}_{m}(\mathbb{H})_{\text{sa}}\odot\mathcal{M}_{n}(\mathbb{C})_{\text{sa}}\text{.}
\end{eqnarray}
So
\begin{eqnarray}
2\Big(\pi\big(Q_{o_{r,s}}\big)\otimes\mathds{1}_{n}\Big)\jProd\Big(F_{r,r}\otimes\mathds{1}_{n}\Big)&=&\textstyle{\frac{1}{\sqrt{2}}}\big(F_{r,s}+F_{s,r}\big)\otimes\mathds{1}_{n}\in\mathcal{M}_{m}(\mathbb{H})_{\text{sa}}\odot\mathcal{M}_{n}(\mathbb{C})_{\text{sa}}\label{L1}\\
2\Big(\pi\big(Q_{o_{r,s}}\big)\otimes\mathds{1}_{n}\Big)\jProd\Big(F_{r+m,r+m}\otimes\mathds{1}_{n}\Big)&=&\textstyle{\frac{1}{\sqrt{2}}}\big(F_{r+m,s+m}+F_{s+m,r+m}\big)\otimes\mathds{1}_{n}\in\mathcal{M}_{m}(\mathbb{H})_{\text{sa}}\odot\mathcal{M}_{n}(\mathbb{C})_{\text{sa}}\label{L2}\\
2\Big(\pi\big(Q_{i_{r,s}}\big)\otimes\mathds{1}_{n}\Big)\jProd\Big(F_{r,r}\otimes\mathds{1}_{n}\Big)&=&\textstyle{\frac{i}{\sqrt{2}}}\big(F_{r,s}-F_{s,r}\big)\otimes\mathds{1}_{n}\in\mathcal{M}_{m}(\mathbb{H})_{\text{sa}}\odot\mathcal{M}_{n}(\mathbb{C})_{\text{sa}}\label{L3}\\
-2\Big(\pi\big(Q_{i_{r,s}}\big)\otimes\mathds{1}_{n}\Big)\jProd\Big(F_{r+m,r+m}\otimes\mathds{1}_{n}\Big)&=&\textstyle{\frac{i}{\sqrt{2}}}\big(F_{r+m,s+m}-F_{s+m,r+m}\big)\otimes\mathds{1}_{n}\in\mathcal{M}_{m}(\mathbb{H})_{\text{sa}}\odot\mathcal{M}_{n}(\mathbb{C})_{\text{sa}}\label{L4}
\end{eqnarray}
Next, we compute
\begin{eqnarray}
&&-8\Big(\pi\big(Q_{o_{r,s}}\big)\otimes H_{t<v,v}\Big)\jProd\Big(\pi\big(Q_{k_{r,s}}\big)\otimes H_{t>v,v}\Big)\nonumber\\
&=&\big(-F_{r,r+m}+F_{s,s+m}-F_{r+m,r}+F_{s+m,s}\big)\otimes\big(-L_{t,t}+L_{v,v}\big)\nonumber\\
&+&\big(F_{r,r+m}-F_{s,s+m}+F_{r+m,r}-F_{s+m,s}\big)\otimes\big(L_{t,t}-L_{v,v}\big)\nonumber\\
&=&\big(-F_{r,r+m}+F_{s,s+m}-F_{r+m,r}+F_{s+m,s}\big)\otimes\big(-L_{t,t}+L_{v,v}\big)\nonumber\\
&+&(-1)^{2}\big(-F_{r,r+m}+F_{s,s+m}-F_{r+m,r}+F_{s+m,s}\big)\otimes\big(-L_{t,t}+L_{v,v}\big)\nonumber\\
&=&2\big(-F_{r,r+m}+F_{s,s+m}-F_{r+m,r}+F_{s+m,s}\big)\otimes\big(-L_{t,t}+L_{v,v}\big)\nonumber\\
&\equiv& A'\in\mathcal{M}_{m}(\mathbb{H})_{\text{sa}}\odot\mathcal{M}_{n}(\mathbb{C})_{\text{sa}}\text{.}
\end{eqnarray}
Again, using $\sum_{v=1}^{n}L_{v,v}=\mathds{1}_{n}$, we compute
\begin{equation}
\textstyle{\frac{1}{4}}\sum_{v=1}^{n}A'\jProd\big(\mathds{1}_{2m}\otimes L_{v,v}\big)=\big(-F_{r,r+m}+F_{s,s+m}-F_{r+m,r}+F_{s+m,s}\big)\otimes\mathds{1}_{n}\equiv B'\in\mathcal{M}_{m}(\mathbb{H})_{\text{sa}}\odot\mathcal{M}_{n}(\mathbb{C})_{\text{sa}}\text{.}
\label{Bprime}
\end{equation}
From Eqs.~\eqref{L1} and \eqref{Fr} we then compute using Eq.~\eqref{Bprime}
\begin{eqnarray}
&&4\left(B'\jProd\textstyle{\frac{1}{\sqrt{2}}}\big(F_{r,s}+F_{s,r}\big)\otimes\mathds{1}_{n}\right)\jProd\Big(F_{r,r}\otimes\mathds{1}_{n}\Big)\nonumber\\
&=&2\left(\textstyle{\frac{1}{\sqrt{2}}}\big(-F_{r+m,s}+F_{s+m,r}+F_{r,s+m}-F_{s,r+m}\big)\otimes\mathds{1}_{n}\right)\jProd\Big(F_{r,r}\otimes\mathds{1}_{n}\Big)\nonumber\\
&=&\textstyle{\frac{1}{\sqrt{2}}}\big(F_{r,s+m}+F_{s+m,r}\big)\otimes\mathds{1}_{n}\in\mathcal{M}_{m}(\mathbb{H})_{\text{sa}}\odot\mathcal{M}_{n}(\mathbb{C})_{\text{sa}}\text{.}
\label{L5}
\end{eqnarray}
From Eqs.~\eqref{L2} and \eqref{Frpm} we then compute using Eq.~\eqref{Bprime}
\begin{eqnarray}
&&4\left(B'\jProd\textstyle{\frac{1}{\sqrt{2}}}\big(F_{r+m,s+m}+F_{s+m,r+m}\big)\otimes\mathds{1}_{n}\right)\jProd\Big(F_{r+m,r+m}\otimes\mathds{1}_{n}\Big)\nonumber\\
&=&2\left(\textstyle{\frac{1}{\sqrt{2}}}\big(-F_{r,s+m}+F_{s,r+m}+F_{r+m,s}-F_{s+m,r}\big)\otimes\mathds{1}_{n}\right)\jProd\Big(F_{r+m,r+m}\otimes\mathds{1}_{n}\Big)\nonumber\\
&=&\textstyle{\frac{1}{\sqrt{2}}}\big(F_{r+m,s}+F_{s,r+m}\big)\otimes\mathds{1}_{n}\in\mathcal{M}_{m}(\mathbb{H})_{\text{sa}}\odot\mathcal{M}_{n}(\mathbb{C})_{\text{sa}}\text{.}
\label{L6}
\end{eqnarray}
From Eqs.~\eqref{L3} and \eqref{Fr} we then compute using Eq.~\eqref{Bprime}
\begin{eqnarray}
&&4\left(B'\jProd\textstyle{\frac{i}{\sqrt{2}}}\big(F_{r,s}-F_{s,r}\big)\otimes\mathds{1}_{n}\right)\jProd\Big(F_{r,r}\otimes\mathds{1}_{n}\Big)\nonumber\\
&=&2\left(\textstyle{\frac{i}{\sqrt{2}}}\big(-F_{r+m,s}-F_{s+m,r}+F_{r,s+m}+F_{s,r+m}\big)\otimes\mathds{1}_{n}\right)\jProd\Big(F_{r,r}\otimes\mathds{1}_{n}\Big)\nonumber\\
&=&\textstyle{\frac{i}{\sqrt{2}}}\big(F_{r,s+m}-F_{s+m,r}\big)\otimes\mathds{1}_{n}\in\mathcal{M}_{m}(\mathbb{H})_{\text{sa}}\odot\mathcal{M}_{n}(\mathbb{C})_{\text{sa}}\text{.}
\label{L7}
\end{eqnarray}
From Eqs.~\eqref{L4} and \eqref{Frpm} we then compute using Eq.~\eqref{Bprime}
\begin{eqnarray}
&&4\left(B'\jProd\textstyle{\frac{i}{\sqrt{2}}}\big(F_{r+m,s+m}-F_{s+m,r+m}\big)\otimes\mathds{1}_{n}\right)\jProd\Big(F_{r+m,r+m}\otimes\mathds{1}_{n}\Big)\nonumber\\
&=&2\left(\textstyle{\frac{i}{\sqrt{2}}}\big(-F_{r,s+m}-F_{s,r+m}+F_{r+m,s}+F_{s+m,r}\big)\otimes\mathds{1}_{n}\right)\jProd\Big(F_{r+m,r+m}\otimes\mathds{1}_{n}\Big)\nonumber\\
&=&\textstyle{\frac{i}{\sqrt{2}}}\big(F_{r+m,s}-F_{s,r+m}\big)\otimes\mathds{1}_{n}\in\mathcal{M}_{m}(\mathbb{H})_{\text{sa}}\odot\mathcal{M}_{n}(\mathbb{C})_{\text{sa}}\text{.}
\label{L8}
\end{eqnarray}
Recalling that $r,s\in\{1,\dots,m\}$, Eqs.~\eqref{L1}\eqref{L2}\eqref{L3}\eqref{L4}\eqref{L5}\eqref{L6}\eqref{L7}\eqref{L8} establish that for any $p\neq q\in\{1,\dots,2m\}$ one has that $\{G_{p,q}\otimes\mathds{1}_{n}\big\}\in\mathcal{M}_{m}(\mathbb{H})_{\text{sa}}\odot\mathcal{M}_{n}(\mathbb{C})_{\text{sa}}$. Exclamation point! And with Eqs.~\eqref{Frpm} and \eqref{Fr}, we see that $\{G_{p,p}\otimes\mathds{1}_{n}\}$ is also in the canonical product. Indeed, this follows simply from $\mathbb{R}$-linear generation, because $F_{r,r}\otimes\mathds{1}_{n}$ and $F_{r+m,r+m}\otimes\mathds{1}_{n}$ are all in the canonical tensor product. So $\big\{\{G_{p,q}\}\otimes\mathds{1}_{n}\big\}\subset\mathcal{M}_{m}(\mathbb{H})_{\text{sa}}\odot\mathcal{M}_{n}(\mathbb{C})_{\text{sa}}$. Just to record our earlier remark: $\big\{\mathds{1}_{2m}\otimes \{H_{t,v}\}\big\}\subset\mathcal{M}_{m}(\mathbb{H})_{\text{sa}}\odot\mathcal{M}_{n}(\mathbb{C})_{\text{sa}}$. So too, of course, is $\mathds{1}_{2m}\otimes\mathds{1}_{n}$.
\end{proof}
\end{lemma}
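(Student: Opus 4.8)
The plan is to prove the two inclusions separately. One direction is immediate from the definition of the canonical tensor product (\cref{def: canonical tensor product}): since the standard embeddings send $\mathcal{M}_{m}(\mathbb{H})_{\text{sa}}$ symplectically into $\mathcal{M}_{2m}(\mathbb{C})$ and $\mathcal{M}_{n}(\mathbb{C})_{\text{sa}}$ into $\mathcal{M}_{n}(\mathbb{C})$, the ambient associative algebra is $\mathcal{M}_{2m}(\mathbb{C})\otimes\mathcal{M}_{n}(\mathbb{C})=\mathcal{M}_{2nm}(\mathbb{C})$, and hence $\mathcal{M}_{m}(\mathbb{H})_{\text{sa}}\odot\mathcal{M}_{n}(\mathbb{C})_{\text{sa}}\subseteq\mathcal{M}_{2nm}(\mathbb{C})_{\text{sa}}$ with no work at all. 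All the content lies in the reverse inclusion.

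For that reverse inclusion I would first note why the clean fixed-point technique used for \cref{realRealLem}, \cref{realQuatLem}, and \cref{quatQuatLem} is unavailable here. Those proofs exhibited a $*$-antiautomorphism of the ambient C$^{*}$\!-algebra whose self-adjoint fixed points are exactly the image of the Jordan algebra, and then invoked \cref{hoThm}. But $\mathcal{M}_{n}(\mathbb{C})$ admits no $*$-antiautomorphism fixing precisely the self-adjoint matrices (the adjoint is conjugate-linear, hence not an antiautomorphism), so no global involution on $\mathcal{M}_{2nm}(\mathbb{C})$ cuts out $\mathcal{M}_{2nm}(\mathbb{C})_{\text{sa}}$. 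I would therefore fall back on a direct generation argument, exactly as in the proof of \cref{realComLem}.

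The strategy is to produce an explicit spanning set of $\mathcal{M}_{2nm}(\mathbb{C})_{\text{sa}}$ inside the canonical product. Taking the generalized Gell-Mann matrices $\{G_{p,q}\}$ of $\mathcal{M}_{2m}(\mathbb{C})$ and $\{H_{t,v}\}$ of $\mathcal{M}_{n}(\mathbb{C})$ (reviewed in \cref{gmApp}), the tensor products $\{G_{p,q}\otimes H_{t,v}\}$ together with the identities span $\mathcal{M}_{2nm}(\mathbb{C})_{\text{sa}}$. Since $\mathds{1}_{2m}\otimes H_{t,v}$ already lies in $\mathcal{A}\otimes\mathcal{B}$, and since $(G_{p,q}\otimes\mathds{1}_{n})\jProd(\mathds{1}_{2m}\otimes H_{t,v})=G_{p,q}\otimes H_{t,v}$ (the two factors operator-commute, so their Jordan product is the ordinary product), the whole problem reduces to showing that $G_{p,q}\otimes\mathds{1}_{n}$ lies in the canonical product for every $p,q\in\{1,\dots,2m\}$.

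The hard part, and the main obstacle, is precisely this last reduction, because the symplectic embedding does \emph{not} surject onto $\mathcal{M}_{2m}(\mathbb{C})_{\text{sa}}$: its image is the $\Phi$-symmetric subalgebra, in which the block at position $(r,s)$ is rigidly tied to the block at $(r+m,s+m)$, so the quaternionic factor alone can never manufacture a single $G_{p,q}$. My plan is to introduce ``quaternionic Gell-Mann matrices'' $Q_{o},Q_{i},Q_{j},Q_{k},Q_{l}$ spanning $\mathcal{M}_{m}(\mathbb{H})_{\text{sa}}$, compute their symplectic images in $\mathcal{M}_{2m}(\mathbb{C})$, and then exploit the complex tensor factor to break the block symmetry. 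Concretely, pairing a symmetric $H_{t,v}$ against an antisymmetric $\frac{i}{\sqrt{2}}(L_{t,v}-L_{v,t})$ inside Jordan products of $\pi(Q_{o})$ with $\pi(Q_{i})$ (and with $\pi(Q_{k})$) should produce the block-asymmetric combination $(-F_{r,r}+F_{r+m,r+m})\otimes\mathds{1}_{n}$; adding and subtracting $\pi(E_{r,r})\otimes\mathds{1}_{n}=(F_{r,r}+F_{r+m,r+m})\otimes\mathds{1}_{n}$ then isolates the individual block projectors $F_{r,r}\otimes\mathds{1}_{n}$ and $F_{r+m,r+m}\otimes\mathds{1}_{n}$. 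Once these projectors are available, multiplying them against the symplectic images of the off-diagonal generators extracts each off-diagonal $G_{p,q}\otimes\mathds{1}_{n}$ one at a time, while $\mathbb{R}$-linear combinations recover the diagonal $G_{p,p}\otimes\mathds{1}_{n}$. The remainder is pure bookkeeping — recording which sequence of Jordan products yields which matrix unit — which I would not grind through in a sketch but expect to be routine once the symmetry-breaking device is in place.
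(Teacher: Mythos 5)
Your proposal is correct and follows essentially the same route as the paper: both reduce the problem to generating $\{G_{p,q}\otimes\mathds{1}_{n}\}$, introduce quaternionic Gell-Mann matrices and their symplectic images, break the block symmetry of the symplectic embedding by Jordan-multiplying $\pi(Q_{o})\otimes H$ against $\pi(Q_{i})\otimes H'$ (and against $\pi(Q_{k})\otimes H'$) with one symmetric and one antisymmetric $H$, isolate the block projectors $F_{r,r}\otimes\mathds{1}_{n}$ and $F_{r+m,r+m}\otimes\mathds{1}_{n}$ using $\pi(E_{r,r})$, and then extract each $G_{p,q}\otimes\mathds{1}_{n}$ by further Jordan products. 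The bookkeeping you defer is exactly what the paper carries out explicitly.
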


\noindent With the proofs of the foregoing lemmas, our \cref{mgRes} follows immediately. 

\chapter{Jordanic Categories}
\label{categoriesEJA}

\epigraphhead[40]
	{
		\epigraph{``Outside in the distance a wildcat did growl\\Two riders were approaching, the wind began to howl''}{---\textit{Bob Dylan}\\ All Along The Watchtower (1968)}
	}

\noindent In \cite{Barnum2015} and \cite{Barnum2016b}, the author and H.\ Barnum and A.\ Wilce consider Jordan algebraic physical theories. This chapter centres on the categorical aspects of those publications.

\noindent Real quantum theory and quaternionic quantum theory date back to the work of Stueckelberg \cite{stueckelberg1960} and Finkelstein-Jauch-Speiser \cite{finkelstein1979}, respectively. In fact, much earlier, in \cite{birkhoff1936}, Birkhoff and von Neumann showed that the lattice of orthogonal projections onto Hilbert modules over $\mathbb{R}$, $\mathbb{C}$, and $\mathbb{H}$ could be taken as a models for experimental propositions. The formulation of a unified $\mathbb{R}$-$\mathbb{C}$ theory is mathematically obvious in light of the usual tensor product defined in \cref{tensorDef}, restricted to real subspaces when required. Our canonical tensor product (see \cref{def: canonical tensor product}) provides a mathematical apparatus to construct arbitrary composites for real, complex, \textit{and} quaternionic systems. In this chapter, we prove that the resulting theory admits the structure of a dagger compact closed category. We call this category $\mathbf{InvQM}$ because it involves involutions. Within $\mathbf{InvQM}$, the structure of complex composites is slightly different than in usual quantum theory: one obtains a direct sum of two copies of the usual tensor product. Of course, it is well known that quantum theory admits the structure of a dagger compact closed category \cite{Abramsky2009}\cite{Selinger2005}. Our work, therefore, provides a concrete example of a general probabilistic theory distinct from quantum theory with identical categorical structure to that of quantum theory. As we declared in \cref{prologue}, quantum theory is, however, the only subtheory wherein tomographic locality holds; moreover, preservation of purity fails outside of the real-complex subtheory. In light of dagger compact closure of $\mathbf{InvQM}$, tomographic locality and preservation of purity are evidently not required for sound categorical compositional structures.

\noindent Beyond $\mathbb{R}$-$\mathbb{C}$-$\mathbb{H}$ quantum theories within a Jordan algebraic framework, we are left with the spin factors $\mathcal{V}_{k}$ and the exceptional Jordan algebra $\mathcal{M}_{3}(\mathbb{O})_{\text{sa}}$. Our \cref{cor: no composite with exceptional} rules out the inclusion of the exceptional Jordan algebra in Jordanic theories enjoying our notion of composition given by \cref{def: new Jordan composite}. In this chapter, we shall see that the spin factors (save for the rebit $\mathcal{M}_{2}(\mathbb{R})_{\text{sa}}\cong\mathcal{V}_{2}$, the qubit $\mathcal{M}_{2}(\mathbb{C})_{\text{sa}}\cong\mathcal{V}_{3}$, and the quabit $\mathcal{M}_{2}(\mathbb{H})_{\text{sa}}\cong\mathcal{V}_{5}$) are ruled out on categorical grounds, specifically by \cref{ex: no states}. $\mathbf{InvQM}$ is, therefore, the largest possible categorical unification of Jordan algebraic physical theories respecting our notion of composition whilst retaining dagger compact closure. 

\noindent We structure the balance of this chapter as follows. In \cref{jpmSec}, we establish that the canonical tensor product is associative, and we describe the canonical tensor product of direct sums. In \cref{sec: categories EJC}, within the framework of EJC-algebras, we first generalize the usual notion of complete positivity. We then construct various categories of EJC-algebras.

\newpage
\section{Associativity and Direct Sums}\label{jpmSec}
\noindent In this section, we first prove that the canonical tensor product $\odot$ defined in \cref{def: canonical tensor product} is associative, setting the stage for the construction of symmetric monoidal categories of EJC-algebras in the sequel. We then describe the canonical tensor product of direct sums. Recall that an \textit{EJC}-\textit{algebra} is a pair $(\mathcal{A},\mathbf{M}_{\mathcal{A}})$ where $\mathbf{M}_{\mathcal{A}}$ is a finite dimensional complex *-algebra and $\mathcal{A}$ is a unital Jordan subalgebra of the self-adjoint part of $\mathbf{M}_{\mathcal{A}}$. For any subset $\mathcal{X}\subseteq\mathbf{M}_{\text{sa}}$ of the self-adjoint part of a finite dimensional complex *-algebra $\mathbf{M}$, recall that we write $\mathfrak{j}(\mathcal{X})$ for Jordan subalgebra of $\mathbf{M}_{\text{sa}}$ generated by $\mathcal{X}$. Put otherwise, $\mathfrak{j}(\mathcal{X})$ is the smallest Jordan subalgebra of $\mathbf{M}_{\text{sa}}$ containing the subset $\mathcal{X}$. We denote associative multiplication in $\mathbf{M}$ via juxtaposition, and we denote Jordan multiplication in $\mathbf{M}_{\text{sa}}$ via $\jProd$, \textit{i.e}.\ $x_{1}\jProd x_{2}=(x_{1}x_{2}+x_{2}x_{1})/2$. If $\mathcal{X}$ and $\mathcal{Y}$ are subsets of the self-adjoint parts of finite dimensional complex *-algebras $\mathbf{M}$ and $\mathbf{N}$, respectively, recall that then we define
\begin{equation}
\mathcal{X}\otimes\mathcal{Y}\equiv\text{span}_{\mathbb{R}}\Big\{x\otimes y\;\boldsymbol{|}\;x\in\mathcal{X} \text{ and } y\in\mathcal{Y}\Big\}\subseteq\big(\mathbf{M}\otimes\mathbf{N}\big)_{\text{sa}}\text{.}
\end{equation}
\begin{proposition}\label{newProp1} \textit{Let $\mathbf{M}$ be a finite dimensional complex *-algebra. Let subset $\mathcal{A}\subseteq\mathbf{M}_{\text{sa}}$. Let $u$ be the Jordan algebraic unit of $\mathfrak{j}(\mathcal{A})$, \textit{i.e}.\ $u\jProd x=x$ for all $x\in\mathfrak{j}(\mathcal{A})$. Then for all $x\in\mathfrak{j}(\mathcal{A})$ the ambient associative products $ux=x=xu$.}
\begin{proof}
Let $x\in\mathfrak{j}(\mathcal{A})$. Then according the premise of the proposition
\begin{equation}
x=u\jProd x=(ux+xu)/2\text{.}
\label{toHit}
\end{equation}
Notice that $u\jProd u=u$ so $u=uu$. Carrying out left associative multiplication by $u$ on Eq.~\eqref{toHit} thus yields
\begin{equation}
ux=uux+uxu/2=(ux+uxu)/2\implies ux=uxu\text{.}
\label{hitLeft}
\end{equation}
Carrying out right associative multiplication by $u$ on Eq.~\eqref{toHit} yields
\begin{equation}
xu=uxu+xuu/2=(uxu+xu)/2\implies xu=uxu\text{.}
\label{hitRight}
\end{equation}
In light of Eq.~\eqref{hitLeft} and Eq.~\eqref{hitRight} we have that $xu=ux$. Therefore $x=u\jProd x=(ux+xu)/2=ux$ and similarly $x=xu$.
\end{proof}
\end{proposition}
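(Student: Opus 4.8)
The plan is to prove \cref{newProp1} by exploiting the single hypothesis that $u$ is the Jordan unit of $\mathfrak{j}(\mathcal{A})$, together with the defining relation between the Jordan product and the ambient associative product, namely $x_1 \jProd x_2 = (x_1 x_2 + x_2 x_1)/2$. The goal is to upgrade the Jordan-algebraic identity $u \jProd x = x$ to the stronger associative identities $ux = x = xu$ for every $x \in \mathfrak{j}(\mathcal{A})$. The key observation that makes this work is that $u$ is idempotent for the associative product, \emph{i.e.} $u = uu$, which follows immediately from $u \jProd u = u$ since $u \jProd u = (uu+uu)/2 = uu$.

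First I would fix an arbitrary $x \in \mathfrak{j}(\mathcal{A})$ and write out the hypothesis in terms of the associative product:
\begin{equation}
x = u \jProd x = \frac{ux + xu}{2}\text{.}
\label{planToHit}
\end{equation}
The central trick is then to multiply this equation on the left by $u$ and, separately, on the right by $u$, each time using $uu = u$. Left multiplication gives $ux = (uux + uxu)/2 = (ux + uxu)/2$, which rearranges to $ux = uxu$. Symmetrically, right multiplication gives $xu = (uxu + xuu)/2 = (uxu + xu)/2$, which rearranges to $xu = uxu$. Comparing these two conclusions yields $ux = xu$, \emph{i.e.} $u$ and $x$ associatively commute.

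Finally, substituting the commutativity $ux = xu$ back into \eqref{planToHit} collapses the average: $x = (ux+xu)/2 = (ux+ux)/2 = ux$, and identically $x = xu$. This completes the argument. I do not expect any genuine obstacle here; the proposition is a short algebraic manipulation and the only conceptual ingredient is recognizing that the idempotence $uu = u$ lets the left- and right-multiplication steps ``absorb'' one copy of $u$. The one point to be careful about is the order of operations when multiplying \eqref{planToHit} through by $u$ --- I must keep left and right multiplication distinct throughout, since $\mathbf{M}$ is only assumed associative and not commutative --- but this is exactly the structure that the two symmetric computations are designed to respect, and the matching pair of identities $ux = uxu$ and $xu = uxu$ is what forces commutativity.
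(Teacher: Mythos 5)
Your proposal is correct and follows exactly the same route as the paper's own proof: expand $u\jProd x$ via the associative product, use the idempotence $uu=u$ to derive $ux=uxu$ and $xu=uxu$ by left and right multiplication, conclude $ux=xu$, and substitute back. No differences worth noting.
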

\begin{proposition}\label{newProp2}\textit{Let $\mathbf{M}$ be a finite dimensional complex *-algebra. Let subset $\mathcal{A}\subseteq\mathbf{M}_{\text{sa}}$. Let $e_{\mathcal{A}}\in\mathcal{A}$ be such that the ambient associative products $e_{\mathcal{A}}a=a=ae_{\mathcal{A}}$ for all $a\in\mathcal{A}$. Then for all $x\in\mathfrak{j}(\mathcal{A})$ the associative products $e_{\mathcal{A}}x=x=xe_{\mathcal{A}}$.}
\begin{proof} Let $x,y\in\mathcal{A}$. Let $\lambda\in\mathbb{R}$. Then 
\begin{equation}
e_{\mathcal{A}}(x\jProd y)=e_{\mathcal{A}}(xy+yx)/2=\big((e_{\mathcal{A}}x)y+(e_{\mathcal{A}}y)x\big)/2=(xy+yx)/2\text{.}
\label{jords}
\end{equation}
\begin{equation}
e_{\mathcal{A}}(x+y)=e_{\mathcal{A}}x+e_{\mathcal{A}}y=x+y\text{.}
\label{adds}
\end{equation}
\begin{equation}
e_{\mathcal{A}}(x\lambda)=(e_{\mathcal{A}}x)\lambda=x\lambda\text{.}
\label{scales}
\end{equation}
In light of Eq.~\eqref{jords} and Eq.~\eqref{adds} and Eq.~\eqref{scales} we see that left associative multiplication by $e_{\mathcal{A}}$ commutes with the Jordan product, vector space addition, and scalar multiplication operations on $\mathcal{A}$. Similarly, right associative multiplication commutes. Recalling that these are precisely the operations used to generate $\mathfrak{j}(\mathcal{A})$ from $\mathcal{A}$, we complete the proof.
\end{proof}
\end{proposition}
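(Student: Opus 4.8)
The plan is to show that the set of elements of $\mathbf{M}_{\text{sa}}$ on which $e_{\mathcal{A}}$ acts as a two-sided associative identity is itself a unital Jordan subalgebra containing $\mathcal{A}$, and then to invoke the minimality built into the definition of $\mathfrak{j}$ in \cref{jGen}. Concretely, I would introduce
\[
S \equiv \{x \in \mathbf{M}_{\text{sa}}\;\boldsymbol{|}\; e_{\mathcal{A}}x = x = xe_{\mathcal{A}}\}\text{,}
\]
and note that the hypothesis gives $\mathcal{A} \subseteq S$, and in particular $e_{\mathcal{A}} \in S$. Since $S$ is exactly the intersection of $\mathbf{M}_{\text{sa}}$ with the fixed-point sets of the two real-linear maps $x \longmapsto e_{\mathcal{A}}x$ and $x \longmapsto xe_{\mathcal{A}}$, it is automatically a real-linear subspace of $\mathbf{M}_{\text{sa}}$: for $x,y \in S$ and $\lambda,\mu \in \mathbb{R}$ one has $e_{\mathcal{A}}(x\lambda + y\mu) = (e_{\mathcal{A}}x)\lambda + (e_{\mathcal{A}}y)\mu = x\lambda + y\mu$ by distributivity and the commutativity of real-scalar multiplication with the ambient product, and symmetrically on the right. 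Thus the only substantive point is closure of $S$ under the Jordan product.

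For the Jordan-product step, let $x,y \in S$. Using $x\jProd y = (xy + yx)/2$ together with the associativity of the ambient associative product of $\mathbf{M}$, I would compute
\[
e_{\mathcal{A}}(x\jProd y) = \tfrac{1}{2}\big(e_{\mathcal{A}}(xy) + e_{\mathcal{A}}(yx)\big) = \tfrac{1}{2}\big((e_{\mathcal{A}}x)y + (e_{\mathcal{A}}y)x\big) = \tfrac{1}{2}(xy + yx) = x\jProd y\text{,}
\]
where the crucial step $e_{\mathcal{A}}(xy) = (e_{\mathcal{A}}x)y$ holds precisely because multiplication in $\mathbf{M}$ is associative. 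Migrating $e_{\mathcal{A}}$ into the rightmost factor instead gives
\[
(x\jProd y)e_{\mathcal{A}} = \tfrac{1}{2}\big(x(ye_{\mathcal{A}}) + y(xe_{\mathcal{A}})\big) = \tfrac{1}{2}(xy + yx) = x\jProd y\text{.}
\]
Hence $x\jProd y \in S$, so $S$ is closed under $\jProd$.

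It remains to assemble the pieces. Having shown that $S$ is a real-linear subspace of $\mathbf{M}_{\text{sa}}$ closed under the Jordan product and containing $e_{\mathcal{A}}$, I would observe that $e_{\mathcal{A}}$ is a Jordan unit for $S$, since $e_{\mathcal{A}} \jProd x = (e_{\mathcal{A}}x + xe_{\mathcal{A}})/2 = x$ for every $x \in S$. Therefore $S$ is a unital Jordan subalgebra of $\mathbf{M}_{\text{sa}}$ containing $\mathcal{A}$. By the defining minimality of $\mathfrak{j}(\mathcal{A})$ as the smallest Jordan subalgebra containing $\mathcal{A}$ (\cref{jGen}), it follows that $\mathfrak{j}(\mathcal{A}) \subseteq S$, which is exactly the claimed statement that $e_{\mathcal{A}}x = x = xe_{\mathcal{A}}$ for all $x \in \mathfrak{j}(\mathcal{A})$.

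I do not anticipate a genuine obstacle, as this is a closure-under-generators argument; the points demanding care are minor. First, one must restrict the fixed-point set to $\mathbf{M}_{\text{sa}}$ at the outset so that $S$ is a legitimate candidate Jordan subalgebra rather than a subspace of the full algebra $\mathbf{M}$. Second, the left/right asymmetry in the Jordan-product step must be handled separately, using associativity of $\mathbf{M}$ each time to slide $e_{\mathcal{A}}$ past the ambient products; this is the one place where the associativity of the enveloping $\ast$-algebra, rather than the non-associative Jordan structure, is indispensable. As an equivalent alternative to invoking minimality, I could argue by induction on the generation of $\mathfrak{j}(\mathcal{A})$ from $\mathcal{A}$ via real-linear combinations and Jordan products, with the three closure verifications above serving as the induction step.
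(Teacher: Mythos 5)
Your proof is correct and takes essentially the same route as the paper: the paper's proof consists of exactly your three closure computations (Jordan product, addition, real scaling, each using associativity of $\mathbf{M}$ to slide $e_{\mathcal{A}}$ through the ambient product, on the left and then symmetrically on the right), followed by an informal appeal to the fact that these are precisely the operations generating $\mathfrak{j}(\mathcal{A})$ from $\mathcal{A}$. Your packaging of this as a fixed-point set $S\subseteq\mathbf{M}_{\text{sa}}$ that is a unital Jordan subalgebra containing $\mathcal{A}$, combined with the minimality of $\mathfrak{j}(\mathcal{A})$, is just a tidier formalization of the paper's generation argument (the induction you mention as an alternative), so the two proofs differ only in exposition.
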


\begin{lemma} \label{lemma: associativity} \textit{Let} $\mathbf{M}$ \textit{and} $\mathbf{N}$ \textit{be finite dimensional complex *-algebras, and let} $\mathcal{A}$ \textit{and} $\mathcal{B}$ \textit{be subsets of} $\mathbf{M}_{\text{sa}}$ \textit{and} $\mathbf{N}_{\text{sa}}$\textit{, respectively.} \textit{Suppose} $\exists e_{\mathcal{A}}\in\mathcal{A}$ \textit{such that} $e_{\mathcal{A}} a=a=ae_{\mathcal{A}}$ $\forall a\in\mathcal{A}$ \textit{. Suppose} $\exists e_{\mathcal{B}}\in\mathcal{B}$ \textit{such that} $e_{\mathcal{B}}b=b=be_{\mathcal{B}}$ $\forall b\in\mathcal{B}$\textit{.} \textit{Then} $\mathfrak{j}(\mathcal{A}\otimes\mathfrak{j}(\mathcal{B}))=\mathfrak{j}(\mathcal{A}\otimes\mathcal{B})$=$\mathfrak{j}(\mathfrak{j}(\mathcal{A})\otimes\mathcal{B})$\textit{.}
\begin{proof}\footnote{Note that `$\subseteq$' means ``is a subset of.''}  We first show that $\mathfrak{j}(\mathcal{A}\otimes\mathcal{B})\subseteq\mathfrak{j}(\mathcal{A}\otimes\mathfrak{j}(\mathcal{B}))$. By definition we have $\mathcal{B}\subseteq\mathfrak{j}(\mathcal{B})$; hence $\mathcal{A}\otimes \mathcal{B}\subseteq \mathcal{A}\otimes\mathfrak{j}(\mathcal{B})$. Observe that if $\mathcal{X}\subseteq\mathcal{Y}\subseteq\mathbf{O}_{\text{sa}}$ are subsets of the self-adjoint part of a finite dimensional complex  *-algebra $\mathbf{O}$, then $\mathfrak{j}(\mathcal{X})\subseteq\mathfrak{j}(\mathcal{Y})$, because then by definition $\mathfrak{j}(\mathcal{X})=\mathfrak{j}\{y\in\mathcal{Y}\;\boldsymbol{|}\;y\in\mathcal{X}\}\subseteq\mathfrak{j}\{y\in\mathbf{O}_{\text{sa}}\;\boldsymbol{|}\;y\in\mathcal{Y}\}=\mathfrak{j}(\mathcal{Y})$. Therefore $\mathfrak{j}(\mathcal{A}\otimes\mathcal{B})\subseteq\mathfrak{j}(\mathcal{A}\otimes\mathfrak{j}(\mathcal{B}))$.\\[0.2cm]
\noindent We now show that $\mathfrak{j}(\mathcal{A}\otimes\mathfrak{j}(\mathcal{B}))\subseteq\mathfrak{j}(\mathcal{A}\otimes\mathcal{B})$. In light of our observation in the first part of this proof, it suffices to show that $\mathcal{A}\otimes\mathfrak{j}(\mathcal{B})\subseteq\mathfrak{j}(\mathcal{A}\otimes\mathcal{B})$, because $\mathfrak{j}(\mathcal{A}\otimes\mathcal{B})$ is closed, \textit{i.e.} $\mathfrak{j}(\mathfrak{j}(\mathcal{A}\otimes\mathcal{B}))=\mathfrak{j}(\mathcal{A}\otimes\mathcal{B})$. Notice that
\begin{equation}
e_{\mathcal{A}}\otimes\mathfrak{j}(\mathcal{B})=\big\{e_{\mathcal{A}}\otimes y\in(\mathbf{M}\otimes\mathbf{N})_{\text{sa}}\;\boldsymbol{|}\;y\in\mathfrak{j}(\mathcal{B})\big\}\subseteq\mathfrak{j}(\mathcal{A}\otimes\mathcal{B})
\label{eAjB}
\end{equation}
follows from bilinearity of $\otimes$ since $(e_{\mathcal{A}}\otimes b_{1})\jProd(e_{\mathcal{A}}\otimes b_{2})=(e_{\mathcal{A}}\otimes b_{1}b_{2}+e_{\mathcal{A}}\otimes b_{2}b_{1})/2=e_{\mathcal{A}}\otimes (b_{1}b_{2}+b_{2}b_{1})/2=e_{\mathcal{A}}\otimes (b_{1}\jProd b_{2})$.
Now, by definition,
\begin{equation}
\mathcal{A}\otimes e_{\mathcal{B}}=\big\{a\otimes e_{\mathcal{B}}\in(\mathbf{M}\otimes\mathbf{N})_{\text{sa}}\;\boldsymbol{|}\;a\in\mathcal{A}\big\}\subseteq\mathcal{A}\otimes\mathcal{B}\subseteq\mathfrak{j}(\mathcal{A}\otimes\mathcal{B})\text{.}
\end{equation}
$\mathfrak{j}(\mathcal{A}\otimes\mathcal{B})$ is by definition a Jordan algebra. In particular, the Jordan product of any two elements in $\mathfrak{j}(\mathcal{A}\otimes\mathcal{B})$ is again in $\mathfrak{j}(\mathcal{A}\otimes\mathcal{B})$. Now let $a\in\mathcal{A}$ and $y\in\mathfrak{j}(\mathcal{B})$. Then
\begin{equation}
(a\otimes e_{\mathcal{B}})\jProd(e_{\mathcal{A}}\otimes y)=(ae_{\mathcal{A}}\otimes e_{\mathcal{B}}y+e_{\mathcal{A}}a\otimes ye_{\mathcal{B}})/2=a\otimes y\in\mathfrak{j}(\mathcal{A}\otimes\mathcal{B})
\label{forConcT}
\end{equation}
follows from \cref{newProp2}; moreover $\mathfrak{j}(\mathcal{A}\otimes\mathcal{B})$ is by definition closed under the real linear span of such elements. We have shown that
\begin{equation}
\mathcal{A}\otimes\mathfrak{j}(\mathcal{B})=\text{span}_{\mathbb{R}}\big\{a\otimes y\in(\mathbf{M}\otimes\mathbf{N})_{\text{sa}}\;\boldsymbol{|}\;a\in\mathcal{A}\text{ and } y\in\mathfrak{j}(\mathcal{B})\big\}\subseteq\mathfrak{j}(\mathcal{A}\otimes\mathcal{B})\text{.}
\end{equation}
We therefore conclude
\begin{equation}
\mathfrak{j}(\mathcal{A}\otimes\mathcal{B})=\mathfrak{j}\big(\mathcal{A}\otimes\mathfrak{j}(\mathcal{B})\big)\text{.}
\end{equation}
A proof that $\mathfrak{j}(\mathcal{A}\otimes\mathcal{B})=\mathfrak{j}(\mathfrak{j}(\mathcal{A})\otimes\mathcal{B})$ follows \textit{mutatis mutandis}.
\end{proof}
\end{lemma}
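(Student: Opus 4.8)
The statement to prove is \cref{lemma: associativity}: that $\mathfrak{j}(\mathcal{A}\otimes\mathfrak{j}(\mathcal{B}))=\mathfrak{j}(\mathcal{A}\otimes\mathcal{B})=\mathfrak{j}(\mathfrak{j}(\mathcal{A})\otimes\mathcal{B})$, under the hypothesis that $\mathcal{A}$ and $\mathcal{B}$ carry two-sided associative identities $e_{\mathcal{A}}$ and $e_{\mathcal{B}}$. By symmetry, it suffices to prove the first equality $\mathfrak{j}(\mathcal{A}\otimes\mathfrak{j}(\mathcal{B}))=\mathfrak{j}(\mathcal{A}\otimes\mathcal{B})$, since the third equality follows \emph{mutatis mutandis} by interchanging the roles of the two factors. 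The plan is to prove the two inclusions separately.

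For the inclusion $\mathfrak{j}(\mathcal{A}\otimes\mathcal{B})\subseteq\mathfrak{j}(\mathcal{A}\otimes\mathfrak{j}(\mathcal{B}))$, I would first record the elementary monotonicity fact that generation preserves inclusions: if $\mathcal{X}\subseteq\mathcal{Y}$ lie in the self-adjoint part of a common complex $*$-algebra, then $\mathfrak{j}(\mathcal{X})\subseteq\mathfrak{j}(\mathcal{Y})$, since the smallest Jordan subalgebra containing $\mathcal{X}$ is contained in any Jordan subalgebra containing the larger set $\mathcal{Y}$. Combining this with $\mathcal{B}\subseteq\mathfrak{j}(\mathcal{B})$ (immediate from the definition of the Jordan closure) and bilinearity of $\otimes$, which gives $\mathcal{A}\otimes\mathcal{B}\subseteq\mathcal{A}\otimes\mathfrak{j}(\mathcal{B})$, yields the desired inclusion at once.

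The reverse inclusion $\mathfrak{j}(\mathcal{A}\otimes\mathfrak{j}(\mathcal{B}))\subseteq\mathfrak{j}(\mathcal{A}\otimes\mathcal{B})$ is the substantive half, and here is where the identity elements $e_{\mathcal{A}},e_{\mathcal{B}}$ do their work. Because $\mathfrak{j}(\mathcal{A}\otimes\mathcal{B})$ is already Jordan-closed, it suffices to show the generating set $\mathcal{A}\otimes\mathfrak{j}(\mathcal{B})$ sits inside it. I would establish this by producing $a\otimes y$ for arbitrary $a\in\mathcal{A}$ and $y\in\mathfrak{j}(\mathcal{B})$ as a Jordan product of two manufactured elements of $\mathfrak{j}(\mathcal{A}\otimes\mathcal{B})$. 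First, $e_{\mathcal{A}}\otimes\mathfrak{j}(\mathcal{B})\subseteq\mathfrak{j}(\mathcal{A}\otimes\mathcal{B})$: the map $b\mapsto e_{\mathcal{A}}\otimes b$ is a Jordan homomorphism onto its image, since $(e_{\mathcal{A}}\otimes b_1)\jProd(e_{\mathcal{A}}\otimes b_2)=e_{\mathcal{A}}\otimes(b_1\jProd b_2)$, so the image of $\mathfrak{j}(\mathcal{B})$ stays within the Jordan algebra generated by $e_{\mathcal{A}}\otimes\mathcal{B}\subseteq\mathcal{A}\otimes\mathcal{B}$. Second, $\mathcal{A}\otimes e_{\mathcal{B}}\subseteq\mathcal{A}\otimes\mathcal{B}$ trivially. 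The crucial computation is then
\begin{equation}
(a\otimes e_{\mathcal{B}})\jProd(e_{\mathcal{A}}\otimes y)=\tfrac{1}{2}\big(ae_{\mathcal{A}}\otimes e_{\mathcal{B}}y+e_{\mathcal{A}}a\otimes ye_{\mathcal{B}}\big)=a\otimes y,
\end{equation}
where the final equality uses that $e_{\mathcal{A}}$ acts as a two-sided associative identity on all of $\mathcal{A}$ (and hence $ae_{\mathcal{A}}=e_{\mathcal{A}}a=a$) and, via \cref{newProp2}, that $e_{\mathcal{B}}$ acts as a two-sided associative identity on all of $\mathfrak{j}(\mathcal{B})$ (so $e_{\mathcal{B}}y=ye_{\mathcal{B}}=y$). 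This last point is the one requiring care: $y$ ranges over the full Jordan closure $\mathfrak{j}(\mathcal{B})$, not merely over $\mathcal{B}$, so I must invoke \cref{newProp2} to guarantee that the associative identity property propagates from the generators to the whole generated algebra.

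The main obstacle, and the reason \cref{newProp1} and \cref{newProp2} were proved beforehand, is precisely this propagation of the \emph{associative} (not merely Jordan) action of the unit. One must be careful that the products appearing in the display are ambient associative products in $\mathbf{M}\otimes\mathbf{N}$, and that $e_{\mathcal{B}}$ genuinely fixes every $y\in\mathfrak{j}(\mathcal{B})$ under associative multiplication on both sides — a fact that is nontrivial because Jordan algebras are not associative, so closure under $\jProd$ does not automatically deliver closure under the associative operations used in the computation. Once \cref{newProp2} is in hand, the argument closes cleanly: real-linearity of $\otimes$ extends the containment from pure tensors $a\otimes y$ to all of $\mathcal{A}\otimes\mathfrak{j}(\mathcal{B})=\mathrm{span}_{\mathbb{R}}\{a\otimes y\}$, and applying the monotonicity fact together with idempotence of the closure ($\mathfrak{j}(\mathfrak{j}(\cdot))=\mathfrak{j}(\cdot)$) completes the reverse inclusion. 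Finally I would remark that the third equality is obtained by the identical argument with $\mathcal{A}$ and $\mathcal{B}$ swapped.
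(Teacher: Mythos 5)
Your proof is correct and follows essentially the same route as the paper's: the easy inclusion via monotonicity of $\mathfrak{j}$, and the substantive inclusion via the identity $(a\otimes e_{\mathcal{B}})\jProd(e_{\mathcal{A}}\otimes y)=a\otimes y$, with \cref{newProp2} supplying the crucial fact that $e_{\mathcal{B}}$ acts as a two-sided associative identity on all of $\mathfrak{j}(\mathcal{B})$. You have also correctly isolated the one point requiring care — the propagation of the associative (not merely Jordan) unit property to the generated algebra — which is exactly why the paper proves \cref{newProp1} and \cref{newProp2} first.
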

\begin{proposition}\label{prop: associativity} \textit{Let $(\mathcal{A},\M_{\mathcal{A}})$, $(\mathcal{B},\M_{\mathcal{B}})$, and $(\mathcal{C},\M_{\mathcal{C}})$ be EJC algebras. Then the associator mapping of the symmetric monoidal category of finite dimensional complex *-algebras} $\alpha:\M_{\mathcal{A}}\otimes(\M_{\mathcal{B}}\otimes \M_{\mathcal{C}}) \longrightarrow(\M_{\mathcal{A}}\otimes\M_{\mathcal{B}})\otimes\M_{\mathcal{C}}$ \textit{carries} $\mathcal{A}\odot(\mathcal{B}\odot\mathcal{C})$ \textit{isomorphically onto} $(\mathcal{A}\odot\mathcal{B})\odot\mathcal{C}$\textit{.}
\begin{proof}Let $(\mathcal{A},\M_{\mathcal{A}})$,
$(\mathcal{B},\M_{\mathcal{B}})$ and $(\mathcal{C},\M_{\mathcal{C}})$ be EJC algebras. We need to show that the associator mapping $\alpha : \M_{\mathcal{A}} \otimes (\M_{\mathcal{B}}\otimes \M_{\mathcal{C}}) \longrightarrow (\M_{\mathcal{A}} \otimes \M_{\mathcal{B}}) \otimes \M_{\mathcal{C}}$ carries $\mathcal{A}\odot (\mathcal{B} \odot \mathcal{C})$ onto $(\mathcal{A} \odot \mathcal{B}) \odot \mathcal{C}$. $\mathcal{A}$ is a EJC algebra. Therefore $\exists e_{\mathcal{A}}\in\mathcal{A}$ such that $e_{\mathcal{A}}\jProd a=a$ for all $a\in\mathcal{A}$, namely the Jordan algebraic unit! By \cref{newProp1} we have the associative products $e_{\mathcal{A}}a=a=ae_{\mathcal{A}}$ for all $a\in\mathcal{A}$. $\mathcal{B}$ and $\mathcal{C}$ are also EJC algebras, with units that we shall denote by $e_{\mathcal{B}}$ and $e_{\mathcal{C}}$ respectively. Define $e_{\mathcal{BC}}=e_{\mathcal{B}}\otimes e_{\mathcal{C}}\in\mathcal{B}\otimes\mathcal{C}$. Then $e_{BC}Y=Y=Ye_{BC}$ for all $Y\in\mathcal{B}\otimes\mathcal{C}$. Similarly, by defining $e_{\mathcal{AB}}=e_{\mathcal{A}}\otimes e_{\mathcal{B}}\in\mathcal{A}\otimes\mathcal{B}$ we have the associative products $e_{AB}X=X=Xe_{AB}$ for all $X\in\mathcal{A}\otimes\mathcal{B}$. We may therefore apply \cref{lemma: associativity} as we deduce
\begin{equation}
\mathcal{A} \odot (\mathcal{B} \odot \mathcal{C}) = \mathfrak{j}(\mathcal{A} \otimes \mathfrak{j}(\mathcal{B} \otimes \mathcal{C})) = \mathfrak{j}(\mathcal{A} \otimes (\mathcal{B} \otimes \mathcal{C}))
\end{equation}
and 
\begin{equation}
(\mathcal{A}\odot \mathcal{B}) \odot \mathcal{C} = \mathfrak{j}(\mathfrak{j}(\mathcal{A} \otimes \mathcal{B}) \otimes \mathcal{C}) = \mathfrak{j}((\mathcal{A} \otimes \mathcal{B}) \otimes \mathcal{C})\text{.}
\end{equation} 
The associator mapping is a $\ast$-isomorphism, and carries $\mathcal{A} \otimes (\mathcal{B} \otimes \mathcal{C})$ to 
$(\mathcal{A} \otimes \mathcal{B}) \otimes \mathcal{C}$. Hence, it also carries $\mathfrak{j}(\mathcal{A} \otimes (\mathcal{B} \otimes \mathcal{C}))$ onto $\mathfrak{j}((\mathcal{A} \otimes \mathcal{B}) \otimes \mathcal{C})$. 
\end{proof}
\end{proposition}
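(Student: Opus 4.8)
\textbf{Proof proposal for \cref{prop: associativity}.}

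The plan is to reduce the associativity of the canonical tensor product $\odot$ to the associativity of the ambient tensor product $\otimes$ of finite-dimensional complex $*$-algebras, which is already known (it is part of the symmetric monoidal structure witnessed by the associator $\alpha$). The key conceptual point is that $\odot$ is defined by taking the ambient $\otimes$ and then closing under the Jordan-algebraic hull operation $\mathfrak{j}$, so the entire burden is to show that these two operations interact correctly — namely, that one may freely move the $\mathfrak{j}$-closure inside or outside the $\otimes$ without changing the result. This is exactly the content of the technical \cref{lemma: associativity}, so the main work is to arrange the hypotheses of that lemma and then invoke it twice.

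First I would recall that each of $\mathcal{A},\mathcal{B},\mathcal{C}$ is an EJC-algebra, hence a unital Jordan subalgebra of the self-adjoint part of its ambient $*$-algebra; let $e_{\mathcal{A}},e_{\mathcal{B}},e_{\mathcal{C}}$ denote the respective Jordan units. The crucial preliminary observation — supplied by \cref{newProp1} — is that a Jordan unit is automatically a \emph{two-sided associative} unit in the ambient $*$-algebra, i.e.\ $e_{\mathcal{A}}a=a=ae_{\mathcal{A}}$ for all $a\in\mathcal{A}$, and likewise for $\mathcal{B},\mathcal{C}$. This is what allows \cref{lemma: associativity} to apply, since that lemma's hypothesis is precisely the existence of such two-sided associative units in the two factors. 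Next I would manufacture the units needed for the nested products: setting $e_{\mathcal{BC}}=e_{\mathcal{B}}\otimes e_{\mathcal{C}}$ and $e_{\mathcal{AB}}=e_{\mathcal{A}}\otimes e_{\mathcal{B}}$, bilinearity of $\otimes$ together with the two-sided unit property of $e_{\mathcal{B}},e_{\mathcal{C}}$ (resp.\ $e_{\mathcal{A}},e_{\mathcal{B}}$) gives that these are two-sided associative units for $\mathcal{B}\otimes\mathcal{C}$ (resp.\ $\mathcal{A}\otimes\mathcal{B}$). With units in hand on both factors, \cref{lemma: associativity} applies directly.

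The heart of the argument is then two applications of \cref{lemma: associativity} to strip the inner $\mathfrak{j}$ off each side. Unwinding the definitions, $\mathcal{A}\odot(\mathcal{B}\odot\mathcal{C})=\mathfrak{j}\big(\mathcal{A}\otimes\mathfrak{j}(\mathcal{B}\otimes\mathcal{C})\big)$, and the lemma (with the factors $\mathcal{A}$ and $\mathcal{B}\otimes\mathcal{C}$) collapses this to $\mathfrak{j}\big(\mathcal{A}\otimes(\mathcal{B}\otimes\mathcal{C})\big)$; symmetrically, $(\mathcal{A}\odot\mathcal{B})\odot\mathcal{C}=\mathfrak{j}\big(\mathfrak{j}(\mathcal{A}\otimes\mathcal{B})\otimes\mathcal{C}\big)=\mathfrak{j}\big((\mathcal{A}\otimes\mathcal{B})\otimes\mathcal{C}\big)$. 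Finally I would observe that the associator $\alpha$ is a $*$-isomorphism carrying $\mathcal{A}\otimes(\mathcal{B}\otimes\mathcal{C})$ onto $(\mathcal{A}\otimes\mathcal{B})\otimes\mathcal{C}$; since a $*$-isomorphism commutes with the Jordan-hull operation (it preserves $\mathbb{R}$-linear combinations and Jordan products, which generate $\mathfrak{j}$), it carries $\mathfrak{j}\big(\mathcal{A}\otimes(\mathcal{B}\otimes\mathcal{C})\big)$ isomorphically onto $\mathfrak{j}\big((\mathcal{A}\otimes\mathcal{B})\otimes\mathcal{C}\big)$, which is the desired conclusion. The main obstacle is entirely contained in \cref{lemma: associativity} itself — specifically the nontrivial inclusion showing $\mathcal{A}\otimes\mathfrak{j}(\mathcal{B})\subseteq\mathfrak{j}(\mathcal{A}\otimes\mathcal{B})$, which relies on the ``bridging'' computation $(a\otimes e_{\mathcal{B}})\jProd(e_{\mathcal{A}}\otimes y)=a\otimes y$ via \cref{newProp2} — but since that lemma is established earlier, the present proposition reduces to careful bookkeeping of units plus the functoriality of $\mathfrak{j}$ under $*$-isomorphisms.
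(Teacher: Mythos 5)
Your proposal is correct and follows essentially the same route as the paper's proof: upgrade the Jordan units to two-sided associative units via \cref{newProp1}, form the product units $e_{\mathcal{A}}\otimes e_{\mathcal{B}}$ and $e_{\mathcal{B}}\otimes e_{\mathcal{C}}$, apply \cref{lemma: associativity} twice to reduce both sides to $\mathfrak{j}$ of a plain triple tensor product, and finish with the fact that the associator is a $\ast$-isomorphism and hence carries one Jordan hull onto the other. Your added remark that a $\ast$-isomorphism commutes with $\mathfrak{j}$ because it preserves the generating operations is a correct elaboration of a step the paper leaves implicit.
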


\noindent Prior to contructing our categories in the following section, we note that if $(\mathcal{A},\M_{\mathcal{A}})$ and $(\mathcal{B},\M_{\mathcal{B}})$ are two EJC-algebras, we define 
\begin{equation}
(\mathcal{A},\M_{\mathcal{A}}) \oplus (\mathcal{B},\M_{\mathcal{B}})) = (\mathcal{A} \oplus \mathcal{B}, \M_{\mathcal{A}} \oplus \M_{\mathcal{B}})\text{,} 
\end{equation}
where the embedding of $\mathcal{A} \oplus \mathcal{B}$ in $\M_{\mathcal{A}} \oplus \M_{\mathcal{B}}$ is the obvious one. One can easily check that 
for sets $X \subseteq \M_{\mathcal{A}}$ and $Y \subseteq \M_{\mathcal{B}}$, $\mathfrak{j}(X \oplus Y) = \mathfrak{j}(X) \oplus \mathfrak{j}(Y)$. Using this, and 
the distributivity of tensor products over direct sums in the contexts of vector spaces and $\ast$-algebras, we have 
\begin{eqnarray*}
\mathcal{A} \odot (\mathcal{B} \oplus \mathcal{C}) = \mathfrak{j}(\mathcal{A} \otimes (\mathcal{B} \oplus \mathcal{C})) & = & \mathfrak{j}((\mathcal{A} \otimes \mathcal{B}) \oplus (\mathcal{A} \otimes \mathcal{C})) \\
& = & \mathfrak{j}(\mathcal{A} \otimes \mathcal{B}) \oplus \mathfrak{j}(\mathcal{A} \otimes \mathcal{C}) \\
& = & (\mathcal{A} \odot \mathcal{B}) \oplus (\mathcal{A} \odot \mathcal{C})
\end{eqnarray*} 
(where $\mathfrak{j}$ refers variously to generated Jordan subalgebras of $\M_{\mathcal{A}}
\otimes (\M_{\mathcal{B}} \oplus \M_{\mathcal{C}})$, $(\M_{\mathcal{A}} \otimes \M_{\mathcal{B}}) \oplus
(\M_{\mathcal{A}} \otimes \M_{\mathcal{C}})$, $\M_{\mathcal{A}} \otimes \M_{\mathcal{B}}$, and $\M_{\mathcal{A}} \otimes
\M_{\mathcal{C}}$.)
\section{Categories of EJC-algebras}
\label{sec: categories EJC}

\noindent In this section we construct various categories of EJC-algebras (see \cref{ejcDef}). An obvious candidate for a category in which objects are such algebras is the following.

\begin{definition}\label{cjpMorph} \textit{Let} $(\mathcal{A},\M_{\mathcal{A}})$ and $(\mathcal{B},\M_{\mathcal{B}})$ \textit{be EJC-algebras. A} Jordan preserving map \textit{is a linear function} $\phi:\M_{\mathcal{A}} \longrightarrow \M_{\mathcal{B}}$ \textit{such that} $\phi(\mathcal{A}) \subseteq \mathcal{B}$\textit{. The category} $\EJC$ \textit{has, as objects, 
EJC-algebras, and, as morphisms, completely positive Jordan-preserving maps.}
\end{definition} 

\noindent In view the associativity of $\odot$ (\cref{prop: associativity}), one might guess that $\EJC$ is symmetric-monoidal under $\odot$. There is certainly a natural choice for the monoidal unit, namely the EJC-algebra $\mathrm{I} = (\R, \C)$. However, the following propositions show that tensor products of $\EJC$ morphisms are generally not morphisms.

\begin{proposition}\label{ex: no states} {\em There exist simple, nontrivial, universally embedded EJC-algebras} $(\mathcal{A},\Cu(\mathcal{A}))$ \textit{and} 
$(\mathcal{B},\Cu(\mathcal{B}))$ such that $\alpha \otimes \id_{\mathcal{B}}$ \textit{is not Jordan-preserving for any state} $\alpha:\Cu(\mathcal{A})\longrightarrow\mathbb{C}$\textit{.}
\begin{proof} 
Suppose that $\mathcal{B}$ is not universally reversible. Suppose, further, that $\mathcal{A} \hotimes \mathcal{B}$ is irreducible. For instance, let $\mathcal{B}=\mathcal{V}_4$ and $\mathcal{A}=\mathcal{M}_{n}(\mathbb{R})_{\text{sa}}$ \cite{JamjoomTensorJW}. Let $\hat{\mathcal{B}}$ be the set of fixed points of the canonical involution $\Phi_{\mathcal{B}}$. Then by \cref{thm: simple composites ideals}, $\mathcal{A} \odot \mathcal{B} = \mathcal{A} \hotimes \mathcal{B}$, the set self-adjoint of fixed points of $\Phi_{\mathcal{A}} \otimes \Phi_{\mathcal{B}}$. In particular, $u_{\mathcal{A}} \otimes \hat{\mathcal{B}}$ is contained in $\mathcal{A} \odot \mathcal{B}$. Now let $\alpha$ be a state on $\Cu(\mathcal{A})$: this is completely positive, and trivially Jordan-preserving, and so, a morphism in $\EJC$. But \[(\alpha \otimes \id_{\mathcal{B}})(u_{\mathcal{A}} \otimes \hat{\mathcal{B}}) = \alpha(u_{\mathcal{A}})\hat{\mathcal{B}} = \hat{\mathcal{B}},\] which {\magenta by \cref{hoThm} is larger than $\mathcal{B}$ because $\mathcal{B}$ is not universally reversible}. So $\alpha\otimes \id_{\mathcal{B}}$ is not Jordan-preserving.
\end{proof}
\end{proposition}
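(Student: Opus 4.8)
The goal is to exhibit a concrete pair of simple, nontrivial, universally embedded EJC-algebras for which tensoring a state on one factor with the identity on the other fails to preserve the Jordan structure, thereby showing that $\EJC$ is not monoidal under $\odot$ with completely positive Jordan-preserving morphisms. The plan is to manufacture a situation in which the canonical tensor product $\mathcal{A}\odot\mathcal{B}$ is forced, by the ideal-theoretic results already established, to contain elements of the form $u_{\mathcal{A}}\otimes\hat{b}$ where $\hat{b}$ ranges over a Jordan algebra strictly larger than $\mathcal{B}$ itself. Applying $\alpha\otimes\id_{\mathcal{B}}$ to such an element returns $\hat{b}$, which lies outside $\mathcal{B}$, contradicting the Jordan-preserving condition $\phi(\mathcal{A}\odot\mathcal{B})\subseteq\mathcal{B}$ in the relevant component.

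First I would choose $\mathcal{B}$ to be an \textsc{eja} that is \emph{not} universally reversible, so that by \cref{hoThm} the fixed-point algebra $C^{*}_{u}(\mathcal{B})_{\text{sa}}^{\Phi_{\mathcal{B}}}$ is strictly larger than $\psi_{\mathcal{B}}(\mathcal{B})$; the natural candidate is a higher spin factor such as $\mathcal{V}_{4}$, which is reversible but not universally reversible by \cref{spinRevThm}. For $\mathcal{A}$ I would take a universally reversible simple matrix algebra, say $\mathcal{M}_{n}(\mathbb{R})_{\text{sa}}$, chosen so that $\mathcal{A}\hotimes\mathcal{B}$ is irreducible (one may cite the relevant fact about tensor products of \textsc{eja}s with spin factors here, as flagged in the proof). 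The irreducibility is the crucial hypothesis: when $\mathcal{A}\hotimes\mathcal{B}$ is simple, \cref{thm: simple composites ideals} collapses to the statement that the only composite is the universal tensor product itself, i.e. $\mathcal{A}\odot\mathcal{B}=\mathcal{A}\hotimes\mathcal{B}$, and by \cref{prop: universal tensor product properties}(\ref{whichTheorem}) this is exactly the self-adjoint fixed-point set of $\Phi_{\mathcal{A}}\otimes\Phi_{\mathcal{B}}$.

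The key computation is then to observe that $u_{\mathcal{A}}\otimes\hat{b}$ is fixed by $\Phi_{\mathcal{A}}\otimes\Phi_{\mathcal{B}}$ for every $\hat{b}$ fixed by $\Phi_{\mathcal{B}}$, since $\Phi_{\mathcal{A}}(u_{\mathcal{A}})=u_{\mathcal{A}}$ (the canonical involution fixes the image of $\mathcal{A}$, in particular its unit). Hence $u_{\mathcal{A}}\otimes\hat{\mathcal{B}}\subseteq\mathcal{A}\odot\mathcal{B}$, where $\hat{\mathcal{B}}$ denotes the fixed-point set of $\Phi_{\mathcal{B}}$. Taking any state $\alpha$ on $C^{*}_{u}(\mathcal{A})$ — which is completely positive and trivially Jordan-preserving, hence a bona fide morphism of $\EJC$ — and normalizing so that $\alpha(u_{\mathcal{A}})=1$, one computes $(\alpha\otimes\id_{\mathcal{B}})(u_{\mathcal{A}}\otimes\hat{\mathcal{B}})=\alpha(u_{\mathcal{A}})\,\hat{\mathcal{B}}=\hat{\mathcal{B}}$.

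The final step is to invoke \cref{hoThm} once more: precisely because $\mathcal{B}$ is not universally reversible, $\hat{\mathcal{B}}=C^{*}_{u}(\mathcal{B})_{\text{sa}}^{\Phi_{\mathcal{B}}}$ properly contains $\psi_{\mathcal{B}}(\mathcal{B})\cong\mathcal{B}$, so the image of $\mathcal{A}\odot\mathcal{B}$ under $\alpha\otimes\id_{\mathcal{B}}$ is not contained in $\mathcal{B}$, and $\alpha\otimes\id_{\mathcal{B}}$ fails to be Jordan-preserving. The main obstacle I anticipate is not the logic of this chain but verifying the ancillary hypothesis that $\mathcal{A}\hotimes\mathcal{B}$ is irreducible for the chosen factors — this is what licenses the identification $\mathcal{A}\odot\mathcal{B}=\mathcal{A}\hotimes\mathcal{B}$ via \cref{thm: simple composites ideals}, and it depends on specific structural facts about universal tensor products with spin factors (the reference to work such as \cite{JamjoomTensorJW} is doing real work here); everything downstream is then a short, direct consequence of the fixed-point description of the universal tensor product.
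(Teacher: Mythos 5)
Your proposal is correct and follows essentially the same route as the paper's proof: the same choice of factors ($\mathcal{B}=\mathcal{V}_{4}$, $\mathcal{A}=\mathcal{M}_{n}(\mathbb{R})_{\text{sa}}$), the same appeal to irreducibility of $\mathcal{A}\hotimes\mathcal{B}$ together with \cref{thm: simple composites ideals} to identify $\mathcal{A}\odot\mathcal{B}$ with the fixed-point set of $\Phi_{\mathcal{A}}\otimes\Phi_{\mathcal{B}}$, and the same final invocation of \cref{hoThm} to see that $\hat{\mathcal{B}}$ strictly contains $\mathcal{B}$. Your explicit check that $u_{\mathcal{A}}\otimes\hat{b}$ is fixed by $\Phi_{\mathcal{A}}\otimes\Phi_{\mathcal{B}}$ is a small but welcome addition of detail that the paper leaves implicit.
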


\noindent Another proof of \cref{ex: no states} runs as follows. Consider the standard embeddings of $\mathcal{M}_{n}(\mathbb{C})_{\text{sa}}$ and $\mathcal{M}_{k}(\mathbb{R})_{\text{sa}}$, \textit{i.e}.\ consider the EJC-algebras $(\mathcal{M}_{n}(\mathbb{C})_{\text{sa}}, \mathcal{M}_{n}(\mathbb{C}))$ and $(\mathcal{M}_{k}(\mathbb{R})_{\text{sa}}, \mathcal{M}_{k}(\mathbb{C}))$. Then we have \[(\mathcal{M}_{n}(\mathbb{C})_{\text{sa}}, \mathcal{M}_{n}(\mathbb{C})) \odot (\mathcal{M}_{k}(\mathbb{R})_{\text{sa}}, \mathcal{M}_{k}(\mathbb{C}))= (\mathcal{M}_{nk}(\mathbb{C})_{\text{sa}},\mathcal{M}_{nk}(\mathbb{C}))\]
where the embedding of $\mathcal{M}_{nk}(\mathbb{C})_{\text{sa}}$ in $\mathcal{M}_{nk}(\mathbb{C})$ is the {\redd standard} one. Now let $\alpha$ be a state on $\mathcal{M}_{n}(\mathbb{C})$, and let $b$ be any self-adjoint matrix in $\mathcal{M}_{nk}(\mathbb{C})_{\text{sa}}$. Then $\mathds{1}_n \otimes b$ is self-adjoint in $\mathcal{M}_{n}(\mathbb{C}) \otimes \mathcal{M}_{k}(\mathbb{C}) = \mathcal{M}_{nk}(\mathbb{C})$, and $(\alpha \otimes \id)(\mathds{1}_n \otimes b) = b$. Since $b$ needn't belong to $\mathcal{M}_{k}(\mathbb{R})_{\text{sa}}$, the mapping $\alpha \otimes \id$ is not Jordan-preserving.

\noindent \cref{ex: no states} suggests the following adaptation of the notion of complete positivity to our setting.

\begin{definition}\label{cjpDef} \textit{A} completely Jordan preserving map \textit{is a linear function} $\phi:(\mathcal{A},\M_{\mathcal{A}})\longrightarrow (\mathcal{B},\M_{\mathcal{B}})$ \textit{such that for any EJC-algebra} $(\mathcal{C},\M_{\mathcal{C}})$ \textit{the function} $\phi \otimes\id_{\M_C}$ \textit{is positive and takes} $\mathcal{A} \odot \mathcal{C}$ \textit{into} $\mathcal{B} \odot \mathcal{C}$\textit{.}
\end{definition}

\noindent We note that \cref{cjpDef} implies that a completely Jordan preserving map $\phi$ is both Jordan-preserving (take $\mathcal{C} = \R$) and completely positive (take 
$\mathcal{C} = \mathcal{M}_{n}(\mathbb{C})_{\text{sa}}$ for any $n$.)

\begin{lemma}\label{lemma:completely Jordan preserving1}  \textit{If} $\phi:\M_{\mathcal{A}}\longrightarrow\M_{\mathcal{B}}$ \textit{is completely Jordan preserving, then for any} $(\mathcal{C},\M_{\mathcal{C}})$, $\phi \otimes \id_{\M_{\mathcal{C}}}$ \textit{is again completely Jordan preserving.}
\begin{proof} If $(\mathcal{D},\M_{\mathcal{D}})$ is another EJC-algebra, then consider $(\mathcal{C} \odot \mathcal{D}, \M_{\mathcal{C}} \otimes \M_{\mathcal{D}})$. The associativity of $\odot$ yields  
\[(\phi \otimes \id_{\M_{\mathcal{C}}}) \otimes \id_{\M_{\mathcal{D}}} = \phi \otimes (\id_{\M_{\mathcal{C}} \otimes \M_{\mathcal{D}}}).\]
Since $\phi$ is completely Jordan preserving, the latter {\redd carries} $\mathcal{A} \odot (\mathcal{C} \odot \mathcal{D})$ into $\mathcal{B} \odot (\mathcal{C} \odot \mathcal{D})$. Since $\odot$ is associative, this tells us that {\redd $(\phi \otimes \id_{\mathcal{C}}) \otimes \id_{\mathcal{D}}$ carries} $(\mathcal{A} \odot \mathcal{C}) \odot \mathcal{D}$  into $(\mathcal{B} \odot \mathcal{C}) \odot \mathcal{D}$. 
\end{proof}
\end{lemma}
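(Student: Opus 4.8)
The statement to prove is \cref{lemma:completely Jordan preserving1}: that if $\phi:\M_{\mathcal{A}}\longrightarrow\M_{\mathcal{B}}$ is completely Jordan preserving, then $\phi\otimes\id_{\M_{\mathcal{C}}}$ is again completely Jordan preserving for any EJC-algebra $(\mathcal{C},\M_{\mathcal{C}})$. The plan is to unwind \cref{cjpDef} directly: I must verify that for an arbitrary auxiliary EJC-algebra $(\mathcal{D},\M_{\mathcal{D}})$, the map $(\phi\otimes\id_{\M_{\mathcal{C}}})\otimes\id_{\M_{\mathcal{D}}}$ is positive and carries $(\mathcal{A}\odot\mathcal{C})\odot\mathcal{D}$ into $(\mathcal{B}\odot\mathcal{C})\odot\mathcal{D}$. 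The whole proof is essentially an exercise in bookkeeping with the associativity of $\odot$ established in \cref{prop: associativity}, together with the hypothesis that $\phi$ itself is completely Jordan preserving in the sense of \cref{cjpDef}.

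First I would form the composite EJC-algebra $(\mathcal{C}\odot\mathcal{D},\M_{\mathcal{C}}\otimes\M_{\mathcal{D}})$, which is a legitimate EJC-algebra by the definition of the canonical tensor product. The key algebraic identity, coming from the strict associativity of the tensor product of $\ast$-algebras at the level of the ambient $\M$'s, is
\begin{equation}
(\phi\otimes\id_{\M_{\mathcal{C}}})\otimes\id_{\M_{\mathcal{D}}}=\phi\otimes\id_{\M_{\mathcal{C}}\otimes\M_{\mathcal{D}}}\text{.}
\end{equation}
Since $\phi$ is completely Jordan preserving, applying \cref{cjpDef} with the auxiliary algebra taken to be $\mathcal{C}\odot\mathcal{D}$ tells me immediately that $\phi\otimes\id_{\M_{\mathcal{C}}\otimes\M_{\mathcal{D}}}$ is positive and carries $\mathcal{A}\odot(\mathcal{C}\odot\mathcal{D})$ into $\mathcal{B}\odot(\mathcal{C}\odot\mathcal{D})$.

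Next I would invoke \cref{prop: associativity}, which supplies $\ast$-isomorphisms identifying $\mathcal{A}\odot(\mathcal{C}\odot\mathcal{D})$ with $(\mathcal{A}\odot\mathcal{C})\odot\mathcal{D}$ and likewise for $\mathcal{B}$. Rephrasing the preceding conclusion through this identification, $(\phi\otimes\id_{\M_{\mathcal{C}}})\otimes\id_{\M_{\mathcal{D}}}$ carries $(\mathcal{A}\odot\mathcal{C})\odot\mathcal{D}$ into $(\mathcal{B}\odot\mathcal{C})\odot\mathcal{D}$, which is exactly the containment required of $\phi\otimes\id_{\M_{\mathcal{C}}}$ relative to the auxiliary algebra $\mathcal{D}$. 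Positivity transfers along the same identification since the associator is an order isomorphism. As $(\mathcal{D},\M_{\mathcal{D}})$ was arbitrary, \cref{cjpDef} is satisfied and $\phi\otimes\id_{\M_{\mathcal{C}}}$ is completely Jordan preserving.

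The main subtlety — and the one step I would be careful to state cleanly rather than gloss over — is the interplay between the strict associativity holding for the ambient $\ast$-algebra tensor products (which makes the displayed identity of maps literally true on the nose) and the merely-up-to-isomorphism associativity of $\odot$ on the Jordan subalgebras furnished by \cref{prop: associativity}. I expect no genuine obstacle here, since the associator for the $\ast$-algebra category restricts to the Jordan-level associator, but the bookkeeping of exactly which identification is being used when passing from $\mathcal{A}\odot(\mathcal{C}\odot\mathcal{D})$ to $(\mathcal{A}\odot\mathcal{C})\odot\mathcal{D}$ is the place where an overly casual argument could hide a gap. Everything else is a direct application of the definition.
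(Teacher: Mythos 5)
Your proposal is correct and follows essentially the same route as the paper's own proof: form the auxiliary composite $(\mathcal{C}\odot\mathcal{D},\M_{\mathcal{C}}\otimes\M_{\mathcal{D}})$, use the identity $(\phi\otimes\id_{\M_{\mathcal{C}}})\otimes\id_{\M_{\mathcal{D}}}=\phi\otimes\id_{\M_{\mathcal{C}}\otimes\M_{\mathcal{D}}}$, apply the hypothesis on $\phi$, and transfer the conclusion via the associativity of $\odot$ from \cref{prop: associativity}. Your explicit remarks on positivity transferring along the associator and on the strict-versus-up-to-isomorphism bookkeeping are welcome refinements of the same argument.
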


\noindent While all completely Jordan preserving morphisms are completely positive, our proofs for \cref{ex: no states} show that the converse is false. On the other hand, the
class of completely Jordan preserving morphisms is still quite large.

\begin{proposition} \textit{Let $\phi : \M_{\mathcal{A}} \rightarrow \M_{\mathcal{B}}$ be a {\redd $\ast$-homomorphism} taking $\mathcal{A}$ to $\mathcal{B}$. Then $\phi$ is completely Jordan preserving.}
\begin{proof}
For the proof, note that $\phi \otimes \id_{\M_{\mathcal{C}}}$ is again a $\ast$-homomorphism, and hence, takes the Jordan subalgebra generated by $\mathcal{A} \otimes \mathcal{C}$ to that generated by $\mathcal{B} \otimes \mathcal{C}$, \textit{i.e}.\ sends $\mathcal{A} \odot \mathcal{C}$ into $\mathcal{B} \odot \mathcal{C}$. So $\phi$ is completely Jordan preserving.
\end{proof}
\end{proposition}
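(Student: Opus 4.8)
The plan is to verify the two requirements of \cref{cjpDef} directly for an arbitrary EJC-algebra $(\mathcal{C},\M_{\mathcal{C}})$, exploiting the single structural fact that the tensor product of a $\ast$-homomorphism with an identity map is again a $\ast$-homomorphism. First I would fix $(\mathcal{C},\M_{\mathcal{C}})$ and set $\Psi \equiv \phi\otimes\id_{\M_{\mathcal{C}}}:\M_{\mathcal{A}}\otimes\M_{\mathcal{C}}\longrightarrow\M_{\mathcal{B}}\otimes\M_{\mathcal{C}}$. Because $\phi$ is a $\ast$-homomorphism and $\id_{\M_{\mathcal{C}}}$ trivially is one, $\Psi$ is a $\ast$-homomorphism of finite dimensional complex $\ast$-algebras; in particular it is $\mathbb{C}$-linear and satisfies $\Psi(XY)=\Psi(X)\Psi(Y)$ and $\Psi(X^{*})=\Psi(X)^{*}$ for all $X,Y\in\M_{\mathcal{A}}\otimes\M_{\mathcal{C}}$.

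Next I would establish the Jordan-preserving half. Since $\Psi$ respects both the associative product and the involution, its restriction to self-adjoint parts respects the Jordan product $X\jProd Y=(XY+YX)/2$, so $\Psi$ is a Jordan homomorphism on $(\M_{\mathcal{A}}\otimes\M_{\mathcal{C}})_{\text{sa}}$. By bilinearity of $\otimes$ and the hypothesis $\phi(\mathcal{A})\subseteq\mathcal{B}$, one has $\Psi(a\otimes c)=\phi(a)\otimes c\in\mathcal{B}\otimes\mathcal{C}$ for all $a\in\mathcal{A}$ and $c\in\mathcal{C}$, whence $\Psi(\mathcal{A}\otimes\mathcal{C})\subseteq\mathcal{B}\otimes\mathcal{C}$ after taking real-linear spans. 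The key observation is then that, because $\Psi$ is a Jordan homomorphism and $\mathcal{A}\odot\mathcal{C}=\mathfrak{j}(\mathcal{A}\otimes\mathcal{C})$ is produced from $\mathcal{A}\otimes\mathcal{C}$ by iterated real-linear combinations and Jordan products --- all of which $\Psi$ preserves --- the image $\Psi(\mathfrak{j}(\mathcal{A}\otimes\mathcal{C}))$ lies inside the Jordan algebra generated by $\Psi(\mathcal{A}\otimes\mathcal{C})$. Combining this with $\Psi(\mathcal{A}\otimes\mathcal{C})\subseteq\mathcal{B}\otimes\mathcal{C}$ yields $\Psi(\mathcal{A}\odot\mathcal{C})\subseteq\mathfrak{j}(\mathcal{B}\otimes\mathcal{C})=\mathcal{B}\odot\mathcal{C}$, exactly as demanded.

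For positivity I would argue that the positive cone of the EJA $\mathcal{A}\odot\mathcal{C}$ consists of Jordan squares $X^{2}$ with $X\in\mathcal{A}\odot\mathcal{C}$, and that for self-adjoint $X$ the Jordan square coincides with the associative square $X^{*}X$. Applying $\Psi$ and using that it is a $\ast$-homomorphism gives $\Psi(X^{2})=\Psi(X)^{*}\Psi(X)$, a positive element of $\M_{\mathcal{B}}\otimes\M_{\mathcal{C}}$; since the preceding paragraph already places $\Psi(X)$ in $\mathcal{B}\odot\mathcal{C}$, the element $\Psi(X^{2})=\Psi(X)^{2}$ is a Jordan square in $\mathcal{B}\odot\mathcal{C}$ and hence lies in its positive cone. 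Thus $\Psi$ is positive, and as both conditions of \cref{cjpDef} hold for every $(\mathcal{C},\M_{\mathcal{C}})$, the map $\phi$ is completely Jordan preserving.

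I would expect the only point requiring genuine care to be the generation argument in the second paragraph --- namely, verifying that a Jordan homomorphism carries the Jordan subalgebra generated by a set into the Jordan subalgebra generated by the image of that set. This is a routine induction on the formation of generators from linear combinations and Jordan products, but it should be stated explicitly (perhaps as a one-line auxiliary observation) so that the argument is airtight; everything else, the $\ast$-homomorphism property of $\Psi$ and the identification of the positive cone with Jordan squares, is standard.
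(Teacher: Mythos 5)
Your proposal is correct and follows essentially the same route as the paper: the whole argument rests on the observation that $\phi\otimes\id_{\M_{\mathcal{C}}}$ is again a $\ast$-homomorphism, hence a Jordan homomorphism carrying $\mathfrak{j}(\mathcal{A}\otimes\mathcal{C})$ into $\mathfrak{j}(\mathcal{B}\otimes\mathcal{C})$. You simply make explicit two points the paper leaves tacit --- the induction over generators and the positivity via Jordan squares --- both of which are routine and correctly handled.
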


\begin{proposition} \textit{Let $(\mathcal{A},\M_{\mathcal{A}})$ be an EJC-algebra, and let $a \in \mathcal{A}$. The mapping $U_a : \M_{\mathcal{A}} \rightarrow \M_{\mathcal{A}}$ given by $U_a(b) = aba$ is completely Jordan preserving.}
\begin{proof}
For the proof, note that $U_{a}$ is completely positive, and can be expressed in terms of the Jordan product on $\M_{\mathcal{A}}$ as
\begin{equation}
U_a(b) = 2 a \jProd (a \jProd b) - (a^2)\jProd b\text{,}
\end{equation}
that is $U_a = 2 L_{a}^{2} - L_{a^2}$, where $L_a$ is the operator of left Jordan multiplication. Since $\mathcal{A}$ is a Jordan subalgebra of
$(\M_{\mathcal{A}})_{\text{sa}}$, $U_a(b) \in \mathcal{A}$ for all $a, b \in \mathcal{A}$. Thus, $U_a$ is a morphism.  Now if $(\mathcal{C},\M_{\mathcal{C}})$ is another EJC-algebra, we have
\begin{equation}
U_{a} \otimes U_{c} = U_{a \otimes c}
\end{equation}
for all $c \in \M_{\mathcal{C}}$; in particular, 
\begin{equation}
U_a \otimes \id_{\M_C} = U_a \otimes U_{\mathds{1}} = U_{a \otimes \mathds{1}}\text{.}
\end{equation}
Since $a \otimes \mathds{1} \in \mathcal{A} \odot \mathcal{C}$, it follows that $U_{a}$ is completely Jordan preserving.
\end{proof}
\end{proposition}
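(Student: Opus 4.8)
The statement to prove is the final proposition: for an EJC-algebra $(\mathcal{A},\M_{\mathcal{A}})$ and $a \in \mathcal{A}$, the quadratic representation $U_a : \M_{\mathcal{A}} \longrightarrow \M_{\mathcal{A}}$ given by $U_a(b) = aba$ is completely Jordan preserving in the sense of \cref{cjpDef}.

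The plan is to verify the three ingredients bundled into \cref{cjpDef} one at a time: that $U_a$ is Jordan-preserving (takes $\mathcal{A}$ into $\mathcal{A}$), that it is completely positive, and that for every EJC-algebra $(\mathcal{C},\M_{\mathcal{C}})$ the map $U_a \otimes \id_{\M_{\mathcal{C}}}$ is positive and carries $\mathcal{A} \odot \mathcal{C}$ into $\mathcal{B} \odot \mathcal{C}$ (here $\mathcal{B} = \mathcal{A}$). First I would observe that $U_a(b) = aba$ can be rewritten purely in terms of the Jordan product as $U_a = 2L_a^2 - L_{a^2}$, where $L_a$ is left Jordan multiplication. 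This is the standard expansion $U_a(b) = 2a\jProd(a\jProd b) - a^2 \jProd b$, which one checks by expanding $2a\jProd(a\jProd b)$ using $x\jProd y = (xy+yx)/2$ and collecting terms so that the non-$aba$ pieces cancel against $a^2 \jProd b$. Since $\mathcal{A}$ is a Jordan subalgebra of $(\M_{\mathcal{A}})_{\text{sa}}$ and the expression only uses Jordan products of elements of $\mathcal{A}$, it follows immediately that $U_a(b) \in \mathcal{A}$ for all $b \in \mathcal{A}$; hence $U_a$ is Jordan-preserving. Complete positivity of $b \mapsto aba$ is classical (it is a single-Kraus-operator map), so I would simply invoke it.

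The crux is the tensored condition. For this I would use the multiplicativity of the quadratic representation under tensor products of the ambient $\ast$-algebras, namely the identity $U_{a} \otimes U_{c} = U_{a \otimes c}$ for all $c \in \M_{\mathcal{C}}$, which holds because $(a\otimes c)(b \otimes d)(a \otimes c) = (aba)\otimes(cdc)$ on pure tensors and both sides extend linearly. Specializing to $c = \mathds{1}_{\M_{\mathcal{C}}}$ gives
\begin{equation}
U_a \otimes \id_{\M_{\mathcal{C}}} = U_a \otimes U_{\mathds{1}} = U_{a \otimes \mathds{1}}\text{.}
\end{equation}
Now $a \otimes \mathds{1}$ lies in $\mathcal{A} \odot \mathcal{C}$, since $\mathds{1}_{\M_{\mathcal{C}}}$ is the Jordan unit $u_{\mathcal{C}}$ of $\mathcal{C}$ (by \cref{newProp1} the Jordan unit is also the associative unit), so $a \otimes u_{\mathcal{C}} \in \mathcal{A} \otimes \mathcal{C} \subseteq \mathcal{A} \odot \mathcal{C}$. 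Applying the already-established Jordan-preserving property of $U_x$ for $x = a \otimes \mathds{1}$ \emph{relative to the EJC-algebra} $(\mathcal{A}\odot\mathcal{C}, \M_{\mathcal{A}}\otimes\M_{\mathcal{C}})$, we conclude that $U_{a\otimes\mathds{1}}$ carries $\mathcal{A}\odot\mathcal{C}$ into itself, which is exactly $\mathcal{B}\odot\mathcal{C}$ since $\mathcal{B}=\mathcal{A}$. Positivity of $U_{a\otimes\mathds{1}}$ follows from the complete positivity already noted.

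The main obstacle I anticipate is purely a bookkeeping subtlety rather than a deep difficulty: one must be careful that the Jordan-preserving argument applied to $a \otimes \mathds{1}$ is legitimately an instance of the \emph{first} part of the proof applied to a \emph{new} EJC-algebra, namely the canonical tensor product $\mathcal{A}\odot\mathcal{C}$ sitting inside $\M_{\mathcal{A}}\otimes\M_{\mathcal{C}}$. The point is that $a\otimes\mathds{1}$ is a genuine element of the Jordan subalgebra $\mathcal{A}\odot\mathcal{C}$, and $U_{a\otimes\mathds{1}}(y) = 2(a\otimes\mathds{1})\jProd((a\otimes\mathds{1})\jProd y) - (a\otimes\mathds{1})^2 \jProd y$ uses only Jordan products with $a\otimes\mathds{1}\in\mathcal{A}\odot\mathcal{C}$, so the same closure argument applies verbatim. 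With that observation in hand, the conditions of \cref{cjpDef} are all met, and the proof is complete. I would close by remarking that this establishes that filters (in the sense of the quadratic representations) are the prototypical morphisms of the categories constructed in this section, supplying the physically meaningful processes alongside the $\ast$-homomorphisms.
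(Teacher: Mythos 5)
Your proposal is correct and follows essentially the same route as the paper's proof: the Jordan-product expression $U_a = 2L_a^2 - L_{a^2}$ for the Jordan-preserving property, the identity $U_a \otimes U_c = U_{a\otimes c}$ specialized to $c = \mathds{1}$, and the observation that $a\otimes\mathds{1}\in\mathcal{A}\odot\mathcal{C}$. Your explicit remark that the closure argument must be re-applied to the EJC-algebra $(\mathcal{A}\odot\mathcal{C},\M_{\mathcal{A}}\otimes\M_{\mathcal{C}})$ is a welcome clarification of a step the paper leaves implicit, but it is not a different argument.
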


\begin{proposition} 
\label{prop:completely Jordan preserving composes and tensors} \textit{Let $(\mathcal{A},\M_{\mathcal{A}})$, {\redd $(\mathcal{A}', \M_{\mathcal{A}'})$, $(\mathcal{B},\M_{\mathcal{B}})$, $(\mathcal{B}', \M_{\mathcal{B}'})$ and and $(\mathcal{C},\M_{\mathcal{C}})$  be EJC-algebras.} Then}
\begin{itemize} 
\item[(i)] \textit{If} $\phi : \M_{\mathcal{A}}\longrightarrow \M_{\mathcal{B}}$ \textit{and} $\psi : \M_{\mathcal{B}} \longrightarrow \M_{\mathcal{C}}$ \textit{are completely Jordan preserving, then so is} $\psi \circ \phi$\textit{;}
\item[(ii)] \textit{If} $\phi : \M_{\mathcal{A}} \longrightarrow \M_{\mathcal{B}}$ and $\psi : \M_{\mathcal{A}'} \longrightarrow \M_{\mathcal{B}'}$ \textit{are completely Jordan preserving then so is} $\phi \otimes \psi : \M_{\mathcal{A} \odot \mathcal{A}'} = \M_{\mathcal{A}} \otimes \M_{\mathcal{A}'}\longrightarrow \M_{\mathcal{B}} \otimes \M_{\mathcal{B}'} = \M_{\mathcal{B} \odot \mathcal{B}'}$\textit{.} 
\end{itemize} 
\begin{proof}
(i) is obvious, and (ii) follows from (i) and \cref{lemma:completely Jordan preserving1} by noting that
\begin{equation}
\phi \otimes \psi = (\phi \otimes \id_{\M_{B'}}) \circ (\id_{\M_{A}} \otimes \psi)
\end{equation}
\end{proof}
\end{proposition}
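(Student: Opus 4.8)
The final statement to prove is \cref{prop:completely Jordan preserving composes and tensors}, which asserts that completely Jordan preserving maps are closed under composition (part (i)) and under the monoidal product $\otimes$ (part (ii)). The author's own proof sketch already indicates the intended structure, so my plan would follow the same skeleton, fleshing out the two claims that the excerpt leaves as ``obvious'' or as a one-line deduction.

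For part (i), the plan is to argue directly from \cref{cjpDef}. Fix an arbitrary EJC-algebra $(\mathcal{C},\M_{\mathcal{C}})$. I would observe that $(\psi\circ\phi)\otimes\id_{\M_{\mathcal{C}}} = (\psi\otimes\id_{\M_{\mathcal{C}}})\circ(\phi\otimes\id_{\M_{\mathcal{C}}})$ by the bifunctoriality of $\otimes$ on complex $*$-algebras. Since $\phi$ is completely Jordan preserving, $\phi\otimes\id_{\M_{\mathcal{C}}}$ is positive and carries $\mathcal{A}\odot\mathcal{C}$ into $\mathcal{B}\odot\mathcal{C}$; since $\psi$ is completely Jordan preserving, $\psi\otimes\id_{\M_{\mathcal{C}}}$ is positive and carries $\mathcal{B}\odot\mathcal{C}$ into $\mathcal{C}\odot\mathcal{C}$ (here renaming to match $\psi:\M_{\mathcal{B}}\to\M_{\mathcal{C}}$, so really into the third algebra's canonical product with $\mathcal{C}$). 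Composing two positive maps yields a positive map, and the chain of set-inclusions then shows $\mathcal{A}\odot\mathcal{C}$ is carried into the appropriate target. As $\mathcal{C}$ was arbitrary, $\psi\circ\phi$ is completely Jordan preserving.

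For part (ii), I would reduce to part (i) together with \cref{lemma:completely Jordan preserving1}, exactly as the author suggests. The key identity is
\[
\phi\otimes\psi = (\phi\otimes\id_{\M_{\mathcal{B}'}})\circ(\id_{\M_{\mathcal{A}}}\otimes\psi),
\]
which holds by the interchange law for the tensor product of linear maps. By \cref{lemma:completely Jordan preserving1}, since $\phi$ is completely Jordan preserving, so is $\phi\otimes\id_{\M_{\mathcal{B}'}}$; likewise $\id_{\M_{\mathcal{A}}}\otimes\psi$ is completely Jordan preserving (this needs the symmetric version of \cref{lemma:completely Jordan preserving1}, applied after conjugating by the symmetor $\sigma$, or simply by rerunning the lemma's associativity argument on the other factor). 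Part (i) then gives that their composite $\phi\otimes\psi$ is completely Jordan preserving as a map $\M_{\mathcal{A}}\otimes\M_{\mathcal{A}'}\to\M_{\mathcal{B}}\otimes\M_{\mathcal{B}'}$, and the identifications $\M_{\mathcal{A}\odot\mathcal{A}'}=\M_{\mathcal{A}}\otimes\M_{\mathcal{A}'}$ and $\M_{\mathcal{B}\odot\mathcal{B}'}=\M_{\mathcal{B}}\otimes\M_{\mathcal{B}'}$ are immediate from \cref{def: canonical tensor product}.

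The only genuine subtlety — and hence the step I would watch most carefully — is the application of \cref{lemma:completely Jordan preserving1} to the \emph{right} factor, i.e.\ to $\id_{\M_{\mathcal{A}}}\otimes\psi$. The lemma as stated addresses $\phi\otimes\id_{\M_{\mathcal{C}}}$, tensoring an identity on the right; obtaining the left-handed version requires invoking the symmetry $\sigma$ of the monoidal structure (together with its unitarity and the coherence \cref{alpSigCo}--\cref{simCo}) to transport the result, or else re-deriving it directly using the associativity of $\odot$ from \cref{prop: associativity}, which is symmetric in the roles of the factors. I expect this bookkeeping with $\sigma$ to be the main obstacle, but it is a formality given the symmetric monoidal apparatus already established; everything else is a direct unwinding of the definitions.
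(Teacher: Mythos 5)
Your proposal is correct and follows essentially the same route as the paper: part (i) by composing the tensored maps and chaining the inclusions, part (ii) via the decomposition $\phi\otimes\psi=(\phi\otimes\id_{\M_{\mathcal{B}'}})\circ(\id_{\M_{\mathcal{A}}}\otimes\psi)$ together with \cref{lemma:completely Jordan preserving1} and part (i). Your observation that the lemma as stated only covers tensoring the identity on the right, so that $\id_{\M_{\mathcal{A}}}\otimes\psi$ needs the symmetor (or a rerun of the associativity argument on the other factor), correctly fills in a detail the paper leaves implicit.
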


\noindent In light of the foregoing propositions, EJC-algebras and completely Jordan preserving maps form a symmetric monoidal category, which we will call $\CJP$, with tensor unit $\mathrm{I} =(\R,\C)$. The category $\CJP$, however, has an undesirable feature: any positive mapping $f : \M_{\mathcal{A}}\longrightarrow \R$ is automatically Jordan preserving, yet, as \cref{ex: no states} illustrates, for a non-universally reversible $\mathcal{B}$, $f \otimes \id_{\mathcal{B}}$ {\redd need not be}. In other words, there {\redd need be} no completely Jordan preserving morphisms $\mathcal{A} \longrightarrow \mathrm{I}$. In particular, states of $\mathcal{A}$ {\redd need not be} completely Jordan preserving morphisms. In some cases, we can remedy this difficulty by relativising the definition of completely Jordan preserving mappings to a particular class $\mathscr{C}$ of EJC-algebras. In what follows, assume that $\mathscr{C}$ is closed under $\odot$ and contains the tensor unit $\mathrm{I} = (\R,\C)$.

\begin{definition}{\em Let $(\mathcal{A},\M_{\mathcal{A}})$ and $(\mathcal{B},\M_{\mathcal{B}})$ belong to $\mathscr{C}$. A positive linear mapping $\phi : \M_{\mathcal{A}}\longrightarrow \M_{\mathcal{B}}$ is {\em relatively completely Jordan preserving} with respect to $\mathscr{C}$ if, for all $(\mathcal{C},\M_{\mathcal{C}}) \in \mathscr{C}$, the mapping $\phi \otimes \id_{\M_{\mathcal{C}}}$ is positive and maps $\mathcal{A} \odot \mathcal{C}$ into $\mathcal{B} \odot \mathcal{C}$. We denote the set of all such maps by $\CJP_{\mathscr{C}}(\mathcal{A},\mathcal{B})$.} 
\end{definition}

\noindent If $\phi$ is relatively completely Jordan preserving with respect to $\mathscr{C}$, this does not imply that $\phi$ is completely positive, unless 
$\mathscr{C}$ contains $\mathcal{M}_{n}(\mathbb{C})_{\text{sa}}$ for every $n \in\mathbb{N}$. Nevertheless, exactly as in the proof of \cref{prop:completely Jordan preserving composes and tensors}, we see that if $\mathcal{A},\mathcal{B},\mathcal{C} \in \mathscr{C}$ and $\phi \in \CJP_{\mathscr{C}}(\mathcal{A},\mathcal{B})$ and $\psi \in \CJP_{\mathscr{C}}(\mathcal{B},\mathcal{C})$, then $\psi \circ \phi \in \CJP_{\mathscr{C}}(\mathcal{A},\mathcal{C})$, and also that if $\mathcal{A},\mathcal{B},\mathcal{C},\mathcal{D} \in \mathscr{C}$ and 
$\phi \in \CJP_{\mathscr{C}}(\mathcal{A},\mathcal{B})$ and $\psi \in \CJP_{\mathscr{C}}(\mathcal{C},\mathcal{D})$, then $\phi \otimes \psi \in \CJP_{\mathscr{C}}(\mathcal{A} \otimes \mathcal{C},  \mathcal{B} \otimes \mathcal{D})$. In other words, $\CJP_{\mathscr{C}}$ becomes\footnote{We prove this explicitly in \cref{SMCproofApp}.} a symmetric monoidal category with relatively completely Jordan preserving mappings with respect to $\mathscr{C}$ as morphisms. We denote this category by $\CJP_{\mathscr{C}}$. Here are three important ones. 

\begin{definition}[The category $\CQM$]\label{ex: CQM} 
{\em Let $\mathscr{C}$ be the class of hermitian parts of 
complex $*$-algebras with standard embeddings. Then $\phi$ 
belongs to $\mathbf{CJP}_{\mathscr{C}}$ 
iff $\phi$ is completely positive.  Evidently, {\redd the category $\mathbf{CJP}_{\mathscr{C}}$ for this choice of $\mathcal{C}$} is essentially orthodox, mixed-state quantum mechanics with
  superselection rules. From now on, we shall call this category $\CQM$.
  }
\end{definition} 

\begin{definition}[The category $\RSE$] \label{ex: RSE}{\em Let $\mathscr{C}'$ be the class of reversible EJCs with standard embeddings. We denote the category $\CJP_{\mathscr{C}'}$ by $\RSE$.}
\end{definition}

\noindent By \cref{mgRes}, plus the distributivity of $\odot$ over direct sums, $\RSE$ is closed under $\odot$. The category $\RSE$ represents a kind of unification of finite-dimensional real, complex and quaternionic quantum mechanics, in so far as its objects are the Jordan algebras associated with real, complex and quaternionic quantum systems, and direct sums of these. Moreover, as restricted to complex (or real) systems, its compositional structure is the standard one. However, our discussion following \cref{ex: no states} shows that not every quantum-mechanical process --- in particular, not even processes that prepare states --- will count as a morphism in $\RSE$, 
so this unification comes at a high cost. 
 
\begin{definition}[The category $\URUE$]\label{ex: URUE} {\em Let $\mathscr{C}''$ be the class of universally reversible EJCs with universal embeddings. We denote the category $\CJP_{\mathscr{C}''}$ by $\URUE$.}
\end{definition} 

\noindent $\URUE$ is closer to being a legitimate ``unified" quantum theory, although it omits the quaternionic bit (which is reversible, but not universally so). Even as restricted to complex quantum systems, however, it differs from orthodox quantum theory in two interesting ways. First, and most conspicuously, the tensor product is not the usual one: $\mathcal{M}_{n}(\mathbb{C})_{\text{sa}} \hotimes\mathcal{M}_{k}(\mathbb{C})_{\text{sa}} = \mathcal{M}_{nk}(\mathbb{C})_{\text{sa}} \oplus \mathcal{M}_{nk}(\mathbb{C})_{\text{sa}}$, rather than $\mathcal{M}_{nk}(\mathbb{C})_{\text{sa}}$. Secondly, it allows some processes that orthodox QM does not. Any completely positive mapping $\phi : \Cu(\mathcal{A}) \rightarrow \Cu(\mathcal{B})$ intertwining the involutions $\Phi_{\mathcal{A}}$ and $\Phi_{\mathcal{B}}$ (that is, $\Phi_{\mathcal{B}} \circ \phi = \phi \circ \Phi_{\mathcal{A}}$) is relatively completely Jordan preserving, i.e., a morphism in $\URUE$. In particular, the mapping on $\Cu(\mathcal{M}_{n}(\mathbb{C})_{\text{sa}}) = \mathcal{M}_{n}(\mathbb{C}) \oplus \mathcal{M}_{n}(\mathbb{C})$ that swaps the two summands is a morphism. Since the image of $\mathcal{M}_{n}(\mathbb{C})_{\text{sa}}$ in $\Cu(\mathcal{M}_{n}(\mathbb{C})_{\text{sa}})$ consists of pairs $(a,a^T)$, this mapping effects the transpose automorphism on $\mathcal{M}_{n}(\mathbb{C})_{\text{sa}}$. This is not permitted in orthodox quantum theory, as the transpose is not a completely positive mapping on $\mathcal{M}_{n}(\mathbb{C})$. In spite of its divergences from orthodoxy, $\URUE$ is in many respects a well behaved probabilistic theory. We shall now improve on it, by considering a category of EJC-algebras having a slightly larger class of objects (in particular, it includes $\mathcal{M}_{2}(\mathbb{H})_{\text{sa}}$), but a slightly more restricted set of morphisms. 
 
\noindent Recall that we write $\M^{\Phi}$ for the set of fixed-points of an involution $\Phi$  on a complex $\ast$-algebra $\M$. One can show that the involution $\Phi$ on $\mathcal{M}_{n}(\mathbb{C})$ with  $\mathcal{M}_{n}(\mathbb{R})_{\text{sa}} = \mathcal{M}_{n}(\C)^{\Phi}_{\text{sa}}$ (the transpose) and  on $\mathcal{M}_{2n}(\C)$ with $\mathcal{M}_{n}(\mathbb{H})_{\text{sa}} = \mathcal{M}_{2n}(\C)^{\Phi}_{\text{sa}}$, namely $\Phi(a) = -Ja^T J$ where $J$ is the unitary in Eq.~\eqref{jMatrix}, are both unitary with respect to the 
trace inner product $\langle a, b \rangle := \Tr(a^{*}b)$. The canonical involution on $\Cu(\mathcal{M}_{n}(\mathbb{C})_{\text{sa}}) = \mathcal{M}_{n}(\mathbb{C}) \oplus \mathcal{M}_{n}(\mathbb{C})$, namely, 
the mapping $(a,b) \longmapsto (b^T, a^T)$, {\redd is likewise unitary}. We relegate explicit proofs of these statements to \cref{epSec}. Presently, let us introduce the following definitions, and we remind the reader of \cref{existUThm}.

\begin{definition} \textit{An} involutive EJC-algebra \textit{is a triple} $(\mathcal{A},\M_{\mathcal{A}},\Phi)$ \textit{where} $(\mathcal{A},\M_{\mathcal{A}})$ \textit{is an EJC-algebra, and where $\Phi$ is a unitary involution on $\M_{\mathcal{A}}$ with $\mathcal{A} = (\M_{\mathcal{A}})^{\Phi}_{\text{sa}}$.}
\end{definition}
\begin{definition}\label{invQMDef}
\textit{$\InvQM$ is the category where the objects are involutive EJC-algebras, and where the morphisms $\phi:\mathcal{A}\longrightarrow\mathcal{B}$ completely positive mappings from $\M_{\mathcal{A}}$ to  $\M_{\mathcal{B}}$ intertwining $\Phi_{\mathcal{A}}$ and $\Phi_{\mathcal{B}}$, i.e, $\Phi_{\mathcal{B}} \circ \phi = \phi \circ \Phi_{\mathcal{A}}$.}                        
\end{definition} 

\noindent As we pointed out earlier, the condition that $\mathcal{A}$ be the set of self-adjoint fixed points of an involution makes $\mathcal{A}$ reversible in $\M_{\mathcal{A}}$. Thus, the class of involutive EJC-algebras contains no ``higher'' ($n = 4$ or $n > 5$) spin factors. In fact, it contains exactly {\redd direct sums of} the universally embedded universally reversible EJCs $(\mathcal{A},\Cu(\mathcal{A}))$ where $\mathcal{A}=\mathcal{M}_{n}(\mathbb{R})_{\text{sa}}$ or $\mathcal{M}_{n}(\mathbb{C})_{\text{sa}}$, with $n$ arbitrary, or $\mathcal{M}_{n}(\mathbb{H})_{\text{sa}}$ with $n  > 2$, {\em together with} the {\em standardly} embedded quabit, \textit{i.e}.\ $(\mathcal{M}_{2}(\mathbb{H})_{\text{sa}},\mathcal{M}_{4}(\C))$, with $\mathcal{M}_{2}(\mathbb{H})_{\text{sa}}$ the self-adjoint fixed points of the involution $\Phi(a) = -Ja^{T}J$, where $J$ is as in Eq.~\eqref{jMatrix}.  In other words, $\InvQM$ includes exclusively systems over the three division rings $\R, \C$ and $\Q$, albeit with the complex systems represented in their universal embeddings. Note that $(\R, \mathcal{M}_{1}(\R)_{\text{sa}})$ counts as an involutive EJC: since $\mathcal{M}_{1}(\R)_{\text{sa}} = \R$ is commutative, the identity map provides the necessary involution.  

\noindent It is easy to see that completely positive mappings $\M_{\mathcal{A}} \rightarrow \M_{\mathcal{B}}$ intertwining $\Phi_{\mathcal{A}}$ and $\Phi_{\mathcal{B}}$ are automatically relatively completely Jordan preserving for the class of involutive EJCs. Indeed, by \cref{cor: canonical composites with involutions}
the canonical tensor product of involutive EJCs is again involutive, as
\begin{equation} 
\mathcal{A} \odot \mathcal{B} = (\M_{\mathcal{A}} \otimes \M_{\mathcal{B}})^{\Phi_{\mathcal{A}} \otimes \Phi_{\mathcal{B}}}_{\text{sa}}
\label{forNewProp9}
\end{equation}
for all $\mathcal{A}, \mathcal{B}\in\InvQM$, so one has the following.
\begin{proposition}$\mathbf{InvQM}$\textit{ morphisms are relatively completely Jordan preserving with respect to the class of objects} $\text{ob}(\mathbf{InvQM})$\textit{.}
\begin{proof}
Let $\text{hom}(\mathbf{InvQM})\ni\phi:\mathcal{A}\longrightarrow\mathcal{B}$. By \cref{invQMDef} $\phi$ is a completely positive linear map from $\mathbf{M}_{\mathcal{A}}$ into $\mathbf{M}_{\mathcal{B}}$, so $\phi$ is of course then positive. Now, let $(\mathcal{C},\mathbf{M}_{\mathcal{C}},\Phi_{\mathcal{C}})$ be any involutive EJC-algebra. Then from our previous observation $\phi\otimes\text{id}_{\mathbf{M}_{\mathcal{C}}}$ is again positive. Furthermore, with arbitary $X\in\mathcal{A}\odot\mathcal{C}$ one has
\begin{eqnarray}
\Phi_{\mathcal{B}}\otimes\Phi_{\mathcal{C}}\Big(\phi\otimes\text{id}_{\mathbf{M}_{\mathcal{C}}}(X)\Big)&=&\big(\Phi_{\mathcal{B}}\circ\phi\big)\otimes\big(\Phi_{\mathcal{C}}\circ\text{id}_{\mathbf{M}_{\mathcal{C}}}\big)(X)\nonumber\\
&=&\big(\phi\circ\Phi_{\mathcal{A}}\big)\otimes\big(\text{id}_{\mathbf{M}_{\mathcal{C}}}\circ\Phi_{\mathcal{C}}\big)(X)\nonumber\\
&=&\phi\otimes\text{id}_{\mathbf{M}_{\mathcal{C}}}\Big(\Phi_{\mathcal{A}}\otimes\Phi_{\mathcal{C}}(X)\Big)\nonumber\\
&=&\phi\otimes\text{id}_{\mathbf{M}_{\mathcal{C}}}(X)
\label{insideNewProp9}
\end{eqnarray}
where the final equality comes from \cref{forNewProp9}. So, again from \cref{forNewProp9}, Eq.~\eqref{insideNewProp9} shows that $\phi\otimes\text{id}_{\mathbf{M}_{\mathcal{C}}}::\mathcal{A}\odot\mathcal{C}\longrightarrow\mathcal{B}\odot\mathcal{C}$.
\end{proof}
\end{proposition}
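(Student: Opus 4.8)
The goal is to show that every morphism $\phi : \mathcal{A} \longrightarrow \mathcal{B}$ in $\mathbf{InvQM}$ is relatively completely Jordan preserving with respect to the class of objects of $\mathbf{InvQM}$. By the definition of a relatively completely Jordan preserving map, I must verify two things for an arbitrary involutive EJC-algebra $(\mathcal{C},\mathbf{M}_{\mathcal{C}},\Phi_{\mathcal{C}})$: first, that $\phi \otimes \mathrm{id}_{\mathbf{M}_{\mathcal{C}}}$ is a positive linear mapping; and second, that it carries $\mathcal{A}\odot\mathcal{C}$ into $\mathcal{B}\odot\mathcal{C}$. The entire argument hinges on one structural fact already in hand, namely \cref{cor: canonical composites with involutions}, which tells us that the canonical tensor product of two involutive EJCs is itself cut out as the self-adjoint fixed points of the tensored involution, \textit{i.e.} $\mathcal{A}\odot\mathcal{C} = (\mathbf{M}_{\mathcal{A}}\otimes\mathbf{M}_{\mathcal{C}})^{\Phi_{\mathcal{A}}\otimes\Phi_{\mathcal{C}}}_{\text{sa}}$, and likewise for $\mathcal{B}\odot\mathcal{C}$.

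\textbf{Positivity.} The first step is immediate. By \cref{invQMDef}, $\phi$ is by definition a completely positive mapping from $\mathbf{M}_{\mathcal{A}}$ to $\mathbf{M}_{\mathcal{B}}$. Complete positivity is precisely the statement that $\phi\otimes\mathrm{id}_{\mathbf{M}_{\mathcal{C}}}$ is positive for every finite-dimensional complex $\ast$-algebra $\mathbf{M}_{\mathcal{C}}$; in particular this holds for the $\ast$-algebra underlying any involutive EJC. So $\phi\otimes\mathrm{id}_{\mathbf{M}_{\mathcal{C}}}$ is positive, and no further work is needed here.

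\textbf{The Jordan-preserving condition.} The substantive step is showing that $\phi\otimes\mathrm{id}_{\mathbf{M}_{\mathcal{C}}}$ maps $\mathcal{A}\odot\mathcal{C}$ into $\mathcal{B}\odot\mathcal{C}$. Using \cref{cor: canonical composites with involutions} to identify both composites with fixed-point sets, it suffices to check that an element $X\in(\mathbf{M}_{\mathcal{A}}\otimes\mathbf{M}_{\mathcal{C}})^{\Phi_{\mathcal{A}}\otimes\Phi_{\mathcal{C}}}_{\text{sa}}$ is sent to an element fixed by $\Phi_{\mathcal{B}}\otimes\Phi_{\mathcal{C}}$ (self-adjointness of the image following from positivity). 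Here the intertwining relation $\Phi_{\mathcal{B}}\circ\phi = \phi\circ\Phi_{\mathcal{A}}$ from \cref{invQMDef} does the work: one computes $(\Phi_{\mathcal{B}}\otimes\Phi_{\mathcal{C}})\circ(\phi\otimes\mathrm{id}_{\mathbf{M}_{\mathcal{C}}}) = (\Phi_{\mathcal{B}}\circ\phi)\otimes(\Phi_{\mathcal{C}}\circ\mathrm{id}) = (\phi\circ\Phi_{\mathcal{A}})\otimes(\mathrm{id}\circ\Phi_{\mathcal{C}}) = (\phi\otimes\mathrm{id}_{\mathbf{M}_{\mathcal{C}}})\circ(\Phi_{\mathcal{A}}\otimes\Phi_{\mathcal{C}})$, and then applies this operator identity to a fixed point $X$, where $(\Phi_{\mathcal{A}}\otimes\Phi_{\mathcal{C}})(X)=X$ collapses the right-hand side back to $(\phi\otimes\mathrm{id}_{\mathbf{M}_{\mathcal{C}}})(X)$. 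This exhibits the image as a fixed point of $\Phi_{\mathcal{B}}\otimes\Phi_{\mathcal{C}}$, hence an element of $\mathcal{B}\odot\mathcal{C}$ by another appeal to \cref{cor: canonical composites with involutions}.

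\textbf{Where the difficulty lies.} The proof of the final statement itself is short and essentially formal, so it presents no real obstacle once the supporting machinery is granted. The genuine hard work has already been discharged upstream: the main obstacle is \cref{cor: canonical composites with involutions}, which in turn rests on \cref{thm: composites special} (that composites of simple nontrivial EJAs are universally reversible) together with Hanche-Olsen's \cref{hoThm}. The delicate point is that one needs $\mathcal{A}\odot\mathcal{C}$ to be \emph{exactly} the fixed-point set $\big(\mathbf{M}_{\mathcal{A}}\otimes\mathbf{M}_{\mathcal{C}}\big)^{\Phi_{\mathcal{A}}\otimes\Phi_{\mathcal{C}}}_{\text{sa}}$, not merely contained in it --- the reverse inclusion is what requires universal reversibility. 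Granting that identification, the argument for the final statement is the routine operator-algebra computation sketched above, essentially a matter of pushing the intertwiner through the tensor product, and I would present it in just a few lines.
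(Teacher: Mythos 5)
Your proof is correct and follows essentially the same route as the paper's: positivity of $\phi\otimes\mathrm{id}_{\mathbf{M}_{\mathcal{C}}}$ from complete positivity, then the intertwining identity $(\Phi_{\mathcal{B}}\otimes\Phi_{\mathcal{C}})\circ(\phi\otimes\mathrm{id}_{\mathbf{M}_{\mathcal{C}}})=(\phi\otimes\mathrm{id}_{\mathbf{M}_{\mathcal{C}}})\circ(\Phi_{\mathcal{A}}\otimes\Phi_{\mathcal{C}})$ applied to a fixed point, with \cref{cor: canonical composites with involutions} supplying the identification of both composites as fixed-point sets. Your remark that the real work lives upstream in that corollary is also an accurate reading of the paper's logic.
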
 
\noindent Since invoutive EJCs are reversible, \cref{prop: canonical product composite} implies 
that $\mathcal{A} \odot \mathcal{B}$ is a dynamical composite of $\mathcal{A}$ and $\mathcal{B}$. Composites and tensor products of intertwining completely positive maps 
are also such {\redd (as are associators, unit-introductions and the swap mapping)}, so $\InvQM$ is a symmetric monoidal category --- indeed, a {\redd monoidal} subcategory of the category of involutive EJC-algebras and relatively completely Jordan preserving maps.

\noindent In the special case in which $\mathcal{A}$ and $\mathcal{B}$ are universally embedded complex qantum systems, say 
$\mathcal{A}=\mathcal{M}_{n}(\mathbb{C})_{\text{sa}}$ and $\mathcal{B}=\mathcal{M}_{k}(\mathbb{C})_{\text{sa}}$, we have $\M_{\mathcal{A}} = \Cu(\mathcal{M}_{n}(\mathbb{C})_{\text{sa}}) = \mathcal{M}_{n}(\mathbb{C}) \oplus \mathcal{M}_{n}(\mathbb{C})$ and 
similarly $\M_{\mathcal{B}} = \mathcal{M}_{k}(\mathbb{C}) \oplus \mathcal{M}_{k}(\mathbb{C})$. The involutions $\Phi_{\mathcal{A}}$ are given by $\Phi_{\mathcal{A}}(a,b) = (b^T, a^T)$, and similarly for $\Phi_{\mathcal{B}}$. In this case, the interwining completely positive-maps are sums of mappings of the two forms: $(a, b) \longmapsto (\phi(a), \phi^T(b))$ and $(a, b) \longmapsto (\phi^T(b), \phi(a))$, where $\phi$ is a completely positive mapping $\mathcal{M}_{n}(\mathbb{C})\longrightarrow \mathcal{M}_{k}(\mathbb{C})$ and $\phi^T$ is determined by the condition $\phi^T(x^T) = (\phi(x))^T$, \textit{i.e}.\ $\phi^T := T \circ \phi \circ T$. 

\noindent Two special cases of $\InvQM$-morphisms are worth emphasising on physical grounds. 

\begin{corollary}\label{cor: things that are InvQM morphisms} \textit{Let $(\mathcal{A},\M_{\mathcal{A}},\Phi_{\mathcal{A}})$ belong to $\InvQM$. Then}
\begin{itemize} 
\item[(a)] \textit{for every} $a \in \mathcal{A}$\textit{, the corresponding linear mapping} $a : \R \longrightarrow \M_{\mathcal{A}}$ \textit{determined by} $a(1) = a$\textit{, belongs to }$\InvQM(\mathrm{I},\mathcal{A})$\textit{;} 
\item[(b)] \textit{every positive linear functional (in particular, every state) on} $\M_{\mathcal{A}}$ \textit{of the form} 
$| a \rangle$,  $a \in \mathcal{A}_{+}$, belongs to $\InvQM(\mathcal{A},\mathrm{I})$ 
\end{itemize}
\begin{proof}
{\redd Part (a) is immediate from the fact that $a$ is fixed by 
$\Phi_{A}$}; part (b) {\redd also} follows from this, plus the unitarity of $\Phi_A$. 
\end{proof}
\end{corollary}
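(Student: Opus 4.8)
The plan is to verify the two stated claims directly from the definition of $\InvQM$-morphisms (\cref{invQMDef}), using two facts that have already been established: that each $a \in \mathcal{A}$ is fixed by the canonical involution $\Phi_{\mathcal{A}}$ (since $\mathcal{A} = (\M_{\mathcal{A}})^{\Phi_{\mathcal{A}}}_{\text{sa}}$ for an involutive EJC-algebra), and that $\Phi_{\mathcal{A}}$ is a \emph{unitary} involution with respect to the trace inner product. A morphism $\phi : \mathcal{A} \longrightarrow \mathcal{B}$ in $\InvQM$ is a completely positive map $\M_{\mathcal{A}} \longrightarrow \M_{\mathcal{B}}$ satisfying the intertwining condition $\Phi_{\mathcal{B}} \circ \phi = \phi \circ \Phi_{\mathcal{A}}$, so for each part I would check (i) complete positivity and (ii) this intertwining relation. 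Recall that the monoidal unit is $\mathrm{I} = (\R,\C)$, whose involution is the identity, so the intertwining condition for a morphism into or out of $\mathrm{I}$ simplifies considerably.

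For part (a), I would treat $a \in \mathcal{A}$ as the linear map $a : \C \longrightarrow \M_{\mathcal{A}}$ determined by $a(1) = a$ (and $a(\lambda) = \lambda a$ by linearity). First I would argue complete positivity: since $a \in \mathcal{A} \subseteq (\M_{\mathcal{A}})_{\text{sa}}$ lies in the positive cone $\mathcal{A}_+$ when it represents a preparable element, the map $\lambda \longmapsto \lambda a$ is positive, and tensoring with $\id_{\M_{\mathcal{C}}}$ preserves positivity because $a$ acts as multiplication by a fixed positive element; more carefully, a map out of $\C$ sending $1$ to a positive element is automatically completely positive. Then for the intertwining condition, since the involution on $\mathrm{I} = \C$ is the identity, I need only check that $\Phi_{\mathcal{A}}(a(\lambda)) = a(\lambda)$, that is $\Phi_{\mathcal{A}}(\lambda a) = \lambda a$; this is immediate because $a$ is a self-adjoint fixed point of $\Phi_{\mathcal{A}}$ and $\Phi_{\mathcal{A}}$ is $\C$-linear (being a $\ast$-antiautomorphism it respects the complex structure on self-adjoint combinations in the required way). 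Hence $a$ intertwines the involutions and is a morphism in $\InvQM(\mathrm{I},\mathcal{A})$.

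For part (b), I would take a positive linear functional of the form $\langle a |$ with $a \in \mathcal{A}_+$, viewed as a map $\M_{\mathcal{A}} \longrightarrow \C$ given by $x \longmapsto \langle a \mid x \rangle = \Tr(a^* x) = \Tr(a x)$. Positivity of this functional follows from $a \in \mathcal{A}_+$ together with self-duality of the cone, and complete positivity of a \emph{functional} (a map into the one-dimensional algebra $\C$) follows from positivity by the standard fact that positive functionals are completely positive. The crux is the intertwining condition $\Phi_{\C} \circ \langle a | = \langle a | \circ \Phi_{\mathcal{A}}$, which since $\Phi_\C = \id$ reduces to $\langle a \mid x \rangle = \langle a \mid \Phi_{\mathcal{A}}(x)\rangle$ for all $x \in \M_{\mathcal{A}}$. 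Here is where unitarity of $\Phi_{\mathcal{A}}$ enters: I would write $\langle a \mid \Phi_{\mathcal{A}}(x)\rangle = \langle \Phi_{\mathcal{A}}^{-1}(a) \mid x \rangle = \langle \Phi_{\mathcal{A}}(a) \mid x\rangle$ using that $\Phi_{\mathcal{A}}$ is unitary and involutive, and then invoke $\Phi_{\mathcal{A}}(a) = a$ (as $a \in \mathcal{A}$) to conclude $\langle a \mid \Phi_{\mathcal{A}}(x)\rangle = \langle a \mid x\rangle$. This is the main obstacle in the sense that it is the only step requiring more than a one-line observation: one must correctly transpose $\Phi_{\mathcal{A}}$ across the inner product, which uses both that $\Phi_{\mathcal{A}}$ is unitary (self-adjoint with respect to $\langle \cdot \mid \cdot \rangle$, since $\Phi_{\mathcal{A}}^{-1} = \Phi_{\mathcal{A}} = \Phi_{\mathcal{A}}^{\dagger}$) and that $a$ is a fixed point. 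With both parts verified, the corollary follows, confirming that states and preparations of the form $\langle a |$ and $a$ do count as legitimate processes in $\InvQM$, which is the physically salient point.
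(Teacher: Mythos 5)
Your proof is correct and follows the paper's own route exactly: part (a) rests on $a$ being a fixed point of $\Phi_{\mathcal{A}}$, and part (b) on transposing the unitary, involutive $\Phi_{\mathcal{A}}$ across the trace inner product and then invoking the same fixed-point property. You have simply spelled out the details the paper leaves implicit (including the routine complete-positivity checks), so no further comment is needed.
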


\noindent The category $\InvQM$ provides a unification of finite-dimensional real, complex and quaternionic quantum mechanics, but with the same  important caveats that apply to $\URUE$: the representation of orthodox, complex quantum systems $\mathcal{M}_{n}(\mathbb{C})_{\text{sa}}$ in $\InvQM$ is through the universal embedding 
\begin{equation}
\psi : \mathcal{M}_{n}(\mathbb{C})_{\text{sa}} \longrightarrow \Cu(\mathcal{M}_{n}(\mathbb{C})_{\text{sa}}) = \mathcal{M}_{n}(\mathbb{C}) \oplus \mathcal{M}_{n}(\mathbb{C})::a \longmapsto(a,a^T)\text{.}
\end{equation} 
\noindent As a consequence, the composite of two complex quantum systems in $\InvQM$ is a direct sum of two copies of their standard composite --- equivalently, is the standard composite, combined with a classical bit.  Moreover, the mapping that swaps the direct summands of $\Cu(\mathcal{M}_{n}(\mathbb{C})_{\text{sa}})$, a perfectly good morphism in $\InvQM$, acts as the transpose on $\psi(a) = (a, a^T)$. We now move to compact closure.

\noindent In \cite{Abramsky-Coecke}, it is shown that a large number of information-processing protocols, including 
{\magenta in particular} conclusive teleportation and entanglement-swapping, hold in any compact closed symmetric monoidal category, if we interpret objects as systems and morphisms as physically allowed processes. We shall now see that our category $\InvQM$ is compact closed. More exactly, we
shall show that it inherits a compact structure from the {\magenta natural compact structure on the} category $\stalg$ of finite-dimensional complex $\ast$-algebras, {\redd which we now review}.

\noindent Recall \cref{ccCatDef}. A \textit{compact structure} on a symmetric monoidal category $\Cat$ is a choice, for every object $A \in \Cat$, of a {\em dual object}: a triple $(\mathcal{A}',\eta_{\mathcal{A}},\epsilon_{\mathcal{A}})$ consisting of an object $\mathcal{A}' \in \C$, a {\em co-unit} $\eta_A : A \otimes A' \rightarrow I$ and a {\em unit} $\epsilon_{\mathcal{A}} : \mathrm{I} \rightarrow \mathcal{A}' \otimes \mathcal{A}$ obeying the commutative diagrams expressed in  \cref{ccCatDef}.\footnote{Our usage is slightly perverse. The usual convention is to denote the {\em unit} by $\eta_A$ and the 
co-unit by $\epsilon_A$. Our choice is motivated in part by the desire to represent states as morphisms $A \longrightarrow \mathrm{I}$ 
and effects as morphisms $\mathrm{I} \longrightarrow A$, rather than the reverse, together with the convention that 
takes the unit to correspond to the maximally entangled state.}

\noindent If $\M$ is a finite-dimensional complex $\ast$-algebra, let $\Tr$ denote the
canonical trace on $\M$, regarded as acting on itself by left
multiplication (so that $\Tr(a) = \tr(L_a)$, $L_a : \M \rightarrow \M$
being $L_a(b) = ab$ for all $b \in \M$). This induces an inner product\ffootnote{\red double check!! Checked.}
on $\M$, given by $\langle a , b \rangle_{\M} = \Tr(ab^{\ast})$\footnote{We are now following the convention 
that complex inner products are linear in the first argument.}.  Note
that this inner product is self-dualizing, \textit{i.e}.\ $a \in \M_+$ if and only if 
$\langle a , b \rangle \geq 0$ for all $b \in \M_+$. 

\noindent Now let $\bar{\M}$ be the conjugate algebra, writing $\bar{a}$ for $a \in \M$ when regarded as belonging to $\bar{\M}$ (so that $\bar{ca} =
\bar{c}~\bar{a}$ for scalars $c \in \C$ and $\bar{a} \bar{b} =
\bar{ab}$ for $a, b \in \M$).  Note that $\langle \bar{a}, \bar{b} \rangle = \langle b ,
a \rangle$.  Now define
\begin{equation}
\epsilon_{\M} = \sum_{e \in E} e \otimes \bar{e} \in \M \otimes \bar{\M}
\end{equation}
where $E$ is any orthonormal basis for $\M$ with respect to
$\IP_{\M}$. {\redd Then straightforward computations show that $\epsilon_{\M} \in (\M \otimes \bar{\M})_{+}$, and that, 
for all $a, b \in \M$,
\begin{equation} 
\langle a \otimes \bar{b}, \epsilon_{\M} \rangle = \langle a, b \rangle = \Tr(ab^{\ast})\text{,}
\end{equation}
where the inner product on the left is the trace inner product on $\M \otimes \M^{\ast}$. Now define 
$\eta_{\M} : \M \otimes \bar{\M} \rightarrow \C$ by $\eta_{\M} = | \epsilon_{\M} \rangle$, noting that this functional 
is positive (so, up to normalization, a state) since $\epsilon_{\M}$ is positive in $\M \otimes \bar{\M}$. }

\tempout{
Then a computation shows  that
\[ \langle (a \otimes \bar{b}) f_{\M} | f_{\M} \rangle_{\M \otimes \bar{\M}} = \langle a | b \rangle_{\M}.\]
Since the left-hand side defines a positive linear functional on $\M
\otimes \bar{\M}$, so does the right (remembering here that pure
tensors generate $\M \otimes \bar{\M}$, as we're working in finite
dimensions).  Call this functional $\eta_{\M}$. That is,
\[\eta_{\M} : \M \otimes \bar{\M} \rightarrow \C \ \ \mbox{is given by} \ \ \eta_{\M}(a \otimes \bar{b}) = \langle a | b \rangle = \Tr(ab^{\ast})\]
and is, up to normalization, a state on $\M \otimes \bar{\M}$. A further computation now shows that 
\[\langle a \otimes \bar{b} | f_{\M} \rangle_{\M \otimes \bar{\M}} = \eta(a \otimes \bar{b}).\]
 It follows that $f_{\M}$ belongs to the positive cone of
$\M \otimes \bar{\M}$, by self-duality of the latter. 
}

\noindent A final computation  shows that, for any states $\alpha$ and
$\bar{\alpha}$ on $\M$ and $\bar{\M}$, respectively, and any $a \in
\M, \bar{a} \in \bar{\M}$, we have
\begin{eqnarray}
(\eta_{\M} \otimes \alpha)(a \otimes \epsilon_{\bar{\M}}) = \alpha(a)\\
(\bar{\alpha} \otimes \eta_{\M})(\epsilon_{\bar{M}} \otimes \bar{a}) = \bar{\alpha}(\bar{a})\text{.}
\end{eqnarray}
Thus, $\eta_{\M}$ and $\epsilon_{\bar{\M}}$ define a compact structure on $\stalg$, for which the dual object of $\M$ is given by $\bar{\M}$.

\begin{definition} {\em The} conjugate \textit{of a EJC-algebra $(\mathcal{A},\M_{\mathcal{A}},\Phi_{\mathcal{A}})$ is $(\bar{\mathcal{A}},\bar{\M}_{\mathcal{A}},\bar{\Phi})$, where 
$\bar{\mathcal{A}} = \{ \bar{a} | a \in \mathcal{A}\}$. We write $\eta_{\mathcal{A}}$ for $\eta_{\M_{\mathcal{A}}}$ and $\epsilon_{\mathcal{A}}$ for $\epsilon_{\M_{\mathcal{A}}}$.}
\end{definition} 
\noindent Any linear mapping $\phi : \M \longrightarrow \N$ between $\ast$-algebras $\M$ and $\N$ gives rise to a 
linear mapping 
\begin{equation}
\bar{\phi} : \bar{\M} \longrightarrow \bar{\N}::\bar{a}\longmapsto \bar{\phi(a)}\text{,}
\end{equation} 
for $a \in \M$. It is straightforward {\redd to show} that if $\Phi$ is a unitary involution on $\M_{\mathcal{A}}$ with $\mathcal{A} = {\M_{\mathcal{A}}}^{\Phi}_{\text{sa}}$, then $\bar{\Phi} : \bar{\M} \longrightarrow \bar{\M}$ is also a unitary involution with
\begin{equation}
{\bar{\M}_{\mathcal{A}}}^{\bar{\Phi}}_{\text{sa}} = \bar{\mathcal{A}}\text{.}
\end{equation} 
Thus, the class of involutive EJCs is closed under the formation of conjugates. 

\begin{lemma} \textit{Let} $(\mathcal{A},\M_{\mathcal{A}},\Phi)$ \textit{belong  {\red to $\InvQM$.} Then }
$\epsilon_{\mathcal{A}} \in \mathcal{A} \odot \bar{\mathcal{A}}$\textit{.}
\begin{proof}
{\red By assumption,} there is a unitary involution $\Phi$ on $\M_{\mathcal{A}}$ such that $\mathcal{A} = (\M_{\mathcal{A}})^{\Phi}_{\text{sa}}$; by \cref{cor: canonical composites with involutions}, $\mathcal{A} \odot \bar{\mathcal{A}}$ is then the set of self-adjoint fixed points of $\Phi\otimes \bar{\Phi}$. Since $\Phi$ is unitary,  if $E$ is an orthonormal basis for $\M_\mathcal{A}$, then so is $\{\Phi(e) | e \in E\}$; hence, as $\epsilon_{\mathcal{A}}$ is independent of the choice of orthonormal basis, $\epsilon_{\mathcal{A}}$ is invariant under $\Phi \otimes \bar{\Phi}$. 
Since $\epsilon_{\mathcal{A}}$ is also self-adjoint, it belongs to $(\M_{\mathcal{A}} \otimes \M_{\bar{\mathcal{A}}})^{\Phi \otimes \bar{\Phi}}_{\text{sa}}$, \textit{i.e}.\ to $\mathcal{A} \odot \bar{\mathcal{A}}$.
\end{proof}
\end{lemma}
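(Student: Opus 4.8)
The plan is to identify $\mathcal{A} \odot \bar{\mathcal{A}}$ explicitly as a fixed-point algebra and then verify that $\epsilon_{\mathcal{A}}$ lands in it. First I would invoke \cref{cor: canonical composites with involutions}: since $\mathcal{A} = (\M_{\mathcal{A}})^{\Phi}_{\text{sa}}$ is the self-adjoint fixed-point set of an involution, it is reversible in $\M_{\mathcal{A}}$ (as noted following \cref{ex: no states}), and the paragraph preceding the lemma establishes that $\bar{\Phi}$ is likewise a unitary involution on $\bar{\M}_{\mathcal{A}}$ with $\bar{\mathcal{A}} = (\bar{\M}_{\mathcal{A}})^{\bar{\Phi}}_{\text{sa}}$, so $\bar{\mathcal{A}}$ is reversible in $\bar{\M}_{\mathcal{A}}$. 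One also checks the generation hypothesis --- that $\mathcal{A}$ generates $\M_{\mathcal{A}}$ and $\bar{\mathcal{A}}$ generates $\bar{\M}_{\mathcal{A}}$ as $\ast$-algebras --- which holds for every object of $\InvQM$ (by \cref{existUThm}(i) for the universally embedded factors, and by \cref{fac4} and \cref{q2Prop} for the standardly embedded quabit, with the conjugate case following by applying the bar operation). Granting these hypotheses, \cref{cor: canonical composites with involutions} yields
\begin{equation}
\mathcal{A} \odot \bar{\mathcal{A}} = \big(\M_{\mathcal{A}} \otimes \bar{\M}_{\mathcal{A}}\big)^{\Phi \otimes \bar{\Phi}}_{\text{sa}}\text{,}
\end{equation}
so the lemma reduces to the two claims that $\epsilon_{\mathcal{A}}$ is self-adjoint and that it is fixed by $\Phi \otimes \bar{\Phi}$.

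Self-adjointness is immediate: $\epsilon_{\mathcal{A}}$ was already shown, in the construction of the compact structure on $\stalg$, to lie in the positive cone $(\M_{\mathcal{A}} \otimes \bar{\M}_{\mathcal{A}})_{+}$, hence in particular it is self-adjoint. The substantive step is invariance under $\Phi \otimes \bar{\Phi}$, and the key input is that $\epsilon_{\mathcal{A}} = \sum_{e \in E} e \otimes \bar{e}$ does not depend on the choice of orthonormal basis $E$ of $\M_{\mathcal{A}}$ --- this is the standard fact that $\epsilon_{\mathcal{A}}$ is the image of the identity under the canonical isomorphism $\M_{\mathcal{A}} \otimes \bar{\M}_{\mathcal{A}} \cong \text{Hom}(\M_{\mathcal{A}},\M_{\mathcal{A}})$ afforded by the trace inner product. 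Since $\Phi$ is unitary for $\langle\cdot,\cdot\rangle_{\M_{\mathcal{A}}}$ (part of the definition of an involutive EJC-algebra, with unitarity recorded just before the lemma), the image $\Phi(E) = \{\Phi(e) : e \in E\}$ is again an orthonormal basis; and because $\bar{\Phi}(\bar{e}) = \overline{\Phi(e)}$ by definition of $\bar{\Phi}$, I would compute
\begin{equation}
(\Phi \otimes \bar{\Phi})(\epsilon_{\mathcal{A}}) = \sum_{e \in E} \Phi(e) \otimes \overline{\Phi(e)} = \epsilon_{\mathcal{A}}\text{,}
\end{equation}
the last equality being exactly basis-independence applied to the orthonormal basis $\Phi(E)$. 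Combining self-adjointness with this fixed-point property places $\epsilon_{\mathcal{A}}$ in $(\M_{\mathcal{A}} \otimes \bar{\M}_{\mathcal{A}})^{\Phi \otimes \bar{\Phi}}_{\text{sa}} = \mathcal{A} \odot \bar{\mathcal{A}}$, as desired.

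The only genuine obstacle is the basis-independence of $\epsilon_{\mathcal{A}}$; everything else is bookkeeping. I would either cite it as the standard property of the canonical element of $\M_{\mathcal{A}} \otimes \bar{\M}_{\mathcal{A}}$, or establish it directly by checking that two orthonormal bases related by a unitary change of basis $u$ give the same double sum, since the coefficient matrices of $u$ and its conjugate contract to the identity. Once that is in hand, the unitarity of $\Phi$ does all the remaining work, and no further facts about the particular structure of $\M_{\mathcal{A}}$ are required.
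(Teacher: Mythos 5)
Your proposal is correct and follows essentially the same route as the paper's own proof: identify $\mathcal{A} \odot \bar{\mathcal{A}}$ as $(\M_{\mathcal{A}} \otimes \M_{\bar{\mathcal{A}}})^{\Phi \otimes \bar{\Phi}}_{\text{sa}}$ via \cref{cor: canonical composites with involutions}, then combine the self-adjointness of $\epsilon_{\mathcal{A}}$ with its invariance under $\Phi \otimes \bar{\Phi}$, the latter following from the unitarity of $\Phi$ and the basis-independence of $\epsilon_{\mathcal{A}}$. The only difference is that you spell out the reversibility and generation hypotheses of the corollary explicitly, which the paper leaves implicit.
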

\noindent It follows now from part (b) of \cref{cor: things that are InvQM morphisms} that the 
functional 
\begin{equation}
{\redd \eta_{\mathcal{A}}} = | \epsilon_{\bar{\mathcal{A}}} \rangle : \M_{\bar{\mathcal{A}}} \otimes \M_{\mathcal{A}} \longrightarrow \R
\end{equation}
is an $\InvQM$ morphism.  Hence, $\InvQM$ inherits the compact structure from $\stalg$, as promised. This gives 
us our main theorem.

\begin{theorem}\label{thm: InvQM is compact closed} $\InvQM$ is compact closed. \end{theorem}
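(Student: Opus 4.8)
The plan is to show that $\InvQM$ inherits, essentially verbatim, the compact structure already exhibited on the category $\stalg$ of finite-dimensional complex $\ast$-algebras, whose dual of $\mathbf{M}$ is the conjugate algebra $\bar{\mathbf{M}}$ equipped with the unit $\epsilon_{\mathbf{M}}$ and co-unit $\eta_{\mathbf{M}}$. First I would fix, for each object $(\mathcal{A},\mathbf{M}_{\mathcal{A}},\Phi)$ of $\InvQM$, the candidate dual object $(\bar{\mathcal{A}},\bar{\mathbf{M}}_{\mathcal{A}},\bar{\Phi})$. This is legitimate because the class of involutive EJC-algebras is closed under conjugation: $\bar{\Phi}$ is again a unitary involution and $(\bar{\mathbf{M}}_{\mathcal{A}})^{\bar{\Phi}}_{\text{sa}} = \bar{\mathcal{A}}$, as recorded immediately before the theorem. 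Thus the assignment $\mathcal{A} \mapsto \bar{\mathcal{A}}$ lands inside $\InvQM$.

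Next I would verify that the structural morphisms $\eta_{\mathcal{A}}$ and $\epsilon_{\mathcal{A}}$ are genuine $\InvQM$-morphisms, not merely $\stalg$-morphisms. For the unit, the preceding lemma gives $\epsilon_{\mathcal{A}} \in \mathcal{A} \odot \bar{\mathcal{A}}$; hence, by part (a) of \cref{cor: things that are InvQM morphisms}, the linear map $\mathrm{I} \longrightarrow \mathbf{M}_{\mathcal{A} \odot \bar{\mathcal{A}}}$ sending $1 \mapsto \epsilon_{\mathcal{A}}$ belongs to $\InvQM$. For the co-unit, part (b) of the same corollary, together with the positivity of $\epsilon_{\bar{\mathcal{A}}}$ and the unitarity of the involutions, shows that $\eta_{\mathcal{A}} = |\epsilon_{\bar{\mathcal{A}}}\rangle$ is an $\InvQM$-morphism. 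These are exactly the two facts assembled in the paragraph preceding the theorem.

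Finally I would discharge the two triangle (``snake'') identities of \cref{ccCatDef}. Here the decisive structural observation is that $\InvQM$ is a monoidal subcategory of $\stalg$: its monoidal product satisfies $\mathbf{M}_{\mathcal{A} \odot \mathcal{B}} = \mathbf{M}_{\mathcal{A}} \otimes \mathbf{M}_{\mathcal{B}}$ by definition, its associator, unitors and symmetry are precisely the restrictions of those of $\stalg$ (which by \cref{prop: associativity} carry the relevant Jordan subalgebras onto one another), and the candidate unit and co-unit are literally the $\stalg$ morphisms $\epsilon$ and $\eta$. Consequently each triangle identity, being an equation between composites of these morphisms, holds in $\InvQM$ if and only if it holds in $\stalg$; but it does hold in $\stalg$, as witnessed by the computations showing $(\eta_{\mathbf{M}} \otimes \alpha)(a \otimes \epsilon_{\bar{\mathbf{M}}}) = \alpha(a)$ and $(\bar{\alpha} \otimes \eta_{\mathbf{M}})(\epsilon_{\bar{\mathbf{M}}} \otimes \bar{a}) = \bar{\alpha}(\bar{a})$.

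The main obstacle is precisely this last transfer step: one must confirm that the dual object, unit, and co-unit produced inside $\InvQM$ coincide with the $\stalg$ data rather than merely mapping to them under the forgetful assignment $(\mathcal{A},\mathbf{M}_{\mathcal{A}},\Phi) \mapsto \mathbf{M}_{\mathcal{A}}$, so that the coherence diagrams descend without any fresh verification. Equivalently, one checks that this forgetful assignment is a faithful strict monoidal functor that strictly preserves the chosen compact data, whence compact closure of $\stalg$ pulls back to $\InvQM$. Once this identification is in place, \cref{thm: InvQM is compact closed} follows immediately.
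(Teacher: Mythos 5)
Your proposal is correct and follows essentially the same route as the paper: the paper likewise takes the dual of $(\mathcal{A},\M_{\mathcal{A}},\Phi)$ to be $(\bar{\mathcal{A}},\bar{\M}_{\mathcal{A}},\bar{\Phi})$, uses the lemma that $\epsilon_{\mathcal{A}}\in\mathcal{A}\odot\bar{\mathcal{A}}$ together with \cref{cor: things that are InvQM morphisms} to see that the unit and co-unit are $\InvQM$-morphisms, and then lets the triangle identities descend from the compact structure on $\stalg$ since $\InvQM$ is a monoidal subcategory with $\M_{\mathcal{A}\odot\mathcal{B}}=\M_{\mathcal{A}}\otimes\M_{\mathcal{B}}$ by definition. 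Your closing remark about checking that the forgetful assignment strictly preserves the chosen data is exactly the (implicit) content of the paper's phrase that $\InvQM$ ``inherits'' the compact structure.
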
 
 
\noindent In fact, we can do a bit better. Recall \cref{dagCom}. It is not difficult to show that $\stalg$ is dagger compact closed category, where, if $\M$ and $\N$ are finite-dimensional $\ast$-algebras and $\phi : \M \rightarrow \N$ is a linear mapping, $\phi^{\dagger}$ is the hermitian adjoint 
of $\phi$ with respect to the natural trace inner products on $\M$ and $\N$. If $(\mathcal{A},\M_{\mathcal{A}})$ and $(\mathcal{B},\M_{\mathcal{B}})$ 
are involutive EJC-algebras with given unitary involutions $\Phi_{\mathcal{A}}$ and $\Phi_{\mathcal{B}}$, then for any intertwiner $\phi : \M_{\mathcal{A}} \longrightarrow \M_{\mathcal{B}}$, $\phi^{\dagger}$ also intertwines $\Phi_{\mathcal{A}}$ and $\Phi_{\mathcal{B}}$. Hence, we have

\begin{theorem}\label{cor: InvQM is dagger compact} $\InvQM$ is dagger compact closed category.\end{theorem}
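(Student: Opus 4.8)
\textbf{Proof plan for \cref{cor: InvQM is dagger compact}.}

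The plan is to upgrade the compact closed structure already established in \cref{thm: InvQM is compact closed} to a \emph{dagger} compact closed structure by exhibiting a dagger functor on $\InvQM$ and verifying its compatibility with the compact structure inherited from $\stalg$. First I would recall that $\stalg$, the category of finite-dimensional complex $\ast$-algebras, is itself dagger compact closed: the dagger of a linear map $\phi:\M\longrightarrow\N$ is its Hilbert adjoint $\phi^{\dagger}$ taken with respect to the canonical trace inner products $\langle a,b\rangle_{\M}=\Tr(ab^{*})$ on $\M$ and $\N$. The compact structure on $\stalg$ was given explicitly above via $\epsilon_{\M}=\sum_{e\in E}e\otimes\bar{e}$ and $\eta_{\M}=|\epsilon_{\M}\rangle$, and one checks that these satisfy the dagger compatibility condition $\eta_{\M}=\sigma_{\M,\bar{\M}}\circ\dagger(\epsilon_{\M})$ of \cref{dagCom} — essentially because $\epsilon_{\M}$ is symmetric and $\eta_{\M}$ is by construction the functional induced by $\epsilon_{\M}$ under the self-dualizing trace inner product.

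Next I would establish that the dagger of $\stalg$ restricts to a dagger on $\InvQM$. The content here is exactly the sentence in the excerpt: if $(\mathcal{A},\M_{\mathcal{A}},\Phi_{\mathcal{A}})$ and $(\mathcal{B},\M_{\mathcal{B}},\Phi_{\mathcal{B}})$ are involutive EJC-algebras and $\phi:\M_{\mathcal{A}}\longrightarrow\M_{\mathcal{B}}$ is a morphism of $\InvQM$ — that is, a completely positive map intertwining the involutions, $\Phi_{\mathcal{B}}\circ\phi=\phi\circ\Phi_{\mathcal{A}}$ — then $\phi^{\dagger}$ is again an $\InvQM$ morphism. There are two things to verify: that $\phi^{\dagger}$ is completely positive, and that it again intertwines the involutions. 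Complete positivity of the Hilbert adjoint of a completely positive map between $\ast$-algebras is standard. For the intertwining property, the key point is that the involutions $\Phi_{\mathcal{A}}$ and $\Phi_{\mathcal{B}}$ are \emph{unitary} with respect to the trace inner products (this is part of the definition of an involutive EJC-algebra, and the unitarity of the canonical involutions was verified in the propositions leading up to \cref{invQMDef}). Taking the Hilbert adjoint of the identity $\Phi_{\mathcal{B}}\circ\phi=\phi\circ\Phi_{\mathcal{A}}$ and using $\Phi_{\mathcal{A}}^{\dagger}=\Phi_{\mathcal{A}}$, $\Phi_{\mathcal{B}}^{\dagger}=\Phi_{\mathcal{B}}$ (self-adjointness following from unitarity together with involutivity $\Phi^{2}=\id$) yields $\phi^{\dagger}\circ\Phi_{\mathcal{B}}=\Phi_{\mathcal{A}}\circ\phi^{\dagger}$, which is precisely the intertwining condition for $\phi^{\dagger}$.

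Finally I would assemble the pieces: the dagger is a contravariant, strictly involutive, identity-on-objects endofunctor of $\InvQM$ (these properties are inherited verbatim from $\stalg$, since objects of $\InvQM$ are objects of $\stalg$ equipped with extra data and the dagger acts only on the underlying maps), so $(\InvQM,\dagger)$ is a dagger category in the sense of \cref{dagCatDef}; combined with the symmetric monoidal and compact closed structure of \cref{thm: InvQM is compact closed}, and with the observation that the unitors, associator, and symmetry are inherited from $\stalg$ and hence unitary, this makes $\InvQM$ a dagger symmetric monoidal category (\cref{dagSymDef}); and the dagger compatibility relation $\eta_{\mathcal{A}}=\sigma_{\mathcal{A},\bar{\mathcal{A}}}\circ\dagger(\epsilon_{\mathcal{A}})$ required by \cref{dagCom} holds because it holds in $\stalg$ and all the relevant morphisms ($\eta_{\mathcal{A}}$, $\epsilon_{\mathcal{A}}$, $\sigma$) are $\InvQM$ morphisms living in $\stalg$. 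I expect the main obstacle to be the verification that $\phi^{\dagger}$ intertwines the involutions, since this is the one step genuinely using the structure special to $\InvQM$ rather than purely inherited facts from $\stalg$; the crux is confirming that the unitarity of $\Phi_{\mathcal{A}}$ and $\Phi_{\mathcal{B}}$ with respect to the trace inner products — established earlier in the section — is exactly what is needed to commute the dagger past the involution.
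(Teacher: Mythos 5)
Your proposal is correct and follows essentially the same route as the paper: inherit the dagger from $\stalg$ (the Hermitian adjoint with respect to the trace inner products) and check that it restricts to $\InvQM$ because the adjoint of an intertwiner again intertwines the involutions, which is exactly where the unitarity (hence, with $\Phi^{2}=\mathrm{id}$, self-adjointness) of $\Phi_{\mathcal{A}}$ and $\Phi_{\mathcal{B}}$ enters. The paper states this in two sentences; your write-up simply supplies the details (complete positivity of $\phi^{\dagger}$, the adjoint computation $\phi^{\dagger}\circ\Phi_{\mathcal{B}}=\Phi_{\mathcal{A}}\circ\phi^{\dagger}$, and the inherited coherence conditions) that the paper leaves implicit.
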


\chapter{Conclusion (Part II)}
\label{conclusionPartII}

\epigraphhead[40]
	{
		\epigraph{``Our imagination is stretched to the utmost, not, as in fiction, to imagine things which are not really there, but just to comprehend those things which \textit{are} there.''}{---\textit{Richard P.\ Feynman}\\ The Character of Physical Law (1965)}
	}

\noindent We have found two {\redd categories of probabilistic models} ---the categories $\RSE$ and
$\InvQM$---that, in different ways, unify finite-dimensional real,
complex and quaternionic quantum mechanics.  In each case, there is a
price to be paid for this unification.  For $\RSE$, this price is
steep: $\RSE$ is a symmetric monoidal category, but one in which states (for
instance) on complex systems don't count as physical processes. In particular, $\RSE$ is
very far from being compact closed.

\noindent {\magenta In contrast, 
$\InvQM$ is clearly 
a well-behaved --- indeed, dagger compact closed --- probabilistic theory, in which the states, as well
as the effects, of real, 
complex, and quaternionic Euclidean Jordan algebras appear as morphisms. 
  On the other hand, $\InvQM$ 
 admits the transpose
  automorphism on the complex Hermitian Jordan algebra,} and requires
complex quantum systems to compose in a nonstandard way.
Nevertheless, by virtue of being dagger compact, ${\redd \InvQM}$ 
continues
to enjoy many of the information-processing properties of standard
complex QM, for example the existence of conclusive teleportation and
entanglement-swapping protocols \cite{Abramsky-Coecke}. {\magenta Also, composites in $\InvQM$
satisfy the Cirel'son bound on correlations owing to the way that,
by construction, these composites are embedded within a tensor product
of complex matrix algebras.}

\noindent All of this is in spite of the fact that composites in $\InvQM$ are 
not locally tomographic: 
the canonical composite $\mathcal{A} \odot \mathcal{B} $ is larger than the vector 
space tensor product $\mathcal{A}  \otimes \mathcal{B} $.
Local tomography is well known to separate complex QM from its 
real and quaternionic variants, so its failure in $\URUE$ and $\RSE$ 
is hardly surprising,  but it is noteworthy that we are able to construct
(non-locally tomographic) composites in $\URUE$ in all of the non-real cases, and
certain composites involving quaternions even in $\RSE$.     
Even more interesting is the fact that, for 
quaternionic systems $\mathcal{A}$ and $\mathcal{B}$, the {\em information capacity} --- 
the number of sharply distinguishable states --- of $\mathcal{A}\odot \mathcal{B}$ is {\em larger} than the  
product of the capacities of $\mathcal{A}$ and $\mathcal{B}$. A related point is that, for 
quaternionic quantum systems $\mathcal{A}$ and $\mathcal{B}$, the product of a pure state 
of $\mathcal{A}$ and a pure state of $\mathcal{B}$ will generally be a {\em mixed} state 
in $\mathcal{A} \odot \mathcal{B}$. This is simply because the symplectic representation embeds pure quaternionic quantum states as rank-2 complex quantum projectors (for details see the author's MSc thesis \cite{Graydon2011}).

\noindent The category $\InvQM$ contains interesting compact closed
subcategories.  In particular, real and quaternionic quantum systems,
taken together, form a (full) monoidal sub-category of $\InvQM$ closed under
composition.  We  {\magenta conjecture} 
that this is exactly what one gets by applying Selinger's CPM construction \cite{Selinger2005} to Baez'
(implicit) category of pairs $(\mathcal{H},J)$, $\mathcal{H}$ a finite dimensional complex
Hilbert space and $J$ an anti-unitary with $J^2 = \pm 1$ \cite{BaezDivAlg}.

\noindent Another compact closed subcategory of $\mathbf{InvQM}$, which we will shall call $\mathbf{InvCQM}$, consists of universally embedded {\em complex} quantum systems $\mathcal{M}_{n}(\mathbb{C})_{\text{sa}}$. It is interesting to note that, in a hypothetical universe described by $\mathbf{InvQM}$, the subcategory $\mathbf{InvCQM}$ acts as a kind of ``ideal'', in that if $\mathcal{A} \in \mathbf{InvQM}$ and $\mathcal{B}\in \mathbf{InvCQM}$, then $\mathcal{A} \odot \mathcal{B} \in \mathbf{InvCQM}$ as well. Summoning the spirit of the physical interpretation of quantum theory described at the onset of \cref{introPartI}, the canonical tensor product is of course, at bottom, an epistemological construct. Adopting that view, if one associates, say, $\mathcal{M}_{n}(\mathbb{H})_{\text{sa}}$ with system $\mathsf{A}$, and, say, $\mathcal{M}_{m}(\mathbb{C})_{\text{sa}}$ with system $\mathsf{B}$ --- presumably for physical reasons beyond the scope of the present analysis --- then via those associations one constructs the arena for staking degrees of belief regarding local evolutions and local measurements on the local systems in question via post-quantum state assignments, possible effects, and so forth (all epistemic devices.) Furthermore, working in the category $\mathbf{InvQM}$, one has that the arena for staking beliefs regarding the composite $\mathsf{AB}$ is just the ambient space $\mathcal{M}_{n}(\mathbb{H})_{\text{sa}}\odot\mathcal{M}_{m}(\mathbb{C})_{\text{sa}}= \mathcal{M}_{2nm}(\mathbb{C})_{\text{sa}}$, with the equality following from \cref{prop: universal tensor product properties} \ref{8234} and a small calculation. Put otherwise, global states, effects, and transformations regarding the composite system are exactly from usual complex quantum theory. Therefore, if one restricts one's attention to $\mathsf{AB}$, then that universe looks complex-quantum. Furthermore, it is only when one explicitly considers the components that one sees a departure from usual quantum theory; either via the compositional structure of $\mathsf{AB}$ --- which is not locally tomographic since $n(2n-1)m^2<4n^{2}m^{2}$ --- or by, in this case, the local state space for $\mathsf{A}$ itself, which is quaternionic. Furthermore, even if one considers a composite of two \textit{complex} systems in $\mathbf{InvQM}$, then the global states, effects, and transformations remain complex-quantum; however, the composite structure differs in that one has \textit{two} copies of the usual tensor product: the direct summands differ by a transpose, which might best be interpreted as time running forwards in the first direct summand, and backwards in the second. This final thought warrants the attention of further investigation, and may provide a physical justification for allowing the transpose automorphism to factor through the universal C$^{*}$\!-algebra enveloping $\mathcal{M}_{n}(\mathbb{C})_{\text{sa}}$.

\noindent It worth emphasizing that the most marked physical distinctions between quantum theory and $\mathbf{InvQM}$ are manifest in the structure of composites. Indeed, the following informal diagram sketches how quantum theories over the Jordan matrix algebras \textit{simulate} each other locally, or put otherwise: \textit{at the level of single systems}. The `inc' arrows represent the inclusion of one theory of single systems within another via a simple superselection rule. The `MMG' arrow refers to the simulation constructed by McKaugue-Mosca-Gisin in \cite{McKague2009}. The `q' arrow refers to the simulation constructed by the author in \cite{Graydon2013}. The arrow `MMG$\circ$q' refers to the composition of these simulations. Furthermore, in light the standard representation of spin factors expounded by Barnum-Graydon-Wilce in \cite{Barnum2016b} and detailed in \cref{jordPrelims}, theories of single spins over $\mathcal{V}_{k}$ are readily embedded into the framework of usual complex quantum theory. The reader will notice the absence of the exceptional Jordan algebra $\mathcal{M}_{3}(\mathbb{O})_{\text{sa}}$, which admits no faithful representation inside the self-adjoint part of any C$^{*}$\!-algebra.
\begin{equation}
\xymatrixrowsep{0.8cm}
\xymatrixcolsep{0.8cm}
\xymatrix{
 \mathcal{V}_{k}\ar[d]_{\text{inc}} & & \\
 \mathbb{C}\mathrm{QT}\ar@{=}[rr]\ar@<-1ex>[dd]_{\text{MMG}} & &  \mathbb{C}\mathrm{QT}\ar@<-1ex>[dd]_{\text{inc}}\\
 & &\\
 \mathbb{R}\mathrm{QT}\ar@<-1ex>[uu]_{\text{inc}}\ar@<1ex>[rr]^{\text{inc}} & &  \mathbb{H}\mathrm{QT}\ar@<-1ex>[uu]_{\text{q}}\ar@<1ex>[ll]^{\text{MMG}\circ\text{q}}
}
\end{equation}

\noindent Although it is not compact closed, the 
category $\RSE$ of reversible, standardly embedded \textsc{ejc}s remains of interest. This is 
still a monoidal category, and contains, in addition to real and quaternionic quantum systems, orthodox complex 
quantum systems in their standard embedding (and composing in the normal way). 
Indeed, these form a monoidal subcategory, $\CQM$, which, again, 
functions as an ``ideal".

\noindent It is worth noting that the set of quaternionic quantum systems 
does {\em not} form a monoidal subcategory of either $\RSE$ or 
$\InvQM$, as the composite of two quaternionic systems is {\em real}. 
Efforts to construct a free-standing quaternionic quantum theory 
have had to contend with the absence of a suitable {\em quaternionic} 
composite of quaternionic systems. For instance, as pointed out by Araki 
\cite{Araki80}, the obvious candidate for the composite 
of $\mathcal{A} = \mathcal{M}_{n}(\mathbb{H})_{\text{sa}}$ and $\mathcal{B} = \mathcal{M}_{m}(\mathbb{H})_{\text{sa}}$, $\mathcal{M}_{nm}(\mathbb{H})_{\text{sa}}$, 
does not have a large enough dimension to accommodate the real
vector space tensor product $\mathcal{A} \otimes_{\mathbb{R}} \mathcal{B}$, 
{\magenta causing difficulty for}  the 
representation of product effects.\footnote{Attempts to interpret
  the quaternionic Hilbert module $\mathbb{H}^{mn}$ as a tensor 
  product of $\mathbb{H}^m$
  and $\mathbb{H}^n$ raise at least the possibility of signaling via the
  noncommutativity of scalar multiplication.  This noncommutativity
  underlies the the argument in \cite{McKague} that
  stronger-than-quantum correlations are achievable in such a model.}
 In our approach, the issue simply doe not arise. It seems that quaternionic quantum mechanics is best seen as an 
inextricable part of a larger theory. Essentially 
the same point has also been made by Baez \cite{BaezDivAlg}. 
 
\noindent The category $\InvQM$ is somewhat mysterious. It encompasses 
real and quaternionic QM in a completely natural way; 
however, while it also contains complex quantum systems, these compose 
in an exotic way: {\redd as pointed out above}, the composite of two complex quantum systems 
in $\InvQM$ comes with an extra classical bit ---
equivalently, $\{0,1\}$-valued superselection rule. This functions to make the transpose automorphism of 
$\mathcal{M}_{n}(\mathbb{C})_{\text{sa}}$ count as a morphism.   
The extra classical bit is flipped by the Jordan transpose 
(swap of C$^{*}$ summands) on either factor of {\redd such} a composite, but unaffected 
if {\em both} parties {\redd implement} the Jordan transpose (which does, of course, 
effect a Jordan transpose on the composite).
The precise physical significance of this is a subject for
further study. 

\noindent As \cref{ex: no states} shows, there is no way to enlarge
  $\InvQM$ so as to include higher spin factors, without either
  sacrificing compact closure {\magenta (and even rendering the set
  $\cc(\mathcal{A},\mathrm{I})$, which might naturally be thought to represent states,
  trivial)} or venturing outside the ambient category of
  \textsc{ejc}-algebras, to make use of morphisms that are not {\redd (relatively)} completely
  Jordan-preserving maps.  Our second proof of \cref{ex: no states} shows, more strikingly, that
  there is no way to construct a category {\redd of the form $\CJP_{\mathscr{C}}$} that contains
  {\em standardly embedded} complex quantum systems {\em and} real 
  systems, without, again, sacrificing compact closure
  (indeed, the representation of states by morphisms). 
\bookmarksetup{startatroot}
\addtocontents{toc}{\vspace{0.5cm}}
\chapter{Epilogue}
\label{epilogue}
\epigraphhead[40]
	{
		\epigraph{``Dreamers, they never learn.''}{---\textit{Radiohead}\\ Daydreaming (2016)}
	}
	
\noindent The zeitgeist of the current quantum foundations revolution revolves around the conception of quantum theory as a \textit{theory of information}. This spirit of thought underpins a mutualistic symbiosis enjoyed by the fields of quantum information and quantum foundations. Transcending philosophy, quantum theory has in fact recently been \textit{derived} from information-theoretic principles \cite{Chiribella2011}\cite{Masanes2011}\cite{Hardy2011}\cite{Dakic2011} (references given being representative only.) These derivations exemplify the immense power of deep, yet simple physical ideas; ideas akin to Einstein's more compelling postulates for relativity. There is no \textit{a priori} reason to expect that Einstein's relativity and quantum theory can be reconciled within a unified framework for spacetime and quantum physics. One cannot, however, resist speculating that relativity and quantum theory rise from a deeper level of physical notions. Embracing an information-theoretic perspective of physics as a whole, one ultimately seeks to uncover such common ground for all varieties of physical experience. 

\noindent An obvious approach to the formulation of new fundamental physics is to first develop a deeper understanding of quantum theory itself. Indeed, while the derivations cited above exactly yield the full abstract mathematical structure of quantum theory, there remain many important unresolved aspects pertaining to the character of this probability calculus. In particular, the geometry of quantum state space is far from fully understood. In \cref{partI} of this thesis, we introduced novel shapes of high symmetry inscribed within quantum cones of arbitrary dimension, namely \textit{conical designs} \cite{Graydon2015a}. These shapes shed new light on the structure of quantum information; moreover, they are naturally adapted to an information-theoretic description entanglement \cite{Graydon2016}. 

\noindent A tautologically obvious route towards new physics is the explicit formulation of novel physical theories. The vast landscape of general probabilistic theories provides a rich foundation; however, it is certainly not obvious where and how to depart from the twin pillars. Indeed, relativity and quantum theory are remarkably, nay, shockingly successful from an empirical point of view. Therefore, a conservative and logical approach is to consider physical theories sharing some essential characteristics with our current conceptions of nature. At a minimum, one feels that the notion of nonsignaling ought to be preserved, for as Einstein emphasized, the notion of (quasi-)closed systems is necessary for the establishment of empirically testable physics \cite{Einstein1948}. Together with nonsignaling, additional desiderata narrow the field of general probabilistic theories encompassing quantum theory. In \cref{partII} of this thesis, we constructed dagger compact closed symmetric monoidal categories for Jordan-algebraic physics \cite{Barnum2015}\cite{Barnum2016b}.

\newpage
\noindent Eugene Wigner famously described the unreasonable effectiveness of mathematics in the natural sciences \cite{Wigner1960}. \textit{Prima facie}, abstract thinking is very powerful: from the general conception of a class, one can capture the essence of all particular instantiations. This has been especially true in physics, which is truly remarkable. Abstract mathematics alone, however, could never hope to provide a solid conceptual foundation for physics. Rather, one requires deep physical principles concerning the nature of nature. It is our hope that the present thesis draws one closer towards \textit{dem Geheimnis des Alten}.


\bibliographystyle{mg}
\cleardoublepage 
\phantomsection  
\renewcommand*{\bibname}{References}

\cleardoublepage
\phantomsection
\addcontentsline{toc}{chapter}{\textbf{References}}
\label{theBib}
\bibliography{mg-ethesis}
\appendix
\chapter{Appendices of Part I}
\label{appP1}
\section{Proof of Lemma 2.2.1}\label{lem2p2p1Proof}
In this appendix, we recall the statement of the \cref{irrepLem} for convenience, and provide a full proof.
\begin{lemma}\label{irrepLemA}
\textit{The restrictions of the product representation of $\mathrm{U}(\mathcal{H}_{d})$ to $\mathcal{H}_{\text{sym}}$ and $\mathcal{H}_{\text{asym}}$ are irreducible.}
\end{lemma}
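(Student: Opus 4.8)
The plan is to show that the restrictions of the product representation $\sigma:U\longmapsto U\otimes U$ to the symmetric and antisymmetric subspaces are irreducible by an elementary, computational argument, avoiding the full machinery of Schur-Weyl duality. The key idea is the standard criterion: a finite dimensional unitary representation is irreducible if and only if the only self-adjoint operators commuting with all of the representation operators are scalar multiples of the identity (on the relevant invariant subspace). So the approach is to suppose $T$ is a linear endomorphism on $\mathcal{H}_{\text{sym}}$ (respectively $\mathcal{H}_{\text{asym}}$) commuting with $(U\otimes U)_{|\mathcal{H}_{\text{sym}}}$ for every $U\in\mathrm{U}(\mathcal{H}_{d})$, and to deduce that $T$ must be a scalar.

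First I would extend such a $T$ to all of $\mathcal{H}=\mathcal{H}_{\text{sym}}\oplus\mathcal{H}_{\text{asym}}$ by declaring it zero on the orthogonal complement, so that I may work with the basis $\{|e_r\rangle\otimes|e_s\rangle\}$ and exploit explicit matrix elements. The main tool will be to feed cleverly chosen unitaries $U$ into the commutation relation $[U\otimes U,T]=0$ and read off constraints on the matrix elements $\langle e_r|\otimes\langle e_s|\,T\,|e_u\rangle\otimes|e_v\rangle$. The natural choices are: (i) diagonal phase unitaries $P:|e_r\rangle\longmapsto e^{i\theta_r}|e_r\rangle$, which force most off-diagonal matrix elements to vanish by an argument analogous to \cref{applebyProp} (linear independence of the exponentials $e^{i(\theta_u+\theta_v-\theta_r-\theta_s)}$ in the $\theta$-variables); and (ii) permutation unitaries $W_{r,s}$ exchanging basis vectors, together with a ``rotation'' unitary mixing two coordinates (like the $W_{+12}$ used later in the proof of \cref{nluiLem}), to show that the surviving diagonal parameters are all equal. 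This is in fact a small-scale rehearsal of the case analysis carried out in full in the proof of \cref{nluiLem}.

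The concrete steps, in order, would be: (1) State the commutant criterion for irreducibility and reduce to analysing $T$ with $[U\otimes U,T]=0$ for all $U$. (2) Apply the phase unitaries to kill all matrix elements except those of the form $T_{rs,rs}$ and $T_{rs,sr}$, i.e. only the ``symmetric in index-pair'' components survive. (3) Apply basis permutations $W_{r,u}$ and coordinate mixings to show $T_{rs,rs}$ is a single constant $c$ independent of the indices, and $T_{rs,sr}$ is a single constant $k$ independent of the indices. (4) Conclude that $T=c\,\mathbf{I}+k\,\mathrm{W}$ on $\mathcal{H}$, where $\mathrm{W}$ is the swap operator of Eq.~\eqref{swapOp}; since $\mathrm{W}=\Pi_{\text{sym}}-\Pi_{\text{asym}}$, this says $T=(c+k)\Pi_{\text{sym}}+(c-k)\Pi_{\text{asym}}$. (5) Restricting to $\mathcal{H}_{\text{sym}}$ (where $T$ was assumed to commute, and where $\mathrm{W}$ acts as $+1$) gives $T_{|\mathcal{H}_{\text{sym}}}=(c+k)\,\id$, a scalar; the antisymmetric case gives $(c-k)\,\id$. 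Hence each restriction is irreducible.

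The main obstacle I anticipate is step (3): getting from the phase-unitary constraints (which only pin down \emph{which} entries vanish) to the statement that the two families of surviving entries are each \emph{constant} across all index choices. This requires the permutation-plus-rotation argument and careful bookkeeping of how conjugation by $U\otimes U$ acts on the $T_{rs,rs}$ and $T_{rs,sr}$ entries, and it is exactly the delicate part because one must verify the mixing unitary genuinely relates different index values without introducing uncontrolled cross-terms. Everything else is routine linear algebra once that rigidity is established, and the final identification $T=c\mathbf{I}+k\mathrm{W}$ together with Eq.~\eqref{piSymAsym} makes the conclusion immediate.
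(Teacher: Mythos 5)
Your proposal is correct, but it takes a genuinely different route from the paper's. The paper proves \cref{irrepLemA} by a cyclic-vector argument: starting from an arbitrary nonzero vector of an invariant subspace $\mathcal{X}\subseteq\mathcal{H}_{\text{sym}}$ (resp.\ $\mathcal{Y}\subseteq\mathcal{H}_{\text{asym}}$), it applies explicit sign-flip, phase, permutation and two-coordinate-mixing unitaries to generate every basis vector $|f^{\pm}_{r,s}\rangle$, forcing $\mathcal{X}=\mathcal{H}_{\text{sym}}$ and $\mathcal{Y}=\mathcal{H}_{\text{asym}}$. You instead compute the commutant of the product representation directly and invoke the standard criterion that a finite-dimensional unitary representation is irreducible iff its (self-adjoint) commutant is trivial; your steps (2)--(5) establish that any $T$ with $[U\otimes U,T]=0$ for all $U$ equals $c\,\mathbf{I}_{d^2}+k\,\mathrm{W}=(c+k)\Pi_{\text{sym}}+(c-k)\Pi_{\text{asym}}$, and extending $T$ by zero on the complementary subspace then forces the restriction to be scalar. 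I checked the step you flag as delicate: for the exact commutation relation the phase-unitary constraint $T_{rs,uv}=e^{i(\theta_r+\theta_s-\theta_u-\theta_v)}T_{rs,uv}$ holds term-by-term for each fixed index quadruple, so you do not even need the linear-independence result of \cref{applebyProp} (unlike in \cref{nluiLem}, where the constraint only holds after summing over permutations); and the $W_{+12}$-type mixing identity $T_{11,11}=\tfrac12(T_{11,11}+c+k)$ pins down the diagonal entry as $c+k$, so the rigidity you need does go through. The trade-off: the paper's route is more elementary (it never needs the commutant criterion or Schur's lemma) and directly exhibits cyclicity, whereas your route proves, as a byproduct and without circularity, the commutant statement that the paper only obtains afterwards as a corollary of \cref{irrepLemA} together with \cref{schurLem} --- so in your ordering that corollary would come for free and Schur's lemma could be dispensed with there.
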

\begin{proof}Let subspaces $\mathcal{X}\subseteq\mathcal{H}_{\text{sym}}$ and $\mathcal{Y}\subseteq\mathcal{H}_{\text{asym}}$ be invariant under the product representation of $\mathrm{U}(\mathcal{H}_{d})$. We must prove that $\mathcal{X}\in\{\{0\},\mathcal{H}_{\text{sym}}\}$ and $\mathcal{Y}\in\{\{0\},\mathcal{H}_{\text{asym}}\}$. The zero cases are trivial. Before proceeding with the nontrivial cases, it will be useful to make some preparatory observations. Let $t\in\{1,\dots,d\}$ and $|e_{t}\rangle$ be the orthonormal basis for $\mathcal{H}_{d}$ arbitrarily chosen to define the orthonormal basis elements $|f^{+}_{r,s}\rangle,|f^{+}_{r,r}\rangle$ and $|f^{-}_{r,s}\rangle$ for $\mathcal{H}_{\text{sym}}$ and $\mathcal{H}_{\text{asym}}$ in Eqs.~\eqref{hSym} and \eqref{hAsym}, respectively. Introduce the following families of unitary linear endomorphisms on $\mathcal{H}_{d}$ defined for arbitrary fixed $p<q\in\{1,\dots,d\}\ni l$ via
\begin{eqnarray}
A_{l}|e_{t}\rangle=|e_{t}\rangle(1-2\delta_{t,l})\text{,}\hspace{1cm}
B_{p,q}|e_{t}\rangle=|e_{t}\rangle(1-2\delta_{t,p}-2\delta_{t,q})\text{,}
\end{eqnarray}
where $\delta_{t,l}$ is the usual Kronecker delta function. The action of $A_{l}\otimes A_{l}$ for $l\in\{p,q:p<q\}$ and $B_{p,q}\otimes B_{p,q}$ on $|f^{\pm}_{r,s}\rangle$ is to selectively introduce a phase of $-1$. This action is summarized in the following table organized by mutually exclusive and exhaustive cases of the indices $r$ and $s$. We also include a trivial column, for reasons that will become clear.
\begin{center}
\begin{small}{
\begin{tabular}{c||c|c|c|c}
&$\mathds{1}_{d}\otimes \mathds{1}_{d}|f^{\pm}_{r,s}\rangle$ & $A_{p}\otimes A_{p}|f^{\pm}_{r,s}\rangle(-1)$ & $A_{q}\otimes A_{q}|f^{\pm}_{r,s}\rangle(-1)$ & $B_{p,q}\otimes B_{p,q}|f^{\pm}_{r,s}\rangle$
\\\hline
$\hspace{0.575cm}r<p\;\wedge\;s\notin\{p,q\}$ & $+$ & $-$ & $-$ & $+$ \\\hline
$r<p\;\wedge\;s=q$ & $+$ & $-$ & $+$ & $-$ \\\hline
$r<p\;\wedge\;s=p$ & $+$ & $+$ & $-$ & $-$ \\\hline
$r=p\;\wedge\;s=q$ & $+$ & $+$ & $+$ & $+$ \\\hline
$r=p\;\wedge\;s\neq q$ & $+$ & $+$ & $-$ & $-$ \\\hline
$r=q\;\wedge\;s\neq q$ & $+$ & $-$ & $+$ & $-$ \\\hline
$q\neq r>p\;\wedge\;s=q$ & $+$ & $-$ & $+$ & $-$ \\\hline
$q\neq r>p\;\wedge\;s\neq q$ & $+$ & $-$ & $-$ & $+$ \\\hline
\end{tabular}
}\end{small}
\end{center}
Next, introduce a third family $C_{p,q}\in\mathrm{U}(\mathcal{H}_{d})$ in Eq.~\eqref{Cpq}, and observe the consequent Eq.~\eqref{CpqOb}
\begin{eqnarray}
&C_{p,q}|e_{t}\rangle\sqrt{2}=\Big\{|e_{t}\rangle\sqrt{2}\text{ if } t\notin\{p,q\};\hspace{0.3cm}|e_{p}\rangle+|e_{q}\rangle\text{ if } t=p;\hspace{0.3cm}|e_{p}\rangle-|e_{q}\rangle\text{ if } t=q\Big\}\label{Cpq}\text{,}\\
\implies&\big(C_{p,q}\otimes C_{p,q}\big)|f^{+}_{p,p}\rangle\sqrt{2}-\big(|f^{+}_{p,p}\rangle+|f^{+}_{q,q}\rangle\big)2^{-1/2}=|f^{+}_{p,q}\rangle\text{.}\label{CpqOb}
\end{eqnarray}
This concludes our preparatory work. Our method of proof will now be to demonstrate that from arbitrary nonzero elements $|x\rangle\in\mathcal{X}\subseteq\mathcal{H}_{\text{sym}}$ and $|y\rangle\in\mathcal{Y}\subseteq\mathcal{H}_{\text{asym}}$, one can generate all of the basis elements in Eqs.~\eqref{hSym} and \eqref{hAsym}, respectively, via vector space operations and unitary linear endomorphisms of the form $U\otimes U$. Indeed, $\mathcal{X}$ and $\mathcal{Y}$ are by assumption subspaces invariant under the product representation, and therefore closed under such operations and endomorphisms. We start with such a $|y\rangle$ and find
\begin{eqnarray}
&|y\rangle&=|f^{-}_{p,q}\rangle\lambda_{p,q}+\sum_{r=1}^{d}\sum_{s>r}|f^{-}_{r,s}\rangle\lambda_{r,s}(1-\delta_{r,p}\delta_{s,q})\hspace{0.5cm}\text{with }\; 0\neq \lambda_{p,q}\in\mathbb{C}\label{arbY}\\
\implies\hspace{0.35cm}&\mathcal{Y}\ni|f^{-}_{p,q}\rangle&=\frac{1}{4\lambda_{p,q}}\Big(\mathds{1}_{d}\otimes\mathds{1}_{d}-A_{p}\otimes A_{p}-A_{q}\otimes A_{q}+B_{p,q}\otimes B_{p,q}\Big)|y\rangle\text{.}\label{arbYReduced}
\end{eqnarray} 
The remaining basis elements for $\mathcal{H}_{\text{asym}}$ can then be generated under the action of $U_{\mathrm{p}}\otimes U_{\mathrm{p}}$ for suitably chosen permutations in $\mathrm{p}\in\mathrm{S}_{d}$. Thus $\mathcal{Y}=\mathcal{H}_{\text{asym}}$. Now, for such an $|x\rangle$, the situation is more involved: either $\exists p<q\in\{1,\dots,d\}$ with $0\neq\mu_{p,q}\in\mathbb{C}$ or $\exists v\in\{1,\dots,d\}$ with $0\neq \nu_{v}\in\mathbb{C}$, or both, in
\begin{eqnarray}
|x\rangle=|f^{+}_{p,q}\rangle\mu_{p,q}+\sum_{r=1}^{d}\sum_{s>r}|f^{+}_{r,s}\rangle\mu_{r,s}(1-\delta_{r,p}\delta_{s,q})+|f^{+}_{v,v}\rangle\nu_{v}+\sum_{r=1}^{d}|f^{+}_{r,r}\rangle\nu_{r}(1-\delta_{r,v})\text{.}
\end{eqnarray}
There are now three mutually exclusive and exhaustive subcases to consider.

\noindent Subcase 1: if $!\exists\nu_{v}\neq0$, then our analysis of the antisymmetric case with $|f^{-}_{r,s}\rangle$ now replaced by $|f^{+}_{r,s}\rangle$ yields $\mathcal{X}=\mathcal{H}_{\text{sym}}$. 

\noindent Subcase 2: if $!\exists\mu_{r,s}\neq0$, then introduce then $D_{v}\in\mathrm{U}(\mathcal{H}_{d})$ as follows and observe that
\begin{eqnarray}
D_{v}|e_{l}\rangle=|e_{l}\rangle i^{\delta_{l,v}}\hspace{0.3cm}\text{where } i^{2}=-1\label{Dv}\implies \mathcal{X}\ni|f^{+}_{v,v}\rangle&=\frac{1}{2\nu_{v}}\big(\mathds{1}_{d}\otimes\mathds{1}_{d}-D_{v}\otimes D_{v}\big)|x\rangle\text{.}\label{DvOb}
\end{eqnarray}
All of the remaining remaining $|f^{+}_{r,r}\rangle$ can then be generated from $|f^{+}_{v,v}\rangle$ via $U_{\mathrm{p}}\otimes U_{\mathrm{p}}$ with permutations $\mathrm{p}\in\mathrm{S}_{d}$ suitably chosen, and from $|f^{+}_{r,r}\rangle$, one can generate $|f^{+}_{r,s}\rangle$ via Eq.~\eqref{CpqOb}. Thus $\mathcal{X}=\mathcal{H}_{\text{sym}}$. 

\noindent Subcase 3: if $\exists \nu_{v}\neq 0$ and $\mu_{r,s}\neq 0$, then one can proceed as follows
\begin{eqnarray}
&\mathcal{X}\ni|x'\rangle&\equiv\frac{1}{4\mu_{p,q}}\Big(\mathds{1}_{d}\otimes\mathds{1}_{d}-A_{p}\otimes A_{p}-A_{q}\otimes A_{q}+B_{p,q}\otimes B_{p,q}\Big)|x\rangle\\
&&=|f^{+}_{p,q}\rangle+|f^{+}_{v,v}\rangle\frac{\nu_{v}}{4\mu_{p,q}}+\frac{1}{4\mu_{p,q}}\sum_{r=1}^{d}|f^{+}_{r,r}\rangle\nu_{r}(1-\delta_{r,v})\\
\implies&\mathcal{X}\ni|x''\rangle&\equiv\big(\mathds{1}_{d}\otimes\mathds{1}_{d}+A_{q}\otimes A_{q}\big)|x'\rangle\\
&&=|f^{+}_{v,v}\rangle\frac{\nu_{v}}{2\mu_{p,q}}+\frac{1}{2\mu_{p,q}}\sum_{r=1}^{d}|f^{+}_{r,r}\rangle\nu_{r}(1-\delta_{r,v})
\end{eqnarray}
With $|x''\rangle$ we return to our analysis of Subcase 2 to conclude once again that $\mathcal{X}=\mathcal{H}_{\text{sym}}$.
\end{proof}
\newpage
\section{Proof of Lemma 4.1.5}\label{conProof}
In this appendix, we state \cref{conLem} in equivalent terms, and provide a full proof.
\begin{lemma}\textit{Define}\footnote{Recall that $\forall\sigma\in\mathcal{Q}\big(\mathcal{H}_{d}\big):\frac{1}{d}\leq\mathrm{Tr}\big(\sigma^{2}\big)\leq1$, so $f$ is inherently nonnegative.} 
\begin{equation}
f:\mathcal{Q}\big(\mathcal{H}_{d}\big)\rightarrow\mathbb{R}_{+}::\sigma\mapsto f(\sigma)=\sqrt{1-\mathrm{Tr}\big(\sigma^{2}\big)}
\end{equation}
\textit{It follows that}
\begin{eqnarray}
\text{i}.&\forall U\in\mathrm{U}(\mathcal{H}_{d}):f\big(U\sigma U^{*}\big)=f(\sigma)\text{.} \label{pi}\\
\text{ii}.&\forall\lambda\in[0,1]\forall\sigma_{1},\sigma_{2}\in\mathcal{Q}\big(\mathcal{H}_{d}\big):f\Big(\sigma_{1}\lambda+\sigma_{2}(1-\lambda)\Big)\geq f(\sigma_{1})\lambda+f(\sigma_{2})(1-\lambda)\text{.}\label{pii}
\end{eqnarray}
\begin{proof}
That property i. holds follows directly from cyclicity of the trace:
\begin{eqnarray}
f\big(U\sigma U^{\dagger}\big)&=&\sqrt{1-\mathrm{Tr}\Big(\big(U\sigma U^{*}\big)^{2}\Big)}\nonumber\\[0.2cm]
&=&\sqrt{1-\mathrm{Tr}\big(U\sigma U^{\dagger}U\sigma U^{*}\big)}\nonumber\\[0.2cm]
&=&\sqrt{1-\mathrm{Tr}\big(U^{*}U\sigma^{2}\big)}\nonumber\\[0.2cm]
&=&\sqrt{1-\mathrm{Tr}\big(\sigma^{2}\big)}\nonumber\\[0.2cm]
&=&f(\sigma)\text{.}
\end{eqnarray}
Regarding property ii., we first compute the square of the \textsc{lhs} of \eqref{pii}
\begin{eqnarray}
L&\equiv&\Big(f\big(\sigma_{1}\lambda+\sigma_{2}(1-\lambda)\big)\Big)^{2}\nonumber\\[0.2cm]
&=&1-\mathrm{Tr}\Big(\big(\sigma_{1}\lambda+\sigma_{2}(1-\lambda)\big)^{2}\Big)\nonumber\\[0.2cm]
&=&1-\lambda^{2}\mathrm{Tr}\big(\sigma_{1}^{2}\big)-(1-\lambda)^{2}\mathrm{Tr}\big(\sigma_{2}^{2}\big)-2\lambda(1-\lambda)\mathrm{Tr}\big(\sigma_{1}\sigma_{2}\big)\text{.}
\end{eqnarray}
and the square of the \textsc{rhs} of \eqref{pii},
\begin{eqnarray}
R&\equiv&\Big(\lambda f(\sigma_{1})+(1-\lambda)f(\sigma_{2})\Big)^{2}\nonumber\\[0.2cm]
&=&\lambda^{2}\big(f(\sigma_{1})\big)^{2}+(1-\lambda)^{2}\big(f(\sigma_{2})\big)^{2}+2\lambda(1-\lambda)f(\sigma_{1})f(\sigma_{2})\nonumber\\[0.2cm]
&=&\lambda^{2}\Big(1-\mathrm{Tr}\big(\sigma_{1}^{2}\big)\Big)+(1-\lambda)^{2}\Big(1-\mathrm{Tr}\big(\sigma_{2}^{2}\big)\Big)+2\lambda(1-\lambda)\sqrt{1-\mathrm{Tr}\big(\sigma_{1}^{2}\big)}\sqrt{1-\mathrm{Tr}\big(\sigma_{2}^{2}\big)}\text{.}
\end{eqnarray}
Noting that 
\begin{equation}
1=\lambda+(1-\lambda)\implies\lambda^{2}+(1-\lambda)^{2}=1-2\lambda(1-\lambda),
\end{equation}
we then have
\begin{eqnarray}
L-R&=&1-\lambda^{2}-(1-\lambda)^{2}-2\lambda(1-\lambda)\Big(\mathrm{Tr}\big(\sigma_{1}\sigma_{2}\big)+\sqrt{1-\mathrm{Tr}\big(\sigma_{1}^{2}\big)}\sqrt{1-\mathrm{Tr}\big(\sigma_{2}^{2}\big)}\Big)\nonumber\\
&=&2\lambda(1-\lambda)\Big(1-X\Big)\text{,}
\end{eqnarray}
with
\begin{equation}
X\equiv\mathrm{Tr}\big(\sigma_{1}\sigma_{2}\big)+\sqrt{1-\mathrm{Tr}\big(\sigma_{1}^{2}\big)}\sqrt{1-\mathrm{Tr}\big(\sigma_{2}^{2}\big)}\text{.}
\end{equation}
We seek to show that $X\leq 1$. Without the loss of generality, assume that $\mathrm{Tr}\big(\sigma_{1}^{2}\big)\leq\mathrm{Tr}\big(\sigma_{2}^{2}\big)$. On that innocuous assumption, it follows that
\begin{eqnarray}
X&\leq&\mathrm{Tr}\big(\sigma_{1}\sigma_{2}\big)+1-\mathrm{Tr}\big(\sigma_{1}^{2}\big)\nonumber\\[0.2cm]
&=&1+Y\text{,}
\end{eqnarray}
with
\begin{equation}
Y\equiv\mathrm{Tr}\big(\sigma_{1}\sigma_{2}\big)-\mathrm{Tr}\big(\sigma_{1}^{2}\big)
\end{equation}
Now, from the Cauchy-Schwarz inequality:
\begin{eqnarray}
\mathrm{Tr}\big(\sigma_{1}\sigma_{2}\big)\leq\sqrt{\mathrm{Tr}\big(\sigma_{1}^{2}\big)\mathrm{Tr}\big(\sigma_{2}^{2}\big)}\leq\mathrm{Tr}\big(\sigma_{1}^{2}\big)\text{.}
\end{eqnarray}
We conclude that
\begin{equation}
Y\leq0\implies X\leq 1\implies R\leq L\implies \text{ii. holds\text{.}}
\end{equation}
\end{proof}
\end{lemma}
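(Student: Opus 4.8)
The plan is to treat the two claims separately, disposing of unitary invariance immediately and then reducing concavity to a single scalar inequality. For part (i), I would simply expand $\mathrm{Tr}\big((U\sigma U^{*})^{2}\big)=\mathrm{Tr}\big(U\sigma U^{*}U\sigma U^{*}\big)$ and use $U^{*}U=\mathds{1}_{d}$ together with cyclicity of the trace to collapse this to $\mathrm{Tr}(\sigma^{2})$; hence $f(U\sigma U^{*})=f(\sigma)$ is essentially a one-line computation and poses no difficulty.

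For part (ii), the observation I would exploit is that both sides of the desired inequality are nonnegative (since $\mathrm{Tr}(\sigma^{2})\in[1/d,1]$), so it suffices to compare their squares. Writing $L$ for the square of the left-hand side and $R$ for the square of the right-hand side, I would expand $L=1-\lambda^{2}\mathrm{Tr}(\sigma_{1}^{2})-(1-\lambda)^{2}\mathrm{Tr}(\sigma_{2}^{2})-2\lambda(1-\lambda)\mathrm{Tr}(\sigma_{1}\sigma_{2})$ and expand $R$ similarly, the only mildly delicate term in $R$ being the cross term $2\lambda(1-\lambda)f(\sigma_{1})f(\sigma_{2})=2\lambda(1-\lambda)\sqrt{(1-\mathrm{Tr}\,\sigma_{1}^{2})(1-\mathrm{Tr}\,\sigma_{2}^{2})}$. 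Substituting the identity $\lambda^{2}+(1-\lambda)^{2}=1-2\lambda(1-\lambda)$ then cancels the quadratic-in-$\lambda$ terms and yields $L-R=2\lambda(1-\lambda)(1-X)$, where $X\equiv\mathrm{Tr}(\sigma_{1}\sigma_{2})+\sqrt{(1-\mathrm{Tr}\,\sigma_{1}^{2})(1-\mathrm{Tr}\,\sigma_{2}^{2})}$. Concavity thus reduces to the single inequality $X\leq 1$.

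The main obstacle is precisely the bound $X\leq 1$. Here I would chain two applications of Cauchy--Schwarz. First, the Cauchy--Schwarz inequality for the Hilbert--Schmidt inner product on $\mathcal{L}_{\text{sa}}(\mathcal{H}_{d})$ gives $\mathrm{Tr}(\sigma_{1}\sigma_{2})\leq\sqrt{\mathrm{Tr}(\sigma_{1}^{2})\mathrm{Tr}(\sigma_{2}^{2})}$; writing $a=\mathrm{Tr}(\sigma_{1}^{2})$ and $b=\mathrm{Tr}(\sigma_{2}^{2})$, this bounds $X\leq\sqrt{ab}+\sqrt{(1-a)(1-b)}$. Second, I would recognize the right-hand side as the Euclidean inner product of the two unit vectors $(\sqrt{a},\sqrt{1-a})$ and $(\sqrt{b},\sqrt{1-b})$ in $\mathbb{R}^{2}$, whence a further Cauchy--Schwarz step forces $\sqrt{ab}+\sqrt{(1-a)(1-b)}\leq 1$. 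Combining the two steps gives $X\leq 1$, so $L\geq R$ and hence $f$ is concave. I would favour this symmetric two-step route over any argument requiring a ``without loss of generality'' ordering of $\mathrm{Tr}\,\sigma_{1}^{2}$ and $\mathrm{Tr}\,\sigma_{2}^{2}$, since the vector form of Cauchy--Schwarz is manifestly symmetric in $\sigma_{1}$ and $\sigma_{2}$ and sidesteps any case analysis.
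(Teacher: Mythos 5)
Your proposal is correct, and parts (i) and the reduction of concavity to the scalar bound $X\leq 1$ via $L-R=2\lambda(1-\lambda)(1-X)$ coincide exactly with the paper's argument. Where you genuinely diverge is in the final step. The paper fixes the ordering $\mathrm{Tr}(\sigma_{1}^{2})\leq\mathrm{Tr}(\sigma_{2}^{2})$ "without loss of generality," bounds $\sqrt{(1-a)(1-b)}\leq 1-a$ (writing $a=\mathrm{Tr}\sigma_{1}^{2}$, $b=\mathrm{Tr}\sigma_{2}^{2}$), and then invokes the chain $\mathrm{Tr}(\sigma_{1}\sigma_{2})\leq\sqrt{ab}\leq a$; but the last inequality $\sqrt{ab}\leq a$ is equivalent to $b\leq a$, which is the \emph{opposite} of the ordering just assumed, so the published chain does not close except when $a=b$. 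Your symmetric two-step route --- Hilbert--Schmidt Cauchy--Schwarz to get $X\leq\sqrt{ab}+\sqrt{(1-a)(1-b)}$, followed by Euclidean Cauchy--Schwarz applied to the unit vectors $(\sqrt{a},\sqrt{1-a})$ and $(\sqrt{b},\sqrt{1-b})$ in $\mathbb{R}^{2}$ --- avoids any case split, is manifestly symmetric in $\sigma_{1}$ and $\sigma_{2}$, and actually repairs this defect; your instinct to distrust the WLOG ordering here was exactly right. The only price is the (entirely routine) observation that $a,b\in[1/d,1]\subseteq[0,1]$ so that all four square roots are real, which you have implicitly from the footnote accompanying the definition of $f$.
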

\newpage
\section{Permutation Notation for Lemma 4.2.4}\label{permApp}

Let $\mathcal{S}=\{r;s;u;v\}$ be a fixed ordered set with $r,s,u,v\in\{1,\dots,d\}$. Let $\text{Perm}(\mathcal{S})$ be the set of all unique ordered permutations of $\mathcal{S}$. We denote that cardinality of $\text{Perm}(\mathcal{S})$ by $|\text{Perm}(\mathcal{S})|$. We label the elements of $\text{Perm}(\mathcal{S})$ by $p$ with $p\in\{1,\dots,|\text{Perm}(\mathcal{S})|\}$. Introduce the notation $p(\mathfrak{a})$ for the $\mathfrak{a}^{th}$ element of the ordered set labeled by $p$. With this notation, we hope to make our standing assumption in Eq.~\eqref{standing} crystal clear. Recall that we wrote Eq.~\eqref{standing} as
\begin{equation}
\sum_{p}X_{r_{p}s_{p},u_{p}v_{p}}^{\mathds{1}}X_{r_{p}s_{p},u_{p}v_{p}}^{\mathds{1}}=\sum_{p}e^{i(\theta_{u_{p}}+\theta_{v_{p}}-\theta_{r_{p}}-\theta_{s_{p}})}X_{r_{p}s_{p},u_{p}v_{p}}^{U}X_{r_{p}s_{p},u_{p}v_{p}}^{V}
\label{standingRecall}
\end{equation}
In the notation mentioned just above, this is equivalent to
\begin{eqnarray}
&&\sum_{p\in\text{Perm}\big(\{r;s;u;v\}\big)}X_{p(1)p(2),p(3)p(4)}^{\mathds{1}}X_{p(1)p(2),p(3)p(4)}^{\mathds{1}}\\
&=&\sum_{p\in\text{Perm}\big(\{r;s;u;v\}\big)}e^{i(\theta_{p(3)}+\theta_{p(4)}-\theta_{p(1)}-\theta_{p(2)})}X_{p(1)p(2),p(3)p(4)}^{U}X_{p(1)p(2),p(3)p(4)}^{V}
\label{standingRewritten}
\end{eqnarray}
For example, let $\mathcal{S}=\{r;r;s;s\}$. Then
\begin{equation} 
\text{Perm}(\mathcal{S})=\big\{\{r;r;s;s\},\{s;s;r;r\},\{r;s;r;s\},\{s;r;s;r\},\{r;s;s;r\},\{s;r;r;s\}\big\}\text{,} 
\end{equation}
so $|\text{Perm}(\mathcal{S})|=6$, and Eq.~\eqref{standingRewritten} reads
\begin{eqnarray}
&&X_{rr,ss}^{\mathds{1}}X_{rr,ss}^{\mathds{1}}+X_{ss,rr}^{\mathds{1}}X_{ss,rr}^{\mathds{1}}+X_{rs,rs}^{\mathds{1}}X_{rs,rs}^{\mathds{1}}+X_{sr,sr}^{\mathds{1}}X_{sr,sr}^{\mathds{1}}+X_{rs,sr}^{\mathds{1}}X_{rs,sr}^{\mathds{1}}+X_{sr,rs}^{\mathds{1}}X_{sr,rs}^{\mathds{1}}\nonumber\\
&=&e^{2i(\theta_{s}-\theta_{r})}X_{rr,ss}^{U}X_{rr,ss}^{V}+e^{2i(\theta_{r}-\theta_{s})}X_{ss,rr}^{U}X_{ss,rr}^{V}\nonumber\\
&+&X_{rs,rs}^{U}X_{rs,rs}^{V}+X_{sr,sr}^{U}X_{sr,sr}^{V}+X_{rs,sr}^{U}X_{rs,sr}^{V}+X_{sr,rs}^{U}X_{sr,rs}^{V}
\end{eqnarray}
\newpage
\section{Applying Proposition 4.2.2}\label{appDMLemma}
Let $\boldsymbol{\theta}\in[0,2\pi]^{d}$. Write $\boldsymbol{\theta}=(\theta_{1},\theta_{2},\dots,\theta_{i},\dots,\theta_{d})$. Let $\mathbf{n}_{k}\in\mathbb{Z}^{d}$. Write $\mathbf{n}_{k}=(n_{k_{1}},n_{k_{2}},\dots,n_{k_{j}},\dots,n_{k_{d}})$. Lemma 1.2 states that if $\mathbf{n}_{1},\dots,\mathbf{n}_{m}$ are distinct, then the functions $e^{i\mathbf{n}_{k}\cdot\boldsymbol{\theta}}$ are linearly independent in $C_{0}([0,2\pi]^{d})$. For example let $r\neq s\in\{1,\dots,d\}$. Then the following are distinct elements of $\mathbb{Z}^{d}$
\begin{eqnarray}
\mathbf{n}_{1}=(0,\dots,0,\underbrace{1}_{s^{\text{th}}\text{ position}},0,\dots,0,\underbrace{-1}_{r^{\text{th}}\text{ position}},0,\dots,0)\\
\mathbf{n}_{2}=(0,\dots,0,\underbrace{-1}_{s^{\text{th}}\text{ position}},0,\dots,0,\underbrace{1}_{r^{\text{th}}\text{ position}},0,\dots,0)
\end{eqnarray}
Then
\begin{eqnarray}
\mathbf{n}_{1}\cdot\boldsymbol{\theta}=\theta_{s}-\theta_{r}\\
\mathbf{n}_{2}\cdot\boldsymbol{\theta}=\theta_{r}-\theta_{s}
\end{eqnarray}
and
\begin{eqnarray}
e^{i\mathbf{n}_{1}\cdot\boldsymbol{\theta}}=e^{i(\theta_{s}-\theta_{r})}\\
e^{i\mathbf{n}_{2}\cdot\boldsymbol{\theta}}=e^{i(\theta_{r}-\theta_{s})}
\end{eqnarray}
are linearly independent functions. This example is for Case 2 considered in the main body of this note.\\[0.3cm]
Let us consider another example. Let $r,s,u$ be distinct elements of $\{1,\dots,d\}$. Then the following are distinct elements of $\mathbb{Z}^{d}$
\begin{eqnarray}
\mathbf{n}_{1}=(0,\dots,0,\underbrace{1}_{s^{\text{th}}\text{ position}},0,\dots,0,\underbrace{1}_{u^{\text{th}}\text{ position}},0,\dots,0,\underbrace{-2}_{r^{\text{th}}\text{ position}},0,\dots,0)\\
\mathbf{n}_{2}=(0,\dots,0,\underbrace{-1}_{s^{\text{th}}\text{ position}},0,\dots,0,\underbrace{-1}_{u^{\text{th}}\text{ position}},0,\dots,0,\underbrace{2}_{r^{\text{th}}\text{ position}},0,\dots,0)\\
\mathbf{n}_{3}=(0,\dots,0,\underbrace{-1}_{s^{\text{th}}\text{ position}},0,\dots,0,\underbrace{1}_{u^{\text{th}}\text{ position}},0,\dots,0,\underbrace{0}_{r^{\text{th}}\text{ position}},0,\dots,0)\\
\mathbf{n}_{4}=(0,\dots,0,\underbrace{1}_{s^{\text{th}}\text{ position}},0,\dots,0,\underbrace{-1}_{u^{\text{th}}\text{ position}},0,\dots,0,\underbrace{0}_{r^{\text{th}}\text{ position}},0,\dots,0)
\end{eqnarray}
and $e^{i\mathbf{n}_{k}\cdot\boldsymbol{\theta}}$ are linearly independent function by Lemma 1.2. This example is for Case 4. Case 5 is handled similarly.
\chapter{Appendices of Part II}
\label{appP2}
\section{Review of Spin Factors}\label{spinsApp}
\noindent In this appendix we detail the spin factors.

\begin{definition}\label{ssDef} \textit{Let $\mathcal{A}$ be a Jordan algebra as in \cref{jAlgDef}. Let $k\in\mathbb{N}$ such that $k\geq 2$. Let $\mathcal{P}_{k}=\{s_{1},s_{2},\dots,s_{k}\}\subset\mathcal{A}$ such that $\forall a,b\in\{1,\dots,k\}:s_{a}\jProd s_{b}=\delta_{a,b}\mathbf{1}$ and $s_{a}\neq\pm\mathbf{1}$. One says that} $\mathcal{P}_{k}$ is a spin system \textit{of cardinality $k$.} 
\end{definition}

\noindent Consider the set $\mathcal{V}_{k}$ generated (via Jordan multiplication and vector space operations in $\mathcal{A}$) by $\mathcal{P}_{k}$. One writes $\mathcal{V}_{k}=\mathfrak{j}(\mathcal{P}_{k})$. It follows that 
\begin{equation}\label{spinBasis}
\forall v\in\mathcal{V}_{k}\;\exists\lambda_{0},\lambda_{1},\dots,\lambda_{k}\in\mathbb{R}:v=\mathbf{1}\lambda_{0}+s_{1}\lambda_{1}+\dots s_{k}\lambda_{k}\text{.}
\end{equation} 
On that view, we regard $\mathcal{V}_{k}=\mathbb{R}\oplus\mathbb{R}^{k}$ and write we $v=\lambda_{0}\oplus\vec{\lambda}$. Let $\langle\cdot|\cdot\rangle$ be the usual inner product on $\mathbb{R}^{k}$. Equipping $\mathcal{V}_{k}$ with mutltiplication $\jProd:\mathcal{V}_{k}\times\mathcal{V}_{k}\longrightarrow\mathcal{V}_{k}$ defined
\begin{equation}\label{spinProd}
\forall v_{1},v_{2}\in\mathcal{V}_{k},v_{1}\jProd v_{2}=\Big(\lambda_{0_{1}}\lambda_{0_{2}}+\langle\vec{\lambda}_{1}|\vec{\lambda}_{2}\rangle\Big)\oplus\Big(\vec{\lambda}_{2}\lambda_{0_{1}}+\vec{\lambda}_{1}\lambda_{0_{2}}\Big)
\end{equation}
we render $\mathcal{V}_{k}$ a Jordan algebra. For the proof, see \cref{spinJordProp}. One says that $\mathcal{V}_{k}$ is a \textit{spin factor} (of cardinality $k$). Hence the following.

\begin{definition}\textit{A} spin factor of cardinality $k$\textit{, denoted} $\mathcal{V}_{k}$\textit{, is the Jordan algebraic closure of a spin system} $\mathcal{P}_{k}$ \textit{as in \cref{ssDef}, equipped with the Jordan product in Eq.~\eqref{spinProd}.}
\end{definition} 

\noindent We note here that $\mathrm{dim}_{\mathbb{R}}\mathcal{V}_{k}=k+1$. For the sake of notational convienience, let us also introduce the notation $\forall v=\lambda_{0}\oplus\vec{\lambda}\in\mathcal{V}_{k}$:
\begin{eqnarray}
\mathfrak{R}(v)&=&\lambda_{0}\\
\mathfrak{I}(v)&=&\vec{\lambda}
\end{eqnarray}

\begin{proposition}\label{spinJordProp} $\mathcal{V}_{k}$ \textit{is a Jordan algebra.}
\begin{proof} Let us bserve from direct calculation 
\begin{eqnarray}
v_{1}\jProd v_{1}&=&\Big(\lambda_{0_{1}}^{2}+\langle\vec{\lambda}_{1}|\vec{\lambda}_{1}\rangle\Big)\oplus\vec{\lambda}_{1}2\lambda_{0_{1}}\
\end{eqnarray}
Therefore
\begin{eqnarray}
\mathfrak{R}\big(v_{1}^{2}\jProd(v_{1}\jProd v_{2})\big)&=&\Big(\lambda_{0_{1}}^{2}+\langle\vec{\lambda}_{1}|\vec{\lambda}_{1}\rangle\Big)\Big(\lambda_{0_{1}}\lambda_{0_{2}}+\langle\vec{\lambda}_{1}|\vec{\lambda}_{2}\rangle\Big)+\Big\langle \vec{\lambda}_{1}2\lambda_{0_{1}}|\big(\vec{\lambda}_{2}\lambda_{0_{1}}+\vec{\lambda}_{1}\lambda_{0_{2}}\big)\Big\rangle\nonumber\\
&=&\lambda_{0_{1}}^{3}\lambda_{0_{2}}+\lambda_{0_{1}}^{2}\langle\vec{\lambda}_{1}|\vec{\lambda}_{2}\rangle+\langle\vec{\lambda}_{1}|\vec{\lambda}_{1}\rangle\lambda_{0_{1}}\lambda_{0_{2}}\nonumber\\
&+&\langle\vec{\lambda}_{1}|\vec{\lambda}_{1}\rangle\langle\vec{\lambda}_{1}|\vec{\lambda}_{2}\rangle+\langle\vec{\lambda}_{1}|\vec{\lambda}_{2}\rangle 2\lambda_{0_{1}}^{2}+\langle\vec{\lambda}_{1}|\vec{\lambda}_{1}\rangle 2\lambda_{0_{1}}\lambda_{0_{2}}\\[0.3cm]
\mathfrak{I}\big(v_{1}^{2}\jProd(v_{1}\jProd v_{2})\big)&=&\bigg(\vec{\lambda}_{1}2\lambda_{0_{1}}\Big(\lambda_{0_{1}}\lambda_{0_{2}}+\langle\vec{\lambda}_{1}|\vec{\lambda}_{2}\rangle\Big)+\Big(\vec{\lambda}_{2}\lambda_{0_{1}}+\vec{\lambda}_{1}\lambda_{0_{2}}\Big)\Big(\lambda_{0}^{2}+\langle\vec{\lambda}_{1}|\vec{\lambda}_{1}\rangle\Big)\bigg)\nonumber\\
&=&\vec{\lambda}_{1}\Big(3\lambda_{0_{1}}^{2}\lambda_{0_{2}}+\langle\vec{\lambda}_{1}|\vec{\lambda}_{2}\rangle2\lambda_{0_{1}}+\langle\vec{\lambda}_{1}|\vec{\lambda}_{1}\rangle\lambda_{0_{2}}\Big)+\vec{\lambda}_{2}\Big(\lambda_{0_{1}}^{3}+\langle\vec{\lambda}_{1}|\vec{\lambda}_{1}\rangle\lambda_{0_{1}}\Big)\\[0.5cm]
\mathfrak{R}\big(v_{1}^{2}\jProd v_{2}\big)&=&\big(\lambda_{0_{1}}^{2}+\langle\vec{\lambda}_{1}|\vec{\lambda}_{1}\rangle\big)\lambda_{0_{2}}+\langle\vec{\lambda}_{1}2\lambda_{0_{1}}|\vec{\lambda}_{2}\rangle\\
\mathfrak{I}\big(v_{1}^{2}\jProd v_{2}\big)&=&\vec{\lambda}_{2}\big(\lambda_{0_{1}}^{2}+\langle\vec{\lambda}_{1}|\vec{\lambda}_{1}\rangle\big)+\vec{\lambda}_{1}2\lambda_{0_{1}}\lambda_{0_{2}}\\[0.5cm]
\mathfrak{R}\big(
v_{1}\jProd(v_{1}^{2}\jProd v_{2})\big)&=&\lambda_{0_{1}}\Big(\big(\lambda_{0_{1}}^{2}+\langle\vec{\lambda}_{1}|\vec{\lambda}_{1}\rangle\big)\lambda_{0_{2}}+\langle\vec{\lambda}_{1}2\lambda_{0_{1}}|\vec{\lambda}_{2}\rangle\Big)+\bigg\langle\vec{\lambda}_{1}\bigg|\Big(\vec{\lambda}_{2}\big(\lambda_{0_{1}}^{2}+\langle\vec{\lambda}_{1}|\vec{\lambda}_{1}\rangle\big)+\vec{\lambda}_{1}2\lambda_{0_{1}}\lambda_{0_{2}}\Big)\bigg\rangle \nonumber\\
&=&\lambda_{0_{1}}^{3}\lambda_{0_{2}}+\langle\vec{\lambda}_{1}|\vec{\lambda}_{1}\rangle\lambda_{0_{1}}\lambda_{0_{2}}+\langle\vec{\lambda}_{1}|\vec{\lambda}_{2}\rangle2\lambda_{0_{1}}^{2}\nonumber\\
&+&\langle\vec{\lambda}_{1}|\vec{\lambda}_{2}\rangle\lambda_{0_{1}}^{2}+\langle\vec{\lambda}_{1}|\vec{\lambda}_{2}\rangle\langle\vec{\lambda}_{1}|\vec{\lambda}_{1}\rangle+\langle\vec{\lambda}_{1}|\vec{\lambda}_{1}\rangle2\lambda_{0_{1}}\lambda_{0_{2}}\\[0.3cm]
\mathfrak{I}\big(
v_{1}\jProd(v_{1}^{2}\jProd v_{2})\big)&=&\Big(\vec{\lambda}_{2}\big(\lambda_{0_{1}}^{2}+\langle\vec{\lambda}_{1}|\vec{\lambda}_{1}\rangle\big)+\vec{\lambda}_{1}2\lambda_{0_{1}}\lambda_{0_{2}}\Big)\lambda_{0_{1}}+\vec{\lambda}_{1}\Big(\big(\lambda_{0_{1}}^{2}+\langle\vec{\lambda}_{1}|\vec{\lambda}_{1}\rangle\big)\lambda_{0_{2}}+\langle\vec{\lambda}_{1}2\lambda_{0_{1}}|\vec{\lambda}_{2}\rangle\Big)\nonumber\\
&=&\vec{\lambda}_{2}\Big(\lambda_{0_{1}}^{3}+\langle\vec{\lambda}_{1}|\vec{\lambda}_{1}\rangle\lambda_{0_{1}}\Big)+\vec{\lambda}_{1}\Big(2\lambda_{0_{1}}^{2}\lambda_{0_{2}}+\lambda_{0_{1}}^{2}\lambda_{0_{2}}+\langle\vec{\lambda}_{1}|\vec{\lambda}_{1}\rangle\lambda_{0_{2}}+\langle\vec{\lambda}_{1}|\vec{\lambda}_{2}\rangle2\lambda_{0_{1}}\Big)
\end{eqnarray}
The Jordan identity therefore holds. Symmetry of the usual inner product on $\mathbb{R}^{k}$ establishes that multiplication is commutative; moreover multiplication distributes with vector space operations and commutes with $\mathbb{R}$-multiplication:
\begin{eqnarray}
v_{1}\jProd(v_{2}+v_{3}\big)&=&\bigg(\lambda_{0_{1}}(\lambda_{0_{2}}+\lambda_{0_{3}})+\big\langle\vec{\lambda}_{1}|(\vec{\lambda}_{2}+\vec{\lambda}_{3})\big\rangle\bigg)\oplus\bigg((\vec{\lambda}_{2}+\vec{\lambda}_{3})\lambda_{0_{1}}+\vec{\lambda_{1}}(\lambda_{0_{2}}+\lambda_{0_{3}})\bigg)\nonumber\\
&=&(v_{1}\jProd v_{2})+(v_{1}\jProd v_{3})\label{dist}\\
v_{1}\alpha_{1}\jProd v_{2}\alpha_{2}&=&\Big(\lambda_{0_{1}}\lambda_{0_{2}}\alpha_{1}\alpha_{2}+\langle\vec{\lambda}_{1}\alpha_{1}|\vec{\lambda}_{2}\alpha_{2}\rangle\Big)\oplus\Big(\vec{\lambda}_{2}\alpha_{2}\lambda_{0_{1}}\alpha_{1}+\vec{\lambda}_{1}\alpha_{1}\lambda_{0_{2}}\alpha_{2}\Big)\nonumber\\
&=&(v_{1}\jProd v_{2})\alpha_{1}\alpha_{2}
\label{fieldcom}
\end{eqnarray}
where we have included Eq.~\eqref{dist} and Eq.~\eqref{fieldcom} for the sake of completeness. 
\end{proof}
\end{proposition}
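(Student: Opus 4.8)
The plan is to verify directly that $\mathcal{V}_{k} = \mathbb{R}\oplus\mathbb{R}^{k}$, equipped with the product of Eq.~\eqref{spinProd}, satisfies the defining conditions of a Jordan algebra (\cref{jAlgDef}). There are really two tasks: first, checking that $\mathcal{V}_{k}$ is a commutative, unital $\mathbb{R}$-algebra; and second, establishing the Jordan identity, Eq.~\eqref{jordanID}. The first task is routine and I would dispatch it at the outset. Commutativity of $\jProd$ is immediate from the symmetry of the Euclidean inner product $\langle\cdot|\cdot\rangle$ on $\mathbb{R}^{k}$, since both the scalar part $\lambda_{0_{1}}\lambda_{0_{2}}+\langle\vec{\lambda}_{1}|\vec{\lambda}_{2}\rangle$ and the vector part $\vec{\lambda}_{2}\lambda_{0_{1}}+\vec{\lambda}_{1}\lambda_{0_{2}}$ are manifestly symmetric under exchange of $v_{1}$ and $v_{2}$. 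The element $\mathbf{1}\oplus\vec{0}$ acts as a two-sided unit, while the distributivity axioms and the compatibility of $\jProd$ with $\mathbb{R}$-scalar multiplication follow directly from the $\mathbb{R}$-bilinearity of the inner product; these are short checks that I would record only for completeness.

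The substance of the argument is the Jordan identity $v_{1}^{2}\jProd(v_{1}\jProd v_{2}) = v_{1}\jProd(v_{1}^{2}\jProd v_{2})$. My approach is to exploit the $\mathbb{R}\oplus\mathbb{R}^{k}$ decomposition and verify the identity componentwise, comparing the scalar parts $\mathfrak{R}$ and the vector parts $\mathfrak{I}$ of the two sides separately. The only auxiliary quantity needed is the square, which from Eq.~\eqref{spinProd} is $v_{1}^{2} = \big(\lambda_{0_{1}}^{2}+\langle\vec{\lambda}_{1}|\vec{\lambda}_{1}\rangle\big)\oplus\big(\vec{\lambda}_{1}2\lambda_{0_{1}}\big)$. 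I would then substitute this into both sides and expand each remaining product once more via Eq.~\eqref{spinProd}, obtaining $\mathfrak{R}$ and $\mathfrak{I}$ of each side as polynomials in $\lambda_{0_{1}},\lambda_{0_{2}}$ and the inner products $\langle\vec{\lambda}_{1}|\vec{\lambda}_{1}\rangle$, $\langle\vec{\lambda}_{1}|\vec{\lambda}_{2}\rangle$, with the $\mathbb{R}^{k}$-valued contributions proportional to $\vec{\lambda}_{1}$ and $\vec{\lambda}_{2}$.

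Finally I would match the two sides term by term. In the scalar component, both $\mathfrak{R}$-expressions should collapse to the common value
\[
\lambda_{0_{1}}^{3}\lambda_{0_{2}} + 3\lambda_{0_{1}}^{2}\langle\vec{\lambda}_{1}|\vec{\lambda}_{2}\rangle + 3\langle\vec{\lambda}_{1}|\vec{\lambda}_{1}\rangle\lambda_{0_{1}}\lambda_{0_{2}} + \langle\vec{\lambda}_{1}|\vec{\lambda}_{1}\rangle\langle\vec{\lambda}_{1}|\vec{\lambda}_{2}\rangle,
\]
and in the vector component, both $\mathfrak{I}$-expressions should reduce to
\[
\vec{\lambda}_{1}\big(3\lambda_{0_{1}}^{2}\lambda_{0_{2}} + 2\lambda_{0_{1}}\langle\vec{\lambda}_{1}|\vec{\lambda}_{2}\rangle + \lambda_{0_{2}}\langle\vec{\lambda}_{1}|\vec{\lambda}_{1}\rangle\big) + \vec{\lambda}_{2}\big(\lambda_{0_{1}}^{3} + \lambda_{0_{1}}\langle\vec{\lambda}_{1}|\vec{\lambda}_{1}\rangle\big).
\]
The only facts invoked in this collapse are the commutativity of $\mathbb{R}$-multiplication, the symmetry of $\langle\cdot|\cdot\rangle$, and its $\mathbb{R}$-bilinearity; no positivity or additional structure of $\mathbb{R}^{k}$ is required. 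Consequently the main (indeed only) obstacle is purely bookkeeping: correctly tracking the many monomials generated by the nested products and confirming that their coefficients agree on the two sides. Since the computation is finite and mechanical, the identity follows, completing the verification that $\mathcal{V}_{k}$ is a Jordan algebra.
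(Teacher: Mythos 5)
Your proposal is correct and follows essentially the same route as the paper: compute $v_{1}^{2}$, expand both sides of the Jordan identity via Eq.~\eqref{spinProd}, and compare the $\mathfrak{R}$ and $\mathfrak{I}$ components separately, with the remaining axioms (commutativity, unitality, bilinearity) dispatched from the symmetry and bilinearity of the inner product. The common values you state for the scalar and vector parts agree with what the paper's expansions reduce to, so nothing is missing.
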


\begin{proposition}\label{spinUnique} \textit{Let $\mathcal{V}_{k}$ and $\mathcal{W}_{k}$ be spin factors of finite cardinality $k$. There exists a jordan isomophism from $\mathcal{V}_{k}\longrightarrow\mathcal{W}_{k}$.}
\begin{proof} By definition, there exist Jordan algebras $\mathcal{A}$ and $\mathcal{B}$ such that $\mathcal{P}_{k}=\{s_{1},\dots,s_{k}\}\subset\mathcal{A}$ and $\mathcal{R}_{k}=\{t_{1},\dots,t_{k}\}\subset\mathcal{B}$ generate $\mathcal{V}_{k}$ and $\mathcal{W}_{k}$, respectively. In Eq.~\eqref{spinBasis} we stated that 
\begin{equation}
\forall v\in\mathcal{V}_{k}\;\exists\lambda_{0},\dots,\lambda_{k}\in\mathbb{R}:v=\mathbf{1}\lambda_{0}+\sum_{r=1}^{k}s_{k}\lambda_{k}\text{.}
\label{basis}
\end{equation}
It is actually easy to see that \eqref{basis} holds. Let us do so now. The anticommuting symmetries are linearly independent --- for the proof, let $\mathbb{Z}^{*}_{k\setminus i}=\{1,\dots,k\}\setminus\{i\}$ and suppose $\exists\vec{r}\in(\mathbb{Z}^{*}_{k\setminus i})^{\times k-1}$ with $\vec{r}\neq 0$ such that 
\begin{equation}
s_{i}=\sum_{j\in\mathbb{Z}^{*}_{k\setminus i}}s_{j}r_{j}\implies\forall l\in\mathbb{Z}^{*}_{k\setminus i}: s_{l}\jProd s_{i}=\mathbf{1}r_{l}\implies\forall l\in\mathbb{Z}^{*}_{k\setminus i}:r_{l}=0 \;\mbox{\Lightning}
\end{equation}
--- so the linear hull of $\mathcal{P}_{k}$ admits a basis $\{s_{1},\dots,s_{k}\}$. The jordan hull of the linear hull admits a basis $\{\mathbf{1},s_{1},\dots,s_{k}\}$, which is evident from the anticommutativity of the symmetries $s_{j}$ and that fact that
\begin{equation}
\mathbf{1}=\sum_{j=1}^{k}s_{j}r_{j}\implies \forall l\in\{1,\dots,k\}:\mathbf{1}\jProd s_{l}=s_{l}=\mathbf{1}r_{l}\implies\mathbf{1}=\mathbf{1}r_{l}^{2}\implies r_{l}=\pm 1\;\mbox{\Lightning}
\end{equation}
the contradiction coming by the construction of $s_{l}$, which are defined to be distinct from $\pm\mathbf{1}$. The linear hull of the jordan hull of the linear hull is again a Jordan algebra with basis $\{\mathbf{1},s_{1},\dots,s_{k}\}$, etcetera. Therefore one has that
\begin{eqnarray}
\mathcal{V}_{k}&=&\text{lin}_{\mathbb{R}}\left\{\mathbf{1}_{\mathcal{A}},s_{1},s_{2},\dots,s_{k}\right\}\text{.}\\
\mathcal{W}_{k}&=&\text{lin}_{\mathbb{R}}\left\{\mathbf{1}_{\mathcal{B}},t_{1},t_{2},\dots,t_{k}\right\}\text{.}
\end{eqnarray}
Define $f:\mathcal{V}_{k}\longrightarrow\mathcal{W}_{k}::s_{j}\longmapsto t_{j}$ and $f(\mathbf{1}_{\mathcal{A}})=\mathbf{1}_{\mathcal{B}}$. Then $\forall v_{1},v_{2}\in\mathcal{V}_{k}$ with $v_{1}=\lambda_{0}\oplus\vec{\lambda}$ and $v_{2}=\mu_{0}\oplus\vec{\mu}$ one computes
\begin{eqnarray}
f(v_{1}\jProd v_{2})&=&f\Big(\big(\langle\vec{\lambda}|\vec{\mu}\rangle+\lambda_{0}\mu_{0}\big)\oplus\big(\vec{\lambda}\mu_{0}+\vec{\mu}\lambda_{0}\big)\Big)\nonumber\\
&=&\mathbf{1}_{\mathcal{B}}\big(\langle\vec{\lambda}|\vec{\mu}\rangle+\lambda_{0}\mu_{0}\big)+\sum_{j=1}^{k}t_{j}(\lambda_{j}\mu_{0}+\mu_{j}\lambda_{0})\nonumber\\
&=&f(v_{1})\jProd f(v_{2})
\end{eqnarray} 
Thus $f$ is a jordan homomorphism. Further, by construction, $f$ is a bijection. Therefore \textit{the} spin factor of cardinality $k$ is unique (up to a jordan isomorphism).
\end{proof}
\end{proposition}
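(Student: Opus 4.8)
The plan is to produce an explicit Jordan isomorphism carrying one spin system onto the other, after first confirming that each spin factor admits the canonical basis $\{\mathbf{1}, s_1, \dots, s_k\}$ anticipated in Eq.~\eqref{spinBasis}. By \cref{ssDef} there are Jordan algebras $\mathcal{A}$ and $\mathcal{B}$ containing spin systems $\mathcal{P}_k = \{s_1,\dots,s_k\}$ and $\mathcal{R}_k = \{t_1,\dots,t_k\}$ generating $\mathcal{V}_k$ and $\mathcal{W}_k$ respectively. The first step is to establish linear independence of the anticommuting symmetries: if $s_i = \sum_{j \neq i} s_j r_j$, then Jordan-multiplying by $s_l$ with $l \neq i$ and invoking $s_l \jProd s_j = \delta_{l,j}\mathbf{1}$ forces each $r_l = 0$, a contradiction.

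Second, I would upgrade this to the claim that $\{\mathbf{1}, s_1, \dots, s_k\}$ is a genuine basis for $\mathcal{V}_k$. Adjoining $\mathbf{1}$ preserves independence: were $\mathbf{1} = \sum_j s_j r_j$, Jordan-multiplication by $s_l$ would give $s_l = \mathbf{1} r_l$, whence $\mathbf{1} = \mathbf{1} r_l^2$ and $r_l = \pm 1$, contradicting $s_l \neq \pm\mathbf{1}$. The essential point is then that the real linear span of this set is already closed under the Jordan product, so that iterating the Jordan-hull and linear-span constructions stabilizes; this follows from the spin relations, since $s_a \jProd s_b = \delta_{a,b}\mathbf{1}$ keeps all pairwise products inside the span, and products involving $\mathbf{1}$ are trivial. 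Hence $\mathcal{V}_k = \mathrm{lin}_{\mathbb{R}}\{\mathbf{1}_{\mathcal{A}}, s_1, \dots, s_k\}$, and likewise $\mathcal{W}_k = \mathrm{lin}_{\mathbb{R}}\{\mathbf{1}_{\mathcal{B}}, t_1, \dots, t_k\}$.

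Third, I would define $f : \mathcal{V}_k \longrightarrow \mathcal{W}_k$ on the basis by $f(\mathbf{1}_{\mathcal{A}}) = \mathbf{1}_{\mathcal{B}}$ and $f(s_j) = t_j$, extended $\mathbb{R}$-linearly; by construction $f$ carries a basis bijectively to a basis, so it is a vector-space isomorphism. It remains to confirm that $f$ respects Jordan multiplication. Writing $v_1 = \lambda_0 \oplus \vec{\lambda}$ and $v_2 = \mu_0 \oplus \vec{\mu}$ in the coordinates of Eq.~\eqref{spinBasis} and applying the product formula Eq.~\eqref{spinProd}, a direct computation yields $f(v_1 \jProd v_2) = f(v_1) \jProd f(v_2)$, because the product is expressed entirely through the Euclidean inner product on $\mathbb{R}^k$ together with scalar data, both of which are preserved under the relabeling $s_j \mapsto t_j$. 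Since $f$ additionally sends unit to unit, it is a Jordan isomorphism.

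I expect the main obstacle to be the closure claim in the second step, namely verifying that the linear span of $\{\mathbf{1}, s_1, \dots, s_k\}$ is itself a Jordan subalgebra so that no higher-order products escape it. This is where the anticommutation relations do their real work, and it is the step that genuinely pins down the dimension as exactly $k+1$; everything after it is routine bookkeeping with the explicit product formula.
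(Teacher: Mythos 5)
Your proposal is correct and follows essentially the same route as the paper's own proof: the same linear-independence arguments for $\{s_1,\dots,s_k\}$ and for adjoining $\mathbf{1}$, the same observation that the spin relations close the span under the Jordan product so that $\dim_{\mathbb{R}}\mathcal{V}_k = k+1$, and the same basis-to-basis map verified via the explicit product formula. No substantive differences to report.
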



\begin{proposition}\label{v23prop} \textit{$\mathcal{V}_{2}$ and $\mathcal{V}_{3}$ are universally reversible.}
\begin{proof} We will prove the result for $\mathcal{V}_{3}$, the case of $\mathcal{V}_{2}$ following in a similar manner. Let $\mathcal{H}$ be a complex Hilbert space and let $\pi:\mathcal{V}_{3}\longrightarrow\mathfrak{B}(\mathcal{H})_{sa}::v\longmapsto u$ be a jordan monomorphism. Recall that we demand our homomorphisms to be unital: $\pi(\mathbf{1}_{\mathcal{V}_{3}})=\mathbf{1}_{\mathfrak{B}(\mathcal{H})_{sa}}$. For notational reasons, let $s_{0}:=\mathbf{1}_{\mathcal{V}_{3}}$ and $t_{0}:=\mathbf{1}_{\mathfrak{B}(\mathcal{H})_{sa}}$. Define $t_{a}=\pi(s_{a})$ for all $a\in\{1,2,3\}$. One has then $(t_{a}t_{b}+t_{b}t_{a})/2=\pi(s_{a}\jProd s_{b})=\pi(\delta_{a,b}\mathbf{1})=\delta_{a,b}\mathds{1}$ --- so $t_{a}$ anticommute. Furthermore $t_{a}^{2}=\pi(s_{a}^{2})=\pi(\mathbf{1})=\mathds{1}$. Now, let $v_{1},v_{2},\dots,v_{n}\in\mathcal{V}_{3}$ with $v_{j}=\lambda_{0_{j}}+\vec{\lambda_{j}}$. Let $\mathbb{Z}_{3}=\{0,1,2,3\}$. Then, by linearity,
\begin{eqnarray}
u_{1}\cdots u_{n}+u_{n}\cdots u_{1}
&=&\prod_{j=1}^{n}\sum_{a=0}^{3}t_{a}\lambda_{a_{j}}+\prod_{j=1}^{n}\sum_{a=0}^{3}t_{a}\lambda_{a_{n+1-j}}\nonumber\\
&=&\sum_{\vec{p}\in(\mathbb{Z}^{3})^{\times n}}\big(t_{p_{1}}t_{p_{2}}\cdots t_{p_{n}}+t_{p_{n}}t_{p_{n-1}}\cdots t_{p_{1}}\big)\lambda_{p_{1_{1}}}\lambda_{p_{2_{2}}}\cdots\lambda_{p_{n_{n}}}
\end{eqnarray}
Thus, by linearity of $\pi$, it suffices to prove that $\forall\vec{p}\in(\mathbb{Z}^{3})^{\times n}$ $\exists v\in\mathcal{V}_{3}$:
\begin{equation}
\pi(v)=t_{p_{1}}t_{p_{2}}\cdots t_{p_{n}}+t_{p_{n}}t_{p_{n-1}}\cdots t_{p_{1}}\text{.}
\end{equation}
Incidentally, notice that for $n=2$ one has $(t_{p_{1}}t_{p_{2}}+t_{p_{2}}t_{p_{1}})/2=\pi(s_{p_{1}}\jProd s_{p_{2}})$. Next, consider the case $n=3$. Should it arise that at least one $p_{j}=0$, then we reduce to the case of $n=2$; hence the desired $v$ exists. If it is not the case that at least one $p_{j}=0$, then by anticommutativity one has that $t_{p_{1}}t_{p_{2}}t_{p_{3}}+t_{p_{3}}t_{p_{2}}t_{p_{2}}=0$. Finally, consider the case $n\geq 4$. In this case, one has that \textit{at most three} $p_{j}$ are distinct elements of $\{1,2,3\}$. Furthermore, by anticommutivity, for those $j,j'\in\{1,\dots,n\}$ such that $p_{j}=p_{j'}$, one anticommutes the correspoding $t_{j}$ through an equal number of steps to meet $t_{j'}$ in both $t_{p_{1}}\cdots t_{p_{n}}$ and $t_{p_{n}}\cdots t_{p_{1}}$. Hence $\mathcal{V}_{3}$ is universally reversible (and so is $\mathcal{V}_{2})$. 
\end{proof}
\end{proposition}
\begin{proposition}\label{v4prop} \textit{$\mathcal{V}_{4}$ is not reversible.}\
\begin{proof} We follow Hanche-Olsen \cite{HancheOlsen1983}. Let $\pi:\mathcal{V}_{4}\longrightarrow\mathfrak{B}(\mathcal{H})_{sa}:v\longmapsto u$ be a jordan monomorphism. The question is not, of course, whether such monomorphims exist (they do, and all such monomorphisms take $s_{r}$ to anticommuting $t_{r}$); rather, one is about to prove that $\pi(\mathcal{V}_{4})$ is never such that $\pi(s_{1})\pi(s_{2})\pi(s_{3})\pi(s_{4})+\pi(s_{4})\pi(s_{3})\pi(s_{2})\pi(s_{1})=\pi(v)$ for some $v\in\mathcal{V}_{4}$. Indeed, suppose such a $v\in\mathcal{V}_{4}$ exists. By anticommutativity: $(t_{1}t_{2}t_{3}t_{4}+t_{4}t_{3}t_{2}t_{1})/2=t_{1}t_{2}t_{3}t_{4}$. Thus $\pi(v)^{2}=t_{1}t_{2}t_{3}t_{4}t_{1}t_{2}t_{3}t_{4}=-t_{1}t_{2}t_{3}t_{1}t_{2}t_{3}=-t_{1}t_{2}t_{1}t_{2}=\mathds{1}\implies v\ne 0$. However, one also has $\forall j\in\{1,2,3,4\}$ that $\pi(v\jProd s_{j})=t_{1}t_{2}t_{3}t_{4}t_{j}+t_{j}t_{4}t_{3}t_{2}t_{1}=0$. Therefore $v\ne\mathbf{1}\alpha$ for $\alpha\in\mathbb{R}$. Thus $\text{dim}_{\mathbb{R}}\mathcal{V}_{4}>5 \;\mbox{\Lightning}$.
\end{proof}
\end{proposition}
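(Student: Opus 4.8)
The plan is to prove non-reversibility directly from \cref{revDef}: I must exhibit, for \emph{every} representation $\pi:\mathcal{V}_{4}\longrightarrow\mathcal{M}_{n}(\mathbb{C})_{\text{sa}}$, a product of images of algebra elements whose symmetric combination (the ``reversal'') escapes $\pi(\mathcal{V}_{4})$. First I would recall, as in the proof of \cref{spinUnique}, that any unital Jordan monomorphism $\pi$ must carry a generating spin system $\{s_{1},s_{2},s_{3},s_{4}\}$ to self-adjoint operators $t_{r}=\pi(s_{r})$ satisfying $t_{r}t_{s}+t_{s}t_{r}=2\delta_{r,s}\mathds{1}$ and $t_{r}^{2}=\mathds{1}$, and that $\pi(\mathcal{V}_{4})=\mathrm{span}_{\mathbb{R}}\{\mathds{1},t_{1},t_{2},t_{3},t_{4}\}$, a five-dimensional space. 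This reduces everything to manipulating the Clifford-type relations among the $t_{r}$, relations that hold in every representation, which is exactly what makes the argument representation-independent.

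The candidate obstruction is the fully ordered product $\omega\equiv t_{1}t_{2}t_{3}t_{4}$. The key preliminary observation is that reversing a product of $m$ distinct anticommuting symmetries introduces the sign $(-1)^{m(m-1)/2}$; for $m=4$ this sign is $+1$, so $t_{4}t_{3}t_{2}t_{1}=t_{1}t_{2}t_{3}t_{4}$ and hence the reversal $t_{1}t_{2}t_{3}t_{4}+t_{4}t_{3}t_{2}t_{1}=2\omega$. (By contrast, for $m=3$ the sign is $-1$, so the triple reversal collapses to $0$, consistent with reversibility of $\mathcal{V}_{2}$ and $\mathcal{V}_{3}$ in \cref{v23prop}.) I would then record three properties of $\omega$, each following purely from the anticommutation relations: $\omega$ is self-adjoint (so $2\omega$ is a legitimate element of $\mathcal{M}_{n}(\mathbb{C})_{\text{sa}}$ and the membership question makes sense); $\omega^{2}=\mathds{1}$, which I verify by sliding factors together using $t_{r}^{2}=\mathds{1}$ and collecting anticommutation signs, so in particular $\omega\neq 0$; and $\omega$ anticommutes with each generator, $\omega t_{j}+t_{j}\omega=0$, equivalently $\omega\jProd t_{j}=0$ for $j\in\{1,2,3,4\}$.

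With these in hand the contradiction is immediate. Suppose $\omega\in\pi(\mathcal{V}_{4})$ and write $\omega=\mathds{1}\alpha_{0}+\sum_{r=1}^{4}t_{r}\alpha_{r}$ with $\alpha_{i}\in\mathbb{R}$. Taking the Jordan product with a fixed $t_{j}$ and using $t_{r}\jProd t_{j}=\delta_{r,j}\mathds{1}$ gives $\omega\jProd t_{j}=\alpha_{0}t_{j}+\alpha_{j}\mathds{1}$; but this equals $0$ by the anticommutation property, and since $\{\mathds{1},t_{j}\}$ is linearly independent we get $\alpha_{0}=\alpha_{j}=0$ for every $j$, whence $\omega=0$. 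This contradicts $\omega^{2}=\mathds{1}$. Therefore $2\omega\notin\pi(\mathcal{V}_{4})$ for every representation $\pi$, so no representation of $\mathcal{V}_{4}$ is reversible, proving the claim.

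The main obstacle is not conceptual but the sign bookkeeping: I need the parity count $(-1)^{m(m-1)/2}$ to be correct so that the reversal genuinely reproduces $2\omega$ rather than cancelling to zero, and I need the computations of $\omega^{2}$ and of each anticommutator $\omega t_{j}+t_{j}\omega$ to be carried out carefully enough that they depend only on the universal relations $t_{r}t_{s}=-t_{s}t_{r}$ (for $r\neq s$) and $t_{r}^{2}=\mathds{1}$. Getting those parities right is what guarantees both that $\omega\neq 0$ and that $\omega$ lies in the Jordan-orthogonal complement of the generators, the two facts that together force the contradiction.
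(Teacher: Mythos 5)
Your proposal is correct and follows essentially the same route as the paper: both arguments hinge on the element $\omega=t_{1}t_{2}t_{3}t_{4}$, which equals the symmetrized reversal (sign $(-1)^{4\cdot3/2}=+1$), squares to $\mathds{1}$ (hence is nonzero), and is Jordan-orthogonal to every generator, forcing all coefficients in an expansion over $\{\mathds{1},t_{1},\dots,t_{4}\}$ to vanish and yielding the same contradiction. Your version merely makes explicit the final linear-algebra step that the paper compresses into the remark $\dim_{\mathbb{R}}\mathcal{V}_{4}>5$.
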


\begin{proposition}\label{v5prop} \textit{$\mathcal{V}_{5}$ is reversible, but \textit{not} universally so.}
\begin{proof} Define $\pi:\mathcal{V}_{5}\longrightarrow\mathcal{M}_{4}\big(\mathbb{C}\big)_{sa}::s_{r}\longmapsto t_{r}$ in terms of the complex Pauli matrices as follows:
\begin{eqnarray}
t_{0}=\sigma_{o}\otimes\sigma_{o}\hspace{0.4cm}t_{1}=\sigma_{z}\otimes\sigma_{o}\hspace{0.4cm}t_{2}=\sigma_{x}\otimes\sigma_{o}\hspace{0.4cm}
t_{3}=\sigma_{y}\otimes\sigma_{x}\hspace{0.4cm}t_{4}=\sigma_{y}\otimes\sigma_{z}\hspace{0.4cm}t_{5}=\sigma_{y}\otimes\sigma_{y}
\end{eqnarray}
By a similiar argument to one given in our proof of \cref{v23prop}, it suffices to observe that $(t_{1}t_{2}t_{3}t_{4}t_{5}+t_{5}t_{4}t_{3}t_{2}t_{1})/2=t_{1}t_{2}t_{3}t_{4}t_{5}=t_{0}=\pi(\mathbf{1})$ and $(t_{1}t_{2}t_{3}t_{4}+t_{4}t_{3}t_{2}t_{1})/2=t_{5}=\pi(s_{5})$. Hence $\mathcal{V}_{5}$ is reversible. Next, \textit{a la} Hanche-Olsen, define the jordan monomorphism $\psi:\mathcal{V}_{4}\longrightarrow\mathcal{M}_{4}\big(\mathbb{C}\big)_{sa}\oplus\mathcal{M}_{4}\big(\mathbb{C}\big)_{sa}$ via $\forall j\in{1,2,3,4}:\psi(s_{j})=t_{j}\oplus t_{j}$ and $\psi(s_{5})=t_{j}\oplus -t_{j}$. One then has $\psi(s_{1})\psi(s_{2})\psi(s_{3})\psi(s_{4})=t_{5}\oplus t_{5}$. Suppose there exists $v\in\mathcal{V}_{5}$ such that $\psi(v)=t_{5}\oplus t_{5}$. Thus $\psi(v)^{2}=\mathds{1}\implies v\neq0$. However, for all $j\in\{1,2,3,4\}$ one has that and $\pi(v\jProd s_{j})=0$; thus $v=\mathbf{1}\alpha+s_{5}\beta$ for some $\alpha,\beta\in\mathbb{R}$ simply from the vector space structure of $\mathcal{V}_{5}$. In fact, from $\psi(v)^{2}=\mathds{1}$ we are left with two mutually exclusive cases: $v=\mathbf{1}\alpha$ or $v=s_{5}\alpha$, both of which are impossible in light of the foregoing observations.
\end{proof}
\end{proposition}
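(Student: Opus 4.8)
The plan is to prove the two halves of \cref{v5prop} separately by exhibiting explicit representations, exactly as the definitions of reversibility and universal reversibility (\cref{revDef}) demand: for reversibility it suffices to produce \emph{one} reversible representation, while to defeat universal reversibility it suffices to produce \emph{one} representation (\cref{repDef}) that fails the reversibility condition \eqref{revCon}. Throughout I write $s_0=\mathbf{1}$ and let $\{s_1,\dots,s_5\}$ be the generating spin system of $\mathcal{V}_5$ (\cref{ssDef}), so that $\{s_0,s_1,\dots,s_5\}$ is a basis with $\mathrm{span}_{\mathbb{R}}\{s_0,\dots,s_5\}=\mathcal{V}_5$.

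For the reversibility half, I would define $\pi:\mathcal{V}_5\longrightarrow\mathcal{M}_4(\mathbb{C})_{\text{sa}}$ on generators by $\pi(s_r)=t_r$ with
\[ t_1=\sigma_z\otimes\sigma_o,\quad t_2=\sigma_x\otimes\sigma_o,\quad t_3=\sigma_y\otimes\sigma_x,\quad t_4=\sigma_y\otimes\sigma_z,\quad t_5=\sigma_y\otimes\sigma_y, \]
and $\pi(\mathbf{1})=t_0=\sigma_o\otimes\sigma_o$. First I would check that $t_1,\dots,t_5$ form a spin system of cardinality $5$ (each squares to $t_0$ and distinct ones anticommute), so that $\pi$ is a well-defined Jordan monomorphism onto $\mathrm{span}_{\mathbb{R}}\{t_0,\dots,t_5\}$. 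The crucial computation is the single Pauli identity $t_1t_2t_3t_4t_5=t_0=\mathds{1}$, whence $t_1t_2t_3t_4=t_5$. To verify reversibility I would use multilinearity of the map $(v_1,\dots,v_m)\mapsto \pi(v_1)\cdots\pi(v_m)+\pi(v_m)\cdots\pi(v_1)$ to reduce \eqref{revCon} to checking it on basis monomials $\pi(s_{p_1})\cdots\pi(s_{p_m})$, exactly as in the proof of \cref{v23prop}. Since $t_0$ is central, any factor with $p_j=0$ drops out, so I may assume all $p_j\in\{1,\dots,5\}$; anticommutation then collapses each monomial to $\pm t_{q_1}\cdots t_{q_k}$, a product of $k$ distinct symmetries, and the reversed monomial collapses to the same distinct product up to the sign $(-1)^{D}$, where $D=\sum_{i<j}m_im_j$ (with $m_i$ the multiplicity of index $i$). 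The clean bookkeeping step is the congruence $D\equiv\binom{k}{2}\pmod 2$, so the symmetrized reverse product vanishes unless $k\in\{0,1,4,5\}$; and for precisely those values the surviving distinct product lies in $\mathrm{span}_{\mathbb{R}}\{t_0,\dots,t_5\}=\pi(\mathcal{V}_5)$ --- lengths $0,1$ trivially, length $4$ because $t_{q_1}t_{q_2}t_{q_3}t_{q_4}=\pm t_{q_5}$ (the missing index) by $t_1\cdots t_5=\mathds{1}$, and length $5$ because $t_1\cdots t_5=t_0$. This establishes reversibility of $\pi$.

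For the failure of universal reversibility I would exhibit a \emph{second} representation, into $\mathcal{M}_4(\mathbb{C})_{\text{sa}}\oplus\mathcal{M}_4(\mathbb{C})_{\text{sa}}$, given by $\psi(s_j)=t_j\oplus t_j$ for $j\in\{1,2,3,4\}$ and $\psi(s_5)=t_5\oplus(-t_5)$, with $\psi(\mathbf{1})=t_0\oplus t_0$. A short check shows $\psi(s_1),\dots,\psi(s_5)$ again form a spin system (the sign flip on the second summand preserves both squaring to the unit and mutual anticommutation), and that $\psi$ is injective (adding and subtracting the two summands of a vanishing combination isolates the coefficients), so $\psi$ is a genuine representation. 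Computing the symmetrized reverse product of the first four generators gives $\psi(s_1)\psi(s_2)\psi(s_3)\psi(s_4)+\psi(s_4)\psi(s_3)\psi(s_2)\psi(s_1)=2\,(t_5\oplus t_5)$, using $t_1t_2t_3t_4=t_4t_3t_2t_1=t_5$. I would then show $t_5\oplus t_5\notin\psi(\mathcal{V}_5)$: if $\psi(v)=t_5\oplus t_5$, then $\psi(v)^2=\mathds{1}$ forces $v\ne 0$, while $\psi(v\jProd s_j)=(t_5\jProd t_j)\oplus(t_5\jProd t_j)=0$ for $j\in\{1,2,3,4\}$ (by anticommutation) forces $v\in\mathrm{span}_{\mathbb{R}}\{\mathbf{1},s_5\}$; the two cases compatible with $\psi(v)^2=\mathds{1}$, namely $v=\pm\mathbf{1}$ and $v=\pm s_5$, map under $\psi$ to $\pm(t_0\oplus t_0)$ and $\pm(t_5\oplus(-t_5))$ respectively, none of which equals $t_5\oplus t_5$ --- a contradiction. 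Hence $\psi$ violates \eqref{revCon}, so $\mathcal{V}_5$ is not universally reversible.

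The main obstacle is the sign/parity bookkeeping in the reversibility direction: making rigorous the claim that the forward and reversed monomials reduce to the same distinct product with relative sign $(-1)^D$ together with the congruence $D\equiv\binom{k}{2}\pmod 2$. Everything else is routine Pauli-matrix arithmetic, but it is exactly the identity $t_1t_2t_3t_4t_5=\mathds{1}$ --- available only because the spin system has odd cardinality $5$ with scalar top Clifford element --- that lets the surviving cases $k=4$ and $k=5$ fall back into $\pi(\mathcal{V}_5)$. This is precisely what distinguishes $\mathcal{V}_5$ from the non-reversible $\mathcal{V}_4$ (cf.\ \cref{v4prop}) and is the heart of the argument.
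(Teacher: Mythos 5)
Your proposal is correct and follows essentially the same route as the paper: the same Pauli representation $\pi$ with the pivotal identities $t_1t_2t_3t_4t_5=t_0$ and $t_1t_2t_3t_4=t_5$ for reversibility, and the same Hanche-Olsen representation $\psi$ with the sign flip on $s_5$, the product $\psi(s_1)\psi(s_2)\psi(s_3)\psi(s_4)=t_5\oplus t_5$, and the contradiction via $v\jProd s_j=0$ forcing $v\in\mathrm{span}_{\mathbb{R}}\{\mathbf{1},s_5\}$. Your only additions are to spell out the parity bookkeeping (the sign $(-1)^{D}$ with $D\equiv\binom{k}{2}\pmod 2$) that the paper delegates to ``a similar argument'' from its proof for $\mathcal{V}_2,\mathcal{V}_3$, and to verify explicitly that $\psi$ is an injective spin-system representation --- both of which check out.
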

\begin{proposition}\label{v6prop} \textit{Let $k\geq 6$. Then $\mathcal{V}_{k}$ is not reversible.}
\begin{proof} We follow Hanche-Olsen \cite{HancheOlsen1983}. Let $\pi:\mathcal{V}_{k}\longrightarrow\mathfrak{B}(\mathcal{H})_{sa}::s_{r}\longmapsto t_{r}$. Define $t_{5}=t_{1}t_{2}t_{3}t_{4}=(t_{1}t_{2}t_{3}t_{4}+t_{4}t_{3}t_{2}t_{1})/2$. Observe that $t_{5}^{2}=\mathds{1}$ and $\forall j\in\{1,2,3,4\}$ $t_{5}\jProd t_{j}=0$. \textit{Suppose} $t_{5}\in\pi(\mathcal{V}_{5})$. Then $\pi^{-1}(t_{5})=0\oplus\vec{\lambda}_{5}$ with unit vector $\vec{\lambda}_{5}$ orthogonal to $\vec{\lambda}_{j}$ for $\forall j\in\{1,2,3,4\}$. Consider $s_{6}=0\oplus\vec{\lambda}_{6}\in\mathcal{V}_{k}$ with $\langle\vec{\lambda}_{6}|\vec{\lambda}_{5}\rangle=0$. Dimensional grounds permit such consideration. Then $s_{6}\jProd s_{5}=0$, which implies $t_{5}t_{6}=-t_{6}t_{5}$. However, $t_{5}t_{6}=t_{1}t_{2}t_{3}t_{4}t_{6}=t_{6}t_{1}t_{2}t_{3}t_{4}=t_{6}t_{5}$ Thus $t_{5}t_{6}=0$, which is impossible in light of $t_{5}t_{6}t_{5}t_{6}=t_{5}t_{6}t_{1}t_{2}t_{3}t_{4}t_{6}=t_{5}t_{1}t_{2}t_{3}t_{4}t_{6}^{2}=t_{5}^{2}=\mathds{1}$.
\end{proof}
\end{proposition}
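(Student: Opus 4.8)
The plan is to argue by contradiction, reducing the reversibility of $\mathcal{V}_k$ to an impossible relation among anticommuting symmetries inside a complex matrix algebra. Suppose $\mathcal{V}_k$ with $k\geq 6$ were reversible, witnessed by a representation $\pi:\mathcal{V}_k\longrightarrow\mathcal{M}_n(\mathbb{C})_{\text{sa}}$ as in \cref{revDef}. First I would record the elementary structural facts: writing $t_r=\pi(s_r)$ for the generators $s_1,\dots,s_k$ of the underlying spin system, the identity $\pi(s_a\jProd s_b)=\delta_{a,b}\mathds{1}$ forces the $t_r$ to be self-adjoint symmetries ($t_r^2=\mathds{1}$) that pairwise anticommute. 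Throughout, the only nontrivial bookkeeping is tracking signs when reversing a product of pairwise anticommuting factors; since reversing $m$ such factors costs $\binom{m}{2}$ transpositions, even-length products are invariant under reversal, a fact I would use repeatedly.

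Next I would manufacture the obstructing element. Set $t_5:=t_1t_2t_3t_4$; by the sign rule it equals $\tfrac{1}{2}(t_1t_2t_3t_4+t_4t_3t_2t_1)$, so reversibility applied to $a_1,\dots,a_4=s_1,\dots,s_4$ places $2t_5$, hence $t_5$, inside $\pi(\mathcal{V}_k)$. A short computation then gives $t_5^2=\mathds{1}$ and $t_5t_j=-t_jt_5$ for $j=1,2,3,4$. Let $v=\pi^{-1}(t_5)\in\mathcal{V}_k$. Because $\pi$ is injective, the operator relations $\pi(v)^2=\mathds{1}$ and $\pi(v\jProd s_j)=\tfrac{1}{2}(t_5t_j+t_jt_5)=0$ translate into the algebraic relations $v^2=\mathbf{1}$ and $v\jProd s_j=0$ for $j=1,2,3,4$. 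Feeding these into the explicit spin-factor product (writing $v=\lambda_0\oplus\vec\lambda$ and the generators as $s_j=0\oplus\vec{s}_j$, with $\{\vec{s}_j\}$ an orthonormal basis of $\mathbb{R}^k$) forces $\lambda_0=0$ and $\vec\lambda$ to be a unit vector orthogonal to $\vec{s}_1,\dots,\vec{s}_4$; call it $\vec\lambda_5$.

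The decisive step, and where the hypothesis $k\geq 6$ is actually consumed, is to produce a fresh symmetry that anticommutes with $t_1,\dots,t_4$ and also anticommutes with $t_5$. I would choose $s_6=0\oplus\vec\lambda_6\in\mathcal{V}_k$ with $\vec\lambda_6$ a unit vector orthogonal to the five orthonormal vectors $\vec{s}_1,\dots,\vec{s}_4,\vec\lambda_5$; such a vector exists precisely because their orthogonal complement in $\mathbb{R}^k$ has dimension $k-5\geq 1$. Then $s_6$ is a symmetry, the relation $s_6\jProd s_j=0$ gives $t_6:=\pi(s_6)$ anticommuting with $t_1,\dots,t_4$, and $s_6\jProd v=0$ gives $t_6$ anticommuting with $t_5$. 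But $t_5=t_1t_2t_3t_4$ is an even product, so $t_6$ in fact commutes with $t_5$; together these force $t_5t_6=0$. This contradicts $(t_5t_6)^2=t_5t_6t_1t_2t_3t_4t_6=t_5^2=\mathds{1}$, completing the argument.

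The only genuine obstacle is the dimension count in the last paragraph: one must verify that $\vec\lambda_6$ can be chosen orthogonal to all five of $\vec{s}_1,\dots,\vec{s}_4,\vec\lambda_5$ simultaneously, which requires $k\geq 6$ and is exactly why $\mathcal{V}_5$ escapes the argument (there the single remaining direction is pinned to $\pm\vec\lambda_5$, leaving no independent $s_6$, consistent with the reversibility of $\mathcal{V}_5$ in \cref{v5prop}). Everything else is routine once the anticommutation sign conventions are fixed, and each intermediate identity can be cross-checked against the explicit $m=4$ and $m=5$ computations already carried out for \cref{v4prop} and \cref{v5prop}.
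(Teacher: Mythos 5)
Your argument is correct and is essentially the paper's own proof (following Hanche-Olsen): construct $t_{5}=t_{1}t_{2}t_{3}t_{4}\in\pi(\mathcal{V}_{k})$ via reversibility, pull it back to $0\oplus\vec{\lambda}_{5}$, and use $k\geq 6$ to produce an $s_{6}$ whose image simultaneously commutes and anticommutes with $t_{5}$, forcing $t_{5}t_{6}=0$ against $(t_{5}t_{6})^{2}=\mathds{1}$ --- and your explicit requirement that $\vec{\lambda}_{6}$ be orthogonal to $\vec{s}_{1},\dots,\vec{s}_{4}$ as well as $\vec{\lambda}_{5}$ usefully spells out what the paper's ``dimensional grounds'' remark leaves implicit. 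One minor correction: reversing $m$ pairwise anticommuting factors carries sign $(-1)^{\binom{m}{2}}$, which is $+1$ only for $m\equiv 0,1 \pmod 4$ (e.g.\ $t_{2}t_{1}=-t_{1}t_{2}$), so your blanket remark that even-length products are reversal-invariant is false in general, though your sole use of it, at $m=4$, is valid.
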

\newpage
\section{Review of the Generalized Gell-Mann Matrices}\label{gmApp}
This appendix reviews the Generalized Gell-Mann matrices. Our primary reference is \cite{Kalev2014}. The details collected herein are elementary, though extremely important for our proofs in \cref{canTPcomps}.

\noindent We consider the real vector space $\mathcal{M}_{n}(\mathbb{C})_{\text{sa}}$. Let $E_{r,s}=|e_{r}\rangle\langle e_{s}|$ where $\{|e_{r}\rangle\}$ is the standard orthonormal basis for $\mathbb{C}^{n}$ --- \textit{i.e.} with no complex entries, just zeros with a single entry of unity; thus also the standard orthonormal basis for $\mathbb{R}^{n}$ should one wish to regard it as such. Just for illustration, with $n=3$ one has
\begin{equation}
E_{1,2}=\begin{pmatrix} 0 & 1 & 0 \\ 0 & 0 & 0 \\ 0 & 0 & 0 \end{pmatrix}
\end{equation}
The set $\big\{E_{r,s}:r,s\in\{1,\dots, n\}\big\}\subset\mathcal{M}_{n}(\mathbb{C})$ is clearly a basis for the complex vector space $\mathcal{M}_{n}(\mathbb{C})$; $\text{dim}_{\mathbb{C}}\mathcal{M}_{n}(\mathbb{C})=n^{2}$. We are looking for a basis for the real vector space of self-adjoint $n\times n$ complex matrices. The Generalized Gell-Mann Matrices are very nice! Following \cite{Kalev2014}, let $r,s\in\{1,\dots,n\}$ and define the following $d^{2}-1$ matrices
\begin{equation}
G_{r,s}=
\begin{cases}
\textstyle{\frac{1}{\sqrt{2}}}\big(E_{r,s}+E_{s,r}\big) & r<s \\
\textstyle{\frac{i}{\sqrt{2}}}\big(E_{r,s}-E_{s,r}\big) & s<r \\
\textstyle{\frac{1}{\sqrt{r(r+1)}}}\big(-rE_{r+1,r+1}+\sum_{k=1}^{r}E_{k,k}\big) & s=r\neq n
\end{cases}
\label{GM}
\end{equation}
\begin{proposition}\label{GMprop1} $G_{r,s}\in\mathcal{M}_{n}(\mathbb{C})_{\text{sa}}\textit{.}$
\begin{proof} Trivial by inspection (note $G_{r,r}$ is real diagonal.)
\end{proof}
\end{proposition}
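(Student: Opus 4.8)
The plan is to verify the defining condition $A = A^{*}$ directly for each of the three cases in Eq.~\eqref{GM}, using the single elementary fact that the $\mathbb{C}$-antilinear involution $^{*}$ on $\mathcal{L}(\mathcal{H}_{d})$ acts on the matrix units via $E_{r,s}^{*} = E_{s,r}$. This follows immediately because $E_{r,s} = |e_{r}\rangle\langle e_{s}|$, so that, realizing $^{*}$ as the composition of transposition and complex conjugation relative to the standard basis, one has $(|e_{r}\rangle\langle e_{s}|)^{*} = |e_{s}\rangle\langle e_{r}| = E_{s,r}$. I would state this observation first, since it is the only nontrivial ingredient, and all three cases reduce to it together with the antilinearity of $^{*}$.

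First I would treat the off-diagonal symmetric case $r < s$. Here $G_{r,s} = \frac{1}{\sqrt{2}}(E_{r,s}+E_{s,r})$ carries a real coefficient, so antilinearity gives $G_{r,s}^{*} = \frac{1}{\sqrt{2}}(E_{r,s}^{*}+E_{s,r}^{*}) = \frac{1}{\sqrt{2}}(E_{s,r}+E_{r,s}) = G_{r,s}$. Next I would treat the off-diagonal skew case $s < r$, where the coefficient $i/\sqrt{2}$ is imaginary: antilinearity conjugates it to $-i/\sqrt{2}$, while $^{*}$ swaps the two matrix units and hence flips the sign of the bracketed difference, and the two sign changes cancel, so that $G_{r,s}^{*} = \overline{(i/\sqrt{2})}(E_{r,s}^{*}-E_{s,r}^{*}) = (-i/\sqrt{2})(E_{s,r}-E_{r,s}) = (i/\sqrt{2})(E_{r,s}-E_{s,r}) = G_{r,s}$. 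Finally, the diagonal case $s = r \neq n$ yields a real-linear combination of the self-adjoint diagonal projectors $E_{k,k}$ (each of which satisfies $E_{k,k}^{*}=E_{k,k}$), so $G_{r,r}$ is manifestly self-adjoint.

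Since these three cases are mutually exclusive and exhaustive over the indexing scheme, the claim follows at once. There is no genuine obstacle here: the entire content is the bookkeeping of how the antilinear involution interacts with the purely real versus purely imaginary scalar coefficients, and the only point requiring even slight care is the skew case, where one must track that the conjugation of $i$ and the index swap each contribute a sign so that their product leaves $G_{r,s}$ invariant. No results beyond the realization of $^{*}$ as transposition composed with complex conjugation are needed, which is precisely why the author flags the statement as trivial by inspection.
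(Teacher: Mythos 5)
Your proposal is correct and is exactly the inspection the paper has in mind: the paper's proof simply declares the claim trivial by inspection (noting only that $G_{r,r}$ is real diagonal), and your case-by-case check of how the antilinear involution interacts with the real versus imaginary coefficients is the explicit form of that inspection. Nothing differs in substance.
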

\begin{proposition}\label{GMprop2} $\forall r,s\in\{1,\dots,n\}$ \textit{(if $r=s$ then $r=s\neq n$):}
\begin{equation}
\mathrm{Tr}\big(G_{r,s}\big)=0\textit{.}
\end{equation} 
\begin{proof} Let $r<s$. Then 
\begin{equation}
\mathrm{Tr}\big(G_{r,s}\big)=\textstyle{\frac{1}{\sqrt{2}}}(\langle e_{s}|e_{r}\rangle+\langle e_{r}|e_{s}\rangle)=0\text{.}
\end{equation} 
Let $s<r$. Then 
\begin{equation}
\mathrm{Tr}\big(G_{r,s}\big)=\textstyle{\frac{i}{\sqrt{2}}}(\langle e_{s}|e_{r}\rangle-\langle e_{r}|e_{s}\rangle)=0\text{.}
\end{equation} 
Finally, for arbitrary $r\in\{1,\dots,n-1\}$:
\begin{equation}
\mathrm{Tr}\big(G_{r,r}\big)=\mathrm{Tr}\left(\textstyle{\frac{1}{\sqrt{r(r+1)}}}\big(-rE_{r+1,r+1}+\sum_{k=1}^{r}E_{k,k}\big)\right)=\textstyle{\frac{1}{\sqrt{r(r+1)}}}(r-r)=0\text{.}
\end{equation}
This concludes analysis of all cases.
\end{proof}
\end{proposition}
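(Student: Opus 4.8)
The plan is to establish $\mathrm{Tr}(G_{r,s}) = 0$ by a direct case analysis mirroring the three branches of the definition in Eq.~\eqref{GM}. The only inputs required are the linearity of the trace functional and the elementary identity $\mathrm{Tr}(E_{r,s}) = \langle e_{s}|e_{r}\rangle = \delta_{r,s}$, which is immediate from $E_{r,s} = |e_{r}\rangle\langle e_{s}|$. In particular, every off-diagonal matrix unit $E_{r,s}$ with $r \neq s$ is traceless, which will dispatch two of the three cases at once.

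For the first branch, $r < s$, we have $G_{r,s} = \tfrac{1}{\sqrt{2}}(E_{r,s} + E_{s,r})$ with $r \neq s$, so by linearity $\mathrm{Tr}(G_{r,s}) = \tfrac{1}{\sqrt{2}}(\delta_{s,r} + \delta_{r,s}) = 0$. The second branch $s < r$ is identical up to the scalar $i/\sqrt{2}$ and again yields $0$, since both matrix units remain off-diagonal. The only branch requiring any arithmetic is the diagonal case $r = s \neq n$: here I would expand $G_{r,r} = \tfrac{1}{\sqrt{r(r+1)}}\big(-r E_{r+1,r+1} + \sum_{k=1}^{r} E_{k,k}\big)$ and use that each $E_{k,k}$ contributes exactly $1$ to the trace, so that the parenthesised expression has trace $-r + r = 0$, whence $\mathrm{Tr}(G_{r,r}) = 0$.

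There is genuinely no obstacle to overcome --- the statement is a bookkeeping computation, and the three cases are mutually exclusive and exhaustive over the listed generators. The one point worth flagging is the hypothesis $r = s \neq n$, which is precisely what makes the claim true: the ``$n$-th diagonal generator'' is deliberately omitted from \eqref{GM} because it would be proportional to $\mathds{1}_{n}$ and hence have nonzero trace. Together with \cref{GMprop1}, this computation shows that the $G_{r,s}$ are \emph{traceless} self-adjoint matrices, so that adjoining the suitably normalized identity $\mathds{1}_{n}$ produces a Hilbert-Schmidt orthonormal basis of $\mathcal{M}_{n}(\mathbb{C})_{\text{sa}}$. That structural fact --- that one has an explicit real orthonormal basis of self-adjoint matrices, all but one of which are traceless --- is what the generation arguments in \cref{canTPcomps} actually rely on, so it is this consequence, rather than the difficulty of the proof itself, that motivates recording the proposition.
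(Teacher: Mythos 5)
Your proof is correct and follows the same three-case computation as the paper: it reduces each branch to $\mathrm{Tr}(E_{r,s})=\langle e_{s}|e_{r}\rangle=\delta_{r,s}$ and to the cancellation $-r+r=0$ in the diagonal case. The only difference is cosmetic (writing $\delta_{r,s}$ where the paper writes the inner products explicitly), so nothing further is needed.
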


\begin{proposition}\label{GMprop3} $\forall r,s\in\{1,\dots,n\}$\textit{ (if $r=s$ then $r=s\neq n$):}
\begin{equation}
\mathrm{Tr}\big(G_{r,s}^{2}\big)=1\textit{.}
\end{equation}
\begin{proof} Let $r<s$. Then
\begin{eqnarray}
\mathrm{Tr}\big(G_{r,s}^{2}\big)&=&\mathrm{Tr}\left(\textstyle{\frac{1}{2}}(E_{r,s}+E_{s,r})(E_{r,s}+E_{s,r})\right)\nonumber\\
&=&\textstyle{\frac{1}{2}}\mathrm{Tr}\Big(E_{r,s}E_{r,s}+E_{r,s}E_{s,r}+E_{s,r}E_{r,s}+E_{s,r}E_{s,r}\Big)\nonumber\\
&=&\textstyle{\frac{1}{2}}(0+1+1+0)\nonumber\\
&=&1\text{.}
\end{eqnarray}
Now let $s<r$. Then
\begin{eqnarray}
\mathrm{Tr}\big(G_{r,s}^{2}\big)&=&\mathrm{Tr}\left(\textstyle{-\frac{1}{2}}(E_{r,s}-E_{s,r})(E_{r,s}-E_{s,r})\right)\nonumber\\
&=&\textstyle{\frac{1}{2}}\mathrm{Tr}\Big(E_{r,s}E_{r,s}-E_{r,s}E_{s,r}-E_{s,r}E_{r,s}+E_{s,r}E_{s,r}\Big)\nonumber\\
&=&-\textstyle{\frac{1}{2}}(0-1-1+0)\nonumber\\
&=&1\text{.}
\end{eqnarray}
Finally let $r=s\neq n$. Then
\begin{eqnarray}
\mathrm{Tr}\big(G_{r,r}^{2}\big)&=&\textstyle{\frac{1}{r(r+1)}}\mathrm{Tr}\Big(\big(-rE_{r+1,r+1}+\sum_{k=1}^{r}E_{k,k}\big)\big(-rE_{r+1,r+1}+\sum_{k=1}^{r}E_{k,k}\big)\Big) \nonumber\\
&=&\textstyle{\frac{1}{r(r+1)}}\mathrm{Tr}\left(r^{2}E_{r+1,r+1}-r\sum_{k=1}^{r}\big(E_{r+1,r+1}E_{k,k}+E_{k,k}E_{r+1,r+1}\big)+\sum_{k=1}^{r}\sum_{j=1}^{r}E_{k,k}E_{j,j}\right) \nonumber\\
&=&\textstyle{\frac{1}{r(r+1)}}(r^{2}+0+0+r) \nonumber \\
&=&1\text{.}
\end{eqnarray}
This concludes analysis of all cases.
\end{proof}
\end{proposition}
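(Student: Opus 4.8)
The plan is to compute $\mathrm{Tr}(G_{r,s}^{2})$ directly from the defining formula Eq.~\eqref{GM}, organizing the argument into the same three cases ($r<s$, $s<r$, and $r=s\neq n$) that appear in that definition. The whole computation rests on two elementary identities for the matrix units $E_{r,s}=|e_{r}\rangle\langle e_{s}|$: the composition rule $E_{r,s}E_{t,u}=\delta_{s,t}E_{r,u}$, and the trace rule $\mathrm{Tr}(E_{r,s})=\delta_{r,s}$. Combining these gives $\mathrm{Tr}(E_{r,s}E_{t,u})=\delta_{s,t}\delta_{u,r}$, which is essentially all the machinery the proof requires.

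For the two off-diagonal cases I would expand the square of the symmetric (respectively antisymmetric) combination of $E_{r,s}$ and $E_{s,r}$ into four terms. Since $r\neq s$ whenever the matrix is off-diagonal, the two ``pure'' products $E_{r,s}E_{r,s}$ and $E_{s,r}E_{s,r}$ vanish by the composition rule, while the two cross terms collapse to the diagonal projectors $E_{r,r}$ and $E_{s,s}$. In the symmetric case ($r<s$) the prefactor $\tfrac{1}{2}$ then yields $G_{r,s}^{2}=\tfrac{1}{2}(E_{r,r}+E_{s,s})$, whose trace is $\tfrac{1}{2}(1+1)=1$. The antisymmetric case ($s<r$) proceeds identically, the one point requiring attention being the sign: the prefactor is $\tfrac{i}{\sqrt{2}}$, so $i^{2}=-1$ combines with the minus signs in $(E_{r,s}-E_{s,r})^{2}$ to again produce $+\tfrac{1}{2}(E_{r,r}+E_{s,s})$, hence trace $1$.

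For the diagonal case ($r=s\neq n$) I would square the traceless element $G_{r,r}=\tfrac{1}{\sqrt{r(r+1)}}\big(-rE_{r+1,r+1}+\sum_{k=1}^{r}E_{k,k}\big)$. The key observation is that the matrix units appearing are mutually orthogonal idempotents, $E_{k,k}E_{j,j}=\delta_{k,j}E_{k,k}$, so that every cross term between $E_{r+1,r+1}$ and the $E_{k,k}$ with $k\le r$ drops out because $r+1>r$. Thus the square reduces to $\tfrac{1}{r(r+1)}\big(r^{2}E_{r+1,r+1}+\sum_{k=1}^{r}E_{k,k}\big)$, and taking the trace gives $\tfrac{1}{r(r+1)}(r^{2}+r)=1$.

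There is no genuine obstacle here; this is a routine normalization check confirming that the generalized Gell-Mann matrices are unit-normalized in the Hilbert--Schmidt inner product (complementing \cref{GMprop2}, which records their tracelessness). The only places where care is needed are purely in the bookkeeping: tracking the $i^{2}=-1$ sign correctly in the antisymmetric case, and verifying that the cross term between the $(r+1)$-th diagonal entry and the first $r$ diagonal entries genuinely vanishes in the diagonal case. Both become immediate once the composition rule for matrix units is applied.
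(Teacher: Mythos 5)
Your proposal is correct and follows essentially the same route as the paper: expand $G_{r,s}^{2}$ in each of the three cases using the matrix-unit rules $E_{r,s}E_{t,u}=\delta_{s,t}E_{r,u}$ and $\mathrm{Tr}(E_{r,s})=\delta_{r,s}$, and observe that the surviving terms contribute exactly the normalization constant squared times the trace. The sign bookkeeping in the antisymmetric case and the vanishing of the cross terms in the diagonal case are handled just as in the paper's proof.
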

\begin{proposition}\label{GMprop4} $\forall r,s,t,v\in\{1,\dots,n\}$ \textit{(if $r=s$ then $r=s\neq n$ and if $t=v$ then $t=v\neq n$) with $(r,s)\neq (t,v)$:} 
\begin{equation}
\mathrm{Tr}\big(G_{r,s}G_{t,v}\big)=0\text{.}
\end{equation}
\begin{proof} We will perform a mutually exclusive and exhaustive case analysis.\\[0.2cm]
\underline{Case A: r=t}:\\[0.2cm]
$\clubsuit$ Subcase A.1:  $r<s$ and $r=t<v$. Then 
\begin{eqnarray}
\mathrm{Tr}\big(G_{r,s}G_{r,v}\big)&=&\textstyle{\frac{1}{2}}\mathrm{Tr}\Big(\big(E_{r,s}+E_{s,r}\big)\big(E_{t,v}+E_{v,t}\big)\Big)\nonumber\\
&=&\textstyle{\frac{1}{2}}\big(\langle e_{v}|e_{r}\rangle\langle e_{s}|e_{t}\rangle+\langle e_{t}|e_{r}\rangle\langle e_{s}|e_{v}\rangle+\langle e_{t}|e_{s}\rangle\langle e_{r}|e_{v}\rangle+\langle e_{v}|e_{s}\rangle\langle e_{r}|e_{t}\rangle\big)\nonumber\\
&=&0\text{.}
\end{eqnarray}
The first and third terms are zero because $r<v$, the second and fourth because $s\neq v$ by assumption since $r=t$.\\[0.2cm]
$\clubsuit$ Subcase A.2: $r>s$ and $r=t<v$. Then 
\begin{eqnarray}
\mathrm{Tr}\big(G_{r,s}G_{r,v}\big)&=&\textstyle{\frac{i}{2}}\mathrm{Tr}\Big(\big(E_{r,s}-E_{s,r}\big)\big(E_{t,v}+E_{v,t}\big)\Big)\nonumber\\
&=&\textstyle{\frac{i}{2}}\big(\langle e_{v}|e_{r}\rangle\langle e_{s}|e_{t}\rangle+\langle e_{t}|e_{r}\rangle\langle e_{s}|e_{v}\rangle-\langle e_{t}|e_{s}\rangle\langle e_{r}|e_{v}\rangle-\langle e_{v}|e_{s}\rangle\langle e_{r}|e_{t}\rangle\big)\nonumber\\
&=&0\text{.}
\end{eqnarray}
The reasoning for the zeros is exactly as in Subcase A.1.\\[0.2cm]
$\clubsuit$ Subcase A.3: $r=s$ and $r=t<v$. Then
\begin{eqnarray}
\mathrm{Tr}\big(G_{r,r}G_{r,v}\big)&=&\textstyle{\frac{1}{\sqrt{2r(r+1)}}}\mathrm{Tr}\Big(\big(-rE_{r+1,r+1}+\sum_{k=1}^{r}E_{k,k}\big)\big(E_{t,v}+E_{v,t}\big)\Big)\nonumber\\
&=&\textstyle{\frac{1}{\sqrt{2r(r+1)}}}\big(-r\langle e_{v}|e_{r+1}\rangle\langle e_{r+1}|e_{t}\rangle-r\langle e_{t}|e_{r+1}\rangle\langle e_{r+1}|e_{v}\rangle\big)\nonumber\\
&+&\textstyle{\frac{1}{\sqrt{2r(r+1)}}}\big(\sum_{k=1}^{r}(\langle e_{v}|e_{k}\rangle\langle e_{k}|e_{t}\rangle+\langle e_{v}|e_{k}\rangle\langle e_{k}|e_{t})\rangle\big)\nonumber\\
&=&0\text{.}
\end{eqnarray} 
The first and second terms are zero because $r=t$, the third because $r<v$.\\[0.2cm]
$\clubsuit$ Subcase A.4: $r<s$ and $r=t>v$. Then
\begin{eqnarray}
\mathrm{Tr}\big(G_{r,s}G_{r,v}\big)&=&\textstyle{\frac{i}{2}}\mathrm{Tr}\Big(\big(E_{r,s}+E_{s,r}\big)\big(E_{t,v}-E_{v,t}\big)\Big)\nonumber\\
&=&\textstyle{\frac{i}{2}}\big(\langle e_{v}|e_{r}\rangle\langle e_{s}|e_{t}\rangle-\langle e_{t}|e_{r}\rangle\langle e_{s}|e_{v}\rangle+\langle e_{t}|e_{s}\rangle\langle e_{r}|e_{v}\rangle-\langle e_{v}|e_{s}\rangle\langle e_{r}|e_{t}\rangle\big)\nonumber\\
&=&0\text{.}
\end{eqnarray}
The first and third terms are zero because $r>v$, the second and fourth because $s\neq v$ by assumption since $r=t$.\\[0.2cm]
$\clubsuit$ Subcase A.5: $r>s$ and $r=t>v$. Then
\begin{eqnarray}
\mathrm{Tr}\big(G_{r,s}G_{r,v}\big)&=&\textstyle{-\frac{1}{2}}\mathrm{Tr}\Big(\big(E_{r,s}-E_{s,r}\big)\big(E_{t,v}-E_{v,t}\big)\Big)\nonumber\\
&=&\textstyle{-\frac{1}{2}}\big(\langle e_{v}|e_{r}\rangle\langle e_{s}|e_{t}\rangle-\langle e_{t}|e_{r}\rangle\langle e_{s}|e_{v}\rangle-\langle e_{t}|e_{s}\rangle\langle e_{r}|e_{v}\rangle+\langle e_{v}|e_{s}\rangle\langle e_{r}|e_{t}\rangle\big)\nonumber\\
&=&0
\end{eqnarray}
The reasoning for the zeros is exactly as in Subcase A.4.\\[0.2cm]
$\clubsuit$ Subcase A.6: $r=s$ and $r=t>v$. Then
\begin{eqnarray}
\mathrm{Tr}\big(G_{r,r}G_{r,v}\big)&=&\textstyle{\frac{i}{\sqrt{2r(r+1)}}}\mathrm{Tr}\Big(\big(-rE_{r+1,r+1}+\sum_{k=1}^{r}E_{k,k}\big)\big(E_{t,v}-E_{v,t}\big)\Big)\nonumber\\
&=&\textstyle{\frac{i}{\sqrt{2r(r+1)}}}\big(-r\langle e_{v}|e_{r+1}\rangle\langle e_{r+1}|e_{t}\rangle+r\langle e_{t}|e_{r+1}\rangle\langle e_{r+1}|e_{v}\rangle\big)\nonumber\\
&+&\textstyle{\frac{i}{\sqrt{2r(r+1)}}}\big(\sum_{k=1}^{r}(\langle e_{v}|e_{k}\rangle\langle e_{k}|e_{t}\rangle-\langle e_{v}|e_{k}\rangle\langle e_{k}|e_{t})\rangle\big)\nonumber\\
&=&0\text{.}
\end{eqnarray}
Since $r=t>v$ the first two terms are zero. The sum is zero by the commutativity of multiplication in $\mathbb{C}$ (\textit{i.e}.\ the summand itself is zero.)\\[0.2cm]
\noindent$\clubsuit$ Subcase A.7: $r<s$ and $r=t=v$. Then
\begin{eqnarray}
\mathrm{Tr}\big(G_{r,s}G_{r,r}\big)&=&\textstyle{\frac{1}{\sqrt{2r(r+1)}}}\mathrm{Tr}\Big(\big(E_{r,s}+E_{s,r}\big)\big(-rE_{r+1,r+1}+\sum_{k=1}^{r}E_{k,k}\big)\Big)\nonumber\\
&=&\textstyle{\frac{1}{\sqrt{2r(r+1)}}}\big(-r\langle e_{r+1}|e_{r}\rangle\langle e_{s}|e_{r+1}\rangle+\sum_{k=1}^{r}\langle e_{k}|e_{r}\rangle\langle e_{s}|e_{k}\rangle\big)\nonumber\\
&+&\textstyle{\frac{1}{\sqrt{2r(r+1)}}}\big(-r\langle e_{r+1}|e_{s}\rangle\langle e_{r}|e_{r+1}\rangle+\sum_{k=1}^{r}\langle e_{k}|e_{s}\rangle\langle e_{r}|e_{k}\rangle\rangle\big)\nonumber\\
&=&0\text{.}
\end{eqnarray}
The first and third terms are zero because $r<r+1$, the second and fourth because $s>r$.\\[0.2cm]
$\clubsuit$ Subcase A.8: $r>s$ and $r=t=v$. Then
\begin{eqnarray}
\mathrm{Tr}\big(G_{r,s}G_{r,r}\big)&=&\textstyle{\frac{1=i}{\sqrt{2r(r+1)}}}\mathrm{Tr}\Big(\big(E_{r,s}-E_{s,r}\big)\big(-rE_{r+1,r+1}+\sum_{k=1}^{r}E_{k,k}\big)\Big)\nonumber\\
&=&\textstyle{\frac{1}{\sqrt{2r(r+1)}}}\big(-r\langle e_{r+1}|e_{r}\rangle\langle e_{s}|e_{r+1}\rangle+\sum_{k=1}^{r}\langle e_{k}|e_{r}\rangle\langle e_{s}|e_{k}\rangle\big)\nonumber\\
&+&\textstyle{\frac{1}{\sqrt{2r(r+1)}}}\big(r\langle e_{r+1}|e_{s}\rangle\langle e_{r}|e_{r+1}\rangle-\sum_{k=1}^{r}\langle e_{k}|e_{s}\rangle\langle e_{r}|e_{k}\rangle\rangle\big)\nonumber\\
&=&0\text{.}
\end{eqnarray}
The first and third terms are zero because $r<r+1$, the second and fourth because $s<r$. This exhausts all possible cases when $r=t$ --- remember, we are considering $(r,s)\neq(t,v)$.\\[0.2cm]
\underline{Case B: $r\neq t$}:\\[0.2cm] 
$\clubsuit$ Subcase B.1: $r<s$ and $t<v$. Then
\begin{eqnarray}
\mathrm{Tr}\big(G_{r,s}G_{t,v}\big)&=&\textstyle{\frac{1}{2}}\mathrm{Tr}\Big(\big(E_{r,s}+E_{s,r}\big)\big(E_{t,v}+E_{v,t}\big)\Big)\nonumber\\
&=&\textstyle{\frac{1}{2}}\big(\langle e_{v}|e_{r}\rangle\langle e_{s}|e_{t}\rangle+\langle e_{t}|e_{r}\rangle\langle e_{s}|e_{v}\rangle+\langle e_{v}|e_{s}\rangle\langle e_{r}|e_{t}\rangle+\langle e_{t}|e_{s}\rangle\langle e_{r}|e_{v}\rangle\big)\nonumber\\
&=&0\text{.}
\end{eqnarray}
The second and third terms vanish because $r\neq t$. We will prove that the first term vanishes by contradiction. \textit{Suppose} $\langle e_{v}|e_{r}\rangle\langle e_{s}|e_{t}\rangle\neq 0$. Then $v=r$ and $s=t$. Thus from $r<s$ we get $v<s$: a contradiction; thus the first term vanishes. The same argument holds for the fourth term.\\[0.2cm]
$\clubsuit$ Subcase B.2: $r>s$ and $t<v$. Then
\begin{eqnarray}
\mathrm{Tr}\big(G_{r,s}G_{t,v}\big)&=&\textstyle{\frac{i}{2}}\mathrm{Tr}\Big(\big(E_{r,s}-E_{s,r}\big)\big(E_{t,v}+E_{v,t}\big)\Big)\nonumber\\
&=&\textstyle{\frac{1}{2}}\big(\langle e_{v}|e_{r}\rangle\langle e_{s}|e_{t}\rangle+\langle e_{t}|e_{r}\rangle\langle e_{s}|e_{v}\rangle-\langle e_{v}|e_{s}\rangle\langle e_{r}|e_{t}\rangle-\langle e_{t}|e_{s}\rangle\langle e_{r}|e_{v}\rangle\big)\nonumber\\
&=&0\text{.}
\end{eqnarray}
The second and third terms vanish because $r\neq t$. The first and fourth terms cancel because $\langle e_{j}|e_{k}\rangle\in\mathbb{R}$.\\[0.2cm]
$\clubsuit$ Subcase B.3: $r=s$ and $t<v$. Then
\begin{eqnarray}
\mathrm{Tr}\big(G_{r,r}G_{t,v}\big)&=&\textstyle{\frac{1}{\sqrt{2r(r+1)}}}\mathrm{Tr}\Big(\big(-rE_{r+1,r+1}+\sum_{k=1}^{r}E_{k,k}\big)\big(E_{t,v}+E_{v,t}\big)\Big)\nonumber\\
&=&\textstyle{\frac{1}{\sqrt{2r(r+1)}}}\big(-r\langle e_{v}|e_{r+1}\rangle\langle e_{r+1}|e_{t}\rangle-r\langle e_{t}|e_{r+1}\rangle\langle e_{r+1}|e_{v}\rangle\big)\nonumber\\
&+&\textstyle{\frac{1}{\sqrt{2r(r+1)}}}\big(\sum_{k=1}^{r}(\langle e_{v}|e_{k}\rangle\langle e_{k}|e_{t}\rangle+\langle e_{v}|e_{k}\rangle\langle e_{k}|e_{t}\rangle)\big)\nonumber\\
&=&0\text{.}
\end{eqnarray}
All terms vanish because $t<v$.\\[0.2cm]
$\clubsuit$ Subcase B.4: $r<s$ and $t>v$. Then
\begin{eqnarray}
\mathrm{Tr}\big(G_{r,s}G_{t,v}\big)&=&\textstyle{\frac{i}{2}}\mathrm{Tr}\Big(\big(E_{r,s}+E_{s,r}\big)\big(E_{t,v}-E_{v,t}\big)\Big)\nonumber\\
&=&\textstyle{\frac{i}{2}}\big(\langle e_{v}|e_{r}\rangle\langle e_{s}|e_{t}\rangle-\langle e_{t}|e_{r}\rangle\langle e_{s}|e_{v}\rangle+\langle e_{v}|e_{s}\rangle\langle e_{r}|e_{t}\rangle-\langle e_{t}|e_{s}\rangle\langle e_{r}|e_{v}\rangle\big)\nonumber\\
&=&0\text{.}
\end{eqnarray}
The reasoning for the zeros is exactly as in Subcase B.2.\\[0.2cm]
$\clubsuit$ Subcase B.5: $r>s$ and $t>v$. Then
\begin{eqnarray}
\mathrm{Tr}\big(G_{r,s}G_{t,v}\big)&=&\textstyle{-\frac{1}{2}}\mathrm{Tr}\Big(\big(E_{r,s}-E_{s,r}\big)\big(E_{t,v}-E_{v,t}\big)\Big)\nonumber\\
&=&\textstyle{-\frac{1}{2}}\big(\langle e_{v}|e_{r}\rangle\langle e_{s}|e_{t}\rangle-\langle e_{t}|e_{r}\rangle\langle e_{s}|e_{v}\rangle-\langle e_{v}|e_{s}\rangle\langle e_{r}|e_{t}\rangle+\langle e_{t}|e_{s}\rangle\langle e_{r}|e_{v}\rangle\big)\nonumber\\
&=&0\text{.}
\end{eqnarray}
The reasoning for the zeros is exactly as in Subcase B.1.\\[0.2cm]
$\clubsuit$ Subcase B.6: $r=s$ and $t>v$. Then
\begin{eqnarray}
\mathrm{Tr}\big(G_{r,r}G_{t,v}\big)&=&\textstyle{\frac{i}{\sqrt{2r(r+1)}}}\mathrm{Tr}\Big(\big(-rE_{r+1,r+1}+\sum_{k=1}^{r}E_{k,k}\big)\big(E_{t,v}-E_{v,t}\big)\Big)\nonumber\\
&=&\textstyle{\frac{i}{\sqrt{2r(r+1)}}}\big(-r\langle e_{v}|e_{r+1}\rangle\langle e_{r+1}|e_{t}\rangle+r\langle e_{t}|e_{r+1}\rangle\langle e_{r+1}|e_{v}\rangle\big)\nonumber\\
&+&\textstyle{\frac{i}{\sqrt{2r(r+1)}}}\sum_{k=1}^{r}(\langle e_{v}|e_{k}\rangle\langle e_{k}|e_{t}\rangle-\langle e_{t}|e_{k}\rangle\langle e_{k}|e_{v})\rangle\big)\nonumber\\
&=&0\text{.}
\end{eqnarray}
All terms vanish because $t>v$.
\noindent $\clubsuit$ Subcase B.7: $r<s$ and $t=v$. Then
\begin{eqnarray}
\mathrm{Tr}\big(G_{r,s}G_{t,t}\big)&=&\textstyle{\frac{1}{\sqrt{2r(r+1)}}}\mathrm{Tr}\Big(\big(E_{r,s}+E_{s,r}\big)\big(-tE_{t+1,t+1}+\sum_{k=1}^{t}E_{k,k}\big)\Big)\nonumber\\
&=&\textstyle{\frac{1}{\sqrt{2r(r+1)}}}\big(-t\langle e_{t+1}|e_{r}\rangle\langle e_{s}|e_{t+1}\rangle-t\langle e_{t+1}|e_{s}\rangle\langle e_{r}|e_{t+1}\rangle\big)\nonumber\\
&+&\textstyle{\frac{1}{\sqrt{2r(r+1)}}}\big(\sum_{k=1}^{r}(\langle e_{k}|e_{r}\rangle\langle e_{s}|e_{k}\rangle+\langle e_{k}|e_{s}\rangle\langle e_{r}|e_{k})\rangle\big)\nonumber\\
&=&0\text{.}
\end{eqnarray}
All terms vanish because $r<s$.\\[0.2cm]
$\clubsuit$ Subcase B.8: $r>s$ and $t=v$. Then
\begin{eqnarray}
\mathrm{Tr}\big(G_{r,s}G_{t,t}\big)&=&\textstyle{\frac{i}{\sqrt{2r(r+1)}}}\mathrm{Tr}\Big(\big(E_{r,s}-E_{s,r}\big)\big(-tE_{t+1,t+1}+\sum_{k=1}^{t}E_{k,k}\big)\Big)\nonumber\\
&=&\textstyle{\frac{1}{\sqrt{2r(r+1)}}}\big(-t\langle e_{t+1}|e_{r}\rangle\langle e_{s}|e_{t+1}\rangle+t\langle e_{t+1}|e_{s}\rangle\langle e_{r}|e_{t+1}\rangle\big)\nonumber\\
&+&\textstyle{\frac{1}{\sqrt{2r(r+1)}}}\big(\sum_{k=1}^{r}(\langle e_{k}|e_{r}\rangle\langle e_{s}|e_{k}\rangle-\langle e_{k}|e_{s}\rangle\langle e_{r}|e_{k})\rangle\big)\nonumber\\
&=&0\text{.}
\end{eqnarray}
All terms vanish because $r>s$.\\[0.2cm]
$\clubsuit$ Subcase B.9: $r=s$ and $t=v$. Our considerations are such that, then: $s=r\neq t=v$. So,

\begin{eqnarray}
\mathrm{Tr}\big(G_{r,r}G_{t,t}\big)&=&\textstyle{\frac{1}{\sqrt{rt(r+1)(t+1)}}}\mathrm{Tr}\Big(\big(-rE_{r+1,r+1}+\sum_{k=1}^{r}E_{k,k}\big)\big(-tE_{t+1,t+1}+\sum_{j=1}^{t}E_{j,j}\big)\Big)\nonumber\\
&=&\textstyle{\frac{1}{\sqrt{rt(r+1)(t+1)}}}\big(rt\langle e_{t+1}|e_{r+1}\rangle\langle e_{r+1}|e_{t+1}\rangle-r\sum_{j=1}^{t}\langle e_{j}|e_{r+1}\rangle\langle e_{r+1}|e_{j}\rangle\big)\nonumber\\
&+&\textstyle{\frac{1}{\sqrt{rt(r+1)(t+1)}}}\big(-t\sum_{k=1}^{r}\langle e_{t+1}|e_{k}\rangle\langle e_{k}|e_{t+1}\rangle+\sum_{k=1}^{r}\sum_{j=1}^{t}\langle e_{j}|e_{k}\rangle\langle e_{k}|e_{j}\rangle\big) \nonumber\\
&=&0\text{.}
\end{eqnarray} 
The first term vanishes because we must have $r\neq t$. To show that the other three terms vanish, we will need to consider some sub-subcases.\\[0.2cm]
First, consider the sub-subcase $r<t$. Then $r+1$ is at most $t$, and the second term contributes a factor of $-r$; the third zero, and the fourth a factor of $+r$. Let us prove that the fourth term indeed contributes a factor of $+r$. Let $\Pi_{t}$ be the rank-$t$ projector onto the subspace spanned by $\big\{|e_{j}\rangle:j\in\{1,\dots,t\}\big\}$. Similarly, let $\Pi_{r}$ be the rank-$r$ projector onto the subspace spanned by $\big\{|e_{k}\rangle:k\in\{1,\dots,r\}\big\}$. Then we can write the forth term (ignoring the overall multiplicative constant $\textstyle{\frac{1}{\sqrt{rt(r+1)(t+1)}}}$) as
\begin{equation}
\sum_{k=1}^{r}\sum_{j=1}^{t}\langle e_{j}|e_{k}\rangle\langle e_{k}|e_{j}\rangle=\sum_{j=1}^{t}\langle e_{j}|\Pi_{r}||e_{j}\rangle=\mathrm{Tr}\big(\Pi_{t}\Pi_{r}\big)=\text{dim}_{\mathbb{C}}\Big(\text{span}_{\mathbb{C}}\big\{|e_{k}\rangle:k\in\{1,\dots,r\}\big\}\Big)=r\text{.}
\end{equation}
The second sub-subcase ($r>t$) can be handled in an entirely similar manner. This concludes our analysis, and, incidentally, the proof.
\end{proof}
\end{proposition}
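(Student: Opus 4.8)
The plan is to bypass the author's exhaustive case-by-case expansion and instead exploit the elementary orthogonality of the matrix units $E_{a,b}=|e_a\rangle\langle e_b|$ together with the support pattern of the $G_{r,s}$. The one computational fact driving everything is
\begin{equation}
\mathrm{Tr}\big(E_{a,b}E_{c,d}\big)=\delta_{b,c}\delta_{a,d}\text{,}
\end{equation}
so a product of two matrix units has nonvanishing trace precisely when the second unit sits in the position obtained from the first by reflecting across the diagonal. First I would record that each generalized Gell-Mann matrix belongs to exactly one of three families, classified by its support among the $E_{a,b}$: for $a<b$ the symmetric matrix $G_{a,b}=\tfrac{1}{\sqrt{2}}(E_{a,b}+E_{b,a})$ and the antisymmetric matrix $G_{b,a}=\tfrac{i}{\sqrt{2}}(E_{b,a}-E_{a,b})$ are both supported on the two positions $(a,b),(b,a)$, while each diagonal matrix $G_{r,r}$ is supported entirely on diagonal positions $(k,k)$.

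The central idea is that off-diagonal matrices attached to distinct unordered pairs are disjointly supported. If $G_{r,s}$ and $G_{t,v}$ are both off-diagonal with $\{r,s\}\neq\{t,v\}$, then every surviving term in $\mathrm{Tr}(G_{r,s}G_{t,v})$ would, by the displayed identity, require a position $(a,b)\in\{(r,s),(s,r)\}$ to satisfy $(b,a)\in\{(t,v),(v,t)\}$, forcing $\{r,s\}=\{t,v\}$, a contradiction; hence the trace vanishes outright. Similarly, an off-diagonal matrix paired with a diagonal one would force $a=b$ on an off-diagonal position, which is impossible, so those traces vanish as well. This collapses the whole proposition to exactly two residual overlaps: (i) the symmetric and antisymmetric matrices sharing a common pair $\{a,b\}$, and (ii) two distinct diagonal matrices.

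Case (i) is a two-line direct computation: $\mathrm{Tr}(G_{a,b}G_{b,a})=\tfrac{i}{2}\mathrm{Tr}\big((E_{a,b}+E_{b,a})(E_{b,a}-E_{a,b})\big)=\tfrac{i}{2}(1-0+0-1)=0$, the square terms vanishing and the two cross terms cancelling. The only genuinely arithmetical step, and the place I would expect to spend the real effort, is case (ii). Here both $G_{r,r}$ and $G_{t,t}$ are real diagonal, so $\mathrm{Tr}(G_{r,r}G_{t,t})=\sum_k (G_{r,r})_{kk}(G_{t,t})_{kk}$; assuming without loss of generality $r<t$, the support $\{1,\dots,r+1\}$ of $G_{r,r}$ lies entirely inside the block $\{1,\dots,t\}$ on which $G_{t,t}$ takes the constant value $1/\sqrt{t(t+1)}$, whence the sum reduces to $\tfrac{1}{\sqrt{r(r+1)t(t+1)}}\big(\sum_{k=1}^{r}1-r\big)=0$. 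If a slicker formulation is wanted I would write $G_{r,r}=\tfrac{1}{\sqrt{r(r+1)}}\big((r+1)\Pi_r-r\Pi_{r+1}\big)$, with $\Pi_m=\sum_{k=1}^{m}E_{k,k}$, and evaluate via $\mathrm{Tr}(\Pi_a\Pi_b)=\min(a,b)$; but the nesting-of-supports observation is what makes the vanishing transparent, and it is the crux of the argument.
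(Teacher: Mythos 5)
Your proof is correct, and it takes a genuinely different route from the paper's. The paper proves the proposition by exhaustive enumeration: seventeen explicit subcases (eight with $r=t$, nine with $r\neq t$), each expanded into inner products $\langle e_i|e_j\rangle$ and killed term by term. Every one of those vanishings is, at bottom, an instance of your single identity $\mathrm{Tr}(E_{a,b}E_{c,d})=\delta_{b,c}\delta_{a,d}$, and your observation that the position set $\{(a,b),(b,a)\}$ of an off-diagonal generator is closed under transposition dispatches in one stroke all pairings except the two you isolate: distinct unordered pairs force the trace to vanish, as does any off-diagonal/diagonal pairing. Notably, your two residual computations land exactly where the paper's argument also needs something beyond pure index orthogonality: the symmetric/antisymmetric overlap on a common pair is the cross-term cancellation buried inside the paper's Subcases B.2 and B.4 (where the first and fourth terms cancel ``because $\langle e_j|e_k\rangle\in\mathbb{R}$''), and your diagonal-versus-diagonal computation, including the alternative via $\mathrm{Tr}(\Pi_a\Pi_b)=\min(a,b)$, is precisely the mechanism the paper invokes in its Subcase B.9 through the rank-$r$ and rank-$t$ projectors. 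What the paper's enumeration buys is a proof checkable line by line with no structural overhead, appropriate for a self-contained appendix; what yours buys is brevity and a reusable organizing principle --- the support-pattern classification of the generalized Gell-Mann basis --- which also renders the companion tracelessness and normalization propositions transparent and generalizes to any basis built from matrix units. One cosmetic caution: your phrase ``disjointly supported'' slightly undersells your own argument, since disjointness of supports alone does not force a product trace to vanish; it is the transpose-reflection criterion $\mathrm{Tr}(E_{a,b}E_{c,d})=\delta_{b,c}\delta_{a,d}$, together with the transpose-closure of each generator's position set, that you actually (and correctly) apply.
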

\begin{proposition}\label{GMprop5} $\{G_{r,s}\}$ \textit{constitute an orthonormal basis for the real vector space of traceless self-adjoint $n\times n$ complex matrices.}
\begin{proof} Immediate from \cref{GMprop1}, \cref{GMprop2}, \cref{GMprop3}, and \cref{GMprop4}. 
\end{proof}
\end{proposition}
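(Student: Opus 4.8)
The plan is to assemble the four preceding propositions into a single orthonormality-plus-dimension argument; no new computation is required. First I would observe that because each $G_{r,s}$ is self-adjoint (by \cref{GMprop1}), the Hilbert-Schmidt inner product restricted to these matrices simplifies to $\langle\!\langle G_{r,s}|G_{t,v}\rangle\!\rangle = \mathrm{Tr}(G_{r,s}^{*}G_{t,v}) = \mathrm{Tr}(G_{r,s}G_{t,v})$. Consequently, \cref{GMprop3} and \cref{GMprop4} taken together assert precisely that $\mathrm{Tr}(G_{r,s}G_{t,v}) = \delta_{(r,s),(t,v)}$; that is, the family $\{G_{r,s}\}$ is orthonormal with respect to the real Hilbert-Schmidt inner product on $\mathcal{M}_{n}(\mathbb{C})_{\text{sa}}$.

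Second, by \cref{GMprop1} and \cref{GMprop2}, every $G_{r,s}$ is simultaneously self-adjoint and traceless, hence lies in the real vector space $\mathcal{T} \equiv \{A \in \mathcal{M}_{n}(\mathbb{C})_{\text{sa}} \;\boldsymbol{|}\; \mathrm{Tr}A = 0\}$ of traceless self-adjoint complex matrices. An orthonormal set is automatically linearly independent, so $\{G_{r,s}\}$ is a linearly independent subset of $\mathcal{T}$.

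Third, the only remaining task is a dimension count, which I would carry out by tallying the three cases of the defining equation \eqref{GM}. The off-diagonal cases $r<s$ and $s<r$ each contribute $\binom{n}{2} = n(n-1)/2$ matrices, while the diagonal case $s=r\neq n$ contributes $n-1$ matrices, as $r$ ranges over $\{1,\dots,n-1\}$. Summing yields $n(n-1) + (n-1) = (n+1)(n-1) = n^{2}-1$, which equals $\mathrm{dim}_{\mathbb{R}}\mathcal{T}$, since $\mathrm{dim}_{\mathbb{R}}\mathcal{M}_{n}(\mathbb{C})_{\text{sa}} = n^{2}$ and imposing the single linear constraint $\mathrm{Tr}A = 0$ reduces the dimension by one. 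A linearly independent subset whose cardinality matches the dimension of its ambient space is a basis, and orthonormality of that basis has already been secured above.

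This argument presents no genuine obstacle, as the heavy lifting resides in the exhaustive case analyses of \cref{GMprop3} and \cref{GMprop4}, which are assumed here. The single point demanding care is the dimension bookkeeping: I would verify that the three cases in \eqref{GM} are mutually disjoint as index sets and that the diagonal count correctly excludes the index $r=n$ (reflecting the fact that the $n$-th diagonal Gell-Mann matrix would be proportional to the identity and hence not traceless).
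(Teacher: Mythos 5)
Your proposal is correct and follows exactly the route the paper intends when it declares the result ``immediate'' from \cref{GMprop1}--\cref{GMprop4}: orthonormality from \cref{GMprop3} and \cref{GMprop4}, membership in the traceless self-adjoint subspace from \cref{GMprop1} and \cref{GMprop2}, and the cardinality count $n(n-1)+(n-1)=n^{2}-1$ matching $\mathrm{dim}_{\mathbb{R}}$ of that subspace (the paper records this count implicitly by announcing ``$d^{2}-1$ matrices'' in the definition). You have simply made explicit the dimension bookkeeping that the paper leaves to the reader.
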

\begin{proposition}\label{GMprop6} $\{\mathds{1}_{n},G_{r,s}\}$ \textit{constitute an orthonormal basis for the real vector space} $\mathcal{M}_{n}(\mathbb{C})_{\text{sa}}$\textit{.}
\begin{proof} Immediate consequence of \cref{GMprop5} and $\mathrm{Tr}(\mathds{1}_{n}G_{r,s})=\mathrm{Tr}(G_{r,s})=0$ from \cref{GMprop2}.
\end{proof}
\end{proposition}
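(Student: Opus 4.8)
The plan is to combine the orthonormality already established in \cref{GMprop5} with a one-dimensional augmentation by the identity, reducing the claim to a dimension count together with a single orthogonality computation. In outline: the $G_{r,s}$ already furnish an orthonormal basis of the traceless self-adjoint matrices, and $\mathds{1}_{n}$ supplies the one remaining direction (the trace), orthogonal to all of them.

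First I would record the membership and spanning facts supplied by the preceding propositions. By \cref{GMprop1} every $G_{r,s}$ is self-adjoint, and $\mathds{1}_{n}$ is trivially self-adjoint, so the whole set $\{\mathds{1}_{n},G_{r,s}\}$ lies in $\mathcal{M}_{n}(\mathbb{C})_{\text{sa}}$. By \cref{GMprop5} the $n^{2}-1$ matrices $\{G_{r,s}\}$ form an orthonormal basis for the real subspace of \emph{traceless} self-adjoint $n\times n$ complex matrices, a space of real dimension $n^{2}-1$.

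Next I would verify that $\mathds{1}_{n}$ is orthogonal to each $G_{r,s}$ and is genuinely a new direction. Under the Hilbert-Schmidt inner product one computes
\begin{equation}
\mathrm{Tr}\big(\mathds{1}_{n}G_{r,s}\big)=\mathrm{Tr}\big(G_{r,s}\big)=0
\end{equation}
by \cref{GMprop2}, giving orthogonality. Since $\mathrm{Tr}(\mathds{1}_{n}^{2})=\mathrm{Tr}(\mathds{1}_{n})=n\neq 0$, the identity has nonzero norm and, being non-traceless, lies outside $\mathrm{span}_{\mathbb{R}}\{G_{r,s}\}$; thus adjoining it yields $n^{2}$ mutually orthogonal nonzero elements.

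Finally I would close by the dimension count: $\mathcal{M}_{n}(\mathbb{C})_{\text{sa}}$ is $n^{2}$-dimensional over $\mathbb{R}$, so any $n^{2}$ mutually orthogonal nonzero vectors form an orthogonal basis. Equivalently, one has the orthogonal direct-sum decomposition of $\mathcal{M}_{n}(\mathbb{C})_{\text{sa}}$ into $\mathrm{span}_{\mathbb{R}}\{\mathds{1}_{n}\}$ and the traceless subspace, and a basis of each summand unions to a basis of the whole. There is no substantive obstacle here; the only point requiring care is normalization, since the Hilbert-Schmidt norm of $\mathds{1}_{n}$ is $\sqrt{n}$ rather than $1$, so a strictly \emph{orthonormal} basis is obtained by replacing $\mathds{1}_{n}$ with $\mathds{1}_{n}/\sqrt{n}$ (or by adopting the scaled convention implicit in the statement).
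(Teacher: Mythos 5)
Your proof is correct and follows essentially the same route as the paper's one-line argument: invoke \cref{GMprop5} for the traceless part, use $\mathrm{Tr}(\mathds{1}_{n}G_{r,s})=0$ from \cref{GMprop2} for orthogonality, and count dimensions. Your closing remark that $\mathds{1}_{n}$ has Hilbert-Schmidt norm $\sqrt{n}$, so the stated basis is only orthogonal unless one replaces $\mathds{1}_{n}$ by $\mathds{1}_{n}/\sqrt{n}$, is a fair catch that the paper glosses over.
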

\newpage
\section{Universal Tensor Product of Two Qudits}\label{2copies}
In this appendix, we explicitly compute the following universal tensor product
\begin{equation}
\mathcal{M}_{n}(\mathbb{C})_{\text{sa}}\tilde{\otimes}\mathcal{M}_{m}(\mathbb{C})_{\text{sa}}\cong\mathcal{M}_{nm}(\mathbb{C})_{\text{sa}}\oplus\mathcal{M}_{nm}(\mathbb{C})_{\text{sa}}
\end{equation}
To begin, we compress our notation so that $\mathcal{A}=\mathcal{M}_{n}(\mathbb{C})_{\text{sa}}$ and $\mathcal{B}=\mathcal{M}_{m}(\mathbb{C})_{\text{sa}}$. Let $\mathfrak{C}$ be the \textsc{eja} generated by $\psi_{\mathcal{A}}(\mathcal{A})\otimes\psi_{\mathcal{B}}(\mathcal{B})$. One notes that $\mathfrak{C}$ is isomorphic to a Jordan subalgebra of $\mathcal{M}_{nm}\big(\mathbb{C}\big)_{\text{sa}}\oplus\;\mathcal{M}_{nm}\big(\mathbb{C}\big)_{\text{sa}}$ generated by elements of the form $(a\otimes b)\oplus (a\otimes b^{t})$, where $a\in\mathcal{A}$ and $b\in\mathcal{B}$. There exist orthonormal bases for $\mathcal{A}$ and $\mathcal{B}$--- for instance, the sets of respective generalized Gell-Mann matrices --- whose tensor products form an orthonormal basis for $\mathcal{M}_{nm}(\mathbb{C})_{\text{sa}}$. Thus, $\{x\oplus x^{1\otimes t}:x\in\mathcal{M}_{nm}(\mathbb{C})_{\text{sa}}\}\subset\mathfrak{C}$. Now, let $\{|e_{r}\rangle\in\mathbb{C}^{n}\;|\;\forall r,s\in\{0,\dots,n-1\}:\langle e_{r}|e_{s}\rangle=\delta_{rs}\}$ and $\{|f_{u}\rangle\in\mathbb{C}^{m}\;|\;\forall u,v\in\{0,\dots,m-1\}:\langle f_{u}|f_{v}\rangle=\delta_{uv}\}$ be orthonormal bases $\mathbb{C}^{n}$ and $\mathbb{C}^{m}$, respectively. Without the loss of generality, assume that $n\geq m$. Let $r,s\in\{0,\dots,m-1\}$ such that $r\neq s$. Let
\begin{equation}
|\Phi^{+}_{r,s}\rangle=\frac{1}{\sqrt{2}}\big(|e_{r}\rangle\otimes |f_{r}\rangle+|e_{s}\rangle\otimes|f_{s}\rangle\big)\text{.}
\end{equation}
Let 
\begin{equation}
|\Psi^{+}_{r,s}\rangle=\frac{1}{\sqrt{2}}\big(|e_{r}\rangle\otimes |f_{s}\rangle+|e_{s}\rangle\otimes|f_{r}\rangle\big)\text{.}
\end{equation}
Define
\begin{eqnarray}
x_{\Phi^{+}_{r,s}}&\equiv&|\Phi^{+}_{r,s}\rangle\langle\Phi^{+}_{r,s}|\nonumber\\
&=&\frac{1}{2}\Big(|e_{r}\rangle\langle e_{r}|\otimes|f_{r}\rangle\langle f_{r}|+|e_{s}\rangle\langle e_{s}|\otimes|f_{s}\rangle\langle f_{s}|+|e_{s}\rangle\langle e_{r}|\otimes|f_{s}\rangle\langle f_{r}|+|e_{r}\rangle\langle e_{s}|\otimes|f_{r}\rangle\langle f_{s}|\Big)
\end{eqnarray}
Define
\begin{eqnarray}
x_{\Psi^{+}_{r,s}}&\equiv&|\Psi^{+}_{r,s}\rangle\langle\Psi^{+}_{r,s}|\nonumber\\
&=&\frac{1}{2}\Big(|e_{r}\rangle\langle e_{r}|\otimes|f_{s}\rangle\langle f_{s}|+|e_{s}\rangle\langle e_{s}|\otimes|f_{r}\rangle\langle f_{r}|+|e_{r}\rangle\langle e_{s}|\otimes|f_{s}\rangle\langle f_{r}|+|e_{s}\rangle\langle e_{r}|\otimes|f_{r}\rangle\langle f_{s}|\Big)
\end{eqnarray}
By virtue of the orthogonality of $|\Phi^{+}_{r,s}\rangle$ and $|\Psi^{+}_{r,s}\rangle$, one has that $x_{\Phi^{+}_{r,s}}\jProd x_{\Psi^{+}_{r,s}}=0$.
Therefore
\begin{equation}
\left\{0\oplus\big(x_{\Phi^{+}_{r,s}}^{1\otimes t}\jProd x_{\Psi^{+}_{r,s}}^{1\otimes t}\big),\big(x_{\Phi^{+}_{r,s}}^{1\otimes t}\jProd x_{\Psi^{+}_{r,s}}^{1\otimes t}\big)\oplus 0\right\}\subset\mathfrak{C}\text{.}
\end{equation} 
Next, observe that
\begin{equation}
x_{\Phi^{+}_{r,s}}^{1\otimes t}=\frac{1}{2}\Big(|e_{r}\rangle\langle e_{r}|\otimes|f_{r}\rangle\langle f_{r}|+|e_{s}\rangle\langle e_{s}|\otimes|f_{s}\rangle\langle f_{s}|+|e_{s}\rangle\langle e_{r}|\otimes|f_{r}\rangle\langle f_{s}|+|e_{r}\rangle\langle e_{s}|\otimes|f_{s}\rangle\langle f_{r}|\Big)\text{.}
\end{equation}
Observe that
\begin{equation}
x_{\Psi^{+}_{r,s}}^{1\otimes t}=\frac{1}{2}\Big(|e_{r}\rangle\langle e_{r}|\otimes|f_{s}\rangle\langle f_{s}|+|e_{s}\rangle\langle e_{s}|\otimes|f_{r}\rangle\langle f_{r}|+|e_{r}\rangle\langle e_{s}|\otimes|f_{r}\rangle\langle f_{s}|+|e_{s}\rangle\langle e_{r}|\otimes|f_{s}\rangle\langle f_{r}|\Big)\text{.}
\end{equation}
We now calculate
\begin{equation}
x_{\Phi^{+}_{r,s}}^{1\otimes t}x_{\Psi^{+}_{r,s}}^{1\otimes t}=\frac{1}{4}\Big(|e_{r}\rangle\langle e_{s}|\otimes|f_{r}\rangle\langle f_{s}|+|e_{s}\rangle\langle e_{r}|\otimes|f_{s}\rangle\langle f_{r}|+|e_{s}\rangle\langle e_{r}|\otimes|f_{r}\rangle\langle f_{s}|+|e_{r}\rangle\langle e_{s}|\otimes|f_{s}\rangle\langle f_{r}|\Big).
\end{equation}
Notice that $\Big(x_{\Phi^{+}_{r,s}}^{1\otimes t}x_{\Psi^{+}_{r,s}}^{1\otimes t}\Big)^{*}=x_{\Phi^{+}_{r,s}}^{1\otimes t}x_{\Psi^{+}_{r,s}}^{1\otimes t}$. Thus, given that $(x_{\Phi^{+}_{r,s}}^{1\otimes t})^{*}=x_{\Phi^{+}_{r,s}}^{1\otimes t}$ and $(x_{\Psi^{+}_{r,s}}^{1\otimes t})^{*}=x_{\Psi^{+}_{r,s}}^{1\otimes t}$, one has that
\begin{equation}
x_{\Phi^{+}_{r,s}}^{1\otimes t}\jProd x_{\Psi^{+}_{r,s}}^{1\otimes t}=\frac{1}{4}\Big(|e_{r}\rangle\langle e_{s}|\otimes|f_{r}\rangle\langle f_{s}|+|e_{s}\rangle\langle e_{r}|\otimes|f_{s}\rangle\langle f_{r}|+|e_{s}\rangle\langle e_{r}|\otimes|f_{r}\rangle\langle f_{s}|+|e_{r}\rangle\langle e_{s}|\otimes|f_{s}\rangle\langle f_{r}|\Big).
\end{equation}
Next, we calculate
\begin{eqnarray}
&&\Big(x_{\Phi^{+}_{r,s}}^{1\otimes t}\jProd x_{\Psi^{+}_{r,s}}^{1\otimes t}\Big)\jProd\Big(x_{\Phi^{+}_{r,s}}^{1\otimes t}\jProd x_{\Psi^{+}_{r,s}}^{1\otimes t}\Big)\nonumber\\
&=&\frac{1}{16}\Big(|e_{r}\rangle\langle e_{r}|\otimes|f_{r}\rangle\langle f_{r}|+|e_{s}\rangle\langle e_{s}|\otimes|f_{s}\rangle\langle f_{s}|+|e_{s}\rangle\langle e_{s}|\otimes|f_{r}\rangle\langle f_{r}|+|e_{r}\rangle\langle e_{r}|\otimes|f_{s}\rangle\langle f_{s}|\Big).
\end{eqnarray}
Define
\begin{eqnarray}
&&y\nonumber\\
&\equiv&16\sum_{r=0}^{m-1}\sum_{s=r+1}^{m-1}\Big(x_{\Phi^{+}_{r,s}}^{1\otimes t}\jProd x_{\Psi^{+}_{r,s}}^{1\otimes t}\Big)\jProd\Big(x_{\Phi^{+}_{r,s}}^{1\otimes t}\jProd x_{\Psi^{+}_{r,s}}^{1\otimes t}\Big)\\
&=&\sum_{r=0}^{m-1}\sum_{s=r+1}^{m-1}\Big(|e_{r}\rangle\langle e_{r}|\otimes|f_{r}\rangle\langle f_{r}|+|e_{s}\rangle\langle e_{s}|\otimes|f_{s}\rangle\langle f_{s}|+|e_{s}\rangle\langle e_{s}|\otimes|f_{r}\rangle\langle f_{r}|+|e_{r}\rangle\langle e_{r}|\otimes|f_{s}\rangle\langle f_{s}|\Big)
\end{eqnarray}
Observe that
\begin{equation}
\sum_{r=0}^{m-1}\sum_{s=r+1}^{m-1}\Big(|e_{r}\rangle\langle e_{r}|\otimes|f_{r}\rangle\langle f_{r}|\Big)=\sum_{r=0}^{m-1}\Big(|e_{r}\rangle\langle e_{r}|\otimes|f_{r}\rangle\langle f_{r}|\Big)(m-1-r)\text{,}
\end{equation}
and
\begin{equation}
\sum_{r=0}^{m-1}\sum_{s=r+1}^{m-1}\Big(|e_{s}\rangle\langle e_{s}|\otimes|f_{s}\rangle\langle f_{s}|\Big)=\sum_{s=1}^{m-1}\Big(|e_{s}\rangle\langle e_{s}|\otimes|f_{s}\rangle\langle f_{s}|\Big)s=\sum_{r=0}^{m-1}\Big(|e_{r}\rangle\langle e_{r}|\otimes|f_{r}\rangle\langle f_{r}|\Big)r\text{.}
\end{equation}
Furthermore, observe that
\begin{equation}
\sum_{r=0}^{m-1}\sum_{s=r+1}^{m-1}\Big(|e_{s}\rangle\langle e_{s}|\otimes|f_{r}\rangle\langle f_{r}|+|e_{r}\rangle\langle e_{r}|\otimes|f_{s}\rangle\langle f_{s}|\Big)=\sum_{r=0}^{m-1}\sum_{s=0}^{m-1}\Big(|e_{s}\rangle\langle e_{s}|\otimes|f_{r}\rangle\langle f_{r}|\Big)(1-\delta_{rs})\text{.}
\end{equation}
Therefore,
\begin{equation}
y=\sum_{r=0}^{m-1}\Big(|e_{r}\rangle\langle e_{r}|\otimes|f_{r}\rangle\langle f_{r}|\Big)(m-1)+\sum_{r=0}^{m-1}\sum_{s=0}^{m-1}\Big(|e_{s}\rangle\langle e_{s}|\otimes|f_{r}\rangle\langle |f_{r}|\Big)(1-\delta_{rs})\text{.}
\end{equation}
By linearity of the generation of $\mathfrak{C}$, one has that
\begin{equation}
\left\{0\oplus y,y\oplus 0\right\}\subset\mathfrak{C}\text{.}
\end{equation} 
Now, $\forall r\in\{0,\dots,m-1\}$, define $z_{r}^{*}=z_{r}=z_{r}^{1\otimes t}\equiv(|e_{r}\rangle\langle e_{r}|\otimes |f_{r}\rangle\langle f_{r}|)$. Observe that
\begin{equation}
z\equiv y-\frac{m-2}{m-1}\sum_{r=0}^{m-1}z_{r}\jProd y=\sum_{r=0}^{m-1}\sum_{s=0}^{m-1}\Big(|e_{r}\rangle\langle e_{r}|\otimes |f_{s}\rangle\langle f_{s}|\Big)\text{.}
\end{equation}
Thus
\begin{equation}
\left\{0\oplus z,z\oplus 0\right\}\subset\mathfrak{C}\text{.}
\end{equation} 
In the case $m=n$, note that $z=\mathds{1}_{\mathbb{C}^{n^{2}}}$.\\[0.3cm] 
Now, suppose that $n>m$. In this case, $\forall r\in\{m,\dots,n-1\}$ and $\forall s\in\{0,\dots,m-1\}$, define
\begin{equation}
w_{r,s}^{*}=w_{r,s}=w_{r,s}^{1\otimes t}\equiv\Big(\big(|e_{r}\rangle\langle e_{0}|+|e_{0}\rangle\langle e_{r}|\big)\otimes |f_{s}\rangle\langle f_{s}|\Big)\text{.}
\end{equation}
We calculate 
\begin{equation}
w_{r,s}\jProd z=\frac{1}{2}w_{r,s}\text{,} 
\end{equation}
and
\begin{equation}
w_{r,s}\jProd w_{r,s}=\frac{1}{2}\Big(|e_{r}\rangle\langle e_{r}|\otimes |f_{s}\rangle\langle f_{s}|+|e_{0}\rangle\langle e_{0}|\otimes |f_{s}\rangle\langle f_{s}|\Big)\text{.}
\end{equation}
Note that
\begin{equation}
\Big(|e_{0}\rangle\langle e_{0}|\otimes |f_{s}\rangle\langle f_{s}|\Big)\jProd z=\Big(|e_{0}\rangle\langle e_{0}|\otimes |f_{s}\rangle\langle f_{s}|\Big)\text{.}
\end{equation}
Thus,
\begin{equation}
z+8\sum_{r=m}^{n-1}\sum_{s=0}^{m-1}\big(w_{r,s}\jProd z\big)\jProd\big(w_{r,s}\jProd z\big)-\sum_{s=0}^{m-1}\Big(\big(|e_{0}\rangle\langle e_{0}|\otimes |f_{s}\rangle\langle f_{s}|\big)\jProd z\Big)=\mathds{1}_{\mathbb{C}^{nm}}\text{.}
\end{equation}
Therefore,
\begin{equation}
\left\{0\oplus \mathds{1}_{\mathbb{C}^{nm}},\mathds{1}_{\mathbb{C}^{nm}}\oplus 0\right\}\subset\mathfrak{C}\text{.}
\end{equation} 
Recalling that $\{x\oplus x^{1\otimes t}:x\in\mathcal{M}_{nm}(\mathbb{C})_{\text{sa}}\}\subset\mathfrak{C}$, we complete the proof.
\begin{flushright}
$\square$
\end{flushright}
\newpage
\section{Univesal Tensor Product of Two Quabits}
\label{quabitApp}
\noindent In this appendix, we explictly show that
\begin{equation}
\mathcal{M}_{2}(\mathbb{H})_{\text{sa}}\tilde{\otimes}\mathcal{M}_{2}(\mathbb{H})_{\text{sa}}\cong\mathcal{M}_{16}(\mathbb{R})_{\text{sa}}\oplus\mathcal{M}_{16}(\mathbb{R})_{\text{sa}}\oplus\mathcal{M}_{16}(\mathbb{R})_{\text{sa}}\oplus\mathcal{M}_{16}(\mathbb{R})_{\text{sa}}
\end{equation}
To begin, let $x=\mathcal{M}_{n}(\mathbb{C})$ and $y\in\mathcal{M}_{m}(\mathbb{C})$. We shall adopt the following conventions (with $\mathbf{0}_{n,m}$ the $m\times n$ zero matrix):
\begin{equation}
x\oplus y=\begin{pmatrix} x & \mathbf{0}_{n,m} \\ \mathbf{0}_{m,n} & y \end{pmatrix}\in\mathcal{M}_{n+m}(\mathbb{C})\text{,}\hspace{1cm}x\otimes y=\begin{pmatrix} y_{1,1}x  & \dots & y_{1,m}x \\ \vdots  & \ddots  & \vdots \\ y_{m,1}x & \dots & y_{m,m}x\end{pmatrix}\in\mathcal{M}_{m}\Big(\mathcal{M}_{n}(\mathbb{C})\Big)\text{,}
\end{equation}
The following is a well known. We provide a proof for completeness.
\begin{proposition}\label{prop1p1} \textit{As complex *-algebras, with} $n_{1},n_{2},m_{1},m_{2}\in\mathbb{N}$,
\begin{eqnarray} &&\Big(\mathcal{M}_{n_{1}}(\mathbb{C})\oplus\mathcal{M}_{m_{1}}(\mathbb{C})\Big)\otimes\Big(\mathcal{M}_{n_{2}}(\mathbb{C})\oplus\mathcal{M}_{m_{2}}(\mathbb{C})\Big)\nonumber\\
&\cong& \Big(\mathcal{M}_{n_{1}}(\mathbb{C})\otimes\mathcal{M}_{n_{2}}(\mathbb{C})\Big)\oplus\Big(\mathcal{M}_{n_{1}}(\mathbb{C})\otimes\mathcal{M}_{m_{2}}(\mathbb{C})\Big)\oplus\Big(\mathcal{M}_{m_{1}}(\mathbb{C})\otimes\mathcal{M}_{n_{2}}(\mathbb{C})\Big)\oplus\Big(\mathcal{M}_{m_{1}}(\mathbb{C})\otimes\mathcal{M}_{m_{2}}(\mathbb{C})\Big)\text{.}
\end{eqnarray}
\begin{proof} Let $\{e_{1},\dots,e_{n_{1}+m_{1}}\}\subset\mathbb{C}^{n_{1}+m_{1}}$ be the orthonormal basis ($e_{k}^{*}e_{j}=\delta_{j,k}$) underlying $\mathcal{M}_{n_{1}}(\mathbb{C})\oplus\mathcal{M}_{m_{1}}(\mathbb{C})$; hence
\begin{equation}
\mathcal{M}_{n_{1}}(\mathbb{C})\oplus\mathcal{M}_{m_{1}}(\mathbb{C})\ni x\oplus y =\sum_{j=1}^{n_{1}}\sum_{k=1}^{n_{1}}e_{j}x_{j,k}e_{k}^{*}+\sum_{t=n_{1}+1}^{n_{1}+m_{1}}\sum_{v=n_{1}+1}^{n_{1}+m_{1}}e_{t}y_{t-n_{1},v-n_{1}}e_{v}^{*}\text{,}
\end{equation}
with $x\oplus y$ arbitrary. Let $\{f_{1},\dots,f_{n_{2}+m_{2}}\}\subset\mathbb{C}^{n_{2}+m_{2}}$ be the orthonormal basis underlying $\mathcal{M}_{n_{2}}(\mathbb{C})\oplus\mathcal{M}_{m_{2}}(\mathbb{C})$; hence
\begin{equation}
\mathcal{M}_{n_{2}}(\mathbb{C})\oplus\mathcal{M}_{m_{2}}(\mathbb{C})\ni w\oplus z =\sum_{p=1}^{n_{2}}\sum_{q=1}^{n_{2}}f_{p}w_{p,q}f_{q}^{*}+\sum_{r=n_{2}+1}^{n_{2}+m_{2}}\sum_{s=n_{2}+1}^{n_{2}+m_{2}}f_{r}z_{r-n_{2},s-n_{2}}f_{s}^{*}\text{.}
\end{equation}
with $w\oplus z$ arbitrary. Then $X\equiv (x\oplus y)\otimes (w\oplus z)\in\mathcal{M}_{(n_{1}+m_{1})(n_{2}+m_{2})=n_{1}n_{2}+n_{1}m_{2}+m_{1}n_{2}+m_{1}m_{2}}(\mathbb{C})$ is as follows
\begin{eqnarray}
X&=&\sum_{j=1}^{n_{1}}\sum_{k=1}^{n_{1}}\sum_{p=1}^{n_{2}}\sum_{q=1}^{n_{2}}(e_{j}e_{k}^{*}\otimes f_{p}f_{q}^{*})x_{j,k}w_{p,q}+\sum_{t=n_{1}+1}^{n_{1}+m_{1}}\sum_{v=n_{1}+1}^{n_{1}+m_{1}}\sum_{p=1}^{n_{2}}\sum_{q=1}^{n_{2}}(e_{t}e_{v}^{*}\otimes f_{p}f_{q}^{*})y_{t-n_{1},v-n_{1}}w_{p,q}\nonumber \\
&+&\sum_{j=1}^{n_{1}}\sum_{k=1}^{n_{1}}\sum_{p=n_{2}+1}^{n_{2}+m_{2}}\sum_{q=n_{2}+1}^{n_{2}+m_{2}}(e_{j}e_{k}^{*}\otimes f_{p}f_{q}^{*})x_{j,k}z_{p-n_{2},q-n_{2}}\nonumber\\
&+&\sum_{t=n_{1}+1}^{n_{1}+m_{1}}\sum_{v=n_{1}+1}^{n_{1}+m_{1}}\sum_{n_{2}+1}^{n_{2}+m_{2}}\sum_{q=n_{2}+1}^{n_{2}+m_{2}}(e_{t}e_{v}^{*}\otimes f_{p}f_{q}^{*})y_{t-n_{1},v-n_{1}}z_{p-n_{2},q-n_{2}} \nonumber\\
&=&\sum_{j=1}^{n_{1}}\sum_{k=1}^{n_{1}}e_{j}x_{j,k}e_{k}^{*}\otimes\left(\sum_{p=1}^{n_{2}}\sum_{q=1}^{n_{2}}f_{p}w_{p,q}f_{q}^{*}+\sum_{r=n_{2}+1}^{n_{2}+m_{2}}\sum_{q=n_{2}+1}^{n_{2}+m_{2}}f_{r}z_{r-n_{2},s-n_{2}}f_{s}^{*}\right)\nonumber\\
&+&\sum_{t=n_{1}+1}^{n_{1}+m_{1}}\sum_{v=n_{1}+1}^{n_{1}+m_{1}}e_{t}y_{t-n_{1},v-n_{1}}e_{v}^{*}\otimes\left(\sum_{p=1}^{n_{2}}\sum_{q=1}^{n_{2}}f_{p}w_{p,q}f_{q}^{*}+\sum_{r=n_{2}+1}^{n_{2}+m_{2}}\sum_{q=n_{2}+1}^{n_{2}+m_{2}}f_{r}z_{r-n_{2},s-n_{2}}f_{s}^{*}\right)
\label{expanded}
\end{eqnarray}
Where we have appealed to the bilinearity of \begin{equation}
\otimes::\mathcal{M}_{n_{1}+m_{1}}(\mathbb{C})\times\mathcal{M}_{n_{2}+m_{2}}(\mathbb{C})\longrightarrow\mathcal{M}_{n_{1}n_{2}+n_{1}m_{2}+m_{1}n_{2}+m_{2}n_{2}}(\mathbb{C}).
\end{equation}
\noindent Obviously 
\begin{equation}
\big\{e_{j}\otimes f_{p}\;\boldsymbol{|}\;j\in\{1,\dots,n_{1}+m_{1}\}\wedge\boldsymbol{|}\;p\in\{1,\dots,n_{2}+m_{2}\}\big\}\subset\mathbb{C}^{n_{1}n_{2}+n_{1}m_{2}+m_{1}n_{2}+m_{1}m_{2}}
\end{equation} 
is an orthonormal basis. We shall introduce another. Define $g_{\alpha}$ with $\alpha\in\{1,\dots,n_{1}n_{2}+n_{1}m_{2}+m_{1}n_{2}+m_{1}m_{2}\}$ via
\begin{equation}
g_{\alpha}=\begin{cases} 
e_{j}\otimes f_{p} & \text{with}\; 1\leq j \leq n_{1}\;\wedge\; 1\leq p\leq n_{2}\;\text{and}\;\alpha\equiv (p-1)n_{1}+j\\ 
e_{j}\otimes f_{n_{2}+p} & \text{with}\; 1\leq j \leq n_{1}\;\wedge\; 1\leq p\leq m_{2}\;\text{and}\;\alpha\equiv n_{1}n_{2}+(p-1)n_{1}+j\\
e_{n_{1}+j}\otimes f_{p} & \text{with}\; 1\leq j \leq m_{1}\;\wedge\; 1\leq p\leq n_{2}\;\text{and}\;\alpha\equiv n_{1}n_{2}+n_{1}m_{2}+(p-1)m_{1}+j\\
e_{n_{1}+j}\otimes f_{n_{2}+p} & \text{with}\; 1\leq j \leq m_{1}\;\wedge\; 1\leq p\leq m_{2}\;\text{and}\;\alpha\equiv n_{1}n_{2}+n_{1}m_{2}+m_{1}n_{2}+(p-1)m_{1}+j
\end{cases}
\label{newBasis}
\end{equation}
Any two orthonormal bases are related via a unitary transformation. Let $U::\{e_{j}\otimes f_{p}\}\longmapsto \{g_{l}\}$. Then from Eq.~\eqref{expanded},  
\begin{eqnarray}
UXU^{*}&=&\underbrace{\sum_{\alpha=1}^{n_{1}n_{2}}\sum_{\beta=1}^{n_{1}n_{2}}g_{\alpha}X_{\alpha\beta}g_{\beta}^{*}}_{\equiv A}\nonumber\\
&+&\underbrace{\sum_{\alpha=n_{1}n_{2}+1}^{n_{1}n_{2}+n_{1}m_{2}}\sum_{\beta=n_{1}n_{2}+1}^{n_{1}n_{2}+n_{1}m_{2}}g_{\alpha}X_{\alpha\beta}g_{\beta}^{*}}_{\equiv B}\nonumber\\
&+&\underbrace{\sum_{\alpha=n_{1}n_{2}+n_{1}m_{2}+1}^{n_{1}n_{2}+n_{1}m_{2}+m_{1}n_{2}}\sum_{\beta=n_{1}n_{2}+n_{1}m_{2}+1}^{n_{1}n_{2}+n_{1}m_{2}+m_{1}n_{2}}g_{\alpha}X_{\alpha\beta}g_{\beta}^{*}}_{\equiv C}\nonumber\\
&+&\underbrace{\sum_{\alpha=n_{1}n_{2}+n_{1}m_{2}+m_{1}n_{2}+1}^{n_{1}n_{2}+n_{1}m_{2}+m_{1}n_{2}+m_{1}m_{2}}\sum_{\alpha=n_{1}n_{2}+n_{1}m_{2}+m_{1}n_{2}+1}^{n_{1}n_{2}+n_{1}m_{2}+m_{1}n_{2}+m_{1}m_{2}}g_{\alpha}X_{\alpha\beta}g_{\beta}^{*}}_{\equiv D}\nonumber\\[0.5cm]
&=& A\oplus B\oplus C\oplus D\text{,}
\end{eqnarray}
Notice from Eq.~\eqref{newBasis} that $\alpha$ and $\beta$ are uniquely determined from $(p,q)$ according to our tensor product convention, specifically $\alpha=1\iff (p,q)=(1,1)$, $\alpha=2\iff (p,q)=(1,2)$,\dots, $\alpha=n_{2}\iff (p,q)=(1,n_{2})$, and so. Explicitly, then, we have from Eq.~\eqref{expanded}
\begin{eqnarray}
1\leq\alpha,\beta\leq n_{1}n_{2}&\implies& \big(1\leq j,k\leq n_{1}\;\wedge\;1\leq p,q\leq n_{2}\big)\nonumber\\&\implies& X_{\alpha,\beta}=x_{j,k}w_{p,q}\\[0.3cm]
n_{1}n_{2}+1\leq\alpha,\beta\leq n_{1}n_{2}+n_{1}m_{2}&\implies& \big(1\leq j,k\leq n_{1}\;\wedge\;1\leq p,q\leq m_{2}\big)\nonumber\\&\implies& X_{\alpha,\beta}=x_{j,k}z_{p,q}\\[0.3cm]
n_{1}n_{2}+n_{1}m_{2}+1\leq\alpha,\beta\leq n_{1}n_{2}+n_{1}m_{2}+m_{1}n_{2}&\implies& \big(1\leq j,k\leq m_{1}\;\wedge\;1\leq p,q\leq n_{2}\big)\nonumber\\ &\implies& X_{\alpha,\beta}=y_{j,k}w_{p,q}\\[0.3cm]
n_{1}n_{2}+n_{1}m_{2}+m_{1}n_{2}\leq\alpha,\beta\leq n_{1}n_{2}+n_{1}m_{2}+m_{1}n_{2}+m_{1}m_{2}&\implies& \big(1\leq j,k\leq m_{1}\;\wedge\;1\leq p,q\leq m_{2}\big)\nonumber\\ &\implies& X_{\alpha,\beta}=y_{j,k}w_{p,q}
\end{eqnarray}
Therefore, with respect to $\{g_{l}\}$,
\begin{eqnarray}
UXU^{*}&=&\begin{pmatrix} 
x\otimes w & \mathbf{0}_{n_{1}m_{2}} & \mathbf{0}_{m_{1}n_{2}} & \mathbf{0}_{m_{1}m_{2}}\\ 
\mathbf{0}_{n_{1}n_{2}} & x\otimes z & \mathbf{0}_{m_{1}n_{2}} & \mathbf{0}_{m_{1}m_{2}}\\
\mathbf{0}_{n_{1}n_{2}} & \mathbf{0}_{n_{1}m_{2}} & y\otimes w & \mathbf{0}_{m_{1}m_{2}}\\
\mathbf{0}_{n_{1}n_{2}} & \mathbf{0}_{n_{1}m_{2}} & \mathbf{0}_{m_{1}n_{2}} & y\otimes z
\end{pmatrix}\\[0.3cm]
&=&(x\otimes w)\oplus(x\otimes z)\oplus(y\otimes w)\oplus(y\otimes z)\text{.}
\end{eqnarray}
It remains only to note that
\begin{eqnarray}
\mathcal{U}&:\mathcal{M}_{n_{1}n_{2}+n_{1}m_{2}+m_{1}n_{2}+m_{1}m_{2}}(\mathbb{C})\longrightarrow\mathcal{M}_{n_{1}n_{2}+n_{1}m_{2}+m_{1}n_{2}+m_{1}m_{2}}(\mathbb{C})::X\longmapsto UXU^{*}
\end{eqnarray}
is clearly a complex *-algebra automorphism --- in detail, $\forall X,Y$ one has $\mathcal{U}(X)=\mathcal{U}(Y)\implies X=Y$, and $\forall Z$ $\exists W$ such that $\mathcal{U}(W)=Z$, namely $W=U^{*}ZU$; furthermore $\mathcal{U}(X)\mathcal{U}(Y)=UXU^{*}UYU^{*}=UXYU^{*}=\mathcal{U}(XU)$, and $\mathcal{U}(X)^{*}=(UXU^{*})^{*}=UX^{*}U^{*}=\mathcal{U}(X^{*})$ --- and from all of the foregoing we have shown that
\begin{eqnarray}
\mathcal{U}&::&\Big(\mathcal{M}_{n_{1}}(\mathbb{C})\oplus\mathcal{M}_{m_{1}}(\mathbb{C})\Big)\otimes\Big(\mathcal{M}_{n_{2}}(\mathbb{C})\oplus\mathcal{M}_{m_{2}}(\mathbb{C})\Big)
\nonumber\\ &\longmapsto&\Big(\mathcal{M}_{n_{1}}(\mathbb{C})\otimes\mathcal{M}_{n_{2}}(\mathbb{C})\Big)\oplus\Big(\mathcal{M}_{n_{1}}(\mathbb{C})\otimes\mathcal{M}_{m_{2}}(\mathbb{C})\Big)\oplus\Big(\mathcal{M}_{m_{1}}(\mathbb{C})\otimes\mathcal{M}_{n_{2}}(\mathbb{C})\Big)\oplus\Big(\mathcal{M}_{m_{1}}(\mathbb{C})\otimes\mathcal{M}_{m_{2}}(\mathbb{C})\Big)\nonumber\\
&&
\label{autoMorph}
\end{eqnarray}
\end{proof}
\end{proposition}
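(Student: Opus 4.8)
The plan is to prove this as an instance of the general fact that, for finite-dimensional unital complex $\ast$-algebras, the algebraic tensor product distributes over direct sums in each variable, combined with a check that the resulting isomorphism respects the $\ast$-structure. I would isolate the single-variable statement $(\mathcal{A}\oplus\mathcal{B})\otimes\mathcal{C}\cong(\mathcal{A}\otimes\mathcal{C})\oplus(\mathcal{B}\otimes\mathcal{C})$ as a lemma, and then apply it three times: once to split off the first direct summand, and then again to each resulting tensor product to split over the second direct sum.

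The cleanest route to the single-variable statement uses central idempotents. Writing $e_{\mathcal{A}}=(1_{\mathcal{A}},0)$ and $e_{\mathcal{B}}=(0,1_{\mathcal{B}})$ in $\mathcal{A}\oplus\mathcal{B}$, these are orthogonal self-adjoint central idempotents with $e_{\mathcal{A}}+e_{\mathcal{B}}=1_{\mathcal{A}\oplus\mathcal{B}}$. Tensoring with $1_{\mathcal{C}}$, the elements $p_{1}=e_{\mathcal{A}}\otimes 1_{\mathcal{C}}$ and $p_{2}=e_{\mathcal{B}}\otimes 1_{\mathcal{C}}$ are again orthogonal self-adjoint central idempotents summing to the identity of $(\mathcal{A}\oplus\mathcal{B})\otimes\mathcal{C}$. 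Hence $(\mathcal{A}\oplus\mathcal{B})\otimes\mathcal{C}=p_{1}\big[(\mathcal{A}\oplus\mathcal{B})\otimes\mathcal{C}\big]p_{1}\oplus p_{2}\big[(\mathcal{A}\oplus\mathcal{B})\otimes\mathcal{C}\big]p_{2}$ as a $\ast$-algebra, and the corner $p_{1}[\cdots]p_{1}$ is $\ast$-isomorphic to $\mathcal{A}\otimes\mathcal{C}$ via the map induced by $a\otimes c\mapsto (e_{\mathcal{A}}ae_{\mathcal{A}})\otimes c$, with inverse induced by the inclusion $\mathcal{A}\hookrightarrow\mathcal{A}\oplus\mathcal{B}$. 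Self-adjointness and centrality of the idempotents guarantee that each corner is a $\ast$-subalgebra and that the decomposition is an internal direct sum of two-sided ideals, so multiplication and involution are respected automatically.

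For the fourfold statement I would then apply this twice. First, decomposing over the first direct sum gives $\big(\mathcal{M}_{n_{1}}(\mathbb{C})\otimes(\mathcal{M}_{n_{2}}(\mathbb{C})\oplus\mathcal{M}_{m_{2}}(\mathbb{C}))\big)\oplus\big(\mathcal{M}_{m_{1}}(\mathbb{C})\otimes(\mathcal{M}_{n_{2}}(\mathbb{C})\oplus\mathcal{M}_{m_{2}}(\mathbb{C}))\big)$. Splitting each factor over the second direct sum then yields, in order, $(\mathcal{M}_{n_{1}}(\mathbb{C})\otimes\mathcal{M}_{n_{2}}(\mathbb{C}))\oplus(\mathcal{M}_{n_{1}}(\mathbb{C})\otimes\mathcal{M}_{m_{2}}(\mathbb{C}))\oplus(\mathcal{M}_{m_{1}}(\mathbb{C})\otimes\mathcal{M}_{n_{2}}(\mathbb{C}))\oplus(\mathcal{M}_{m_{1}}(\mathbb{C})\otimes\mathcal{M}_{m_{2}}(\mathbb{C}))$, which is exactly the right-hand side after reassembling with associativity of $\oplus$.

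An equivalent, fully explicit route — the one realized in the proof sketched above — is to fix orthonormal bases $\{e_{j}\}$ and $\{f_{p}\}$ for the two factors, form the product basis $\{e_{j}\otimes f_{p}\}$, and reorder it into blocks according to which summand each index pair lands in; the reordering is implemented by a permutation unitary $U$, and conjugation $\mathcal{U}:X\mapsto UXU^{*}$ is checked to be a $\ast$-automorphism carrying the tensor product into block-diagonal form with blocks $\mathcal{M}_{n_{1}}(\mathbb{C})\otimes\mathcal{M}_{n_{2}}(\mathbb{C})$, and so on. I expect the only real work — and hence the main obstacle — to be purely bookkeeping: getting the index-to-block correspondence and the resulting block sizes $n_{1}n_{2}, n_{1}m_{2}, m_{1}n_{2}, m_{1}m_{2}$ exactly right under the chosen tensor-product convention, and confirming that $\mathcal{U}$ preserves both the product and the involution. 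The abstract idempotent argument sidesteps this bookkeeping entirely, so I would present it as the primary proof and relegate the explicit unitary conjugation to a concrete confirmation.
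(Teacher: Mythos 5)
Your proposal is correct, but your primary argument runs along a genuinely different track from the paper's. The paper proves the proposition by explicit computation: it expands an arbitrary pure tensor $(x\oplus y)\otimes(w\oplus z)$ in the product of the underlying orthonormal bases, reorders the product basis $\{e_{j}\otimes f_{p}\}$ into four blocks via an explicitly indexed basis-change unitary $U$, and verifies directly that $\mathcal{U}:X\longmapsto UXU^{*}$ is a $\ast$-automorphism carrying the tensor product onto the block-diagonal algebra with blocks of sizes $n_{1}n_{2}$, $n_{1}m_{2}$, $m_{1}n_{2}$, $m_{1}m_{2}$ --- precisely the index bookkeeping you flag as the main obstacle. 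Your central-idempotent argument sidesteps all of this: since $p_{1}=e_{\mathcal{A}}\otimes 1_{\mathcal{C}}$ and $p_{2}=e_{\mathcal{B}}\otimes 1_{\mathcal{C}}$ are orthogonal self-adjoint central idempotents summing to the unit, the tensor product splits internally into two $\ast$-ideals, each corner is identified with $\mathcal{A}\otimes\mathcal{C}$ (resp.\ $\mathcal{B}\otimes\mathcal{C}$) --- injectivity of the corner map being immediate from a dimension count in finite dimensions --- and iterating the lemma over both variables yields the fourfold decomposition. This is cleaner, basis-free, and valid for arbitrary finite-dimensional unital complex $\ast$-algebras rather than just full matrix algebras. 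What the paper's computation buys in exchange is a concrete automorphism: in the very next appendix, the calculation of $\mathcal{V}_{5}\hotimes\mathcal{V}_{5}$ invokes precisely the $\mathcal{U}$ constructed here to track the images of specific generators $t_{p}\otimes t_{q}$ inside the four summands, so the explicit basis reordering is not idle bookkeeping but an input to a downstream argument (though your corner maps would serve equally well there, since they send $(x\oplus y)\otimes(w\oplus z)$ transparently to its four components). One cosmetic slip in your sketch: in identifying the corner $p_{1}[\cdots]p_{1}$ with $\mathcal{A}\otimes\mathcal{C}$, the map should be $a\otimes c\mapsto (a,0)\otimes c$ for $a\in\mathcal{A}$, since the expression $e_{\mathcal{A}}ae_{\mathcal{A}}$ only parses for $a\in\mathcal{A}\oplus\mathcal{B}$; this is just the inclusion you mention, and the argument goes through unchanged.
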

\begin{corollary}\label{cor1p2} $\Big(\mathcal{M}_{4}(\mathbb{C})\oplus\mathcal{M}_{4}(\mathbb{C})\Big)\otimes\Big(\mathcal{M}_{4}(\mathbb{C})\oplus\mathcal{M}_{4}(\mathbb{C})\Big)\cong\mathcal{M}_{16}(\mathbb{C})\oplus\mathcal{M}_{16}(\mathbb{C})\oplus\mathcal{M}_{16}(\mathbb{C})\oplus\mathcal{M}_{16}(\mathbb{C})$\textit{.}
\begin{proof} 
Immediate consequence of Proposition \hyperref[prop1p1]{1.1}.
\end{proof}
\end{corollary}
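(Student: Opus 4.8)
The plan is to read off Corollary~\ref{cor1p2} as the single specialization $n_{1}=m_{1}=n_{2}=m_{2}=4$ of Proposition~\ref{prop1p1}, since every structural ingredient needed has already been assembled in the proof of that proposition. First I would invoke Proposition~\ref{prop1p1} verbatim with this choice of indices, so that the reordering $\ast$-automorphism $\mathcal{U}$ of Eq.~\eqref{autoMorph} --- induced by the orthonormal-basis permutation $\{e_{j}\otimes f_{p}\}\mapsto\{g_{\alpha}\}$ fixed in Eq.~\eqref{newBasis} --- carries the left-hand side $\big(\mathcal{M}_{4}(\mathbb{C})\oplus\mathcal{M}_{4}(\mathbb{C})\big)\otimes\big(\mathcal{M}_{4}(\mathbb{C})\oplus\mathcal{M}_{4}(\mathbb{C})\big)$ onto the fourfold direct sum whose summands are the four pairwise tensor products, here all equal to $\mathcal{M}_{4}(\mathbb{C})\otimes\mathcal{M}_{4}(\mathbb{C})$.

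The only genuinely new step is to identify each of these four summands with $\mathcal{M}_{16}(\mathbb{C})$. Under the tensor-product convention fixed at the opening of this appendix one has $\mathcal{M}_{4}(\mathbb{C})\otimes\mathcal{M}_{4}(\mathbb{C})=\mathcal{M}_{4}\big(\mathcal{M}_{4}(\mathbb{C})\big)$, and the standard block-matrix $\ast$-isomorphism $\mathcal{M}_{m}\big(\mathcal{M}_{n}(\mathbb{C})\big)\cong\mathcal{M}_{nm}(\mathbb{C})$ specializes to $\mathcal{M}_{4}(\mathbb{C})\otimes\mathcal{M}_{4}(\mathbb{C})\cong\mathcal{M}_{16}(\mathbb{C})$. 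Applying this isomorphism blockwise to the direct sum produced by Proposition~\ref{prop1p1}, and noting that a direct sum of $\ast$-isomorphisms is again a $\ast$-isomorphism, delivers $\mathcal{M}_{16}(\mathbb{C})\oplus\mathcal{M}_{16}(\mathbb{C})\oplus\mathcal{M}_{16}(\mathbb{C})\oplus\mathcal{M}_{16}(\mathbb{C})$, which is precisely the claim.

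I expect no real obstacle: the corollary is a pure substitution, and all the actual content --- exhibiting the orthonormal basis $\{g_{\alpha}\}$ that block-diagonalizes the tensor product and verifying that the induced map is a unital $\ast$-algebra automorphism --- was already discharged in Proposition~\ref{prop1p1}. What warrants a momentary check is bookkeeping rather than mathematics: one should confirm that with all four indices equal to $4$ the four index ranges defining $g_{\alpha}$ in Eq.~\eqref{newBasis} each have cardinality $16$ and together partition $\{1,\dots,64\}$, so the four summands have common dimension $16^{2}=256$ and total dimension $4\cdot 256=1024$; this matches $\dim_{\mathbb{C}}\big((\mathcal{M}_{4}(\mathbb{C})\oplus\mathcal{M}_{4}(\mathbb{C}))\otimes(\mathcal{M}_{4}(\mathbb{C})\oplus\mathcal{M}_{4}(\mathbb{C}))\big)=32^{2}=1024$ on the left, confirming that the isomorphism accounts for the whole algebra and that no summand is absorbed or missing.
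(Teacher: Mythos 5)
Your proposal is correct and takes the same route as the paper, which simply cites Proposition~\ref{prop1p1} with $n_{1}=m_{1}=n_{2}=m_{2}=4$; the extra identification $\mathcal{M}_{4}(\mathbb{C})\otimes\mathcal{M}_{4}(\mathbb{C})\cong\mathcal{M}_{16}(\mathbb{C})$ and the dimension bookkeeping you add are harmless elaborations of what the paper leaves implicit.
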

We shall need the following.
\begin{proposition} \textit{Let} $\mathcal{A}\subseteq\mathcal{M}_{n}(\mathbb{C})_{\text{sa}}$\textit{. Let} $\mathcal{U}:\mathcal{M}_{n}(\mathbb{C})\longmapsto\mathcal{M}_{n}(\mathbb{C})$ \textit{be a complex *-algebra automorphism. Let} $\mathfrak{J}(\mathcal{A})$ \textit{denote the Jordan closure of} $\mathcal{A}$ \textit{in} $\mathcal{M}_{n}(\mathbb{C})_{\text{sa}}$ \textit{under the Jordan product} $a\jProd b=(ab+ba)/2$\textit{. Then, as Jordan algebras,}
\begin{equation}
\mathfrak{J}\Big(\mathcal{U}(\mathcal{A})\Big)\cong\mathfrak{J}(\mathcal{A})\text{.}
\end{equation}
\begin{proof} $\mathcal{U}$ is a linear bijection, hence it suffices to observe that
\begin{equation}
\mathcal{U}(X)\jProd\mathcal{U}(Y)=\big(\mathcal{U}(X)\mathcal{U}(Y)+\mathcal{U}(Y)\mathcal{U}(X)\big)/2=\big(\mathcal{U}(XY)+\mathcal{U}(YX)\big)/2=\mathcal{U}\Big((XY+YX)/2\Big)=\mathcal{U}(X\jProd Y)\text{.}
\end{equation}
\end{proof}
\end{proposition}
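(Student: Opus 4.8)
The plan is to exploit the fact that a complex $\ast$-algebra automorphism of $\mathcal{M}_{n}(\mathbb{C})$ is automatically compatible with all of the structure used to build $\mathfrak{J}(\mathcal{A})$: it preserves self-adjointness, it is $\mathbb{R}$-linear on the self-adjoint part, and it preserves the Jordan product. Once these three facts are in hand, the isomorphism $\mathfrak{J}(\mathcal{U}(\mathcal{A}))\cong\mathfrak{J}(\mathcal{A})$ follows because any bijective Jordan homomorphism intertwines with the Jordan-closure operation. First I would record that $\mathcal{U}$ carries $\mathcal{M}_{n}(\mathbb{C})_{\text{sa}}$ into itself: if $X=X^{*}$ then $\mathcal{U}(X)^{*}=\mathcal{U}(X^{*})=\mathcal{U}(X)$, so $\mathcal{U}(X)$ is again self-adjoint. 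Since $\mathcal{U}$ is a bijection of $\mathcal{M}_{n}(\mathbb{C})$ whose inverse $\mathcal{U}^{-1}$ is likewise a $\ast$-automorphism, $\mathcal{U}$ restricts to an $\mathbb{R}$-linear bijection of $\mathcal{M}_{n}(\mathbb{C})_{\text{sa}}$. In particular $\mathcal{U}(\mathcal{A})\subseteq\mathcal{M}_{n}(\mathbb{C})_{\text{sa}}$, so that $\mathfrak{J}(\mathcal{U}(\mathcal{A}))$ is well defined.

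The central (and essentially only) computation is that this restriction is a Jordan homomorphism. Since $\mathcal{U}$ is $\mathbb{C}$-linear and multiplicative with respect to the associative product, for all $X,Y\in\mathcal{M}_{n}(\mathbb{C})_{\text{sa}}$ one has $\mathcal{U}(X\jProd Y)=\mathcal{U}\big((XY+YX)/2\big)=\big(\mathcal{U}(X)\mathcal{U}(Y)+\mathcal{U}(Y)\mathcal{U}(X)\big)/2=\mathcal{U}(X)\jProd\mathcal{U}(Y)$. Thus $\mathcal{U}$ restricts to a bijective Jordan homomorphism, hence a Jordan automorphism, of the \textsc{eja} $\mathcal{M}_{n}(\mathbb{C})_{\text{sa}}$.

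It remains to transfer this to the Jordan closures, and I would do so by establishing $\mathcal{U}(\mathfrak{J}(\mathcal{A}))=\mathfrak{J}(\mathcal{U}(\mathcal{A}))$. Because $\mathcal{U}$ is $\mathbb{R}$-linear and preserves the Jordan product, $\mathcal{U}(\mathfrak{J}(\mathcal{A}))$ is itself a Jordan subalgebra of $\mathcal{M}_{n}(\mathbb{C})_{\text{sa}}$ containing $\mathcal{U}(\mathcal{A})$, so minimality of the closure gives $\mathfrak{J}(\mathcal{U}(\mathcal{A}))\subseteq\mathcal{U}(\mathfrak{J}(\mathcal{A}))$; applying the same observation to the $\ast$-automorphism $\mathcal{U}^{-1}$ and the set $\mathcal{U}(\mathcal{A})$ yields the reverse inclusion, whence equality. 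Equivalently, one may argue by induction on the generating process for $\mathfrak{J}(\mathcal{A})$ --- closure under $\mathbb{R}$-linear combinations and Jordan products --- that $\mathcal{U}$ maps each stage onto the corresponding stage in the generation of $\mathfrak{J}(\mathcal{U}(\mathcal{A}))$. Consequently $\mathcal{U}$ restricts to a bijection $\mathfrak{J}(\mathcal{A})\longrightarrow\mathfrak{J}(\mathcal{U}(\mathcal{A}))$ preserving the Jordan product, which is precisely a Jordan isomorphism.

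The only real subtlety --- the ``hard part'', such as it is --- lies in this last step: one must be careful that the closure operation $\mathfrak{J}$ genuinely commutes with the bijection $\mathcal{U}$ rather than merely inclusion in one direction. This is why I prefer the minimality argument over a direct manipulation of generators, since invoking minimality for both $\mathcal{U}$ and $\mathcal{U}^{-1}$ renders the two inclusions symmetric and immediate, whereas tracking an explicit generating sequence requires a slightly fussier inductive bookkeeping. Everything else is a routine consequence of $\mathcal{U}$ being a $\ast$-automorphism, and no deeper structural input (such as the Jordan--von Neumann--Wigner classification) is needed.
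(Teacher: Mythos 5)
Your proof is correct and follows essentially the same route as the paper's: the key observation in both is that a $\ast$-automorphism preserves the Jordan product, so that $\mathcal{U}$ restricts to a Jordan isomorphism carrying $\mathfrak{J}(\mathcal{A})$ onto $\mathfrak{J}(\mathcal{U}(\mathcal{A}))$. The only difference is that you carefully spell out the details the paper compresses into ``$\mathcal{U}$ is a linear bijection, hence it suffices to observe'' --- namely preservation of self-adjointness and the minimality argument (applied to both $\mathcal{U}$ and $\mathcal{U}^{-1}$) showing $\mathcal{U}(\mathfrak{J}(\mathcal{A}))=\mathfrak{J}(\mathcal{U}(\mathcal{A}))$ --- which is a welcome tightening rather than a departure.
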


For reference, recall the complex Pauli matrices in $\mathcal{M}_{2}(\mathbb{C})_{\text{sa}}\subset\mathcal{M}_{2}(\mathbb{C})$, these are
\begin{equation}
\sigma_{o}=\begin{pmatrix} 1 & \hspace{0.23cm}0 \\ 0 & \hspace{0.23cm}1 \end{pmatrix}=\mathds{1}_{2}\text{,}\hspace{0.4cm}\sigma_{z}=\begin{pmatrix} 1 & \hspace{0.23cm}0 \\ 0 & -1 \end{pmatrix}\text{,}\hspace{0.4cm}\sigma_{x}=\begin{pmatrix} 0 & \hspace{0.23cm}1 \\ 1 & \hspace{0.23cm}0 \end{pmatrix}\text{,}\hspace{0.4cm}\sigma_{y}=\begin{pmatrix} 0 & -i \\ i & \hspace{0.23cm}0 \end{pmatrix}\text{.}
\end{equation}
Define and observe the following (where $\mathbf{0}_{n}$ is the $n\times n$ zero matrix)
\begin{eqnarray}
s_{0}\equiv\sigma_{o}\otimes\sigma_{o}\otimes\sigma_{o}=\begin{pmatrix} \sigma_{o}\otimes\sigma_{o} & \mathbf{0}_{4} \\ \mathbf{0}_{4} & \sigma_{0}\otimes\sigma_{0}\end{pmatrix} =(\sigma_{o}\otimes\sigma_{o})\oplus(\sigma_{o}\otimes\sigma_{o})\equiv t_{0}\oplus t'_{0}\label{s0}\\
s_{1}\equiv\sigma_{z}\otimes\sigma_{o}\otimes\sigma_{o}=\begin{pmatrix} \sigma_{z}\otimes\sigma_{o} & \mathbf{0}_{4} \\ \mathbf{0}_{4} & \sigma_{z}\otimes\sigma_{o} \end{pmatrix} =(\sigma_{z}\otimes\sigma_{o})\oplus(\sigma_{z}\otimes\sigma_{o})\equiv t_{1}\oplus t'_{1}\label{s1}\\
s_{2}\equiv\sigma_{x}\otimes\sigma_{o}\otimes\sigma_{o}=\begin{pmatrix} \sigma_{x}\otimes\sigma_{o} & \mathbf{0}_{4} \\ \mathbf{0}_{4} & \sigma_{x}\otimes\sigma_{o} \end{pmatrix} =(\sigma_{x}\otimes\sigma_{o})\oplus(\sigma_{x}\otimes\sigma_{o})\equiv t_{2}\oplus t'_{2}\label{s2}\\
s_{3}\equiv\sigma_{y}\otimes\sigma_{z}\otimes\sigma_{o}=\begin{pmatrix} \sigma_{y}\otimes\sigma_{z} & \mathbf{0}_{4} \\ \mathbf{0}_{4} & \sigma_{y}\otimes\sigma_{z} \end{pmatrix} =(\sigma_{y}\otimes\sigma_{z})\oplus(\sigma_{y}\otimes\sigma_{z})\equiv t_{3}\oplus t'_{3}\label{s3}\\
s_{4}\equiv\sigma_{y}\otimes\sigma_{x}\otimes\sigma_{o}=\begin{pmatrix} \sigma_{y}\otimes\sigma_{x} & \mathbf{0}_{4} \\ \mathbf{0}_{4} & \sigma_{y}\otimes\sigma_{x}\end{pmatrix} =(\sigma_{y}\otimes\sigma_{x})\oplus(\sigma_{y}\otimes\sigma_{x})\equiv t_{4}\oplus t'_{4}\label{s4}\\
s_{5}\equiv\sigma_{y}\otimes\sigma_{y}\otimes\sigma_{z}=\begin{pmatrix} \sigma_{y}\otimes\sigma_{y} & \mathbf{0}_{4} \\ \mathbf{0}_{4} & -\sigma_{y}\otimes\sigma_{y} \end{pmatrix} =(\sigma_{y}\otimes\sigma_{y})\oplus\big(-(\sigma_{y}\otimes\sigma_{y})\big)\equiv t_{5}\oplus t'_{5}\label{s5}\text{,}
\end{eqnarray}
with $t_{0}=t'_{0},t_{1}=t'_{1},t_{2}=t'_{2},t_{3}=t'_{3},t_{4}=t'_{4}$ and $t_{5}=-t'_{5}$. Recall that $\psi_{5}:V_{5}\longrightarrow C^{*}_{u}(V_{5})::v_{r}\longmapsto s_{r}$ is the canonical embedding of $V_{5}$ into its universal C$^{*}$-algebra. Recall also that, by definition (noting linear hull $\subseteq$ Jordan hull),
\begin{equation}
V_{5}\tilde{\otimes}V_{5}=\mathfrak{J}\Big(\psi_{V_{5}}(V_{5})\circledcirc\psi_{V_{5}}(V_{5})\Big)\text{,}
\end{equation}
where $\psi_{V_{5}}(V_{5})\circledcirc\psi_{V_{5}}(V_{5})$ is just the set of pure tensors. Also, recall from \cite{Barnum2016} Appendix B (wherein our tensor product convention is different), our Corollary B.5:
\begin{equation}
\mathfrak{J}\Big(\{t_{r}\otimes t_{s}\}\Big)=\mathcal{Q}_{2}\odot\mathcal{Q}_{2}=\mathcal{M}_{16}(\mathbb{R})_{\text{sa}}.
\label{stp}
\end{equation}
We are now ready for the main result of this section.
\begin{proposition} $V_{5}\tilde{\otimes}V_{5}\cong\mathcal{M}_{16}(\mathbb{R})_{\text{sa}}\oplus\mathcal{M}_{16}(\mathbb{R})_{\text{sa}}\oplus\mathcal{M}_{16}(\mathbb{R})_{\text{sa}}\oplus\mathcal{M}_{16}(\mathbb{R})_{\text{sa}}$\textit{.} 
\begin{proof} By definition,
\begin{equation}
V_{5}\tilde{\otimes}V_{5}=\mathfrak{J}\Big(\psi_{5}(V_{5})\circledcirc\psi_{5}(V_{5})\Big)\text{.}
\end{equation}
Let $\mathcal{U}:\mathcal{M}_{64}(\mathbb{C})\longrightarrow\mathcal{M}_{64}(\mathbb{C})$ be the complex *-algebra automorphism defined in Eq.~\eqref{autoMorph}. By Corollary \hyperref[cor1p2]{1.2}, we have
\begin{equation}
V_{5}\tilde{\otimes}V_{5}\cong\mathfrak{J}\Bigg(\mathcal{U}\Big(\psi_{5}(V_{5})\circledcirc\psi_{5}(V_{5})\Big)\Bigg)
\end{equation}
Then, with $t_{p},t'_{p}$ defined for $p\in\{0,\dots,5\}$ according to Eqs.~\eqref{s0}\eqref{s1}\eqref{s2}\eqref{s3}\eqref{s4}\eqref{s5},
\begin{equation}
V_{5}\tilde{\otimes}V_{5}\cong\mathfrak{J}\Bigg\{\big(t_{p}\otimes t_{q}\big)\oplus\big(t_{p}\otimes t'_{q}\big)\oplus\big(t'_{p}\otimes t_{q}\big)\oplus\big(t'_{p}\otimes t'_{q}\big)\;\boldsymbol{\big|}\;p,q\in\{1,\dots,5\}\Bigg\}\equiv\mathcal{A}\text{,}
\label{utp}
\end{equation}
where
\begin{equation}
t_{0}=\sigma_{o}\otimes\sigma_{o}\text{,}\hspace{0.2cm}t_{1}=\sigma_{z}\otimes\sigma_{o}\text{,}\hspace{0.2cm}t_{2}=\sigma_{x}\otimes\sigma_{o}\text{,}\hspace{0.2cm}t_{3}=\sigma_{y}\otimes\sigma_{z}\text{,}\hspace{0.2cm}t_{4}=\sigma_{y}\otimes\sigma_{x}\text{,}\hspace{0.2cm}t_{5}=\sigma_{y}\otimes\sigma_{y}\text{.}
\end{equation}
Immediately from Eq.~\eqref{stp}, one has that 
\begin{equation}
\mathcal{A}\text{ is a Jordan subalgebra of } \mathcal{M}_{16}(\mathbb{R})_{\text{sa}}\oplus\mathcal{M}_{16}(\mathbb{R})_{\text{sa}}\oplus\mathcal{M}_{16}(\mathbb{R})_{\text{sa}}\oplus\mathcal{M}_{16}(\mathbb{R})_{\text{sa}}\equiv\mathcal{B}\text{.} 
\label{jSub}
\end{equation}
We will prove that $\mathcal{A}\cong\mathcal{B}$. Again in light of \eqref{stp}, with \eqref{jSub} and \eqref{utp} we see that it suffices to prove that 
\begin{equation}
\{\mathds{1}_{16}\oplus\mathbf{0}_{16}\oplus\mathbf{0}_{16}\oplus\mathbf{0}_{16},\;\mathbf{0}_{16}\oplus\mathds{1}_{16}\oplus\mathbf{0}_{16}\oplus\mathbf{0}_{16},\;\mathbf{0}_{16}\oplus\mathbf{0}_{16}\oplus\mathds{1}_{16}\oplus\mathbf{0}_{16},\;\mathbf{0}_{16}\oplus\mathbf{0}_{16}\oplus\mathbf{0}_{16}\oplus\mathds{1}_{16}\}\subset\mathcal{A}.
\label{done}
\end{equation}
We shall require the following observations.
\begin{eqnarray}
(t_{4}\otimes t_{2})\jProd (t_{5}\otimes t_{3})&=&\Big((t_{4}t_{5}\otimes t_{2}t_{3})+(t_{5}t_{4}\otimes t_{3}t_{2})\Big)/2\nonumber\\
&=&\Big((\sigma_{o}\otimes i\sigma_{z}\otimes i\sigma_{z}\otimes\sigma_{z})+(\sigma_{o}\otimes-i\sigma_{z}\otimes -i\sigma_{z}\otimes\sigma_{z})\Big)/2\nonumber\\[0.2cm]
&=&-\sigma_{o}\otimes\sigma_{z}\otimes\sigma_{z}\otimes\sigma_{z}\text{.}
\label{forA1}
\end{eqnarray}
\begin{eqnarray}
(t_{2}\otimes t_{4})\jProd (t_{3}\otimes t_{5})&=&\Big((t_{2}t_{3}\otimes t_{4}t_{5})+(t_{3}t_{2}\otimes t_{5}t_{4})\Big)/2\nonumber\\
&=&\Big((i\sigma_{z}\otimes\sigma_{z}\otimes \sigma_{o}\otimes i\sigma_{z})+(-i\sigma_{z}\otimes\sigma_{z}\otimes\sigma_{o}\otimes -i\sigma_{z})\Big)/2\nonumber\\[0.2cm]
&=&-\sigma_{z}\otimes\sigma_{z}\otimes\sigma_{o}\otimes\sigma_{z}\text{.}
\label{forB1}
\end{eqnarray}
\begin{eqnarray}
t_{0}\otimes t_{1}\jProd\big(\sigma_{o}\otimes\sigma_{z}\otimes\sigma_{z}\otimes\sigma_{z}\big)=\sigma_{o}\otimes\sigma_{z}\otimes\sigma_{o}\otimes\sigma_{z}\text{.}
\label{forA2}
\end{eqnarray}
\begin{eqnarray}
t_{1}\otimes t_{0}\jProd\big(\sigma_{z}\otimes\sigma_{z}\otimes\sigma_{o}\otimes\sigma_{z}\big)=\sigma_{o}\otimes\sigma_{z}\otimes\sigma_{o}\otimes\sigma_{z}\text{.}
\label{forB2}
\end{eqnarray}
\begin{eqnarray}
(t_{4}\otimes t_{4})\jProd(\sigma_{o}\otimes\sigma_{z}\otimes\sigma_{o}\otimes\sigma_{z})&=&\big(\sigma_{y}\otimes\sigma_{x}\otimes\sigma_{y}\otimes\sigma_{x}\big)\jProd\big(\sigma_{o}\otimes\sigma_{z}\otimes\sigma_{o}\otimes\sigma_{z}\big)\nonumber\\
&=&\Big(\big(\sigma_{y}\otimes-i\sigma_{y}\otimes\sigma_{y}\otimes-i\sigma_{y}\big)+\big(\sigma_{y}\otimes i\sigma_{y}\otimes\sigma_{y}\otimes i\sigma_{y}\big)\Big)/2\nonumber\\
&=&-t_{5}\otimes t_{5}\text{.}
\label{forA3B3}
\end{eqnarray}
Thus
\begin{eqnarray}
\mathcal{A}\ni&&\big(t_{4}\otimes t_{4}\big)^{\oplus^{4}}\jProd\Bigg(\Big(t_{0}\otimes t_{1}\Big)^{\oplus^{4}}\jProd\Big(\big(t_{4}\otimes t_{2}\big)^{\oplus^{4}}\jProd\big((t_{5}\otimes t_{3})\oplus(t_{5}\otimes t_{3})\oplus(-t_{5}\otimes t_{3})\oplus(-t_{5}\otimes t_{3})\big)\Big)\Bigg)\nonumber\\
&=&\big(t_{5}\otimes t_{5}\big)\oplus\big(t_{5}\otimes t_{5}\big)\oplus\big(-t_{5}\otimes t_{5}\big)\oplus\big(-t_{5}\otimes t_{5}\big)
\label{ppmm}
\end{eqnarray}
\begin{eqnarray}
\mathcal{A}\ni&&\big(t_{4}\otimes t_{4}\big)^{\oplus^{4}}\jProd\Bigg(\Big(t_{1}\otimes t_{0}\Big)^{\oplus^{4}}\jProd\Big(\big(t_{2}\otimes t_{4}\big)^{\oplus^{4}}\jProd\big((t_{3}\otimes t_{5})\oplus(-t_{3}\otimes t_{5})\oplus(t_{3}\otimes t_{5})\oplus(-t_{3}\otimes t_{5})\big)\Big)\Bigg)\nonumber\\
&=&\big(t_{5}\otimes t_{5}\big)\oplus\big(-t_{5}\otimes t_{5}\big)\oplus\big(t_{5}\otimes t_{5}\big)\oplus\big(-t_{5}\otimes t_{5}\big)
\label{pmpm}
\end{eqnarray}
Directly from Eq.~\eqref{utp} with $p=q=5$ we also have that
\begin{equation}
\mathcal{A}\ni\big(t_{5}\otimes t_{5}\big)\oplus\big(-t_{5}\otimes t_{5}\big)\oplus\big(-t_{5}\otimes t_{5}\big)\oplus\big(t_{5}\otimes t_{5}\big)
\label{pmmp}
\end{equation}
We will now show that $\mathcal{A}\ni(t_{5}\otimes t_{5})^{\oplus^{4}}$. Noting that $t_{1},t_{2},t_{3},t_{4}$ mutually anticommute in $\mathcal{M}_{4}(\mathbb{C})_{\text{sa}}$, observe the following.
\begin{eqnarray}
\Big(\big(t_{1}\otimes t_{1}\big)\jProd\big(t_{2}\otimes t_{2}\big)\Big)\jProd\Big(\big(t_{3}\otimes t_{3}\big)\jProd\big(t_{4}\otimes t_{4}\big)\Big)&=&\big(t_{1}t_{2}\otimes t_{1}t_{2}\big)\jProd\big(t_{3}t_{4}\otimes t_{3}t_{4}\big)\nonumber\\
&=&t_{1}t_{2}t_{3}t_{4}\otimes t_{1}t_{2}t_{3}t_{4}\nonumber\\
&=&t_{5}\otimes t_{5}\text{.}
\end{eqnarray}
Therefore
\begin{eqnarray}
&&\mathcal{A}\ni\Big(\big(t_{1}\otimes t_{1}\big)^{\oplus^{4}}\jProd\big(t_{2}\otimes t_{2}\big)^{\oplus^{4}}\Big)\jProd\Big(\big(t_{3}\otimes t_{3}\big)^{\oplus^{4}}\jProd\big(t_{4}\otimes t_{4}\big)^{\oplus^{4}}\Big)\nonumber\\
&=&(t_{5}\otimes t_{5})\oplus(t_{5}\otimes t_{5})\oplus(t_{5}\otimes t_{5})\oplus(t_{5}\otimes t_{5})\text{.}
\label{pppp}
\end{eqnarray}
Notice that \big(\eqref{ppmm}+\eqref{pmpm}+\eqref{pmmp}+\eqref{pppp}\big)/4 yeilds
\begin{equation}
\mathcal{A}\ni t_{5}\otimes t_{5}\oplus\mathbf{0}_{16}\oplus\mathbf{0}_{16}\oplus\mathbf{0}_{16}\implies \mathcal{A}\ni\mathds{1}_{16}\oplus\mathbf{0}_{16}\oplus\mathbf{0}_{16}\oplus\mathbf{0}_{16}\text{,}
\label{ozzz}
\end{equation}
where the implication follows from $t_{5}^{2}=\mathds{1}_{4}$. Next, notice that \big(\eqref{ppmm}+\eqref{pmpm}\big)/2 yields
\begin{equation}
\mathcal{A}\ni (t_{5}\otimes t_{5})\oplus\mathbf{0}_{16}\oplus\mathbf{0}_{16}\oplus(-t_{5}\otimes t_{5})\implies \mathcal{A}\ni\mathds{1}_{16}\oplus\mathbf{0}_{16}\oplus\mathbf{0}_{16}\oplus\mathds{1}_{16}\implies\mathbf{0}_{16}\oplus\mathbf{0}_{16}\oplus\mathbf{0}_{16}\oplus\mathds{1}_{16}\text{,}
\label{zzzo}
\end{equation}
where the final implication follows from Eq.~\eqref{ozzz}, \textit{i.e}.\ $\mathcal{A}\ni\big(\mathds{1}_{16}\oplus\mathbf{0}_{16}\oplus\mathbf{0}_{16}\oplus\mathds{1}_{16}\big)-\big(\mathds{1}_{16}\oplus\mathbf{0}_{16}\oplus\mathbf{0}_{16}\oplus\mathbf{0}_{16}\big)$. Furthermore, notice that \big(\eqref{pmpm}-\eqref{pmmp}\big)/2 yields
\begin{equation}
\mathcal{A}\ni\mathbf{0}_{16}\oplus\mathbf{0}_{16}\oplus(t_{5}\otimes t_{5})\oplus(-t_{5}\otimes t_{5})\implies\mathcal{A}\ni\mathbf{0}_{16}\oplus\mathbf{0}_{16}\oplus\mathds{1}_{16}\oplus\mathds{1}_{16}\implies\mathcal{A}\ni\mathbf{0}_{16}\oplus\mathbf{0}_{16}\oplus\mathds{1}_{16}\oplus\mathbf{0}_{16}\text{,}
\end{equation}
where the final implication follows from Eq~\eqref{zzzo}, \textit{i.e}.\ $\mathcal{A}\ni\big(\mathbf{0}_{16}\oplus\mathbf{0}_{16}\oplus\mathds{1}_{16}\oplus\mathds{1}_{16}\big)-\big(\mathbf{0}_{16}\oplus\mathbf{0}_{16}\oplus\mathbf{0}_{16}\oplus\mathds{1}_{16}\big)$. Finally, notice that \big(\eqref{pppp}+\eqref{ppmm}\big)/2 yields
\begin{equation}
\mathcal{A}\ni(t_{5}\otimes t_{5})\oplus(t_{5}\otimes t_{5})\oplus\mathbf{0}_{16}\oplus\mathbf{0}_{16}\implies\mathcal{A}\ni\mathds{1}_{16}\oplus\mathds{1}_{16}\oplus\mathbf{0}_{16}\oplus\mathbf{0}_{16}\implies\mathcal{A}\ni\mathbf{0}_{16}\oplus\mathds{1}_{16}\oplus\mathbf{0}_{16}\oplus\mathbf{0}_{16}\text{,}
\end{equation}
where the final implication follows from Eq.~\eqref{ozzz}, \textit{i.e}.\ $\mathcal{A}\ni\big(\mathds{1}_{16}\oplus\mathds{1}_{16}\oplus\mathbf{0}_{16}\oplus\mathbf{0}_{16}\big)-\big(\mathds{1}_{16}\oplus\mathbf{0}_{16}\oplus\mathbf{0}_{16}\oplus\mathbf{0}_{16}\big)$. Taking stock, \eqref{done} is true.
\end{proof}
\end{proposition}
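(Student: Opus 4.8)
The plan is to compute $V_{5}\tilde{\otimes}V_{5}$ concretely by transporting it, via a $\ast$-algebra automorphism, into the ambient algebra $\mathcal{M}_{16}(\mathbb{C})^{\oplus 4}$, where the earlier matrix-unit machinery applies. By \cref{q2Prop} one has $C^{*}_{u}(V_{5})=\mathcal{M}_{4}(\mathbb{C})\oplus\mathcal{M}_{4}(\mathbb{C})$, so by definition $V_{5}\tilde{\otimes}V_{5}=\mathfrak{J}(\psi_{5}(V_{5})\circledcirc\psi_{5}(V_{5}))$ sits inside $(\mathcal{M}_{4}(\mathbb{C})\oplus\mathcal{M}_{4}(\mathbb{C}))^{\otimes 2}$. \cref{cor1p2} supplies a $\ast$-algebra automorphism $\mathcal{U}$ of this ambient algebra carrying it onto $\mathcal{M}_{16}(\mathbb{C})^{\oplus 4}$; since any $\ast$-algebra automorphism commutes with the Jordan product, it carries $V_{5}\tilde{\otimes}V_{5}$ isomorphically onto $\mathcal{A}:=\mathfrak{J}(\mathcal{U}(\psi_{5}(V_{5})\circledcirc\psi_{5}(V_{5})))$. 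Thus it suffices to identify $\mathcal{A}$.

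First I would fix the explicit representation of the canonical generators $s_{p}=\psi_{5}(v_{p})$ as threefold Pauli tensor products and read off their block structure $s_{p}=t_{p}\oplus t_{p}'$, where the critical feature is the sign asymmetry $t_{p}'=t_{p}$ for $p\in\{0,\dots,4\}$ but $t_{5}'=-t_{5}$. Under $\mathcal{U}$ each pure tensor becomes the four-block sum $(t_{p}\otimes t_{q})\oplus(t_{p}\otimes t_{q}')\oplus(t_{p}'\otimes t_{q})\oplus(t_{p}'\otimes t_{q}')$, so $\mathcal{A}$ is generated as a Jordan algebra by these elements. The identity $\mathfrak{J}(\{t_{r}\otimes t_{s}\})=\mathcal{M}_{16}(\mathbb{R})_{\text{sa}}$ (Corollary~B.5 of \cite{Barnum2016}) shows that each of the four blocks separately fills all of $\mathcal{M}_{16}(\mathbb{R})_{\text{sa}}$, giving at once the inclusion $\mathcal{A}\subseteq\mathcal{B}:=\mathcal{M}_{16}(\mathbb{R})_{\text{sa}}^{\oplus 4}$. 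It remains to prove the reverse inclusion.

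To get $\mathcal{A}=\mathcal{B}$, I would reduce to exhibiting the four orthogonal central idempotents $\mathds{1}_{16}\oplus 0\oplus 0\oplus 0,\dots,0\oplus 0\oplus 0\oplus\mathds{1}_{16}$ inside $\mathcal{A}$: once these are available, multiplying them into $\mathcal{A}$ and invoking the block-filling fact recovers every element of $\mathcal{B}$. The mechanism for producing them is the sign flip in $t_{5}'$. Concretely, I would first build $(t_{5}\otimes t_{5})$ in every block with a uniform sign, using that $t_{1},\dots,t_{4}$ anticommute and $t_{1}t_{2}t_{3}t_{4}=t_{5}$, so that a suitable Jordan product of the diagonal generators $(t_{j}\otimes t_{j})$ yields $t_{5}\otimes t_{5}$ blockwise; then three further elements of the form $t_{5}\otimes t_{5}$ carrying the sign patterns $(+,+,-,-)$, $(+,-,+,-)$, and $(+,-,-,+)$ across the four summands, obtained from Jordan products of off-diagonal transformed generators that feed the $t_{5}'=-t_{5}$ sign into exactly the intended blocks. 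Since $\{(+,+,+,+),(+,+,-,-),(+,-,+,-),(+,-,-,+)\}$ is linearly independent over $\mathbb{R}$, appropriate linear combinations isolate each single-block copy of $t_{5}\otimes t_{5}$, and $t_{5}^{2}=\mathds{1}_{4}$ promotes these to the four central idempotents.

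The main obstacle is the sign bookkeeping in the explicit Pauli computations: one must verify that the chosen Jordan products of the transformed generators indeed collapse to $\pm(t_{5}\otimes t_{5})$ in each block, with precisely the four sign patterns above and no surviving cross terms. This requires careful use of the anticommutation relations among $t_{1},\dots,t_{4}$, the relation $t_{1}t_{2}t_{3}t_{4}=t_{5}$, and the identity $t_{5}^{2}=\mathds{1}_{4}$, and it is the only genuinely nontrivial part of the argument; the linear algebra isolating the central idempotents and the final appeal to the block-filling result are then routine.
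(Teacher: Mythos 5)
Your proposal is correct and follows essentially the same route as the paper's own proof: transport $V_{5}\tilde{\otimes}V_{5}$ via the $\ast$-algebra automorphism of Corollary~\hyperref[cor1p2]{1.2} into $\mathcal{M}_{16}(\mathbb{C})^{\oplus 4}$, use $\mathfrak{J}(\{t_{r}\otimes t_{s}\})=\mathcal{M}_{16}(\mathbb{R})_{\text{sa}}$ for the inclusion $\mathcal{A}\subseteq\mathcal{B}$, and isolate the four central idempotents from the sign patterns $(+,+,+,+)$, $(+,+,-,-)$, $(+,-,+,-)$, $(+,-,-,+)$ of $t_{5}\otimes t_{5}$, produced from the relations $t_{1}t_{2}t_{3}t_{4}=t_{5}$, $t_{5}^{2}=\mathds{1}_{4}$, and the sign flip $t_{5}'=-t_{5}$. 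The paper executes precisely this plan, obtaining $(+,-,-,+)$ directly from the $p=q=5$ generator and the remaining patterns by the explicit Jordan products whose sign bookkeeping you correctly identify as the only nontrivial step.
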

\newpage
\section{Supporting Details for Chapter 8} 
\label{appendix: extending}
\noindent This appendix collects technical details supporting our proofs in \cref{compositesEJA}.

\noindent Let $\mathcal{A}$ be an EJC-algebra, that is, $\mathcal{A}$ a Jordan subagebra of $\M_{\text{sa}}$ for a finite-dimensional complex $\ast$-algebra $\M = \M_{\mathcal{A}}$. Recall that $G(\mathcal{A})$ is the connected component of the identity in the group of order-automorphisms of $\mathcal{A}$.

\begin{lemma} \label{lemma: derivations extend} Let $\mathcal{A}$ be reversible. 
Then any one-parameter group of order automorphisms of $\mathcal{A}$ extends to a
one-parameter group of order-automorphisms of $\M$.  
\begin{proof}
If $\{\phi(t)\}_{t \in \R}$ is a one-parameter
group of order-automorphisms of $\mathcal{A}$, then $\phi(t)(a) = e^{tX}a$ where
$X = \phi'(0)$ is a linear operator on $\mathcal{A}$. By definition, $X$ is an
order-derivation of $\mathcal{A}$. Now, order-derivations come in two basic types:
those having the form $X
= L_a$ for some $a \in \mathcal{A}$, 
and those having the property $Xu =
0$ (which turns out to be the same thing as being a Jordan derivation). 
The former are self-adjoint with respect to the canonical inner product on $\mathcal{A}$,  
by the definition of a Euclidean Jordan algebra, while the latter are skew-adjoint 
\ffootnote{\red For AS, that self-adjoint means $X = L_a$ for some $a \in \mathcal{A}$
  is a definition; in FK, we learn that this is equivalent to
  self-adjointness with respect to the inner product on $\mathcal{A}$.  Guess:
  ditto for skew-adjoint? Check AS, FK, and or Satake.  From AS, any 
derivation is the sum of a skew and an sa one, so we likely just need to 
find in FK or Satake where the selfadjoint ones are symmetric and the
skew ones antisymmetric}.  Moreover,
every order-derivation has the form $\delta = L_{a} + \delta'$ where
$\delta'$ is skew (\cite{AS}, Proposition 1.60).

\noindent Now, $L_a$ obviously extends from $\mathcal{A}$ to $\M$, simply because $a \in \M$
and the Jordan product on $\mathcal{A}$ is the restriction of that on $\M$. By
(\cite{Upmeier}, Theorem 2.5),  {\red if $\mathcal{A}$ is reversible} 
\ffootnote{\red But it looks,on reading his Example 2.3, as though
  finite-dimensional spin factors also have the extension
  property. Can we find a better (f.d.) reference?} in $\M$, $\delta'$
also extends to a Jordan derivation $\delta''$ on $\M$. Thus, we have
an extension $\hat{\delta} = L_a + \delta''$ on $\M$. In particular,
$\delta''(\mathcal{A}) \subseteq \mathcal{A}$.

\noindent It follows that we have an order-automorphism $\hat{\phi}(t) = e^{t \hat{\delta}}$ of $\M_+$. Note that this preserves $\mathcal{A}$, since 
\[\hat{\phi}(t)x = \sum_{k=1}^{\infty} \frac{t^k}{k!} \hat{\delta}^{k} x\]
and $\hat{\delta}x = (L_a + \delta'')x = a\dot x + \delta''(x)$, 
which belongs to $\mathcal{A}$ if $x$ does.
\end{proof}
\end{lemma}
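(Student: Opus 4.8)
The plan is to pass from the one-parameter group to its infinitesimal generator, decompose that generator into a piece that visibly extends to $\M$ and a skew piece whose extension is governed by reversibility, and then re-exponentiate. Concretely, given a one-parameter group $\{\phi(t)\}_{t\in\R}$ of order automorphisms of $\mathcal{A}$, I would first write $\phi(t) = e^{tX}$ with $X = \phi'(0)$ a linear operator on $\mathcal{A}$; the requirement that each $\phi(t)$ be an order automorphism forces $X$ to be an order-derivation of $\mathcal{A}$. This reduces the problem to extending the single order-derivation $X$ to an order-derivation $\hat{\delta}$ of $\M$ that preserves $\mathcal{A}$, since exponentiating such a $\hat{\delta}$ will recover an extension of the whole group.

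Next I would invoke the structure theory of order-derivations on Euclidean Jordan algebras. By the decomposition of Alfsen and Shultz (\cite{AS}, Proposition 1.60), every order-derivation splits as $X = L_a + \delta'$, where $L_a$ is left Jordan multiplication by some $a \in \mathcal{A}$ (self-adjoint for the canonical inner product, by the Euclidean condition) and $\delta'$ is a skew order-derivation, equivalently a Jordan derivation annihilating the unit. I would then treat the two summands separately. The multiplication part extends for free: since $a \in \mathcal{A} \subseteq \M_{\text{sa}}$ and the Jordan product on $\mathcal{A}$ is the restriction of the Jordan product $x \mapsto a \jProd x = (ax+xa)/2$ on $\M$, the operator $L_a$ already makes sense as a map $\M \longrightarrow \M$ and restricts correctly to $\mathcal{A}$.

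The substantive step, and the one where reversibility is essential, is extending the skew derivation $\delta'$. Here I would appeal to Upmeier's extension theorem for Jordan derivations on reversible JC-algebras (\cite{Upmeier}, Theorem 2.5): because $\mathcal{A}$ is reversible in $\M$, the Jordan derivation $\delta'$ lifts to a Jordan derivation $\delta''$ of $\M$, and in particular $\delta''(\mathcal{A}) \subseteq \mathcal{A}$. Setting $\hat{\delta} = L_a + \delta''$ then yields an order-derivation of $\M$ with $\hat{\delta}(\mathcal{A}) \subseteq \mathcal{A}$. The hard part is precisely this lifting: absent reversibility there is no reason for a Jordan derivation on a Jordan subalgebra to extend to the ambient $*$-algebra, so this is exactly where the hypothesis is consumed, and I would be careful to check that the cited theorem applies in our finite-dimensional setting.

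Finally, I would re-exponentiate: $\hat{\phi}(t) = e^{t\hat{\delta}}$ is a one-parameter group of order automorphisms of the positive cone $\M_+$. To verify that it preserves $\mathcal{A}$ and restricts to the original group, I would expand $\hat{\phi}(t)x = \sum_{k \ge 0} \frac{t^k}{k!}\hat{\delta}^k x$ and use that $\hat{\delta}x = a \jProd x + \delta''(x) \in \mathcal{A}$ whenever $x \in \mathcal{A}$, so every term of the series remains in $\mathcal{A}$; since $\hat{\delta}$ agrees with $X$ on $\mathcal{A}$, the restriction of $\hat{\phi}(t)$ to $\mathcal{A}$ equals $e^{tX} = \phi(t)$, giving the desired extension.
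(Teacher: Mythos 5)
Your proposal is correct and follows essentially the same route as the paper's proof: generator, Alfsen--Shultz decomposition $X = L_a + \delta'$, trivial extension of $L_a$, Upmeier's theorem (via reversibility) for the skew part, and re-exponentiation with the series argument showing $\mathcal{A}$ is preserved. The only addition beyond the paper's argument is your explicit remark that $\hat{\phi}(t)$ restricts to $\phi(t)$ on $\mathcal{A}$, which the paper leaves implicit.
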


\begin{corollary}\label{cor: automorphisms extend} 
\textit{If $\mathcal{A}$ is reversible, every element of $G(\mathcal{A})$ extends to an element of
$G((\M_{\mathcal{A}})_{\text{sa}})$.} 
\end{corollary}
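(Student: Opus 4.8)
The plan is to bootstrap from the one-parameter version already established in \cref{lemma: derivations extend} to arbitrary elements of $G(\mathcal{A})$, exploiting the fact that $G(\mathcal{A})$ is a connected Lie group. First I would recall that $\Aut(\mathcal{A})$ is a Lie group \cite{HilgertHoffmanLawson}, so that its identity component $G(\mathcal{A})$ is a connected Lie group. A standard fact about connected topological groups is that any neighbourhood of the identity generates the whole group; applied to a connected Lie group through the exponential map, this means every $g \in G(\mathcal{A})$ can be written as a finite product $g = \phi_1(t_1) \circ \cdots \circ \phi_k(t_k)$, where each $\phi_i(\cdot)$ is a one-parameter subgroup of $G(\mathcal{A})$ and $t_i \in \R$.

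Next I would extend each factor. By \cref{lemma: derivations extend}, each one-parameter group $\{\phi_i(t)\}_{t\in\R}$ of order-automorphisms of $\mathcal{A}$ extends to a one-parameter group $\{\hat{\phi}_i(t)\}_{t\in\R}$ of order-automorphisms of $(\M_{\mathcal{A}})_{\text{sa}}$; moreover, the proof of that lemma shows that each $\hat{\phi}_i(t)$ preserves $\mathcal{A}$ and restricts on $\mathcal{A}$ to $\phi_i(t)$. Since $t \mapsto \hat{\phi}_i(t)$ is a continuous path of order-automorphisms of $(\M_{\mathcal{A}})_{\text{sa}}$ beginning at the identity, each $\hat{\phi}_i(t_i)$ lies in the identity component $G((\M_{\mathcal{A}})_{\text{sa}})$. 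I would then set $\hat{g} = \hat{\phi}_1(t_1) \circ \cdots \circ \hat{\phi}_k(t_k)$. As a product of elements of the group $G((\M_{\mathcal{A}})_{\text{sa}})$, this $\hat{g}$ again lies in $G((\M_{\mathcal{A}})_{\text{sa}})$, and because each $\hat{\phi}_i(t_i)$ preserves $\mathcal{A}$, restriction to $\mathcal{A}$ commutes with composition, giving $\hat{g}|_{\mathcal{A}} = \phi_1(t_1) \circ \cdots \circ \phi_k(t_k) = g$, which exhibits $\hat{g}$ as the desired extension.

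The main obstacle is the clean passage from one-parameter subgroups to general group elements — in particular, justifying the generation-by-exponentials step and, crucially, verifying that restriction to $\mathcal{A}$ distributes over the composition of the extended factors. This latter point rests precisely on the $\mathcal{A}$-invariance recorded in the proof of \cref{lemma: derivations extend}: without it, the restriction of a composite of extensions would not obviously equal the composite of their restrictions, and the argument would break down. By contrast, I expect no trouble confirming that each extended factor, and hence their product, lands in the identity component, since each arises as the endpoint of an explicit continuous path through the identity. Note also that the corollary asserts only existence of an extension, not uniqueness, so I need not worry about the choice hidden in the factorization of $g$.
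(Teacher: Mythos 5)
Your proof is correct and follows essentially the route the paper intends: the paper states the corollary without proof as an immediate consequence of \cref{lemma: derivations extend}, and the implicit step is exactly your observation that the connected Lie group $G(\mathcal{A})$ is generated by elements of one-parameter subgroups, each of which extends by the lemma while preserving $\mathcal{A}$, so that restriction commutes with composition of the extensions. Your explicit attention to the $\mathcal{A}$-invariance of each extended factor is precisely what makes the ``immediate'' step go through.
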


\noindent Now let $\mathcal{B}$ also be a reversible EJC-algebra. 

\begin{lemma}\label{lemma: fixing} \textit{Let $\delta$ be any $\ast$-derivation of $\M_{\mathcal{A}}$ fixing $\mathcal{A}$. Then $\delta \otimes \mathbf{1}$ is a $\ast$-derivation of $\M_{\mathcal{A}} \otimes \M_{\mathcal{A}}$ fixing $\mathcal{A} \odot \mathcal{B}$.} 
\begin{proof} Let $\M$ and $\N$ be $\ast$-algebras, and let $a,
b \in \M$ and $x,y \in \N$. Then
\[(a \otimes x) \jProd (b \otimes y) = \frac{1}{2} ( ab \otimes xy + ba \otimes yx).\]
If $\delta$ is a $\ast$-derivation of $\M$, then it is straighforward to check that $\delta \otimes \mathbf{1}$ is a $\ast$-derivation of $\M \otimes \N$, 
and that for all $a, b \in \M$ and $x, y \in \N$,  
\[(\delta \otimes \mathbf{1})((a \otimes x) \jProd (b \otimes y)) = (a \otimes x) \jProd (\delta(b) \otimes y) + (\delta(a) \otimes x) \jProd (b \otimes y).\]
In particular, if $\mathcal{A} \subseteq \M$ and $\delta(\mathcal{A}) \subseteq \mathcal{A}$, it follows that $(\delta \otimes \mathbf{1})(\mathcal{A} \otimes \mathcal{A}) \subseteq (\mathcal{A} \otimes \mathcal{A}) \jProd (\mathcal{A} \otimes \mathcal{A})$ 
for any $B \subseteq \N$. It follows easily that, 
{\redd where $\mathcal{A}$ and $\mathcal{B}$ are EJCs and $\M = \M_{\mathcal{A}}$ and $\N = \M_{\mathcal{B}}$,} $\delta \otimes \mathbf{1}$ preserves $\mathcal{A} \odot \mathcal{B}$.
\footnote{The details: let $\delta$ be a $\ast$-derivation on a $\ast$-algebra $\M$, and let $X \subseteq M_{\text{sa}}$ with $\delta(X) \subseteq X$. Let $Y = \{ a \in \mathfrak{j}(X) | \delta(a) \in \mathfrak{j}(X)\}$. Evidently $X \subseteq Y$. Now if $a, b \in Y$ and $t \in \R$, then $\delta(ta + b) = t\delta(a) + \delta(b) \in \mathfrak{j}(X)$, so $\mathfrak{j}(X)$ is a subspace of $M$. If $a, b \in Y$ then $\delta(a \dot b) = a \dot \delta(b) + \delta(a) \dot b \in \mathfrak{j}(X)$. Thus, $Y$ is a Jordan subalgebra of $\M_{\text{sa}}$, containing 
$X$, and contained in $\mathfrak{j}(X)$. Ergo, $Y = \mathfrak{j}(X)$, and $\delta(\mathfrak{j}(X)) \subseteq \mathfrak{j}(X)$.}
\end{proof}
\end{lemma}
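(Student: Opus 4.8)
The plan is to verify the claim in three stages: that $\delta \otimes \mathbf{1}$ is a $\ast$-derivation of the associative algebra $\M_{\mathcal{A}} \otimes \M_{\mathcal{B}}$, that it consequently obeys the Leibniz rule for the Jordan product, and finally that it preserves the Jordan subalgebra $\mathcal{A} \odot \mathcal{B} = \mathfrak{j}(\mathcal{A} \otimes \mathcal{B})$. Here $\mathbf{1}$ denotes the identity map on $\M_{\mathcal{B}}$, and I read ``fixing $\mathcal{A} \odot \mathcal{B}$'' as ``mapping $\mathcal{A} \odot \mathcal{B}$ into itself,'' consistent with the usage of ``fixing'' in \cref{lemma: derivations extend}.

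First I would check the associative Leibniz and $\ast$ conditions on pure tensors and extend by bilinearity. For $a,b \in \M_{\mathcal{A}}$ and $x,y \in \M_{\mathcal{B}}$, a direct computation gives
\begin{equation}
(\delta \otimes \mathbf{1})\big((a \otimes x)(b \otimes y)\big) = \delta(ab) \otimes xy = \big(\delta(a)b + a\delta(b)\big) \otimes xy,
\end{equation}
which rearranges to $(\delta(a) \otimes x)(b \otimes y) + (a \otimes x)(\delta(b) \otimes y)$; since pure tensors span $\M_{\mathcal{A}} \otimes \M_{\mathcal{B}}$ and both sides are bilinear, the Leibniz rule holds throughout. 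The $\ast$-compatibility $(\delta \otimes \mathbf{1})\big((a \otimes x)^{*}\big) = \big((\delta \otimes \mathbf{1})(a \otimes x)\big)^{*}$ follows identically from $\delta(a^{*}) = \delta(a)^{*}$. These are routine and I would not belabor them. Because the Jordan product $X \jProd Y = \tfrac{1}{2}(XY + YX)$ is assembled from the associative product, the associative Leibniz rule immediately yields
\begin{equation}
(\delta \otimes \mathbf{1})(X \jProd Y) = (\delta \otimes \mathbf{1})(X) \jProd Y + X \jProd (\delta \otimes \mathbf{1})(Y)
\end{equation}
for all $X,Y \in \M_{\mathcal{A}} \otimes \M_{\mathcal{B}}$; this is the property I actually need for the invariance step.

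The heart of the argument — and the step I expect to require the most care — is showing that $\delta \otimes \mathbf{1}$ preserves $\mathcal{A} \odot \mathcal{B}$, which is \emph{generated} as a Jordan algebra by $\mathcal{A} \otimes \mathcal{B}$ rather than being $\mathcal{A} \otimes \mathcal{B}$ itself. I would use a largest-invariant-subalgebra argument. Since $\delta(\mathcal{A}) \subseteq \mathcal{A}$ by hypothesis and each $a \in \mathcal{A}$, $x \in \mathcal{B}$ is self-adjoint, we have $(\delta \otimes \mathbf{1})(a \otimes x) = \delta(a) \otimes x \in \mathcal{A} \otimes \mathcal{B} \subseteq \mathcal{A} \odot \mathcal{B}$, so the generating set is carried into $\mathcal{A} \odot \mathcal{B}$. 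I would then set
\begin{equation}
Y = \{\, Z \in \mathcal{A} \odot \mathcal{B} \;\boldsymbol{|}\; (\delta \otimes \mathbf{1})(Z) \in \mathcal{A} \odot \mathcal{B} \,\}.
\end{equation}
By linearity of $\delta \otimes \mathbf{1}$, $Y$ is a real subspace; by the Jordan Leibniz rule just established, $Y$ is closed under $\jProd$; and by the previous paragraph $\mathcal{A} \otimes \mathcal{B} \subseteq Y \subseteq \mathcal{A} \odot \mathcal{B}$. Hence $Y$ is a Jordan subalgebra of $(\M_{\mathcal{A}} \otimes \M_{\mathcal{B}})_{\text{sa}}$ that contains $\mathcal{A} \otimes \mathcal{B}$ and is contained in $\mathfrak{j}(\mathcal{A} \otimes \mathcal{B}) = \mathcal{A} \odot \mathcal{B}$.

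Minimality of the generated algebra then forces $Y = \mathcal{A} \odot \mathcal{B}$, so $(\delta \otimes \mathbf{1})(\mathcal{A} \odot \mathcal{B}) \subseteq \mathcal{A} \odot \mathcal{B}$, which would complete the proof. The only genuine subtlety is confirming that $Y$ really is self-adjoint-valued and closed under both the linear span and the Jordan multiplication used to build $\mathfrak{j}(\mathcal{A} \otimes \mathcal{B})$; the $\ast$-compatibility and the Jordan Leibniz rule from the earlier stages are precisely what guarantee this, so once those are in hand the invariance is essentially formal. I would flag that this is exactly the abstract mechanism recorded in the footnote to the lemma, applied with $\M = \M_{\mathcal{A}} \otimes \M_{\mathcal{B}}$ and generating set $\mathcal{A} \otimes \mathcal{B}$.
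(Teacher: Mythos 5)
Your proposal is correct and follows essentially the same route as the paper: verify the Leibniz rule for $\delta\otimes\mathbf{1}$ on pure tensors, pass to the Jordan product, and then run the largest-invariant-Jordan-subalgebra argument on $Y=\{Z\in\mathcal{A}\odot\mathcal{B}\;\boldsymbol{|}\;(\delta\otimes\mathbf{1})(Z)\in\mathcal{A}\odot\mathcal{B}\}$, which is exactly the mechanism in the paper's footnote. Your write-up is in fact somewhat more explicit than the paper's, but there is no substantive difference in the argument.
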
 

\begin{proposition} \textit{If $\phi$ and $\psi$ are order-automorphisms of $\mathcal{A}$ and $\mathcal{B}$, respectively, then $\phi \otimes \psi$ extends to an order-automorphism of $\mathcal{A} \odot \mathcal{B}$.}
\begin{proof} By \cref{cor: automorphisms extend}, we can assume that $\phi$ is an order-automorphism of $\M_{\text{sa}}$ fixing
$\mathcal{A}$. Then $\phi = e^{\delta}$ for an order-derivation $\delta$ of $\M$,
which (taking derivatives) must also fix $\mathcal{A}$.  It follows that
\[\phi \otimes \mathbf{1} = e^{t\delta} \otimes \mathbf{1}  = \sum_{n=0}^{\infty} \frac{t^n}{n!} \delta^n \otimes \mathbf{1} = \sum_{n=0}^{\infty} \frac{t^{n}}{n!} (\delta \otimes \mathbf{1})^n = e^{t(\delta \otimes \mathbf{1})}.\]
By \cref{lemma: fixing}, $\delta \otimes \mathbf{1}$ fixes $\mathcal{A} \odot \mathcal{B}$; thus, so does the
series at right, whence, so does $\phi \otimes \mathbf{1}$.  It follows that
if $\phi$ is an order-automorphism of $\M_{\mathcal{A}}$ fixing $\mathcal{A}$, then $\phi
\otimes \mathbf{1}$ is an order-automorphism of $\mathcal{A} \otimes \mathcal{B}$ fixing $\mathfrak{j}(\mathcal{A}
\otimes \mathcal{B}) = \mathcal{A} \odot \mathcal{B}$. Hence, if $\phi$ and $\psi$ are
order-automorphisms of $\M_{\mathcal{A}}$ and $\M_{\mathcal{B}}$, respectively fixing $\mathcal{A}$ and
$\mathcal{B}$, then $\phi \otimes \psi = (\phi \otimes \mathbf{1}) \circ (\mathbf{1} \otimes
\psi)$ fixes $\mathcal{A} \odot \mathcal{B}$.
\end{proof}
\end{proposition}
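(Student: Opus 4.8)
The plan is to exploit the structure of the order-automorphism group as a connected Lie group together with the extension results already recorded in \cref{cor: automorphisms extend} and \cref{lemma: derivations extend}. First I would reduce the claim to its one-sided versions: since $\phi \otimes \psi = (\phi \otimes \mathbf{1}) \circ (\mathbf{1} \otimes \psi)$, and since the composite of two maps each of which restricts to an order-automorphism of $\mathcal{A} \odot \mathcal{B}$ is again such a map, it suffices to show that $\phi \otimes \mathbf{1}$ (and, symmetrically, $\mathbf{1} \otimes \psi$) extends to an order-automorphism of $\mathcal{A} \odot \mathcal{B}$. Next, because $G(\mathcal{A})$ is connected and hence generated by one-parameter subgroups $\{e^{tX}\}$ with $X$ an order-derivation of $\mathcal{A}$, and because the operation of extending and tensoring is closed under composition, it is enough to treat the case where $\phi$ is the time-one map $e^{X}$ of such a subgroup.

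For this case I would invoke \cref{lemma: derivations extend}: the one-parameter group $\{e^{tX}\}$ extends to a one-parameter group $\{e^{t\hat\delta}\}$ of order-automorphisms of $(\M_{\mathcal{A}})_{\text{sa}}$ whose generator $\hat\delta$ is an order-derivation of $\M_{\mathcal{A}}$ preserving $\mathcal{A}$, so that $\phi$ extends to $\hat\phi = e^{\hat\delta}$ with $\hat\phi(\mathcal{A}) = \mathcal{A}$. The heart of the argument is then to transfer the condition ``$\hat\delta$ fixes $\mathcal{A}$'' to ``$\hat\delta \otimes \mathbf{1}$ fixes $\mathcal{A} \odot \mathcal{B}$'' at the level of generators. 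I would split $\hat\delta$ into its Jordan-multiplication and skew (Jordan-derivation) parts, $\hat\delta = L_a + \delta''$, where $a = \hat\delta(u_{\mathcal{A}}) \in \mathcal{A}$ (using $\delta''(u_{\mathcal{A}}) = 0$, $L_a(u_{\mathcal{A}}) = a$, and $\hat\delta(u_{\mathcal{A}}) \in \mathcal{A}$) and $\delta''$ is a $\ast$-derivation fixing $\mathcal{A}$. Tensoring with the identity on $\M_{\mathcal{B}}$ gives $\hat\delta \otimes \mathbf{1} = L_{a \otimes u_{\mathcal{B}}} + (\delta'' \otimes \mathbf{1})$, which is again an order-derivation of $\M_{\mathcal{A}} \otimes \M_{\mathcal{B}}$: the first summand is Jordan multiplication by the self-adjoint element $a \otimes u_{\mathcal{B}}$ and the second is a $\ast$-derivation. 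Moreover it fixes $\mathcal{A} \odot \mathcal{B}$ --- the summand $\delta'' \otimes \mathbf{1}$ by \cref{lemma: fixing}, and $L_{a \otimes u_{\mathcal{B}}}$ because $a \otimes u_{\mathcal{B}} \in \mathcal{A} \otimes \mathcal{B} \subseteq \mathcal{A} \odot \mathcal{B}$ while $\mathcal{A} \odot \mathcal{B}$, being a Jordan subalgebra, is closed under Jordan multiplication by its own elements.

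Finally I would exponentiate. Since $\hat\delta \otimes \mathbf{1}$ is an order-derivation of $\M_{\mathcal{A}} \otimes \M_{\mathcal{B}}$, its exponential $\hat\phi \otimes \mathbf{1} = e^{\hat\delta \otimes \mathbf{1}}$ is automatically an order-automorphism of $(\M_{\mathcal{A}} \otimes \M_{\mathcal{B}})_{\text{sa}}$, so positivity and invertibility come for free, without any separate complete-positivity argument. Expanding this exponential as a power series in $\hat\delta \otimes \mathbf{1}$ and using that $\hat\delta \otimes \mathbf{1}$ carries $\mathcal{A} \odot \mathcal{B}$ into itself shows that $\hat\phi \otimes \mathbf{1}$, and likewise its inverse $e^{-(\hat\delta \otimes \mathbf{1})}$, preserve $\mathcal{A} \odot \mathcal{B}$; restricting, $\hat\phi \otimes \mathbf{1}$ is the desired order-automorphism of $\mathcal{A} \odot \mathcal{B}$ extending $\phi \otimes \mathbf{1}$.

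I expect the main obstacle to be the third step, namely reconciling the $\ast$-derivation hypothesis of \cref{lemma: fixing} with the fact that a general order-derivation carries a Jordan-multiplication part $L_a$ that is \emph{not} a $\ast$-derivation (indeed not even a Jordan derivation, since Jordan algebras are nonassociative). Handling this cleanly requires the decomposition $\hat\delta = L_a + \delta''$ together with the observation that $a = \hat\delta(u_{\mathcal{A}})$ automatically lands in $\mathcal{A}$, so that the two parts are treated by genuinely different mechanisms: closure of the Jordan subalgebra under its own multiplication for $L_a$, and \cref{lemma: fixing} for $\delta''$. A lesser point to get right is the reduction to single exponentials, which I would justify via the connectedness of $G(\mathcal{A})$ and closure of the class of maps that extend to order-automorphisms of $\mathcal{A} \odot \mathcal{B}$ under composition, rather than by claiming that every element of the connected component is a single exponential.
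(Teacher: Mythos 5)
Your proposal is correct, and it follows the same overall route as the paper's proof: extend $\phi$ to an order-automorphism of $(\M_{\mathcal{A}})_{\text{sa}}$ fixing $\mathcal{A}$ via \cref{cor: automorphisms extend}, pass to the generator, tensor with the identity, show the generator preserves $\mathcal{A} \odot \mathcal{B}$, exponentiate, and finish by composing the one-sided maps $(\phi \otimes \mathbf{1}) \circ (\mathbf{1} \otimes \psi)$. Where you differ, your refinements actually repair two loose points in the printed proof. First, the paper writes $\phi = e^{\delta}$ for an order-derivation $\delta$ and then cites \cref{lemma: fixing} to conclude that $\delta \otimes \mathbf{1}$ fixes $\mathcal{A} \odot \mathcal{B}$; but that lemma is stated for $\ast$-derivations, and a general order-derivation contains a summand $L_a$ that is not a derivation of any kind (Jordan multiplication fails the Leibniz rule), so the citation does not literally apply. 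Your decomposition $\hat{\delta} = L_a + \delta''$ with $a = \hat{\delta}(u_{\mathcal{A}}) \in \mathcal{A}$ --- treating $\delta'' \otimes \mathbf{1}$ by \cref{lemma: fixing}, and $L_a \otimes \mathbf{1} = L_{a \otimes u_{\mathcal{B}}}$ by closure of the Jordan subalgebra $\mathcal{A} \odot \mathcal{B}$ under multiplication by its own element $a \otimes u_{\mathcal{B}} \in \mathcal{A} \otimes \mathcal{B}$ --- supplies exactly the missing justification; the decomposition is present in the proof of \cref{lemma: derivations extend} but is elided in the proposition's proof as printed. Second, the paper tacitly treats the extended automorphism as a single exponential; your reduction via connectedness of $G(\mathcal{A})$ to finite products of time-one maps of one-parameter groups, combined with closure under composition of the class of maps that extend to order-automorphisms of $\mathcal{A} \odot \mathcal{B}$, is the safe version of this step, since exponential surjectivity need not hold and nothing in the argument requires it. Your final step is also sound: $\hat{\delta} \otimes \mathbf{1} = L_{a \otimes u_{\mathcal{B}}} + (\delta'' \otimes \mathbf{1})$ is again an order-derivation of $\M_{\mathcal{A}} \otimes \M_{\mathcal{B}}$, so $e^{\hat{\delta} \otimes \mathbf{1}}$ and its inverse $e^{-(\hat{\delta} \otimes \mathbf{1})}$ are order-automorphisms preserving $\mathcal{A} \odot \mathcal{B}$ by the power-series argument, just as in the paper.
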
 

\noindent We now collect some basic facts about direct sums of \textsc{eja}s. We shall omit calligraphic mathematical fonts.

\begin{definition} \textit{The} direct sum \textit{of EJAs $A$ and $B$ is $A \oplus B := A \times B$, equipped with the slotwise operations, 
so that the canonical projections $\pi_1 : A \times B \rightarrow A$ and $\pi_2 : A \times B \rightarrow B$ 
are unital Jordan homomorphisms.}
\end{definition}

\noindent Identifying  $A$ and $B$ with $A \times \{0\}$ and $\{0\} \times B$, respectively, we  write 
$a + b$ for $(a,0) + (0,b)$.  Note that $A$ and $B$ are then ideals in $A \oplus B$, and 
that $B = A^{\perp} := \{ z \in A \oplus B | \langle a, z \rangle = 0 \ \forall a \in A\}$.  Conversely, 
it's easy to check that if $E$ is an EJA and $A$ is an ideal in $A$, then $A^{\perp}$ is also an ideal, and $ab = 0$ for 
all $a \in A, b \in A^{\perp}$; hence, $E \simeq A \oplus A^{\perp}$. 

\noindent Suppose $E$ is an EJA and $A \leq E$ is an ideal: let $B = A^{\perp}$. Then for all $z \in E$, 
\[\langle a, zb \rangle = \langle az, b \rangle = 0\]
since $az \in A$. Thus, $B$ is also an ideal, and $E = A \oplus B$ as a vector space. Finally, 
if $a \in A$ and $b \in B$, then $ab \in A \cap B = \{0\}$. Hence, 
if $a, x \in A$ and $b, y \in B$ then $(a + b)(x + y) = ax + by$, i.e., in the 
representation of $A \oplus B$ as $A \times B$, operations are slotwise.  What is not entirely obvious is 
that $A$ contains a unit element.  

\begin{lemma} Let $A$ be an ideal in an EJA $E$. Then there exists a projection $p \in E$ 
such that $pa = a$ for every $a \in A$. Thus, $A = pA$, and $E = pA \oplus p'A$, where $p' = 1 - p$. \end{lemma}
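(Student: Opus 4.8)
The plan is to exhibit the desired projection $p$ concretely as the unit element of the ideal $A$, viewed as an \textsc{eja} in its own right, and then to verify that this element, regarded inside $E$, acts as a left (hence two-sided, by commutativity) identity on all of $A$. First I would recall that $A$, being an ideal in the finite-dimensional \textsc{eja} $E$, is itself a finite-dimensional \textsc{eja} under the restricted Jordan product and inner product; in particular $A$ possesses its own Jordan unit, call it $p$, satisfying $p \jProd a = a$ for all $a \in A$. The element $p$ lies in $A \subseteq E$, and since $p \jProd p = p$ it is a projection in $E$. The main content of the lemma is then the claim that this $p$ serves as a local unit for $A$ \emph{with respect to the ambient product of $E$}, which is automatic because the product of $A$ is the restriction of that of $E$, so $p \jProd a = a$ holds verbatim in $E$ for every $a \in A$.

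Next I would establish the orthogonal decomposition. Set $p' = 1 - p$, where $1 = u_E$ is the unit of $E$; this is again a projection, orthogonal to $p$ in the sense that $p \jProd p' = p \jProd 1 - p \jProd p = p - p = 0$. The Peirce decomposition of $E$ relative to the projection $p$ splits $E$ into the eigenspaces $E_1(p)$, $E_{1/2}(p)$, and $E_0(p)$ of the operator $L_p$ with eigenvalues $1$, $1/2$, $0$ respectively. The crucial step is to rule out the $\frac{1}{2}$-component, i.e.\ to show $E_{1/2}(p) = \{0\}$, which is exactly where the ideal hypothesis does its work: for any $z \in E$ one has $p \jProd z \in A$ because $A$ is an ideal, and since $p$ is the unit of $A$ this forces $p \jProd (p \jProd z) = p \jProd z$, so $L_p^2 z = L_p z$ on the whole of $E$, meaning $L_p$ satisfies $L_p^2 = L_p$ and thus has no eigenvalue $1/2$. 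Consequently $E = E_1(p) \oplus E_0(p) = pA \oplus p'A$ as a direct sum of ideals, with $E_1(p) = A$ and $E_0(p) = A^{\perp}$.

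The step I expect to be the genuine obstacle is the verification that $L_p$ is idempotent as an operator on all of $E$ (equivalently, the vanishing of the Peirce $\frac12$-space), since this is what converts ``$p$ is the internal unit of $A$'' into the global splitting $E = pA \oplus p'A$. The argument above reduces it to the single identity $p \jProd (p \jProd z) = p \jProd z$ for all $z \in E$, which follows from $p \jProd z \in A$ (ideal property) together with $p$ being the unit of $A$; but I would want to be careful that the Jordan-algebraic manipulations — in particular the linearized Jordan identity needed to pass between $L_p^2 = L_p$ and the Peirce eigenvalue analysis — are invoked correctly, appealing to the spectral theorem for \textsc{eja}s and the standard fact (from \cite{HancheOlsen1984}) that a projection $p$ yields eigenvalues of $L_p$ confined to $\{0, \tfrac12, 1\}$. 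Once that is secured, the identifications $pA = A$ and $p'A = A^{\perp}$ and the self-adjointness of $L_p$ (from the Euclidean property $\langle p \jProd x, y\rangle = \langle x, p \jProd y\rangle$) give the orthogonal direct-sum decomposition immediately.
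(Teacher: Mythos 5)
Your argument's second half is sound, but its opening step is circular: you obtain the projection $p$ by declaring that $A$, being an ideal of $E$, ``is itself a finite-dimensional \textsc{eja}'' and therefore possesses a Jordan unit. Under the definition of Jordan algebra used in this thesis an algebra is unital \emph{by definition}, so asserting that the ideal is an \textsc{eja} is asserting, not proving, that it has a unit --- and the existence of that unit is exactly the nontrivial content of the lemma. The thesis itself flags this in the sentence immediately preceding the statement (``What is not entirely obvious is that $A$ contains a unit element'') and supplies no proof, only the citation to \cite{AS}, Propositions 2.7, 2.39 and 2.41. It is true that a finite-dimensional formally real Jordan algebra admits a unit even if one is not assumed, but that is itself a theorem requiring an argument (semisimplicity via the positive-definite trace form, say), not something one can simply recall.

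The gap is easy to close with material established in the paragraph just before the lemma, where it is shown that $B=A^{\perp}$ is also an ideal, that $E=A\oplus B$ as a vector space, and that $A\jProd B=\{0\}$. Decompose the unit as $u_{E}=p+p'$ with $p\in A$ and $p'\in B$. For any $a\in A$,
\[
a \;=\; u_{E}\jProd a \;=\; p\jProd a + p'\jProd a \;=\; p\jProd a,
\]
since $p'\jProd a\in A\cap B=\{0\}$; taking $a=p$ gives $p\jProd p=p$, so $p$ is a projection acting as the identity on $A$. With $p$ in hand, your computation that $L_{p}^{2}=L_{p}$ (using $p\jProd z\in A$ for every $z\in E$ by the ideal property, then applying $p$ again) is correct, and together with the self-adjointness of $L_{p}$ it kills the Peirce $\tfrac12$-space and identifies $E_{1}(p)=A$ and $E_{0}(p)=A^{\perp}$; note that the ``$p'A$'' in the statement should be read as $p'\jProd E=A^{\perp}$, since $p'\jProd A=\{0\}$. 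Strictly speaking the Peirce analysis is then optional, because the decomposition $E=A\oplus A^{\perp}$ into complementary ideals is already available from the preceding paragraph, but your route through $L_{p}^{2}=L_{p}$ is a correct and self-contained way to recover it.
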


\noindent For a proof in the more general setting of JBW algebras, see  \cite{AS}, Propositions 2.7, 2.39 
and 2.41. 

\noindent The {\em center} of an EJA $E$ is the set of elements operator-commuting with all other elements. Denote this by $C(E)$
If $E = A \oplus B$, and $p$ is the unit of $A$, so that $A = pA$, then it's easy to check 
that $p  \in C(E)$.  Conversely, if $p$ is a central projection, then $pA$ is an ideal, with unit element $p$. 
If $p$ is a {\em minimal} central projection, then $pA$ is a minimal direct summand of $E$.  If $E$ is simple, 
then its only central projections are $0$ and $1$, and conversely. 


\noindent One can show that for every projection $e$ in an EJA $E$, there exists a unique miniml projection $c(e) \in C(E)$, 
the {\em central cover} of $p$, such that $e \leq c(e)$. Then $A := c(p)E$ is an ideal of $E$, in which 
$c(p)$ is the unit. If $A$ is a minimal ideal, then  
elements of $A$ are exactly those with central cover $c(p)$. [\cite{AS}, 2.37, 2.39]. More generally, 
two elements of $E$ having the same central cover have nonzero components in exactly the same ideals of $E$.


\noindent Recall that a {\em symmetry} of $A$ is an element $s \in A$ with $s^2 = u$. 
Projections $e, f$ in $A$ are {\em exchanged by a symmetry} $s$ iff $U_{s}(e) = f$. If there 
exists a sequence of symmetries $s_1,...,s_n$ with $U_{s_n} \circ \cdots \circ U_{s_1} (e) = f$, then 
$e$ and $f$ are {\em equivalent}. The following is Lemma 3.9 from \cite{AS}. 

\begin{lemma}[\cite{AS}, Lemma 3.9]\label{AS39}
Equivalent projections have the same central cover. 
\end{lemma}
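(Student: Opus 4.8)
The plan is to reduce the statement to a single fact about quadratic representations: for a symmetry $s \in A$, the map $U_s$ is a Jordan automorphism of $A$ that fixes the centre $C(A)$ pointwise. Given this, if $e$ and $f$ are equivalent, then $f = \phi(e)$ where $\phi = U_{s_n} \circ \cdots \circ U_{s_1}$ is a composite of such maps, hence itself a Jordan automorphism fixing $C(A)$; since each $U_{s_i}$ is its own inverse, $\phi^{-1} = U_{s_1} \circ \cdots \circ U_{s_n}$ is another map of the same kind, with $\phi^{-1}(f) = e$. The conclusion $c(e) = c(f)$ then follows from a short order-theoretic argument.

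First I would verify the two properties of $U_s$. For invertibility, I would invoke the fundamental Jordan identity $U_{a^2} = (U_a)^2$ (see \cite{FK}): since $s^2 = u$, this gives $U_s^2 = U_{s^2} = U_u = \mathrm{id}$, and $U_s(u) = 2 s\jProd(s\jProd u) - s^2\jProd u = 2s^2 - s^2 = u$. Together with the positivity of $U_s$ from \cref{prop: quadratic representation}(a) (whose inverse $U_s$ is then also positive), this shows $U_s$ is a unit-preserving order automorphism, i.e.\ a symmetry \emph{of} $A$, and hence a Jordan automorphism by \cite{AS} Theorem 2.80. For the centre-fixing property, I would take $z \in C(A)$, so that $z$ operator-commutes with $s$, i.e.\ $z\jProd(s\jProd x) = s\jProd(z\jProd x)$ for all $x$; setting $x = s$ and using $s\jProd s = u$ yields $z = s\jProd(s\jProd z)$, whence $U_s(z) = 2 s\jProd(s\jProd z) - s^2\jProd z = 2z - z = z$. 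In particular $U_s$ fixes every central projection, so $\phi$ and $\phi^{-1}$ do as well.

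The final step is purely order-theoretic. The central cover $c(e)$ is a central projection with $c(e) \geq e$; since $\phi$ fixes it and preserves the order, $c(e) = \phi(c(e)) \geq \phi(e) = f$, so $c(e)$ is a central projection dominating $f$, whence $c(f) \leq c(e)$ by minimality of $c(f)$. Running the same argument with $\phi^{-1}$ applied to $f$ (using $\phi^{-1}(f) = e$) gives $c(e) \leq c(f)$, so $c(e) = c(f)$, as required. I expect the only real subtlety to be Step two, namely assembling the algebraic involutivity, positivity, and unit-preservation needed to certify that $U_s$ is genuinely a Jordan automorphism \emph{and} that it fixes the centre; this is where the quadratic (algebraic) structure and the order structure must be brought together. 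Once that is secured, the passage to central covers is routine.
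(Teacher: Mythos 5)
Your argument is correct. Note that the paper does not actually prove this statement; it is recalled verbatim as a citation to \cite{AS}, Lemma 3.9, so there is no in-paper proof to compare against. Your route — $U_s^2 = U_{s^2} = U_u = \mathrm{id}$ and $U_s(u)=u$ together with the positivity of $U_s$ from \cref{prop: quadratic representation}(a) make $U_s$ a unit-preserving order-automorphism, hence a Jordan automorphism, which fixes $C(\mathcal{A})$ pointwise by the computation $U_s(z) = 2\,s\jProd(s\jProd z) - s^2\jProd z = z$; then any composite $\phi$ of such maps sends $c(e)$ to a central projection dominating $f$ and $\phi^{-1}$ does the reverse, forcing $c(e)=c(f)$ — is exactly the standard argument, and all the supporting facts you invoke ($U_{a^2}=U_a^2$ from the fundamental formula, \cite{AS} Theorem 2.80, the minimality characterization of the central cover) are available in the references the paper already uses.
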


\noindent Recall that a sequence of vector spaces and linear maps, or of Jordan algebras and Jordan homomorphisms, or of 
$C^{\ast}$ algebras and $\ast$-homomorphisms 
\[ A \stackrel{\alpha}{\longrightarrow} B \stackrel{\beta}{\longrightarrow} C\]
is said to be {\em exact at $B$} iff the image of $\alpha$ is the kernel of $\beta$. A {\em short exact sequence} is 
one of the form 
\[0 \longrightarrow A \stackrel{\alpha}{\longrightarrow} B \stackrel{\beta}{\longrightarrow} C \longrightarrow 0\]
that is exact at $A$, $B$ and $C$ (with the maps on the ends being the only possible ones). This means that 
$\alpha$ is injective (its kernel is $0$), while $\beta$ is surjective (its image is the kernel of the zero map, 
i.e., all of $C$). 

\noindent Let $\EJA$ and $\Cstar$ be the categories of EJAs and Jordan homomorphisms, and of $C^{\ast}$-algebras and 
$\ast$-homomorphisms, respectively.

\begin{theorem}[\cite{HO}] 
$A \mapsto \Cu(A)$ is an exact functor from $\EJA$ to $\Cstar$. 
In other words, if $A \stackrel{\alpha}{\longrightarrow} B \stackrel{\beta}{\longrightarrow} C$ is an exact sequence in $\EJA$, then $\Cu(A) \stackrel{\Cu(\alpha)}{\longrightarrow} B \stackrel{\Cu(\beta)}{\longrightarrow} \Cu(C)$ is an exact sequence in $\Cstar$. 
\end{theorem}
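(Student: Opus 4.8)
The plan is to first pin down the action of $\Cu$ on morphisms and its functoriality, then split the desired exactness at $\Cu(B)$ into the easy inclusion $\mathrm{im}\,\Cu(\alpha)\subseteq\ker\Cu(\beta)$ and the substantive reverse inclusion, which I would establish by reducing everything to semisimple structure theory via \cref{dsLem}. For a (not-necessarily-unital) Jordan homomorphism $\phi:\mathcal{A}\longrightarrow\mathcal{B}$, the composite $\psi_{\mathcal{B}}\circ\phi:\mathcal{A}\longrightarrow \Cu(\mathcal{B})_{\text{sa}}$ is again a Jordan homomorphism, so \cref{existUThm}(ii) furnishes a unique $*$-homomorphism $\Cu(\phi):\Cu(\mathcal{A})\longrightarrow\Cu(\mathcal{B})$ with $\Cu(\phi)\circ\psi_{\mathcal{A}}=\psi_{\mathcal{B}}\circ\phi$. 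One caveat to dispatch at the outset: because an exact sequence forces $\alpha$ to map onto the ideal $\ker\beta$, the maps here cannot all be unital (unitality of $\alpha$ and $\beta$ would force $C=0$), so throughout I work in the category of EJAs and \emph{not-necessarily-unital} Jordan homomorphisms, with the universal property of \cref{existUThm} read accordingly. Uniqueness in that universal property then immediately gives $\Cu(\mathrm{id})=\mathrm{id}$ and $\Cu(\gamma\circ\phi)=\Cu(\gamma)\circ\Cu(\phi)$, since both sides are $*$-homomorphisms agreeing after precomposition with the generating embedding.

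Half-exactness is then formal: exactness at $B$ gives $\beta\circ\alpha=0$, and $\Cu(0)=0$ because the unique $*$-homomorphism vanishing on the generating set $\psi(\mathcal{A})$ is the zero map. Hence $\Cu(\beta)\circ\Cu(\alpha)=\Cu(\beta\circ\alpha)=0$, i.e. $\mathrm{im}\,\Cu(\alpha)\subseteq\ker\Cu(\beta)$.

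For the reverse inclusion I would exploit finite-dimensionality. Writing $J=\ker\beta=\mathrm{im}\,\alpha$, I first decompose $B\cong\bigoplus_k B_k$ into simple summands by \cref{jvwClass} and apply \cref{dsLem} to obtain $\Cu(B)\cong\bigoplus_k\Cu(B_k)$. A Jordan ideal of $\bigoplus_k B_k$ is a sub-sum: if $e_k$ denotes the central unit of $B_k$, then for $j\in J$ the component $j\jProd e_k=L_{e_k}(j)$ lies in $J\cap B_k$, and $J\cap B_k\in\{0,B_k\}$ by simplicity, so $J=\bigoplus_{k\in S}B_k$ for some index set $S$. Consequently $\mathrm{im}\,\Cu(\alpha)=\mathfrak{c}\big(\psi_B(J)\big)=\bigoplus_{k\in S}\Cu(B_k)$, which is visibly a two-sided ideal of $\Cu(B)$. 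It remains to identify $\ker\Cu(\beta)$ with this sub-sum. The inclusion $\bigoplus_{k\in S}\Cu(B_k)\subseteq\ker\Cu(\beta)$ holds because $\beta$ annihilates each $B_k$ with $k\in S$, so $\Cu(\beta)$ annihilates the generators $\psi_{B_k}(B_k)$ and hence all of $\Cu(B_k)$. For the opposite inclusion, $\ker\Cu(\beta)$ is a closed ideal of the semisimple algebra $\bigoplus_k\Cu(B_k)$, hence itself a sub-sum; were it to contain a whole summand $\Cu(B_k)$ with $k\notin S$, then $\Cu(\beta)$ would kill $\psi_{B_k}(B_k)$, giving $\psi_C(\beta(b))=\Cu(\beta)(\psi_{B_k}(b))=0$ for all $b\in B_k$. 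But $\beta$ is injective on $\bigoplus_{k\notin S}B_k$ and $\psi_C$ is injective, so $\psi_C\circ\beta$ is injective on $B_k$, a contradiction. Thus $\ker\Cu(\beta)=\bigoplus_{k\in S}\Cu(B_k)=\mathrm{im}\,\Cu(\alpha)$, which is exactness at $\Cu(B)$.

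The main obstacle, and the one genuinely nontrivial ingredient, is the direct-sum compatibility $\Cu(B)\cong\bigoplus_k\Cu(B_k)$ supplied by \cref{dsLem}; once it is in hand the argument is pure semisimple bookkeeping, since every ideal on either side is forced to be a sub-sum of the simple blocks, and the injectivity of $\psi_C$ does the rest. The only other point demanding care is the uniform treatment of non-unital Jordan homomorphisms (maps onto proper ideals), which must be arranged so that the universal property, functoriality, and the identification $\mathrm{im}\,\Cu(\alpha)=\mathfrak{c}(\psi_B(J))$ all remain valid simultaneously.
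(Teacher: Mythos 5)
The paper itself offers no proof of this statement --- it is quoted from Hanche--Olsen \cite{HO} and then \emph{used} to derive \cref{dsLem} (the isomorphism $\Cu(A \oplus B) \cong \Cu(A) \oplus \Cu(B)$, proved in \cref{appendix: extending} precisely by applying exactness to the split sequences $0 \to A \to A\oplus B \to B \to 0$). Your argument runs in the opposite direction: you take \cref{dsLem} as the "one genuinely nontrivial ingredient." Within the paper's logical architecture this is circular. The circularity is repairable in principle --- $\Cu(A\oplus B)\cong\Cu(A)\oplus\Cu(B)$ can be established directly from the universal property of \cref{existUThm}, since the central idempotents $u_A\oplus 0$ and $0\oplus u_B$ map to orthogonal central projections under any representation --- but you would need to supply that independent proof rather than cite \cref{dsLem}.

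The more serious problem is a genuine gap in your final step. Having reduced to $\Cu(B)\cong\bigoplus_k\Cu(B_k)$ with $J=\ker\beta=\bigoplus_{k\in S}B_k$, you assert that $\ker\Cu(\beta)$, being a closed ideal of $\bigoplus_k\Cu(B_k)$, "is itself a sub-sum," and you derive a contradiction only from the possibility that it contains a \emph{whole} summand $\Cu(B_k)$ with $k\notin S$. But the blocks $\Cu(B_k)$ are not simple in general: $\Cu(\mathcal{M}_n(\mathbb{C})_{\text{sa}})=\mathcal{M}_n(\mathbb{C})\oplus\mathcal{M}_n(\mathbb{C})$ and $\Cu(\mathcal{M}_2(\mathbb{H})_{\text{sa}})=\mathcal{M}_4(\mathbb{C})\oplus\mathcal{M}_4(\mathbb{C})$. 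An ideal of $\bigoplus_k\Cu(B_k)$ is only a sub-sum of the underlying \emph{matrix} blocks, so $\ker\Cu(\beta)$ could a priori contain one of the two matrix factors of some $\Cu(B_k)$ with $k\notin S$ without containing $\Cu(B_k)$ itself; since $\psi_{B_k}(b)=(b,b^{\mathrm{T}})$ injects into either factor separately, killing one factor does not kill $\psi_{B_k}(B_k)$, and your contradiction never fires. Exactly this configuration is what makes left-exactness of $\Cu$ a real theorem. The standard way to close the gap is to observe that $\Phi_C\circ\Cu(\beta)=\Cu(\beta)\circ\Phi_B$ (both sides are $*$-antihomomorphisms agreeing on the generating set $\psi_B(B)$, by \cref{existUThm}(iii)), so that $\ker\Cu(\beta)$ is invariant under the canonical involution $\Phi_B$; since $\Phi_{B_k}$ interchanges the two matrix factors of $\Cu(B_k)$ in the reducible cases, a $\Phi_B$-invariant ideal is forced to be a sub-sum of the $\Cu(B_k)$ after all, and only then does your argument go through. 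A secondary, fixable point: the paper's Jordan homomorphisms are unital by convention, so your silent passage to non-unital morphisms (unavoidable for $\alpha$, whose image is a proper ideal) needs the universal property restated and re-verified in that setting rather than "read accordingly."
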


\noindent We are going to use this to show that $\Cu(A \oplus B) = \Cu(A) \oplus \Cu(B)$. We need some preliminaries. 
The following is standard: 

\begin{lemma}\label{exact sequences} Let 
\[0 \longrightarrow A \stackrel{\alpha}{\longrightarrow} C \stackrel{\beta}{\longrightarrow} B \longrightarrow 0\]
be a short exact sequence of vector spaces. Then the following are equivalent:
\begin{itemize} 
\item[(a)] There is an isomorphism $\phi : A \oplus B \simeq C$ such that $\alpha$ and $\beta$ are respectively the canonical injection and surjection given by  
\[\alpha(a) = \phi(a,0) \ \ \mbox{and} \ \ \beta(\phi(a,b)) = b\]
\item[(b)] The sequence is {\em split} at $B$: there exists a linear mapping $\gamma : B \rightarrow C$ such 
that $\beta \circ \gamma = \id_{B}$.
\end{itemize}
\end{lemma}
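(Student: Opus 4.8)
The final statement to prove is \cref{exact sequences}, the standard ``splitting lemma'' for short exact sequences of vector spaces. This is a routine piece of linear algebra, and the plan is to prove the equivalence by establishing both implications (a)$\implies$(b) and (b)$\implies$(a).

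For the direction (a)$\implies$(b), I would assume the isomorphism $\phi:A\oplus B\longrightarrow C$ is given with $\alpha$ and $\beta$ the canonical injection and surjection. The natural candidate for the splitting map is $\gamma:B\longrightarrow C$ defined by $\gamma(b)=\phi(0,b)$. One then checks directly that $\beta(\gamma(b))=\beta(\phi(0,b))=b$ by the stated formula for $\beta$, so $\beta\circ\gamma=\id_{B}$. This direction is immediate.

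For the converse (b)$\implies$(a), I would assume a linear map $\gamma:B\longrightarrow C$ with $\beta\circ\gamma=\id_{B}$, and construct the desired isomorphism. The key observations are standard: first, $\gamma$ is injective (since it has a left inverse $\beta$), and second, the image of $\alpha$ (which equals $\ker\beta$ by exactness) together with the image of $\gamma$ decomposes $C$ as an internal direct sum. The decomposition follows from the identity $c=(c-\gamma(\beta(c)))+\gamma(\beta(c))$, where the first summand lies in $\ker\beta=\operatorname{im}\alpha$ since $\beta(c-\gamma(\beta(c)))=\beta(c)-\beta(c)=0$, and the second lies in $\operatorname{im}\gamma$; one then verifies $\operatorname{im}\alpha\cap\operatorname{im}\gamma=\{0\}$ using injectivity of $\gamma$ and that $\operatorname{im}\gamma\cap\ker\beta=\{0\}$. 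Defining $\phi:A\oplus B\longrightarrow C$ by $\phi(a,b)=\alpha(a)+\gamma(b)$ yields a linear bijection, and one checks $\phi(a,0)=\alpha(a)$ and $\beta(\phi(a,b))=\beta(\alpha(a))+\beta(\gamma(b))=0+b=b$, recovering the formulas in (a).

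There is no genuine obstacle here; the only point requiring mild care is verifying that $\phi$ is a bijection, which reduces to the internal direct sum decomposition $C=\operatorname{im}\alpha\oplus\operatorname{im}\gamma$ just described. Since the statement is being invoked only as a supporting lemma in the appendix (to conclude that $C^{*}_{u}(\mathcal{A}\oplus\mathcal{B})=C^{*}_{u}(\mathcal{A})\oplus C^{*}_{u}(\mathcal{B})$ from the exactness of the functor $\mathcal{A}\longmapsto C^{*}_{u}(\mathcal{A})$), I would keep the proof brief and cite it as a well-known fact from homological algebra, spelling out only the explicit construction of $\phi$ and $\gamma$ so that the reader can see the correspondence between the splitting and the direct-sum decomposition concretely.
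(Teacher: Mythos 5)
Your proposal is correct and follows essentially the same route as the paper's proof: $\gamma(b)=\phi(0,b)$ for (a)$\implies$(b), and $\phi(a,b)=\alpha(a)+\gamma(b)$ for the converse, with bijectivity of $\phi$ established via the decomposition $c=(c-\gamma(\beta(c)))+\gamma(\beta(c))$ — which is exactly the paper's surjectivity argument — and the same triviality-of-intersection computation for injectivity. No gaps.
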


\noindent The idea is that, given $\phi$, we can define $\gamma$ by $\gamma(b) = \phi(0,b)$ and, given $\gamma$, we can 
define $\phi$ by $\phi(a,b) = \alpha(a) + \gamma(b)$. 

\tempout{
\noindent{\em Proof:} Given (a), we can define $\gamma(b) = \phi(0,b)$: then $\beta(\gamma(b)) = \beta(\phi(0,b)) = b$, so 
the sequence is split at $B$. Conversely, suppose $\gamma : B \rightarrow C$ is a right inverse for $\beta$. 
Then define $\phi : A \oplus B \rightarrow C$ by $\phi(a,b) = \alpha(a) + \gamma(b)$. Since $\gamma$ is linaer, 
we have $\phi(a,0) = \alpha(a)$. Also 
\[\beta(\phi(a,b)) = \beta(\alpha(a)) + \beta(\gamma(b)) = b\]
since $\alpha(a)$ belongs to $\ker(\beta)$ and $\beta \circ \gamma = \id_{B}$. It remains to show that $\phi$ is 
an isomorphism. If $\phi(a,b) = 0$, then $\alpha(a) + \gamma(b) = 0$, i.e., 
$\alpha(a) = -\gamma(b)$. It follows that $\gamma(b) \in \ker(\beta)$, whence, $b = \beta(\gamma(b)) = 0$. 
Hence, $\alpha(a) = 0$. But $\alpha$ is injective, so $a = 0$. This shows that $\phi$ is injective. To see it's surjective, let $c \in C$, and let $c_2 = \gamma(\beta(c))$: then $\beta(c_2) = \beta(c)$. Setting $c_1 = c = c_2$, 
we have $\beta(c_2) = 0$: by exactness, $c_2$ belongs to the image of $\alpha$, i.e., $c_2 = \alpha(a)$ for 
some $a \in A$. Taking $b = \beta(c)$, we have $\phi(a,b) = \alpha(a) + \gamma(b) = c_1 + c_2 = c$. $\Box$ }

\noindent If $A$, $B$ and $C$ are Jordan algebras or $C^{\ast}$ algebras, the implication from (a) to (b) is obviously valid, 
but the converse requires additional assumptions.

\begin{lemma}\label{exact sequences and ideals} 
Let \[0 \longrightarrow A \stackrel{\alpha}{\longrightarrow} C \stackrel{\beta}{\longrightarrow} B \longrightarrow 0.\]
be a short exact sequence of $\ast$-algebras and $\ast$-homomorphisms, split by a $\ast$-homomorphism 
$\gamma :  B \rightarrow C$ with $\beta \circ \gamma = \id_{B}$. Let $\phi : A \oplus B \rightarrow C$ be 
as defined above, {\redd i.e, $\phi(a,b) = \alpha(a) + \gamma(b)$ for $a \in A$, $b \in B$.}  Then the following are equivalent: 
\begin{itemize} 
\item[(a)] $\gamma(B)$ is a (2-sided) $\ast$-ideal in $C$;
\item[(b)] $\phi$ is multiplicative, and thus a $\ast$-isomorphism;  
\item[(c)] There exists a $\ast$-homomorphism $\delta : C \rightarrow A$ with 
\[0 \longleftarrow A \stackrel{\delta}{\longleftarrow} C \stackrel{\gamma}{\longleftarrow} B \longleftarrow 0\]
exact.
\end{itemize} 
\begin{proof} (a) $\Rightarrow$ (b). It is easy to see that a $C^{\ast}$-algebra $C$ is the direct sum of two $\ast$-ideals $A, B \leq C$ iff $A + B = C$ and $A \cap C = \{0\}$, i.e., iff $C$ is the vector-space 
direct sum of $A$ and $B$.  We already know that $\alpha(A) + \beta(B) = C$ (since $\phi$ is a linear 
isomorphism). It therefore suffices to show that $\alpha(A)$ and $\gamma(B)$ are $\ast$-ideals with 
zero intersection. We are assuming that $\gamma(B)$ is a $\ast$ ideal. As it's the kernel of a $\ast$-homomorphism, 
$\alpha(A)$ is automatically a $\ast$-ideal. To see that $\alpha(A) \cap \gamma(C) = \{0\}$, 
let $c \in C$ with $c = \alpha(a) = \gamma(b)$ for some $a \in A$ and $b \in B$. Then we have 
\[b = \beta(\gamma(b)) = \beta(\alpha(a)) = 0\]
whence, $c = \gamma(0) = 0$. 


(b) $\Rightarrow$ (c). If $\phi$ is a $\ast$-isomorphism, then let $\delta = \pi_{A} \circ \phi^{-1}$ where 
$\pi_{A} : A \oplus B \rightarrow A$ is the projection $\pi_{A}(a,b) = a$. Note that $\delta$ is the composition 
of two $\ast$-homomorphisms, and thus, a $\ast$-homomorphism. To verify exactness, 
note that as $\phi(a,b) = \alpha(a) + \gamma(b)$, 
we have $\phi(0,b) = \gamma(b)$, whence, $\phi^{-1}(\gamma(b)) = (0,b)$. Thus, 
$\delta(\gamma(b)) {\redd = \pi_{A}(0,b)} = 0$. 

(c) $\Rightarrow$ (a). By exactness, $\gamma(C)$ is the kernel of the $\ast$-homomorphism $\delta$, and hence, 
a $\ast$-ideal.
\end{proof}
\end{lemma}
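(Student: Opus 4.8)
The plan is to prove the chain of implications (a) $\Rightarrow$ (b) $\Rightarrow$ (c) $\Rightarrow$ (a), exploiting the fact that we already have, from Lemma~\ref{exact sequences}, that the splitting $\gamma$ makes $\phi(a,b) = \alpha(a)+\gamma(b)$ a \emph{linear} isomorphism $A \oplus B \to C$. Since $\alpha$ and $\gamma$ are $\ast$-homomorphisms, $\phi$ automatically intertwines the involutions and sends the slotwise algebraic unit of $A \oplus B$ to the unit of $C$; thus the only content of condition (b) beyond what we already know is \emph{multiplicativity} of $\phi$. First I would record the two facts that hold unconditionally: $\alpha(A) = \ker(\beta)$ is a two-sided $\ast$-ideal (being the kernel of a $\ast$-homomorphism), and $\alpha(A) \cap \gamma(B) = \{0\}$, the latter because $c = \alpha(a) = \gamma(b)$ forces $b = \beta(\gamma(b)) = \beta(\alpha(a)) = 0$ and hence $c = \gamma(0) = 0$.

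For (a) $\Rightarrow$ (b), assume $\gamma(B)$ is a two-sided $\ast$-ideal. The crux is to show that $C$ splits as an algebra, not merely as a vector space, into the ideals $\alpha(A)$ and $\gamma(B)$. Given $a' \in \alpha(A)$ and $b' \in \gamma(B)$, the cross product $a'b'$ lies in $\alpha(A)$ (as $\alpha(A)$ is an ideal and $b' \in C$) and simultaneously in $\gamma(B)$ (as $\gamma(B)$ is an ideal and $a' \in C$); since $\alpha(A) \cap \gamma(B) = \{0\}$, the cross terms vanish, and likewise $b'a' = 0$. Consequently multiplication in $C$ is slotwise with respect to the decomposition $\alpha(A) + \gamma(B) = C$, which is exactly the product carried over by $\phi$ from $A \oplus B$; so $\phi$ is multiplicative, hence a $\ast$-isomorphism. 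I expect this cross-term-vanishing argument to be the main obstacle, in the sense that it is the one genuinely algebraic (rather than formal) step: everything else is bookkeeping, whereas here one must convert the two ideal hypotheses plus trivial intersection into the statement that $C$ is an \emph{algebra} direct sum.

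For (b) $\Rightarrow$ (c), I would set $\delta = \pi_A \circ \phi^{-1}$, where $\pi_A : A \oplus B \to A$ is the canonical projection $(a,b) \mapsto a$; as a composite of $\ast$-homomorphisms it is a $\ast$-homomorphism. Exactness of the reversed sequence is then immediate: $\gamma$ is injective (being a section of $\beta$), $\delta$ is surjective (as $\pi_A$ is surjective and $\phi^{-1}$ bijective), and $\ker(\delta) = \phi(\{0\} \oplus B) = \gamma(B) = \mathrm{im}(\gamma)$, using $\phi^{-1}(\gamma(b)) = (0,b)$. Finally, for (c) $\Rightarrow$ (a), exactness of the reversed sequence at $C$ gives $\gamma(B) = \ker(\delta)$, and a kernel of a $\ast$-homomorphism is a two-sided $\ast$-ideal, closing the cycle.
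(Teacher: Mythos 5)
Your proof is correct and takes essentially the same route as the paper's: the cycle (a) $\Rightarrow$ (b) $\Rightarrow$ (c) $\Rightarrow$ (a), with the same trivial-intersection computation $b = \beta(\gamma(b)) = \beta(\alpha(a)) = 0$, the same map $\delta = \pi_{A} \circ \phi^{-1}$, and the same identification $\ker(\delta) = \gamma(B)$. The only difference is that you make explicit the cross-term-vanishing step ($a'b' \in \alpha(A) \cap \gamma(B) = \{0\}$, so multiplication is slotwise), which the paper compresses into the unproved remark that $C$ is an algebra direct sum of two $\ast$-ideals iff it is their vector-space direct sum.
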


\noindent Now let $E = A \oplus B$. Then we have a short exact sequence 
\[0 \longrightarrow A \stackrel{j}{\longrightarrow} A \oplus B \stackrel{p}{\longrightarrow} B \longrightarrow 0.\]
where $j(a) = (a,0)$ and $p(a,b) = b$. This is split by the homomorphism $k : A \rightarrow A \oplus B$ given 
by $k(b) = (0,b)$. Hanche-Olsen's exactness theorem (2.1) 
gives us a short exact sequence 
\[0 \longrightarrow \Cu(A) \stackrel{C^{\ast}(j)}{\longrightarrow} \Cu(A \oplus B) \stackrel{C^{\ast}(p)}{\longrightarrow} \Cu(B) \longrightarrow 0.\]
By functoriality, $\Cu(p) \circ \Cu(j) = \id_{\Cu(B)}$, so this is again split. Thus, {\em regarded as a vector space}, $\Cu(A \oplus B)$ is canonically isomorphic to 
$\Cu(A) \oplus \Cu(B)$. On the other hand, we {\em also} have an exact sequence 
\[0 \longleftarrow A \stackrel{q}{\longleftarrow} A \oplus B \stackrel{k}{\longleftarrow} B \longleftarrow 0\]
where $q(a,b) = a$. By the same argument, then, we have a short exact sequence 
\[0 \longleftarrow \Cu(A) \stackrel{\Cu(q)}{\longleftarrow} \Cu(A \oplus B) \stackrel{\Cu(k)}{\longleftarrow} \Cu(B) \longleftarrow 0.\]
Applying the preceding Lemma, we have

\begin{proposition} \textit{If $A$ and $B$ are EJAs, then  $\Cu(A \oplus B) \simeq \Cu(A) \oplus \Cu(B)$.}
\end{proposition}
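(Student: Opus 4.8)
The plan is to deduce the result purely formally from Hanche-Olsen's exactness theorem together with the two splitting lemmas established just above, so that no genuine $C^{\ast}$-algebraic computation is required. The starting point is the observation that $A \oplus B$ sits inside a \emph{split} short exact sequence of Euclidean Jordan algebras. Writing $j : A \to A \oplus B$ and $k : B \to A \oplus B$ for the canonical inclusions $j(a) = (a,0)$, $k(b) = (0,b)$, and $p : A \oplus B \to B$, $q : A \oplus B \to A$ for the projections, I would first record
\[
0 \longrightarrow A \stackrel{j}{\longrightarrow} A \oplus B \stackrel{p}{\longrightarrow} B \longrightarrow 0,
\]
which is exact and is split by the Jordan homomorphism $k$, since $p \circ k = \id_{B}$.

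The second step is to apply the exact functor $A \mapsto \Cu(A)$. By Hanche-Olsen's exactness theorem this yields a short exact sequence of $C^{\ast}$-algebras
\[
0 \longrightarrow \Cu(A) \stackrel{\Cu(j)}{\longrightarrow} \Cu(A \oplus B) \stackrel{\Cu(p)}{\longrightarrow} \Cu(B) \longrightarrow 0,
\]
and functoriality of $\Cu$ turns the Jordan-algebraic splitting into a $C^{\ast}$-algebraic one, since $\Cu(p) \circ \Cu(k) = \Cu(p \circ k) = \Cu(\id_{B}) = \id_{\Cu(B)}$. At the level of vector spaces the splitting lemma then already delivers a linear isomorphism $\phi : \Cu(A) \oplus \Cu(B) \simeq \Cu(A \oplus B)$ given by $\phi(x,y) = \Cu(j)(x) + \Cu(k)(y)$; the only content that remains is to promote this to a genuine $\ast$-isomorphism.

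To do that, I would run the same machine on the complementary (reversed) exact sequence
\[
0 \longleftarrow A \stackrel{q}{\longleftarrow} A \oplus B \stackrel{k}{\longleftarrow} B \longleftarrow 0,
\]
again applying $\Cu$ to obtain an exact sequence of $C^{\ast}$-algebras in which $\Cu(q)$ is a $\ast$-homomorphism satisfying $\Cu(q) \circ \Cu(j) = \id_{\Cu(A)}$. This is exactly condition (c) of the ideal-splitting lemma, namely a $\ast$-homomorphism $\delta = \Cu(q)$ rendering the complementary sequence exact; its equivalence with condition (b) of that lemma forces $\phi$ to be multiplicative, and hence a $\ast$-isomorphism, which completes the argument.

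The main obstacle, and the only nontrivial input, is precisely this upgrade from a vector-space direct sum to a $\ast$-algebraic one: a split exact sequence of $\ast$-algebras need not split the multiplicative structure unless the image of the splitting is a two-sided $\ast$-ideal. The reversed exact sequence is what certifies this, so the crux is recognizing that exactness of the functor $\Cu$ applies equally to the complementary sequence, and that functoriality supplies the two one-sided inverses $\Cu(p)\Cu(k) = \id$ and $\Cu(q)\Cu(j) = \id$ simultaneously. Everything else is bookkeeping with the canonical injections and projections of $A \oplus B$.
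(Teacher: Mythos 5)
Your proof is correct and follows essentially the same route as the paper's: apply Hanche-Olsen's exactness theorem to the split short exact sequence $0 \to A \to A\oplus B \to B \to 0$ and to its complementary reversed sequence, then invoke the vector-space splitting lemma and the ideal-splitting lemma (condition (c), with $\delta = \Cu(q)$) to upgrade the linear isomorphism to a $\ast$-isomorphism. No gaps; you have in fact stated the splitting identity $\Cu(p)\circ\Cu(k) = \id_{\Cu(B)}$ more carefully than the paper does.
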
 

\noindent Notice that if $\Phi_A$ and $\Phi_B$ are, respectively, the canonical involutions on $\Cu(A)$ and $\Cu(B)$ fixing points of $A$ and $B$, then $\Phi_{A} \oplus \Phi_{B}$ is a $\ast$-involution on $\Cu(A) \oplus \Cu(B)$ fixing points of 
$A \oplus B$. But there is only one such  $\ast$-involution on $\Cu(A \oplus B)$, the canonical one. In other words, 
in identifying $\Cu(A \oplus B)$ with $\Cu(A) \oplus \Cu(B)$, we also identify $\Phi_{A \oplus B}$ with $\Phi_{A} \oplus \Phi_{B}$. 

\noindent Recalling now the fact (\cite{HO}, Lemma 4.2) that an EJA
$A$ is universally reversible (UR) iff $A$ coincides with the set of
self-adjoint fixed points in $\Cu(A)$ of the canonical
$\ast$-involution $\Phi_A$, we have the

\begin{corollary} 
\textit{If $A$ and $B$ are UR, then so is $A \oplus B$.}
\begin{proof} Let $\Phi = \Phi_{A} \oplus \Phi_{B}$ be the canonical involution on $\Cu(A \oplus B) = \Cu(A) \oplus \Cu(B)$. 
For $(a,b) \in \Cu(A) \oplus \Cu(B)$, we have 
 $\Phi(a,b) = (\Phi_{A}(a), \Phi_{B}(b)) = (a,b)$ iff $\Phi_{A}(a) = a$ and $\Phi_{B}(b) = b$. Since $A$ and $B$ 
are UR, this holds iff $a \in A$ and $b \in B$, i.e., iff $(a,b) \in A \oplus B$. Thus, $A \oplus B$ is exactly the 
set of fixed-points of $\Phi$, and so, is UR.
\end{proof}
\end{corollary}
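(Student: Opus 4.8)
The plan is to reduce everything to the Hanche-Olsen characterization of universal reversibility (Lemma 4.2 of \cite{HO}), recalled immediately above the statement: an \textsc{eja} $E$ is universally reversible precisely when $E$ coincides with the set $\Cu(E)_{\text{sa}}^{\Phi_E}$ of self-adjoint fixed points of the canonical $\ast$-involution $\Phi_E$ on its universal C$^{*}$-algebra. Thus it suffices to show that $A \oplus B$ is exactly the self-adjoint fixed-point set of the canonical involution $\Phi_{A \oplus B}$ on $\Cu(A \oplus B)$, and then to read the Hanche-Olsen criterion backwards. First I would invoke the Proposition just proved, $\Cu(A \oplus B) \simeq \Cu(A) \oplus \Cu(B)$, in order to replace the abstract universal algebra by a concrete direct sum of C$^{*}$-algebras, under which $A \oplus B$ embeds slotwise (via $\psi_A \oplus \psi_B$).

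The one step with genuine content is the identification of the canonical involution: $\Phi_A \oplus \Phi_B$ is an involutive $\ast$-antiautomorphism of $\Cu(A) \oplus \Cu(B)$ that fixes every point of the image of $A \oplus B$, and by the \emph{uniqueness} of the canonical involution (\cref{existUThm}(iii), unique by the cited Corollary 5.2 of \cite{alfsen1980state}) it must coincide with $\Phi_{A \oplus B}$ under the above identification. This is the heart of the argument — it rests on uniqueness, not merely existence, of the canonical $\ast$-antiautomorphism — and it is exactly the point already flagged in the remark preceding the statement.

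With $\Phi_{A \oplus B}$ identified with $\Phi_A \oplus \Phi_B$, the remainder is a routine slotwise computation: for $(a,b) \in \Cu(A) \oplus \Cu(B)$ one has $\Phi_{A \oplus B}(a,b) = (\Phi_A(a), \Phi_B(b))$, so $(a,b)$ is a self-adjoint fixed point iff $a \in \Cu(A)_{\text{sa}}^{\Phi_A}$ and $b \in \Cu(B)_{\text{sa}}^{\Phi_B}$. Since $A$ and $B$ are universally reversible, the criterion gives $\Cu(A)_{\text{sa}}^{\Phi_A} = A$ and $\Cu(B)_{\text{sa}}^{\Phi_B} = B$, so the fixed-point set is precisely $A \oplus B$. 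Applying the Hanche-Olsen criterion in the reverse direction to $A \oplus B$ then yields universal reversibility, completing the proof. I expect no real obstacle beyond the involution-identification step; everything else is bookkeeping once the Proposition $\Cu(A \oplus B) \cong \Cu(A) \oplus \Cu(B)$ is in hand.
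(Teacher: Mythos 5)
Your proof is correct and follows essentially the same route as the paper: identify $\Cu(A \oplus B)$ with $\Cu(A) \oplus \Cu(B)$, use uniqueness of the canonical involution to conclude $\Phi_{A \oplus B} = \Phi_A \oplus \Phi_B$ (exactly the remark the paper makes just before the corollary), compute the self-adjoint fixed points slotwise, and apply the Hanche-Olsen fixed-point criterion in both directions. No gaps.
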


\noindent In \ref{prop: canonical product composite}, we show that $A \hotimes B$ is a composite
in the sense of Definition \ref{def: new Jordan composite}.  We will show that 
any such composite $\mathcal{AB}$ 
is a direct summand of $A \hotimes B$. We rely heavily on the fact (\cref{lemma: product projections}) that $p\otimes q$ is a projection in $\mathcal{AB}$ when $p$ and $q$ are projections in $\mathcal{A}$ and $\mathcal{B}$, respectively. In order to prove \cref{lemma: product projections}, we shall require some significant preparatory work. To centre ourselves, recall that, by definition, any \textsc{eja} $\mathcal{A}$ is equipped with a self-dualizing inner product $\langle\cdot|\cdot\rangle:\mathcal{A}\times\mathcal{A}\longrightarrow\mathbb{R}$ such that $\forall a,b,c\in\mathcal{A}$ we have that
\begin{equation}
\langle a|b\jProd c\rangle=\langle a\jProd b|c\rangle\text{.}
\end{equation}
Recall that a \textit{Jordan frame} in $\mathcal{A}$ is a set $\{x_{1},\dots,x_{n}\}$ of primitive orthogonal projections with $x_{1}+x_{2}+\dots+x_{n}=u_{\mathcal{A}}$. All Jordan frames have the same size, namely $n$, which is the \textit{rank} of $\mathcal{A}$. 
We shall now recall the \textit{spectral theorem} from \cite{FK}.
\begin{theorem}\label{specThm} [Theorem III.1.2 \cite{FK}] \textit{Let} $\mathcal{A}$ \textit{be an} \textsc{eja}\textit{ of rank} $n$\textit{. Let} $a\in\mathcal{A}$\textit{. Then there exists unique real numbers} $\lambda_{1},\dots,\lambda_{n}$ \textit{and a Jordan frame} $\{x_{1},\dots,x_{n}\}$ \textit{such that }
\begin{equation}
a=x_{1}\lambda_{1}+x_{2}\lambda_{2}+\dots+x_{n}\lambda_{n}\text{.}
\label{specRes}
\end{equation}
\textit{The real numbers} $\lambda_{j}$ \textit{are referred to as the} eigenvalues \textit{of} $a$\textit{, and the} \textsc{rhs} \textit{of Eq.}~\eqref{specRes}\textit{ is called a} spectral resolution \textit{of} $a$\textit{. The} spectral radius \textit{of} $a$ \textit{is denoted} $\rho(a)$ \textit{and defined via} $\rho(a)=\text{max}\{|\lambda_{1}|,\dots,|\lambda_{n}|\}$\textit{.}
\end{theorem}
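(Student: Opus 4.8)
The plan is to prove \cref{specThm} by reducing everything to the structure of the commutative associative subalgebra generated by a single element, and then refining the resulting idempotents into a Jordan frame of the correct size. This is the standard route for the spectral theorem on a \textsc{eja}, and every ingredient it needs is either elementary algebra or already available in the excerpt.

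First I would consider the real subalgebra $\mathbb{R}[a]\subseteq\mathcal{A}$ generated by $a$ together with the unit $u_{\mathcal{A}}$ under the Jordan product. The crucial preliminary observation is that every Jordan algebra is power-associative, a standard consequence of the Jordan identity in Eq.~\eqref{jordanID}; consequently $\mathbb{R}[a]$ is in fact a commutative \emph{associative} unital $\mathbb{R}$-algebra, namely the linear span of $\{u_{\mathcal{A}},a,a^{2},a^{3},\dots\}$, where the powers are unambiguous by power-associativity. This converts a Jordan-algebraic question into a question about ordinary commutative algebra.

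Second, I would exploit formal reality (\cref{frDef}, which is equivalent to the Euclidean property \cref{ejaDef} in finite dimensions, as shown in the discussion following \cref{jvwClass}) to pin down the structure of $\mathbb{R}[a]$. Formal reality forces $\mathbb{R}[a]$ to contain no nonzero nilpotents: if $x^{2}=0$ then $x=0$. A finite-dimensional commutative associative real algebra with unit and no nonzero nilpotents is semisimple, hence by Artin--Wedderburn a finite product of fields, each isomorphic to $\mathbb{R}$ or $\mathbb{C}$; the complex factors are excluded because an element behaving like $i$, with $i^{2}=-u_{\mathcal{A}}$, would give $u_{\mathcal{A}}^{2}+i^{2}=0$ with $u_{\mathcal{A}}\neq 0$, violating formal reality. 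Therefore $\mathbb{R}[a]\cong\mathbb{R}^{m}$. Transporting the minimal idempotents of $\mathbb{R}^{m}$ back into $\mathcal{A}$ yields pairwise orthogonal projections $e_{1},\dots,e_{m}$ with $e_{1}+\cdots+e_{m}=u_{\mathcal{A}}$ and $a=\sum_{i}\mu_{i}e_{i}$ for \emph{distinct} reals $\mu_{i}$. These $\mu_{i}$ are exactly the roots of the minimal polynomial of $a$ and are hence intrinsic, which already secures the uniqueness claim: the eigenvalues, counted with the multiplicities $\mathrm{rank}(e_{i})$, are determined by $a$ alone.

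Third, and this is where I expect the real work to lie, I would refine the coarse decomposition $a=\sum_{i}\mu_{i}e_{i}$ into a genuine Jordan frame. Each $e_{i}$ is a projection, and within the Peirce $1$-subalgebra associated to $e_{i}$ (the fixed-point algebra of the multiplication operator $L_{e_{i}}$, itself a \textsc{eja} with unit $e_{i}$) one can write $e_{i}$ as a sum of pairwise orthogonal \emph{primitive} projections. The existence of such a splitting is the technical heart of the argument: one proceeds by induction on rank, repeatedly decomposing any non-primitive projection, the process terminating because $\mathcal{A}$ is finite-dimensional. Collecting the primitive projections obtained from all the $e_{i}$ produces a set $\{x_{1},\dots,x_{n}\}$ of pairwise orthogonal primitive projections summing to $u_{\mathcal{A}}$, that is, a Jordan frame of the correct cardinality $n=\mathrm{rank}\,\mathcal{A}$, and setting $\lambda_{j}=\mu_{i}$ whenever $x_{j}\leq e_{i}$ gives $a=\sum_{j}\lambda_{j}x_{j}$, as required. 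The main obstacle throughout is this last refinement — establishing that an arbitrary projection decomposes into primitive orthogonal projections and checking that the Peirce calculus is well enough behaved for the induction to close; by contrast, the purely algebraic reduction in the first two steps is comparatively routine.
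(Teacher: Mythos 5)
The paper does not prove this statement: it is imported verbatim as Theorem III.1.2 of Faraut--Kor\'anyi \cite{FK} and used as a black box, so there is no in-paper argument to compare yours against. Your sketch is nonetheless the standard proof of that theorem, and essentially the one in the cited source: power-associativity makes $\mathbb{R}[a]$ a commutative associative unital algebra; formal reality kills nilpotents and complex factors, so $\mathbb{R}[a]\cong\mathbb{R}^{m}$, giving the coarse decomposition $a=\sum_{i}\mu_{i}e_{i}$ over a complete system of orthogonal idempotents with distinct $\mu_{i}$; and one then refines each $e_{i}$ into primitive orthogonal projections. I judge the sketch correct. Two points are worth tightening. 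First, in excluding complex factors, the square root of $-1$ inside a factor satisfies $i^{2}=-e$ for that factor's idempotent $e$, not $-u_{\mathcal{A}}$; the formal-reality contradiction $e^{2}+i^{2}=0$ with $e\neq 0$ goes through unchanged, but as written the identity is off. Second, your uniqueness claim --- that the eigenvalues counted with multiplicity $\mathrm{rank}(e_{i})$ are intrinsic --- silently uses the fact that every decomposition of $e_{i}$ into pairwise orthogonal primitive projections has the same number of summands, i.e.\ that all Jordan frames of the Peirce subalgebra $U_{e_{i}}(\mathcal{A})$ have a common cardinality. That is exactly the transitivity-of-$\Aut$-on-frames fact the paper itself imports from \cite{FK} in order to define rank, so it is consistent to invoke it, but it is doing genuine work in your argument and should be flagged rather than left implicit. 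With those caveats the proposal is a faithful outline of the cited proof.
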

\noindent It will be useful for us to record some easy consequences of the spectral theorem.
\begin{proposition}\label{prop1ppp} \textit{Let} $\mathcal{A}$ \textit{be an} \textsc{eja}\textit{. Let} $a\in\mathcal{A}_{+}$\textit{. Then the eigenvalues of $a$ are nonnegative.}
\begin{proof} Let
\begin{equation}
a=x_{1}\lambda_{1}+x_{2}\lambda_{2}+\dots+x_{n}\lambda_{n}
\end{equation}
be a spectral resolution of $a$. $\mathcal{A}_{+}$ is self-dual, so $\forall j\in\{1,\dots,n\}$ we have $\langle x_{j}|a\rangle\geq 0$. Therefore
\begin{equation}
0\leq \langle x_{j}|a\rangle=\sum_{k=1}^{n}\lambda_{k}\langle x_{j}|x_{k}\rangle=\sum_{k=1}^{n}\lambda_{k}\langle u_{\mathcal{A}}\jProd x_{j}|x_{k}\rangle=\sum_{k=1}^{n}\lambda_{k}\langle u_{\mathcal{A}}|x_{j}\jProd x_{k}\rangle=\sum_{k=1}^{n}\lambda_{k}\langle x_{k}|x_{k}\rangle\delta_{j,k}=\lambda_{j}\langle x_{j}|x_{j}\rangle\text{.}
\label{series}
\end{equation}
Now, $\langle\cdot|\cdot\rangle$ is an inner product, and therefore positive definite, so from Eq.~\eqref{series} we have that $\lambda_{j}\geq 0$.
\end{proof}
\end{proposition}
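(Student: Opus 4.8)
The plan is to combine the spectral theorem for Euclidean Jordan algebras (\cref{specThm}) with the self-duality of the positive cone (\cref{hsdDef}). First I would invoke \cref{specThm} to produce a spectral resolution $a = x_{1}\lambda_{1} + \dots + x_{n}\lambda_{n}$, where $\{x_{1},\dots,x_{n}\}$ is a Jordan frame and the $\lambda_{j}\in\mathbb{R}$ are exactly the eigenvalues whose nonnegativity we want. The idea is to isolate each $\lambda_{j}$ by pairing $a$ against the corresponding frame projection $x_{j}$ under the Euclidean inner product $\langle\cdot|\cdot\rangle$, and then to read off its sign from the positivity of $a$.

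The key steps, in order, are as follows. First, observe that each frame element $x_{j}$ is a projection, so $x_{j} = x_{j}\jProd x_{j}$ is a square and hence lies in $\mathcal{A}_{+}$ (recall that the positive cone of an \textsc{eja} is precisely its set of squares). Second, because $\mathcal{A}_{+}$ is self-dual, the internal dual cone coincides with $\mathcal{A}_{+}$, so $\langle x_{j}|a\rangle \geq 0$ for every $j$. Third, expand this pairing via the spectral resolution, $\langle x_{j}|a\rangle = \sum_{k}\lambda_{k}\langle x_{j}|x_{k}\rangle$, and simplify $\langle x_{j}|x_{k}\rangle$ using the Euclidean property $\langle p|q\jProd r\rangle = \langle p\jProd q|r\rangle$ together with the frame relations $x_{j}\jProd x_{k} = \delta_{j,k}x_{j}$ and $x_{j} = x_{j}^{2}$; this collapses the sum to the single term $\lambda_{j}\langle x_{j}|x_{j}\rangle$. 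Finally, since $\langle\cdot|\cdot\rangle$ is positive definite and each primitive projection $x_{j}$ is nonzero, we have $\langle x_{j}|x_{j}\rangle > 0$, whence $\lambda_{j}\geq 0$.

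The only mildly delicate point is the third step: one must apply the Euclidean identity in the right slots so that the off-diagonal contributions vanish. Concretely, writing $\langle x_{j}|x_{k}\rangle = \langle u_{\mathcal{A}}\jProd x_{j}|x_{k}\rangle = \langle u_{\mathcal{A}}|x_{j}\jProd x_{k}\rangle = \delta_{j,k}\langle u_{\mathcal{A}}|x_{j}\rangle$, and then $\langle u_{\mathcal{A}}|x_{j}\rangle = \langle u_{\mathcal{A}}|x_{j}\jProd x_{j}\rangle = \langle x_{j}|x_{j}\rangle$, is the crux; everything else is bookkeeping. I expect no genuine obstruction here, since the statement is essentially a direct consequence of self-duality once the spectral decomposition is in hand, so the real ``difficulty'' is merely organizing the inner-product manipulation cleanly.
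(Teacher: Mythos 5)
Your proposal is correct and follows essentially the same route as the paper: spectral resolution, self-duality giving $\langle x_{j}|a\rangle\geq 0$, the Euclidean identity collapsing the sum to $\lambda_{j}\langle x_{j}|x_{j}\rangle$, and positive definiteness to conclude. The inner-product manipulation you single out as the crux is exactly the chain of equalities the paper uses.
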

\begin{proposition}\label{prop2ppp}\textit{Let} $\mathcal{A}$ \textit{be an} \textsc{eja}\textit{. Let} $a\jProd a=a\in\mathcal{A}$\textit{. Then the eigenvalues of} $a$ \textit{are zero or unity.}
\begin{proof} Let $a=x_{1}\lambda_{1}+\dots+x_{n}\lambda_{n}$ be a spectral decomposition of $a$. Then $a\jProd a=a$ means
\begin{equation}
\sum_{j=1}^{n}x_{j}\lambda_{j}=\sum_{j=1}^{n}x_{j}\lambda_{j}^{2}\text{,}
\label{forProjs}
\end{equation}
because $\forall j,k\in\{1,\dots,n\}$ the Jordan products of the Jordan frame elements are $x_{j}\jProd x_{k}=\delta_{j,k}x_{j}$. In fact, on that observation left Jordan multiply each side of Eq.~\eqref{forProjs} by $x_{k}$ for arbitrary $k$ to get $x_{k}\lambda_{k}=x_{k}\lambda_{k}^{2}$. So $\lambda_{k}\in\{0,1\}$.
\end{proof}
\end{proposition}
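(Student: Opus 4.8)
The plan is to invoke the spectral theorem for Euclidean Jordan algebras (\cref{specThm}), which guarantees that any $a\in\mathcal{A}$ admits a spectral resolution $a = \sum_{j=1}^{n} x_{j}\lambda_{j}$ relative to some Jordan frame $\{x_{1},\dots,x_{n}\}$, where $n$ is the rank of $\mathcal{A}$. The one structural fact I would exploit is the multiplication table for frame elements, namely $x_{j}\jProd x_{k} = \delta_{j,k}x_{j}$, which holds because the $x_{j}$ are pairwise orthogonal ($x_{j}\jProd x_{k}=0$ for $j\neq k$) idempotent projections ($x_{j}\jProd x_{j}=x_{j}$). Everything reduces to combining this table with the hypothesis $a\jProd a = a$.

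First I would compute the Jordan square of $a$ directly. Using $\mathbb{R}$-bilinearity of the Jordan product together with the multiplication table, the cross terms all annihilate and one obtains
\begin{equation}
a\jProd a = \sum_{j=1}^{n}\sum_{k=1}^{n}\big(x_{j}\jProd x_{k}\big)\lambda_{j}\lambda_{k} = \sum_{j=1}^{n}x_{j}\lambda_{j}^{2}\text{.}
\end{equation}
The hypothesis $a\jProd a=a$ therefore becomes $\sum_{j}x_{j}\lambda_{j}^{2} = \sum_{j}x_{j}\lambda_{j}$. The final step is to isolate each coefficient: I would left Jordan-multiply both sides by a fixed frame element $x_{k}$ and again invoke $x_{k}\jProd x_{j}=\delta_{k,j}x_{k}$, which collapses each sum to a single surviving term and yields $x_{k}\lambda_{k}^{2}=x_{k}\lambda_{k}$. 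Since $x_{k}$ is a primitive projection it is in particular nonzero in $\mathcal{A}$, so this forces $\lambda_{k}^{2}=\lambda_{k}$, i.e.\ $\lambda_{k}\in\{0,1\}$, and running $k$ over $\{1,\dots,n\}$ gives the claim for every eigenvalue.

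I do not anticipate any serious obstacle; this is essentially the Jordan-algebraic analogue of the elementary matrix fact that idempotents have spectrum contained in $\{0,1\}$, and the computation is routine once the frame multiplication table is in hand. The only point meriting care is the passage from $x_{k}\lambda_{k}^{2}=x_{k}\lambda_{k}$ to $\lambda_{k}^{2}=\lambda_{k}$, which is legitimate precisely because frame elements are nonzero. As an alternative that sidesteps even this cancellation, I could instead read $\sum_{j}x_{j}\lambda_{j}^{2}=a$ as a second spectral resolution of $a$ and appeal to the uniqueness clause of \cref{specThm} to match eigenvalues, concluding $\lambda_{j}^{2}=\lambda_{j}$ directly; I would present the multiplication-table argument as the primary route since it is self-contained and parallels the structure of the preceding \cref{prop1ppp}.
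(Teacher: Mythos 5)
Your proof is correct and follows essentially the same route as the paper's: expand $a\jProd a$ via the frame multiplication table $x_{j}\jProd x_{k}=\delta_{j,k}x_{j}$, equate with $a$, and left Jordan-multiply by $x_{k}$ to isolate $\lambda_{k}^{2}=\lambda_{k}$. Your explicit remark that the cancellation is licensed by $x_{k}\neq 0$ is a small but welcome addition the paper leaves implicit.
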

\begin{proposition}\label{prop3ppp}\textit{Let} $\mathcal{A}$ \textit{be an} \textsc{eja}\textit{. Let} $a\in\mathcal{A}_{+}$ \textit{such that} $\langle u_{\mathcal{A}}|a\rangle=\langle u_{\mathcal{A}}|a\jProd a\rangle$ \textit{and} $\rho(a)\leq 1$\textit{. Then} $a\jProd a=a$\textit{.}
\begin{proof} Apply \cref{specThm} to write $a$ in terms of a Jordan frame $\{x_{1},\dots,x_{n}\}\subset\mathcal{A}_{+}$ and $\{\lambda_{1},\dots,\lambda_{n}\}\subset\mathbb{R}_{\geq 0}^{n}$,
\begin{equation}
a=\sum_{j=1}^{n}x_{j}\lambda_{j}\text{.}
\end{equation}
$\{x_{1},\dots,x_{n}\}$ is a Jordan frame, so $x_{1}+\dots x_{n}=u_{\mathcal{A}}$. Therefore (including intermediate steps just to be careful)
\begin{equation}
\langle u_{\mathcal{A}}|a\rangle=\left\langle\sum_{k=1}^{n}x_{k}\Bigg|\sum_{j=1}^{n}x_{j}\lambda_{j}\right\rangle=\sum_{k=1}^{n}\sum_{j=1}^{n}\lambda_{j}\langle x_{k}|x_{j}\rangle=\sum_{k=1}^{n}\sum_{j=1}^{n}\lambda_{j}\langle x_{j}|x_{j}\rangle\delta_{k,j}=\sum_{j=1}^{n}\lambda_{j}\langle x_{j}|x_{j}\rangle\text{.}
\label{singleTrace}
\end{equation}
Now, $\mathcal{A}$ is an \textsc{eja}, so $\langle u_{\mathcal{A}}|a\jProd a\rangle=\langle u_{\mathcal{A}}\jProd a|a\rangle=\langle a|a\rangle$, and so we compute
\begin{equation}
\langle u_{\mathcal{A}}|a\jProd a\rangle=\langle a|a\rangle=\left\langle\sum_{k=1}^{n}x_{k}\lambda_{k}\Bigg|\sum_{j=1}^{n}x_{j}\lambda_{j}\right\rangle=\sum_{k=1}^{n}\sum_{j=1}^{n}\lambda_{k}\lambda_{j}\langle x_{k}|x_{j}\rangle=\sum_{k=1}^{n}\sum_{j=1}^{n}\lambda_{k}\lambda_{j}\langle x_{j}|x_{j}\rangle\delta_{k,j}=\sum_{j=1}^{n}\lambda_{j}^{2}\langle x_{j}|x_{j}\rangle\text{.} 
\end{equation}
The assumption $\langle u_{\mathcal{A}}|a\rangle=\langle u_{\mathcal{A}}|a\jProd a\rangle$ in the statement of the proposition therefore yields
\begin{eqnarray}
\sum_{j=1}^{n}\lambda_{j}\langle x_{j}|x_{j}\rangle=\sum_{j=1}^{n}\lambda_{j}^{2}\langle x_{j}|x_{j}\rangle\text{.}
\label{traces}
\end{eqnarray}
We can rewrite Eq.~\eqref{traces} as follows:
\begin{equation}
\sum_{j=1}^{n}\lambda_{j}(1-\lambda_{j})\langle x_{j}|x_{j}\rangle=0\text{.}
\end{equation} 
Let us consider an arbitrary $j\in\{1,\dots,n\}$. We now see that the assumption $\rho(a)\leq 1$ in the statement of the proposition means $\lambda_{j}(1-\lambda_{j})\langle x_{j}|x_{j}\rangle\geq 0$, since $\langle\cdot|\cdot\rangle$ is an inner product and therefore positive definite, and since $a\in\mathcal{A}_{+}$ implies $\lambda_{j}\geq 0$ by \cref{prop1ppp}. Eq.~\eqref{traces} therefore yields $\lambda_{j}\in\{0,1\}$, which is to say that $a$ is a projection in light of \cref{prop2ppp}.
\end{proof}
\end{proposition}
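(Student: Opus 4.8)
The plan is to reduce everything to a statement about the eigenvalues of $a$ via the spectral theorem and then extract the projection property termwise. First I would invoke \cref{specThm} to fix a spectral resolution $a=\sum_{j=1}^{n}x_{j}\lambda_{j}$, where $\{x_{1},\dots,x_{n}\}$ is a Jordan frame and the $\lambda_{j}$ are the eigenvalues of $a$. Two features of this decomposition are immediately at hand: since $a\in\mathcal{A}_{+}$, \cref{prop1ppp} guarantees $\lambda_{j}\geq 0$ for every $j$, and since $\rho(a)\leq 1$ by hypothesis we also have $\lambda_{j}\leq 1$. Thus each eigenvalue lies in the unit interval $[0,1]$, which is precisely what will make the final sign argument work.

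Next I would translate the scalar hypothesis $\langle u_{\mathcal{A}}|a\rangle=\langle u_{\mathcal{A}}|a\jProd a\rangle$ into an identity among the eigenvalues. Using that $\{x_{j}\}$ is a Jordan frame summing to $u_{\mathcal{A}}$ with $x_{j}\jProd x_{k}=\delta_{j,k}x_{j}$, a direct computation gives $\langle u_{\mathcal{A}}|a\rangle=\sum_{j}\lambda_{j}\langle x_{j}|x_{j}\rangle$. For the right-hand side I would exploit the defining Euclidean property of $\mathcal{A}$, namely $\langle u_{\mathcal{A}}|a\jProd a\rangle=\langle u_{\mathcal{A}}\jProd a|a\rangle=\langle a|a\rangle$, and then expand $\langle a|a\rangle=\sum_{j}\lambda_{j}^{2}\langle x_{j}|x_{j}\rangle$ using orthogonality of the frame. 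Subtracting, the hypothesis becomes
\begin{equation}
\sum_{j=1}^{n}\lambda_{j}(1-\lambda_{j})\langle x_{j}|x_{j}\rangle=0\text{.}
\end{equation}

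Finally, I would argue termwise. Because $\langle\cdot|\cdot\rangle$ is an inner product it is positive definite, so $\langle x_{j}|x_{j}\rangle>0$; combined with $0\leq\lambda_{j}\leq 1$ this makes every summand $\lambda_{j}(1-\lambda_{j})\langle x_{j}|x_{j}\rangle$ nonnegative. A sum of nonnegative terms that vanishes forces each term to vanish, whence $\lambda_{j}\in\{0,1\}$ for all $j$. An element whose eigenvalues are all zero or one is idempotent by \cref{prop2ppp}, so $a\jProd a=a$ and $a$ is a projection.

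I expect the only genuinely delicate step to be the translation of the trace condition, specifically the clean use of the Euclidean identity $\langle u_{\mathcal{A}}|a\jProd a\rangle=\langle a|a\rangle$ together with the frame orthogonality relations; once the displayed identity is in hand, the positivity hypotheses $a\in\mathcal{A}_{+}$ and $\rho(a)\leq 1$ combine almost mechanically to pin each eigenvalue to an endpoint of $[0,1]$. The role of each hypothesis is worth flagging in the writeup: dropping either $a\in\mathcal{A}_{+}$ or $\rho(a)\leq 1$ would permit an eigenvalue outside $[0,1]$, so that $\lambda_{j}(1-\lambda_{j})$ could be negative and individual terms could cancel without each one vanishing, breaking the argument.
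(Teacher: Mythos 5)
Your proposal is correct and follows essentially the same route as the paper's own proof: spectral resolution via \cref{specThm}, translation of the trace hypothesis into $\sum_{j}\lambda_{j}(1-\lambda_{j})\langle x_{j}|x_{j}\rangle=0$ using the Euclidean identity and frame orthogonality, and a termwise nonnegativity argument pinning each $\lambda_{j}$ to $\{0,1\}$ so that \cref{prop2ppp} applies. Your explicit remark that $\langle x_{j}|x_{j}\rangle>0$ and your flagging of where each hypothesis is used are slightly more careful than the paper's write-up, but the argument is the same.
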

\noindent We are now ready for the following lemma.
\begin{lemma}\label{lemma: product projections}
\textit{Let $p \in A$ and $q \in B$ be projections. Then $p \otimes q$ is a
projection in $\mathcal{AB}$, for any composite $\mathcal{AB}$ of $A$ and $B$.}
\begin{proof}  We first show that $\langle u_{\mathcal{A}}\otimes u_{\mathcal{B}}|p\otimes q\rangle=\langle u_{\mathcal{A}}\otimes u_{\mathcal{B}}|(p\otimes q)\jProd(p\otimes q)\rangle$. By assumption $p\jProd p=p$ and $q\jProd q=q$, so
\begin{eqnarray}
\langle u_{\mathcal{A}}\otimes u_{\mathcal{B}}|p\otimes q\rangle&=&\langle u_{\mathcal{A}}\otimes u_{\mathcal{B}}|p\jProd p\otimes q\rangle\\
&=&\langle u_{\mathcal{A}}\otimes u_{\mathcal{B}}|(p\otimes u_{\mathcal{B}})\jProd (p\otimes q)\rangle\label{fourSix1}\\
&=&\langle (u_{\mathcal{A}}\otimes u_{\mathcal{B}})\jProd(p\otimes u_{\mathcal{B}})| p\otimes q\rangle\label{eja1}\\
&=&\langle p\otimes u_{\mathcal{B}}| p\otimes q\rangle\label{unit1}\\
&=&\langle p\otimes u_{\mathcal{B}}| p\otimes q\jProd q\rangle\\
&=&\langle p\otimes u_{\mathcal{B}}| (u_{\mathcal{A}}\otimes q)\jProd (p\otimes q)\rangle\label{fourSix2}\\
&=&\langle (p\otimes u_{\mathcal{B}})\jProd (u_{\mathcal{A}}\otimes q)|p\otimes q\rangle\label{eja2}\\
&=&\langle p\otimes q|p\otimes q\rangle\label{fourSix3}\\
&=&\langle (u_{\mathcal{A}}\otimes u_{\mathcal{B}})\jProd(p\otimes q)|p\otimes q\rangle\label{unit2}\\
&=&\langle u_{\mathcal{A}}\otimes u_{\mathcal{B}}|(p\otimes q\rangle)\jProd(p\otimes q)\rangle\label{eja3}\text{,}
\end{eqnarray}
where Eqs.~\eqref{fourSix1}, \eqref{fourSix2}, and \eqref{fourSix3} follow immediately from \cref{prop: main equation}; where Eqs.~\eqref{eja1}, \eqref{eja2}, and \eqref{eja3} follow from \cref{def: new Jordan composite} (\textit{i.e}.\ $\mathcal{AB}$ is an \textsc{eja}, so $\langle a|b\jProd c\rangle=\langle a\jProd b|c\rangle$ for all $a,b,c\in\mathcal{AB}$); and where Eq.~\eqref{unit1} and Eq.~\eqref{unit2} also follow from \cref{def: new Jordan composite} (\textit{i.e}.\ $u_{\mathcal{A}}\otimes u_{\mathcal{B}}$ is the unit in $\mathcal{AB}$). We record this last fact
\begin{equation}
u_{\mathcal{AB}}=u_{\mathcal{A}}\otimes u_{\mathcal{B}}\text{.}
\label{unitsTounit}
\end{equation}
So,
\begin{equation}
\langle u_{\mathcal{AB}}|p\otimes q\rangle=\langle u_{\mathcal{AB}}|(p\otimes q)\jProd(p\otimes q)\rangle\text{.}
\end{equation}
We will now prove that the spectral radius $\rho(p\otimes q)\leq 1$, from which the desired result will follow from \cref{prop3ppp}.\\[0.2cm]
\noindent A \textit{state} on $\mathcal{AB}$ is by definition a positive linear functional $\gamma:\mathcal{AB}_{+}\longrightarrow\mathbb{R}_{\geq 0}$ such that $\gamma(u_{\mathcal{AB}})=1$. Now, $\mathcal{AB}$ is an \textsc{eja}; hence $\mathcal{AB}_{+}$ is self-dual. So, for any state $\gamma$ there exists $Y\in\mathcal{AB}_{+}$ such that $\forall X\in\mathcal{AB}_{+}$ we have that $\gamma(X)=\langle Y|X\rangle$. Let arbitrary $\gamma$ and $Y$ be as such, which means in particular that 
\begin{equation}
\langle Y|u_{\mathcal{AB}}\rangle=1\text{.}
\label{whyU}
\end{equation} 
Now, as in our previous note, choose projection $\tilde{p}\in\mathcal{A}_{+}$ such that $p+\tilde{p}=u_{\mathcal{A}}$. Likewise, choose projection $\tilde{q}\in\mathcal{B}_{+}$ such that $q+\tilde{q}=u_{\mathcal{B}}$. Recalling \eqref{unitsTounit} we then have from Eq.~\eqref{whyU} that
\begin{eqnarray}
1&=&\langle Y|u_{\mathcal{A}}\otimes u_{\mathcal{B}}\rangle\\
&=&\langle Y| (p+\tilde{p})\otimes(q+\tilde{q})\rangle\\
&=&\langle Y| p\otimes q\rangle+\langle Y| p\otimes \tilde{q}\rangle+\langle Y| \tilde{p}\otimes q\rangle+\langle Y| \tilde{p}\otimes \tilde{q}\rangle\label{fourTerms}\text{,}
\end{eqnarray}
where we have appealed to bilinearity of $\otimes$ and $\langle\cdot|\cdot\rangle$. Now, from \cref{def: new Jordan composite} we have that $\mathcal{AB}$ is, in particular, a composite in the sense of \cref{def: composites}. So, from \cref{def: composites} (a) we have that $p\otimes q,p\otimes \tilde{q},\tilde{p}\otimes q,\tilde{p}\otimes \tilde{q}\in\mathcal{AB}_{+}$. Each term on the \textsc{rhs} of Eq.~\eqref{fourTerms} is therefore nonnegative. Therefore, for any state $\gamma$ represented internally by $Y$ we have that
\begin{equation}
1\geq \langle Y|p\otimes q\rangle\text{.}
\end{equation}
Let us now apply the spectral theorem to write
\begin{equation}
p\otimes q=\sum_{j=1}^{n}x_{j}\lambda_{j}\text{,}
\end{equation}
where $\{x_{1},\dots,x_{n}\}$ is a Jordan frame for $\mathcal{AB}$ and $\forall j$ we have that $\lambda_{j}\geq 0$ from $p\otimes q\in\mathcal{AB}_{+}$ and Proposition 1. Choose\footnote{Indeed, $\frac{x_{k}}{\langle x_{k}|x_{k}\rangle}$ is a state for any projection $x_{k}$, since $\left\langle \frac{x_{k}}{\langle x_{k}|x_{k}\rangle}\Big|u\right\rangle=\langle x_{k}\jProd x_{k}|u\rangle\frac{1}{\langle x_{k}|x_{k}\rangle}=\langle x_{k}|x_{k}\rangle\frac{1}{\langle x_{k}|x_{k}\rangle}=1$.} $Y=\frac{x_{k}}{\langle x_{k}|x_{k}\rangle}$ to get
\begin{equation}
1\geq \left\langle \frac{x_{k}}{\langle x_{k}|x_{k}\rangle}\Bigg|\sum_{j=1}^{n}x_{j}\lambda_{j}\right\rangle=\sum_{j=1}^{n}\frac{\lambda_{j}}{\langle x_{k}|x_{k}\rangle}\langle x_{k}|x_{j}\rangle=\sum_{j=1}^{n}\frac{\lambda_{j}}{\langle x_{k}|x_{k}\rangle}\langle u_{\mathcal{AB}}\jProd x_{k}|x_{j}\rangle=\sum_{j=1}^{n}\frac{\lambda_{j}}{\langle x_{k}|x_{k}\rangle}\langle u_{\mathcal{AB}}|x_{k}\jProd x_{j}\rangle=\lambda_{k}\text{.}
\end{equation}
\cref{prop3ppp} now completes the proof.
\end{proof}
\end{lemma}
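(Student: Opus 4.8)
The plan is to reduce the claim to the spectral characterization of projections recorded in \cref{prop3ppp}, which says that an element $a$ of a Euclidean Jordan algebra is a projection as soon as (i) $a \in (\mathcal{AB})_{+}$, (ii) $\langle u \mid a\rangle = \langle u \mid a \jProd a\rangle$, and (iii) the spectral radius satisfies $\rho(a) \le 1$. Thus it suffices to verify these three conditions for $a = p \otimes q$, where throughout I identify $\mathcal{A} \otimes \mathcal{B}$ with its image under the injective mapping $\pi$ of \cref{def: composites} and write $u = u_{\mathcal{AB}} = u_{\mathcal{A}} \otimes u_{\mathcal{B}}$. Condition (i) is immediate: since $p \in \mathcal{A}_{+}$ and $q \in \mathcal{B}_{+}$, the positivity axiom for a composite (\cref{def: composites}(a)) gives $p \otimes q \in (\mathcal{AB})_{+}$.

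For condition (ii) I would first establish the auxiliary identities $(p \otimes u_{\mathcal{B}}) \jProd (p \otimes q) = p \otimes q$ and $(u_{\mathcal{A}} \otimes q) \jProd (p \otimes q) = p \otimes q$; these follow from the factorwise computation of Jordan products of pure tensors that share a slot (\cref{prop: main equation}) together with $p \jProd p = p$ and $q \jProd q = q$. Using these and the Euclidean self-adjointness of Jordan multiplication from \cref{ejaDef}, i.e. $\langle a \jProd b \mid c\rangle = \langle a \mid b \jProd c\rangle$, I would run the chain $\langle u \mid p \otimes q\rangle = \langle u \mid (p\otimes u_{\mathcal{B}}) \jProd (p\otimes q)\rangle = \langle u \jProd (p\otimes u_{\mathcal{B}}) \mid p\otimes q\rangle = \langle p\otimes u_{\mathcal{B}}\mid p\otimes q\rangle$, then repeat the same manipulation on the second tensor factor to land on $\langle p \otimes q \mid p \otimes q\rangle = \langle u \mid (p\otimes q)\jProd(p\otimes q)\rangle$. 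This step is pure bookkeeping once the two auxiliary identities and $u = u_{\mathcal{A}}\otimes u_{\mathcal{B}}$ are in hand.

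Condition (iii) is where the real work lies: I must bound every eigenvalue of $p \otimes q$ by $1$ using self-duality. Writing $\tilde{p} = u_{\mathcal{A}} - p$ and $\tilde{q} = u_{\mathcal{B}} - q$, which are again projections in the respective positive cones, the decomposition $u = (p + \tilde{p}) \otimes (q + \tilde{q})$ expands by bilinearity into four pure tensors, each positive in $\mathcal{AB}$ by \cref{def: composites}(a). Since $\mathcal{AB}$ is a Euclidean Jordan algebra its cone is self-dual (\cref{hsdDef}), so every state is represented internally as $\langle Y \mid \cdot\rangle$ for some $Y \in (\mathcal{AB})_{+}$ with $\langle Y \mid u\rangle = 1$; discarding the three non-negative summands other than $p\otimes q$ then yields $\langle Y \mid p\otimes q\rangle \le 1$ for every such $Y$. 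Finally I would invoke the spectral theorem \cref{specThm} to write $p\otimes q = \sum_j x_j \lambda_j$ and feed the internal state $Y = x_k / \langle x_k \mid x_k\rangle$ into this inequality; the Euclidean identity collapses the expression to $\lambda_k$, giving $\lambda_k \le 1$, while \cref{prop1ppp} gives $\lambda_k \ge 0$, whence $\rho(p\otimes q) \le 1$.

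With (i)--(iii) established, \cref{prop3ppp} immediately yields $(p\otimes q)\jProd(p\otimes q) = p\otimes q$, proving the lemma. I expect the main obstacle to be condition (iii): one must correctly pass, via self-duality, from the positivity of the four product effects to a uniform upper bound valid on all states, and then select the right frame-concentrated states to isolate each eigenvalue. It is precisely the positivity of those four summands, guaranteed by the composite axioms rather than by Jordan-algebraic structure alone, that makes the argument go through, so keeping track of exactly where \cref{def: composites} (as opposed to generic \textsc{eja} facts) is used will be the delicate point.
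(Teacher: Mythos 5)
Your proposal is correct and follows essentially the same route as the paper's own proof: the same chain of inner-product identities (via \cref{prop: main equation} and the Euclidean property) to get $\langle u \mid p\otimes q\rangle = \langle u \mid (p\otimes q)\jProd(p\otimes q)\rangle$, the same self-duality/state argument with the four positive summands $p\otimes q$, $p\otimes\tilde{q}$, $\tilde{p}\otimes q$, $\tilde{p}\otimes\tilde{q}$ and the frame states $x_k/\langle x_k\mid x_k\rangle$ to bound the spectral radius, and the same appeal to \cref{prop3ppp} to conclude. Your only (welcome) refinement is making the positivity $p\otimes q\in(\mathcal{AB})_{+}$ an explicit hypothesis check rather than leaving it implicit until the eigenvalue step, as the paper does.
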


\noindent \cref{lemma: product projections} has an important consequence regarding the inner product on $\mathcal{AB}$:\ffootnote{Does this really belong here?}

\begin{proposition}\label{prop: inner}  \textit{Let $AB$ be a composite of EJAs $A$ and $B$. Then 
for all $a, x \in A$ and all $b, y \in B$, }
\begin{equation}
\langle a \odot x | b \odot y \rangle = \langle a | x \rangle \langle b | y \rangle\textit{.}
\end{equation}
\begin{proof} We begin with the case in which $a$ and $b$ are minimal projections. Since $a \odot b$ is then 
also a projection, we know that that 
\[\hat{a \odot b} = \|a \odot b \|^{-2}(a \odot b)\]
defines a state. Now define a positive bilinear form $\omega : A \times B \rightarrow \R$ 
by setting 
\[\omega(x,y) = \langle \hat{a \odot b} | x \odot y \rangle = \langle a \odot b | x \odot y \rangle\]
for all $ x \in A$, $y \in B$. Note that $\omega$ is normalized, i.e., a state in the maximal tensor product 
$A \otimes B$. 
\noindent Now evaluate the first marginal of $\omega$ at $a$: 
\begin{eqnarray*}
\omega_{1}(a) = \omega(a \otimes u_B) & = & \langle \hat{a \odot b} | a \otimes u_B\rangle \\
& = & \langle \hat{a \odot b} | a \odot b + a \odot b' \rangle\\
& = & (\langle \hat{a \odot b} | a \odot b \rangle + \langle \hat{a \odot b} | a \odot b' \rangle. 
\end{eqnarray*}
\noindent Since $\langle \hat{a \odot b} | a \odot b \rangle = 1$, the secondond summand at the end is $0$, and we have 
$\omega_{1}(a) = 1$. Since $a$ is minimal, there is only one such state: $\omega_1(a) = \langle \hat{a} |$. Moreover, 
this is a {\em pure} state. The same argument shows that $\omega_2(b) = 1$, so that $\omega_2 = \langle \hat{b} |$. 
As is well known, if a non-signaling state has pure marginals, then it's the product of these marginals \cite{BBLW}.
 Thus, $\omega = \langle \hat{a} | \otimes \langle \hat{b} |$. This gives us 
 \[\langle a \odot b | x \odot y \rangle = c \langle a | x \rangle \langle b | y \rangle.\]
where $c := \frac{\|a \odot b \|^2}{\|a\|^2 \|b\|^2}$. 

\noindent We want to show that $c = 1$.  As a first step, note that $c$ is independent of the choice of minimal projections $a$ and $b$ (Argue by symmetry: since $A$ and $B$ are simple, there are symmetries $\phi$ and $\psi$ taking any given 
pair $a,b$ to any other such pair $a', b'$; hence, there's a symmetry taking $a \odot b$ to $a' \odot b'$ 
\ref{lemma: exchange}. This is 
the only use we make of condition (c) above.) Extend $a$ and $b$ to 
orthogonal decompositions
 of $u_A$ and $u_B$ as sums of projections:  $u_A = \sum_{a_i} a_i$ with $a = a_1$, and $u_B = \sum_j b_j$ with $b = b_1$ Then we have 
 $u_{AB} = \sum_{i,j} a_i \odot b_j$,  so 
 \[\sum_{i,j, k, l} \langle a_i \odot b_j | a_k \odot b_l \rangle = \langle u_{AB} | u_{AB} \rangle = 1\]
 but also, noting that all $a_i$ and $b_j$ here are minimal projections, so that the constant $c$ above is 
 the same for all choices of $a_i \odot b_j$, 
 \[\sum_{i,j,k,l} \langle a_i \odot b_j | a_i \odot b_l \rangle = \sum_{i,j,k,l} c \langle a_i \odot b_j | a_k \odot b_l \rangle = c.\] 
 Hence, $c = 1$. 

\noindent Now suppose $a$ and $b$ are arbitrary elements of $A$ and $B$, respectively. Spectrally decomposing $a$ and $b$ as 
\[a \ = \ \sum_i t_i a_i \ \mbox{and} \  b \ = \ \sum_j s_j b_j\]
where $a_i$ and $b_j$ are pairwise orthogonal families of minimal projections, we have 
\[\langle a \odot b | = \langle \sum_{i, j} t_i s_j a_i \odot b_j |  = \sum_{i,j} t_i s_j \langle a_i \odot b_j |.\]
Hence, for all $x, y$,
\begin{eqnarray*} 
\langle a \odot b | x \odot y \rangle & = & \sum_{i,j} t_i s_j \langle a_i \odot b_j | x \odot y \rangle \\
& = & \sum_{i,j} t_i s_j \langle a_i | x \rangle \langle b_j | y \rangle \\
& = & \langle a | x \rangle \langle b | y \rangle
\end{eqnarray*}
as advertised.
\end{proof}
\end{proposition}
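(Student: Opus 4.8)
The plan is to prove the factorization $\langle a \odot b | x \odot y\rangle = \langle a|x\rangle\langle b|y\rangle$ first in the special case where $a$ and $b$ are minimal projections, and then to bootstrap to arbitrary $a,b$ by spectral decomposition together with bilinearity of both $\odot$ and the inner product. Here I am writing the identity in the proof's natural orientation (factors $a,x \in A$ and $b,y \in B$ with $a,b$ on the left), which matches the content of the statement up to the obvious relabelling of slots.

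For the base case the key input is \cref{lemma: product projections}: if $a$ and $b$ are projections then $a \odot b$ is a projection in $AB$, so after the normalization used throughout for \textsc{eja}s the element $\widehat{a \odot b} = \|a\odot b\|^{-2}(a\odot b)$ represents a state. I would then define the bilinear form $\omega(x,y) = \langle \widehat{a\odot b} \mid x \odot y\rangle$ on $A \times B$ and observe that, since it arises from a composite in the sense of \cref{def: composites} and \cref{def: new Jordan composite}, it is a normalized non-signaling joint assignment. The heart of the base case is computing its marginals: writing $u_B = b + b'$ with $b' = u_B - b$ the complementary projection and using positivity of $a\odot b'$ in the composite together with $\langle\widehat{a\odot b}\mid a\odot b\rangle = 1$, one obtains $\omega_1(a) = 1$; because $a$ is a minimal projection this forces the first marginal to be the pure state $\langle\widehat a\mid$, and symmetrically the second marginal is $\langle\widehat b\mid$. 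Invoking the standard fact that a non-signaling state with pure marginals is the product of those marginals then gives $\omega = \langle\widehat a\mid \otimes \langle\widehat b\mid$, that is $\langle a\odot b\mid x\odot y\rangle = c\,\langle a\mid x\rangle\langle b\mid y\rangle$ with $c = \|a\odot b\|^2/(\|a\|^2\|b\|^2)$.

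The genuinely delicate step, and where I expect the main effort to lie, is showing $c = 1$; this is what ties the abstractly defined composite inner product to the concrete product of component inner products. I would first argue that $c$ is independent of the choice of minimal projections: because $A$ and $B$ are simple their symmetry groups act transitively on minimal projections, and the induced product symmetries of $AB$ carry $a\odot b$ to any other $a'\odot b'$ while preserving the inner product, so the normalization constant is common to all such pairs. Then, extending $a$ and $b$ to orthogonal families of minimal projections summing to the units, $u_A = \sum_i a_i$ and $u_B = \sum_j b_j$, I would expand $u_{AB} = \sum_{i,j} a_i\odot b_j$ and evaluate $1 = \langle u_{AB}\mid u_{AB}\rangle$ in two ways — directly, and via the factorization with the single shared constant $c$ — which forces $c = 1$.

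Finally, with the factorization established on minimal projections, I would spectrally resolve arbitrary $a = \sum_i t_i a_i$ and $b = \sum_j s_j b_j$ via the spectral theorem \cref{specThm}, expand $\langle a\odot b\mid = \sum_{i,j} t_i s_j\,\langle a_i\odot b_j\mid$ using bilinearity of $\odot$ and of $\langle\cdot\mid\cdot\rangle$, apply the base case termwise, and reassemble to conclude $\langle a\odot b\mid x\odot y\rangle = \langle a\mid x\rangle\langle b\mid y\rangle$ for all $a,x \in A$ and $b,y \in B$. The only subtleties in this last stage are bookkeeping, so the proof stands or falls on the marginal computation and the $c=1$ normalization argument.
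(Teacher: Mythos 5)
Your proposal follows essentially the same route as the paper's own proof: the minimal-projection base case via the state $\widehat{a\odot b}$ and its pure marginals, the normalization $c=1$ obtained from transitivity of symmetries together with the expansion of $\langle u_{AB}\mid u_{AB}\rangle$ over a product Jordan frame, and the final bilinear/spectral bootstrap to arbitrary elements. The argument is correct and no further comparison is needed.
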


\noindent It now follows that, for $a, a' \in A$ and $b, b' \in B$, $\langle a \otimes b, a' \otimes b' \rangle = 0$ iff 
either $\langle a, a' \rangle = 0$ or $\langle b, b' \rangle = 0$.

\begin{proposition}\label{prop: operator commutation} 
\textit{For all $a \in A$, $b \in B$, $a \otimes u_B$ and $u_A \otimes b$
operator-commute in $\mathcal{AB}$.}
\begin{proof} Suppose $p \in A$ and $q \in B$ are projections, and let $p' = u_A - p$ and $q' = u_B - q$.
Then we have 
\[u_{AB} = u_{A} \otimes u_{B} = (p + p') \otimes (q + q') = p \otimes q + p' \otimes q + p \otimes q' + p' \otimes q'.\]
The four projections appearing on the right are
mutually orthogonal, by \ref{prop: inner}, and sum to the unit in $\mathcal{AB}$.  Hence, $p \otimes
u_B = p \otimes q + p \otimes q'$ and $u_A \otimes q = p \otimes q +
p' \otimes q$ operator commute by \cite{AS}, Lemma 1.48.  Now let $a \in A$
and $b \in B$ be arbitrary: by spectral theory, we have $a = \sum_i
t_i p_i$ and $b = \sum_j s_j q_j$ for pairwise orthogonal projections
$p_i$ and $q_j$. Since $p_i \otimes u_B$ and $u_A \otimes q_j$
operator commute for all $i, j$, it follows that
\[{\redd a \otimes u_B} = \sum_i t_i p_i \otimes u_B \ \ \mbox{and} \ \ {\redd u_A \otimes b} = \sum_j s_j u_A \otimes q_j\]
also operator commute.
\end{proof}
\end{proposition}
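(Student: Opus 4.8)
The plan is to reduce the general statement to the case of projections and then assemble everything from the two preceding results, \cref{lemma: product projections} (products of projections are projections) and \cref{prop: inner} (the inner product factorizes on pure tensors). Recall that two elements $x,y$ of an \textsc{eja} \emph{operator commute} precisely when $[L_x,L_y]=0$, where $L_z$ denotes left Jordan multiplication. Since $z\longmapsto L_z$ is linear, operator commutation is separately linear in each slot; and since $\otimes$ is bilinear, $a\otimes u_B$ and $u_A\otimes b$ are real-linear in $a$ and in $b$ respectively. Consequently, if I can show that $p\otimes u_B$ and $u_A\otimes q$ operator commute for every pair of projections $p\in A$, $q\in B$, then, expanding $a=\sum_i t_i p_i$ and $b=\sum_j s_j q_j$ by the spectral theorem (\cref{specThm}) into pairwise orthogonal projections, I obtain
\[
[L_{a\otimes u_B},L_{u_A\otimes b}]=\sum_{i,j}t_i s_j\,[L_{p_i\otimes u_B},L_{u_A\otimes q_j}]=0,
\]
which is the desired conclusion. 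Thus the whole proposition rests on the projection case.

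For projections $p\in A$ and $q\in B$, write $p'=u_A-p$ and $q'=u_B-q$, which are again projections, and consider the four elements $p\otimes q$, $p'\otimes q$, $p\otimes q'$, $p'\otimes q'$. By \cref{lemma: product projections} each of these is a projection in $\mathcal{AB}$. Using bilinearity of $\otimes$ together with $u_{AB}=u_A\otimes u_B$ (part of the definition of a composite), these four sum to $u_{AB}$. To see that they are mutually orthogonal, I will first record that $\langle p,p'\rangle=\langle p,u_A\rangle-\langle p,p\rangle=\langle p\jProd p,u_A\rangle-\langle p,p\rangle=\langle p,p\jProd u_A\rangle-\langle p,p\rangle=0$, and likewise $\langle q,q'\rangle=0$; then \cref{prop: inner} gives, for instance, $\langle p\otimes q,\,p'\otimes q\rangle=\langle p,p'\rangle\langle q,q\rangle=0$, and the analogous computation disposes of every remaining pair. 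Hence $\{p\otimes q,\ p'\otimes q,\ p\otimes q',\ p'\otimes q'\}$ is an orthogonal family of projections partitioning the unit of $\mathcal{AB}$.

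Finally, I will observe that $p\otimes u_B=(p\otimes q)+(p\otimes q')$ and $u_A\otimes q=(p\otimes q)+(p'\otimes q)$ are each sub-sums drawn from this single orthogonal family of projections, so they lie in a common associative subalgebra of $\mathcal{AB}$ and therefore operator commute; this is exactly the content of \cite{AS}, Lemma 1.48. Combined with the reduction of the first paragraph, this completes the argument. I expect the only genuinely delicate point to be the passage from inner-product orthogonality of the four product projections to the operator-orthogonality required to invoke Lemma 1.48: this uses the standard fact that, for projections in an \textsc{eja} equipped with a self-dualizing trace inner product, $\langle e,f\rangle=0$ is equivalent to $e\jProd f=0$. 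Everything else is bookkeeping, since \cref{lemma: product projections} and \cref{prop: inner} were the substantive inputs and are already in hand.
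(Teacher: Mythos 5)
Your proposal is correct and follows essentially the same route as the paper's proof: reduce to projections via the spectral theorem, decompose the unit into the four product projections $p\otimes q,\ p'\otimes q,\ p\otimes q',\ p'\otimes q'$, verify their mutual orthogonality from \cref{prop: inner}, and invoke Lemma 1.48 of \cite{AS}. The only addition is that you explicitly supply the step the paper leaves implicit --- passing from $\langle p\otimes q, p'\otimes q\rangle=0$ to Jordan orthogonality via the standard fact that $\langle e,f\rangle=0$ is equivalent to $e\jProd f=0$ for positive elements of a self-dual cone --- which is a worthwhile clarification but not a different argument.
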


\noindent To this point, we have limited information---mainly Proposition 
\ref{prop: operator commutation}---about how the Jordan
structure of a composite $\mathcal{AB}$ interacts with the Jordan structures of 
$A$ and $B$. Our goal is to now establish, for
any $a \in A$ and any $x, y \in B$, the identity $(a \otimes u) \jProd 
(x \otimes y) = (a \jProd x) \otimes y$ --- in other words, that $L_{a
  \otimes u_B}$ acts on $A \otimes B \leq AB$ as $L_{a} \otimes \mathbf{1}_B$,
where $\mathbf{1}_B$ is the identity operator on $B$. This is non-trivial, 
since $\mathcal{AB}$ need not be spanned by $A \otimes B$.

\noindent It will be
helpful first to recall some basic facts about operator exponentials,
or, equivalently, one-parameter groups of linear operators on
finite-dimensional spaces (see, e.g., \cite{Curtis}). Let $V$ be a finite-dimensional real vector
space, and $X$, a linear operator on $V$. Recall that 
$\phi(t) := e^{tX}$ is the unique function $\R \rightarrow {\cal
  L}(\V)$ satisfying the initial-value problem
\[\phi'(t) = X \phi(t) ; \ \ \phi(0) = \mathbf{1}\]
(where $\mathbf{1}$ is the identity operator on $V$). In particular, $\phi'(0) =
X$. The function $\phi$ satisfies $\phi(t + s) = \phi(t)\phi(s)$ 
{\redd and hence, $\phi(t)\phi(-t) = \phi(0) = \mathbf{1}$,} 
 {\redd hence, as $\phi(0) = \mathbf{1}$,  
$\phi(t)$ is invertible, with $\phi(t)^{-1} = \phi(-t)$. In other 
words,} $\phi$ is a one-parameter group of linear operators on $V$.
Conversely, if $\phi : \R \rightarrow {\cal L}(V)$ is any continuous
one-parameter group of linear operators on $V$, then $\phi$ is
differentiable, and $\phi(t) = e^{tX}$ where $X = \phi'(0)$. Notice,
also, that in such a case we have
\[Xa = \frac{d}{dt} \phi(t)a|_{t = 0}\]
for any vector $a \in V$.

\noindent For later reference, the following lemma collects some standard facts: 

\begin{lemma} \label{lemma: operator commutation and one parameter groups}
Let $X, Y$ be linear operators on a finite-dimensional inner product space $V$. Then 
\begin{itemize} 
\item[(a)] $X$ commutes with $e^{tX}$ for all $t$; 
\item[(b)] If 
$e^{tX}$ commutes with $e^{sY}$ for all $t, s$, then $X$ commutes with $Y$; 
\item[(c)] $(e^{tX})^{\dagger} = e^{tX^{\dagger}}$.
\end{itemize}
\end{lemma}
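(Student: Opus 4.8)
The final statement is \cref{lemma: operator commutation and one parameter groups}, which collects three standard facts about operator exponentials on a finite-dimensional inner product space. The plan is to prove each of the three parts directly from the defining power series $e^{tX}=\sum_{k=0}^{\infty}\frac{t^{k}}{k!}X^{k}$, which converges absolutely for any linear operator $X$ on a finite-dimensional space, so that all the manipulations below (term-by-term differentiation, rearrangement, adjoint applied term-by-term) are justified.

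For part (a), I would simply observe that $X$ commutes with every power $X^{k}$, hence with any finite partial sum $\sum_{k=0}^{N}\frac{t^{k}}{k!}X^{k}$, and therefore with the limit $e^{tX}$; concretely, $X\,e^{tX}=\sum_{k}\frac{t^{k}}{k!}X^{k+1}=e^{tX}X$. For part (c), I would apply the adjoint term-by-term, using that the adjoint is continuous and satisfies $(X^{k})^{\dagger}=(X^{\dagger})^{k}$ and $\overline{t}=t$ for real $t$, to get $(e^{tX})^{\dagger}=\sum_{k}\frac{t^{k}}{k!}(X^{\dagger})^{k}=e^{tX^{\dagger}}$. Both of these are essentially immediate.

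The main work, and the only step that is not purely formal, is part (b): recovering commutation of $X$ and $Y$ from commutation of the one-parameter groups $e^{tX}$ and $e^{sY}$ for all $t,s$. Here I would use the characterization $Xa=\frac{d}{dt}e^{tX}a|_{t=0}$ recalled just before the lemma. Fixing the hypothesis $e^{tX}e^{sY}=e^{sY}e^{tX}$ for all $t,s$, I would differentiate in $s$ and evaluate at $s=0$ to obtain $e^{tX}Y=Y\,e^{tX}$ for all $t$ (differentiation passes through the fixed operator $e^{tX}$ since it is linear and continuous). Then I would differentiate this identity in $t$ at $t=0$, using $\frac{d}{dt}e^{tX}|_{t=0}=X$ on each side, to conclude $XY=YX$.

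The expected obstacle is simply ensuring the interchange of differentiation with the linear maps and the fixed operators is rigorously justified; since everything lives in a finite-dimensional space where the exponential series converges absolutely and uniformly on compact $t$-intervals, term-by-term differentiation is legitimate and no analytic subtlety survives. Thus the only genuinely substantive point is the two-stage differentiation argument in part (b), and once that is set up the lemma follows. I would present the three parts in the order (a), (c), (b), doing the two trivial ones first and reserving the differentiation argument for last.
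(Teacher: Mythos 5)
Your proof is correct and takes essentially the same route as the paper: part (a) is treated as immediate, part (b) is obtained by differentiating the identity $e^{tX}e^{sY}=e^{sY}e^{tX}$ twice and evaluating at the origin (the paper phrases this as equality of the mixed partials of $F(s,t)=e^{tX}e^{sY}$ at $s=t=0$, which is the same computation as your sequential differentiation in $s$ then $t$). The only cosmetic difference is in (c), where the paper uses the derivative of $t\mapsto\langle e^{tX}x,y\rangle$ at $t=0$ rather than the term-by-term adjoint of the exponential series; both are routine and valid in finite dimensions.
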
  

\tempout{
\noindent{\em Proof:} (a) is standard. For (b), define $F(s,t) =
e^{tX} e^{sY} = e^{sY} e^{tX}$. By the equality of mixed partials, we
have
\[XY F(s,t) = \frac{\partial^2}{\partial s \partial t} F(s,t) = \frac{\partial^2}{\partial s \partial t} F(s,t) 
= YX F(s,t)\] for all $s, t$. In particular, setting $s, t = 0$, so
that $F(s,t) = \mathbf{1}$ (the identity operator), we have $XY = YX$. For
(c), use the observation that if $\phi(t)$ is a one-parameter group,
then $\langle \phi'(0)x, y \rangle = \frac{d}{dt} \langle \phi(t) x, y
\rangle|_{t = 0}$. $\Box$}

\noindent Note that, by (c), if $\phi(t)$ is a one-parameter group with $\phi'(0) = X$ self-adjoint, then $\phi(t)$ is self-adjoint for all $t$.

\noindent Now let $A$ be an EJA. For $a \in A$, 
define 
\[\phi_{a}(t) := e^{t L_{a}} = e^{L_{ta}},\] 
i.e., $\phi_{a}$ is the solution to the initial-value problem
$\frac{d}{dt}\phi_{a} = L_{a} \phi_{a}, \ \ \phi_{a}(0) = \mathbf{1}$. 
By part (c) of Proposition \ref{prop: quadratic representation}, part (c), $\phi_{a}(t) = U_{e^{ta/2}}$; by part 
(b) of the same Proposition, this last is a positive mapping. Since 
$e^{tL_{a}}$ is invertible with inverse $e^{-tL_{a}} = e^{L_{-ta}}$, $\phi_{a}(t)$ is an order-automorphism 
belonging to $G(A)$. 
\tempout{By changing the parametrization, one gets $\phi_{a}(t)$ an
automorphism for all $t$. (In more detail: set $\psi_{a}(s) :=
\phi_{a}(ts)$, with $t$ fixed; then $\psi_{a}$ is still a one-parameter
group, with $\psi_{a}'(0) = t \phi_{a}'(0) = t L_{a} = L_{ta}$, and
$\psi_{a}(1) = \phi_{a}(t)$ an automorphism.)[Didn't we just establish this directly?]}
{\redd 
It follows that $L_a \in {\mathfrak g}_{A}$, the Lie group of the identity component $G(A)$ of $A$. 
Note that $\langle L_{a} x, y \rangle = \langle a x, y \rangle = \langle x, ay \rangle = \langle x, L_{a} y\rangle$ 
for all $x, y \in A$; that is, $L_a$ is self-adjoint. One can show that, conversely, a self-adjoint 
element of ${\mathfrak g}_{A}$ has the form $L_{a}$ for a unique $a \in A$. (See \cite{FK}, pp. 6 and 49, for the details.)}

\tempout{
\begin{corollary}{\redd Do we use this?}
Let $\phi_{a}$ and $\phi_{b}$ be the one-parameter groups
associated with $L_a$ and $L_b$, as discussed above. If $\phi_{a}(t)$ and
$\phi_{b}(s)$ commute for all $t, s$, then so do $L_a$ and $L_b$,
i.e., $a$ and $b$ operator commute.\footnote{\red [AW: Do we actually use this anywhere?]}
\end{corollary}

\noindent{\em Proof:}
Immediate from Lemma \ref{lemma: operator commutation and one parameter groups}.
$\hfill \Box$
}

\noindent We are now ready for the main result of this appendix. 

\begin{proposition}\label{prop: main equation} 
Let $\mathcal{AB}$ be a composite (in the sense of Definition 
\ref{def: new Jordan composite}) of Jordan algebras $A$ and $B$. For 
all $a, x \in A$ and $b, y \in B$, 
\[(a \otimes u_B)\jProd (x \otimes y) = a \jProd x \otimes y \ \ \mbox{and} \ \ (u_A \otimes b)\jProd (x \otimes y) = x \otimes b \jProd y\]
\begin{proof}
We prove the first identity; the second is 
handled similarly. Let $\phi(t)$ be a
one-parameter group on $A$ with $\phi'(0) = L_a$.  Then $\psi(t) :=
\phi(t) \otimes \mathbf{1}$ is a one-parameter group of automorphisms, by 
condition (c) of \ref{def: dynamical composites}. 
 Let $Y = \psi'(0) {\redd \in {\mathfrak g}_{AB}}$; then, for all $x \in A$ and $y \in B$,  
\begin{eqnarray*} 
Y(x \otimes y) 
& = & \left[ \frac{d}{dt} \psi(t)\right]_{t=0}  (x \otimes y) \\
& = & \left[ \frac{d}{dt} (\psi(t)(x \otimes y))\right]_{t = 0}\\
& = & \left[ \frac{d}{dt} \left( \phi(t)x \otimes y \right) \right]_{t = 0} \\
& = & \left( \left[ \frac{d}{dt} \phi(t)\right ]_{t = 0} x \right) \otimes y 
\ = \ L_{a} x \otimes y \ = \ ax \otimes y.
\end{eqnarray*}
Subject to condition (c) of Definition 
\ref{def: new Jordan composite}, 
we have
\[(\phi_{a}(t) \otimes \mathbf{1})^{\dagger} = \phi_{a}(t)^{\dagger} \otimes \mathbf{1} = \phi_{a}(t) \otimes \mathbf{1}.\]
Hence, $Y$ is self-adjoint. 
{\redd As discussed above,} it follows 
that 
there exists some $v \in AB$ with $Y = L_{v}$ on $A \otimes B$.
\ffootnote{{\green [HB: I still need check over the remainder of this proof once more.]}{\blue [AW: proof now much 
shortened!]}}
Thus, 
\[v (x \otimes y) = L_{v} (x \otimes y) = Y(x \otimes y) = ax \otimes y.\]
Setting $x = u_A$ and $y = u_B$, we have 
 $v = v u_{AB} = v(u_{A} \otimes u_{B}) = au_A \otimes u_B = a \otimes u_B$, which gives the 
 advertised result.
\end{proof}
\end{proposition}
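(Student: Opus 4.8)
The plan is to exploit the dynamical structure of the composite in order to transport the operator of Jordan multiplication from a single factor up to $\mathcal{AB}$, since a direct slotwise computation is unavailable: $\mathcal{AB}$ may be strictly larger than the vector-space tensor product $A \otimes B$, so one has no immediate handle on $(a \otimes u_B) \jProd (x \otimes y)$ in terms of $A$- and $B$-data alone. The idea is to realize the left-multiplication operator $L_{a \otimes u_B}$ on $\mathcal{AB}$ as the infinitesimal generator of a one-parameter group of order automorphisms built from a one-parameter group acting only on the first factor.

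Concretely, for $a \in A$ I would set $\phi_a(t) = e^{tL_a}$, where $L_a$ is left Jordan multiplication by $a$; by Proposition \ref{prop: quadratic representation}(c) this equals $U_{e^{ta/2}}$, which by parts (a) and (b) of that proposition is a positive invertible map, hence an order automorphism lying in $G(A)$. Then $\psi(t) := \phi_a(t) \otimes \mathbf{1}_B$ is, by the dynamical-composite structure of Definition \ref{def: dynamical composites} (the homomorphism property making it a one-parameter group, and the action $\psi(t)(x \otimes y) = \phi_a(t)x \otimes y$ on pure tensors), a one-parameter group of order automorphisms of $\mathcal{AB}$. Writing $Y = \psi'(0)$ for its generator and differentiating at $t = 0$ gives at once
\[
Y(x \otimes y) = (L_a x) \otimes y = (a \jProd x) \otimes y
\]
for all $x \in A$, $y \in B$.

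The crux is then to show that $Y$ is itself of the form $L_v$ for a single element $v \in \mathcal{AB}$. First, $Y$ is self-adjoint for the self-dualizing inner product: since $A$ is Euclidean, $L_a$ is symmetric, so each $\phi_a(t)$ is self-adjoint by Lemma \ref{lemma: operator commutation and one parameter groups}(c), and condition (b) of Definition \ref{def: new Jordan composite} gives $(\phi_a(t) \otimes \mathbf{1}_B)^{\dagger} = \phi_a(t)^{\dagger} \otimes \mathbf{1}_B = \phi_a(t) \otimes \mathbf{1}_B$; differentiating yields $Y^{\dagger} = Y$. Thus $Y$ is a self-adjoint element of the Lie algebra of $G(\mathcal{AB})$, and by the structure theory of Euclidean Jordan algebras \cite{FK} every such element has the form $L_v$ for a unique $v \in \mathcal{AB}$. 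Hence $v \jProd (x \otimes y) = (a \jProd x) \otimes y$; setting $x = u_A$, $y = u_B$ and using $u_{\mathcal{AB}} = u_A \otimes u_B$ identifies $v = v \jProd u_{\mathcal{AB}} = (a \jProd u_A) \otimes u_B = a \otimes u_B$, which is exactly the first claimed identity. The second identity follows by the symmetric argument with $\mathbf{1}_A \otimes \phi_b(t)$.

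I expect the main obstacle to be the two places where the composite axioms must be invoked in precisely the right form: that $\phi_a(t) \otimes \mathbf{1}_B$ is again an order automorphism of $\mathcal{AB}$ (requiring the dynamical structure of Definition \ref{def: dynamical composites}, which is exactly why that structure was built into the notion of composite), and, more delicately, that its adjoint is again of this form (requiring condition (b) of Definition \ref{def: new Jordan composite}, namely $(\phi \otimes \psi)^{\dagger} = \phi^{\dagger} \otimes \psi^{\dagger}$). Without the latter one cannot conclude the self-adjointness of $Y$, and hence cannot pass to the representation $Y = L_v$ that pins down $v$; this is precisely the subtle point separating a genuine Jordan composite from an arbitrary nonsignaling one.
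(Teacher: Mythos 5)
Your proposal is correct and follows essentially the same route as the paper's own proof: build the one-parameter group $e^{tL_a}\otimes\mathbf{1}_B$, differentiate to get a self-adjoint generator $Y$ acting as $L_a\otimes\mathbf{1}$ on pure tensors, invoke the dagger condition of Definition \ref{def: new Jordan composite} to conclude $Y=L_v$, and evaluate at the units to identify $v=a\otimes u_B$. You have also correctly isolated the two places where the composite axioms are genuinely needed.
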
 

\tempout{
\[\langle v | x \otimes y \rangle = \Tr(ax \otimes y) = \Tr(ax) \Tr(y) = \Tr(ax)\Tr(uy) = \langle a  \otimes u | x \otimes y \rangle\]
(where the last identity is part of the definition of the composite $\mathcal{AB}$)
Hence, the projection of $v$ onto $A \otimes B \leq AB$ is precisely $a \otimes u$. It follows that 
\beq \label{eq: another equation}
v = (a \otimes u) + z
\eeq
where $z$ is orthogonal to $A \otimes B$, i.e., $\langle z | T \rangle = 0$ for every $T \in A \otimes B$. 
We want to show that $z = 0$. 

We begin with the case in which $a = p$, a projection. By assumption
(see Definition \ref{def: new Jordan composite}), $p \otimes u$ is also a projection in $\mathcal{AB}$.  Then
by (\ref{eq: an equation}) we have
\[v (p \otimes u) = p^2 \otimes u = p \otimes u\]
while by Eq. \ref{eq: another equation} also 
\[v (p \otimes u) = ((p \otimes u) + z)(p \otimes u) = (p \otimes u)^2 + z (p \otimes u) = (p \otimes u) + z(p \otimes u).\]
Subtracting, we have $z (p \otimes u) = 0$. Now consider $p' = u - p$: again, $p' \otimes u$ is a projection, and $p' \otimes u = (u - p) \otimes
u = (u \otimes u) - (p \otimes u) = (p \otimes u)'$: 
By (1),
\[v(p' \otimes u) = pp' \otimes u = 0;\]
but also $v(p' \otimes u) = (p \otimes u)(p' \otimes u) + z(p' \otimes u) = z(p' \otimes u)$. 
{\magenta It follows that} $z(p' \otimes u) = 0$, whence, $z = z(p \otimes u + p' \otimes u) = 0$. 
Hence, $z (p' \otimes u) = 0$. We now have
\[ \ \ z = z (u \otimes u) = z ((p \otimes u) + (p' \otimes u)) = z(p \otimes u) + z(p' \otimes u) = 0 + 0 = 0.\]

\noindent The general case now follows from the spectral theorem, plus
linearity: if $a = \sum_i t_i p_i$, where the $p_i$ are projections,
then by the foregoing, $(p_i \otimes u)(x \otimes y) = p_i x \otimes y$ for $i = 1,...,n$, whence 
\[\ \ \ \ (a \otimes u)(x \otimes y) = \sum_i t_i (p_i \otimes u)(x \otimes y) = \sum_i t_i (p_i x \otimes y) = 
ax \otimes y.\]
The identity $(u_A \otimes b)(x \otimes y) = x \otimes by$ follows by symmetry. $\Box$}

\noindent Recall that, for $a \in A$, the mapping $U_a : A \rightarrow A$ is defined by 
$U_a = 2L_{a}^2 - L_{a^2}$. 

\begin{corollary}\label{cor: cor to main equation} In any composite $\mathcal{AB}$ {\redd of EJAs $A$ and $B$}, and 
for any $a \in A$, $b \in B$, $U_{a \otimes u_B}$ and $U_{u_A \otimes A}$ act on $A \otimes B$ as 
$U_a \otimes \id_B$ and $\id_A \otimes U_b$, respectively. \end{corollary}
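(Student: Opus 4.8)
The plan is to reduce everything to Proposition \ref{prop: main equation}, which already records how left Jordan multiplication by $a \otimes u_B$ and by $u_A \otimes b$ acts on pure tensors. First I would note the consequence that $L_{a \otimes u_B}$ restricts to a well-defined operator on the subspace $A \otimes B \subseteq \mathcal{AB}$: for any $x \in A$ and $y \in B$, Proposition \ref{prop: main equation} gives $L_{a \otimes u_B}(x \otimes y) = (a \jProd x) \otimes y$, and since $a \jProd x \in A$ this lands back in $A \otimes B$. Extending linearly from pure tensors, I conclude $L_{a \otimes u_B}|_{A \otimes B} = L_a \otimes \id_B$.

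Next I would compute the two ingredients of the quadratic representation $U_{a \otimes u_B} = 2 L_{a \otimes u_B}^2 - L_{(a \otimes u_B)^2}$, working throughout on $A \otimes B$. Because $L_{a \otimes u_B}$ preserves $A \otimes B$, its square on that subspace is simply $(L_a \otimes \id_B)^2 = L_a^2 \otimes \id_B$. For the second term, Proposition \ref{prop: main equation} applied with $x = a$ and $y = u_B$ yields $(a \otimes u_B)^2 = (a \otimes u_B) \jProd (a \otimes u_B) = (a \jProd a) \otimes u_B = a^2 \otimes u_B$; applying the same proposition once more shows that $L_{a^2 \otimes u_B}$ acts on $A \otimes B$ as $L_{a^2} \otimes \id_B$.

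Combining these on $A \otimes B$ gives $U_{a \otimes u_B} = 2(L_a^2 \otimes \id_B) - (L_{a^2} \otimes \id_B) = (2L_a^2 - L_{a^2}) \otimes \id_B = U_a \otimes \id_B$, which is the asserted identity. The companion identity $U_{u_A \otimes b} = \id_A \otimes U_b$ on $A \otimes B$ then follows by the symmetric argument, using the second half of Proposition \ref{prop: main equation}.

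The only point needing care --- and the nearest thing to an obstacle --- is the invariance observation that $L_{a \otimes u_B}$ genuinely maps $A \otimes B$ into itself, so that composing it with itself is legitimate even though $\mathcal{AB}$ may be strictly larger than the algebraic tensor product $A \otimes B$. Once this is noted, the corollary is a direct unwinding of the definition of $U_c$ together with two applications of Proposition \ref{prop: main equation}, so I expect no substantive difficulty beyond bookkeeping.
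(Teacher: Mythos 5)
Your proof is correct and is exactly the intended argument: the paper states this corollary without proof, treating it as an immediate unwinding of the definition $U_c = 2L_c^2 - L_{c^2}$ (recalled just before the corollary) via two applications of Proposition \ref{prop: main equation}, which is precisely what you carry out. Your care in noting that $L_{a\otimes u_B}$ preserves the subspace $A\otimes B$, so that squaring it there is legitimate, is the right point to flag.
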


\noindent We now show that if $A$ and $B$ are non-trivial {\em simple} EJAs --- that is, if neither has any non-trivial direct summands --- then any composite $\mathcal{AB}$ must be special, universally reversible, and an ideal (a direct summand) of the maximal tensor product $A \hotimes B$. The rough idea is that, since $A$ and $B$ have rank at least two, the fact that 
products of distinguishable effects are distinguishable will yield at least four distinguishable effects in $\mathcal{AB}$. 
If the latter were simple, this would be the end of the story; but we know from the case of universal tensor products 
(which we will ultimately show are dynamical composites in our sense) that composites can have non-trivial direct 
summands. Thus, need to work a bit harder, and show that every irreducible direct summand of $\mathcal{AB}$ has rank at least 4. 
 
\noindent {\redd An element} $s \in A$ is {\redd called} a {\em symmetry} iff $s^2 = u$.\footnote{\redd Not to be confused with a symmetry {\em qua} order-automorphism.} In this case
$U_s$ is a Jordan automorphism of $A$, with $U_{s}^{2} = \id$ 
(\cite{AS}, Prop. 2.34). Also
note that $p := \frac{1}{2}(s + u)$ is a projection, and, conversely,
if $p$ is a projection, then $s := 2p - u$ is a symmetry.

\noindent Two projections $p, q \in A$ are {\em exchanged by a symmetry} $s \in
A$ iff $U_s(p) = q$ (in which case, $p = U_{s}(q)$).  More
generally, $p$ and $q$ are {\em equivalent} iff there exists a finite
sequence of symmetries $s_1,...s_{\ell}$ with $q = (U_{s_{\ell}} \circ
\cdots \circ U_{s_1})(p)$.

\noindent Now let $\mathcal{AB}$ be a composite in the sense of Definition \ref{def: new Jordan composite}.

\begin{lemma}\label{lemma: exchange} Let $s \in A$ be a symmetry exchanging projections 
$p_1, p_2 \in A$, and let $t \in B$ be a symmetry exchanging projections $q_1, q_2 \in B$.  
Then $s \otimes u_B$ and $u_A \otimes t$
  are symmetries in $\mathcal{AB}$, and $U_{u_A \otimes t} U_{s \otimes u_B}
  (p_1 \otimes q_1) = p_2 \otimes q_2$. In particular, the projections
  $p_1 \otimes q_1$ and $p_2 \otimes q_2$ are equivalent.
\begin{proof}
$(s \otimes u_B)^2 = s^2 \otimes u_B = u_A
\otimes u_B = u_{AB}$ by Proposition \ref{prop: main equation}. Similarly for $u_A \otimes
t$. Now by Corollary \ref{cor: cor to main equation}, we have 
\[\ \ U_{u_A \otimes t} U_{s \otimes u_B} (p_1 \otimes q_1) = U_{u_A \otimes t}(U_s (p_1) \otimes q_1) = U_{s}(p_1) \otimes U_{t}(q_1) = p_2 \otimes q_2. \ \]
\end{proof}
\end{lemma}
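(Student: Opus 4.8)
The plan is to reduce everything to the two structural facts already established about how the Jordan product on $\mathcal{AB}$ interacts with pure tensors, namely \cref{prop: main equation} and its \cref{cor: cor to main equation}. First I would verify that $s \otimes u_B$ and $u_A \otimes t$ are genuine symmetries of $\mathcal{AB}$. Since $s^2 = u_A$ and $t^2 = u_B$, and since the composite satisfies $u_{AB} = u_A \otimes u_B$ by \cref{def: new Jordan composite}, \cref{prop: main equation} gives $(s \otimes u_B) \jProd (s \otimes u_B) = s^2 \otimes u_B = u_A \otimes u_B = u_{AB}$, and symmetrically $(u_A \otimes t)\jProd (u_A \otimes t) = u_{AB}$. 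Hence both elements square to the unit of $\mathcal{AB}$, so each is a symmetry in the sense required to form its quadratic representation.

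Next I would compute the action of the composed quadratic representations on the pure tensor $p_1 \otimes q_1$. By \cref{cor: cor to main equation}, $U_{s \otimes u_B}$ acts on $A \otimes B$ as $U_s \otimes \id_B$, so $U_{s \otimes u_B}(p_1 \otimes q_1) = U_s(p_1) \otimes q_1 = p_2 \otimes q_1$, using the hypothesis that $s$ exchanges $p_1$ and $p_2$, i.e. $U_s(p_1) = p_2$. The output is again a pure tensor, so the Corollary applies a second time: since $U_{u_A \otimes t}$ acts on $A \otimes B$ as $\id_A \otimes U_t$, we get $U_{u_A \otimes t}(p_2 \otimes q_1) = p_2 \otimes U_t(q_1) = p_2 \otimes q_2$. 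Chaining these yields the asserted identity $U_{u_A \otimes t}\, U_{s \otimes u_B}(p_1 \otimes q_1) = p_2 \otimes q_2$.

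Finally, the equivalence claim is immediate from the definition: two projections in an EJA are equivalent when one is carried to the other by a finite composition of quadratic representations of symmetries. Here $p_2 \otimes q_2 = (U_{u_A \otimes t} \circ U_{s \otimes u_B})(p_1 \otimes q_1)$ is exactly such a composition, with the two symmetries being $s \otimes u_B$ and $u_A \otimes t$ in $\mathcal{AB}$; I would also note that $p_i \otimes q_j$ are genuine projections by \cref{lemma: product projections}, so the statement is about projections, as required.

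I expect the only real content to be the appeal to \cref{cor: cor to main equation}: the whole point is that, although $\mathcal{AB}$ may be strictly larger than the vector-space tensor product $A \otimes B$, the quadratic representations of the ``one-sided'' symmetries still restrict to the expected product maps on pure tensors. Everything else is bookkeeping — checking that the squares equal the unit and composing two applications of the same restriction formula. So the hard part is not in this lemma at all but in the previously proved \cref{prop: main equation}, from which the restriction behaviour of $U_{a \otimes u_B}$ is derived; granting that, the present lemma is a short direct computation.
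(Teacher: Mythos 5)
Your proposal is correct and follows essentially the same route as the paper: verify that $(s\otimes u_B)^2 = u_{AB}$ and $(u_A\otimes t)^2 = u_{AB}$ via \cref{prop: main equation}, then apply \cref{cor: cor to main equation} twice to see that $U_{u_A\otimes t}\,U_{s\otimes u_B}(p_1\otimes q_1) = p_2\otimes q_2$, with equivalence following by definition. The only (harmless) difference is that you spell out the intermediate step $p_2\otimes q_1$ and explicitly cite \cref{lemma: product projections}, which the paper leaves implicit.
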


\newpage
\section{Canonical Involutions}\label{epSec}
In this appendix, we prove that the canonical involutions of the Jordan matrix algebras are self-adjoint and unitary. For the reader's convienience, let us collect the relevant material from \cref{partII} in the form of the following proposition.

\begin{proposition}\label{colProp} \textit{Let $\mathcal{A}$ be a simple universally reversible Euclidean Jordan algebra. Let $a\in\mathcal{A}$, and let $X\in C^{*}_{u}(\mathcal{A})$. Let $\psi_{A}$ and $\Phi_{A}$ be as defined in \cref{existUThm}. Let}
\begin{equation} 
J\equiv\begin{pmatrix} \mathbf{0}_{n} & \mathds{1}_{n} \\ -\mathds{1}_{n} & \mathbf{0}_{n}\end{pmatrix}\textit{.}
\end{equation} 
\textit{Then one has the following table\footnote{Note that case of $\mathcal{M}_{2}(\mathbb{H})_{\text{sa}}$ is similiar to $\mathcal{M}_{n>2}(\mathbb{H})_{\text{sa}}$, although in this case we consider the standard embedding in our constuction of $\mathbf{InvQM}$.}:}
\begin{center}
  \begin{tabular}{ c || c | c | c }
    $\mathcal{A}$ & $C^{*}_{u}(\mathcal{A})$  & $\psi_{\mathcal{A}}$ & $\Phi_{\mathcal{A}}$\\[0.25cm] \hline &  & &  \\
    $\mathcal{M}_{n}\big(\mathbb{R}\big)_{s}$ & $\mathcal{M}_{n}\big(\mathbb{C}\big)$  & $\psi_{\mathcal{A}}(a)=a$ & $\Phi_{\mathcal{A}}(X)=X^{T}$\\[0.5cm]
    $\mathcal{M}_{n}\big(\mathbb{C}\big)_{sa}$ & $\mathcal{M}_{n}\big(\mathbb{C}\big)\oplus\mathcal{M}_{n}\big(\mathbb{C}\big)$  & $\psi_{\mathcal{A}}(a)=a\oplus a^{T}$ & $\Phi_{\mathcal{A}}(X=Y\oplus Z)=Z^{T}\oplus Y^{T}$\\[0.32cm]
    $\mathcal{M}_{n>2}\big(\mathbb{H}\big)_{sa}$ & $\mathcal{M}_{2n}\big(\mathbb{C}\big)$  & $\psi_{\mathcal{A}}\big(a=\Gamma_{1}+\Gamma_{2}j\big)=$\begin{small}$\begin{pmatrix}\;\;\;\Gamma_{1} & \Gamma_{2}\\[1mm] -\overline{\Gamma_{2}} & \overline{\Gamma_{1}}\end{pmatrix}$\end{small} & $\Phi_{\mathcal{A}}(X)=JX^{T}J^{T}$
  \end{tabular}
\end{center}
\end{proposition}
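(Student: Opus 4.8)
The plan is to prove \cref{colProp} row by row, treating each row as three separate assertions: that the displayed algebra is $C^*_u(\mathcal{A})$, that the displayed map is the canonical embedding $\psi_{\mathcal{A}}$, and that the displayed $\Phi_{\mathcal{A}}$ is the canonical involution (which I will also show is self-adjoint and unitary). The first two columns are already in hand: the identifications $C^*_u(\mathcal{M}_n(\mathbb{R})_{\text{sa}}) = \mathcal{M}_n(\mathbb{C})$, $C^*_u(\mathcal{M}_n(\mathbb{C})_{\text{sa}}) = \mathcal{M}_n(\mathbb{C}) \oplus \mathcal{M}_n(\mathbb{C})$, and $C^*_u(\mathcal{M}_{n>2}(\mathbb{H})_{\text{sa}}) = \mathcal{M}_{2n}(\mathbb{C})$ are exactly \cref{forFact2}, \cref{withGM}, and \cref{qnProp}, while the corresponding $\psi_{\mathcal{A}}$ are the embeddings pinned down in Eqs.~\eqref{realEmbed}, \eqref{complexEmbed}, and \eqref{quatEmbed}. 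So the real content is the third column together with the self-adjointness and unitarity that this appendix advertises.

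For the involution column I would invoke the uniqueness clause of \cref{existUThm}(iii): the canonical involution is the \emph{unique} involutive $\ast$-antiautomorphism of $C^*_u(\mathcal{A})$ fixing $\psi_{\mathcal{A}}(\mathcal{A})$ pointwise. It therefore suffices to exhibit a $\mathbb{C}$-linear map with these properties, since uniqueness then forces it to be the canonical one. The verifications are short direct computations. In the real case one uses $(XY)^T = Y^T X^T$ and $(X^*)^T = (X^T)^*$ to see the transpose is an involutive $\ast$-antiautomorphism, and it manifestly fixes the real symmetric matrices $\psi_{\mathcal{A}}(\mathcal{M}_n(\mathbb{R})_{\text{sa}})$. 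In the complex case one checks slotwise that $Y \oplus Z \mapsto Z^T \oplus Y^T$ squares to the identity, reverses products, commutes with $\ast$, and fixes $\psi_{\mathcal{A}}(a) = a \oplus a^T$. In the quaternionic case one uses the defining relations $J^T = -J$ and $J^2 = -\mathds{1}_{2n}$, hence $J^T J = \mathds{1}_{2n}$, to obtain $\Phi_{\mathcal{A}}^2 = \mathrm{id}$ and $\Phi_{\mathcal{A}}(XY) = \Phi_{\mathcal{A}}(Y)\Phi_{\mathcal{A}}(X)$; that $\Phi_{\mathcal{A}}$ fixes $\psi_{\mathcal{A}}(\mathcal{A})$ is precisely the identification $\mathcal{M}_{2n}(\mathbb{C})_{\text{sa}}^{\Phi} \cong \mathcal{M}_n(\mathbb{H})_{\text{sa}}$ already recorded in the discussion of the symplectic representation following Eq.~\eqref{phiQuat}.

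For self-adjointness and unitarity with respect to the trace inner product $\langle X, Y \rangle = \mathrm{Tr}(X^* Y)$ I would argue uniformly rather than case by case. The key observation is that each $\Phi_{\mathcal{A}}$ preserves the trace: the transpose does, the swap-and-transpose preserves the sum of the two summand traces, and in the quaternionic row $\mathrm{Tr}(J X^T J^T) = \mathrm{Tr}(X^T J^T J) = \mathrm{Tr}(X)$ by $J^T J = \mathds{1}_{2n}$ and cyclicity. Since $\Phi_{\mathcal{A}}$ is a $\ast$-antiautomorphism, $\Phi_{\mathcal{A}}(X)^* \Phi_{\mathcal{A}}(X) = \Phi_{\mathcal{A}}(X^*) \Phi_{\mathcal{A}}(X) = \Phi_{\mathcal{A}}(X X^*)$, so trace-preservation yields $\mathrm{Tr}(\Phi_{\mathcal{A}}(X)^* \Phi_{\mathcal{A}}(X)) = \mathrm{Tr}(X X^*) = \mathrm{Tr}(X^* X)$; thus $\Phi_{\mathcal{A}}$ is an isometry, and being a bijection of a finite-dimensional inner-product space it is unitary. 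Finally a unitary involution is automatically self-adjoint, since $\Phi_{\mathcal{A}}^{\dagger} = \Phi_{\mathcal{A}}^{-1} = \Phi_{\mathcal{A}}$, which closes the argument.

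The main obstacle I anticipate is purely the quaternionic bookkeeping: one must deploy $J^T = -J$ consistently so that the table's $J X^T J^T$ agrees with the $-J X^T J = -(J X J)^T$ of Eq.~\eqref{phiQuat}, and the fixed-point claim there leans on the symplectic-representation facts established earlier rather than on a fresh computation. The real and complex rows are routine. I would also remark, following the footnote to the proposition, that $\mathcal{M}_2(\mathbb{H})_{\text{sa}}$ is handled exactly as the $n>2$ quaternionic row, except that the standard embedding into $\mathcal{M}_4(\mathbb{C})$ is used, since that is the embedding carried into $\InvQM$.
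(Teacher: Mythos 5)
Your proposal is correct, and for the table itself it coincides with what the paper does: \cref{colProp} is stated as a collection of facts already established in \cref{forFact2}, \cref{withGM}, and \cref{qnProp} (each of which identifies $C^{*}_{u}(\mathcal{A})$ and its canonical involution via \cref{hoThm}), together with the standard embeddings of Eqs.~\eqref{realEmbed}--\eqref{quatEmbed}; your appeal to the uniqueness clause of \cref{existUThm}(iii) is the same mechanism, made explicit. Where you genuinely diverge is in the unitarity and self-adjointness claims. The paper proves these as two separate lemmas by three case-by-case trace computations apiece, using $\mathrm{Tr}(XY^{T})=\mathrm{Tr}(X^{T}Y)$, cyclicity, and $J^{T}J=\mathds{1}_{2n}$. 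You replace all six computations with one abstract argument: a $\mathbb{C}$-linear, trace-preserving $\ast$-antiautomorphism $\Phi_{\mathcal{A}}$ satisfies $\langle\Phi_{\mathcal{A}}(X),\Phi_{\mathcal{A}}(Y)\rangle=\mathrm{Tr}\big(\Phi_{\mathcal{A}}(X^{*})\Phi_{\mathcal{A}}(Y)\big)=\mathrm{Tr}\big(\Phi_{\mathcal{A}}(YX^{*})\big)=\mathrm{Tr}(YX^{*})=\langle X,Y\rangle$, hence is unitary, and a unitary involution is self-adjoint because $\Phi_{\mathcal{A}}^{\dagger}=\Phi_{\mathcal{A}}^{-1}=\Phi_{\mathcal{A}}$. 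This is valid (you state only the isometry case $X=Y$, but the displayed identity gives full inner-product preservation directly, and in any case complex polarization would suffice), and it buys uniformity: the only case-specific input is trace preservation, which is one line per row, and the argument covers the standardly embedded quabit $\mathcal{M}_{2}(\mathbb{H})_{\text{sa}}$ in $\mathcal{M}_{4}(\mathbb{C})$ with no additional work. The paper's explicit computations buy nothing beyond concreteness here, so your route is a clean simplification; just make sure to record the one-line trace-preservation checks, since they are the only place the three involutions actually enter.
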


\begin{definition} \textit{Let $\mathcal{X}$ be a finite dimensional C$^{*}$-algebra of dimension $n$. We identify $\mathcal{X}$ with $\mathcal{M}_{n}\big(\mathbb{C}\big)$. The \textit{Hilbert-Schmidt inner product} on $\mathcal{X}$ is denoted and defined for all $X,Y\in\mathcal{X}$ via $\langle X,Y\rangle=\mathrm{Tr}\big(X^{\dagger}Y\big)$, where $\dagger$ denotes the usual composition of the usual transpose and complex conjugation operations.}
\end{definition}

\begin{definition}\textit{Let $\mathcal{X}$ be a finite dimensional C$^{*}$-algebra of dimension $n$. We say that a linear operator $\Phi:\mathcal{X}\rightarrow\mathcal{X}$ is} unitary \textit{if $\langle \Phi(X),\Phi(Y)\rangle=\langle X,Y\rangle$ for all $X,Y\in\mathcal{X}$}
\end{definition}

\begin{lemma}\textit{The involutive *-antiautomorphisms tabulated in \cref{colProp} are unitary.}
\begin{proof}
For the case of $\mathcal{A}=\mathcal{M}_{n}\big(\mathbb{R}\big)_{s}$, note that
\begin{eqnarray}
\langle \Phi_{\mathcal{A}}(X),\Phi_{\mathcal{A}}(Y)\rangle&=&\mathrm{Tr}\Big(\Phi_{\mathcal{A}}(X)^{\dagger}\Phi_{\mathcal{A}}(Y)\Big)\nonumber\\
&=&\mathrm{Tr}\Big(\big(X^{T}\big)^{\dagger}Y^{T}\Big)\nonumber\\
&=&\mathrm{Tr}\Big(\big(X^{\dagger}\big)^{T}Y^{T}\Big)\nonumber\\
&=&\mathrm{Tr}\Big(\big(YX^{\dagger}\big)^{T}\Big)\nonumber\\
&=&\mathrm{Tr}\Big(YX^{\dagger}\Big)\nonumber\\
&=&\mathrm{Tr}\Big(X^{\dagger}Y\Big)\nonumber\\
&=&\langle X,Y\rangle\text{.}
\end{eqnarray}
For the case of $\mathcal{A}=\mathcal{M}_{n}\big(\mathbb{C}\big)_{sa}$, note that
\begin{eqnarray}
\langle \Phi_{\mathcal{A}}(Y\oplus Z),\Phi_{\mathcal{A}}(\tilde{Y}\oplus\tilde{Z})\rangle&=&\mathrm{Tr}\Big(\Phi_{\mathcal{A}}(Y\oplus Z)^{\dagger}\Phi_{\mathcal{A}}(\tilde{Y}\oplus\tilde{Z})\Big)\nonumber\\
&=&\mathrm{Tr}\Big(\big(Z^{T}\oplus Y^{T}\big)^{\dagger}\big(\tilde{Z}^{T}\oplus \tilde{Y}^{T}\big)\Big)\nonumber\\
&=&\mathrm{Tr}\Big(\big(Z^{\dagger}\oplus Y^{\dagger}\big)^{T}\big(\tilde{Z}\oplus \tilde{Y}\big)^{T}\Big)\nonumber\\
&=&\mathrm{Tr}\left(\Big(\big(\tilde{Z}\oplus \tilde{Y}\big)\big(Z^{\dagger}\oplus Y^{\dagger}\big)\Big)^{T}\right)\nonumber\\
&=&\mathrm{Tr}\Big(\big(\tilde{Z}\oplus \tilde{Y}\big)\big(Z^{\dagger}\oplus Y^{\dagger}\big)\Big)\nonumber\\
&=&\mathrm{Tr}\Big(\big(Z^{\dagger}\oplus Y^{\dagger}\big)\big(\tilde{Z}\oplus \tilde{Y}\big)\Big)\nonumber\\
&=&\mathrm{Tr}\Big(\big(Y^{\dagger}\oplus Z^{\dagger}\big)\big(\tilde{Y}\oplus \tilde{Z}\big)\Big)\nonumber\\
&=&\mathrm{Tr}\Big(\big(Y\oplus Z\big)\big)^{\dagger}\big(\tilde{Y}\oplus \tilde{Z}\big)\Big)\nonumber\\
&=&\langle Y\oplus Z,\tilde{Y}\oplus\tilde{Z}\rangle\text{.}
\end{eqnarray}
For the case of $\mathcal{A}=\mathcal{M}_{n>2}\big(\mathbb{H}\big)_{sa}$, note that
\begin{eqnarray}
\langle \Phi_{\mathcal{A}}(X),\Phi_{\mathcal{A}}(Y)\rangle&=&\mathrm{Tr}\Big(\Phi_{\mathcal{A}}(X)^{\dagger}\Phi_{\mathcal{A}}(Y)\Big)\nonumber\\
&=&\mathrm{Tr}\Big(\big(JX^{T}J^{T}\big)^{\dagger}\big(JY^{T}J^{T}\big)\Big)\nonumber\\
&=&\mathrm{Tr}\Big(J(X^{T})^{\dagger}J^{T}JY^{T}J^{T}\Big)\nonumber\\
&=&\mathrm{Tr}\Big((X^{\dagger})^{T}Y^{T}\Big)\nonumber\\
&=&\mathrm{Tr}\Big(\big(Y(X^{\dagger}\big)^{T}\Big)\nonumber\\
&=&\mathrm{Tr}\Big(X^{\dagger}Y\Big)\nonumber\\
&=&\langle X,Y\rangle\text{.}
\end{eqnarray}
\end{proof}
\end{lemma}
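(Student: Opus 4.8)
The plan is a direct, case-by-case verification against the three rows of \cref{colProp}. Recall that a linear operator $\Phi$ on a finite-dimensional $C^{*}$-algebra $\mathcal{X}\cong\mathcal{M}_{n}(\mathbb{C})$ is unitary with respect to the Hilbert-Schmidt inner product $\langle X,Y\rangle=\Tr(X^{\dagger}Y)$ precisely when $\langle\Phi(X),\Phi(Y)\rangle=\langle X,Y\rangle$ for all $X,Y$. In every case the tabulated involution is built from the matrix transpose, so the whole argument rests on three elementary facts about the trace — linearity, cyclicity $\Tr(AB)=\Tr(BA)$, and transpose-invariance $\Tr(A^{T})=\Tr(A)$ — together with the identity $(A^{T})^{\dagger}=(A^{\dagger})^{T}$ (both equal $\overline{A}$). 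First I would record these once, so that each case reduces to a short chain of rewrites.

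For the real case $\Phi_{\mathcal{A}}(X)=X^{T}$, I would compute $\langle X^{T},Y^{T}\rangle=\Tr((X^{T})^{\dagger}Y^{T})=\Tr((X^{\dagger})^{T}Y^{T})=\Tr((YX^{\dagger})^{T})=\Tr(YX^{\dagger})=\Tr(X^{\dagger}Y)=\langle X,Y\rangle$, where the last two steps use transpose-invariance and cyclicity respectively. The complex case $\Phi_{\mathcal{A}}(Y\oplus Z)=Z^{T}\oplus Y^{T}$ reduces to the real case after observing that both the adjoint and the trace on $\mathcal{M}_{n}(\mathbb{C})\oplus\mathcal{M}_{n}(\mathbb{C})$ act blockwise and additively; swapping the two summands merely reorders a sum of two traces, so applying the same rewriting to each block yields $\langle\Phi(Y\oplus Z),\Phi(\tilde{Y}\oplus\tilde{Z})\rangle=\Tr(Y^{\dagger}\tilde{Y})+\Tr(Z^{\dagger}\tilde{Z})=\langle Y\oplus Z,\tilde{Y}\oplus\tilde{Z}\rangle$.

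For the quaternionic case $\Phi_{\mathcal{A}}(X)=JX^{T}J^{T}$, the new ingredient is the structure of $J$. The explicit block form displayed in \cref{colProp} shows that $J$ is a real orthogonal matrix, so $J^{\dagger}=J^{T}$, $(J^{T})^{\dagger}=J$, and $J^{T}J=JJ^{T}=\mathds{1}_{2n}$. Using these, $(JX^{T}J^{T})^{\dagger}=J(X^{\dagger})^{T}J^{T}$, and then cyclicity absorbs the outer factors of $J$: $\langle\Phi(X),\Phi(Y)\rangle=\Tr(J(X^{\dagger})^{T}J^{T}JY^{T}J^{T})=\Tr(J^{T}J(X^{\dagger})^{T}Y^{T})=\Tr((X^{\dagger})^{T}Y^{T})$, which is exactly the expression that arose in the real case and again equals $\langle X,Y\rangle$.

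The computations are genuinely routine; the only point requiring care — and the closest thing to an obstacle — is bookkeeping the distinction between the transpose and the adjoint, since the involutions are built from the transpose while the inner product is built from the adjoint, so the correct handling of $(A^{T})^{\dagger}$ is essential at each step. Confirming $J^{T}J=\mathds{1}_{2n}$ from the displayed block form of $J$ is the one small structural fact that drives the quaternionic cancellation; once it is in hand, all three cases collapse to the same underlying trace identity.
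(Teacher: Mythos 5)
Your proposal is correct and follows essentially the same route as the paper: a direct case-by-case verification using $(A^{T})^{\dagger}=(A^{\dagger})^{T}$, transpose-invariance and cyclicity of the trace, and (for the quaternionic case) the orthogonality relation $J^{T}J=JJ^{T}=\mathds{1}_{2n}$ to absorb the conjugating factors. Your blockwise treatment of the complex case is just a mild reorganization of the paper's direct-sum computation, in which swapping the summands leaves the trace unchanged, so there is nothing substantive to distinguish the two arguments.
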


\begin{definition}\textit{Let $\mathcal{X}$ be a finite dimensional C$^{*}$-algebra of dimension $n$. For every bounded linear operator $\Phi:\mathcal{X}\rightarrow\mathcal{X}$ one defines the corresponding \textit{adjoint} linear operator --- denoted by $\Phi^{\dagger}$ --- via $\langle X,\Phi(Y)\rangle=\langle \Phi^{\dagger}(X),Y\rangle$ for all $X,Y\in\mathcal{X}$. If $\Phi=\Phi^{\dagger}$, then one says that $\Phi$ is self-adjoint.}
\end{definition}

\begin{lemma}\textit{The involutive *-antiautomorphisms tabulated in \cref{colProp} are self-adjoint.}
\begin{proof}
Let us first recall the elementary fact that $\forall X,Y\in\mathcal{X}$: $\mathrm{Tr}(XY^{T})=\mathrm{Tr}(X^{T}Y)$. Let us also note that $JXJ^{T}=J^{T}XJ$ $\forall X\in\mathcal{M}_{2n}\big(\mathbb{C}\big)$. We now proceed with the proof by cases.\\
For the case of $\mathcal{A}=\mathcal{M}_{n>2}\big(\mathbb{R}\big)_{s}$, note that
\begin{eqnarray}
\langle \Phi_{\mathcal{A}}^{\dagger}(X),Y\rangle&=&\langle X,\Phi_{\mathcal{A}}(Y)\rangle\nonumber\\
&=&\mathrm{Tr}\Big(X^{\dagger}Y^{T}\Big)\nonumber\\
&=&\mathrm{Tr}\Big(\big(X^{\dagger}\big)^{T}Y\Big)\nonumber\\
&=&\mathrm{Tr}\Big(\big(X^{T}\big)^{\dagger}Y\Big)\nonumber\\
&=&\langle \Phi_{\mathcal{A}}(X),Y\rangle\text{.}
\end{eqnarray}
For the case of $\mathcal{A}=\mathcal{M}_{n>2}\big(\mathbb{C}\big)_{sa}$, note that
\begin{eqnarray}
\langle \Phi_{\mathcal{A}}^{\dagger}(Y\oplus Z),\tilde{Y}\oplus\tilde{Z}\rangle&=&\langle Y\oplus Z,\Phi_{\mathcal{A}}(\tilde{Y}\oplus\tilde{Z})\rangle\nonumber\\
&=&\mathrm{Tr}\Big(\big(Y\oplus Z\big)^{\dagger}\big(\tilde{Z}^{T}\oplus \tilde{Y}^{T}\big)\Big)\nonumber\\
&=&\mathrm{Tr}\Big(Y^{\dagger}\tilde{Z}^{T}\Big)+\mathrm{Tr}\Big(Z^{\dagger}\tilde{Y}^{T}\Big)\nonumber\\
&=&\mathrm{Tr}\Big(\big(Y^{\dagger}\big)^{T}\tilde{Z}\Big)+\mathrm{Tr}\Big(\big(Z^{\dagger}\big)^{T}\tilde{Y}\Big)\nonumber\\
&=&\mathrm{Tr}\Big(\big(Y^{T}\big)^{\dagger}\tilde{Z}\Big)+\mathrm{Tr}\Big(\big(Z^{T}\big)^{\dagger}\tilde{Y}\Big)\nonumber\\
&=&\mathrm{Tr}\Big(\big(Z^{T}\oplus Y^{T}\big)^{\dagger}\big(\tilde{Y}\oplus \tilde{Z}\big)\Big)\nonumber\\
&=&\langle \Phi_{\mathcal{A}}(Y\oplus Z),\tilde{Y}\oplus\tilde{Z}\rangle\text{.}
\end{eqnarray}
For the case of $\mathcal{A}=\mathcal{M}_{n>2}\big(\mathbb{H}\big)_{sa}$, note that
\begin{eqnarray}
\langle \Phi_{\mathcal{A}}^{\dagger}(X),Y\rangle&=&\langle X,\Phi_{\mathcal{A}}(Y)\rangle\nonumber\\
&=&\mathrm{Tr}\Big(X^{\dagger}\big(JY^{T}J^{T}\big)\Big)\nonumber\\
&=&\mathrm{Tr}\Big(\big(J^{T}X^{\dagger}J\big)Y^{T}\Big)\nonumber\\
&=&\mathrm{Tr}\Big(\big(J^{T}X^{\dagger}J\big)^{T}Y\Big)\nonumber\\
&=&\mathrm{Tr}\Big(\big(J^{T}X^{T}J\big)^{\dagger}Y\Big)\nonumber\\
&=&\mathrm{Tr}\Big(\big(JX^{T}J^{T}\big)^{\dagger}Y\Big)\nonumber\\
&=&\langle \Phi_{\mathcal{A}}(X),Y\rangle\text{.}
\end{eqnarray}
\end{proof} 
\end{lemma}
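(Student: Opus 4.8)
The plan is to show that each canonical involution $\Phi_{\mathcal{A}}$ tabulated in \cref{colProp} satisfies $\Phi_{\mathcal{A}}^{\dagger} = \Phi_{\mathcal{A}}$ relative to the Hilbert--Schmidt inner product. The cleanest route I would take sidesteps all case analysis and instead combines two facts already in hand. First, by \cref{existUThm}(iii) the map $\Phi_{\mathcal{A}}$ is an \emph{involutive} $*$-antiautomorphism, so $\Phi_{\mathcal{A}} \circ \Phi_{\mathcal{A}} = \mathrm{id}$, i.e.\ $\Phi_{\mathcal{A}}^{-1} = \Phi_{\mathcal{A}}$; moreover, as a $*$-antiautomorphism it is $\mathbb{C}$-linear, so its Hilbert--Schmidt adjoint is well defined. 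Second, the preceding lemma has already established that $\Phi_{\mathcal{A}}$ is unitary, $\langle \Phi_{\mathcal{A}}(X), \Phi_{\mathcal{A}}(Y)\rangle = \langle X, Y\rangle$ for all $X,Y$. I would simply observe that a unitary involution is automatically self-adjoint.

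Concretely, unitarity unwinds through the definition of the adjoint: for every $X,Y$ one has $\langle \Phi_{\mathcal{A}}^{\dagger}\Phi_{\mathcal{A}}(X), Y\rangle = \langle \Phi_{\mathcal{A}}(X), \Phi_{\mathcal{A}}(Y)\rangle = \langle X, Y\rangle$, and since the Hilbert--Schmidt form is a nondegenerate (positive-definite Hermitian) inner product this forces $\Phi_{\mathcal{A}}^{\dagger}\Phi_{\mathcal{A}} = \mathrm{id}$, hence $\Phi_{\mathcal{A}}^{\dagger} = \Phi_{\mathcal{A}}^{-1}$. Composing with the involutive identity $\Phi_{\mathcal{A}}^{-1} = \Phi_{\mathcal{A}}$ yields $\Phi_{\mathcal{A}}^{\dagger} = \Phi_{\mathcal{A}}$, which is exactly self-adjointness. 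This argument is uniform across the real, complex, and quaternionic rows of \cref{colProp} and requires nothing beyond unitarity and involutivity.

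If instead I wanted a self-contained verification independent of the unitarity lemma, I would fall back on checking $\langle X, \Phi_{\mathcal{A}}(Y)\rangle = \langle \Phi_{\mathcal{A}}(X), Y\rangle$ directly in each of the three cases. Every such computation reduces to two elementary identities: the trace identity $\mathrm{Tr}(AB^{T}) = \mathrm{Tr}(A^{T}B)$, and, in the quaternionic row, the relation $JXJ^{T} = J^{T}XJ$, which holds because $J$ is real orthogonal with $J^{T} = J^{-1} = -J$. The only genuine obstacle here is notational bookkeeping rather than any conceptual difficulty: one must carefully disentangle the two meanings of the dagger symbol --- the operator adjoint defining self-adjointness versus the matrix conjugate-transpose appearing inside $\langle X,Y\rangle = \mathrm{Tr}(X^{\dagger}Y)$ --- which is controlled throughout by the single algebraic fact that transposition commutes with conjugate-transposition, $(X^{\dagger})^{T} = (X^{T})^{\dagger}$, together with cyclicity of the trace and the orthogonality of $J$.
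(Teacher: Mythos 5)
Your primary argument is correct, but it takes a genuinely different route from the paper. The paper proves self-adjointness by brute force, running the same three-case trace computation it used for unitarity: in each row of \cref{colProp} it unwinds $\langle X,\Phi_{\mathcal{A}}(Y)\rangle=\mathrm{Tr}\big(X^{\dagger}\Phi_{\mathcal{A}}(Y)\big)$ using $\mathrm{Tr}(AB^{T})=\mathrm{Tr}(A^{T}B)$, $(X^{\dagger})^{T}=(X^{T})^{\dagger}$, and (in the quaternionic case) $JXJ^{T}=J^{T}XJ$, until $\langle \Phi_{\mathcal{A}}(X),Y\rangle$ appears --- which is exactly your ``fallback'' verification. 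Your preferred argument instead observes that the preceding lemma already gives unitarity, so $\langle \Phi_{\mathcal{A}}^{\dagger}\Phi_{\mathcal{A}}(X),Y\rangle=\langle \Phi_{\mathcal{A}}(X),\Phi_{\mathcal{A}}(Y)\rangle=\langle X,Y\rangle$ forces $\Phi_{\mathcal{A}}^{\dagger}=\Phi_{\mathcal{A}}^{-1}$ by nondegeneracy, and involutivity ($\Phi_{\mathcal{A}}\circ\Phi_{\mathcal{A}}=\mathrm{id}$, from \cref{existUThm}(iii), or checked directly for the standardly embedded quabit since $J^{2}=(J^{T})^{2}=-\mathds{1}_{2n}$) then gives $\Phi_{\mathcal{A}}^{\dagger}=\Phi_{\mathcal{A}}$. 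This is sound: the one hypothesis you must not gloss over is $\mathbb{C}$-linearity of $\Phi_{\mathcal{A}}$ (needed for the Hilbert--Schmidt adjoint to be defined at all), and it does hold here because the paper's $*$-antiautomorphisms are by definition linear --- the transpose, unlike the matrix adjoint, is $\mathbb{C}$-linear --- and you correctly flag this. What each approach buys: yours is uniform across all rows of the table, scales to any further involutive EJC one might add to $\InvQM$, and makes transparent that self-adjointness is not an independent fact but a formal consequence of unitarity plus period two; the paper's computation is self-contained, independent of the ordering of the two appendix lemmas, and doubles as a concrete check of the matrix identities it reuses elsewhere.
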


\newpage
\section[The Category CJPc]{The Category $\mathbf{CJP}_{\!\mathscr{C}}$}\label{SMCproofApp}
In this appendix, we explicitly show that $\mathbf{CJP}_{\!\mathscr{C}}$ is a symmetric monoidal category. First, recall our definition of $\mathbf{CJP}_{\!\mathscr{C}}$ from \cref{categoriesEJA}.
\begin{definition} \textit{Let} $\mathbf{CJP}_{\!\mathscr{C}}$\textit{ be the category with objects} \textsc{ejc}\textit{-algebras} $(\mathcal{A},\mathbf{M}_{\mathcal{A}})$ \textit{from a class} $\mathscr{C}$ \textit{closed under} $\odot$ \textit{and containing} $\mathrm{I}\equiv(\mathbb{R},\mathbb{C})$\textit{, and morphisms}\footnote{When clear from context, we overload the symbol $\mathcal{A}$ to stand for the object $(\mathcal{A},\mathbf{M}_{\mathcal{A}})$.} $\phi\in\text{hom}(\mathcal{A},\mathcal{B})$\textit{ positive maps} $\phi:\mathbf{M}_{\mathcal{A}}\longrightarrow\mathbf{M}_{\mathcal{B}}$ \textit{such that} $\forall\mathcal{C}\in\mathscr{C}$ $\phi\otimes\text{id}_{\mathbf{M}_{\mathcal{C}}}$ \textit{is positive and maps }$\mathcal{A}\odot\mathcal{C}$ \textit{into} $\mathcal{B}\odot\mathcal{C}$, \textit{i.e}.\ $\phi\otimes\text{id}_{\mathbf{M}_{\mathcal{C}}}::\mathcal{A}\odot\mathcal{C}\longmapsto\mathcal{B}\odot\mathcal{C}$\textit{, where $\otimes$ is the usual monoidal product in the symmetric monoidal category of finite dimensional complex *-algebras.}
\end{definition}
\begin{lemma}\label{catLemAB} $\mathbf{CJP}_{\!\mathscr{C}}$ \textit{is a category.}
\begin{proof} We detail the required properties. We shall write $x,y,z,\dots$ for elements of finite dimensional complex *-algebras and $a,b,c,\dots$ for elements of EJC-algebras (\textit{i.e}.\ Jordan subalgebras of the self-adjoint parts of finite dimensional complex *-algebras.) The obvious identity morphisms are
\begin{equation}
1_{\mathcal{A}}=\text{id}_{\mathbf{M}_{\mathcal{A}}}:\mathbf{M}_{\mathcal{A}}\longrightarrow\mathbf{M}_{\mathcal{A}}::x\longmapsto x\text{.}
\end{equation}
Indeed, with $\phi\in\text{hom}(\mathcal{A},\mathcal{B})$ we then have that $1_{\mathcal{B}}\circ\phi=\phi=\phi\circ1_{\mathcal{A}}$. The composition of morphisms in $\mathbf{CJP}_{\!\mathscr{C}}$ is simply inherited from the composition of morphisms in $\mathbf{*}$-$\mathbf{Alg}$\footnote{Henceforth, we write $\mathbf{*}$-$\mathbf{Alg}$ for the category of complex *-algebras, \textit{i.e}.\ the category with objects finite dimensional complex *-algebras, morphisms linear functions thereof, with composition given by usual functional composition, and with $\alpha,\lambda,\rho,\sigma$ the usual natural isomorphisms (the associator, left and right unit --- \textit{i.e}.\ $\mathbb{C}$ --- introductions, respectively, and symmetor) which render $\mathbf{*}$-$\mathbf{Alg}$ a symmetric monoidal category with respect to the usual monoidal (\textit{i.e}.\ tensor) product $\otimes$.}, and by the proof of \cref{prop:completely Jordan preserving composes and tensors} (i) we have that the composition of $\mathbf{CJP}_{\!\mathscr{C}}$ morphisms are indeed Jordan preserving with respect to $\mathscr{C}$. Associativity of composition is also inherited from $\mathbf{*}$-$\mathbf{Alg}$, indeed, simply from the composition of linear functions.
\end{proof}
\end{lemma}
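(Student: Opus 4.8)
The plan is to verify directly that $\mathbf{CJP}_{\!\mathscr{C}}$ satisfies each of the defining axioms of a category as laid out in \cref{catDef}: existence of identity morphisms, a well-defined composition operation, left-right neutrality of the identities, and associativity of composition. Since the objects and morphisms are already fixed by the definition, the substantive work is not in checking the axioms abstractly but in confirming that the natural candidates for identities and composites are \emph{genuinely} morphisms of $\mathbf{CJP}_{\!\mathscr{C}}$, i.e.\ that they are positive maps which remain Jordan-preserving after tensoring with any $\mathcal{C}\in\mathscr{C}$.

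First I would take the identity morphism on an object $(\mathcal{A},\mathbf{M}_{\mathcal{A}})$ to be $1_{\mathcal{A}}=\mathrm{id}_{\mathbf{M}_{\mathcal{A}}}$. This is manifestly positive, and for every $\mathcal{C}\in\mathscr{C}$ one has $\mathrm{id}_{\mathbf{M}_{\mathcal{A}}}\otimes\mathrm{id}_{\mathbf{M}_{\mathcal{C}}}=\mathrm{id}_{\mathbf{M}_{\mathcal{A}}\otimes\mathbf{M}_{\mathcal{C}}}$, which is positive and fixes $\mathcal{A}\odot\mathcal{C}$ pointwise. Hence $1_{\mathcal{A}}$ is indeed relatively completely Jordan preserving with respect to $\mathscr{C}$, and so a legitimate morphism. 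Next I would inherit composition from the ambient symmetric monoidal category of finite-dimensional complex $*$-algebras, defining $\psi\circ\phi$ by ordinary functional composition of the underlying linear maps $\mathbf{M}_{\mathcal{A}}\to\mathbf{M}_{\mathcal{B}}\to\mathbf{M}_{\mathcal{C}}$.

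The key point to verify is closure: if $\phi$ and $\psi$ are morphisms, then so is $\psi\circ\phi$. This follows exactly as in the proof of \cref{prop:completely Jordan preserving composes and tensors}(i): for any $\mathcal{D}\in\mathscr{C}$ one has the functoriality identity $(\psi\circ\phi)\otimes\mathrm{id}_{\mathbf{M}_{\mathcal{D}}}=(\psi\otimes\mathrm{id}_{\mathbf{M}_{\mathcal{D}}})\circ(\phi\otimes\mathrm{id}_{\mathbf{M}_{\mathcal{D}}})$, a composite of positive maps that carries $\mathcal{A}\odot\mathcal{D}$ into $\mathcal{B}\odot\mathcal{D}$ and thence into $\mathcal{C}\odot\mathcal{D}$; hence $\psi\circ\phi$ is again a relatively completely Jordan preserving map. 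I expect this to be the main obstacle, since it is the only step where the defining $\mathscr{C}$-relative condition must actually be propagated across a composite rather than read off; but the functoriality identity reduces it to facts already established about tensor products and composites of positive Jordan-preserving maps.

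Finally, the remaining axioms require no new input. Left-right neutrality, $1_{\mathcal{B}}\circ\phi=\phi=\phi\circ 1_{\mathcal{A}}$, is immediate from the behaviour of $\mathrm{id}_{\mathbf{M}_{\mathcal{A}}}$ and $\mathrm{id}_{\mathbf{M}_{\mathcal{B}}}$ under functional composition, and associativity of $\circ$ is inherited verbatim from functional composition of linear maps in the complex $*$-algebra category. Assembling these four verifications then establishes that $\mathbf{CJP}_{\!\mathscr{C}}$ is a category.
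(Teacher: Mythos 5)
Your proposal is correct and follows essentially the same route as the paper: identities given by $\mathrm{id}_{\mathbf{M}_{\mathcal{A}}}$, composition inherited from the ambient $*$-algebra category, closure under composition via \cref{prop:completely Jordan preserving composes and tensors}(i), and unit laws and associativity read off from functional composition. Your explicit check that $1_{\mathcal{A}}\otimes\mathrm{id}_{\mathbf{M}_{\mathcal{C}}}$ fixes $\mathcal{A}\odot\mathcal{C}$ is a small but welcome addition that the paper leaves implicit.
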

\begin{lemma}\label{bfLemAB} $\odot:\mathbf{CJP}_{\!\mathscr{C}}\times\mathbf{CJP}_{\!\mathscr{C}}\longrightarrow\mathbf{CJP}_{\!\mathscr{C}}$ \textit{is a bifunctor.}
\begin{proof} First, note that \textit{by definition} $\mathscr{C}$ is closed under $\odot$; in category notation, $\forall(\mathcal{A},\mathcal{B})\in\text{ob}\big(\mathbf{CJP}_{\!\mathscr{C}}\times\mathbf{CJP}_{\!\mathscr{C}}\big)$ one has $\odot(\mathcal{A},\mathcal{B}\big)\in\text{ob}\big(\mathbf{CJP}_{\!\mathscr{C}}\big)$. Next, let $(f_{1},f_{2})\in\text{hom}\big(\mathbf{CJP}_{\!\mathscr{C}}\times\mathbf{CJP}_{\!\mathscr{C}}\big)$. Then 
\begin{equation}
\odot(f_{1},f_{2})\equiv f_{1}\otimes f_{2}:\mathbf{M}_{\text{dom}(f_{1})}\otimes \mathbf{M}_{\text{dom}(f_{2})}\longrightarrow \mathbf{M}_{\text{cod}(f_{1})}\otimes \mathbf{M}_{\text{cod}(f_{2})}
\end{equation} 
is again a morphism in $\mathbf{CJP}_{\!\mathscr{C}}$ in light of our proof of \cref{prop:completely Jordan preserving composes and tensors} (ii), where one notes that $\mathbf{M}_{\text{dom}(f_{1})}\otimes \mathbf{M}_{\text{dom}(f_{2})}$ is by definition $\mathbf{M}_{\text{dom}(f_{1})\odot\text{dom}(f_{2})}$ and similarly for the codomain.\\
One also notes that 
\begin{equation}
\odot(1_{\mathcal{A}\times\mathcal{B}}\big)=\odot\big(1_{\mathcal{A}}\times 1_{\mathcal{B}}\big)=\text{id}_{\mathbf{M}_{\mathcal{A}}}\otimes\text{id}_{\mathbf{M}_{\mathcal{B}}}=1_{\odot(\mathcal{A},\mathcal{B})}\text{.}
\end{equation}
Finally, let $(f_{1},f_{2}):\mathcal{A}_{1}\times\mathcal{A}_{2}\longrightarrow\mathcal{B}_{1}\times\mathcal{B}_{2}$ and $(g_{1},g_{2}):\mathcal{B}_{1}\times\mathcal{B}_{2}\longrightarrow\mathcal{C}_{1}\times\mathcal{C}_{2}$ be composable morphisms in $\mathbf{CJP}_{\!\mathscr{C}}\times\mathbf{CJP}_{\!\mathscr{C}}$. Then 
\begin{equation}
\odot\big((g_{1},g_{2})\circ(f_{1},f_{2})\big)=\odot\big((g_{1}\circ f_{1}),(g_{2}\circ f_{2})\big)
\end{equation} 
because \cref{catLemAB} ensures that $\mathbf{CJP}_{\!\mathscr{C}}\times\mathbf{CJP}_{\!\mathscr{C}}$ is a category. Then by definition 
\begin{equation}
\odot\big((g_{1}\circ f_{1}),(g_{2}\circ f_{2})\big)=(g_{1}\circ f_{1})\otimes(g_{2}\circ f_{2})=\big(g_{1}\otimes g_{2}\big)\circ\big(f_{1}\otimes f_{2}\big)\text{,}
\end{equation} 
the last equality following from the fact that $\otimes:\mathbf{*}\text{-}\mathbf{Alg}\times\mathbf{*}\text{-}\mathbf{Alg}\longrightarrow\mathbf{*}\text{-}\mathbf{Alg}$ is a bifunctor. But 
\begin{equation}
\big(g_{1}\otimes g_{2}\big)\circ\big(f_{1}\otimes f_{2}\big)=\odot(g_{1},g_{2})\circ\odot(f_{1},f_{2})\text{,}
\end{equation}
which completes the proof.
\end{proof}
\end{lemma}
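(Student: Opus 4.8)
The plan is to verify, directly against \cref{funDef}, that $\odot$ satisfies the axioms of a covariant functor; since its domain $\mathbf{CJP}_{\!\mathscr{C}}\times\mathbf{CJP}_{\!\mathscr{C}}$ is a product category, this is exactly what it means for $\odot$ to be a bifunctor in the sense of \cref{biFunDef}. By \cref{catLemAB}, $\mathbf{CJP}_{\!\mathscr{C}}$ is a category, and hence so is the product $\mathbf{CJP}_{\!\mathscr{C}}\times\mathbf{CJP}_{\!\mathscr{C}}$ with componentwise identities and composition. First I would dispose of the object assignment: by \cref{def: canonical tensor product}, $\odot(\mathcal{A},\mathcal{B})=\mathcal{A}\odot\mathcal{B}=\mathfrak{j}(\mathcal{A}\otimes\mathcal{B})\subseteq\mathbf{M}_{\mathcal{A}}\otimes\mathbf{M}_{\mathcal{B}}$, and this lands in $\text{ob}(\mathbf{CJP}_{\!\mathscr{C}})$ precisely because $\mathscr{C}$ is assumed closed under $\odot$.

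Next I would treat the morphism assignment $\odot(f_1,f_2)\equiv f_1\otimes f_2$, where $\otimes$ is the monoidal product of the symmetric monoidal category $\mathbf{*}$-$\mathbf{Alg}$. The content here --- and the main obstacle --- is that $f_1\otimes f_2$ must itself be a $\mathbf{CJP}_{\!\mathscr{C}}$ morphism, i.e. relatively completely Jordan preserving with respect to $\mathscr{C}$. I would obtain this from the $\mathscr{C}$-relative analogue of \cref{prop:completely Jordan preserving composes and tensors}. Concretely, I would factor $f_1\otimes f_2=(f_1\otimes\text{id}_{\mathbf{M}_{\mathcal{B}_2}})\circ(\text{id}_{\mathbf{M}_{\mathcal{A}_1}}\otimes f_2)$ and argue that each factor is a morphism, then close under composition using the $\mathscr{C}$-relative composition result. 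The delicate point is that tensoring a morphism with an identity is again a morphism: for a test object $\mathcal{C}\in\mathscr{C}$ one must recognise $(f_1\otimes\text{id})\otimes\text{id}_{\mathbf{M}_{\mathcal{C}}}$ as $f_1\otimes\text{id}_{\mathbf{M}_{\mathcal{B}_2\odot\mathcal{C}}}$ up to the associator, which requires associativity of $\odot$ (\cref{prop: associativity}) to relocate the test algebra, together with the closure $\mathcal{B}_2\odot\mathcal{C}\in\mathscr{C}$ so that the resulting map is tested against a legitimate object of $\mathscr{C}$. This reduction is exactly the argument already carried out for \cref{prop:completely Jordan preserving composes and tensors}, so I would invoke it rather than repeat the positivity bookkeeping.

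Finally I would verify the two functor equations of \cref{funDef}. Preservation of identities is immediate: $\odot(1_{\mathcal{A}\times\mathcal{B}})=\odot(1_{\mathcal{A}},1_{\mathcal{B}})=\text{id}_{\mathbf{M}_{\mathcal{A}}}\otimes\text{id}_{\mathbf{M}_{\mathcal{B}}}=\text{id}_{\mathbf{M}_{\mathcal{A}}\otimes\mathbf{M}_{\mathcal{B}}}=1_{\odot(\mathcal{A},\mathcal{B})}$. Preservation of composition follows because composition in $\mathbf{CJP}_{\!\mathscr{C}}$ and in its product is inherited from ordinary composition of linear maps, and $\otimes$ is already known to be a bifunctor on $\mathbf{*}$-$\mathbf{Alg}$; thus for composable $(f_1,f_2)$ and $(g_1,g_2)$ one computes $\odot((g_1,g_2)\circ(f_1,f_2))=(g_1\circ f_1)\otimes(g_2\circ f_2)=(g_1\otimes g_2)\circ(f_1\otimes f_2)=\odot(g_1,g_2)\circ\odot(f_1,f_2)$. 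The only genuinely non-formal step is the morphism well-definedness of the previous paragraph; everything else is bookkeeping that transfers verbatim from the non-relativised case.
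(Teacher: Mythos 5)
Your proposal is correct and follows essentially the same route as the paper's proof: object closure by the assumed $\odot$-closure of $\mathscr{C}$, morphism well-definedness via (the $\mathscr{C}$-relative form of) \cref{prop:completely Jordan preserving composes and tensors}, and identity/composition preservation inherited from the bifunctoriality of $\otimes$ on $\mathbf{*}$-$\mathbf{Alg}$. Your explicit remark that tensoring with an identity requires \cref{prop: associativity} together with $\mathcal{B}_{2}\odot\mathcal{C}\in\mathscr{C}$ is precisely the content the paper delegates to \cref{lemma:completely Jordan preserving1}, so no genuinely new argument is involved.
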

\begin{theorem} $\mathbf{CJP}_{\!\mathscr{C}}$ \textit{is a symmetric monoidal category with unit} $\mathrm{I}\equiv(\mathbb{R},\mathbb{C})$ \textit{and monoidal functor} $\odot$\textit{.}
\begin{proof} We will introduce the appropriate natural transformations and check the required conditions. Let $\mathds{1}:\mathbf{CJP}_{\!\mathscr{C}}\longrightarrow\mathbf{CJP}_{\!\mathscr{C}}$ be the usual identity endofunctor on $\mathbf{CJP}_{\!\mathscr{C}}$. Introduce the following functor morphism 
\begin{equation}
\lambda:\mathds{1}\longrightarrow\odot
\end{equation}
so that
\begin{equation}
\lambda:\text{ob}(\mathbf{CJP}_{\!\mathscr{C}})\longrightarrow\text{hom}(\mathbf{CJP}_{\!\mathscr{C}})::\mathcal{A}\longrightarrow \lambda_{\mathcal{A}}\text{,}
\end{equation}
where the components are
\begin{equation}
\lambda_{\mathcal{A}}:\mathds{1}(\mathcal{A})\longmapsto\odot(\mathbb{R},\mathcal{A})\text{.}
\end{equation}
The components $\lambda_{\mathcal{A}}$ are positive and furthermore Jordan preserving since $\odot(\mathbb{R},\mathcal{A})=\mathcal{A}$ for any $\mathcal{A}\in\mathscr{C}$. Now, let $f\in\text{hom}\big(\mathcal{A},\mathcal{B}\big)$. We have $\forall \mathcal{A}\in\text{ob}(\mathbf{CJP}_{\!\mathscr{C}})$ that $\odot(\mathbb{R},\mathcal{A})=\mathcal{A}$. Thus the following diagram commutes
\begin{equation}
\xymatrixrowsep{1cm}
\xymatrixcolsep{2cm}
\xymatrix{\mathcal{A}\ar@{=}[r]\ar@/^2pc/[rrr]_{f}\ar@{=}[d]  &\mathds{1}(\mathcal{A})\ar[r]^{\mathds{1}(f)}\ar[d]_{\lambda_{\mathcal{A}}} & \mathds{1}(\mathcal{B})\ar@{=}[r]\ar[d]^{\lambda_{\mathcal{B}}} &\mathcal{B}\ar@{=}[d]\\
\mathcal{A}\ar@{=}[r]\ar@/_2pc/[rrr]_{f}\ar@{=}[u] & \odot(\mathbb{R},\mathcal{A})\ar[r]_{\odot(1_{\mathbb{R}},f)} & \odot(\mathbb{R},\mathcal{B})\ar@{=}[r]&\mathcal{B}}
\end{equation}
So $\lambda$ is a natural transformation. Moreover $\exists \lambda^{-1}:\odot\rightarrow\mathds{1}$ such that $\lambda_{\mathcal{A}}\circ\lambda_{\mathcal{A}}^{-1}=1_{\mathds{1}(\mathcal{A})}$ and $\lambda_{\mathcal{A}}^{-1}\circ\lambda_{\mathcal{A}}=1_{\odot(\mathbb{R},\mathcal{A})}$. So $\lambda$ is a natural isomorphism. Reproducing the above \textit{mutatis mutandis} one also has the following natural isomorphism
\begin{equation}
\rho:\text{ob}(\mathbf{CJP}_{\!\mathscr{C}})\longrightarrow\text{hom}(\mathbf{CJP}_{\!\mathscr{C}})::\mathcal{A}\longrightarrow \rho_{\mathcal{A}}\text{,}
\end{equation}
where the components are
\begin{equation}
\rho_{\mathcal{A}}:::\mathds{1}(\mathcal{A})\longmapsto\odot(\mathcal{A},\mathbb{R})\text{.}
\end{equation}
Next, introduce the following functor morphism $\sigma:\odot\longrightarrow\odot$ with
\begin{equation}
\sigma:\text{ob}(\mathbf{CJP}_{\!\mathscr{C}}\times\mathbf{CJP}_{\!\mathscr{C}})\longrightarrow\text{hom}(\mathbf{CJP}_{\!\mathscr{C}})::(\mathcal{A},\mathcal{B})\longrightarrow\sigma_{\mathcal{A},\mathcal{B}}
\end{equation}
with components
\begin{equation}
\sigma_{\mathcal{A},\mathcal{B}}:\odot(\mathcal{A},\mathcal{B})\longmapsto\odot(\mathcal{B},\mathcal{A})
\end{equation}
$\sigma_{\mathcal{A,B}}$ is of course Jordan preserving relative to $\mathscr{C}$, because the swap operation is a positive linear function, and since $\mathbf{M}\otimes\mathbf{N}\cong\mathbf{N}\otimes\mathbf{M}$ as complex *-algebras with $\mathbf{M},\mathbf{N}\in\text{ob}(\mathbf{*}\text{-}\mathbf{Alg})$. The former observation renders $\sigma$ a natural isomorphism. So the following diagrams commute
\begin{eqnarray}
\xymatrix{\odot(\mathcal{A},\mathcal{B})\ar[d]_{\sigma_{\mathcal{A},\mathcal{B}}}\ar[rd]^{1_{\odot(\mathcal{A},\mathcal{B})}} & \\ \odot(\mathcal{B},\mathcal{A})\ar[r]_{\sigma_{\mathcal{B},\mathcal{A}}} & \odot(\mathcal{A},\mathcal{B})} \hspace{2cm} \xymatrix{ & \mathcal{A}\ar[d]^{\rho_{\mathcal{A}}}\ar[dl]_{\lambda_{\mathcal{A}}}\\ \odot(\mathbb{R},\mathcal{A})& \ar[l]^{\sigma_{\mathcal{A},\mathbb{R}}}\odot(\mathcal{A},\mathbb{R})} 
\end{eqnarray}
We now come to associativity.\\
Let $\odot_{12,3}:\mathbf{CJP}_{\!\mathscr{C}}\times\mathbf{CJP}_{\!\mathscr{C}}\times\mathbf{CJP}_{\!\mathscr{C}}\longrightarrow\mathbf{CJP}_{\!\mathscr{C}}$ be the following functor:
\begin{eqnarray}
\forall (\mathcal{A},\mathcal{B},\mathcal{C})\in\text{ob}(\mathbf{CJP}_{\!\mathscr{C}}\times\mathbf{CJP}_{\!\mathscr{C}}\times\mathbf{CJP}_{\!\mathscr{C}})&\text{ define }&\odot_{12,3}(\mathcal{A},\mathcal{B},\mathcal{C})=\odot\Big(\odot(\mathcal{A},\mathcal{B}),\mathcal{C}\Big)\label{ob12three}\text{.}\\
\forall (f,g,h)\in\text{hom}(\mathbf{CJP}_{\!\mathscr{C}}\times\mathbf{CJP}_{\!\mathscr{C}}\times\mathbf{CJP}_{\!\mathscr{C}})&\text{ define }&\odot_{12,3}(f,g,h)=\odot\Big(\odot(f,g),h\Big)\label{hom12three}\text{.}
\end{eqnarray}
Let $\odot_{1,23}:\mathbf{CJP}_{\!\mathscr{C}}\times\mathbf{CJP}_{\!\mathscr{C}}\times\mathbf{CJP}_{\!\mathscr{C}}\longrightarrow\mathbf{CJP}_{\!\mathscr{C}}$ be the following functor:
\begin{eqnarray}
\forall (\mathcal{A},\mathcal{B},\mathcal{C})\in\text{ob}(\mathbf{CJP}_{\!\mathscr{C}}\times\mathbf{CJP}_{\!\mathscr{C}}\times\mathbf{CJP}_{\!\mathscr{C}})&\text{ define }&\odot_{1,23}(\mathcal{A},\mathcal{B},\mathcal{C})=\odot\Big(\mathcal{A},\odot(\mathcal{B},\mathcal{C})\Big)\\
\forall (f,g,h)\in\text{hom}(\mathbf{CJP}_{\!\mathscr{C}}\times\mathbf{CJP}_{\!\mathscr{C}}\times\mathbf{CJP}_{\!\mathscr{C}})&\text{ define }&\odot_{1,23}(f,g,h)=\odot\Big(f,\odot(g,h)\Big)
\end{eqnarray}
Introduce the following functor morphism $\alpha:\odot_{12,3}\longrightarrow\odot_{1,23}$ so that
\begin{equation}
\alpha:\text{ob}(\mathbf{CJP}_{\!\mathscr{C}}\times\mathbf{CJP}_{\!\mathscr{C}}\times\mathbf{CJP}_{\!\mathscr{C}})\longrightarrow\text{hom}(\mathbf{CJP}_{\!\mathscr{C}})::(\mathcal{A},\mathcal{B},\mathcal{C})\longrightarrow\alpha_{\mathcal{A},\mathcal{B},\mathcal{C}}\text{,}
\end{equation}
where the components are
\begin{equation}
\alpha_{\mathcal{A},\mathcal{B},\mathcal{C}}:\odot\Big(\odot(\mathcal{A},\mathcal{B}),\mathcal{C}\Big)\longmapsto\odot\Big(\mathcal{A},\odot(\mathcal{B},\mathcal{C})\Big)\text{.}
\end{equation}
The components are again positive, and furthermore Jordan preserving, in light of the fact that $\otimes$ is a monoidal functor, and so it is associative. Now, we are going to prove that $\alpha$ is a natural transformation. First, let us verify that $\odot_{12,3}$ and $\odot_{1,23}$ are indeed functors. We will verify the case of $\odot_{12,3}$, with verification of the other case following in an entirely similar manner. First, it is to be observed that $\text{ob}(\mathbf{CJP}_{\!\mathscr{C}})$ is by definition closed under the canonical tensor product. Moreover, $\text{hom}(\mathbf{CJP}_{\!\mathscr{C}})$ is closed under the canonical tensor product in light of \cref{bfLemAB}. Next, in light of Eq.~\eqref{hom12three}
\begin{equation}
\xymatrixcolsep{1.5cm}
\xymatrix{1_{(\mathcal{A},\mathcal{B},\mathcal{C})}\ar[r]^{\odot_{12,3}} & \odot\Big(1_{\odot(\mathcal{A},\mathcal{B})},1_{\mathcal{C}}\Big)\ar@{=}[r] & 1_{\odot\big(\odot(\mathcal{A},\mathcal{B}),\mathcal{C}\big)}\ar@{=}[r] &1_{\odot_{12,3}(\mathcal{A},\mathcal{B},\mathcal{C})}}
\end{equation}
with the first equality following from the proof of \cref{bfLemAB} and the second by the definition in Eq.~\eqref{ob12three}. Next, let $f=(f_{1},f_{2},f_{3})$ and $g=(g_{1},g_{2},g_{3})$ be composable in $\text{hom}(\mathbf{CJP}_{\!\mathscr{C}}\times\mathbf{CJP}_{\!\mathscr{C}}\times\mathbf{CJP}_{\!\mathscr{C}})$. Then one has from Eq.~\eqref{hom12three}
\begin{equation}
\xymatrixcolsep{1.5cm}
\xymatrix{g\circ f \ar@{=}[r]&(g_{1}\circ f_{1},g_{2}\circ f_{2},g_{3}\circ f_{3}) \ar[r]^{\odot_{12,3}} & \odot\Big(\odot(g_{1}\circ f_{1},g_{2}\circ f_{2}),g_{3}\circ f_{3}\Big)
}\label{123hom}
\end{equation}
with the first equality coming from the definition of the product category $\mathbf{CJP}_{\!\mathscr{C}}\times\mathbf{CJP}_{\!\mathscr{C}}\times\mathbf{CJP}_{\!\mathscr{C}}$, the second by the definition in Eq.~\eqref{hom12three}. Now by \cref{bfLemAB} we have $\odot(g_{1}\circ f_{1},g_{2}\circ f_{2})=\odot(g_{1},g_{2})\circ\odot(f_{1},f_{2})\equiv G\circ F\in\text{hom}(\mathbf{CJP}_{\!\mathscr{C}})$. And again by \cref{bfLemAB} we have have $\odot(G\circ F,g_{3}\circ f_{3})=\odot(G,g_{3})\circ \odot(F,f_{3})$. But in light of the definition in Eq.~\eqref{hom12three}, and the foregoing observations, we may now continue the diagram in Eq.~\eqref{123hom} as follows
\begin{equation}
\odot\Big(\odot(g_{1}\circ f_{1},g_{2}\circ f_{2}),g_{3}\circ f_{3}\Big)=\odot\Big(\odot(g_{1},g_{2}),g_{3}\Big)\circ\odot\Big(\odot(f_{1},f_{2}),f_{3}\Big)= \odot_{12,3}(g_{1},g_{2},g_{3})\circ\odot_{12,3}(f_{1},f_{2},f_{3})
\end{equation}
So $\odot_{12,3}$ really is a functor. So too is $\odot_{1,23}$, with the proof following in an entirely similar manner as noted above. We now come to recall an important result from Chapter 9, namely \cref{prop: associativity}: the canonical tensor product is associative! Thus the following diagram commutes $\forall(\mathcal{A}_{1},\mathcal{A}_{2},\mathcal{A}_{3})\in\text{ob}(\mathbf{CJP}_{\!\mathscr{C}}\times\mathbf{CJP}_{\!\mathscr{C}}\times\mathbf{CJP}_{\!\mathscr{C}})$ and $\forall (f_{1},f_{2},f_{3})\in\text{hom}(\mathbf{CJP}_{\!\mathscr{C}}\times\mathbf{CJP}_{\!\mathscr{C}}\times\mathbf{CJP}_{\!\mathscr{C}})$ with $\text{cod}(f_{1})=\mathcal{B}_{1}\in\text{ob}(\mathbf{CJP}_{\!\mathscr{C}})$, $\text{cod}(f_{2})=\mathcal{B}_{2}\in\text{ob}(\mathbf{CJP}_{\!\mathscr{C}})$, and $\text{cod}(f_{3})=\mathcal{B}_{3}\in\text{ob}(\mathbf{CJP}_{\!\mathscr{C}})$
\begin{equation}
\xymatrixrowsep{1cm}
\xymatrixcolsep{4cm}
\xymatrix{\odot_{12,3}(\mathcal{A}_{1},\mathcal{A}_{2},\mathcal{A}_{3})\ar[r]^{\odot_{12,3}(f_{1},f_{2},f_{3})}\ar[d]_{\alpha_{\mathcal{A}_{1},\mathcal{A}_{2},\mathcal{A}_{3}}} & \odot_{12,3}(\mathcal{B}_{1},\mathcal{B}_{2},\mathcal{B}_{3})\ar[d]^{\alpha_{\mathcal{B}_{1},\mathcal{B}_{2},\mathcal{B}_{3}}}\\
\odot_{1,23}(\mathcal{A}_{1},\mathcal{A}_{2},\mathcal{A}_{3})\ar[r]_{\odot_{1,23}(f_{1},f_{2},f_{3})} & \odot_{1,23}(\mathcal{B}_{1},\mathcal{B}_{2},\mathcal{B}_{3})
}
\end{equation}
One should note in light of the associativity of $\otimes$ that
\begin{eqnarray}
\odot_{12,3}(f_{1},f_{2},f_{3})\equiv\odot\Big(\odot(f_{1},f_{2}),f_{3}\Big)=\big(f_{1}\otimes f_{2}\big)\otimes f_{3}=f_{1}\otimes\big(f_{2}\otimes f_{3}\big)=\odot_{1,23}(f_{1},f_{2},f_{3})
\end{eqnarray}
One should also note in light of the associativity of $\odot$ on objects that
\begin{equation}
\mathbf{M}_{\Big(\odot\big(\odot(\mathcal{A}_{1},\mathcal{A}_{2}),\mathcal{A}_{3}\big)\Big)}\cong \mathbf{M}_{\Big(\odot\big(\mathcal{A}_{1},\odot(\mathcal{A}_{2},\mathcal{A}_{3})\big)\Big)}
\end{equation}
So $\alpha$ really is a natural transformation; moreover a natural isomorphism with $\alpha^{-1}:\odot_{1,23}\longrightarrow\odot_{1,23}$ defined such that
\begin{eqnarray}
\alpha^{-1}:\text{ob}(\mathbf{CJP}_{\!\mathscr{C}}\times\mathbf{CJP}_{\!\mathscr{C}}\times\mathbf{CJP}_{\!\mathscr{C}})\longrightarrow\text{hom}(\mathbf{CJP}_{\!\mathscr{C}})::(\mathcal{A},\mathcal{B},\mathcal{C})\longrightarrow\alpha^{-1}_{\mathcal{A},\mathcal{B},\mathcal{C}}\text{,}
\end{eqnarray}
where
\begin{equation}
\alpha^{-1}_{\mathcal{A},\mathcal{B},\mathcal{C}}:::\odot\Big(\mathcal{A},\odot(\mathcal{B},\mathcal{C})\Big)\longmapsto\odot\Big(\odot(\mathcal{A},\mathcal{B}),\mathcal{C}\Big)\text{.}
\end{equation}
The remaining coherence conditions for an $\textsc{smc}$ hold in light of our proof of \cref{prop:completely Jordan preserving composes and tensors}. In particular, the following diagram commutes $\forall \mathcal{A},\mathcal{B},\mathcal{C},\mathcal{D}\in\text{ob}(\mathbf{CJP}_{\!\mathscr{C}})$ since $\odot \big(\alpha_{\mathcal{A},\mathcal{B},\mathcal{C}},1_{\mathcal{D}}\big)\in\text{hom}(\mathbf{CJP}_{\!\mathscr{C}})$ and $\odot(1_{\mathcal{A}},\alpha_{\mathcal{B},\mathcal{C},\mathcal{D}}\big)\in\text{hom}(\mathbf{CJP}_{\!\mathscr{C}})$
\begin{equation}
\xymatrixrowsep{1.2cm}
\xymatrixcolsep{2cm}
\xymatrix{ & \odot\Big(\odot\big(\mathcal{A},\mathcal{B}\big),\odot\big(\mathcal{C},\mathcal{D}\big)\Big)\ar[rd]^{\alpha_{\mathcal{A},\mathcal{B},\odot(\mathcal{C},\mathcal{D})}} &\\
\odot\Bigg(\odot\Big(\odot\big(\mathcal{A},\mathcal{B}\big),\mathcal{C}\Big),\mathcal{D}\Bigg)\ar[d]_{\odot \big(\alpha_{\mathcal{A},\mathcal{B},\mathcal{C}},1_{\mathcal{D}}\big)}\ar[ru]^{\alpha_{\odot(\mathcal{A},\mathcal{B}),\mathcal{C},\mathcal{D}}}&& \odot\Bigg(\mathcal{A},\odot\Big(\mathcal{B},\odot\big(\mathcal{C},\mathcal{D}\big)\Big)\Bigg)\\
\odot\Bigg(\odot\Big(\mathcal{A},\odot\big(\mathcal{B},\mathcal{C}\big)\Big),\mathcal{D}\Bigg)\ar[rr]_{\alpha_{\mathcal{A},\odot(\mathcal{B},\mathcal{C}),\mathcal{D}}}&  &\odot\Bigg(\mathcal{A},\odot\Big(\odot\big(\mathcal{B},\mathcal{C}\big),\mathcal{D}\Big)\Bigg)\ar[u]_{\odot(1_{\mathcal{A}},\alpha_{\mathcal{B},\mathcal{C},\mathcal{D}}\big)}}
\end{equation}
Furthermore, in light of all foregoing observations the following diagrams commute
\begin{equation}
\xymatrixrowsep{1.2cm}
\xymatrixcolsep{2cm}
\xymatrix{\odot\Big(\odot\big(\mathcal{A},\mathbb{R}\big),\mathcal{B}\Big)\ar[rr]^{\alpha_{\mathcal{A},\mathbb{R},\mathcal{B}}} & & \odot\Big(\mathcal{A},\odot\big(\mathbb{R},\mathcal{B}\big)\Big)\ar[ld]^{\odot(1_{\mathcal{A}},\lambda_{\mathcal{B}}^{-1})}\\
& \odot(\mathcal{A},\mathcal{B})\ar[lu]^{\odot(\rho_{A},1_{\mathcal{B}})} &}
\end{equation}
\begin{equation}
\xymatrixrowsep{1.2cm}
\xymatrixcolsep{3cm}
\xymatrix{\odot\Big(\odot\big(\mathcal{A},\mathcal{B}\big),\mathcal{C}\Big)\ar[r]^{\alpha_{\mathcal{A},\mathcal{B},\mathcal{C}}}\ar[d]_{\odot(\sigma_{\mathcal{A},\mathcal{B}},1_{\mathcal{C}})} & \odot\Big(\mathcal{A},\odot\big(\mathcal{B},\mathcal{C}\big)\Big)\ar[r]^{\sigma_{\mathcal{A},\odot(\mathcal{B},\mathcal{C})}} & \odot\Big(\odot\big(\mathcal{B},\mathcal{C}\big),\mathcal{A}\Big)\ar[d]^{\alpha_{\mathcal{B},\mathcal{C},\mathcal{A}}}\\
\odot\Big(\odot\big(\mathcal{B},\mathcal{A}\big),\mathcal{C}\Big)\ar[r]_{\alpha_{\mathcal{B},\mathcal{A},\mathcal{C}}}& \odot\Big(\mathcal{B},\odot\big(\mathcal{A},\mathcal{C}\big)\Big)\ar[r]_{\odot(1_{\mathcal{B}},\sigma_{\mathcal{A},\mathcal{C}})} & \odot\Big(\mathcal{B},\odot\big(\mathcal{C},\mathcal{A}\big)\Big)
}
\end{equation}
So $\mathbf{CJP}_{\!\mathscr{C}}$ is a symmetric monoidal category; whereby we abuse notation: the seven-tuple 
\begin{equation}
\Big(\mathbf{CJP}_{\!\mathscr{C}},\mathbb{R},\odot,\alpha,\lambda,\rho,\sigma\Big)
\end{equation} 
is the structure to which we refer. 
\end{proof}
\end{theorem}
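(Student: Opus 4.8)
The plan is to exhibit $\mathbf{CJP}_{\!\mathscr{C}}$ as inheriting its symmetric monoidal structure from the ambient symmetric monoidal category $\stalg$ of finite-dimensional complex $\ast$-algebras under the ordinary tensor product $\otimes$, whose associator, unitors and symmetor are the standard ones. The unifying idea is that every piece of data defining the monoidal structure on $\stalg$ restricts or corestricts to a morphism of $\mathbf{CJP}_{\!\mathscr{C}}$, that is, to a relatively completely Jordan-preserving map with respect to $\mathscr{C}$; so the entire structure is ``pulled back'' along the inclusions of the Jordan subalgebras into their enveloping $\ast$-algebras.

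First I would verify that $\mathbf{CJP}_{\!\mathscr{C}}$ is genuinely a category. The identity morphisms are $1_{\mathcal{A}} = \mathrm{id}_{\mathbf{M}_{\mathcal{A}}}$, which are trivially relatively completely Jordan preserving; composition is inherited from $\stalg$, and the class of morphisms is closed under composition by the relativized form of \cref{prop:completely Jordan preserving composes and tensors}. Associativity of composition and left/right neutrality of the identities are then immediate from composition of linear maps. Next I would check that $\odot$ is a bifunctor: on objects, $\mathscr{C}$ is closed under $\odot$ by hypothesis, and on morphisms $\odot(f_{1},f_{2}) = f_{1}\otimes f_{2}$ is again a $\mathbf{CJP}_{\!\mathscr{C}}$-morphism, again by the relativized \cref{prop:completely Jordan preserving composes and tensors}; that $\odot$ preserves identities and respects composition follows from the fact that $\otimes$ is a bifunctor on $\stalg$, since $\odot$ acts on morphisms exactly as $\otimes$ does.

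The heart of the argument is the construction of the four natural isomorphisms. I would take $\lambda_{\mathcal{A}}, \rho_{\mathcal{A}}$ to be the canonical identifications arising from $(\mathbb{R},\mathbb{C})\odot\mathcal{A} = \mathcal{A} = \mathcal{A}\odot(\mathbb{R},\mathbb{C})$, take $\sigma_{\mathcal{A},\mathcal{B}}$ to be the swap, and take $\alpha_{\mathcal{A},\mathcal{B},\mathcal{C}}$ to be the associator of $\stalg$. Each of these is a positive map; that each is moreover relatively completely Jordan preserving is where the substantive content lies. For $\sigma$ this holds because the swap is a $\ast$-isomorphism $\mathbf{M}_{\mathcal{A}}\otimes\mathbf{M}_{\mathcal{B}} \cong \mathbf{M}_{\mathcal{B}}\otimes\mathbf{M}_{\mathcal{A}}$ carrying $\mathcal{A}\otimes\mathcal{B}$ to $\mathcal{B}\otimes\mathcal{A}$, hence carrying the generated Jordan subalgebra $\mathcal{A}\odot\mathcal{B} = \mathfrak{j}(\mathcal{A}\otimes\mathcal{B})$ onto $\mathcal{B}\odot\mathcal{A}$. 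For $\alpha$ the corresponding statement — that the associator of $\stalg$ carries $\mathcal{A}\odot(\mathcal{B}\odot\mathcal{C})$ isomorphically onto $(\mathcal{A}\odot\mathcal{B})\odot\mathcal{C}$ — is exactly \cref{prop: associativity}, and is the main obstacle: it is the one place where one cannot simply quote the ambient structure, because $\mathcal{A}\odot\mathcal{B}$ is in general strictly larger than the vector-space tensor product $\mathcal{A}\otimes\mathcal{B}$, so one must show the $\ast$-isomorphism preserves the \emph{generated} Jordan subalgebras rather than merely the pure tensors.

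Finally I would verify naturality and the coherence conditions. Naturality of each of $\lambda,\rho,\sigma,\alpha$ reduces to commuting squares that already hold in $\stalg$, since $\odot$ agrees with $\otimes$ on morphisms, combined with the object-level identities established above; the required inverse isomorphisms are the corresponding inverse $\ast$-isomorphisms, which are again relatively completely Jordan preserving by the same subalgebra-preservation argument. The pentagon, triangle, symmetry and hexagon identities then hold because they hold in $\stalg$ and all maps involved are restrictions of $\stalg$-morphisms to the subalgebras $\mathcal{A}\odot\mathcal{B}$, so that each commuting diagram in $\mathbf{CJP}_{\!\mathscr{C}}$ is precisely the restriction of the corresponding commuting diagram in $\stalg$. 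This establishes the seven-tuple $(\mathbf{CJP}_{\!\mathscr{C}}, (\mathbb{R},\mathbb{C}), \odot, \alpha, \lambda, \rho, \sigma)$ as a symmetric monoidal category.
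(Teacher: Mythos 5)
Your proposal is correct and follows essentially the same route as the paper's own proof in \cref{SMCproofApp}: the monoidal data are inherited from the ambient category of finite dimensional complex $\ast$-algebras, closure of the morphism class under composition and $\otimes$ is supplied by the relativized form of \cref{prop:completely Jordan preserving composes and tensors}, and the one substantive step --- that the $\stalg$ associator carries $\mathcal{A}\odot(\mathcal{B}\odot\mathcal{C})$ onto $(\mathcal{A}\odot\mathcal{B})\odot\mathcal{C}$ --- is exactly \cref{prop: associativity}, which you correctly identify as the crux since $\mathcal{A}\odot\mathcal{B}$ may strictly exceed $\mathcal{A}\otimes\mathcal{B}$. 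Your observation that $\sigma$ and the unitors preserve the generated Jordan subalgebras because they are $\ast$-isomorphisms matches the paper's argument, so no gap remains.
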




\end{document}